\tikzset{>=latex}
\newcommand{\diff}[1]{\textcolor{red}{#1}}
\newcommand{\myparagraph}[1]{\vspace{0.1cm}\noindent{\bf #1.}}
\newcolumntype{P}[1]{>{\centering\arraybackslash}p{#1}}
\newcommand{\Hzk}{\ensuremath{\mathsf{H}_{\mathsf{zk}}}}
\newcommand{\gls}[1]{#1}
\newcommand{\phon}{P_{hon}}
\newcommand{\pmal}{P_{grd}}
\newcommand{\halfhalf}{(\frac{1}{2}, \frac{1}{2})}
\newcommand{\bracks}[2]{\left( #1, #2 \right)}
\newtheorem{claim}{Claim}
\newtheorem{corollary}{Corollary}
\definecolor{ao(english)}{rgb}{0.0, 0.5, 0.0}
\newcommand{\varDash}[1]{{\operatorname{\mathit{#1}}}}
\newenvironment{proof1} {\begin{proof}[Justification]} {\end{proof}}
\newcommand{\bazka}[4]{
	\begin{tabular}[t]{c|c|c}
		$\phon$/$\pmal$ & \gls{honest} & \gls{greedy} \\
		\hline
		\gls{honest} & (#1,#1) & (#2,#3) \\
		\hline
		\gls{greedy} & (#3,#2) & (#4,#4) \\
\end{tabular}}
\newtheorem{theorem}{Theorem}
\newtheorem{hypothesis}{Hypothesis}
\newcommand{\specialcell}[2][c]{%
	\begin{tabular}[#1]{@{}l@{}}#2\end{tabular}}
\newcommand{\cmark}{\textcolor{black}{\ding{51}}}
\newcommand{\xmark}{\textcolor{black}{\ding{55}}}
\newcommand{\trot}[1]{\multicolumn{1}{l}{\rlap{\rotatebox{25}{#1}~}}}
\algrenewcommand\algorithmiccomment[1]{\hfill \textcolor{gray}{$\triangleright$ \textit{#1}}}
\algnewcommand{\IIf}[1]{\State\algorithmicif\ #1\ \algorithmicthen}
\algnewcommand{\EElse}[1]{\State\algorithmicelse\ #1\ }
\algnewcommand{\EndIIf}{\unskip}
\algrenewcommand\algorithmicindent{1.2em}%
\newcommand*{\dittoclosing}{---''---}
\definecolor{prismgreen}{HTML}{009900}
\definecolor{prismred}{HTML}{cc0000}
\definecolor{prismblue}{HTML}{0000cc}
\lstdefinelanguage{Prism}{
        basicstyle=\color{prismred}\scriptsize\ttfamily,
        literate=*	{0}{{\textcolor{prismblue}{0}}}{1}
			{1}{{\textcolor{prismblue}{1}}}{1}
			{2}{{\textcolor{prismblue}{2}}}{1}
			{3}{{\textcolor{prismblue}{3}}}{1}
			{4}{{\textcolor{prismblue}{4}}}{1}
			{5}{{\textcolor{prismblue}{5}}}{1}
			{6}{{\textcolor{prismblue}{6}}}{1}
			{7}{{\textcolor{prismblue}{7}}}{1}
			{8}{{\textcolor{prismblue}{8}}}{1}
			{9}{{\textcolor{prismblue}{9}}}{1}
			{.0}{{\textcolor{prismblue}{.0}}}{2}
			{.1}{{\textcolor{prismblue}{.1}}}{2}
			{.2}{{\textcolor{prismblue}{.2}}}{2}
			{.3}{{\textcolor{prismblue}{.3}}}{2}
			{.4}{{\textcolor{prismblue}{.4}}}{2}
			{.5}{{\textcolor{prismblue}{.5}}}{2}
			{.6}{{\textcolor{prismblue}{.6}}}{2}
			{.7}{{\textcolor{prismblue}{.7}}}{2}
			{.8}{{\textcolor{prismblue}{.8}}}{2}
			{.9}{{\textcolor{prismblue}{.9}}}{2}
			{[}{{\textcolor{black}{[}}}{1}
			{]}{{\textcolor{black}{]}}}{1}
			{+}{{\textcolor{black}{+}}}{1}
			{-}{{\textcolor{black}{-}}}{1}
			{=}{{\textcolor{black}{=}}}{1}
			{<}{{\textcolor{black}{<}}}{1}
			{>}{{\textcolor{black}{>}}}{1}
			{\&}{{\textcolor{black}{\&}}}{1}
			{|}{{\textcolor{black}{|}}}{1}
			{:}{{\textcolor{black}{:}}}{1}
			{;}{{\textcolor{black}{;}}}{1}
			{(}{{\textcolor{black}{(}}}{1}
			{)}{{\textcolor{black}{)}}}{1}
			{..}{{\textcolor{black}{..}}}{2},
        keywords= {bool,ceil,const,ctmc,double,dtmc,endinit,endmodule,endrewards, endsystem,F,false,floor,formula,G,global,I,init,int,label,max,mdp,min,module,nondeterministic,P,Pmin,Pmax,prob,probabilistic,rate,rewards,Rmin,Rmax,S,stochastic,system,true,U, option, either, assignment, relation, operation, hole, variable},
        keywordstyle={\bfseries\color{black}},
        numberstyle=\footnotesize\color{black},
        comment=[l] {//}, morecomment=[s]{/*}{*/},
        commentstyle= \color{prismgreen},
        tabsize=4,
        captionpos=b,
        escapechar=^,
        moredelim=[is][\color{orange}]{@}{@},
}
\newcommand{\name}{CHANGE-ME\xspace}%
\begin{document} 

\thispagestyle{empty}

\includegraphics[width=150mm]{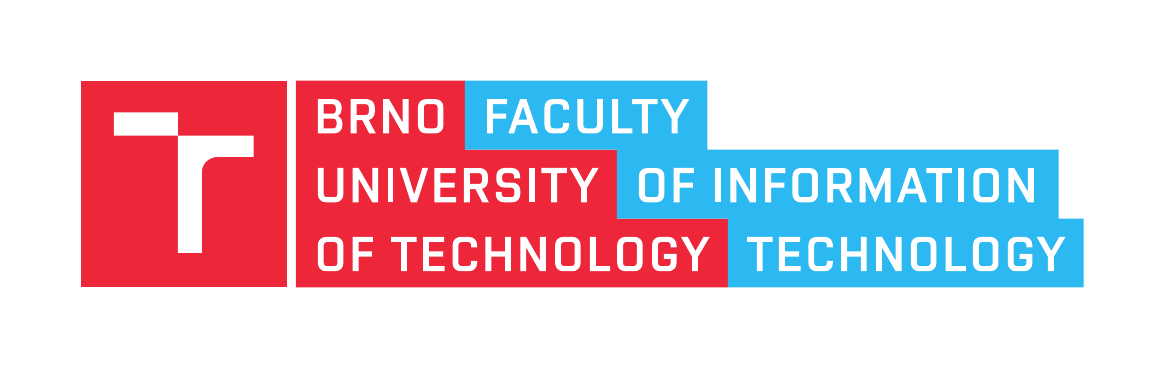}
{\fontfamily{ptm}\selectfont


\vspace{5cm}

\begin{center}
  {\LARGE \bf Towards Secure Decentralized Applications and Consensus Protocols in Blockchains}    \\[10mm]
  {\large (on Selfish Mining, Undercutting Attacks, DAG-Based Blockchains, E-Voting, Cryptocurrency Wallets, Secure-Logging, and CBDC)}  \\[10mm]    
  {\large Habilitation Thesis}      
\end{center}                                            
 
 \vfill
                 
\large Ing. Ivan Homoliak  Ph.D.                                   
  \hfill  Brno 2024

}

\thispagestyle{empty}

\eject

\thispagestyle{empty}

\ \\

\pagenumbering{roman}

\section*{Abstract}
With the rise of cryptocurrencies, many new applications and approaches leveraging the decentralization of blockchains have emerged. 
Blockchains are full-stack distributed systems in which multiple sub-systems interact together.
Although most of the deployed blockchains and decentralized applications running on them need better scalability and performance, their security is undoubtedly another critical factor for their success.
Due to the complexity of blockchains and many decentralized applications, their security assessment and analysis require a more holistic view than in the case of traditional distributed or centralized systems.

In this thesis, we summarize our contributions to the security of blockchains and a few types of decentralized applications.
In detail, we contribute to the standardization of vulnerability/threat analysis by proposing a security reference architecture for blockchains.
Then, we contribute to the security of consensus protocols in single-chain Proof-of-Work blockchains and their resistance to selfish mining attacks, undercutting attacks as well as greedy transaction selection attacks on blockchains with Direct Acyclic Graphs.   
Next, we contribute to cryptocurrency wallets by proposing a new classification of authentication schemes as well as a novel approach to two-factor authentication based on One-Time Passwords.
Next, we contribute to the area of e-voting by proposing a practical boardroom voting protocol that we later extend to its scalable version supporting millions of participants, while maintaining its security and privacy properties.
In the area of e-voting, we also propose a novel repetitive voting framework,  enabling vote changes in between elections while avoiding peak-end effects.
Finally, we contribute to secure logging with blockchains and trusted computing by proposing a new approach to a centralized ledger that guarantees non-equivocation, integrity, censorship evidence, and other features.
In the follow-up contribution to secure logging, we built on top of our centralized ledger and propose an interoperability protocol for central bank digital currencies, which provides atomicity of transfer operations.


\section*{Keywords}
Security, privacy, blockchains, distributed ledgers, threat modeling, standardization, decentralized applications, consensus protocols, proof-of-work, selfish mining attacks, undercutting attacks, incentive attacks, DAG-based blockchains, cryptocurrency wallets, two-factor authentication, 2FA, electronic voting, e-voting, public bulletin board, secure logging, data provenance, interoperability, Central Bank Digital Currency, CBDC, Trusted Execution Environment, TEE, cross-chain protocol.

\clearpage

\section*{Acknowledgment}
First, I would like to thank my supervisor and colleague from SUTD -- Pawel Szalachowski -- for collaborating and igniting many interesting research ideas as well as explaining to me that ideas are important but cheap in contrast to realization. 
Second, I would like to thank all my co-authors, especially Sarad Venugopalan, Pieter Hartel, Daniel Reijsbergen, Federico~Matteo Ben{\v{c}}i{\'c}, Fran Casino, and Martin Hrub{\'y}. 
Third, I would like to thank Tomas Vojnar, Pavel Zemcik, Pavel Smrz, and Ales Smrcka for their support in funding my research at FIT BUT.
Next, I would like to thank all Ph.D./MSc/BSc students I have had the chance to work with, especially Martin Peresini, Ivana Stan{\v{c}}{\'\i}kov{\'a}, and Dominik Breitenbacher / Tomas Hladky, Jakub Handzus, Jakub Kubik, and Martin Ersek / and Rastislav Budinsk{\'y}. 
Also, I would like to thank Milan Ceska (and Tomas Vojnar) for inspiring me in the habilitation process and helping me to maximize cross-domain transfer learning in this process.  
Finally, I would like to thank around 20\% of anonymous reviewers (mostly from security venues) for providing useful and constructive feedback. 


\vfill \emph{Over the time, I have been supported by number of projects, in particular, by EU ECSEL projects, EU HORIZON EUROPE projects, National Research Foundation -- Prime Minister's Office in Singapore, Technology Agency of the Czech Republic, and the Czech IT4Innovations Centre of Excellence project.
}

\newtheorem{mydef}{Definition}
\newtheorem{example}{Example}
\newtheorem{myproof}{Proof}
\newtheorem{myproposition}{Proposition}
\newtheorem{mytheorem}{Theorem}
\newtheorem{myexample}{Example}
\newtheorem{myproblem}{Problem}
\newtheorem{mycorollary}{Corollary}
\newtheorem{mymethod}{Approach}
\newtheorem{remark}{Remark}

\renewcommand{\tableautorefname}{Table}
\renewcommand{\algorithmautorefname}{Algorithm}
\renewcommand{\figureautorefname}{Figure}
\renewcommand{\equationautorefname}{Equation}
\renewcommand{\sectionautorefname}{Appendix}

\renewcommand{\chapterautorefname}{Chapter}
\renewcommand{\sectionautorefname}{Section}
\renewcommand{\subsectionautorefname}{Section}
\renewcommand{\subsubsectionautorefname}{Section}
\renewcommand{\paragraphautorefname}{ Section}
\renewcommand{\tablename}{Table}
\renewcommand{\figurename}{Figure}

\newcommand{\mysubsubsection}[1]{\smallskip\subsubsection{#1}}

\tableofcontents

\newcommand{\squishlist}{
   \begin{list}{$\bullet$}
    { \setlength{\itemsep}{5pt}    \setlength{\parsep}{0pt}
      \setlength{\topsep}{5pt}     \setlength{\partopsep}{0pt}
      \setlength{\leftmargin}{1.35em} \setlength{\labelwidth}{1em}
      \setlength{\labelsep}{0.5em} } }

\newcommand{\squishend}{
    \end{list}  }

\clearpage
\pagenumbering{arabic}

\part{COMMENTED RESEARCH}

\chapter{Introduction}\label{chapter:intro}

The popularity of blockchain systems has rapidly increased in recent years, mainly due to the decentralization of control that they aim to provide.
Blockchains are full-stack distributed systems in which multiple layers, (sub)systems, and dynamics interact together.
Hence, they should leverage a secure and resilient networking architecture, a robust consensus protocol, and a safe environment for building higher-level applications.
Although most of the deployed blockchains need better scalability and well-aligned incentives to unleash their full potential, their security is undoubtedly a critical factor for their success.
As these systems are actively being developed and deployed, it is often challenging to understand how secure they are, or what security implications are introduced by some specific components they consist of.
Moreover, due to their complexity and novelty (e.g., built-in protocol incentives), their security assessment and analysis require a more holistic view than in the case of traditional distributed systems.

In this work, we first present our contributions to the standardization of vulnerability/threat analysis and modeling in blockchains, and then we present our contributions to particular areas in blockchains' consensus protocols, cryptocurrency wallets,  electronic voting, and secure logging with the focus on security and/or privacy aspects.   
In the following, we introduce these areas and outline our contributions.

\section{Standardization in Threat Modeling}
Although some standardization efforts have already been undertaken in the field of block\-chains and distributed ledgers, they are either specific to a particular platform \cite{EEA-standards} or still under development \cite{iso-security-threats,iso-reference-architecture}.
Hence, there is a lack of platform-agnostic standards in blockchain implementation, interoperability, services, and applications, as well as the analysis of its security threats \cite{gartner-lack-of-standards,barry-medium}.
All of these areas are challenging, and it might take years until they are standardized and agreed upon across a diverse spectrum of stakeholders. 

We believe that it is critical to provide blockchain stakeholders (developers, users, standardization bodies, regulators, etc.) with a comprehensive systematization of knowledge about the security and privacy aspects of today's blockchain systems.
We aim to achieve this goal, with a particular focus on system design and architectural aspects.
We do not limit our work to an enumeration of security issues, but additionally, discuss the origins of those issues while listing possible countermeasures and mitigation techniques together with their potential implications.
In sum, we propose the security reference architecture (SRA) for blockchains, which is based on models that demonstrate the stacked hierarchy of different threat categories (similar to the ISO/OSI hierarchy \cite{zimmermann1980osi}) and is inspired by security modeling performed in the cloud computing \cite{liu2011nist,xiao2013cloud}.
As our next contribution in this direction, we enrich the threat-risk assessment standard ISO/IEC 15408 \cite{cc2017} to fit the blockchain infrastructure.
We achieve this by embedding the stacked model into this standard.
More details in this direction of our research are elaborated in \autoref{chapter:sra}.

\section{Consensus Protocols}
While the previous area of the thesis was theoretical and analytical, in the current area of consensus protocols we aim to investigate practical security aspects of blockchains, and their consensus protocols in particular. 
Consensus protocols represent a means to provide naturally incentivized decentralization, immutability,  and other features of blockchains (see \autoref{sec:background-features}). 
Therefore, modeling and simulation of consensus protocols in terms of security and incentives is an important research direction.
There exist several principally different categories of consensus protocols such as Proof-of-Resource (PoR),  Proof-of-Stake, and Byzantine-Fault-Tolerant protocols (see \autoref{sec:consensus}), each of them potentially vulnerable to different types of threats.
Nevertheless, in this research area, we focus on PoR protocols and Proof-of-Work (PoW) protocols in particular.

As our first contribution, we design StrongChain \cite{strongchain} consensus protocol that improves the resistance of  Nakamoto consensus \cite{nakamoto2008bitcoin} to selfish mining by rewarding partial partial PoW puzzle solutions and incorporating them to the total ``weight'' of the chain.
While the idea of rewarding partial puzzle solutions is not novel \cite{zamyatinflux,pass2017fruitchains,rizun2016subchains}, StrongChain achieves resistance to selfish mining in a space-efficient manner that does not create a new vulnerability (such as selfish mining on a subchain in \cite{zamyatinflux}).
At the same time, StrongChain improves on accuracy of distributed time and decreases the reward variance of miners, and thus it creates better conditions for more decentralized mining.

Our second contribution is in the area of consensus protocols that utilize Directed Acyclic Graphs (DAGs) to solve the limited processing throughput of traditional single-chain Proof-of-Work (PoW) blockchains.
Many such protocols (e.g., Inclusive \cite{lewenberg2015inclusive}, GHOSTDAG \cite{sompolinsky2020phantom}, PHANTOM \cite{sompolinsky2020phantom}, SPECTRE \cite{sompolinsky2016spectre}, Prism \cite{bagaria2019prism}) utilize a random transaction selection (\gls{RTS}) strategy to avoid transaction duplicates across parallel blocks in DAG and thus maximize the network throughput.
However, these works have not rigorously examined incentive-oriented greedy behaviors when transaction selection deviates from the protocol, which motivated our research.
Therefore,  we first perform a generic game-theoretic analysis abstracting several DAG-based blockchain protocols that use the \gls{RTS} strategy \cite{perevsini2023incentive}, and we prove that such a strategy does not constitute a Nash equilibrium. 
Then, we design a simulator \cite{perevsini2023dag} and perform experiments confirming that greedy actors who do not follow the \gls{RTS} strategy can profit more than honest miners and harm the processing throughput of the protocol \cite{perevsini2023incentive}.
We show that this effect is indirectly proportional to the network propagation delay.
Finally, we show that greedy miners are incentivized to form a shared mining pool to increase their profits, which undermines decentralization and degrades the design of the protocols in question.
Finally, we elaborate on a few techniques to mitigate such incentive attacks.

In our last contribution, we mainly focus on the undercutting attacks in the transaction-fee-based regime (i.e., without block rewards) of PoW blockchains with the longest chain fork-choice rule. 
Note that such a regime is expected to occur in Bitcoin's consensus protocol around the year 2140.
Additionally, we focus on two closely related problems: (1) fluctuations in mining revenue and (2) the mining gap -- i.e., a situation, in which the immediate reward from transaction fees does not cover miners' expenditures.
To mitigate these issues, we propose a solution \cite{budinsky2023fee} that splits transaction fees from a mined block into two parts -- (1) an instant reward for the miner of a block and (2) a deposit sent to one or more fee-redistribution smart contracts ($\mathcal{FRSC}$s) that are part of the consensus protocol.
At the same time, these $\mathcal{FRSC}$s reward the miner of a block with a certain fraction of the accumulated funds over a predefined time.
This setting enables us to achieve several interesting properties that improve the incentive stability and security of the protocol, which is beneficial for honest miners.
With our solution, the fraction of \textsc{Default-Compliant} miners who strictly do not execute undercutting attacks is lowered from the state-of-the-art \cite{carlsten2016instability} result of 66\% to 30\%.
More details in this direction of our research are presented in \autoref{chapter:consensus-protocols}.

\section{Cryptocurrency Wallets}
With the recent rise in the popularity of cryptocurrencies, the security and management of crypto-tokens have become critical.
We have witnessed many attacks on users and wallet pro\-viders, which have resulted in significant financial losses. 
To remedy these issues, several wallet solutions have been proposed. 
%
%
According to the previous work~\cite{eskandari2018first,2015-Bitcoin-SOK}, there are a few categories of common (single-factor) key management approaches, such as password-protected/password-derived wallets, hardware wallets, and server-side/client-side hosted wallets.
Each category has its respective drawbacks and vulnerabilities.

To increase the security of former wallet categories, multi-factor authentication (MFA) is often used, which enables spending crypto-tokens only when several secrets are used together.
However, we emphasize that different security implications stem from the multi-factor authentication executed \textit{against a centralized party} (e.g., username/password or Google Authenticator) and \textit{against the blockchain} itself.
In the former, the authentication factor is only as secure as the centralized party, while the latter provides stronger security that depends on the assumption of an honest majority of decentralized consensus nodes (i.e., miners) and the security of cryptographic primitives used. 

In our first contribution in this direction, we propose a classification scheme \cite{homoliak2020air-extend} for cryptocurrency wallets that distinguishes between the authentication factors validated against the blockchain and a centralized party (or a device). 
We apply this classification to several existing wallets that we also compare in terms of various security features.

In our second contribution, we focus on the security vs. usability of wallets using MFA against the blockchain, provided by the wallets from a split control category~\cite{eskandari2018first}.
MFA in these wallets can be constructed by threshold cryptography wallets \cite{goldfeder2015securing,mycelium-entropy}, multi-signatures \cite{Armory-SW-Wallet,Electrum-SW-Wallet,TrustedCoin-cosign,copay-wallet}, and state-aware smart-contracts \cite{TrezorMultisig2of3,parity-wallet,ConsenSys-gnosis}.
The last class of wallets is of our concern, as spending rules and security features can be encoded in a smart contract. 
Although there are several smart-contract wallets using MFA against the blockchain~\cite{TrezorMultisig2of3,ConsenSys-gnosis}, to the best of our knowledge, none of them provides an air-gapped authentication in the form of short OTPs similar to Google Authenticator. 
Therefore, we propose SmartOTPs \cite{homoliak2020smartotps}, a framework for smart-contract  cryptocurrency wallets,
which provides 2FA against data stored on the blockchain.
The first factor is represented by the user's private key and the second factor by OTPs.
To produce OTPs, the authenticator device of Smart\-OTPs utilizes hash-based cryptographic constructs, namely a pseudo-random function, a Merkle tree, and hash chains.
We propose a novel combination of these elements that minimizes the amount of data transferred from the authenticator to the client, which enables us to implement the authenticator in a fully air-gapped setting. 
SmartOTPs provide protection against three exclusively occurring attackers: the attacker who possesses the user's private key \textit{or} the attacker who possesses the user's authenticator \textit{or} the attacker that tampers with the client. 
More details in this direction of our research are presented in \autoref{chapter:wallets}.

\section{Electronic Voting}
Voting is an integral part of democratic governance, where eligible participants can cast a vote for their representative choice (e.g., candidate or policy) through a secret ballot.
Electronic voting (e-voting) is usually centralized and suffers from a single point of failure that can be manifested in censorship, tampering, and issues with the availability of a service.
To improve some features of e-voting, decentralized blockchain-based solutions can be employed, where the blockchain represents a public bulletin board that in contrast to a centralized bulletin board provides extremely high availability, censorship resistance, and correct code execution.
A blockchain ensures that all entities in the voting system have the same view of the actions made by others due to its immutability and append-only features. 
A few blockchain-based e-voting solutions have been proposed in recent years, mostly focusing on boardroom voting~\cite{McCorrySH17,EPRINT:PanRoy18,Li2020,yu2018platform} or small-scale voting~\cite{DBLP:conf/fc/SeifelnasrGY20,icissp:DMMM18,Li2020}. 

Decentralization was a desired property of e-voting even before the invention of blockchains.
For example, (partially) decentralized e-voting that uses the homomorphic properties of El-Gamal encryption was introduced by Cramer et al.~\cite{cgs97}. 
It assumes a threshold number of honest election authorities to provide the privacy of vote.
However, when this threshold is adversarial, it does not protect from computing partial tallies, making statistical inferences about it, or even worse  -- revealing the vote choices of participants.
A solution that removed trust in tallying authorities was for the first time proposed by Kiayias and Yung~\cite{Kiayias2002} in their privacy-preserving self-tallying boardroom voting protocol.
A similar protocol was later proposed by Hao et al.~\cite{HaoRZ10}, which was later extended to a blockchain environment by McCorry et al.~\cite{McCorrySH17} in their Open Vote Network (OVN).
%
An interesting property of OVN is that it requires only a single honest voting participant to maintain the privacy of the votes.
%
However, OVN  supports only two vote choices (based on~\cite{HaoRZ10}), assumes no stalling participants, requires expensive on-chain tally computation, and does not scale in the number of participants.
The scalability of OVN was partially improved by Seifelnasr et al.~\cite{DBLP:conf/fc/SeifelnasrGY20}, but retaining the limitation of 2 choices and missing robustness.

In our first contribution within blockchain-based electronic voting, we introduce BBB-Voting \cite{homoliak2023bbb}, a similar blockchain-based approach for decentralized voting such as OVN, but in contrast to OVN, BBB-Voting supports 1-out-of-$k$ choices and provides robustness that enables recovery from stalling participants.
We make a  cost-optimized implementation using an Ethereum-based environment, which we compare with OVN and show that our work decreases the costs for voters by $13.5\%$ in normalized gas consumption.
Finally, we show how BBB-Voting can be extended to support the number of participants limited only by the expenses paid by the semi-trusted\footnote{The authority is only trusted to do identity management of participants honestly, which is an equivalent trust model as in OVN.} authority and the computing power to obtain the tally.

In our second contribution, we introduce SBvote \cite{stanvcikova2023sbvote} (as an extension of BBB-Voting),
a blockchain-based self-tallying voting protocol that is scalable in the number of voters, and therefore suitable for large-scale elections.
The evaluation of our proof-of-concept implementation shows that the protocol's scalability is limited only by the underlying blockchain platform.
Despite the limitations imposed by the throughput of the blockchain platforms, SBvote can accommodate elections with millions of voters.
We evaluated the scalability of SBvote on two public smart contract platforms -- Gnosis and Harmony.

In our last contribution, we propose Always on Voting (AoV) \cite{venugopalan2023always} -- a repetitive blockchain-based voting framework that allows participants to continuously vote and change elected candidates or policies without waiting for the next elections.
Participants are permitted to privately change their vote at any point in time, while the effect of their change is manifested at the end of each epoch, whose duration is shorter than the time between two main elections. 
To thwart the problem of peak-end effect in epochs, the ends of epochs are randomized and made unpredictable, while preserved within soft bounds. 
In AoV, we make the synergy between a Bitcoin puzzle oracle, verifiable delay function, and smart contract properties to achieve these goals.
AoV can be integrated with various existing blockchain-based e-voting solutions.
More details in this direction of our research are presented in \autoref{chapter:evoting}.

\section{Secure Logging}

Centralized ledger systems designed for secure logging are append-only databases providing immutability (i.e., tamper resistance) as a core property. 
To facilitate their append-only feature, cryptographic constructions, such as hash chains or hash trees, are usually deployed.
Traditionally, public ledger systems are centralized, and controlled by a single entity that acts as a trusted party.  
In such a setting, ledgers are being deployed in various applications, including payments, logging, timestamping services, repositories, or public logs of various artifacts (e.g.,
keys \cite{melara2015coniks,chase2019seemless}, certificates issued by authorities \cite{laurie2013certificate}, and binaries \cite{fahl2014hey}).
Unfortunately, centralized ledgers have also several drawbacks, like a lack of efficient verifiability or a higher risk of censorship and equivocation.

In our first contribution to secure logging, we propose Aquareum \cite{homoliak2020aquareum}, a framework for centralized ledgers mitigating their main limitations.  
Aquareum employs a trusted execution environment (TEE) and a
public smart contract platform to provide verifiability, non-equivocation, and mitigation of censorship.  
In Aquareum, a ledger operator deploys a pre-defined TEE enclave code, which verifies the consistency and correctness of the ledger for every ledger update. 
Then, proof produced by the enclave is published at an existing public smart contract platform, guaranteeing that the given snapshot of the ledger is verified and no alternative snapshot of this ledger exists.  
Furthermore, whenever a client suspects that her query (or transaction)
is censored, she can (confidentially) request a resolution of the query via the smart contract platform.
The ledger operator noticing the query is obligated to handle it by passing the query to the enclave that creates a public proof of query resolution and publishes it using the smart contract platform.  
With such a censorship-evident design, an operator is publicly visible when misbehaving, thus the clients can take appropriate actions (e.g., sue the operator) or encode some automated service-level agreements into their smart contracts. 
Since Aquareum is integrated with a Turing-complete virtual machine,
it allows arbitrary transaction processing logic, including tokens or
client-specified smart contracts.

In our second contribution, we present CBDC-AquaSphere, a protocol that uses a combination of a trusted execution environment (TEE) and a public blockchain to enable interoperability over independent centralized CBDC ledgers (based on Aquareum). 
Our interoperability protocol uses a custom adaptation of atomic swap protocol and is executed by any pair of CBDC instances to realize a one-way transfer.
It ensures features such as atomicity, verifiability, correctness, censorship resistance, and privacy while offering high scalability in terms of the number of CBDC instances.
Our approach enables two possible deployment scenarios that can be combined: 
(1) CBDC instances represent central banks of multiple countries, and 
(2) CBDC instances represent the set of retail banks and a paramount central bank of a single country.
More details in this direction of our research are presented in \autoref{chapter:logging}.

\section{Author's Contribution} 
In \autoref{tab:contribution}, we describe the author's contributions to the papers contained in this thesis.\footnote{Note that the table contains alphabetic ordering of papers by the author names.}
Since there is no standard metric assessing the qualitative and quantitative contributions, the table describes the author's contribution to common parts in the process of creating a paper in computer science.  

\newcommand{\cross}[0]{\diagbox[height=\line]{}{}  \diagbox[dir=NE, height=\line]{}{}}

\newcommand{\mybox}[0]{\rule{0.3cm}{0.3cm}}
\newcommand{\boxGray}[0]{\color{lightgray} \mybox}
\newcommand{\boxBlack}[0]{\color{black} \mybox}
\newcommand{\rectG}[0]{\boxGray\boxGray}
\newcommand{\rectB}[0]{\boxBlack\boxBlack}

\renewcommand{\arraystretch}{1.8}

\begin{table}[t]
	\centering 
	\scriptsize
	\begin{tabular} { c c c c c c c c c }
		\toprule
		\textbf{Paper} & \textbf{Topic} & \textbf{Approach}  & \textbf{Proofs} & \textbf{Implementation} & \textbf{Experiments} & \textbf{\specialcell{Security\\Analysis}} & \textbf{\specialcell{Literature\\~~Review}} & \textbf{Writing}  \\ 
		\toprule
		
		\textbf{\cite{budinsky2023fee}}  & \rectG & \rectG & \cross &   & \rectG & \rectB & \rectB & \rectB \\
		
		\cite{drga2023detecting} & \rectB & \rectB & \cross & \rectB & \rectB & \rectB & \rectG & \rectB \\
		
		\cite{hartel2019empirical} &  & \rectG & \cross & \cross & \rectB & \cross & \rectG & \rectB \\
		
		\cite{hellebrandt2019increasing} & & \rectG & \cross & \cross & \cross & \rectG & \rectG & \rectB \\
		
		\textbf{\cite{homoliak2020smartotps}} &  & \rectB & \rectG & \rectB & \rectB & \rectB & \rectB  & \rectB \\
		
		\textbf{\cite{homoliak2023cbdc}} & \rectB  & \rectB & \rectG & \rectG & \rectG & \rectB & \rectB & \rectB \\
		
		\textbf{\cite{homoliak2020aquareum}} &  & \rectB & \rectG & \rectB & \rectB & \rectB & \rectB & \rectB \\	
		
		\cite{homoliak2019security} &  &  \rectB & \cross & \cross & \cross & \rectB & \rectB & \rectB \\
		
		\textbf{\cite{homoliak2020security}} & & \rectB & \cross & \cross & \cross & \rectB &  \rectB & \rectB \\
		
		\cite{hum2020coinwatch} &  & & \cross & & \rectB & \rectB & \rectG & \rectB \\
		
		\cite{perevsini2021dag} & \rectB & \rectB & \cross &  & \rectB & \rectB & \rectG & \rectB \\
		
		\cite{perevsini2023sword} &  & \rectG & \cross &  & \rectG & \rectB & \rectG & \rectB \\
		
		\textbf{\cite{perevsini2023incentive}}& \rectB & \rectB & \cross &  & \rectB & \rectB & \rectB & \rectB \\
		
		\textbf{\cite{stanvcikova2023sbvote}} & \rectB & \rectB & \cross & & & \rectG & \rectG & \rectG \\
		
		\textbf{\cite{szalachowski2019strongchain}} &  & \rectG &  & \rectB & \rectB & \rectB & \rectB & \\
		
		\textbf{\cite{homoliak2023bbb}} &  & \rectG & \rectG & \rectB & \rectB & \rectB & \rectB & \rectB \\
		
		\textbf{\cite{venugopalan2023always}} & & \rectG & & & & \rectB & \rectB & \rectB \\
		
		\bottomrule
		
	\end{tabular}
	\caption{The author's contributions to the selected papers related to this thesis. 
		Essential contribution is depicted in black, partial (still important) contribution is depicted in gray, minor or no contribution is depicted in white, and non-applicable field is depicted by a cross. 
		The papers highlighted in bold are attached to this thesis.}
	\label{tab:contribution} 
\end{table}
\renewcommand{\arraystretch}{1}

\section{Organization of the Thesis}
The rest of the thesis is organized as follows.
In \autoref{chapter:background}, we describe preliminaries and background related to this thesis.
Next, in \autoref{chapter:sra}, we describe our contributions to the standardization of threat modeling for blockchains, and we introduce the security reference architecture as a layered model.
\autoref{chapter:consensus-protocols} summarizes our contributions to the security of the consensus protocol in blockchains -- in particular, we focus on Proof-of-Work (PoW) protocols: we describe StrongChain, transaction selection (incentive) attacks on DAG-based blockchains, and undercutting attacks on PoW blockchains.
In \autoref{chapter:wallets}, we deal with cryptocurrency wallets, where we describe our proposed classification of authentication schemes for such wallets as well as SmartOTPs, our contribution to the two-factor authentication on blockchains, and its security. 
Then, in \autoref{chapter:evoting}, we focus on electronic voting using blockchains as an instance of a public bulletin board, and we describe our proposals BBB-Voting and SBvote as well as the always-on-voting framework for repetitive voting, accompanied with the analysis of their security and privacy aspects.
\autoref{chapter:logging} focuses on secure logging, where we present Aquareum, a centralized ledger based on blockchain and trusted computing; later in this chapter, we build on Aquareum and propose an interoperability protocol for central bank digital currencies called CBDC-AquaSphere.
\autoref{chapter:conclusion} concludes the paper and outlines our future research directions.

\chapter{Background}\label{chapter:background}
In this chapter, we summarize the background and preliminaries of the thesis.
The reader familiar with the topics of blockchain, trusted computing, and integrity-preserving data structures can skip this chapter and proceed to \autoref{chapter:sra}.
This chapter is based on the papers \cite{homoliak2019security,homoliak2020security,strongchain,venugopalan2023always,homoliak2020aquareum,homoliak2023cbdc}.

The blockchain is a data structure representing an append-only distributed ledger
that consists of entries (a.k.a., transactions) aggregated within ordered blocks.
The order of the blocks is agreed upon by mutually untrusting participants running a consensus protocol -- these participants are also referred to as nodes.
The blockchain is resistant against modifications by design since blocks are linked using a cryptographic hash function, and each new block has to be agreed upon by nodes running a consensus protocol.

A transaction is an elementary data entry that may contain arbitrary data, e.g., an order to transfer native cryptocurrency (i.e., crypto-tokens), a piece of application code (i.e., smart contract), the execution orders of such application code, etc. 
Transactions sent to a blockchain are validated by all nodes that maintain a
replicated state of the blockchain.

\section{Features of Blockchains}\label{sec:features-of-blockchain}
Blockchains were initially introduced as a means of coping with the centralization of monetary assets management, resulting in their most popular application -- a decentralized cryptocurrency with a native crypto-token.
Nevertheless, other blockchain applications have emerged, benefiting from features other than decentralization, e.g.,  privacy, energy efficiency, throughput, etc.
We split the features of blockchains into inherent and non-inherent ones, where the former involves ``traditional'' features that were aimed to provide by all blockchains while the latter involves features specific to particular blockchain types.  
These features are summarized in the following.

\subsection{Inherent Features}\label{sec:background-features}
\begin{compactitem}
	\item [\textbf{Decentralization:}] is achieved by a distributed consensus protocol -- the protocol ensures that each modification of the ledger is a result of interaction among participants. 
	In the consensus protocol, participants are equal, i.e., no single entity is designed as an authority.
	An important result of decentralization is resilience to node failures.
	
	\item [\textbf{Censorship Resistance:}] is achieved  due to decentralization, and it ensures that each valid transaction is processed and included in the blockchain.

	\item [\textbf{Immutability:}] means that the history of the ledger cannot be easily modified -- it requires a significant quorum of colluding nodes.
	The immutability of history is achieved by a cryptographic one-way function (i.e., a hash function) that creates integrity-preserving links between the previous record (i.e., block) and the current one. 
	In this way, integrity-preserving chains (e.g., blockchains) or graphs (e.g., direct acyclic graphs~\cite{sompolinsky2016spectre,rocket2018snowflake,popov2016tangle} or trees~\cite{sompolinsky2013accelerating}) are built in an append-only fashion.
	However, the immutability of new blocks is not immediate and depends on the time to the finality of a particular consensus protocol (see \autoref{sec:backgroun-design-goal}).
	
	\item [\textbf{Availability:}] although distributed ledgers are highly redundant in terms of data storage (i.e., full nodes store replicated data), the main advantage of such redundancy is paid off by the extremely high availability of the system.
	This feature may be of special interest to applications that cannot tolerate outages. 
	
	\item [\textbf{Auditability:}] correctness of each transaction and block recorded in the blockchain can be validated by any participating node, which is possible due to the publicly-known rules of a consensus protocol.

	\item [\textbf{Transparency:}] the transactions stored in the blockchain as well as the actions of protocol participants are visible to other participants and in most cases even to the public.	
	
\end{compactitem}


\subsection{Non-Inherent Features}
Additionally to the inherent features, blockchains may be equipped with other features that aim to achieve extra goals.
Below we list a few examples of such non-inherent features.
\begin{compactitem}
	\item[\textbf{Energy Efficiency:}] running an open distributed ledger often means that scarce resources are wasted (e.g., Proof-of-Work).
	However, there are available consensus protocols that do not waste scarce resources, but instead emulate the consumption of scarce resources (i.e., Proof-of-Burn), or the interest rate on an investment (i.e., Proof-of-Stake). 
	See examples of these protocols in \autoref{sec:consensus}.
	
	\item[\textbf{Scalability:}] describes how the consensus protocol scales when the number of participants increases. 
	Protocols whose behavior is not negatively affected by an increasing number of participants have high scalability.
	
	\item[\textbf{Throughput:}] represents the number of transactions that can be processed per unit of time. 
	Some consensus protocols have only a small throughput (e.g., Proof-of-Work), while others are designed with the intention to maximize throughput (e.g., Byzantine Fault Tolerant (BFT) protocols with a small number of participants).

	\item[\textbf{Privacy \& Anonymity:}] 
	by design, data recorded on a public blockchain is visible to all nodes or public, which may lead to privacy and anonymity issues. 
	Therefore, multiple solutions increasing anonymity (e.g., ring signatures~\cite{rivest2001leak} in Monero) and privacy (e.g., zk-SNARKs~\cite{ben2014succinct} in Zcash) were proposed in the context of cryptocurrencies, while other efforts have been made in privacy-preserving smart contract platforms~\cite{kosba2016hawk,cheng2018ekiden}.
	
	\item [\textbf{Accountability and Non-Repudiation:}] if blockchains or  applications running on top of them are designed in such a way that identities of nodes (or application users) are known and verified, accountability and non-repudiation of actions performed can be provided too.
	
\end{compactitem}

\begin{figure}[t]
	\centering        
	\includegraphics[width=0.73\columnwidth]{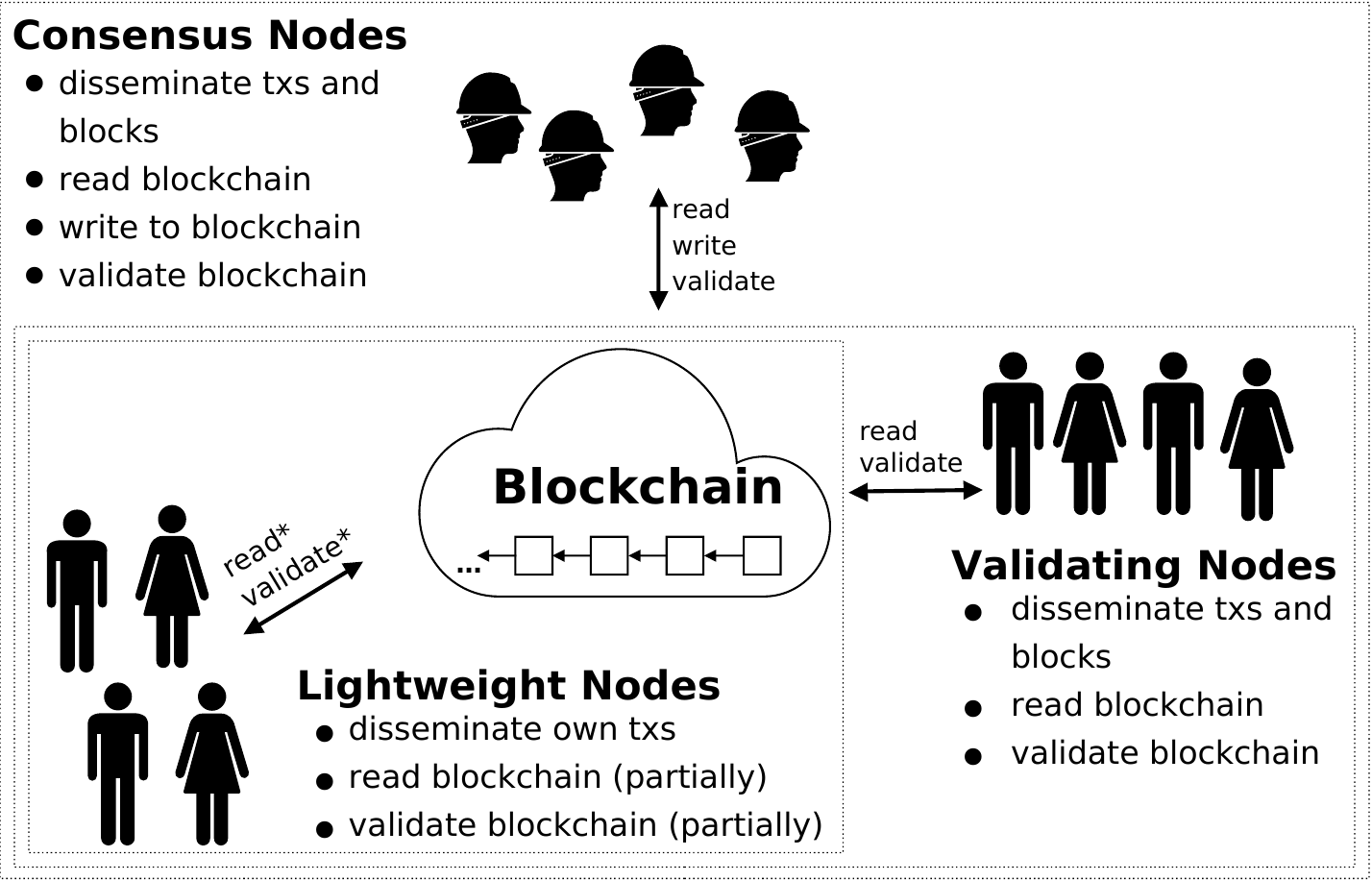}         
	\caption{Involved parties with their interactions and hierarchy.}
	\label{fig:node-types}
\end{figure}

\section{Involved Parties}\label{sec:involved-parties}
Blockchains usually involve three native types of parties that can be organized into a hierarchy, according to the actions that they perform (see~\autoref{fig:node-types}):

\begin{compactdesc}
	
	\item[\textbf{(1) Consensus nodes}] (a.k.a., \textit{miners} in Proof-of-Resource protocols) actively participate in the underlying consensus protocol. 
	These nodes can read the blockchain and write to it by appending new transactions.
	Additionally, they can validate the blockchain and thus check whether writes of other consensus nodes are correct. 
	Consensus nodes can prevent malicious behaviors (e.g., by not appending invalid transactions, or ignoring an incorrect chain).
	
	\item[\textbf{(2) Validating nodes}] read the entire blockchain, validate it, and disseminate transactions. 
	Unlike consensus nodes, validating nodes cannot write to the blockchain, and thus they cannot prevent malicious behaviors. 
	On the other hand, they can detect malicious behavior since they possess copies of the entire blockchain.
	
	\item[\textbf{(3) Lightweight nodes}] (a.k.a., clients or Simplified Payment Verification (SPV) clients) benefit from most of the blockchain functionalities, but they are equipped only with limited information about the blockchain.  
	These nodes can read only fragments of the blockchain (usually block headers) and validate only a small number of transactions that concern them, while they rely on consensus and validating nodes. 
	Therefore, they can detect only a limited set of attacks, pertaining to their own transactions. 
\end{compactdesc}

\subsubsection{\textit{Additional Involved Parties}}
Note that besides native types of involved parties, many applications using or running on the blockchain introduce their own (centralized) components.

\section{Types of Blockchains}\label{sec:types-of-blockchains}
Based on how a new node enters a consensus protocol, we distinguish the following blockchain types:
\begin{compactitem}
	\item[\textbf{Permissionless}] blockchains allow anyone to join the consensus  protocol without permission.
	To prevent Sybil attacks, this type of blockchains usually requires consensus nodes to establish their identities by running a Proof-of-Resource protocol,  where the consensus power of a node is proportional to its resources allocated. 
	\item[\textbf{Permissioned}] blockchains require a consensus node to obtain permission to join the consensus protocol from a centralized or federated authority(ies), while nodes usually have equal consensus power (i.e., one vote per node).
	%
	\item[\textbf{Semi-Permissionless}] blockchains require a consensus node to obtain some form of permission (i.e., stake) before joining the protocol; however, such permission can be given by any consensus node.
	The consensus power of a node is proportional to the stake that it has. 
	%
\end{compactitem}

\section{Design Goals of Consensus Protocols}\label{sec:backgroun-design-goal}

\subsection{Standard Design Goals -- Liveness and Safety}\label{sec:standard-design-goals}
The standard design goals of consensus protocols are \textit{liveness} and \textit{safety}. 
To meet these goals, an \textit{eventual-synchrony} network model~\cite{dwork1988consensus} is usually assumed due to its simplicity. 
In this model, upper bounds are put on an asynchronous delivery of each message, hence each message is eventually/synchronously delivered.
\textbf{Liveness} ensures that all valid transactions are eventually processed -- i.e.,  if a transaction is received by a single honest node, it will eventually be delivered to all honest nodes.
\textbf{Safety} ensures that if an honest node accepts (or rejects) a transaction, then all other honest nodes make the same decision.
Usually, consensus protocols satisfy safety and liveness only under certain assumptions: the minimal fraction of honest consensus power or the maximal fraction of adversarial consensus power.
With regard to safety, literature often uses the term \textit{finality} and \textit{time to finality}. 
\textbf{Finality} represents the sequence of the blocks from the genesis block up to the block $B$, where it can be assumed that this sequence of blocks is infeasible to overturn.
To reach finality up to the block $B$, several successive blocks need to be appended after $B$ -- the number of such blocks is referred to as \textit{the number of confirmations}. 
%

\subsection{Specific Design Goals}
As a result of this study, we learned that standard design goals of the consensus protocol should be amended by specific design goals related to the type of the blockchain.
In permissionless type, \textit{elimination of Sybil entities}, \textit{a fresh and fair leader/committee election}, and \textit{non-interactive verification of the consensus result} is required to meet. 
In contrast, the (semi)-permissionless types do not require the elimination of Sybil entities.

\subsection{Means to Achieve Design Goals}
\paragraph{Simulation of the Verifiable Random Function (VRF).}\label{sec:background-VRF}
To ensure a fresh and fair leader/committee election, all consensus nodes should contribute to the pseudo-randomness generation that determines the fresh result of the election. 
This can be captured by the concept of the VRF~\cite{micali1999verifiable}, which ensures the unpredictability and fairness of the election process.
Therefore, the leader/committee election process can be viewed as a simulation of VRF~\cite{wang2018survey}.
Due to the properties of VRF, the correctness of the election result can be verified non-interactively after the election took place.

\paragraph{Incentive and Rewarding Schemes.}\label{sec:incentives}
An important aspect for protocol designers is to include a rewarding/incentive scheme that motivates consensus nodes to participate honestly in the protocol.
In the context of public (permissionless) blockchains that introduce their native crypto-tokens, this is achieved by block creation rewards as well as transaction fees, and optionally penalties for misbehavior.
Transaction fees and block creation rewards are attributed to the consensus node(s) that create a valid block (e.g., \cite{nakamoto2008bitcoin}), although alternative incentive schemes rewarding more consensus nodes at the same time are also possible (e.g., \cite{strongchain}).
While transaction fees are included in a particular transaction, the block reward is usually part of the first transaction in the block (a.k.a., \textit{coinbase} transaction).

\section{Basis of Consensus Protocols}\label{sec:background-basis}
\textit{Lottery} and \textit{voting} are two marginal techniques that deal with the establishment of a consensus~\cite{hyperledger1}.
However, in addition to them, their combinations have become popular.

\paragraph{Lottery-Based Protocols.}
These protocols provide consensus by running a lottery that elects a leader/committee, who produces the block.
The advantages of lottery-based approaches are a small network traffic overheads and high scalability (in the number of consensus nodes) since the process is usually non-interactive (e.g.,~\cite{nakamoto2008bitcoin}, \cite{chen2017security}, \cite{kiayias2017ouroboros}). 
However, a disadvantage of this approach is the possibility of multiple ``winners'' being elected, who propose conflicting blocks, which naturally leads to inconsistencies called \textit{forks}.
Forks are resolved by fork-choice rules, which compute the difficulty of each branch and select the one. 
For the \textit{longest chain rule}, the chain with the largest number of blocks is selected in the case of a conflict, while for the \textit{strongest chain rule}, the selection criteria involve the quality of each block in the chain (e.g., \cite{sompolinsky2016spectre,sompolinsky2013accelerating,ethereum-classic-GHOST,zamyatinflux,strongchain}).
Note that the possibility of forks in this category of protocols causes an increase of the time to finality, which in turn might enable some attacks such as double-spending. 

\paragraph{Voting-Based Protocols.} 
In this group of protocols, the agreement on transactions is reached through the votes of all participants. 
Examples include Byzantine Fault Tolerant (BFT) protocols -- which require the consensus of a majority quorum (usually $\frac{2}{3}$) of all consensus nodes
(e.g.,  \cite{castro1999practical,aublin2013rbft,buchman2018tendermint,kiayias2018ouroboros,duan2014bchain}).
The advantage of this category is a low-latency finality due to a negligible likelihood of forks. 
The protocols from this group suffer from low scalability, and thus their throughput forms a trade-off with scalability (i.e., the higher the number of nodes, the lower the throughput).

\paragraph{Combinations.}
To improve the scalability of voting-based protocols, it is desirable to shrink the number of consensus nodes participating in the voting by a lottery, so that only nodes of such a committee vote for a block (e.g.,~\cite{gilad2017algorand}, \cite{daian2017snow}, \cite{Hanke2018}, \cite{zilliqa2017zilliqa}, \cite{kiayias2018ouroboros}). 
Another option to reduce active voting nodes is to split them into several groups (a.k.a., \textit{shards}) that run a consensus protocol in parallel (e.g., \cite{Kokoris-KogiasJ18-omniledger, ZamaniM018-rapidchain}). 
Such a setting further increases the throughput in contrast to the single-group option, but on the other hand, it requires a mechanism that accomplishes inter-shard transactions.

\section{Failure Models in Distributed Consensus Protocols}
The relevant literature mentions two main failure models for consensus protocols~\cite{schneider1990implementing}: 

\begin{compactitem}
	\item[\textbf{Fail-Stop Failures:}]
	A node either stops its operation or continues to operate, while obviously exposing its faulty behavior to other nodes.
	Hence, all other nodes are aware of the faulty state of that node (e.g., tolerated in Paxos~\cite{lamport1998paxos}, Raft~\cite{ongaro2014search}, Viewstamped Replication~\cite{oki1988viewstamped}).
	
	\item[\textbf{Byzantine Failures:}]
	In this model, the failed nodes (a.k.a., Byzantine nodes) may perform arbitrary actions, including malicious behavior targeting the consensus protocol and collusions with other Byzantine nodes.
	Hence, the Byzantine failure model is of particular interest to security-critical applications, such as blockchains (e.g., Nakamoto's consensus~\cite{nakamoto2008bitcoin}, pure BFT protocols~\cite{castro1999practical}, \cite{aublin2013rbft}, \cite{buchman2018tendermint}, \cite{cachin2002sintra}, and hybrid protocols~\cite{gilad2017algorand,Kokoris-KogiasJ18-omniledger,ZamaniM018-rapidchain}). 
\end{compactitem}

\section{Nakamoto Consensus and Bitcoin}
\label{sec:pre:bitcoin}
The Nakamoto consensus protocol allows decentralized and distributed network
comprised of mutually distrusting participants to reach an agreement on the
state of the global distributed ledger (i.e., blockchain)~\cite{nakamoto2008bitcoin}.  
To resolve any \textit{forks} of the blockchain the protocol specifies to always
accept the longest chain as the current one.  
Bitcoin is a peer-to-peer cryptocurrency that deploys Nakamoto consensus as
its core mechanism to avoid double-spending. 
Transactions spending bitcoins are announced to
the Bitcoin network, where miners validate, serialize all non-included
transactions, and try to create (mine) a block of transactions with a PoW
embedded into the block header.  
A valid block must fulfill the condition that for a cryptographic hash function
$H$, the hash value of the block header is less than the target $T$.

\subsection{Incentive Scheme}
Brute-forcing the nonce (together with some other fields) is
the only way to produce the PoW, which costs computational resources
of the miners.
To incentivize miners, the Bitcoin protocol allows the miner who finds a block
to insert a coinbase transaction minting a specified amount of new
bitcoins and collecting transaction fees offered by the included transactions.
Currently, every block mints 6.25 new bitcoins. This amount is halved 
every four years, upper-bounding the number of bitcoins that will be created to
a fixed total of 21 million coins. It implies that after around the year 2140, 
no new coins will be created, and the transaction fees will be the only source of
reward for miners. Because of its design, Bitcoin is a deflationary currency. 

In the original white paper, Nakamoto heuristically argues that the consensus
protocol remains secure as long as a majority ($>50\%$) of the participants'
computing power honestly follow the rules specified by the protocol, which is
compatible with their own economic incentives.

\subsection{Difficulty and Fork-Choice Rule}
The overall hash rate of the Bitcoin network  and the difficulty of the PoW
determine how long it takes to generate a new block for the whole network (the
block interval).  To stabilize the block interval at about 10 minutes for
the constantly changing total mining power, the Bitcoin network adjusts the target $T$ every
$2016$ blocks (about two weeks, i.e., a \textit{difficulty window}) according to
the following formula
\begin{equation}\label{eqn:adjust_Ts}
	T_{new} = T_{old} 
	\cdot \frac{\textit{Time~of~the~last~2016~blocks}}{\textit{2016}\cdot\textit{10~minutes}}.
\end{equation}
In simple terms, the difficulty increases if the network is finding blocks
faster than every 10 minutes, and decrease otherwise.  With dynamic difficulty, Nakamoto's longest chain fork-choice rule was considered as a
bug,\footnote{\url{https://goo.gl/thhusi}} as it is trivial to produce long chains
that have low difficulty. The rule was replaced by the strongest-PoW chain rule
where competing chains are measured in terms of PoW they aggregated.  As long as
there is one chain with the highest PoW, this chain is chosen as the current
one.

\subsection{UTXO Model}
Bitcoin introduced and uses the \textit{unspent transaction output} (UTXO) model.  The
validity of a Bitcoin transaction is verified by executing a script proving that
the transaction sender is authorized to redeem unspent coins.
Also, the Bitcoin scripting language offers a mechanism (\texttt{OP\_RETURN}) for
recording data on the blockchain, which facilitates third-party applications
built-on Bitcoin.

\subsection{Light Clients and Simple Payment Verification (SPV)}
Bitcoin proposes the simplified payment verification (SPV) protocol, that allows
re\-source-limited clients to verify that a transaction is indeed included in a
block provided only with the block header and a short transaction's inclusion
proof. The key advantage of the protocol is that SPV clients can verify the
existence of a transaction without downloading or storing the whole block.  SPV
clients are provided only with block headers and on-demand request from the
network inclusion proofs of the transactions they are interested~in.
%
%
%
%

\section{Integrity Preserving Data Structures}\label{sec:background:integrity-structures}

\subsection{Merkle Tree}\label{sec:MT-background}
A Merkle tree~\cite{merkle1989certified} is a  data structure based on the binary tree in which each leaf node contains a hash of a single data block, while each non-leaf node contains a hash of its concatenated children.
At the top of a Merkle tree is the root hash, which provides a tamper-evident summary of the contents.
A Merkle tree enables efficient verification as to whether some data are associated with a leaf node by comparing the expected root hash of a tree with the one computed from a hash of the data in the query and the remaining nodes required to reconstruct the root hash (i.e., \textit{proof} or \textit{authentication path}).
The reconstruction of the root hash has the logarithmic time and space complexity, which makes the Merkle tree an efficient scheme for membership verification.
The Merkle tree is utilized for example in  Bitcoin (and other blockchains) for aggregation of transactions within a block into the root hash, providing the integrity snapshot and at the same time enabling SPV clients to download only data related to authentication path of the transaction whose inclusion in a block is to be verified.

To provide a membership verification of element $x_i$ in the list of elements $X = \{x_i\}, i \geq 1$, the Merkle tree supports the following operations:
\begin{itemize}
	\item{$\mathbf{MkRoot(X) \rightarrow  Root}$:} an aggregation of all elements of the list $X$ by a Merkle tree, providing a single value $Root$. 
	
	\item{$\mathbf{MkProof(x_i, X) \rightarrow  \pi^{mk}}$:} a Merkle proof generation for the $i$th element $x_i$ present in the list of all elements $X$. 
	
	\item{$\mathbf{\pi^{mk}.Verify(x_i, Root) \rightarrow  \{True, False\}}$:}  verification of the Merkle proof $\pi^{mk}$, witnessing that $x_i$ is included in the list $X$ that is aggregated by the Merkle tree with the root hash  $Root$.
\end{itemize}

\subsection{History Tree}\label{sec:background-historyT} 
A Merkle tree has been primarily used for proving membership.
However, Crosby and Wallach~\cite{crosby2009efficient} extended its application for an append-only tamper-evident log, denoted as a \textit{history tree}.
A history tree is the Merkle tree, in which leaf nodes are added in an append-only fashion, and which allows to produce logarithmic proofs witnessing that arbitrary two versions of the tree are consistent (i.e., one version of the tree is an extension of another).
Therefore, once added, a leaf node cannot be modified or removed. 

A history tree brings a versioned computation of hashes over the Merkle tree, enabling to prove that different versions (i.e., commitments) of a log, with distinct root hashes, make consistent claims about the past.
To provide a tamper-evident history system~\cite{crosby2009efficient}, the log represented by the history tree $L$ supports the following operations:
\begin{itemize}
	\item $\mathbf{L.add(x) \rightarrow C_j}$: appending of the record $x$ to $L$, returning a new commitment $C_j$ that represents the most recent value of the root hash of the history tree.
	
	\item $\mathbf{L.IncProof(C_i, C_j) \rightarrow \pi^{inc}}$: an incremental proof generation between two commitments $C_i$ and $C_j$, where $i \leq j$.
	
	\item $\mathbf{L.MemProof(i, C_j) \rightarrow \pi^{mem} }$: a membership proof generation for the record $x_i$ from the commitment $C_j$, where $i \leq j$. 
	
	\item $\mathbf{\pi^{inc}.Verify(C_i, C_j) \rightarrow \{True, False\}}$:  verification of the incremental proof $\pi^{inc}$, witnessing that the commitment $C_j$ contains the same history of records $x_k, k \in \{0,\ldots,i\}$ as the commitment $C_i$, where $i \leq j$.  
	
	\item $\mathbf{\pi^{mem}.Verify(i, x_i, C_j) \rightarrow \{True, False\}}$:  verification of the membership proof $\pi^{mem}$, witnessing that $x_i$ is the $i$th record in the $j$th version of $L$, fixed by the commitment $C_j$,  $i \leq j$. 
	
	\item $\mathbf{\pi^{inc}.DeriveNewRoot() \rightarrow C_j}$: a reconstruction of the commitment $C_j$ from the incremental proof $\pi^{inc}$ that was generated by $L.IncProof(C_i, C_j)$.
	
	\item $\mathbf{\pi^{inc}.DeriveOldRoot() \rightarrow C_i}$: a reconstruction of the commitment $C_i$ from the incremental proof $\pi^{inc}$ that was generated by $L.IncProof(C_i, C_j)$.
	
\end{itemize}

\subsection{Radix and Merkle-Patricia Tries}
Radix trie serves as a key-value storage.
In the Radix trie, every node at the $l$-th layer of the trie has the form of $\langle (p_0, p_1, \ldots, p_n), v\rangle$, where $v$ is a stored value and all $p_i, ~i\in \{0,1, \ldots, n\}$ represent the pointers on the nodes in the next (lower) layer $l+1$ of the trie, which is selected by following the $(l+1)$-th item of the key.
Note that key consists of an arbitrary number of items that belong to an alphabet with $n$ symbols (e.g., hex symbols). 
Hence, each node of the Radix trie has $n$ children and to access a leaf node (i.e., data $v$), one must descend the trie starting from the root node while following the items of the key one-by-one.
Note that Radix trie requires underlying database of key-value storage that maps pointers to nodes.
However, Radix trie does not contain integrity protection, and when its key is too long (e.g., hash value), the Radix trie will be sparse, thus imposing a high overhead for storage of all the nodes on the path from the root to values.

Merkle Patricia Trie (MPT)~\cite{wood2014ethereum,Merkle-Patricia-Trie-eth} is a combination of the Merkle tree (see \autoref{sec:MT-background}) and Radix trie data structures, and similar the Radix Trie, it serves as a key-value data storage.
However, in contrast to Radix trie, the pointers are replaced by a cryptographically secure hash of the data in nodes, providing integrity protection.
In detail, MPT guarantees integrity by using a cryptographically secure hash of the value for the MPT key as well as for the realization of keys in the underlying database that maps the hashes of nodes to their content; therefore, the hash of the root node of the MPT represents an integrity snapshot of the whole MPT trie.
Next, Merkle-Patricia trie introduces the \textit{extension nodes}, due to which, there is no need to keep a dedicated node for each item of the path in the key. 
The MPT trie $T$ supports the following operations:
\begin{itemize}
	\item[$\mathbf{T.root \rightarrow Root}$:] accessing the hash of the root node of MPT, which is stored as a key in the underlying database.
	
	\item $\mathbf{T.add(k, x) \rightarrow Root}$: adding the value $x$ with the key $k$ to $T$ while obtaining the new hash value of the root node. 
	
	\item $\mathbf{T.get(k) \rightarrow \{x, \perp\}}$: fetching a value $x$ that corresponds to key $k$; return $\perp$ if no such value exists.  
	
	\item $\mathbf{T.delete(k) \rightarrow \{True, False\}}$: deleting the entry with key equal to $k$, returning $True$ upon success, $False$ otherwise.
	
	\item $\mathbf{T.MptProof(k) \rightarrow  \{\pi^{mpt}, \pi^{\overline{mpt}}\}}$: a MPT (inclusion / exclusion) proof generation for the entry with key $k$.

	\item $\mathbf{\pi^{mpt}.Verify(k, Root) \rightarrow \{True, False\}}$:  verification of the MPT proof $\pi^{mpt}$, witnessing that entry with the key $k$ is in the MPT whose hash of the root node is equal to $Root$.			
	
	\item $\mathbf{\pi^{\overline{mpt}}.VerifyNeg(k, Root) \rightarrow \{True, False\}}$:  verification of the negative MPT proof, witnessing that entry with the key $k$ is not in the MPT with the root hash equal to $Root$.			

\end{itemize}

\section{Verifiable Delay Function}\label{sec:background:vdf}
The functionality of Verifiable Delay Function (VDF) \cite{Boneh2018} is similar to a time lock,\footnote{Time locks are computational problems that can only be solved by running a continuous computation for a given amount of time.} but in addition to it, by providing a short proof, a verifier may easily check if the prover knows the output of the VDF.
The function is effectively serialized, and parallel processing does not help to speed up VDF computation. 
A moderate amount of sequential computation is required to compute VDF.
Given a time delay $t$, a VDF  must satisfy the following conditions: 
for any input $x$, anyone equipped with commercial hardware can find $y$ = VDF($x, t$) in $t$ sequential steps, but an adversary with $p$ parallel processing units must not distinguish $y$ from a random number  in significantly fewer steps.
Further, given output $y$ of VDF, the prover can supply a proof $\pi$ to a verifier, who may check the output $y ~=~ \text{VDF}(x, t)$ using $\pi$ in logarithmic time w.r.t. time delay $t$ (i.e., $VDF\_Verify (y,\pi) \stackrel{?}{=} True$).

Finally, the safety factor $A_{max}$ is defined as the time ratio that the adversary is estimated to run VDF computation faster on proprietary hardware as opposed to a benign VDF computation using commercial hardware (see Drake~\cite{Drake2018}). 
CPU over-clocking records~\cite{SAMUEL2020} indicate that $A_{max}=10$ is a reasonable estimate.

\section{Atomic Swap}\label{sec:atomicswap}
A basic atomic swap assumes two parties $\mathbb{A}$ and $\mathbb{B}$ owning crypto-tokens in two different blockchains.
$\mathbb{A}$ and $\mathbb{B}$ wish to execute cross-chain exchange atomically and thus achieve a \textit{fairness} property, i.e., either both of the parties receive the agreed amount of crypto-tokens or neither of them.
First, this process involves an agreement on the amount and exchange rate, and second, the execution of the exchange itself.

In a centralized scenario~\cite{micali2003simple}, the approach is to utilize a trusted third party for the execution of the exchange.
In contrast to the centralized scenario, blockchains allow us to execute such an exchange without a requirement of the trusted party.
%
The atomic swap protocol~\cite{atomic-swap} enables conditional redemption of the funds in the first blockchain to $\mathbb{B}$ upon revealing of the hash pre-image (i.e., secret) that redeems the funds on the second blockchain to $\mathbb{A}$.
The atomic swap protocol is based on two Hashed Time-Lock Contracts (HTLC) that are deployed by both parties in both blockchains.

Although HTLCs can be implemented by Turing-incomplete smart contracts with support for hash-locks and time-locks, for clarity, we provide a description assuming Turing-complete smart contracts, requiring four transactions:
\begin{enumerate}
	\item $\mathbb{A}$ chooses a random string $x$ (i.e., a secret) and computes its hash $h(x)$.
	Using $h(x)$, $\mathbb{A}$ deploys $HTLC_\mathbb{A}$ on the first blockchain and sends the agreed amount to it, which later enables anybody to do a conditional transfer of that amount to $\mathbb{B}$ upon calling a particular method of $HTLC_\mathbb{A}$ with $x = h(x)$ as an argument (i.e., hash-lock). 
	Moreover, $\mathbb{A}$ defines a time-lock, which, when expired, allows $\mathbb{A}$ to recover funds into her address by calling a dedicated method: this is to prevent aborting of the protocol by another party.
	
	\item When $\mathbb{B}$ notices that $HTLC_\mathbb{A}$ has been already deployed, she deploys $HTLC_\mathbb{B}$ on the second blockchain and sends the agreed amount there, enabling a conditional transfer of that amount to $\mathbb{A}$ upon revealing the correct pre-image of $h(x)$ ($h(x)$ is visible from already deployed $HTLC_\mathbb{A}$).
	$\mathbb{B}$ also defines a time-lock in $HTLC_\mathbb{B}$ to handle abortion by $\mathbb{A}$.
	
	\item Once $\mathbb{A}$ notices deployed $HTLC_\mathbb{B}$, she calls a method of $HTLC_\mathbb{B}$ with revealed $x$, and in turn, she obtains the funds on the second blockchain.
	
	\item Once $\mathbb{B}$ notices that $x$ was revealed by $\mathbb{A}$ on the second blockchain, she calls a method of $HTLC_\mathbb{A}$ with $x$ as an argument, and in turn, she obtains the funds on the first blockchain.
\end{enumerate}
If any of the parties aborts, the counter-party waits until the time-lock expires and redeems the funds.

\section{Trusted Execution Environment}\label{sec:background:tee}
Trusted Execution Environment (TEE) is a hardware-based component that can securely execute arbitrary code in an isolated environment. 
TEE uses cryptography primitives and hardware-embedded secrets that protect data confidentiality and the integrity of computations.
In particular, the adversary model of TEE usually includes privileged applications and an operating system, which may compromise unprivileged user-space applications.
There are several practical instances of TEE, such as Intel Software Guard Extensions (SGX)~\cite{anati2013innovative,mckeen2013innovative,hoekstra2013using} available at Intel's CPUs or based on RISC-V architecture such as Keystone-enclave~\cite{Keystone-enclave} and Sanctum~\cite{costan2016sanctum}.
In the context of this work (i.e., \autoref{chapter:logging}), we built on top of Intel SGX, therefore we adopt the terminology introduced by it.

\paragraph{Intel SGX.}
Intel SGX is a set of instructions that ensures hardware-level isolation of protected user-space codes called \textit{enclaves}. 
An enclave process cannot execute system calls but can read and write memory outside the enclave. 
Thus isolated execution in SGX may be viewed as an ideal model in which a process is guaranteed to be executed correctly with ideal confidentiality, while it might run on a potentially malicious operating system.

Intel SGX allows a local process or a remote system to securely communicate with the enclave as well as execute verification of the integrity of the enclave's code. 
When an enclave is created, the CPU outputs a report of its initial state, also referred to as a \textit{measurement}, which is signed by the private key of TEE and encrypted by a public key of Intel Attestation Service (IAS). 
The hardware-protected signature serves as the proof that the measured code is running in an SGX-protected enclave, while the encryption by IAS public key ensures that the SGX-equipped CPU is genuine and was manufactured by Intel.
This proof is also known as a \textit{quote} or \textit{attestation}, and it can be verified by a local process or by a remote system. 
The enclave process-provided public key can be used by a verifier to establish a secure channel with the enclave or to verify the signature during the attestation.
We assume that a trustworthy measurement of the enclave's code is available for any client that wishes to verify an attestation.

\section{Central Bank Digital Currency (CBDC)}
CBDC is often defined as a digital liability backed and issued by a central bank that is widely available to the general public. 
CBDC encompasses many potential benefits such as efficiency and resiliency, flexible monetary policies, and enables enhanced control of tax evasion and money laundering~\cite{kiff2020survey}. 
However, regulations, privacy and identity management issues, as well as design vulnerabilities are potential risks that are shared with cryptocurrencies. 
Many blockchain-based CBDC projects rely on using some sort of stable coins adapting permissioned blockchains due to their scalability and the capability to establish specific privacy policies, as compared to public blockchains~\cite{sethaput2021blockchain,zhang2021blockchain}. 
Therefore, the level of decentralization and coin volatility are two main differences between blockchain-based CBDCs and common cryptocurrencies.
These CBDCs are often based on permissioned blockchain projects such as Corda~\cite{brown2016corda}, variants of Hyperledger~\cite{hyperledger-github}, and Quorum~\cite{espel2017proposal}.

CDBC solutions are often designed as multi-layer projects~\cite{2022cbdctypes}. 
Wholesale CBDC targets communication of financial institutions and inter-bank settlements. 
Retail CBDC includes accessibility to the general public or their customers. 

\chapter{Standardization in Threat Modeling}\label{chapter:sra}

In this chapter, we present our contribution to standardization for threat modeling and it is based on the papers \cite{homoliak2019security} \cite{homoliak2020security} (see also \autoref{sec:sra-contributing-papers}).
In particular, we introduce the security reference architecture (SRA) for blockchains, which adopts a stacked model (similar to the ISO/OSI) describing the nature and hierarchy of various security and privacy aspects.
The SRA contains four layers: (1) the network layer, (2) the consensus layer, (3) the replicated state machine layer, and (4) the application layer.
At each of these layers, we identify known security threats, their origin, and countermeasures, while we also analyze several cross-layer dependencies. 
%
%
Next, to enable better reasoning about the security aspects of blockchains by the practitioners, we propose a blockchain-specific version of the threat-risk assessment standard ISO/IEC 15408 by embedding the stacked model into this standard. 
Finally, we provide designers of blockchain platforms and applications with a design methodology following the model of SRA and its hierarchy.

\section{Methodology and Scope}

We aim to consolidate the literature, categorize found vulnerabilities and threats according to their origin, and as a result, we create four main categories (also referred to as layers).
At the level of particular main categories, we apply sub-categorization that is based on the existing knowledge and operation principles specific to such subcategories, especially concerning the security implications.
If some subcategories impose equivalent security implications, we merge them into a single subcategory.
See the road-map of all the categories in \autoref{fig:overview}.
Our next aim is to indicate and explain the co-occurrences or relations of multiple threats, either at the same main category or across more categories.

\section{Security Reference Architecture}
We present two models of the security reference architecture, which facilitate systematic studying of vulnerabilities and threats related to the blockchains and applications running on top of them.
First, we introduce the stacked model, which we then project into the threat-risk assessment model.

\begin{figure}[t]
	\centering
	\includegraphics[width=0.45\textwidth]{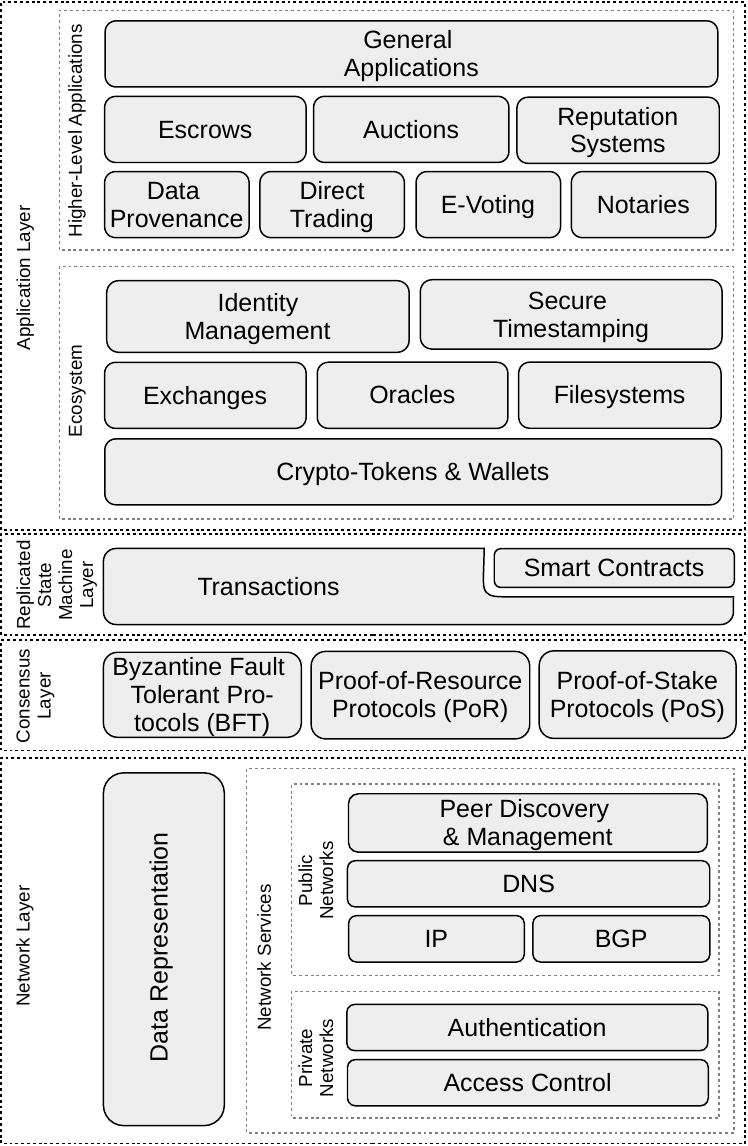}         
	\caption{Stacked model of the security reference architecture.}
	\label{fig:overview}
\end{figure}

\subsection{Stacked Model}\label{sec:stacked-model}
\label{sec:LayeredModel}
To classify the security aspects of blockchains, we
utilize a stacked model consisting of four layers (see \autoref{fig:overview}).  
A similar stacked model was already proposed in the literature~\cite{wang2018survey}, but in contrast to it, we preserve only such a granularity level that enables us to isolate security threats and their nature, which is the key focus of our work.
In the following, we briefly describe each layer.

\begin{compactenum}
	\item[\textbf{(1) The  network layer}] consists of the data
	representation and network services planes. 
	The data representation plane deals with the storage, encoding, and protection of data, while the
	network service plane contains the discovery and communication with protocol peers, addressing, routing, and naming services. 	
	
	\item[\textbf{(2) The consensus layer}] deals with the ordering of transactions, and we divide it into three main categories according to the protocol type: Byzantine Fault Tolerant, 
	Proof-of-Resource,
	and Proof-of-Stake protocols.
	
	
	\item[\textbf{(3) The  replicated state machine (RSM) layer}] deals with the interpretation of transactions, according to which the state of the blockchain is updated.
	In this layer, transactions are categorized into two parts, where the first part deals with the privacy of data in transactions as well as the privacy of the users who created them, and the second part -- smart contracts -- deals with the security and safety aspects of decentralized code execution in this environment.
	%
	
	\item[\textbf{(4) The application layer}] contains the most common end-user functionalities and services.
	We divide this layer into two groups.
	The first group represents the applications that provide common functionalities for most of the higher-level blockchain applications, and it contains the following categories: wallets, exchanges, oracles, filesystems, identity management, and secure timestamping.
	We refer to this group as applications of the blockchain ecosystem.
	The next group of application types resides at a higher level and focuses on providing certain end-user functionality. 
	This group contains categories such as e-voting, notaries, identity management, auctions, escrows, etc.

\end{compactenum}

\subsection{Threat-Risk Assessment Model}
To better capture the security-related aspects of blockchain systems, we introduce a threat-risk model (see \autoref{fig:iso15408}) that is
based on the template of ISO/IEC 15408~\cite{cc2017} and projection of our stacked model (see \autoref{fig:overview}). 
This model includes the following components and actors: 
\begin{compactitem}
	\item[\textbf{Owners}] are blockchain users who run any type of node and they exist at the application layer and the consensus layer.
	Owners possess crypto-tokens, and they might use or provide blockchain-based applications and services.
	Additionally, owners involve consensus nodes that earn crypto-tokens from running the consensus protocol.
	\item[\textbf{Assets}] are present at the application layer, and they consist of monetary value (i.e., crypto-tokens or other tokens) as well as the availability of application-layer services and functionalities built on top of blockchains (e.g., notaries, escrows, data provenance, auctions).
	The authenticity of users, the privacy of users, and the privacy of data might also be considered as application-specific assets.
	Furthermore, we include here the reputation of service providers using the blockchain services.
	%
	\item[\textbf{Threat agents}] are spread across all the layers of the stacked model, and they mostly involve malicious users whose intention is to steal assets, break functionalities, or disrupt services.
	However, threat agents might also be inadvertent entities, such as developers of smart contracts who unintentionally create bugs and designers of blockchain applications who make mistakes in the design or  ignore some issues.
	\item[\textbf{Threats}] facilitate various attacks on assets, and they exist at all layers of the stacked model.
	Threats arise from vulnerabilities in the network, smart
	contracts, applications, from consensus protocol deviations, violations of consensus protocol assumptions.
	\item[\textbf{Countermeasures}] protect owners from threats by minimizing the risk of compromising/losing the assets.   
	Alike the threats and threat agents, countermeasures can be applied at each of the layers of our stacked model, and they involve various security/privacy/safety solutions, incentive schemes, reputation techniques, best practices, etc. 
	Nevertheless, we emphasize that their utilization usually imposes some limitations such as higher complexity and additional performance overheads (e.g., resulting in decreased throughput). 

	\item[\textbf{Risks}] are related to the application layer, and they are caused by threats and their agents. 
	Risks may lead to a loss of monetary assets, a loss of privacy, a loss of reputation, service malfunctions, and disruptions of services and applications (i.e., availability issues).
	
\end{compactitem}

\begin{figure}[t]
	\begin{center}
		\includegraphics[width=0.53\textwidth]{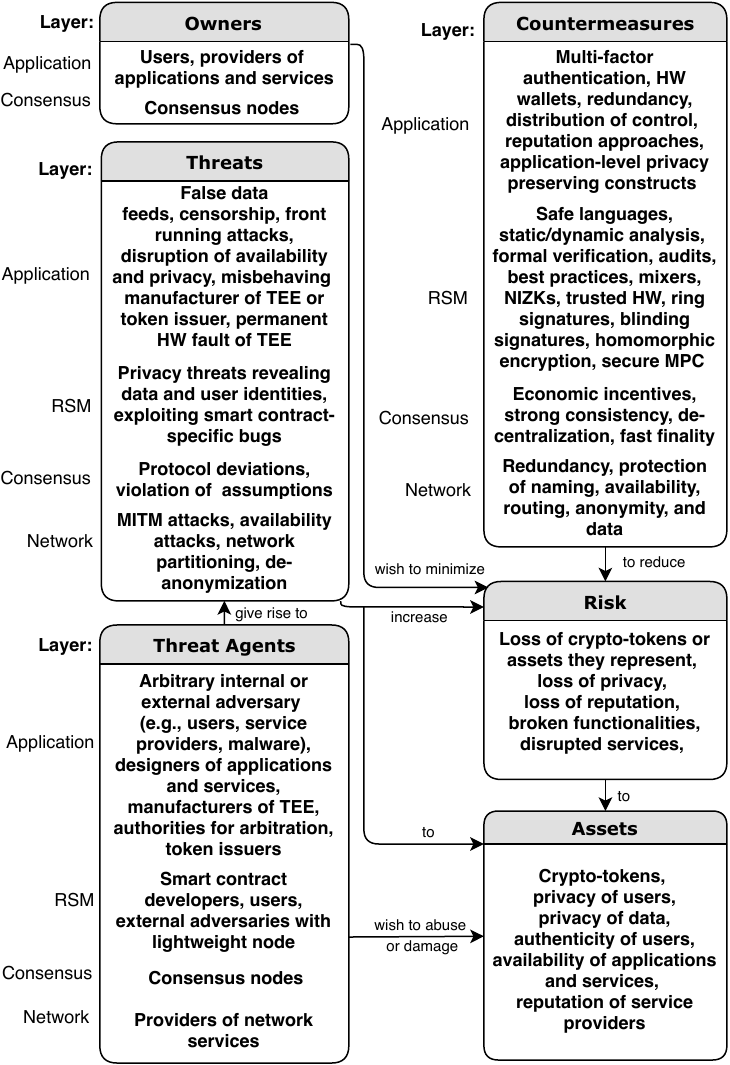}
		\caption{Threat-risk assessment model of the security reference architecture.}
		\label{fig:iso15408}
	\end{center}        
\end{figure}

\smallskip\noindent
The owners wish to minimize the risk caused by threats that arise
from threat agents.
Within our stacked model, different threat agents appear at each layer.  
At the \textbf{network layer}, there are service providers including parties
managing IP addresses and DNS names.
The threats at this layer arise from man-in-the-middle (MITM) attacks, 
network partitioning, de-anonymization, and availability attacks. 
Countermeasures contain protection of availability, naming, routing,
anonymity, and data.
At the \textbf{consensus layer}, consensus nodes may be malicious and wish to alter the outcome of the consensus protocol by deviating from it.
Moreover, if they are powerful enough, malicious nodes might violate assumptions of consensus protocols to take over the execution of the protocol or cause its disruption.
The countermeasures include well-designed economic incentives, strong consistency, decentralization, and fast finality solutions.
At the \textbf{RSM layer}, the threat agents may stand for developers who (un)intentionally introduce semantic bugs in smart contracts (intentional bugs represent backdoors) as well as users and external adversaries running lightweight nodes who pose threats due to the exploitation of such bugs. 
Countermeasures include safe languages, static/dynamic analysis, formal verification, audits, best practices, and design patterns.
Other threats of the RSM layer are related to compromising the privacy of data and user identities with mitigation techniques involving mixers, privacy-preserving cryptography constructs (e.g., non-interactive zero-knowledge proofs (NIZKs), ring signatures, blinding signatures, homomorphic encryption) as well as usage of trusted hardware (respecting its assumptions and attacker models declared).
%
At the \textbf{application layer}, threat agents are broad and involve arbitrary internal or external adversaries such as users, service providers, malware, designers of applications and services, manufactures of trusted execution environments (TEE) for concerned applications (e.g., oracles, auctions), authorities in the case of applications that require them for arbitration (e.g., escrows, auctions) or filtering of users (e.g., e-voting, auctions), token issuers.
The threats on this layer might arise from false data feeds, censorship by application-specific authorities (e.g., auctions, e-voting), front running attacks, disruption of the availability of centralized components, compromising application-level privacy, misbehaving of the token issuer, misbehaving of manufacturer of TEE or permanent hardware (HW) faults in TEE.
Examples of mitigation techniques are multi-factor authentication, HW wallets with displays for signing transactions, redundancy/distributions of some centralized components,  reputation systems, and privacy preserving-constructs as part of the applications themselves. 
We elaborate closer on vulnerabilities, threats, and countermeasures (or mitigation techniques) related to each layer of the stacked model in the following sections. 

\paragraph{Involved Parties \& Blockchain's Life-Cycle.}
In \autoref{chapter:background}, we presented several types of involved parties in the blockchain infrastructure (see \autoref{fig:node-types}). 
We emphasize that these parties are involved in the operational stage of the blockchain's life-cycle.
However, in the design and development stages of the blockchain's life-cycle, programmers and designers should also be considered as potential threat agents who influence the security aspects of the whole blockchain infrastructure (regardless of whether their intention is malicious or not).
This is of great concern especially for applications built on top of blockchains (i.e., at the application layer) since these applications are usually not thoroughly reviewed by the community or public, as it is typical for other (lower) layers.

\begin{figure}[b]
	\begin{center}		
		\includegraphics[width=0.45\columnwidth]{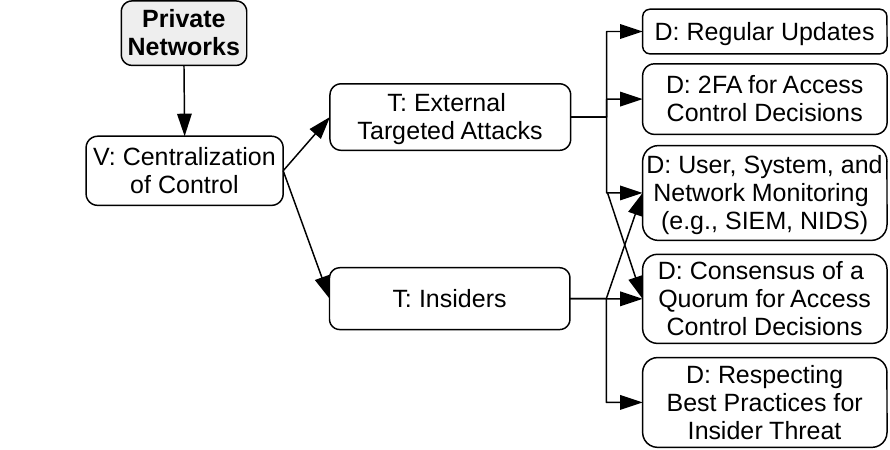} 		
		\caption{Vulnerabilities, threats, and defenses in private networks (network layer).}
		\label{fig:attacks-private-networks}
	\end{center}	
\end{figure}

\section{Network Layer}\label{sec:network}
Blockchains usually introduce peer-to-peer overlay networks built on top of other networks. 
Hence, blockchains inherit security and privacy issues from their underlying networks. 
%
%
In our model (see \autoref{fig:overview}), we divide the network layer into  \textit{data representation} and \textit{network services} sub-planes. 
The data representation plane is protected by cryptographic primitives that ensure data integrity, user authentication, and optionally confidentiality, privacy, anonymity, non-repudiation, and accountability. 
The main services provided by the network layer are peer management and discovery, which rely on the internals of the underlying network, such as domain name resolution (i.e., DNS) or network routing protocols. 
Based on permission to join the blockchain system, the networks are either private or public. 
We model security threats and mitigation techniques for both private and public networks as vulnerability/threat/defense (VTD) graphs in \autoref{fig:attacks-private-networks} and \autoref{fig:attacks-public-networks}, and we refer the interested reader to our paper \cite{homoliak2020security} for more details.

\begin{figure}[!h]
	\begin{center}		
		\includegraphics[width=0.6\columnwidth]{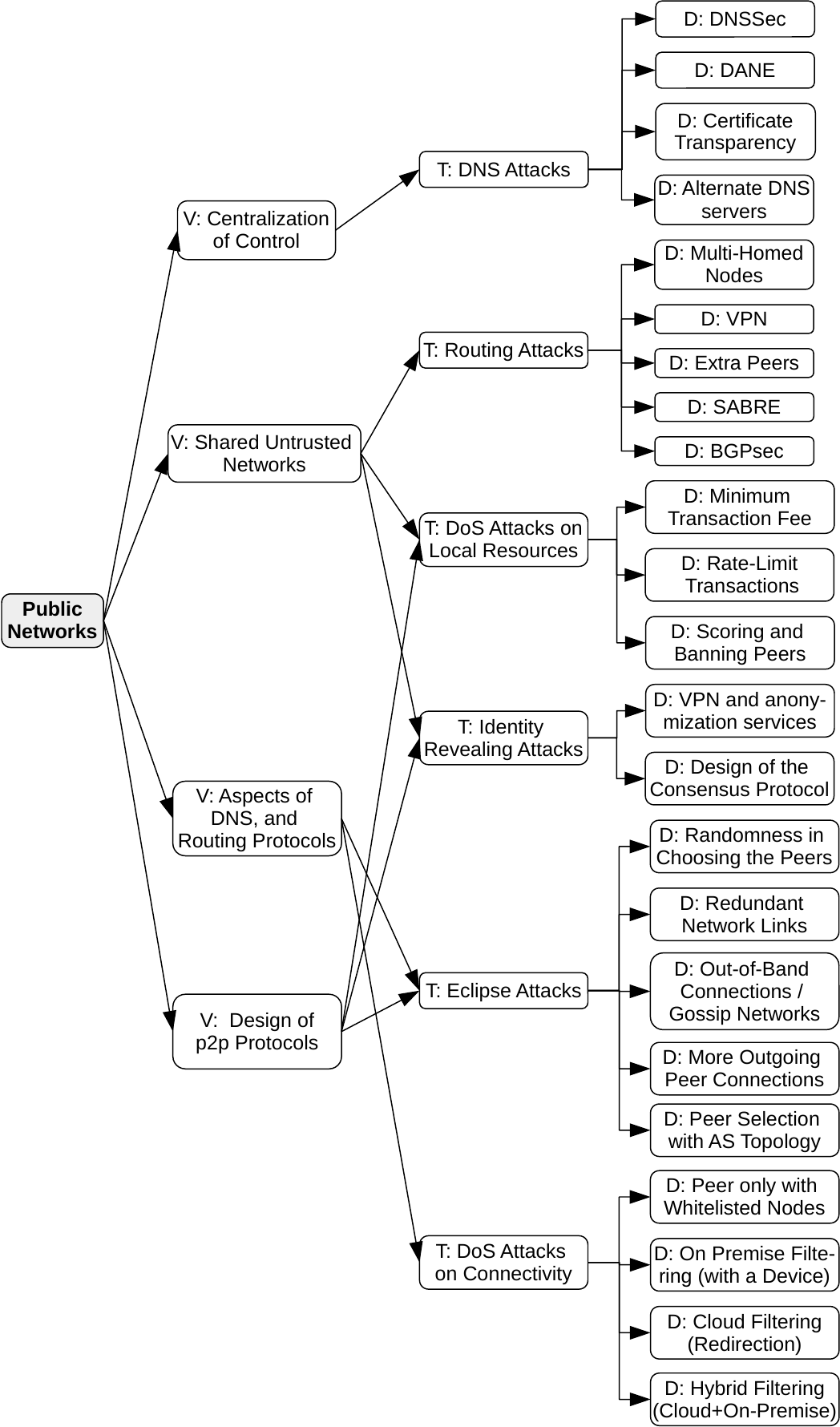} 		
		\caption{Vulnerabilities, threats, and defenses in public networks (network layer).}
		\label{fig:attacks-public-networks}
	\end{center}	
\end{figure}

\clearpage

\section{Consensus Layer}
\label{sec:consensus}

The consensus layer of the stacked model deals with the ordering of transactions, while the interpretation of them is left for the RSM layer (see \autoref{sec:smart_contracts}).
The consensus layer includes three main categories of consensus protocols concerning different principles of operation and thus their security aspects -- Proof-of-Resource Protocols (PoR), Byzantine Fault Tolerant (BFT) Protocols, and Proof-of-Stake Protocols (PoS).
Nevertheless, we can identify vulnerabilities and threats that are generic to all categories. 
Next, we outline modeling of security threats and mitigation techniques generic to all consensus protocols as VTD graphs in \autoref{fig:attacks-consensus-generic}, while particular categories of protocols are modeled in  \autoref{fig:attacks-consensus-PoR}, \autoref{fig:attacks-consensus-PoS}, and \autoref{fig:attacks-consensus-BFT}.
For details about these categories and their threats, we refer the interested reader to our paper \cite{homoliak2020security}. 

		\begin{figure}[t]
				\begin{center}		
						\includegraphics[width=0.6\columnwidth]{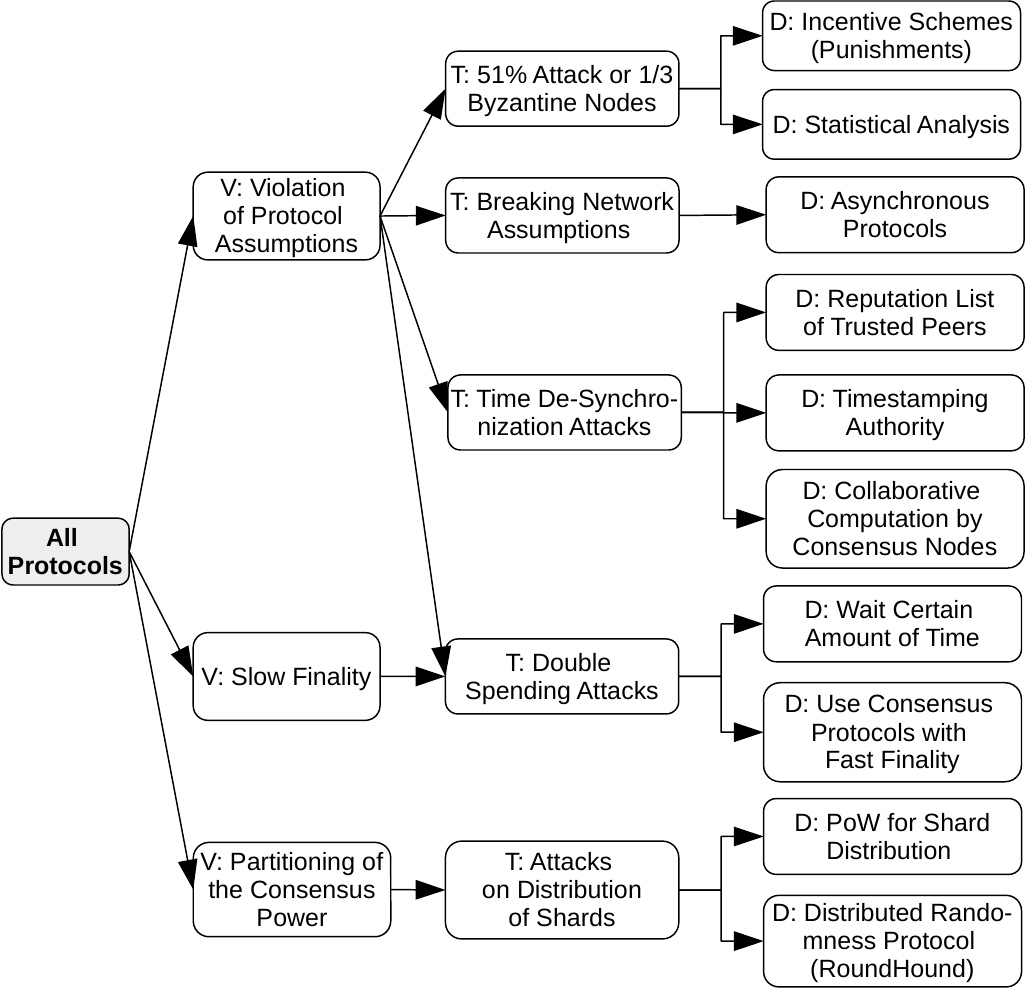} 		
						\caption{Generic threats and defenses of the consensus layer.}
						\label{fig:attacks-consensus-generic}
					\end{center}	
		\end{figure}
	
		\begin{figure}[!bh]
		\begin{center}		
			\includegraphics[width=0.7\columnwidth]{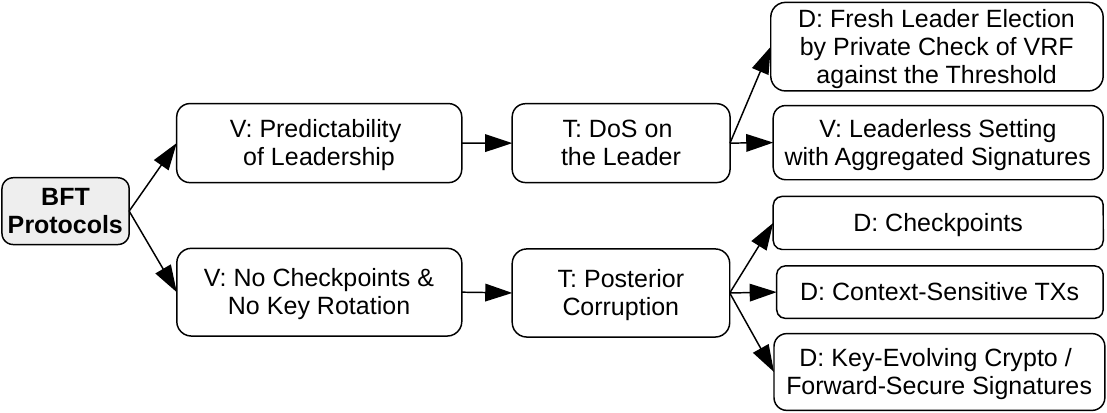} 		
			\caption{Vulnerabilities, threats, and defenses of BFT protocols (consensus layer).}
			\label{fig:attacks-consensus-BFT}
		\end{center}	
		\end{figure}
	
	\clearpage
	
		\begin{figure}[!h]
			\begin{center}		
					\includegraphics[width=0.6\textwidth]{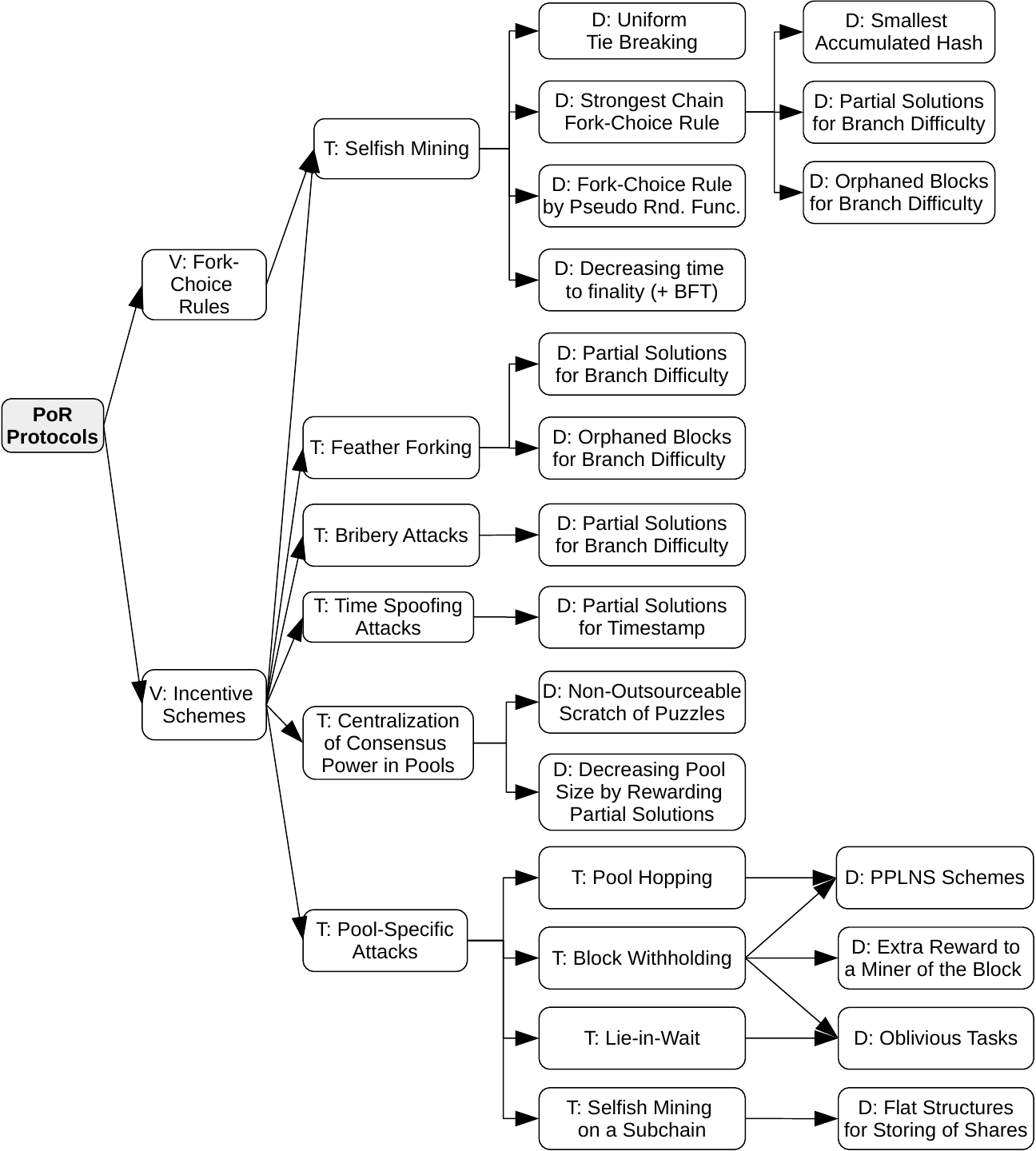} 		
					\caption{Vulnerabilities, threats, and defenses of PoR protocols (consensus layer).}
					\label{fig:attacks-consensus-PoR}
					\vspace{-0.3cm}
				\end{center}	
		\end{figure}
	
		\begin{figure}[!h]
			\begin{center}		
					\includegraphics[width=0.55\columnwidth]{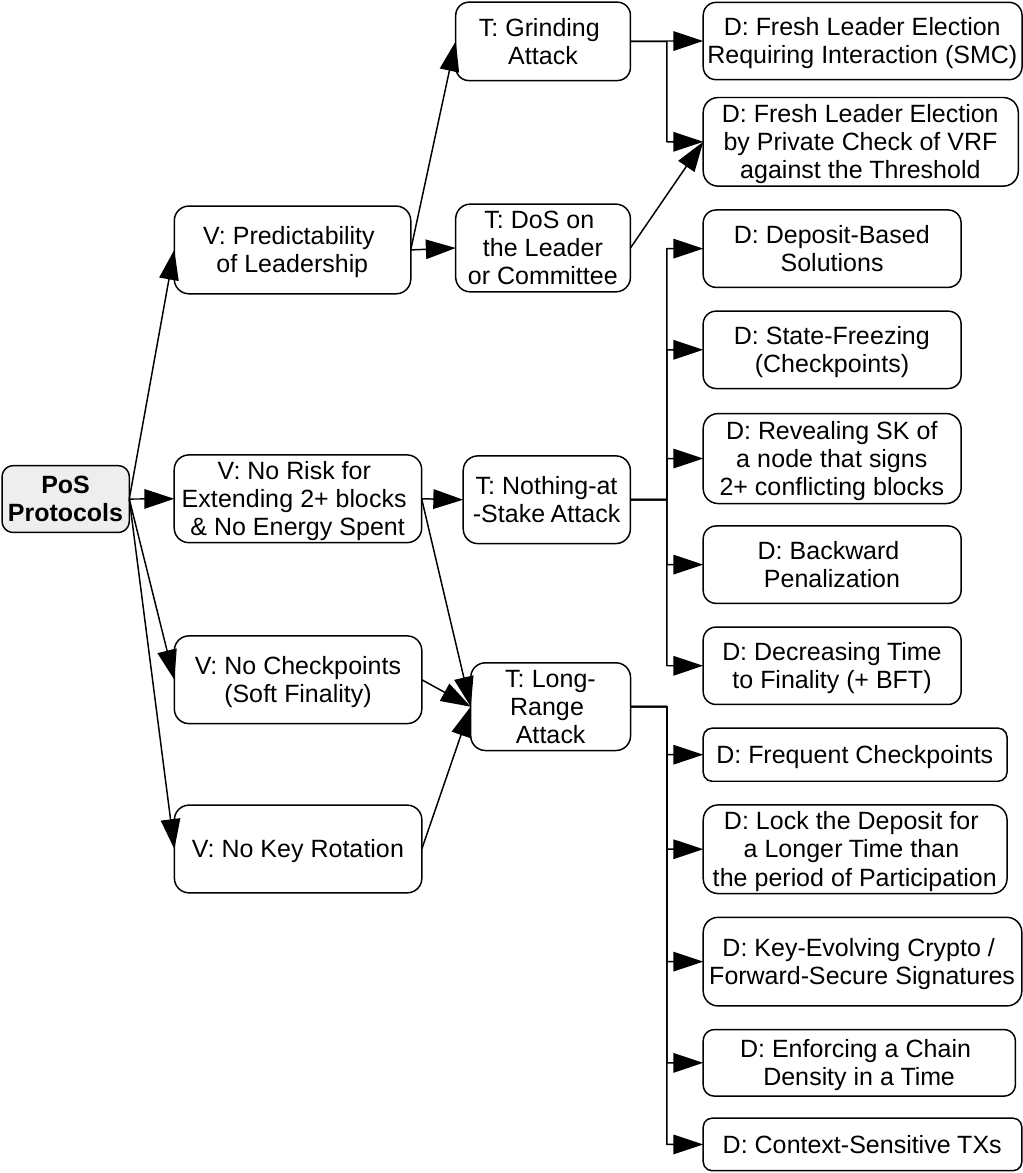} 		
					\caption{Vulnerabilities, threats, and defenses of PoS protocols (consensus layer).}
					\label{fig:attacks-consensus-PoS}
				\end{center}	
		\end{figure}

\clearpage

\section{Replicated State Machine Layer}
\label{sec:smart_contracts}
The Replicated State Machine (RSM) layer is responsible for the interpretation and execution of transactions that are already ordered by the consensus layer.
Concerning security threats for this layer are related to the privacy of users, privacy and confidentiality of data, and smart contract-specific bugs. 
%
We split the security threats of the RSM layer into two parts: standard transactions and smart contracts. 
%

\subsection{Transaction Protection}
Transactions containing plain-text data are digitally signed by private keys of users, enabling anybody to verify the validity of transactions with the corresponding public keys.     
However, such an approach provides only pseudonymous identities that can be traced to real IP addresses (and sometimes to identities) by a network-eavesdropping adversary, and moreover, it does not ensure the confidentiality of data~\cite{feng2019}.
Therefore, several blockchain-embedded mechanisms for the privacy of data and user identities were proposed in the literature, which we review in \cite{homoliak2020security}.
Note that some privacy-preserving techniques can be applied also on the application layer of our stacked model but imposing higher programming overheads and costs, 
which is common in the case of blockchain platforms that do not support them natively.
We outline modeling of security threats and mitigation techniques related to transactions and their privacy as VTD graphs in \autoref{fig:attacks-RSM-TXs}.
For details of particular vulnerabilities and threats, we refer the interested reader to our paper \cite{homoliak2020security}. 
\begin{figure}[t]
		\vspace{-0.2cm}
	\begin{center}		
			\includegraphics[width=0.6\columnwidth]{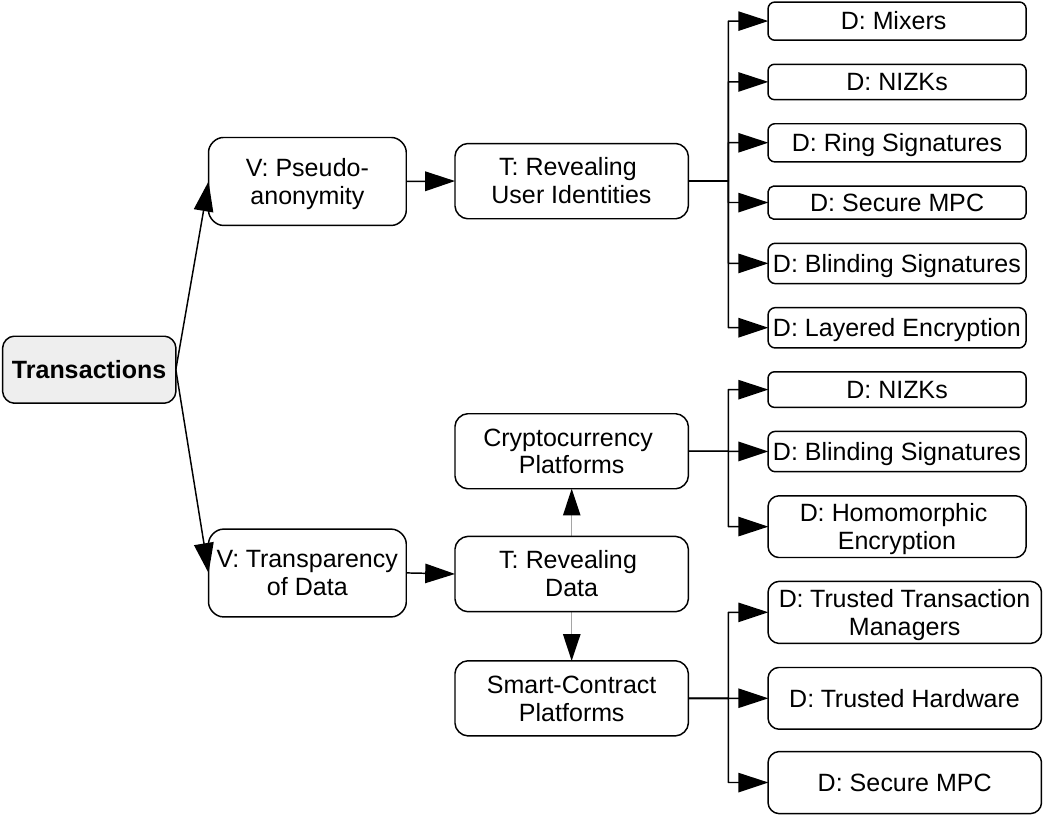} 		
			\caption{Vulnerabilities, threats, and defenses of privacy threats (RSM layer).}
			\label{fig:attacks-RSM-TXs}
		\end{center}	
\end{figure}

\begin{figure}[t]
	\begin{center}		
		\includegraphics[width=0.6\columnwidth]{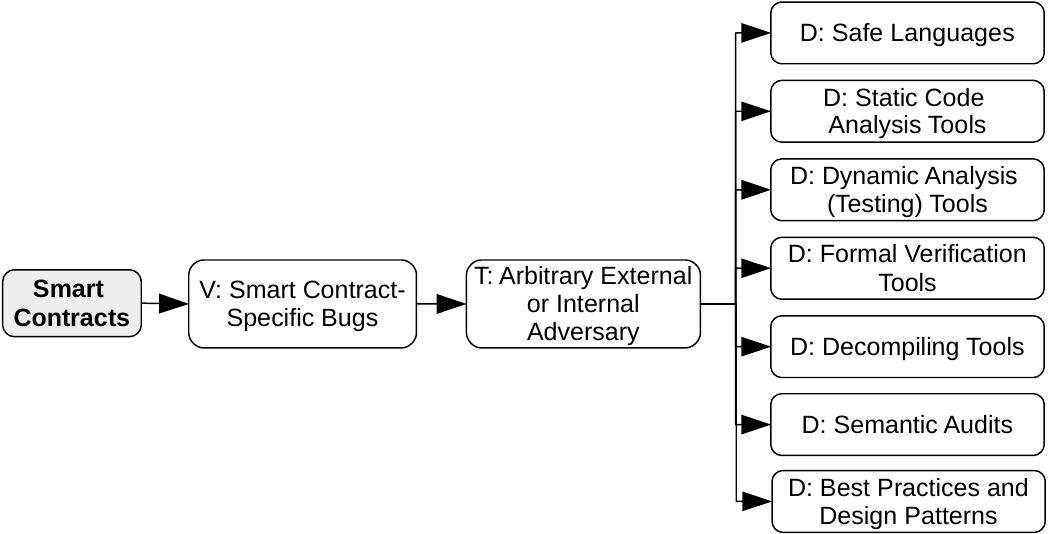} 		
		\caption{Vulnerabilities, threats, and defenses of smart contract platforms (RSM layer).}			
		\label{fig:attacks-RSM-sc}
		\vspace{-0.5cm}
	\end{center}	
\end{figure}

\subsection{Smart Contracts}\label{sec:smart-contracts}
Smart contracts introduced to automate legal contracts, 
now serve as a method for building decentralized applications on blockchains. 
They are usually written in a blockchain-specific programming language that may be Turing-complete (i.e., contain arbitrary programming logic) or only serve for limited purposes.
We outline modeling of security threats and mitigation techniques related to smart contracts as VTD graphs in \autoref{fig:attacks-RSM-TXs}.
For details of particular vulnerabilities and threats, we refer the interested reader to our paper \cite{homoliak2020security}.

\begin{figure}[!b]
	\centering
	\includegraphics[width=0.51\textwidth]{./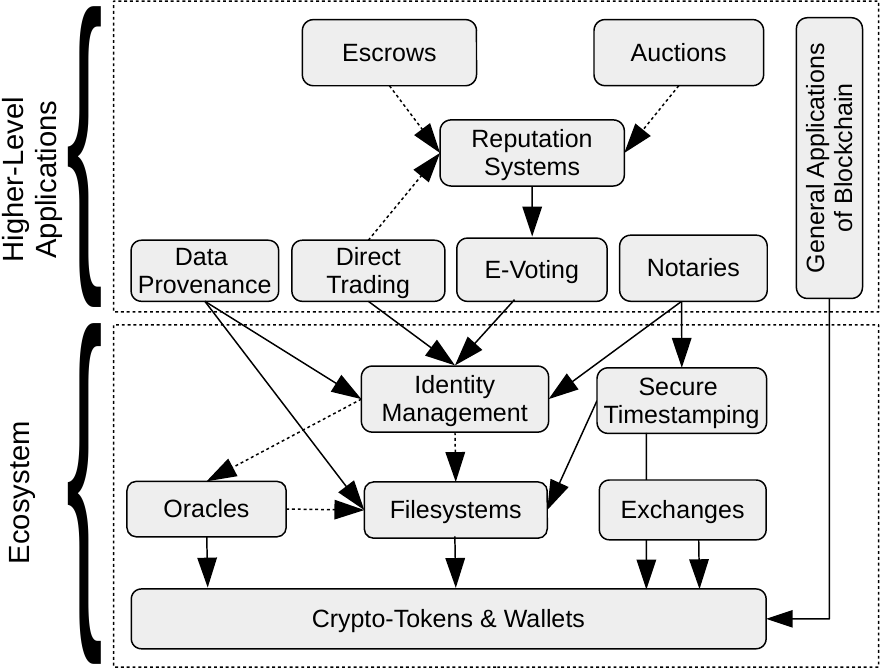}
	\caption{Hierarchy in inheritance of security aspects across categories of the application layer. Dotted arrows represent application-specific and optional dependencies.}\label{fig:dependencies}   
\end{figure} 

\section{Application Layer: Ecosystem Applications}
\label{sec:apps}

We present a functionality-oriented categorization of the applications running on or utilizing the blockchain in \autoref{fig:dependencies}, where we depict hierarchy in the inheritance of security aspects among particular categories.
In this categorization, we divide the applications into categories according to the main functionality/goal that is to be achieved by using the blockchain.
%
Security threats of this layer are mostly specific to particular types of applications.
Nevertheless, there are a few application-level categories that are often utilized by other higher-level applications. 
In the current section, we isolate such categories into a dedicated application-level group denoted as an \textit{ecosystem}, while we cover the rest of the applications in \autoref{sec:apps-applications}.
The group of ecosystem applications contains five categories, and we outline their security threats and mitigation techniques in VTD graphs as follows: 
(1) \textbf{crypto-tokens and wallets} (see \autoref{fig:attacks-APP-wallets}), (2) \textbf{exchanges} (see \autoref{fig:attacks-APP-exchanges}), (3) \textbf{oracles} (see \autoref{fig:attacks-APP-oracles}), (4) \textbf{filesystems} (see \autoref{fig:attacks-APP-DFs}), (5) \textbf{identity management} (see \autoref{fig:attacks-APP-identity}), and (6) \textbf{secure-timestamping} (see \autoref{fig:attacks-APP-timestamps}).
For details of these categories of applications and their security threats and mitigation techniques, we refer the interested reader to our paper \cite{homoliak2020security}.

\begin{figure*}[!t]
	\begin{center}		
		\vspace{-0.2cm}
		\includegraphics[width=0.8\textwidth]{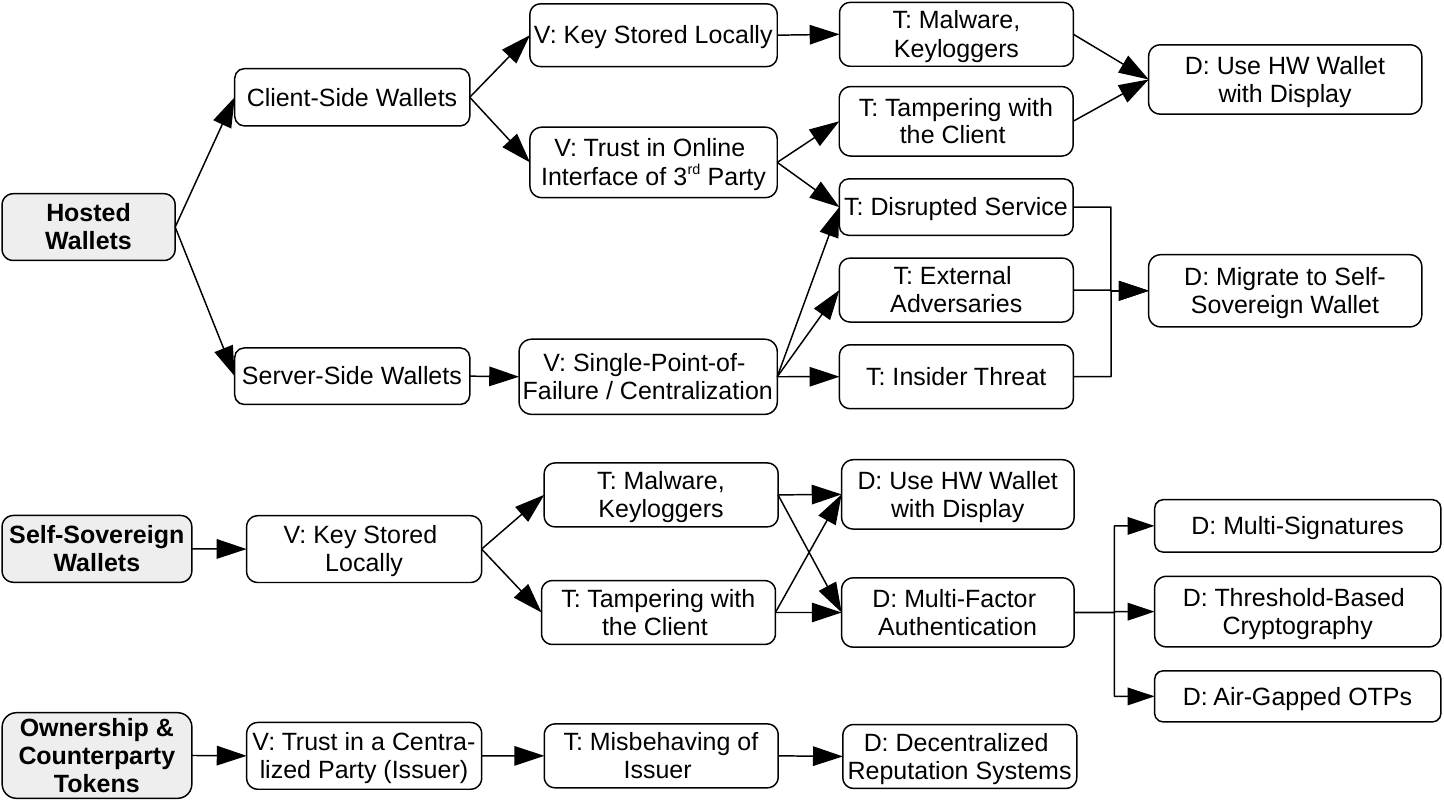} 		
		\vspace{0.1cm}
		\caption{Vulnerabilities, threats, and defenses of the crypto-token $\&$ wallets category.}
		\label{fig:attacks-APP-wallets}
		\vspace{-0.4cm}
	\end{center}	
\end{figure*}

\begin{figure*}[ht]
	\begin{center}
		\vspace{-0.1cm}			
		\includegraphics[width=0.80\textwidth]{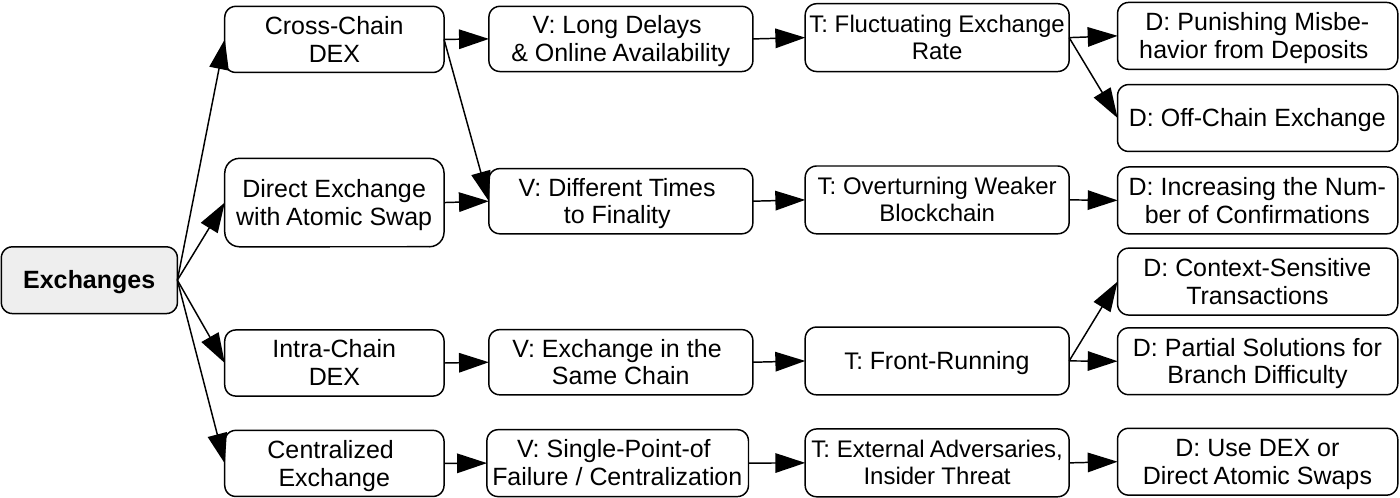} 		
		\vspace{0.1cm}
		\caption{Vulnerabilities, threats, and defenses of the exchanges category.}
		\label{fig:attacks-APP-exchanges}
	\end{center}	
	\vspace{-0.4cm}
\end{figure*}

\begin{figure*}[!h]
	\begin{center}		
		\includegraphics[width=0.82\textwidth]{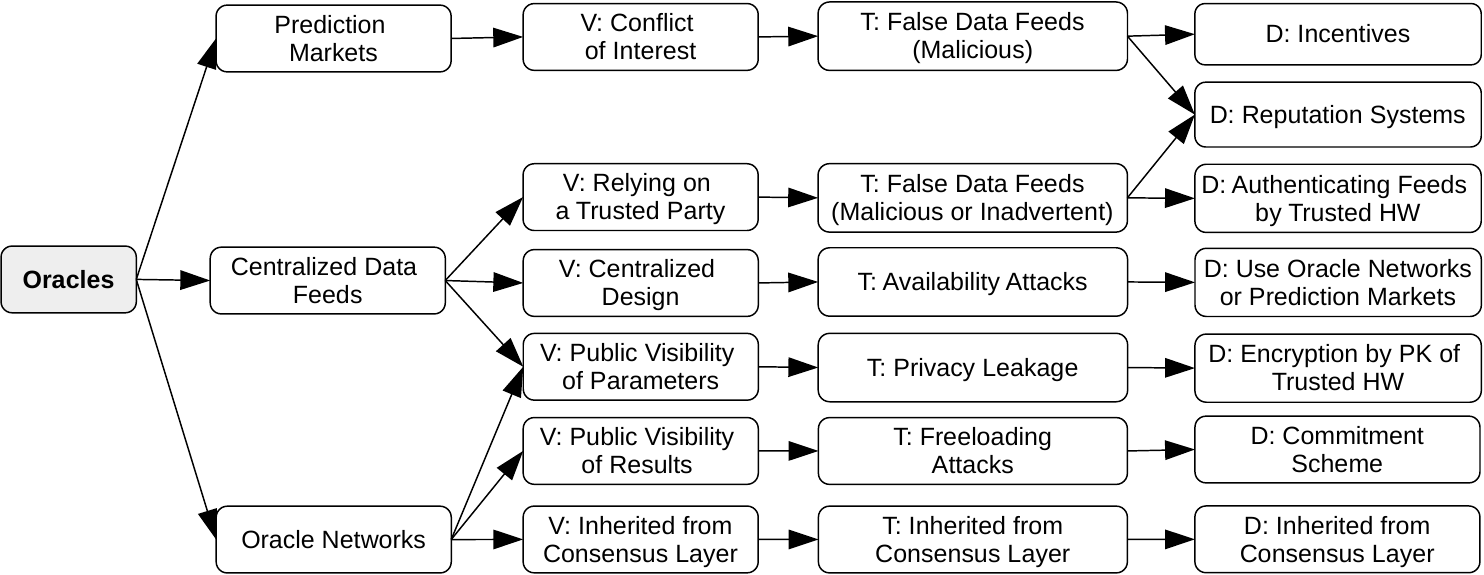} 		
		\caption{Vulnerabilities, threats, and defenses of the oracles category.}
		\vspace{0.1cm}
		\label{fig:attacks-APP-oracles}
		\vspace{-0.4cm}
	\end{center}	
\end{figure*}

\begin{figure*}[th]
	\begin{center}		
		\vspace{-0.4cm}
		\includegraphics[width=0.82\textwidth]{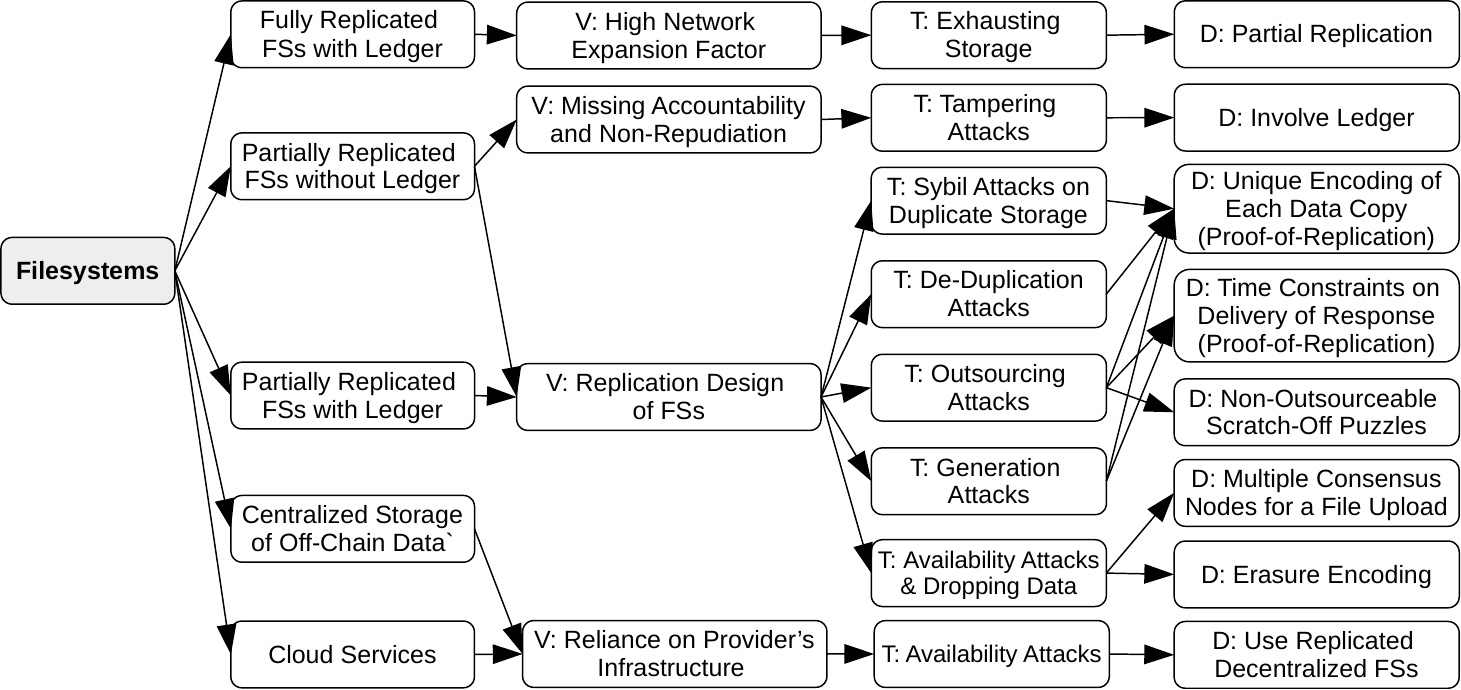} 		
		\caption{Vulnerabilities, threats, and defenses of the filesystems category.}
		\label{fig:attacks-APP-DFs}
	\end{center}	
\end{figure*}

\begin{figure}[t]
	\begin{center}			
		\includegraphics[width=0.71\textwidth]{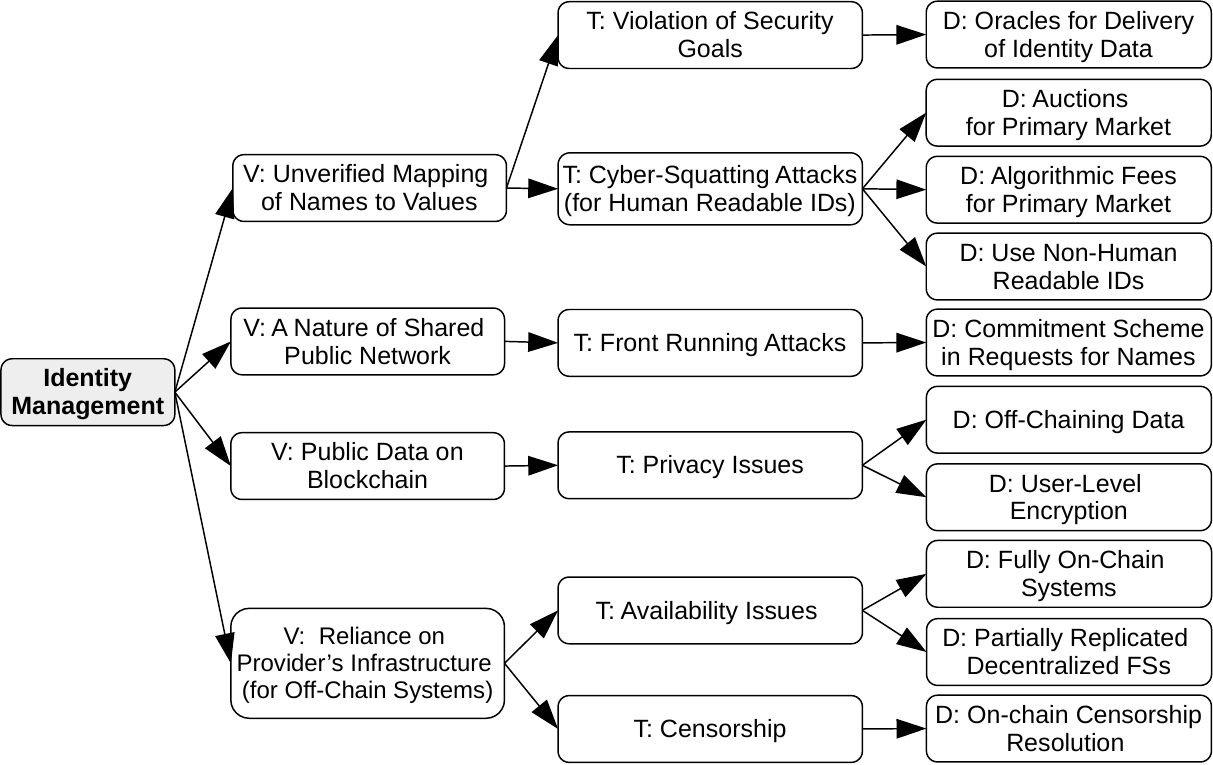} 		
		\caption{Vulnerabilities, threats, and defenses of the identity management category.}
		\label{fig:attacks-APP-identity}
	\end{center}	
\end{figure}

\begin{figure}[t]
	\begin{center}			
		\includegraphics[width=0.72\textwidth]{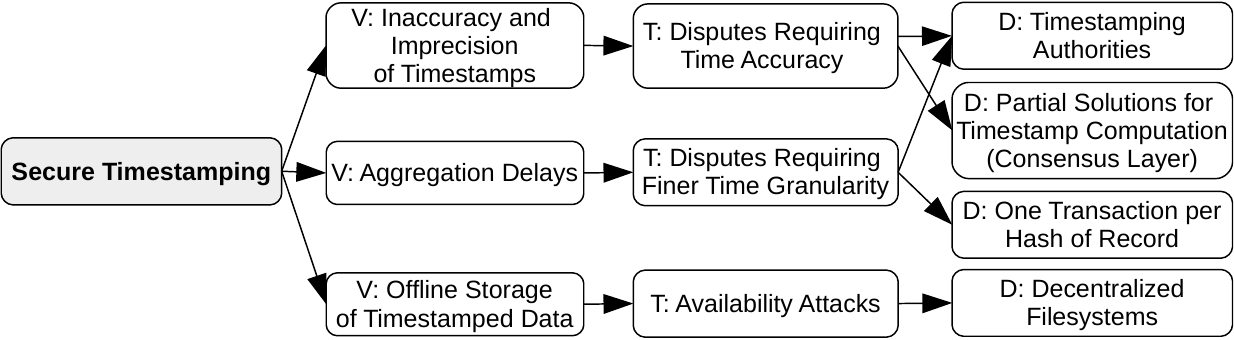} 		
		\caption{Vulnerabilities, threats, and defenses of the secure timestamping category.}
		\label{fig:attacks-APP-timestamps}
	\end{center}	
\end{figure}

\clearpage
\section{Application Layer: Higher-Level Applications}
\label{sec:apps-applications}

\begin{figure}[t]
	\begin{center}	
		\vspace{-0.2cm}		
		\includegraphics[width=0.7\textwidth]{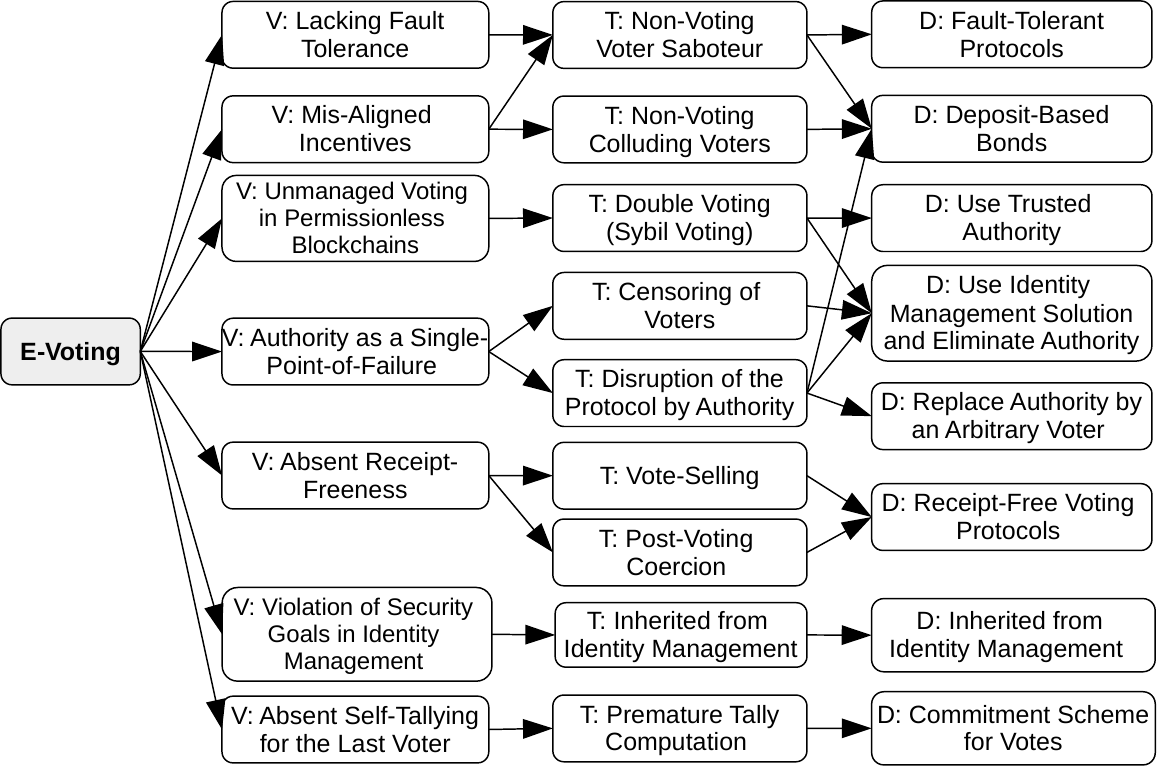} 		
		\caption{Vulnerabilities, threats, and defenses of the e-voting category.}
		\label{fig:attacks-APP-evoting}
	\end{center}	
\end{figure}

\begin{figure}[t]
	\begin{center}			
		\includegraphics[width=0.7\textwidth]{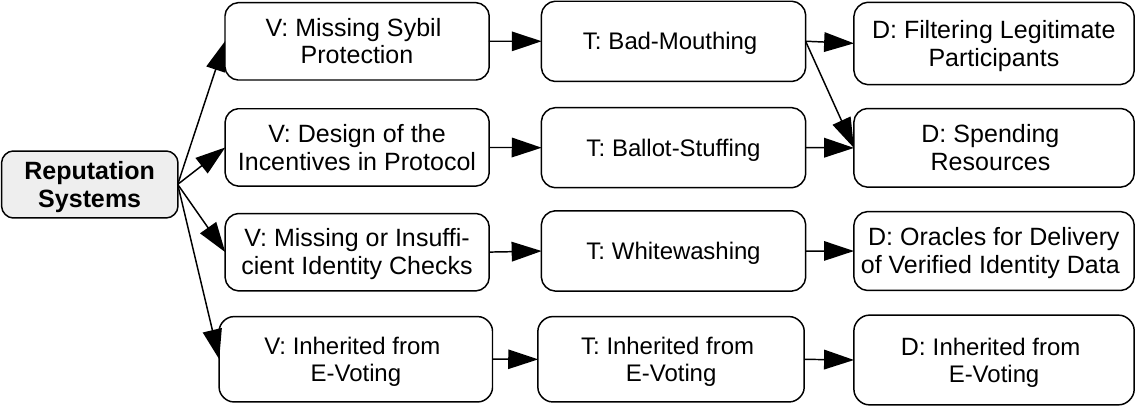} 		
		\caption{Vulnerabilities, threats, and defenses of the reputation systems category.}
		\label{fig:attacks-APP-reputation}		
	\end{center}	
\end{figure}

In this section, we focus on more specific higher-level applications as opposed to ecosystem applications.
In detail, we deal with eight categories of applications, and we outline their security threats and mitigation techniques in VTD graphs as follows: 
(1) \textbf{e-voting} (see \autoref{fig:attacks-APP-evoting}), (2) \textbf{reputation systems} (see \autoref{fig:attacks-APP-reputation}), (3) \textbf{data provenance} (see \autoref{fig:attacks-APP-provenance}), (4) \textbf{notaries} (see \autoref{fig:attacks-APP-notary}), (5) \textbf{direct trading} (see \autoref{fig:attacks-APP-trading}), (6) \textbf{escrows} (see \autoref{fig:attacks-APP-escrow}), (7) \textbf{auctions} (see \autoref{fig:attacks-APP-auctions}), and (8) \textbf{general application of blockchains}.
For details of these categories of applications and their security threats and mitigation techniques, we refer the interested reader to our paper \cite{homoliak2020security}. 

\begin{figure}[b]
	\begin{center}			
		\includegraphics[width=0.7\textwidth]{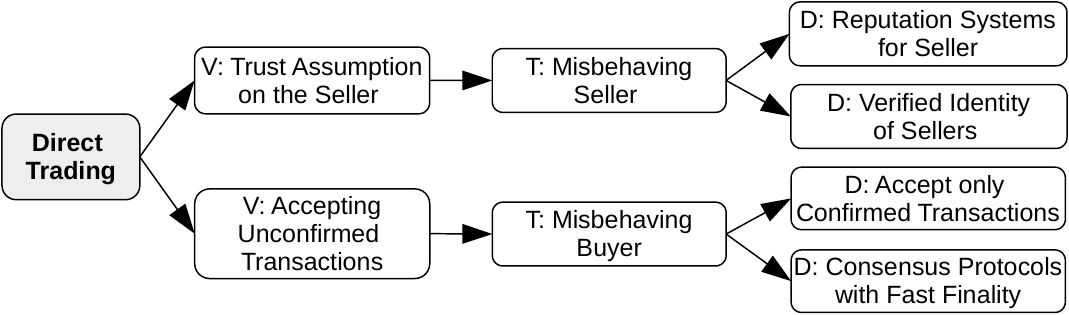} 		
		\caption{Vulnerabilities, threats, and defenses of the direct trading category.}
		\label{fig:attacks-APP-trading}
		\vspace{-0.4cm}
	\end{center}	
\end{figure}

\begin{figure}[t]
	\begin{center}			
		\includegraphics[width=0.7\textwidth]{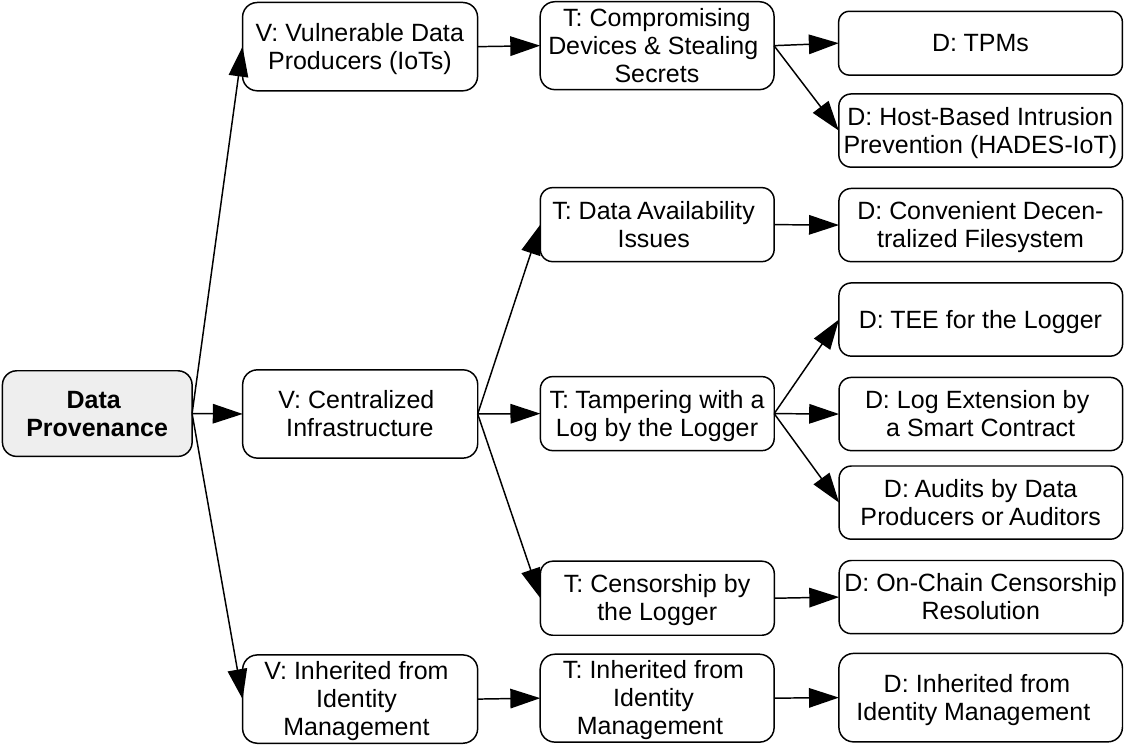} 		
		\caption{Vulnerabilities, threats, and defenses of the data provenance category.}
		\label{fig:attacks-APP-provenance}
	\end{center}	
\end{figure}

\begin{figure}[t]
	\begin{center}			
		\includegraphics[width=0.75\textwidth]{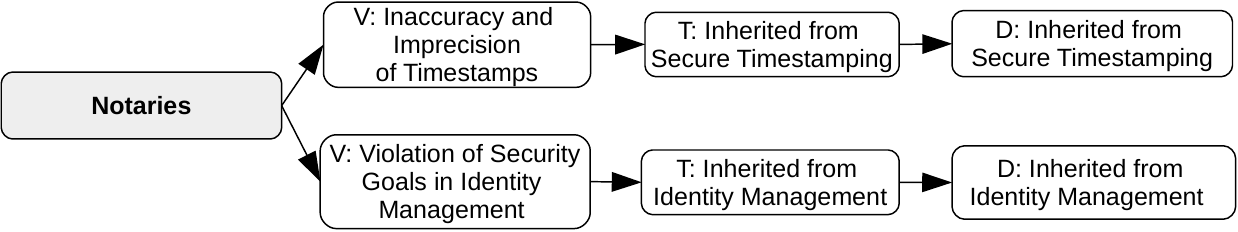} 		
		\caption{Vulnerabilities, threats, and defenses of the notaries category.}
		\label{fig:attacks-APP-notary}
	\end{center}	
\end{figure}

\begin{figure}[t]
	\begin{center}			
		\includegraphics[width=0.75\textwidth]{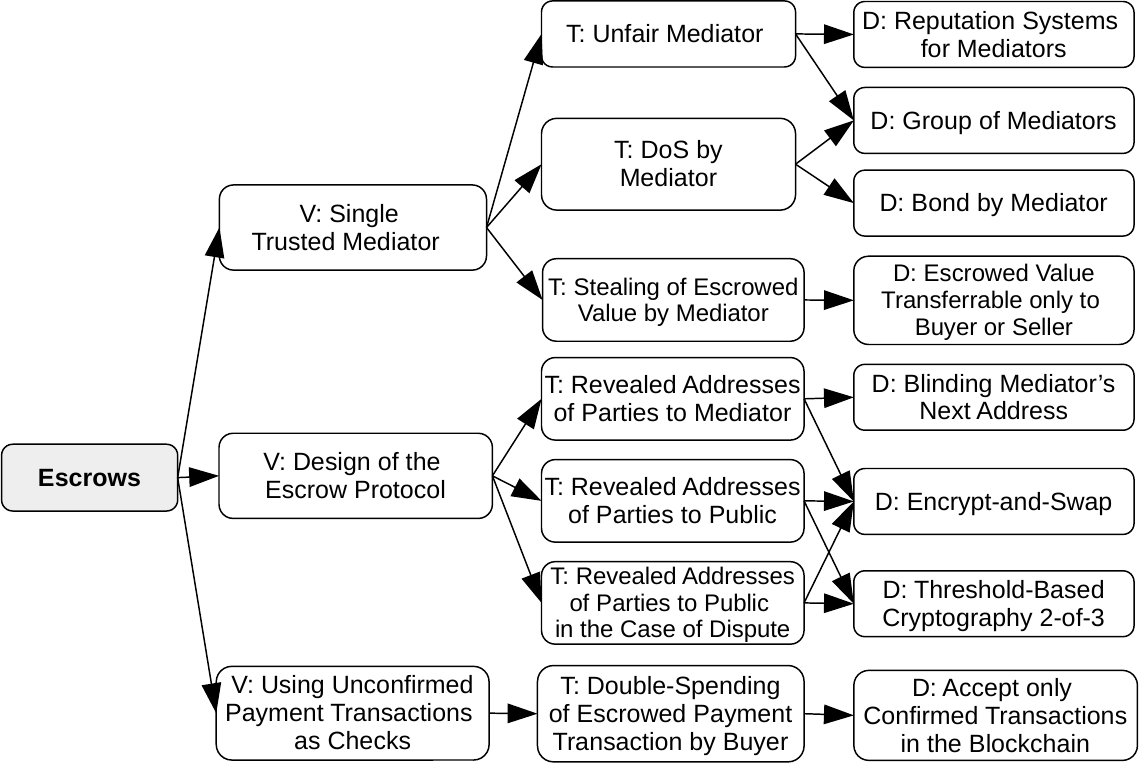} 		
		\caption{Vulnerabilities, threats, and defenses of the escrows category.}
		\label{fig:attacks-APP-escrow}
	\end{center}	
\end{figure}

\clearpage
\begin{figure}[t]
	\begin{center}			
		\includegraphics[width=0.7\textwidth]{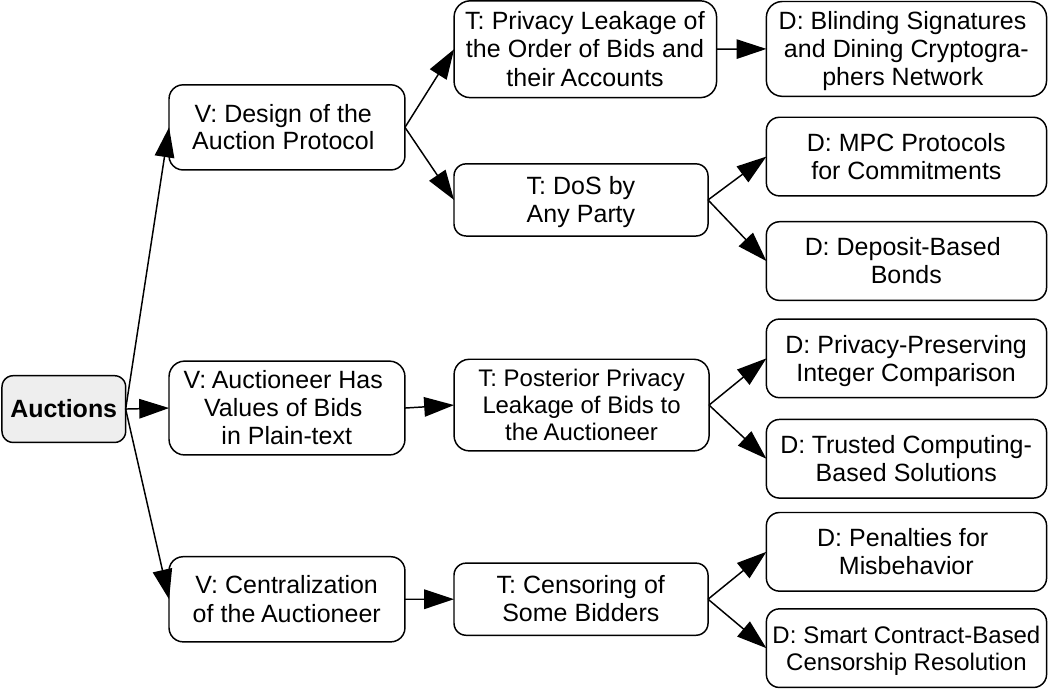} 		
		\caption{Vulnerabilities, threats, and defenses of the auctions category.}
		\label{fig:attacks-APP-auctions}
	\end{center}	
\end{figure}

\section{Lessons Learned}
\label{sec:lessons}

In this section, we summarize lessons learned concerning the security reference architecture (SRA) and its practical utilization.
First, we describe the hierarchy of security dependencies among particular layers of the SRA.
Second, assuming such a hierarchy, we describe a security-oriented methodology for designers of blockchain platforms and applications.
Finally, we summarize the design goals of particular blockchain types and discuss the security-specific features of the blockchains.

%
%

\subsection{Hierarchy of Dependencies in the SRA}
In the proposed model of the SRA, we observe that consequences of
vulnerabilities presented at lower layers of the SRA are manifested in the same layers and/or at higher layers, especially at the application layer.
Therefore, we refer to \textit{security dependencies} of these layers on lower layers or the same layers, i.e., \textit{reflexive} and \textit{bottom-up} dependencies. 
We describe these two types of dependencies in the following.

\paragraph{Reflexive Dependencies.}
If a layer of the SRA contains some assets, it also contains a reflexive security dependency on the countermeasures presented in the same layer.
It means that a countermeasure at a particular layer protects the assets presented in the same layer.
For example, in the case of the consensus layer whose protocols reward consensus nodes for participation,  the countermeasures against selfish mining attacks protect rewards (i.e., crypto-tokens) of consensus nodes.
In the case of the RSM layer, the privacy of user identities and data is protected by various countermeasures of this layer (e.g., blinding signatures, secure multiparty computations).
Another group of reflexive security dependencies is presented at the application layer. 
Although the application layer contains some bottom-up security dependencies (see \autoref{fig:dependencies}), we argue that with regard to the overall stacked model of the SRA they can be viewed as reflexive security dependencies of the application layer.

\paragraph{Bottom-Up Dependencies.}
If a layer of the SRA contains some assets, besides reflexive security dependencies, it also contains bottom-up security dependencies on the countermeasures of all lower layers.
Hence, the consequences of vulnerabilities presented at lower layers of SRA might be manifested at the same layers (i.e., reflexive dependencies) but more importantly, they are manifested at higher layers, especially at the application layer.
For example, context-sensitive transactions and partial solutions as countermeasures of the consensus layer can protect against front-running attacks of intra-chain DEXes, which occur at the application layer.
Another example represents programming bugs in the RSM layer, which influence the correct functionality at the application layer.
The eclipse attack is an example that impacts the consensus layer from the network layer -- a victim consensus node operates over the attacker-controlled chain, and thus causes a loss of crypto-tokens by a consensus node and at the same time it decreases honest consensus power of the network.
In turn, this might simplify selfish mining attacks at the consensus layer, which in turn might impact the correct functionality of a blockchain-based application at the application layer.
Bottom-up security dependencies are also presented in the context of the application layer, as we have already mentioned in \autoref{sec:apps}.

\setlength{\tabcolsep}{2.0pt}	
\begin{table*}[!h]
	\scriptsize
 &  &  \\	
		
		\bottomrule
		
	\end{tabular}
	\caption{Pros and cons of some categories from the application layer.}\label{tab:pros-cons-2}
\end{table*}

\subsection{Methodology for Designers}\label{sec:designing-solutions}
A hierarchy of security dependencies in the SRA can be utilized during the design of new blockchain-based solutions.
When designing a new \textbf{blockchain platform} or a new \textbf{blockchain application}, we recommend designers to specify requirements on the blockchain features (see  \autoref{sec:features-of-blockchain}) and afterward analyze design options and their attack surfaces at the first three layers of the stacked model of SRA.
We briefly summarize the pros and cons of particular categories within the first three layers of SRA in \autoref{tab:pros-cons}, while security threats and mitigations are covered in \autoref{sec:network}, \autoref{sec:consensus}, and \autoref{sec:smart_contracts}.

On top of that, we recommend the designers of a new \textbf{blockchain application} to analyze particular options and their security implications at the application layer of SRA.
We list the pros and cons of a few categories from the application layer in \autoref{tab:pros-cons-2},\footnote{Note that the table contains only categories with specified sub-categorizations that represent the subject to a comparison.} while security threats and mitigation techniques of this layer are elaborated in \autoref{sec:apps} and \autoref{sec:apps-applications}.
During this process, we recommend the designers to follow security dependencies of the target category on other underlying categories (see \autoref{fig:dependencies}) if their decentralized variants are used (which is a preferable option from the security point-of-view).
For example, if one intends to design a decentralized reputation system, she is advised to study the security threats from the reputation system category and its recursive dependencies on e-voting, identity management, crypto-tokens \& wallets, and (optionally) filesystems.

%

\medskip
\paragraph{Divide and Conquer.}
If a designer of the blockchain application is  also designing a blockchain platform, 
we recommend her to split the functionality of the solution with the divide-and-conquer approach respecting particular layers of our stacked model. 
In detail, if some functionality is specific to the application layer, then it should be implemented at that layer. 
Such an approach minimizes the attack surface of a solution and enables isolating the threats to specific layers, where they are easier to protect from and reviewed by the community.
%
A contra-example is to incorporate a part of application layer functionality/validation into the consensus layer. 
The consensus layer should deal only with the ordering of transactions, and it should be agnostic to the application. 
%

Nevertheless, it is worth noting that the divide-and-conquer approach might not be suitable for some very specific cases.
For example, some decentralized filesystems might combine data storage as an application-layer service with the proof-of-storage consensus algorithm, presented at the consensus layer. 
Therefore, the consensus layer also embeds a part of functionality from the application layer.
However, when filesystems are in security dependencies of the target application other than filesystems, one should realize that they are usually running on a different blockchain or infrastructure than the target application, and this exception is not a concern.

\begin{figure}[t]
	\begin{center}		
		\includegraphics[width=0.850\textwidth]{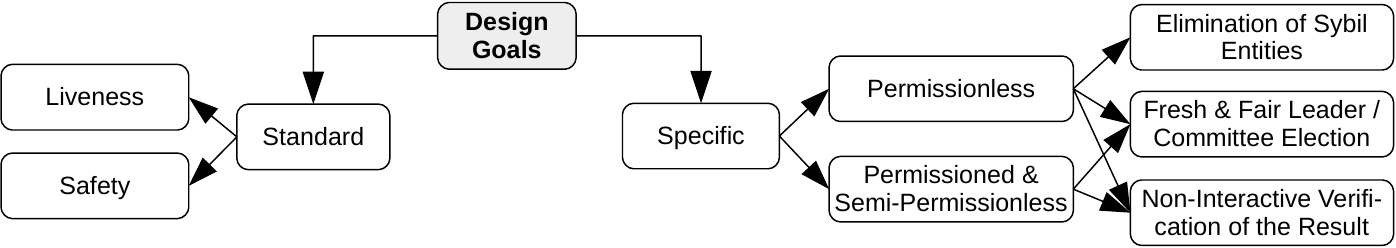} 		
		\caption{Standard and specific design goals of consensus protocols.}
		\label{fig:design-goals}
	\end{center}	
\end{figure}

\subsection{Blockchain Types \& Design Goals}\label{sec:specific-design-goals}
We learned that the type of a blockchain (see \autoref{sec:types-of-blockchains}) implies the specific design goals of its consensus protocol (see \autoref{fig:design-goals}), which must be considered on top of the standard design goals (i.e., liveness and safety) and the inherent features (see \autoref{sec:background-features}) during the design of a particular blockchain platform and its consensus protocol.
In the following, we elaborate on such specific design goals. 

\paragraph{Permissionless Type.} 
The first design goal is to \textit{eliminate Sybil entities} -- such elimination can be done by requiring that some amount of scarce resources is spent for extension of the blockchain, and hence no Sybil entity can participate.
This implies that no pure PoS protocol can be permissionless since it never spends resources on running a consensus protocol.
The next design goal is \textit{a fresh and fair leader/committee election}, which ensures that each consensus node influences the result of a consensus commensurately to the number of scarce resources spent.
Moreover, freshness avoids the prediction of the selected nodes, and therefore elected nodes cannot become the subject of targeted DoS attacks. 
The last design goal is the \textit{non-interactive verification} of the consensus result by any node -- i.e., any node can verify the result of the consensus based on the data presented in the blockchain.

\paragraph{Permissioned and Semi-Permissionless Types.}
These types of blockchains require \textit{fresh and fair leader/committee election} as well as \textit{non-interactive verification} of the result of the consensus.
However, in contrast to the permissionless blockchains, they do not require a means for the elimination of Sybil entities, as permission to enter the system is given by a centralized entity (i.e., permissioned type) or any existing consensus node (i.e., semi-permissionless type).

\paragraph{Blockchain Types and Incentives.}
We observed that no application running on a \textit{public} (permissioned) blockchain has been able to work without introducing crypto-tokens (i.e., an incentive scheme), even if the use case is not financial in nature, e.g., e-voting, notaries, secure timestamping, or reputation systems. 
In these blockchains, incentive schemes serve as a means for the elimination of Sybil entities, besides other purposes.
The situation is different in the context of \textit{private} (permissioned) blockchains, which are usually provisioned by a single organization or a consortium and do not necessarily need crypto-tokens to operate. 
Misaligned incentives can cause consensus-level vulnerabilities, e.g., when it becomes profitable to drop blocks of other nodes to earn higher mining rewards~\cite{eyal2018majority} or transaction fees~\cite{instability_noReward}. 
The design of incentive mechanisms is a research field by itself and we refer the reader to the work of Leonardos et al.~\cite{leonardos2019presto}. 

\subsection{Security-Specific Features of Blockchains}\label{sec:lessons-finality}
We realized that consensus protocols are the target of most financially-oriented attacks on the decentralized infrastructure of blockchains, even if such attacks might originate from the network layer (e.g., routing and eclipse attacks).
The goal of these attacks is to overturn and re-order already ordered blocks while doing double-spending.
Hence, the finality is the most security-critical feature of the consensus layer.
The finality differs per various categories of the consensus layer. 
The best finality is achieved in the pure BFT protocols, and the worse finality is achieved in the single-leader-based PoR and PoS protocols.
On the other hand, combinations of the BFT with PoS protocols (i.e., introducing committees) slightly deteriorate the finality of BFT in a probabilistic ratio that is commensurate to the committee size.
In the case of PoR protocols with partial solutions, finality is improved as opposed to pure PoR protocols; however, it is also probabilistic, depending on the number of partial solutions.

\subsection{Limitations in the Literature and Practice}\label{sec:limitations}

\paragraph{Applications of Blockchains.}
Although the literature contains surveys and overviews \cite{casino2018systematic,zheng2018blockchain,wust2018you} of blockchain-based applications, these works introduce only domain-oriented categorizations (i.e., categories such financial, governance, security, education, supply chain, etc.) and they do not investigate the security aspects and functionalities that these applications leverage on and whether some of the applications do not belong to the same category from the security and functionality point-of-view.
To address this limitation, we provide a security-driven functionality-oriented categorization of blockchain-based applications (see \autoref{sec:apps}), which is agnostic to an application domain and thus can generalize different application scenarios.
Furthermore, our proposed categorization enables us to model security and functionality-based dependencies among particular categories, which is not possible with state-of-the-art categorizations.

\paragraph{Centralization.}
Even though blockchains are meant to be fully decentralized, we have seen that this does not hold at some layers of the SRA -- the network and application layers, in particular.
In the network layer, some attacks are possible due to centralized DNS bootstrapping, while in the application layer a few categories utilize centralized components to ensure some functionality that cannot run on-chain or its provisioning would be too expensive and slow, which, however, forms the trade-off with the security.  
Some applications might depend on components from other application categories (e.g., identity management) but implementing these components in a centralized fashion, even though there exist some decentralized variants that are gaining popularity (e.g., DIDs~\cite{did-w3c} for identity management).

%

\section{Contributing Papers}\label{sec:sra-contributing-papers}
The papers that contributed to this research direction are enumerated in the following, while highlighted papers are attached to this thesis in their original form.

\begingroup
\let\clearpage\relax

\renewcommand\bibname{}
\vspace{-7em}

\endgroup


\chapter{Consensus Protocols}\label{chapter:consensus-protocols}

In this chapter, we present our contributions to the area of consensus protocols in block\-chains and their security, which belong to the consensus layer of our security reference architecture (see \autoref{chapter:sra}).
In particular, this chapter is focused on Proof-of-Work (PoW) consensus protocols and is based on the papers \cite{strongchain, perevsini2023incentive, peresini2023incentive, perevsini2023dag,budinsky2023fee} (see also \autoref{sec:cons-papers}). 

First, we address the selfish mining attack by revising the Nakamoto Consensus protocol used in Bitcoin into a new design called StrongChain \cite{strongchain}, and we show that StrongChain mitigates this attack in our simulation experiments. 
Next, we demonstrate the existence of incentive attacks \cite{perevsini2023incentive,perevsini2023dag,perevsini2021dag} on several DAG-based PoW consensus protocols with random transaction selection by game theoretical analysis and simulation, while we also elaborate on a few mitigation techniques.\footnote{Note that we first analyzed this attack on PHANTOM and its optimization GHOSTDAG \cite{perevsini2021dag}, and later we generalized this attack and applied it to more concerned protocols \cite{perevsini2023incentive}.}
Finally, we address the problems of undercutting attacks and the mining gap in a sole transaction-fee-based regime of PoW blockchains by proposing fee-redistribution smart contracts \cite{budinsky2023fee} as a modification of the Nakamoto Consensus.
In the following, we briefly introduce these research directions and summarize our contributions, while in later sections we go into more detail.

\subsubsection{Selfish Mining Attack \& StrongChain}
Selfish mining \cite{eyal2014majority} is a strategy where the attacker holding a fraction $\alpha$ of total mining power can earn more than $\alpha$ of total rewards by occupying more than $\alpha$ of the total mined blocks. 
Selfish mining became a profitable strategy after reaching a certain mining power threshold by the attacker -- e.g., 33\% in Bitcoin.\footnote{Nevertheless, it can be arbitrarily low in the case sole transaction-fee-based regime.} 
The key principle of selfish mining is that the attacker keeps building her secret chain that poses a fork w.r.t., the honest chain and releases it when the honest chain starts to ``catch up'' with the attacker's chain, overriding the honest chain since miners accept the longer (attacker's) chain as the canonical one and continue to mine on it.
This causes honest miners to waste their work/resources on the chain that is abandoned.

Nakamoto consensus \cite{nakamoto2008bitcoin} is PoW consensus protocol that stands behind the Bitcoin -- the most successful cryptocurrency so far.  
However, despite its unprecedented success, Bitcoin suffers from many inefficiencies. 
In particular, Bitcoin's consensus mechanism has been proven to be incentive-incompatible (e.g.,  mostly by selfish mining), its high reward variance
causes centralization (by creating mining pools), and its hardcoded deflation raises questions about its long-term sustainability with regard to the mining gap and sole transaction-fee-based regime.  

Therefore, we revise the Bitcoin consensus mechanism by proposing StrongChain, a scheme that introduces transparency and incentivizes participants to collaborate
rather than to compete.  
The core design of StrongChain is to utilize the computing power aggregated on the blockchain which is invisible and ``wasted'' in Bitcoin by default. 
Introducing relatively easy, although important changes to Bitcoin's design enable us to improve many crucial aspects of
Bitcoin-like cryptocurrencies making them more secure, efficient, and profitable
for participants. 
Most importantly, StrongChain improves the threshold where the selfish mining strategy starts to be profitable by 10\% in contrast to Bitcoin.  
See further details on StrongChain in \autoref{sec:cons-strongchain}.

\subsubsection{Incentive Attacks on DAG-based Blockchains}


Blockchains inherently suffer from the processing throughput bottleneck, as consensus must be reached for each block within the chain.
One approach to solve this problem is to increase the block creation rate.
However, such an approach has drawbacks.
If blocks are not propagated through the network before a new block is created, a \textit{soft fork} might occur, in which two concurrent blocks reference the same parent block.
A soft fork is resolved in a short time by a fork-choice rule, and thus only one block is eventually accepted.
All transactions in an \textit{orphaned} (a.k.a., stale) block are discarded.
As a result, consensus nodes that created orphaned blocks wasted their resources and did not get rewarded. 

As a response to the above issue, several proposals (e.g., Inclusive \cite{lewenberg2015inclusive}, PHANTOM \cite{sompolinsky2020phantom}, GHOSTDAG \cite{sompolinsky2020phantom}, SPECTRE \cite{sompolinsky2016spectre}) have substituted a single chaining data structure for unstructured direct acyclic graph, 
while another proposal in this direction employed structured DAG (i.e., Prism \cite{bagaria2019prism}).
Such a structure can maintain multiple interconnected chains and thus theoretically increase processing throughput. The assumption of concerned \gls{dag}-oriented solutions is to abandon transaction selection purely based on the highest fees since this approach intuitively increases the probability that the same transaction is included in more than one block (hereafter \textit{transaction collision}).
Instead, these approaches use the random transaction selection (i.e., \gls{RTS})~\footnote{Note that \gls{RTS} involves certain randomness in transaction selection but does not necessarily equals to uniform random transaction selection (to be in line with the works utilizing Inclusive \cite{lewenberg2015inclusive}, such as PHANTOM, GHOSTDAG \cite{sompolinsky2020phantom}, SPECTRE \cite{sompolinsky2016spectre}, as well as the implementation of GHOSTDAG called Kaspa \cite{Kaspa}).} strategy to avoid transaction collisions.
Although the consequences of deviating from such a strategy might seem intuitive, no one has yet thoroughly analyzed the performance and robustness of concerned \gls{dag}-based approaches under incentive attacks aimed at transaction selection.

Therefore, we focus on the impact of incentive attacks caused by \textbf{greedy}\footnote{Greedy actors deviate from the protocol to increase their profits.} actors in above-mentioned \gls{dag}-oriented designs of consensus protocols. 
In particular, we study the situation where a greedy attacker deviates from the protocol by not following the \gls{RTS} strategy that is assumed by the mentioned DAG-based approaches.
%
We make a hypothesis stating that the attacker deviating from \gls{RTS} strategy might earn greater rewards as compared to honest participants, and such an attacker harms transaction throughput since transaction collision is increased.
We verify and prove our hypothesis in a game theoretical analysis and show that \gls{RTS} does not constitute Nash equilibrium.
%
Next,  we substantiate conclusions from game theoretical analysis by a few simulation experiments on the abstracted \textsc{DAG-protocol}, which confirm that a greedy actor who selects transactions based on the highest fee profits significantly more than honest miners following the \gls{RTS}.
In another experiment, we demonstrate that multiple greedy actors can significantly reduce the effective transaction throughput by increasing the transaction collision rate across parallel chains of DAGs.
Finally, we show that greedy actors have an incentive to form a mining pool to increase their relative profits, which degrades the decentralization of the concerned DAG-oriented designs. 
See further details on these incentive attacks together with potential mitigation techniques in \autoref{sec:cons-dags}.


\subsubsection{Undercutting Attacks \& Transaction-Fee-based Regime}
In Bitcoin and its numerous clones, the block reward is divided by two approx. every four years (i.e., after every 210k blocks), which will eventually result in a zero block reward around the year 2140 and thus \emph{a pure transaction-fee-based regime}.
There was only very little research made to investigate the properties of transaction-fee-based regimes, which motivated our work.

Before 2016, there was a belief that the dominant source of the miners' income does not impact the security of the blockchain.
However, Carlsten et al.~\cite{carlsten2016instability} pointed out the effects of the high variance of the miners' revenue per block caused by exponentially distributed block arrival time in the transaction-fee-based protocols.
The authors showed that \emph{undercutting} (i.e., forking) a wealthy block is a profitable strategy for a malicious miner. 
Nevertheless, literature~\cite{daian2019flash,mclaughlin2023large} showed that this attack is viable even in blockchains containing traditional block rewards due to front-running competition of arbitrage bots who are willing to extremely increase transaction fees to earn Maximum Extractable Value profits.

Therefore, we focus on mitigation of the undercutting attack in the transaction-fee-based regime of the single chain PoW blockchains -- i.e., blockchains that prefer availability over consistency within the CAP theorem and thus are designed to resolve forks often.
We also discuss related problems present (not only) in a transaction-fee-based regime.
In particular, we focus on minimizing the mining gap~\cite{carlsten2016instability,tsabary2018gap}, (i.e., the situation, where the immediate reward from transaction fees does not cover miners' expenditures) as well as balancing significant fluctuations in miners' revenue.
%
%
To mitigate these issues, we propose a solution that splits transaction fees from a mined block into two parts -- (1) an instant reward for the miner and (2) a deposit sent into one or more fee-redistribution smart contracts ($\mathcal{FRSC}$s).
At the same time, these $\mathcal{FRSC}$s reward the miner of a block with a certain fraction of the accumulated funds over a fixed number of blocks, thereby emulating the moving average on a portion of the transaction fees.
We evaluate our approach using various fractions of the transaction fees (split across the miner and $\mathcal{FRSC}$) and experiment with the various numbers and lengths of $\mathcal{FRSC}$s -- we demonstrate that usage of multiple $\mathcal{FRSC}$s of various lengths has the best advantages mitigating the problems we are addressing.
Finally, we perform a simulation demonstrating that the threshold of \textsc{Default-Compliant} miners who strictly do not execute undercutting attack is lowered from 66\% (as reported in \cite{carlsten2016instability}) to 30\% with our approach. 
See further details on undercutting attacks and our solution in \autoref{sec:cons-undercut}.

\medskip
\noindent
In the following section, we detail the individual directions of our research.

\renewcommand{\name}{StrongChain\xspace}%

\section{StrongChain}\label{sec:cons-strongchain}

\subsection{Bitcoin's Drawbacks}
There are multiple drawbacks of Bitcoin that undermine its security
promises and raise questions about its future.  Bitcoin has been proved to be
incentive-incompatible \cite{eyal2015miner,sapirshtein2016optimal,eyal2014majority,gapgame}.
Namely, in some circumstances, the miners' best strategy is to not announce their
found solutions immediately, but instead withhold them for some time period (e.g., selfish mining \cite{eyal2018majority}).
Another issue is that the increasing popularity of the system tends towards its
centralization.  Strong competition between miners resulted in a high reward variance,
thus to stabilize their revenue miners started grouping their computing power
by forming \textit{mining pools}.  Over time, mining pools have come to dominate the
computing power of the system, and although they are beneficial for miners,
large mining pools are risky for the system as they have multiple ways of
abusing the
protocol \cite{karame2012double,eyal2015miner,eyal2014majority,sapirshtein2016optimal}.
Also, a few researchers rigorously analyzed one of the impacts of Bitcoin's
deflation \cite{longrunBitcoin,carlsten2016instability,gapgame}. Their results
indicate that Bitcoin may be unsustainable in the long term, mainly due to
decreasing miners' rewards that will eventually stop at all. 
Besides that,
unusually for a transaction system, Bitcoin is designed to favor availability
over consistency. This choice was motivated by its open and permissionless
spirit, but in the case of inconsistencies (i.e., \textit{forks} in the
blockchain) the system can be slow to converge.

\subsection{Overview of Proposed Approach}
Motivated by these drawbacks, we propose \name, a simple yet powerful revision
of the Bitcoin consensus mechanism.  Our main intuition is to design
a system such that the mining process is more transparent and collaborative,
i.e., miners get better knowledge about the mining power of the system and they are incentivized to solve puzzles together rather than compete.  In order to achieve
it, in the heart of the \name's design we employ \textit{weak solutions}, i.e.,
puzzle solutions with a PoW that is significant yet insufficient for a standard
solution.  We design our system, such that a) weak solutions are part of the
consensus protocol, b) their finders are rewarded independently, and c) miners
have incentives to announce own solutions and append solutions of others
immediately.  
We show that with these
changes, the mining process is becoming more transparent, collaborative, secure,
efficient, and decentralized.  
Surprisingly, we also show how our approach can improve the freshness properties offered by Bitcoin. 

\subsection{Details}
In our
scheme, miners solve a puzzle as today but in addition to publishing solutions,
they exchange weak solutions too (i.e., almost-solved puzzles). The lucky miner
publishes her solution that embeds gathered weak solutions (pointing to the same
previous block) of other miners.  Such a published block better reflects the aggregated
PoW of a block, which in the case of a fork can indicate that more mining power
is focused on a given branch (i.e., actually it proves that more computing power
``believes'' that the given branch is correct).  Another crucial change is to
redesign the Bitcoin reward system, such that the finders of weak solutions are also
rewarded. Following lessons learned from mining pool attacks, instead of sharing
rewards among miners, our scheme rewards weak solutions proportionally to their
PoW contributed to a given block and all rewards are independent of other
solutions of the block.\footnote{Note, that this change requires a Bitcoin \textit{hard fork}.}

There are a few intuitions behind these design choices.  First, a selfish miner
finding a new block takes a high risk by keeping this block secret.  This is 
because blocks 
have a better granularity due to honest miners exchanging partial solutions and strengthening their
prospective block, which in the case of a fork would be stronger than the older
block kept secret (i.e., the block of the selfish miner).  Secondly, miners are
actually incentivized to collaborate by a) exchanging their weak solutions, and
b) by appending weak solutions submitted by other miners.  For the former case,
miners are rewarded whenever their solutions are appended, hence keeping them
secret can be unprofitable for them.  For the latter case, a miner appending
weak solutions of others only increases the strength of her potential block, and
moreover, appending these solutions does not negatively influence  the miner's
potential reward.  Finally, our approach comes with another benefit.
Proportional rewarding of weak solutions decreases the reward variance, thus
miners do not have to join large mining pools in order to stabilize their
revenue. This could lead to a higher decentralization of mining power on the
network.

\begin{algorithm}[!th]
	\caption{Pseudocode of \name.}
	\label{alg:all}
	
	\scriptsize
	
		\SetKwProg{func}{function}{}{}
		
		\func{mineBlock()}{
			$\mathit{weakHdrsTmp\leftarrow\emptyset}$\;
			\For{$nonce \in \{0,1,2,...\}$}{
				$hdr\leftarrow\mathit{createHeader(nonce)}$\;
				/* check if the header meets the strong target */\\
				$h_{tmp}\leftarrow H(hdr)$\;
				\If{$h_{tmp} < T_s$}{
					$\mathit{B\leftarrow createBlock(hdr,weakHdrsTmp,Txs)}$\;
					$\mathit{broadcast(B)}$\;
					\Return; /* signal to mine with the new block */\\
				}
				/* check if the header meets the weak target */\\
				\If{$h_{tmp} < T_w$}{
					$\mathit{weakHdrsTmp.add(hdr)}$\;
					$\mathit{broadcast(hdr)}$\;
				}
			}
		}

		\func{onRecvWeakHdr($\mathit{hdr})$}{
			$h_w\leftarrow H(hdr)$\;
			\textbf{assert}($T_s \leq h_w<T_w$ \textbf{and} \textit{validHeader(hdr)})\;
			\textbf{assert}($\mathit{hdr.PrevHash ==  H(lastBlock.hdr)}$) \;
			$\mathit{weakHdrsTmp.add(hdr)}$\;
		}

		\func{rewardBlock($B$)}{
			/* reward block finder with $R$ */\\
			reward($B.hdr.Coinbase, R + B.TxFees$)\; 
			$w\leftarrow \gamma*T_s/T_w$; /* reward weak headers proportionally */\\
			\For{$\mathit{hdr \in B.weakHdrSet}$}{
				reward($hdr.Coinbase, w * c * R$)\;
			}
		}

		\func{validateBlock($B$)}{
			\textbf{assert}($H(B.hdr)<T_s$ \textbf{and} \textit{validHeader(B.hdr)})\;
			\textbf{assert}($\mathit{B.hdr.PrevHash ==  H(lastBlock.hdr)}$) \;
			\textbf{assert}($\mathit{validTransactions(B)}$)\;
			\For{$\mathit{hdr \in B.weakHdrSet}$}{
				\textbf{assert}($T_s \leq H(hdr) <T_w$ \textbf{and} \textit{validHeader(hdr)})\;
				\textbf{assert}($\mathit{hdr.PrevHash ==  H(lastBlock.hdr)}$)\;
			}
		}

		\func{chainPoW($chain$)}{
			$sum\leftarrow 0$\;
			\For{$B \in chain$}{
				/* for each block compute its aggregated PoW */\\
				$\mathit{T_s \leftarrow B.hdr.Target}$\;
				$sum\leftarrow sum + T_{max}/T_{s}$\;
				\For{$\mathit{hdr \in B.weakHdrSet}$}{
					$sum\leftarrow sum + T_{max}/T_w$\;
				}
			}
			\Return $sum$\;
		}

		\func{getTimestamp($B$)}{
			$\mathit{sumT\leftarrow B.hdr.Timestamp}$\;
			$\mathit{sumW\leftarrow 1.0}$\;
			/* average timestamp by the aggregated PoW */\\
			$w\leftarrow T_{s}/T_w$\;
			\For{$\mathit{hdr \in B.weakHdrSet}$}{
				$sumT\leftarrow sumT + w*hdr.Timestamp$\;
				$sumW\leftarrow sumW + w$\;
			}
			\Return $sumT/sumW$\;
		}
\end{algorithm}

\subsubsection{Mining}
As in Bitcoin, in \name miners authenticate transactions by collecting them into
blocks whose headers are protected by a certain amount of PoW.  A simplified
description of a block mining procedure in \name is presented as the
\textit{mineBlock()} function in \autoref{alg:all}.  Namely, every miner tries
to solve a PoW puzzle by computing the hash function over a newly created
header.  The header is constantly being changed by modifying its nonce
field,\footnote{In fact, other fields can be modified too if needed.} until a
valid hash value is found.  Whenever a miner finds a header $hdr$ whose hash
value $h=H(hdr)$ is smaller than the \textit{strong target} $T_s$, i.e., a $h$
that satisfies the following:
\begin{equation*}
	h < T_s,
\end{equation*}
then the corresponding block is announced to the network and becomes, with all
its transactions and metadata, part of the blockchain.  We refer to headers of
included blocks as \textit{strong headers}.

One of the main differences with Bitcoin is that our mining protocol handles also
headers whose hash values do not meet the strong target $T_s$, but still are low
enough to prove a significant PoW. We call such a header a \textit{weak header}
and its hash value $h$ has to satisfy the following:
\begin{equation}
	\label{eq:weak}
	T_s \leq h < T_w,
\end{equation}
where $T_w > T_s$ and $T_w$ is called the \textit{weak target}.

Whenever a miner finds such a block header, she adds it to her local list of
weak headers (i.e., \textit{weakHdrsTmp}) and she propagates the header among
all miners.  Then every miner that receives this information first validates it
(see \textit{onRecvWeakHdr()}) by checking whether
\begin{compactitem}
	\item the header points to the last strong header,
	\item its other fields are correct (see \autoref{sec:details:block:layout_valid}),
	\item and \autoref{eq:weak} is satisfied.
\end{compactitem}
Afterward, miners append the header to their lists of weak headers.
We do not limit the number of weak headers appended, although this number is
correlated with the $T_w/T_s$ ratio.
Finally, miners continue the mining process in order to find a strong header.
In this process, a miner keeps creating candidate headers by computing hash values
and checking whether the strong target is met.  Every candidate header ``protects''
all collected weak headers (note that all of these weak headers point to the same previous
strong header).

In order to keep the number of found weak headers close to a constant value,
\name adjusts the difficulty $T_w$ of weak headers every 2016 blocks immediately
following the adjustment of the difficulty $T_s$ of the strong headers according
to \autoref{eqn:adjust_Ts}, such that the ratio $T_w / T_s$ is kept at a
constant.

\subsubsection{Block Layout and Validation}
\label{sec:details:block:layout_valid}
A block in our scheme consists of transactions, a list of weak headers, and a
strong header that authenticates these transactions and weak headers.  Strong
and weak headers in our system inherit the fields from Bitcoin headers and
additionally enrich it by a new field.  A block header consists of the following
fields:
\begin{compactitem}
	\item $\mathit{PrevHash}$\textnormal{:} is a hash of the previous block header, 
	\item $\mathit{Target}$\textnormal{:} is the value encoding the current target defining
	the difficulty of finding new blocks,
	\item $\mathit{Nonce}$\textnormal{:} is a nonce, used to generate PoW,
	\item $\mathit{Timestamp}$\textnormal{:} is a Unix timestamp,
	\item $\mathit{TxRoot}$\textnormal{:} is the root of the Merkle tree \cite{Merkle1988}
	aggregating all transactions of the block, and
	\item $\mathit{Coinbase}$\textnormal{:} represents an address of the miner that will
	receive a reward. 
\end{compactitem}
As our protocol rewards finders of weak headers (see details in
\autoref{sec:details:payments}), every weak header has to be accompanied with the
information necessary to identify its finder.  Otherwise, a finder of a strong
block could maliciously claim that some (or all) weak headers were found by her
and get rewards for them.  For this purpose and for efficiency, we introduced a
new 20B-long header field named \textit{Coinbase}.  With the
introduction of this field, \name headers are 100B long.  But on the other
hand, there is no longer any need for Bitcoin coinbase transactions. 

Weak headers are exchanged among nodes as part of a block, hence
it is necessary to protect the integrity of all weak headers associated with the
block. 
To realize it, we introduce a
special transaction, called a \textit{binding transaction}, which contains a hash
value computed over the weak headers.  This transaction is the first transaction
of each block and it protects the collected weak headers. 
Whenever a strong header is found, it is announced
together with all its transactions and collected weak headers, therefore, this
field  protects all associated weak headers. To encode this field we utilize
the \texttt{OP\_RETURN} operation as follows:
\begin{equation}
	\label{eq:weak_hdr_hash}
	\texttt{OP\_RETURN}\quad \mathit{H(hdr_0 \| hdr_1 \| ... \| hdr_n)},
\end{equation}
where $hdr_i$ is a weak header pointing to the previous strong header.  
Since  weak headers have redundant fields (the \mbox{\textit{PrevHash}}, \textit{Target}, and
\textit{Version} fields have the same values as the strong header), 
we propose to save bandwidth and storage by not including these fields into the data of a block. 
This modification reduces the size of a weak header from 100B to 60B only, which is especially important for SPV clients who keep downloading new block headers.

With our approach, a newly mined and announced block can encompass
multiple weak headers.  Weak headers, in contrast to strong headers, are not used to
authenticate transactions, and they are even stored and exchanged \textit{without} their
corresponding transactions. Instead, the main purpose of including weak headers
is to contribute and reflect the aggregated mining power concentrated on a given
branch of the blockchain.  We present a fragment of a blockchain of \name in
\autoref{fig:block}.  As depicted in the figure, each block contains a single strong
header, transactions, and a set of weak headers aggregated via a binding
transaction.

\begin{figure}[t!]
	\centering
	\includegraphics[width=0.6\columnwidth]{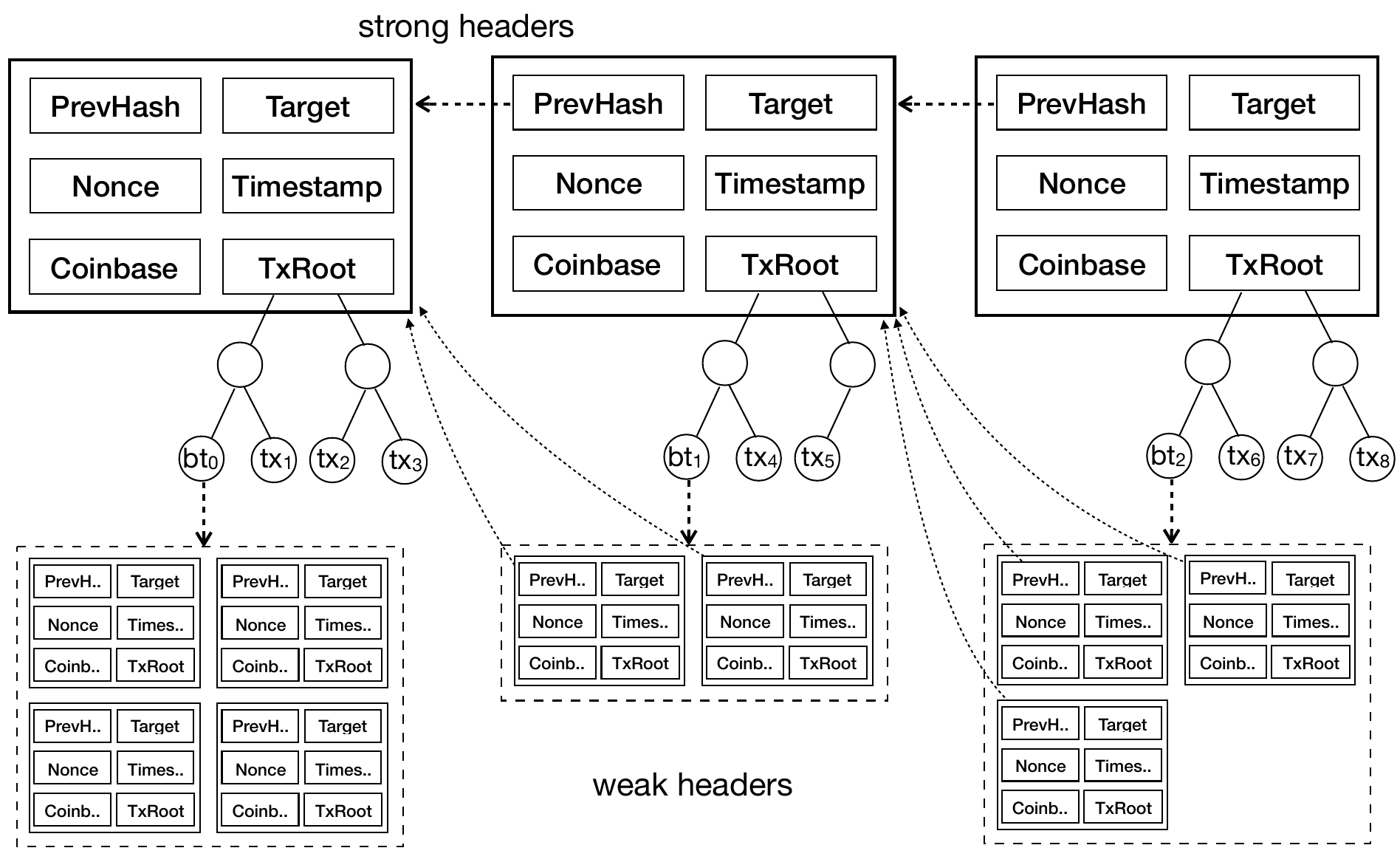}
	\caption{An example of a blockchain fragment with strong headers, weak
		headers, and binding and regular transactions.}
	\label{fig:block}
\end{figure}

On receiving a new block, miners validate the block by checking the following
(see \textit{validateBlock()} in \autoref{alg:all}):
\begin{compactenum}
	\item The strong header is protected by the PoW and points to the previous
	strong header.
	\item Header fields have correct values (i.e., the version, target, and
	timestamp are set correctly).
	\item All included transactions are correct and protected by the strong
	header. This check also includes checking that all weak headers
	collected are protected by a binding transaction included in the block.
	
	\item All included weak headers are correct:
		a) they meet the targets as specified in \autoref{eq:weak},
		b) their \textit{PrevHash} fields point to the
		previous strong header,
		and c) their version, targets, and timestamps have correct values.
\end{compactenum}
If the validation is successful, the block is accepted as part of the
blockchain.

\subsubsection{Forks}
\label{sec:details:forks}
One of the main advantages of our approach is that blocks reflect their aggregated mining
power  more precisely.  Each block beside its strong header contains
multiple weak headers that contribute to the block's PoW.  In the case of a
fork, our scheme relies on the strongest chain rule, however, the PoW is
computed differently than in Bitcoin.  For every chain its PoW is calculated as
presented by the \textit{chainPoW()} procedure in \autoref{alg:all}.  Every
chain is parsed and for each of its blocks the PoW is calculated by adding:
\begin{compactenum}
	\item the PoW of the strong header, computed as $T_{max}/T_s$, where $T_{max}$ is the maximum target value, and
	\item the accumulated PoW of all associated weak headers,
	counting each weak header equally as $T_{max}/T_w$.
\end{compactenum}
Then the chain's PoW is expressed as just the sum of all its blocks' PoW. Such
an aggregated chain's PoW is compared with the competing chain(s). The chain with
the largest aggregated PoW is determined as the current one.  As difficulty in
our protocol changes over time, the strong target $T_s$ and PoW of weak headers
are relative to the maximum target value $T_{max}$.  We assume that nodes of the
network check whether every difficulty window is computed correctly (we skipped
this check in our algorithms for easy description).

Including and empowering weak headers in our protocol moves away from
Bitcoin's ``binary'' granularity and gives blocks better expression of the PoW
they convey. An example is presented in \autoref{fig:chain}. For instance, nodes
having the blocks $B_i$ and $B'_i$ can immediately decide to follow the
block $B_i$ as it has more weak headers associated, thus it has accumulated more
PoW than the block $B'_i$.

An exception to this rule is when miners solve conflicts.  Namely, on
receiving a new block, miners run the algorithm as presented, however, they also
take into consideration PoW contributions of known weak headers that
point to the last blocks. For instance, for a one-block-long fork within the
same difficulty window, if a block $B$ includes $l$ weak headers and a miner
knows of $k$ weak headers pointing to $B$, then that miner will select $B$ over any competing block $B'$ that includes $l'$ weak and has $k'$ known weak headers pointing to it if $l+k>l'+k'$.  Note that this rule incentivizes miners to propagate
their solutions as quickly as possible as competing blocks become ``stronger'' over
time.

\begin{figure}[t!]
	\centering
	\includegraphics[width=.6\linewidth]{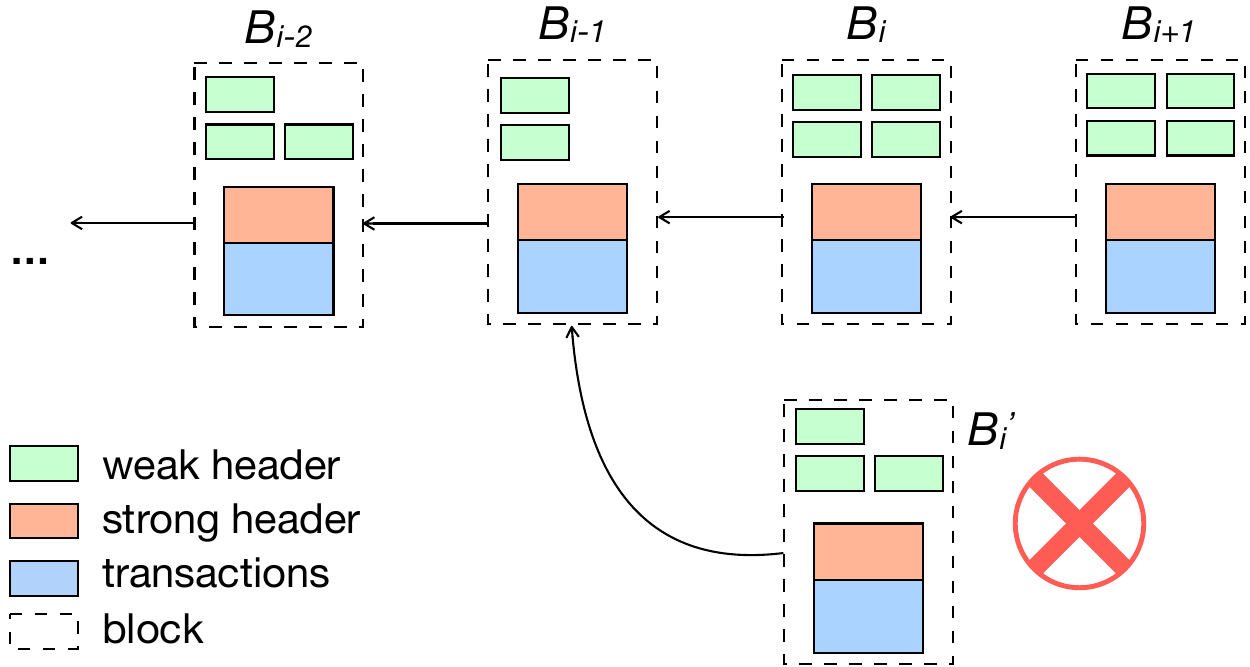}
	\caption{An example of a forked blockchain in \name.}
	\label{fig:chain}
\end{figure}

\subsubsection{Rewarding Scheme}
\label{sec:details:payments}
The rewards distribution is another crucial aspect of \name and it is presented
by the \textit{rewardBlock()} procedure from \autoref{alg:all}.  The miner that
found the strong header receives the full reward $R$. Moreover,  in contrast to
Bitcoin, where only the ``lucky'' miner is paid the full reward, in our scheme
all miners that have contributed to the block's PoW (i.e., whose weak headers
are included) are paid by commensurate rewards to the provided PoW.    A weak
header finder receive a fraction of $R$, i.e., $\gamma*c*R*T_s/T_w$, as a reward
for its corresponding solution contributing to the total PoW of a particular
branch, where the $\gamma$ parameter influences the relative impact of weak header rewards and $c$ is just a scaling constant.  
Moreover, we
do not limit weak header rewards and miners can get multiple rewards for their
weak headers within a single block.  Similar reward mechanisms are present in
today's mining pools, but unlike them, weak header
rewards in \name are independent of each other.  
Therefore, the reward scheme is
not a zero-sum game and miners cannot increase their own rewards by dropping
weak headers of others  -- they
can only lose since their potential strong blocks would have less aggregated PoW without others' weak headers.  Furthermore, weak header rewards decrease significantly the mining variance as miners can get steady revenue, making the system more decentralized and collaborative.
In \autoref{tab:pools}, we estimate a size reduction of the largest Bitcoin mining pools in 2019 with \name while maintaining the same reward variance.
In sum, \name can offer 75x-105x size reduction.
\begin{table}[t]
	\centering
	\footnotesize
	\begin{tabular}{lrrc}
		\toprule 
		\multirow{2}*{Mining Pool}            &  \multicolumn{2}{c}{Pool Size} & Size \\
		&  Bitcoin &   \name & Reduction\\
		\midrule
		BTC.com & 18.1\% & 0.245\%    & 74$\times$ \\
		F2Pool & 14.1\% & 0.172\%     & 82$\times$ \\
		AntPool & 11.7\% & 0.135\%    & 87$\times$ \\
		SlushPool & 9.1\% & 0.099\%   & 92$\times$ \\
		ViaBTC & 7.5\% & 0.079\%      & 95$\times$ \\
		BTC.TOP & 7.1\% & 0.074\%     & 96$\times$ \\
		BitClub & 3.1\% & 0.030\%     & 103$\times$ \\
		DPOOL & 2.6\% & 0.025\%       & 104$\times$ \\
		Bitcoin.com & 1.9\% & 0.018\% & 106$\times$ \\
		BitFury & 1.7\% & 0.016\%     & 106$\times$ \\
		\bottomrule
	\end{tabular}
	\caption{Largest Bitcoin mining pools and the corresponding pool sizes in
		StrongChain offering the same relative reward variance (\mbox{$T_w/T_s=1024$} and
		$\gamma=10$).}
	\label{tab:pools}
\end{table}

As mentioned before, the number of weak headers of a block is unlimited, they
are rewarded independently (i.e., do not share any reward), and all block
rewards in our system are proportional to the PoW contributed.  In such a
setting, a mechanism incentivizing miners to terminate a block creation is
needed (without such a mechanism, miners could keep creating huge blocks with weak
headers only).  In order to achieve this,  \name always attributes block
transaction fees ($B.TxFees$)  to the finder of the strong header (who also
receives the full reward $R$).

Note that in our rewarding scheme, the amount of newly minted coins is always at
least $R$, and consequently, unlike Bitcoin or Ethereum \cite{wood2014ethereum},
the total supply of the currency in our protocol is not upper-bounded. This
design decision is made in accordance with recent results on the long-term
instability of deflationary
cryptocurrencies \cite{longrunBitcoin,carlsten2016instability,gapgame}.

\subsubsection{Timestamps}
\label{sec:details:timestamps}
In \name, we follow the Bitcoin rules on constraining timestamps (see
\autoref{sec:pre:bitcoin}), however, we redefine how block timestamps are
interpreted.  Instead of solely relying on a timestamp put by the miner who
mined the block, block timestamps in our system are derived from the strong
header and all weak headers included in the corresponding block. The algorithm
to derive a block's timestamp is presented as \textit{getTimestamp()} in
\autoref{alg:all}.  A block's timestamp is determined as a weighted average
timestamp over the strong header's timestamp and all timestamps of the weak
headers included in the block.  The strong header's timestamp has a weight of
$1$, while weights of weak header timestamps are determined as their PoW
contributed (namely, a weak header's timestamp has a weight of the ratio between
the strong target and the weak target).  Therefore, the timestamp value is
adjusted proportionally to the mining power associated with a given block.  That
change reflects an average time of the block creation and mitigates miners that
intentionally or misconfigured put incorrect timestamps into the blockchain. 

\subsubsection{SPV Clients}
Our protocol supports light SPV clients. With every new block, an SPV client is
updated with the following information:
\begin{equation}
	\mathit{hdr, hdr_0, hdr_1, ..., hdr_n, BTproof},
\end{equation}
where \textit{hdr} is a strong header, $\mathit{hdr_i}$ are associated weak headers, and \textit{BTproof} is an inclusion proof of a binding transaction
that contains a hash over the weak headers (see \autoref{eq:weak_hdr_hash}).
Note that headers contain redundant fields, thus as described in
\autoref{sec:details:block:layout_valid}, they can be provided to SPV clients
efficiently.

With this data, the client verifies fields of all headers, computes the PoW of
the block (analogous, as in \textit{chainPoW()} from \autoref{alg:all}), and
validates the \textit{BTproof} proof to check whether all weak headers are
correct, and whether the transaction is part of the blockchain (the proof is
validated against \textit{TxRoot} of \textit{hdr}).  Afterward, the client saves the strong header
$\textit{hdr}$ and its computed PoW, while other messages (the weak headers and
the proof) can be dropped.

\subsection{Experiments}
We executed a few experiments in the original paper.
Nevertheless, for the purpose of this thesis we show only the results of experiments that demonstrate improved threshold of selfish mining profitability.
We will consider the selfish mining strategy of \cite{eyal2014majority}, described as follows:
\setdefaultleftmargin{1em}{2em}{}{}{}{}
\begin{itemize}
	\item The attacker does not propagate a newly found block until she finds at least a second block on top of it, and then only if the difference in difficulty between her chain and the strongest known alternative chain is between zero and $R$.
	\item The attacker adopts the strongest known alternative chain if its difficulty is at least greater than her own by $R$.
\end{itemize}
\setdefaultleftmargin{2em}{2em}{}{}{}{}

\begin{figure}[t!]
	\centering

	
	\begin{subfigure}[t]{0.432\textwidth}
		\centering
		
		\includegraphics[width=\textwidth, trim={2cm 0 1.7cm 0}, clip]{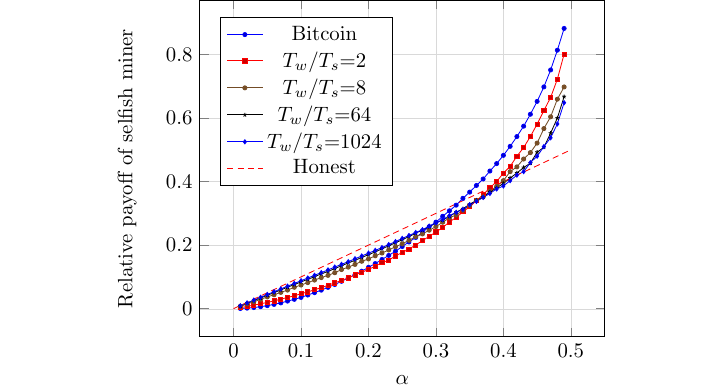}
		\caption{\emph{Relative} payoff of a \emph{selfish} miner following the strategy of \cite{eyal2014majority}, compared to an $(1-\alpha)$-strong honest miner.}\label{fig:selfish_payoffs1} 
	\end{subfigure}
	\hfill
	\begin{subfigure}[t]{0.432\textwidth}
		\centering
		
		\includegraphics[width=\textwidth, trim={2cm 0 1.7cm 0}, clip]{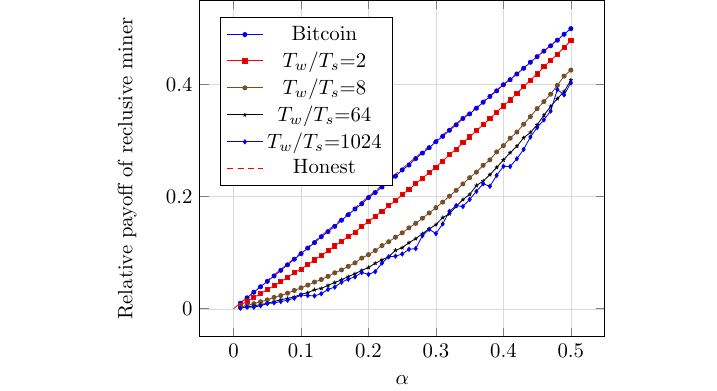}
		\caption{Relative payoff of a \emph{reclusive} miner who does not broadcast her weak blocks.}\label{fig:selfish_payoffs2} 
	\end{subfigure}
	\\
	\vspace{0.5cm}
	\begin{subfigure}[t]{0.432\textwidth}
		\centering
		
		\includegraphics[width=\textwidth, trim={2cm 0 1.7cm 0}, clip]{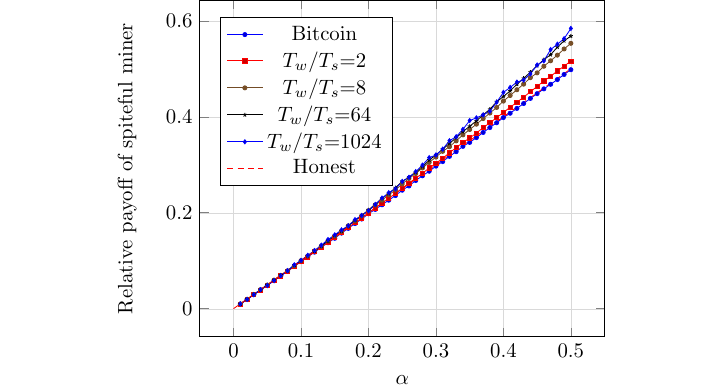}
		\caption{\emph{Relative} payoff (with respect to the rewards of all miners combined) of a \emph{spiteful} miner, who does not include other miners' weak blocks unless necessary. }\label{fig:selfish_payoffs3} 
	\end{subfigure}	
	\hfill	
	\begin{subfigure}[t]{0.432\textwidth}
		\centering
		
		\includegraphics[width=\textwidth, trim={2cm 0 1.7cm 0}, clip]{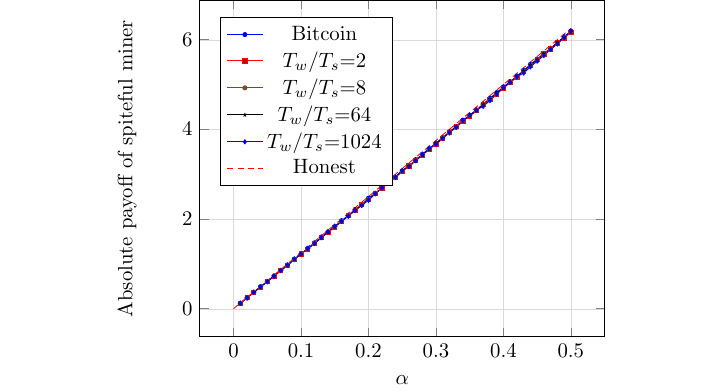}
		\caption{\emph{Absolute} payoff of a \emph{spiteful} miner, with 12.5 BTC on average awarded per block.}\label{fig:selfish_payoffs4} 
	\end{subfigure}	

	\vspace{0.5cm}
	\caption{Payoffs of an $\alpha$-strong adversarial miner for different strategies. We consider Bitcoin and \name with different choices of $T_w/T_s$, with \mbox{$\gamma = \log_2(T_w/T_s)$}.
	}\label{sec:cons-strong-exp1}
	
\end{figure}

In \autoref{fig:selfish_payoffs1}, we have depicted the profitability of this selfish mining strategy for different choices of $T_w/T_s$. As we can see, for $T_w/T_s = 1024$ the probability of being `ahead' after two strong blocks is so low that the strategy only begins to pay off when the attackers' mining power share is close to $43\%$ --- this is an improvement over Bitcoin, where the threshold is closer to $33\%$.

\name does introduce new adversarial strategies based on the mining of new weak headers.  
Some examples include not broadcasting any newly found weak blocks (``reclusive'' mining), refusing to include the weak headers of other miners (``spiteful'' mining), and postponing the publication of a new strong block and wasting the weak headers found by other miners in the meantime. In the former case, the attacker risks losing their weak blocks, whereas in both of the latter two cases, the attacker risks their strong block going stale as other blocks and weak headers are found. Hence, these are not cost-free strategies. Furthermore, because the number of weak headers does not affect the difficulty rescale, the attacker's motive for increasing the stale rate of other miners' weak headers is less obvious (although in the long run, an adversarial miner could push other miners out of the market entirely, thus affecting the difficulty rescale). 

In \autoref{fig:selfish_payoffs2}, we have displayed the relative payout (with respect to the total rewards) of a reclusive $\alpha$-strong miner --- this strategy does not pay for any $\alpha<0.5$. In~\autoref{fig:selfish_payoffs3}, we have depicted the relative payoff of a spiteful mine who does not include other miners' weak blocks unless necessary (i.e., unless others' weak blocks together contribute more than $R$ to the difficulty, which would mean that any single block found by the spiteful miner would always go stale). For low latencies (the graphs were generated with an average latency of 0.53 seconds), the strategy is almost risk-free, and the attacker does manage to hurt other miners more than herself, leading to an increased relative payout. However, as displayed in \autoref{fig:selfish_payoffs4}, there are no absolute gains, even mild losses. As mentioned earlier, the weak headers do not affect the difficulty rescale so there is no short-term incentive to engage in this behavior --- additionally there is little gain in computational overhead as the attacker still needs to process her own weak headers. 

%
%

\section{Incentive Attacks on DAG-Based Blockchains}\label{sec:cons-dags}

\subsection{Problem Definition \& Scope}\label{sec:dags-problem}
Let there be a PoW blockchain network that uses the Nakamoto consensus (NC) and consists of honest and greedy miners, with the greedy miners holding a fraction \gls{adversarial-mining-power} of the total mining power (i.e., adversarial mining power).
Then, we denote the network propagation delay in seconds as \gls{network-propagation-delay} and the block creation time in seconds as \gls{block-creation-rate}.
We assume that the minimum value of \gls{block-creation-rate} is constrained by \gls{network-propagation-delay} of the blockchain network.
%
%
It is well-known that Nakamoto-style blockchains generate stale blocks (a.k.a., orphan blocks).
As a result, a fraction of the mining power is wasted. The rate at which stale blocks are generated increases when \gls{block-creation-rate} is decreased, which is one of the reasons why Bitcoin maintains a high \gls{block-creation-rate} of 600s.

\subsubsection{\textbf{DAG-Oriented Designs}}
Many DAG-oriented designs were proposed to allow a decrease of \gls{block-creation-rate} while utilizing stale blocks in parallel chains, which should increase the transaction throughput.
Although there are some \gls{dag}-oriented designs that do not address the problem of increasing transaction throughput (e.g., IoTA \cite{silvano2020iota}, Nano \cite{Lemahieu2018NanoA}, Byteball \cite{Churyumov2016}),
we focus on the specific group of solutions addressing this problem, such as Inclusive \cite{lewenberg2015inclusive}, GHOSTDAG, PHANTOM \cite{sompolinsky2020phantom}, SPECTRE \cite{sompolinsky2016spectre}, and Prism \cite{bagaria2019prism}.
We are targeting the  \gls{RTS} strategy, which is a common property of this group of protocols.
In the \gls{RTS}, the miners do not take into account transaction fees of all included transactions; instead, they select transactions to blocks randomly -- although not necessarily uniformly at random (e.g., \cite{Kaspa}).
In this way, these designs aim to eliminate transaction collision within parallel blocks of the \gls{dag} structure.
Nevertheless, the interpretation of randomness in \gls{RTS} is not enforced/verified by these designs, and miners are trusted to ignore fees of all (or the majority (e.g., \cite{Kaspa}) of) transactions for the common ``well-being'' of the protocol.
%
Contrary, miners of blockchains such as Bitcoin use a well-known transaction selection mechanism that maximizes profit by selecting transactions of the block based on the highest fees -- we refer to this strategy as the \textit{greedy strategy} in this work. 

%



\subsubsection{\textbf{Assumptions}}\label{sec:attack-assumptions}
We assume a generic DAG-oriented consensus protocol using the \gls{RTS} strategy (denoted as \textsc{DAG-Protocol}).
Then, we assume that the incentive scheme of \textsc{DAG-Protocol} relies on transaction fees (but additionally might also rely on block rewards),\footnote{Note that block rewards would not change the applicability of our incentive attacks, and the constraints defined in the game theoretic model (see \autoref{sec:game-theory:model-analysis}) would remain met even with them.} and transactions are of the same size.\footnote{Note that this assumption serves only for simplification of the follow-up sections. Transactions of different sizes would require normalizing fees by the sizes of transactions to obtain an equivalent setup (i.e., a fee per Byte).}
Let us assume that the greedy miners may only choose a different transaction selection strategy to make more profit than honest miners.
Then, we assume that \textsc{DAG-Protocol} uses rewarding where the miner of the block \gls{phantom-block} gets rewarded for all unique not-yet-mined transactions in \gls{phantom-block} (while she is not rewarded for transaction duplicates mined before).

\subsubsection{\textbf{Identified Problems -- Incentive Attacks}}
Although the assumptions stated above might seem intuitive, there is no related work studying the impact of greedy miners deviating from the  \gls{RTS} strategy on any of the considered \textsc{DAG-protocol}s (\cite{sompolinsky2020phantom},\cite{sompolinsky2016spectre},\cite{lewenberg2015inclusive},\cite{bagaria2019prism})
and the effect it might have on the throughput of these protocols as well as a fair distribution of earned rewards. 
Note that we assume GHOSTDAG, PHANTOM, and SPECTRE are utilizing the \gls{RTS} strategy that was proposed in the Inclusive protocol \cite{sompolinsky2016spectre}, as recommended by the (partially overlapping) authors of these works -- this is further substantiated by the practical implementation of GHOSTDAG/PHANTOM called Kaspa \cite{Kaspa}, which utilizes a variant of \gls{RTS} strategy that selects a majority portion of transactions in a block uniformly random, while a small portion of the block capacity is seized by the transaction selected based on the highest fees.
Nevertheless, besides potentially increased transaction collision rate, even such an approach enables more greedy behavior.
We make a hypothesis for our incentive attacks:
\begin{hypothesis}\label{hypo:problem-definition}
	A greedy transaction selection strategy will decrease the relative profit of honest miners as well as transaction throughput in the \textsc{DAG-Protocol}.\footnote{Note that the greedy transaction selection strategy deviates from the \textsc{DAG-protocol} and thus is considered adversarial.}
	
\end{hypothesis}

\subsection{Game Theoretical Analysis}
\label{sec:gametheory}

In this section, we model a \textsc{DAG-protocol}\footnote{Note that we consider DAG-based designs under this generic term of \textsc{DAG-protocols} to simplify the description but not to claim that all \textsc{DAG-protocols} (with \gls{RTS}) can be modeled as we do.}
as a two-player game, in which the honest player/phenotype ($\phon$) uses the \gls{RTS} strategy and the greedy player/phenotype ($\pmal$) uses the greedy transaction selection strategy.
We assume that the fees of transactions vary -- the particular variance of fees is agnostic to this analysis.
We present the game theoretical approach widely used to analyze interactions of players (i.e., consensus nodes) in the blockchain.
Several works attempted to study the outcomes of different scenarios in blockchain networks (e.g., \cite{2019-Ziyao-survey-game,2020-Wang-game-mining,singh2020game}) but none of them addressed the case of \textsc{DAG-protocols} and their transaction selection. 
%
%
\noindent In game theoretic terms, we examine the following hypothesis:
\begin{hypothesis}\label{hypo:game1}
	So-called (honest) \gls{honest}-behavior with \gls{RTS} is a Subgame Perfect Nash Equilibrium (\gls{SPNE}) in an infinitely repeated \textsc{DAG-Protocol} game. 
	This was presented in Inclusive \cite{lewenberg2015inclusive} and we will contradict it. 
\end{hypothesis}

\subsubsection{Model of the \textsc{DAG-Protocol}} 
%
%
Players in \textsc{DAG-Protocol} receive transaction fees after a delay.
To simplify analysis, we can divide the flow of transactions into rounds of the game.
This allows us to study player behavior within defined time.
In each round, players make decisions and receive payoffs.
Since no round is explicitly marked as the last one, this game is repeated infinitely.

We model \textsc{DAG-Protocol} in the form of \textit{an infinitely repeated two players game with a base game}
\begin{equation}
	\Gamma = (\{\phon,\pmal\};\{\gls{honest},\gls{greedy}\};U_{hon},U_{mal}),
	\label{def-basegame}
\end{equation}
where $\phon$ is the player's determination to play \gls{honest} strategy and $\pmal$ the player's determination to the \gls{greedy}-behavior.
Pure strategy \gls{honest} is interpreted as the \gls{RTS}, while \gls{greedy} strategy represents picking the transactions with the highest fees.
Payoff functions are depicted in \autoref{tab:game-base},
where the profits in the strategic profiles $(\gls{honest},\gls{honest})$ and $(\gls{greedy},\gls{greedy})$ are uniformly distributed between players.
In the following, we analyze the model in five possible scenarios with generic levels $a, b, c, d$ of the payoffs.
%

\subsubsection{Analysis of the Model}\label{sec:game-theory:model-analysis}

For purposes of our analysis, lets start with the assumption that \gls{greedy}-behavior is more attractive and profitable than \gls{honest}-behavior.
Otherwise, there would be no reason to investigate Hypothesis~\autoref{hypo:game1}.
Thus, let us consider $c>a$ as the basic constraint.
We also assume $c>b$, meaning that \gls{honest}-behavior loses against \gls{greedy}-behavior in the cases of $(\gls{honest},\gls{greedy})$ and  $(\gls{greedy},\gls{honest})$ profiles.
%
These basic constraints yield the following scenarios:
\begin{compactitem}
	\item \textbf{Scenario 1}: $d>c>a>b$,
	\item \textbf{Scenario 2}: $c>d>a>b$,
	\item \textbf{Scenario 3}: $c>a>d>b$,
	\item \textbf{Scenario 4}: $c>a>b>d$,
	\item \textbf{Scenario 5}: where $a=d$ and $c>a$, $c>b$.
\end{compactitem}
Note that we do not assume the case $a=b$ since the presence of $\pmal$ will drain all high-fee transactions that $\phon$ would originally obtain. 
%

\noindent The following provides a high-level summary of the scenarios. For a more comprehensive analysis, we refer the refer to the full version of our paper \cite{peresini2023incentive}.
\begin{itemize}
	\item \underline{Scenarios 1} and \underline{2} are covered just for a sake of completeness.
	If the transaction fees were to cause such game outcomes, there would be no need to trust in \gls{honest}-behavior, and the system would settle in the unique $(\gls{greedy},\gls{greedy})$ Pure Nash Equilibrium (\gls{PNE}).
	\smallskip

	\item \underline{Scenario 3A} \textbf{Purely Non-Cooperative Interpretation}.
	In Scenario 3, both players ($\phon$ and $\pmal$) are incentivized to choose the greedy \gls{greedy} strategy, even though this leads to a worse overall outcome for both of them.
	This is because each player can do better by betraying the other player than by cooperating.
	This situation is known as a \textit{Prisoner's dilemma} \cite{Osborne1994}.
	\begin{proof}
		(Informal) Strategy $\gls{greedy}$ strictly dominates $\gls{honest}$ and thus $(\gls{greedy},$  $\gls{gree\-dy})$ is the unique \gls{PNE}.
	\end{proof}
	\begin{corollary}
		If $\phon$ is willing to follow the social norm of using the DAG protocol, then $\pmal$'s best response is also to use the $\gls{greedy}$ strategy.
		This is because $\phon$'s cooperation is not credible, and $\pmal$ can always benefit from betraying $\phon$.
	\end{corollary}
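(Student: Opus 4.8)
The statement to prove is the corollary in Scenario 3: if $\phon$ follows the social norm (using \gls{RTS}), then $\pmal$'s best response is still the \gls{greedy} strategy, because $\phon$'s cooperation is not credible and $\pmal$ always benefits from betraying.

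\textbf{Proof plan.} The plan is to leverage the strict dominance established in the proof immediately above (that \gls{greedy} strictly dominates \gls{honest} in the base game $\Gamma$ under the Scenario 3 ordering $c>a>d>b$). First I would fix $\phon$'s action to \gls{honest} (the social norm) and compute $\pmal$'s two possible payoffs: playing \gls{honest} against \gls{honest} yields $a$ (the uniformly split cooperative payoff), while playing \gls{greedy} against \gls{honest} yields $c$. Since $c > a$ by the Scenario 3 constraints, $\pmal$'s unique best response to \gls{honest} is \gls{greedy}. This is really just reading off one column of the payoff bimatrix in \autoref{tab:game-base}, so the argument is short.

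Second, I would address the ``credibility'' phrasing, which is the part that needs a sentence of justification rather than pure calculation. The point is that in the one-shot base game $\Gamma$, $\phon$'s commitment to \gls{honest} cannot be sustained as an equilibrium: since \gls{greedy} strictly dominates \gls{honest} for \emph{both} players, $\phon$ would itself deviate to \gls{greedy} regardless of $\pmal$'s action. Hence the profile $(\gls{honest},\gls{honest})$ is not a Nash equilibrium, and a rational $\pmal$ anticipating $\phon$'s rationality has no reason to expect \gls{honest} play — the only \gls{PNE} is $(\gls{greedy},\gls{greedy})$. I would state this as: the social norm is not self-enforcing in $\Gamma$, so $\pmal$'s optimal response coincides with what it would play in the equilibrium anyway.

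\textbf{Expected main obstacle.} The only subtle point is making precise what ``best response'' and ``credible'' mean here without over-formalizing — i.e., keeping the argument at the level of the one-shot base game $\Gamma$ rather than accidentally invoking folk-theorem reasoning about the infinitely repeated game (where cooperation \emph{could} be supported by trigger strategies). Since this corollary is explicitly a statement about the stage game / purely non-cooperative interpretation (Scenario 3A), I would be careful to confine the reasoning to $\Gamma$ and simply cite the strict-dominance fact proved just above. Concretely, the proof would read roughly:

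\begin{proof}
	Assume $\phon$ plays \gls{honest}. By the payoff functions in \autoref{tab:game-base}, $\pmal$ obtains $a$ by playing \gls{honest} and $c$ by playing \gls{greedy}. Under the Scenario~3 ordering $c>a>d>b$ we have $c>a$, so \gls{greedy} is $\pmal$'s strict best response to \gls{honest}. Moreover, since \gls{greedy} strictly dominates \gls{honest} for $\phon$ as well, the profile $(\gls{honest},\gls{honest})$ is not a Nash equilibrium of $\Gamma$; thus $\phon$'s adherence to the social norm is not self-enforcing, and $\pmal$'s optimal action — namely \gls{greedy} — is the same as in the unique \gls{PNE} $(\gls{greedy},\gls{greedy})$.
\end{proof}
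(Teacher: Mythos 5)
Your proof is correct and takes essentially the same route as the paper: the corollary is a direct consequence of the strict dominance of \gls{greedy} over \gls{honest} established just above (reading $c>a$ off the payoff table for the column where $\phon$ plays \gls{honest}), and your handling of the ``credibility'' clause — that $(\gls{honest},\gls{honest})$ is not a \gls{PNE} of the one-shot game $\Gamma$, so the social norm is not self-enforcing — matches the paper's intent in Scenario 3A. Your care to confine the argument to the stage game and not invoke trigger-strategy reasoning is exactly the distinction the paper draws between Scenarios 3A and 3B.
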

	\smallskip

	\item \underline{Scenario 3B} \textbf{When Some Coordination is Allowed}.
	It is possible for players to coordinate their behavior and achieve a better outcome for both of them, both playing $\gls{honest}$ strategy.
	It must be common knowledge to the players that $\phon$ uses \textit{grim trigger strategy} \cite{Osborne1994, Mailath}.
	This means that $\phon$ will cooperate as long as $\pmal$ cooperates.
	However, if $\pmal$ defects even once (playing $\gls{greedy}$), then $\phon$ will switch to the $\gls{greedy}$ strategy forever.
	$\pmal$ must also have a high discount factor.
	This means that she must value future payoffs more than immediate payoffs.
	If $\pmal$'s discount factor is too low, then she will be tempted to defect even if she knows it will lead to punishment in the long run.
	\smallskip
	
	%
	
	\begin{table}[t]
		\centering{\bazka{a}{b}{c}{d}}
		\caption{The utility functions $U_{hon},U_{mal}$ in the \textit{base game.}} 
	\label{tab:game-base}
	\vspace{-0.3cm}
\end{table}

\item \underline{Scenario 4A} \textbf{Purely Non-Cooperative Interpretation}. We choose utility functions: $a=2$, $b=1$, $c=3$ and $d=0$.
This scenario is an anti-coordination game \cite{Osborne1994} instance, so the game has two \gls{PNE}s $(\gls{honest},\gls{greedy})$ \& $(\gls{greedy},\gls{honest})$, and one Mixed Nash Equilibrium (\gls{MNE}) in mixed strategic profile $\bracks{\halfhalf}{\halfhalf}$.
\begin{claim}
	The most reasonable behavior in Scenario 4 is to play $\halfhalf$ for both players.
\end{claim}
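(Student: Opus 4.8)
The plan is to argue that the mixed strategic profile $\halfhalf$ is the unique \emph{symmetric} Nash equilibrium of the base game $\Gamma$ restricted to Scenario~4, and then to explain in what sense this makes it the ``most reasonable'' prediction of play. First I would set up the $2\times 2$ payoff bimatrix using the concrete utilities $a=2$, $b=1$, $c=3$, $d=0$ from \autoref{tab:game-base}: against an opponent who plays \gls{honest} with probability $p$, a player's expected payoff from \gls{honest} is $2p+1(1-p)=1+p$, and from \gls{greedy} it is $3p+0(1-p)=3p$. Setting these equal gives $1+p=3p$, i.e.\ $p=\tfrac12$, which shows that a player is indifferent between her two pure actions exactly when the opponent mixes $\halfhalf$; hence $\bracks{\halfhalf}{\halfhalf}$ is a mixed Nash equilibrium, confirming the equilibrium count already asserted in the text (two pure anti-coordination equilibria plus this one mixed equilibrium).

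Next I would give three complementary justifications for preferring $\halfhalf$ over the two pure-strategy equilibria $(\gls{honest},\gls{greedy})$ and $(\gls{greedy},\gls{honest})$. (i) \emph{Symmetry}: the base game $\Gamma$ is symmetric under swapping the two players, but the two pure equilibria are not symmetric profiles, whereas $\halfhalf$ is the unique equilibrium that is invariant under this relabeling; when players are anonymous consensus nodes with no coordination device, a symmetric equilibrium is the natural focal point. (ii) \emph{No coordination device}: selecting one of the two pure equilibria requires the players to agree on \emph{who} plays \gls{honest} and who plays \gls{greedy}; absent pre-play communication or an asymmetry in the protocol, neither player can unilaterally bring about a pure equilibrium, and each best-responds to her belief about the other, which in the symmetric-belief case forces the $\halfhalf$ mixture. (iii) \emph{Evolutionary/population stability}: interpreting $\phon$ and $\pmal$ as phenotypes in a large mining population (as the paper does throughout \autoref{sec:gametheory}), I would check that the mixed equilibrium corresponds to the stable rest point of the replicator dynamics for an anti-coordination game — the fraction of greedy miners is driven toward $\tfrac12$ because greedy is favored when rare ($3p>1+p$ for $p$ near $1$, i.e.\ greedy does well against mostly-honest) and honest is favored when greedy is common, so the interior fixed point $\tfrac12$ is attracting while the monomorphic states are not.

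The main obstacle, and the place where the argument is genuinely a \emph{plan} rather than a theorem, is pinning down the exact meaning of ``most reasonable.'' Nash equilibrium alone does not single out $\halfhalf$, since all three profiles are equilibria; the claim really rests on an equilibrium-selection principle (symmetry / evolutionary stability / risk-dominance-type reasoning), and I would need to state explicitly which one I am invoking and verify its hypotheses for this specific payoff matrix. I would therefore structure the proof as: (1) compute the indifference condition to exhibit $\halfhalf$ as an equilibrium; (2) invoke symmetry of $\Gamma$ to rule out the asymmetric pure equilibria as focal under anonymity; (3) as a robustness check, verify that $\halfhalf$ is the unique evolutionarily stable state of the associated population game, citing the standard anti-coordination-game analysis \cite{Osborne1994}. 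Step~(2)–(3) is where care is needed, since one must be explicit that no side channel for coordination exists among greedy and honest miners — an assumption consistent with the non-cooperative reading labeled ``Scenario 4A'' in the text — and that the relevant dynamic is indeed the replicator (or any monotone) dynamic, under which the interior fixed point of an anti-coordination game is the attractor.
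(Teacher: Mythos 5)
Your proposal is correct and follows essentially the same route as the paper, which likewise identifies $\bracks{\halfhalf}{\halfhalf}$ as the mixed Nash equilibrium of this anti-coordination game and argues informally that it is the natural stable symmetric outcome yielding $(\frac{3}{2},\frac{3}{2})$. Your version is in fact more careful than the paper's short informal justification: you derive the mixing probability from the indifference condition $1+p=3p$ (where the paper loosely says the best response to $\halfhalf$ is $\halfhalf$, although the player is actually indifferent there and $\halfhalf$ is only \emph{a} best response), and your symmetry/ESS selection argument supplies precisely the equilibrium-selection principle that the paper leaves implicit when it declares this profile the ``most reasonable'' behavior.
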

\begin{proof}
	(Informal) Both players have two equally good choices: either be \gls{honest} or be \gls{greedy}.
	From $\phon$'s perspective, mixed behavior $\halfhalf$  guarantees the best stable outcome.
	If $\pmal$ expects $\halfhalf$ behavior from $\phon$, then $\pmal$'s best response is to play the same mixed behavior that establishes \gls{MNE}. 
	The players gain $(\frac{3}{2}, \frac{3}{2})$ in that \gls{MNE}, which
	is the highest expectation they can obtain.
\end{proof}
\smallskip
Therefore, the most reasonable behavior for both players is to play a mixed strategy where they are half-\gls{honest} and half-\gls{greedy}.
\smallskip

\item \underline{Scenario 4B} \textbf{When Some Coordination is Allowed}.
Similarly to Scenario 3, it is possible for players to coordinate their behavior and agree to always be $\gls{honest}$.
This would be a good outcome for both players, as they would both get a payoff of 2.
The same principle and consequences apply as in scenario 3(B) (Grimm trigger strategy).
This will make the $\pmal$ player regret defecting, and it will make her more likely to cooperate in the future.
Therefore, the conclusion from Scenario 3 applies here as well.
\smallskip

\item \underline{Scenario 5A} \textbf{Purely Non-Cooperative Interpretation}.
In this scenario, the game is a zero-sum game, which means that no player can gain more than 100\% profit, regardless of their chosen strategy.
This is because the sum of all incoming transaction fees is fixed in any set of rounds.
As a result, the total profit for all players is always constant if they all play the honest or greedy strategy.
Therefore, the only rational outcome of this scenario is for both players to play the $\gls{greedy}$ strategy.
\smallskip

If we consider a social norm, it may be tempting to appeal to players' sense of responsibility and ask them to refrain from playing the $\gls{greedy}$ strategy.
However, this is unlikely to be effective, as the $\gls{honest}$ strategy does not benefit either player.
Scenario 5 is highly similar to the classic game-theoretical model called \textit{The Tragedy of Commons} \cite{miller2003game}.
In this model, individuals are incentivized to use a shared resource to the maximum extent possible, even if this depletes the resource and harms the group as a whole.
In anonymous environments, where individuals cannot be held accountable for their actions, it is even more likely that they will prioritize their own interests over the interests of the group.
This is because they know that they will not be punished for acting in their self-interest, meaning there is no harm to play $\gls{greedy}$ strategy.
\end{itemize}

\subsubsection{Summary}
We conclude that Hypothesis \autoref{hypo:game1} is not valid. The $(\gls{honest},\gls{honest})$ profile is not a \gls{PNE} in any of our scenarios. Incentives enforcing \gls{honest}-behavior are hardly feasible in the anonymous (permissionless) environment of blockchains.
A community of honest miners can follow the \textsc{DAG-Protocol} until the attacker appears. The attacker playing the \gls{greedy} strategy can parasite on the system and there is no defense against such a behavior (since greedy miners can leave the system anytime and mine elsewhere, which is not assumed in \cite{lewenberg2015inclusive}). 
Therefore, \gls{honest} is not an \emph{evolutionary stable strategy} \cite{smith_1982}, and thus \gls{honest} does not constitute a stable equilibrium.
For more details about game theoretical analysis, we refer the reader to the full extended version of our paper \cite{perevsini2023incentive}, which is not yet published.

\subsection{Simulation Model}\label{sec:cons-model}

We created a simulation model to conduct various experiments investigating the behavior of \textsc{DAG-Protocol} under incentive attacks related to the problems identified in \autoref{sec:dags-problem} and thus Hypothesis~\autoref{hypo:problem-definition}.
Some experiments were designed to provide empirical evidence for 
the conclusions from \autoref{sec:gametheory}.
%

\subsubsection{\textbf{Abstraction of \textsc{DAG-Protocol}}}
For evaluation purposes, we simulated the \textsc{DAG-protocol} (with \gls{RTS}) by modeling the following aspects:
\begin{compactitem}
	\item  All blocks in DAG are deterministically ordered. 
	\item  The mining rewards consist of transaction fees only.
	\item  A fee of a particular transaction is awarded only to a miner of the block that includes the transaction as the first one in the sequence of totally ordered blocks. 
\end{compactitem}
Also, in terms of PHANTOM/GHOSTDAG terminology, we generalize and do not reduce transaction fees concerning the delay from ``appearing'' of the block until it is strongly connected to the DAG.
Hence, we utilize \gls{discount-function} = 1. 
In other words, for each block \gls{phantom-block}, the discount function does not penalize a block according to its gap parameter \(\gls{gap-parameter}(\gls{phantom-block})\), i.e. \(\gls{discount-function}(\gls{gap-parameter}(\gls{phantom-block})) = 1\).
Such a setting is optimistic for honest miners and maximizes their profits from transaction fees when following the \gls{RTS} strategy.
This abstraction enables us to model the concerned problems of considered \textsc{DAG-Protocols}.

\subsubsection{(Simple) Network Topology}\label{sec:network-topology-simple}

We created a simple network topology that is convenient for proof-of-concept simulations and encompasses some important aspects of the real-world blockchain network.
In particular, we were interested in emulating the network propagation delay \gls{network-propagation-delay} to be similar to Bitcoin (i.e., $\sim5s$ at most of the time in 2022), but using a small ring topology.
To create such a topology, we assumed that the Bitcoin network contains \(7592\) nodes, according to the snapshot of reachable Bitcoin nodes found on May 24, 2022.\footnote{\url{https://bitnodes.io/nodes/}}
In Bitcoin core, the default value of the consensus node's peers is set to  8 (i.e., the node degree).\footnote{Nevertheless, the node degree is often higher than~8 in reality \cite{mivsic2019modeling}.} 
%
Therefore, the maximum number of hops that a gossiped message requires to reach all consensus nodes in the network is $\sim4.29$ (i.e., $log_8(7592)$).
Moreover, if we were to assume $2-3x$ more independent blockchain clients (that are not consensus nodes), then this number would be increased to $4.83$--$4.96$.
%
To model this environment, we used the ring network topology with 10 consensus nodes,
which sets the maximum value of hops required to propagate a message to~$5$.
Next, we set the inter-node propagation delay $\partial \tau$ to $1s$, which fits assumed \gls{network-propagation-delay} (i.e., 5s / 5 hops = 1s).
%



\subsubsection{Simulator}
There are simulators \cite{paulavivcius2021systematic} that model blockchain protocols, mainly focusing on network delays, different consensus protocols, and behaviors of specific attacks (e.g., SimBlock \cite{Aoki2019SimBlockAB}, Blocksim \cite{BlockSim:Alharby}, Bitcoin-Simulator \cite{sim_bitcoin-simulator}).
However, none of these simulators was sufficient for our purposes due to missing support  for multiple chains  and incentive schemes assumed in \textsc{DAG-protocols}.
To verify Hypothesis~\autoref{hypo:problem-definition}, we built a simulator that focuses on the mentioned problems of \textsc{DAG-protocols}.
In detail, we started with the Bitcoin mining simulator \cite{gavinsimulator}, which is a discrete event simulator for the PoW mining on a single chain, enabling a simulation of network propagation delay within a specified network topology.
%
%
We extended this simulator to support \textsc{DAG-Protocol}s, enabling us to monitor transaction duplicity, throughput, and relative profits of miners with regard to their mining power.
The simulator is written in \verb!C++! (see details and its evaluation in \cite{perevsini2023sword}.
%
%
%
In addition, we added more simulation complexity to simulate each block, including the particular transactions (as opposed to simulating only the number of transactions in a block \cite{gavinsimulator}). 
Most importantly, we implemented two different transaction selection strategies -- greedy and random.
For demonstration purposes, we implemented the exponential distribution of transaction fees in mempool, based on several graph cuts of fee distributions in mempool of Bitcoin from \cite{bitcoin-mempool-stats}.\footnote{Distribution of transaction fees in mempool might change over time; however, it mostly preserves the low number of high-fee transactions.} 
Our simulator is available at \url{https://github.com/Tem12/DAG-simulator}.

\subsection{Evaluation}
\label{sec:evaluation-dags}


We designed a few experiments with our simulator, which were aimed at investigating the relative profit of greedy miners and transaction collision rate (thus throughput) to investigate Hypothesis~\autoref{hypo:problem-definition}.
In all experiments, honest miners followed the \gls{RTS},
while greedy miners followed the greedy strategy.
Unless stated otherwise, the block creation time was set to \(\gls{block-creation-rate} = 20s\). 
However, we abstracted from \gls{network-propagation-delay} of transactions and ensured that the mempools of nodes were regularly filled (i.e., every 60s) by the same set of new transactions, while the number of transactions in the mempool was always sufficient to fully satisfy the block capacity that was set to 100 transactions.
We set the size of mempool equal to 10000 transactions, and thus the ratio between these two values is similar to Bitcoin \cite{bitcoin-mempool-stats} in common situations.
In all experiments, we executed multiple runs and consolidated their results; however, in all experiments with the simple topology, the spread was negligible, and therefore we do not depict it in graphs. 





\subsubsection{Experiment I}\label{sec:experiment-1}
\paragraph{Goal.}
The goal of this experiment was to compare the relative profits earned by two miners/phenotypes in a network, corresponding to our game theoretical settings (see \autoref{sec:gametheory}). 
Thus, one miner was greedy and followed the greedy strategy, while the other one was honest and followed the \gls{RTS}.

\paragraph{\textbf{Methodology and Results.}}
The ratio of total mining power between the two miners was varied with a granularity of \(10\%\), and the network consisted of 10 miners, where only the two miners had assigned the mining power.
Other miners acted as relays, emulating the maximal network delay of 5 hops between the two miners in a duel.
The relative profits of the miners were monitored as their profit factor \(\mathbb{P}\) w.r.t. their mining power.
We conducted 10 simulation runs and averaged their results (see \autoref{fig:duel-profit}).
Results show that the greedy miner earned a profit disproportionately higher than her mining power, while the honest miner's relative profit was negatively affected by the presence of the greedy miner.
We can observe that \(\mathbb{P}\)  of greedy miner was indirectly proportional to her \gls{adversarial-mining-power}, which was caused by the exponential distribution of transaction fees that contributed more significantly to the higher \(\mathbb{P}\) of a smaller miner.
In sum, the profit advantage of the greedy miner aligns with the conclusions from the game theoretical model (Scenario 5, see \autoref{sec:gametheory}) in particular, which represents the case of \gls{adversarial-mining-power}=50\%.
Nevertheless, our results indicate that the greedy strategy is more profitable than the \gls{RTS} for any non-zero~\gls{adversarial-mining-power}.

\begin{figure}[t]
	\centering
	\includegraphics[width=0.5\linewidth]{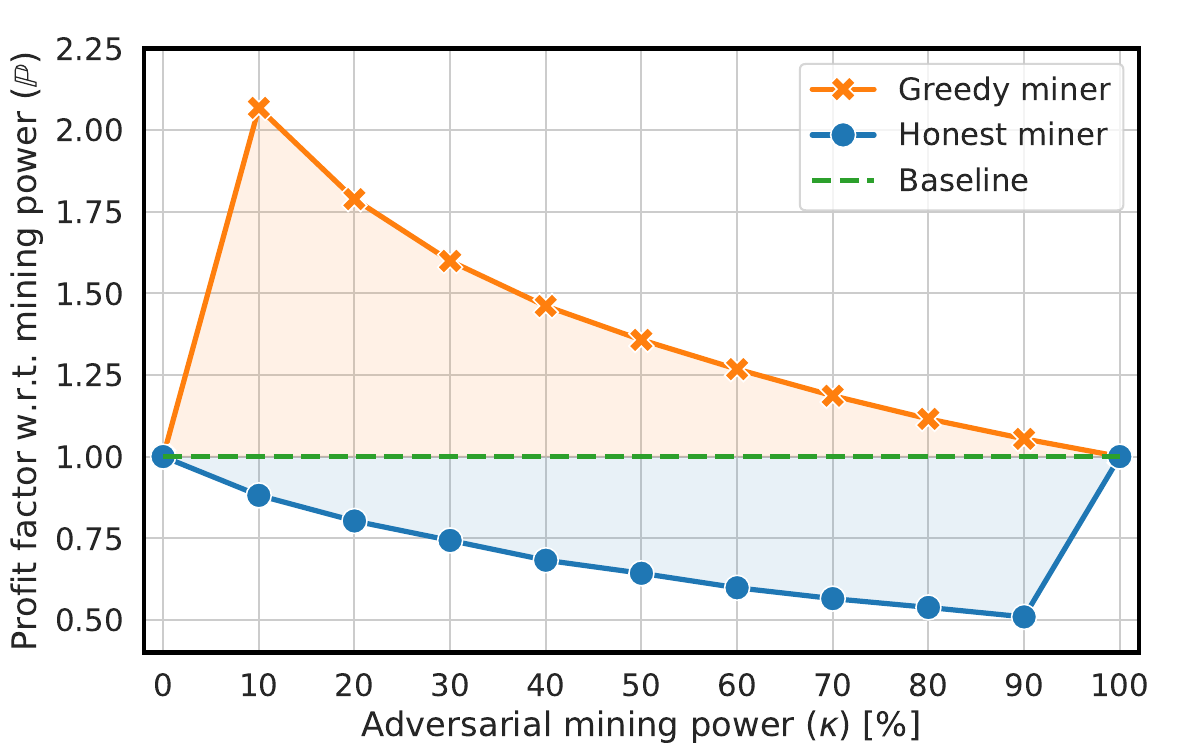}
	\caption{The profit factor $\mathbb{P}$ of an honest vs. a greedy miner with their mining powers of 100\% - \gls{adversarial-mining-power} and \gls{adversarial-mining-power}, respectively.
		The baseline shows the expected $\mathbb{P}$ of the honest miner; \(\gls{block-creation-rate}=20s\).}
	\label{fig:duel-profit}
\end{figure}




\subsubsection{Experiment II}\label{sec:experiment-2}
\paragraph{\textbf{Goal.}}
The goal of this experiment was investigation of the relative profits of a few greedy miners following the greedy strategy in contrast to honest miners following the \gls{RTS}.

\paragraph{\textbf{Methodology and Results.}}
We experimented with 10 miners, where
the number of greedy miners $\gls{cnt-malicious-miners}$ vs. the number of honest miners (i.e., 10 - \gls{cnt-malicious-miners}) was varied, and each held \(10\%\) of the total mining power.
We monitored their profit factor $\overline{\mathbb{P}}$ averaged per miner. 
%
We conducted 10 simulation runs and averaged their results (see \autoref{fig:malicious-miners-earn-more-profit}).
%
%
Alike in \autoref{sec:experiment-1}, we can see that greedy miners earned profit disproportionately higher than their mining power.
Similarly, this experiment showed that the profit advantage of greedy miners decreases as their number increases.
This is similar to increasing \gls{adversarial-mining-power} in a duel of two miners from \autoref{sec:experiment-1}; however, in contrast to it, $\overline{\mathbb{P}}$ of greedy miners is slightly lower with the same total \gls{adversarial-mining-power} of all greedy miners, while  $\overline{\mathbb{P}}$ of honest miners had not suffered with such a decrease.
%
Intuitively, this happened because multiple greedy miners increase transaction collision.
In detail, since miners are only rewarded for transactions that were first to be included in a new block, the profit for the second and later miners is lost if a duplicate transaction is included.
This observation might be seen as beneficial for the protocol as it disincentivizes multiple miners to use the greedy transaction selection strategy, which would support the sequential equilibrium from \cite{lewenberg2015inclusive}.
However, the authors of \cite{lewenberg2015inclusive} do not assume cooperating players, which is unrealistic since miners can cooperate and create the pool to avoid collisions and thus maximize their profits (resulting in a similar outcome, as in \autoref{sec:experiment-1}).

\begin{figure*}[t]
	\centering
	\begin{subfigure}[t]{0.49\textwidth}
		\centering
		\includegraphics[width=\textwidth]{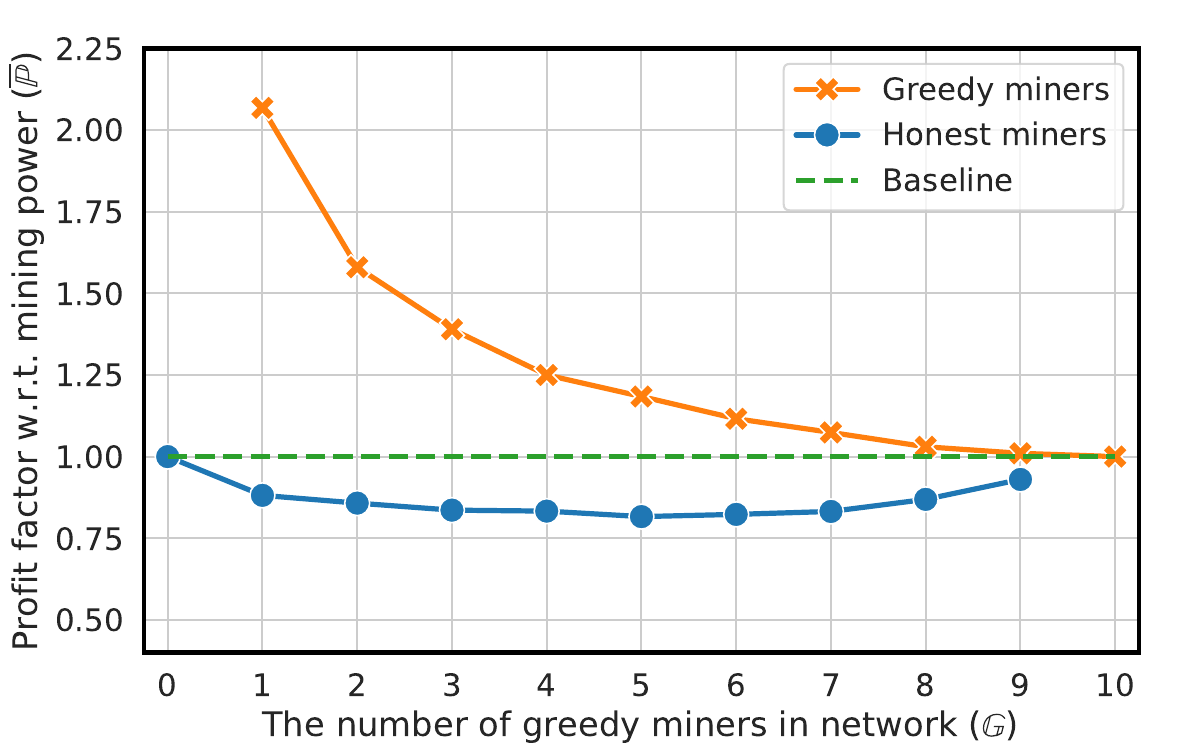}
		\caption{The averaged profit factor $\overline{\mathbb{P}}$ per honest miner and greedy miner, each with \(10\%\) of mining power.
			The number of honest miners is $10$ - \gls{cnt-malicious-miners}.
			The baseline shows the expected $\overline{ \mathbb{P}}$ of an honest miner with \(10\%\) of mining power.}
	\label{fig:malicious-miners-earn-more-profit} 
\end{subfigure}
\hfill
\begin{subfigure}[t]{0.49\textwidth}
	\centering
	\includegraphics[width=\textwidth]{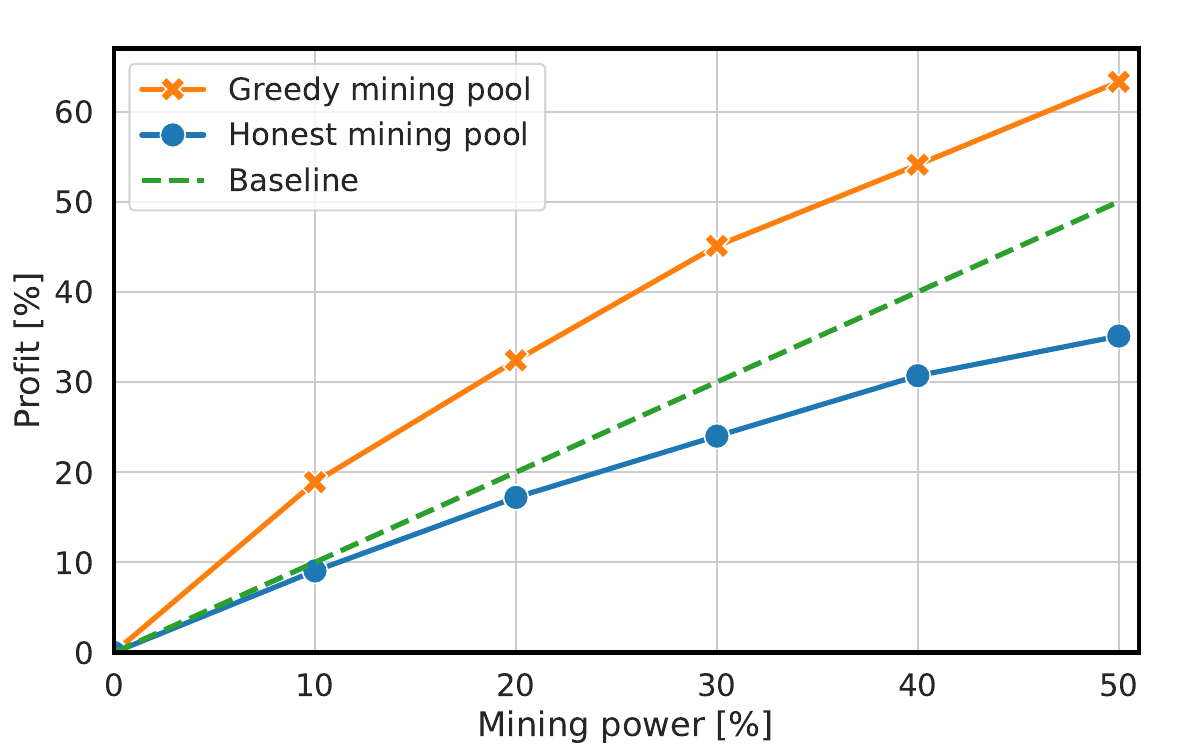}
	\caption{The relative profit of the honest pool and the greedy pool, both with equal mining power (i.e.,~\gls{adversarial-mining-power}), w.r.t. the total mining power of the network.
		The baseline shows the expected profit of the honest mining pool, and \(\gls{block-creation-rate}=20s\).}
	\label{fig:duel-miners-earn-profit}
\end{subfigure}
\\
\vspace{0.3cm}
\begin{subfigure}[b]{0.49\textwidth}
	\centering
	\includegraphics[width=\textwidth]{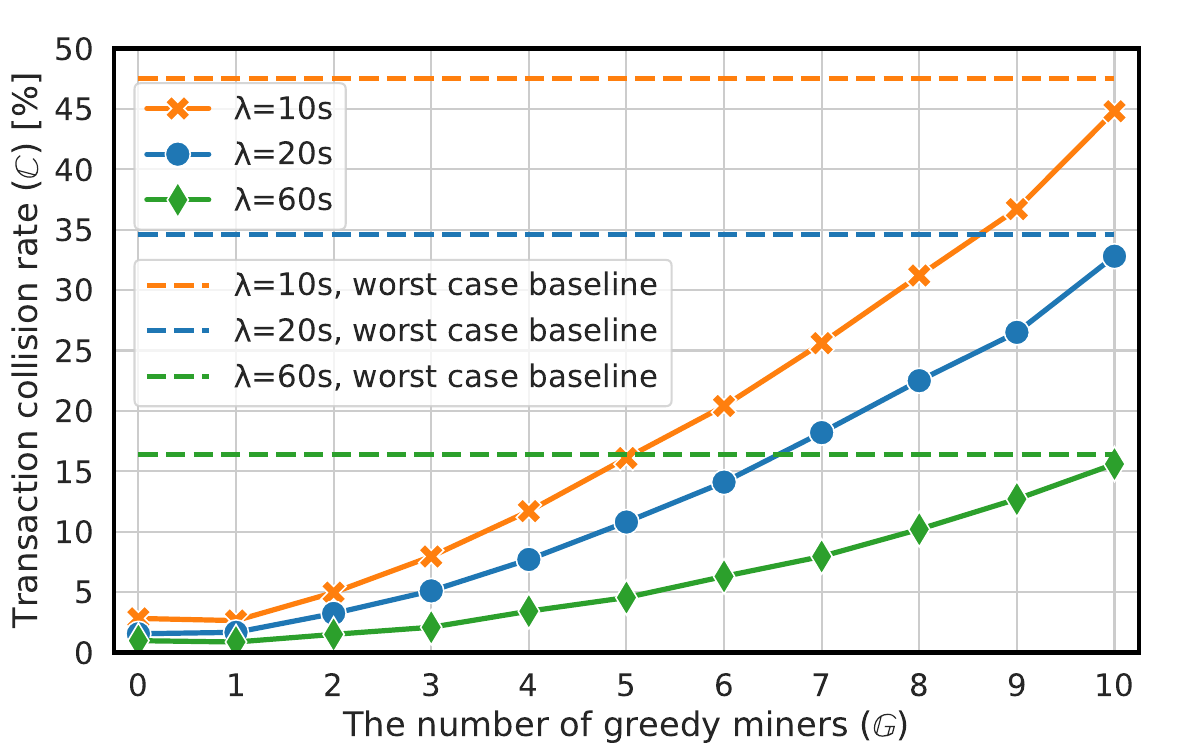}
	\caption{The transaction collision rate $\mathbb{C}$ w.r.t. \# of greedy miners \gls{cnt-malicious-miners} (each with \gls{adversarial-mining-power} = \(10\%\)), where \# of	 honest miners was $10 - \gls{cnt-malicious-miners}$ and \gls{block-creation-rate} $\in \{10s, 20s, 60s\}$.
		The worst case baseline shows $\mathbb{C}$ when all transactions are duplicates.}
	\label{fig:transaction-collision-rate-proportional-to-amount-of-malicious-miners}
\end{subfigure}
\vspace{0.4cm}

\caption{Experiment II, Experiment III (i.e., duel of mining pools) and Experiment IV (i.e., transaction collision rate \& throughput).}
\label{big-figure}
\end{figure*}


\subsubsection{Experiment III}\label{sec:exp-2.5}
\paragraph{\textbf{Goal.}}
The goal of this experiment was to investigate the relative profit of the greedy mining pool depending on its \gls{adversarial-mining-power} versus the honest mining pool with the same mining power.
It is equivalent to Scenario 5 of game theoretical analysis (see \autoref{sec:gametheory}) although there is the honest rest of the network.

\paragraph{\textbf{Methodology and Results.}}
We experimented with 10 miners, and out of them, we choose one greedy miner and one honest miner, both having equal mining power, while the remaining miners in the network were honest and possessed the rest of the network's mining power.
Thus, we emulated a duel of the greedy pool versus the honest pool.
We conducted 10 simulation runs and averaged their results (see \autoref{fig:duel-miners-earn-profit}).
The results demonstrate that the greedy pool's relative earned profit grows proportionally to \gls{adversarial-mining-power} as compared to the honest pool with equal mining power, supporting our conclusions from \autoref{sec:gametheory}.


\subsubsection{Experiment IV}\label{sec:exp-3}
\paragraph{\textbf{Goal.}}
The goal of this experiment was to investigate the transaction collision rate under the occurrence of greedy miners who selected transactions using the greedy strategy.

\paragraph{\textbf{Methodology and Results.}}
In contrast to the previous experiments, 
we considered three different values of block creation time (\gls{block-creation-rate} $ \in \{10s, 20s, 60s\}$).
We experimented with 10 miners, where
the number of greedy miners $\gls{cnt-malicious-miners}$ vs. the number of honest miners (i.e., 10 - \gls{cnt-malicious-miners}) was varied, and each held \(10\%\) of the total mining power.
For all configurations, we computed the transaction collision rate (see \autoref{fig:transaction-collision-rate-proportional-to-amount-of-malicious-miners}).
%
We can see that the increase of $\gls{cnt-malicious-miners}$ causes the increase in the transaction collision rate.
Note that lower \gls{block-creation-rate} has a higher impact on the collision rate, and DAG protocols are designed with the intention to have small \gls{block-creation-rate} (i.e., even smaller than \gls{network-propagation-delay}).
Consequently, the increased collision rate affected the overall throughput of the network (which is complementary to \autoref{fig:transaction-collision-rate-proportional-to-amount-of-malicious-miners}).

\subsection{Experiments with Complex Topology}\label{sec:complex-topology-settings}
Additionally, we conducted more than 500 experiments in complex topology with \(7592\) nodes 
in various configurations (such as different connectivity and positions of greedy miners in the topology).
We emulated weakly and strongly connected miners by setting a different node degree -- we utilized a node degree distribution from \cite{mivsic2019modeling} and projected it into our network by setting the weakly connected edge and a highly connected core.
The results of these experiments confirm the conclusions from the game theoretic analysis (see \autoref{sec:game-theory:model-analysis}) as well as they match the experiments with the simple topology (see \autoref{sec:evaluation-dags}). 
The details of these experiments are presented in the extended version of our paper \cite{perevsini2023incentive}, which is not yet published.

\begin{figure}[t]
	\centering
	\begin{subfigure}[t]{0.48\linewidth}
		\centering
		\includegraphics[width=\linewidth]{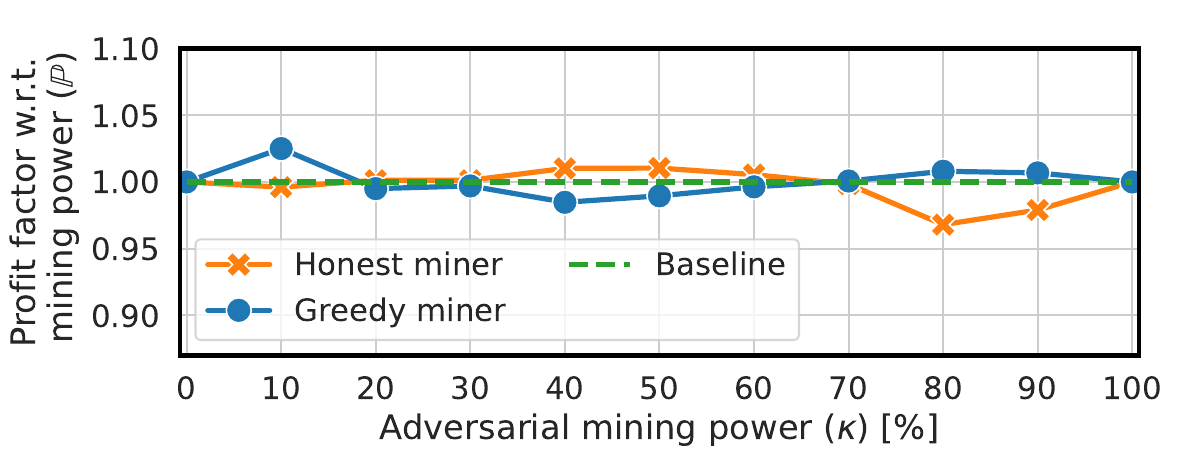}
		\caption{The profit factor $\mathbb{P}$ of a honest vs. a greedy miner with the mining power of 100\% - \gls{adversarial-mining-power} and \gls{adversarial-mining-power}, respectively.}
		\label{fig:flat-fees}
	\end{subfigure}
	\hfill
	\begin{subfigure}[t]{0.48\linewidth}
		\centering
		\includegraphics[width=\linewidth]{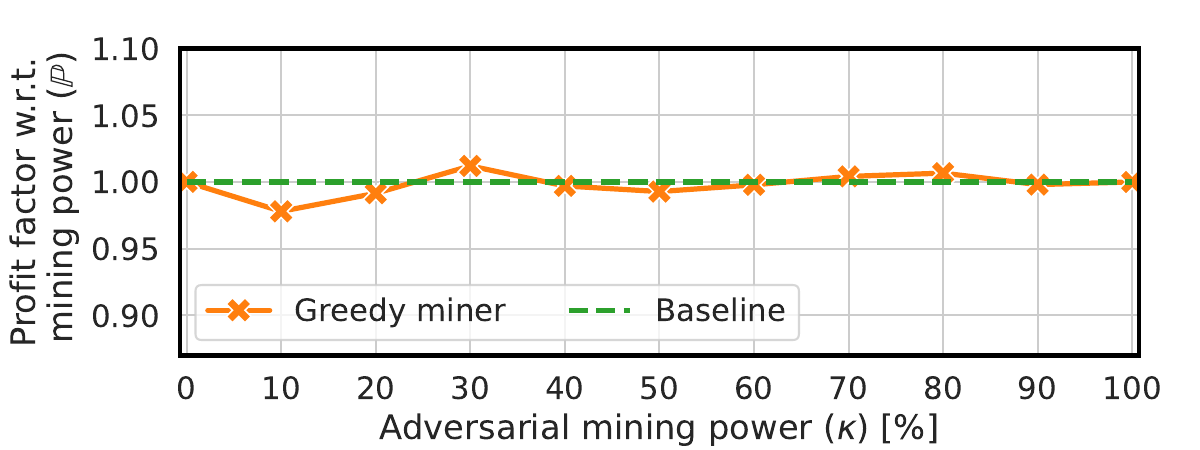}
		\caption{The averaged profit factor $\overline{\mathbb{P}}$ of a greedy miner with \gls{adversarial-mining-power}.
			The rest of the network consisted of $9$ honest miners, each equipped with $\frac{100\% - \gls{adversarial-mining-power}}{9} \%$ of mining power.}
		\label{fig:flat-fees2}
	\end{subfigure}
	\caption{Profit factors of honest and greedy miners. The baseline shows the expected $\mathbb{P}$ of the honest miner; \(\gls{block-creation-rate}=20s\).}
\end{figure}

\subsection{Countermeasures}\label{sec:countermeasures}
Experiments supported Hypothesis \autoref{hypo:problem-definition}.
The main problem is \textbf{not sufficiently enforcing the \gls{RTS}}, i.e.,  verifying that transaction selection was indeed random at the protocol level.
Therefore, using the \gls{RTS} in the \textsc{DAG-Protocol} that does not enforce the interpretation of randomness will never avoid the occurrence of attackers from greedy transaction selection that increases their individual (or pooled) profits.

%

\paragraph{\textbf{Enforcing Interpretation of the Randomness.}}
One countermeasure how to avoid arbitrary interpretation of the randomness in the \gls{RTS} is to enforce it by the consensus protocol.
An example of a DAG-based design using this approach is Sycomore \cite{Anceaume2018Sycomore}, which utilizes the prefix of cryptographically-secure hashes of transactions as the criteria for extending a particular chain in \gls{dag}.
The PoW mining in Sycomore is further equipped with the unpredictability of a chain that the miner of a new block extends, avoiding the concentration of the mining power on ``rich'' chains.
Note that transactions are evenly spread across all chains of the DAG, which happens because prefixes of transaction hashes respect the uniform distribution -- transactions are created by clients (different from miners) who have no incentives for biasing their transactions.




\paragraph{\textbf{Fixed Transaction Fees.}}
Another option how to make the \gls{RTS} viable is to employ fixed fees for all transactions as a blockchain network-adjusted parameter.
In the case of the full block capacity utilization within some period, the fixed fee parameter would be increased and vice versa in the case of not sufficiently utilized block capacity.
In contrast to the previous countermeasure, this mechanism does not enforce the interpretation of randomness while at the same time does not make incentives for greedy miners to  follow other than the \gls{RTS} strategy.
Therefore, miners using other than the \gls{RTS} would not earn extra profits -- we demonstrate it in \autoref{fig:flat-fees} and \autoref{fig:flat-fees2}, considering one honest vs. one greedy miner and one greedy vs. 9 honest miners, respectively.
Note that small deviations from the baseline are caused by the inherent simulation error that is present in the original simulator that we extended.
On the other hand, greedy miners may still cause increased transaction collision rate, and thus decreased throughput.
Therefore, we consider the fixed transaction fee option weaker than the previous one.

\section{Undercutting Attacks}\label{sec:cons-undercut}
In transaction fee-based regime schemes, a few problems have emerged, which we can observe even nowadays in Bitcoin protocol~\cite{carlsten2016instability}.
We have selected three main problems and aim to lower their impact for protocols relying on transaction fees only. 
In detail, we focus on the following problems:

\begin{enumerate}
	\item \textbf{Undercutting attack.}
	In this attack (see \autoref{fig:undercutting}), a malicious miner attempts to obtain transaction fees by re-mining a top block of the longest chain, and thus motivates other miners to mine on top of her block~\cite{carlsten2016instability}.
	In detail, consider a situation, where an honest miner mines a block containing transactions with substantially higher transaction fees than is usual. 
	The malicious miner can fork this block while he leaves some portion of the ``generous'' transactions un-mined. 
	These transactions motivate other miners to mine on top of the attacker's chain, and thus undercut the original block. 
	Such a malicious behavior might result in higher orphan rate, unreliability of the system, and even double spending.
	%
	
	\item \textbf{The mining gap.}
	As discussed in~\cite{carlsten2016instability}, the problem of mining gap arises once the mempool does not contain enough transaction fees to motivate miners in mining.
	Suppose a miner succeeds at mining a new block shortly after the previous block was created, which can happen due to well known exponential distribution of block creation time in PoW blockchains.
	Therefore, the miner might not receive enough rewards to cover his expenses because most of the transactions from the mempool were included in the previous block, while new transactions might not have yet arrived or have small fees.
	Consequently, the miners are motivated to postpone mining until the mempool is reasonably filled with enough transactions (and their fees).
	The mining gap was also analyzed by the simulation in the work of Tsabary and Eyal~\cite{tsabary2018gap}, who further demonstrated that mining gap incentivizes larger mining coalitions (i.e., mining pools), negatively impacting decentralization.
	
	\item \textbf{Varying transaction fees over time.}
	In the transaction-fee-based regime, any fluctuation in transaction fees directly affects the miners' revenue.
	High fluctuation of transaction fees during certain time frames, e.g., in a span of a day or a week~\cite{b5}, can lead to an undesirable lack of predictability in miners' rewards and indirectly affect the security of the underlying protocol.
	
	
\end{enumerate}

\begin{figure}[t]
	\centering
	\includegraphics[width=0.65\textwidth]{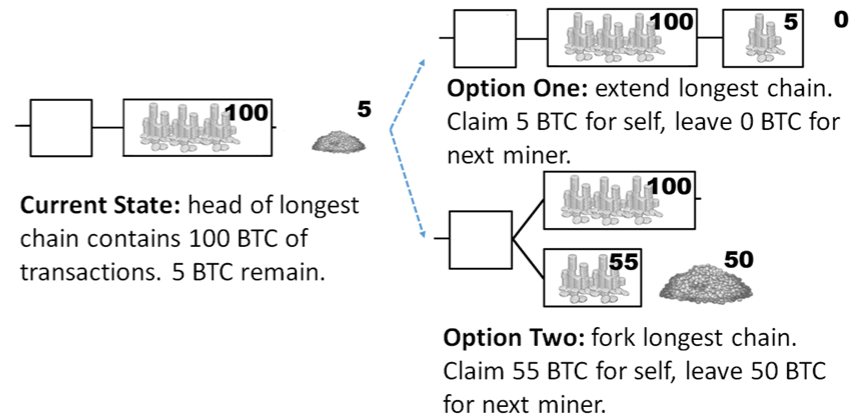}
	\caption{The undercutting attack, according to Carlsten et al.~\cite{carlsten2016instability}. }
	\label{fig:undercutting}
\end{figure}

\subsection{Overview of Proposed Approach}


%
We propose a solution that collects a percentage of transaction fees in a native cryptocurrency from the mined blocks into one or multiple fee-redistribution smart contracts (i.e., $\mathcal{FRSC}$s).
Miners of the blocks who must contribute to these contracts are at the same time rewarded from them, while the received reward approximates a moving average of the incoming transaction fees across the fixed sliding window of the blocks.
The fraction of transaction fees (i.e., $\mathbb{C}$) from the mined block is sent to the $\mathcal{FRSC}$ and the remaining fraction of transaction fees (i.e., $\mathbb{M}$) is directly assigned to the miner, such that $ \mathbb{C} + \mathbb{M} = 1.$
The role of $\mathbb{M}$ is to incentivize the miners in prioritization of the transactions with the higher fees while the role of $\mathbb{C}$ is to mitigate the problems of undercutting attacks and the mining gap.
Our solution can be deployed with hard-fork and imposes only negligible performance overhead.

\medskip
We depict the overview of our approach in \autoref{fig:overview}, and it consists of the following steps:
\begin{enumerate}
	
	\item Using $\mathcal{FRSC}$, the miner calculates the reward for the next block $B$ (i.e., $nextClaim\-(\mathcal{FRSC})$ -- see \autoref{eq:nextClaim}) that will be payed by $\mathcal{FRSC}$ to the miner of that block.
	
	\item The miner mines the block $B$ using the selected set of the highest fee transactions from her mempool.
	
	\item The mined block $B$ directly awards a certain fraction of the transaction fees (i.e., $B.fees ~*~ \mathbb{M}$) to the miner and the remaining part (i.e., $B.fees ~*~ \mathbb{C}$) to $\mathcal{FRSC}$.
	
	\item The miner obtains $nextClaim$ from $\mathcal{FRSC}$.
	
\end{enumerate}
Our approach is embedded into the consensus protocol, and therefore consensus nodes are obliged to respect it in order to ensure that their blocks are valid.
It can be implemented with standard smart contracts of the blockchain platform or within the native code of the consensus protocol.

\begin{figure}[t]
	\centering
	\includegraphics[width=0.65\textwidth]{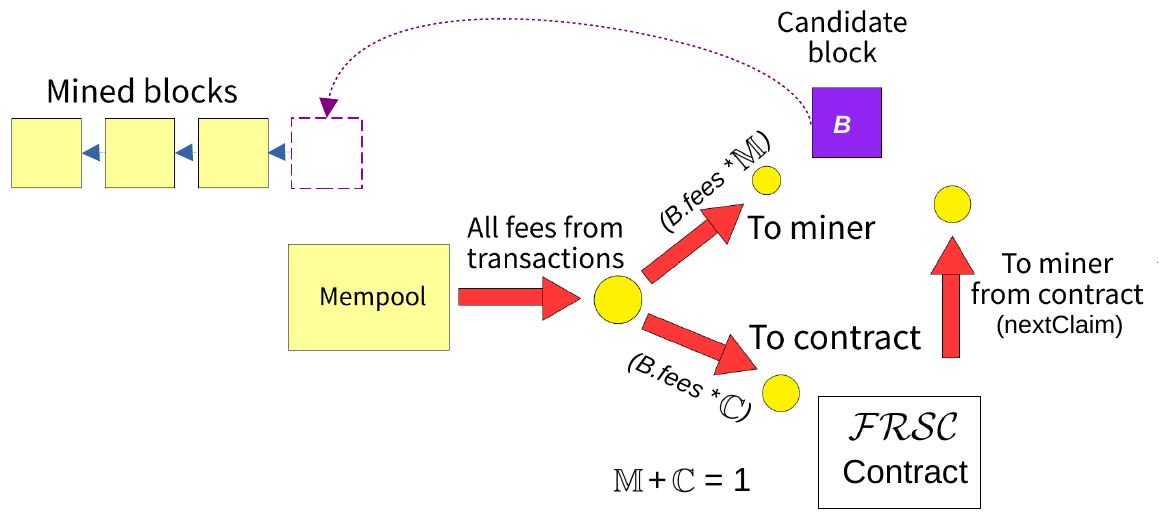}
	\caption{Overview of our solution. }
	\label{fig:overview-undercut}
\end{figure}

\subsection{Details of Fee-Redistribution Smart Contracts}\label{sub:redistrib}
We define the fee-redistribution smart contract as a tuple 
\begin{eqnarray}
	\mathcal{FRSC} = (\nu, \lambda, \rho), 
\end{eqnarray}
where
$\nu$ is the accumulated amount of tokens in the contract, $\lambda$ denotes the size of $\mathcal{FRSC}$'s sliding window in terms of the number of preceding blocks that contributed to $\nu$,
and $\rho$ is the parameter defining the ratio for redistribution of incoming transaction fees among multiple contracts (if there are multiple $\mathcal{FRSC}$s), while the sum of $\rho$ across all $\mathcal{FRSC}$s must be equal to 1:
\begin{eqnarray}\label{eq:frsc-redistrib-ratios}
	\sum_{x ~\in~ \mathcal{FRSC}s} x.\rho &=& 1.
\end{eqnarray}
In contrast to a single $\mathcal{FRSC}$, we envision multiple $\mathcal{FRSC}$s to enable better adjustment of compensation to miners during periods of higher transaction fee fluctuations or in an unpredictable environment (we show this in \autoref{sec:exp3}).


\medskip
We denote the state of $\mathcal{FRSC}$s at the blockchain height $H$ as $\mathcal{FRSC}_{[H]}$.
%
Then, we determine the reward from $\mathcal{FRSC}_{[H]} \in \mathcal{FRSC}s_{[H]}$ for the miner of the next block with height $H+1$ as follows:
\begin{equation}
	\partial Claim_{[H+1]}^{\mathcal{FRSC}_{[H]}} = \frac{\mathcal{FRSC}_{[H]}.\nu}{\mathcal{FRSC}_{[H]}.\lambda},
\end{equation}
while the reward obtained from all $\mathcal{FRSC}$s is
\begin{equation} \label{eq:nextClaim}	
	nextClaim_{[H+1]} = \sum_{\mathcal{X}_{[H]} ~\in~ \mathcal{FRSC}s_{[H]}^{}} \partial Claim_{[H+1]}^{\mathcal{X}_{[H]}}.
\end{equation}


\noindent
Then, the total reward of the miner who mined the block $B_{[H+1]}$ with all transaction fees $B_{[H+1]}.fees$ is
\begin{equation} \label{eq:rewardtotal}
	rewardT_{[H+1]} = nextClaim_{[H+1]} + \mathbb{M} * B_{[H+1]}.fees.
\end{equation}
The new state of contracts at the height $H + 1$ is 
\begin{eqnarray}
	\mathcal{FRSC}s_{[H+1]} = \{\mathcal{X}_{[H+1]}(\nu, \lambda, \rho)\ ~|~
\end{eqnarray}
\begin{eqnarray}
	\lambda &=& \mathcal{X}_{[H]}.\lambda,\\
	\rho &=& \mathcal{X}_{[H]}.\rho,\\
	\nu &=& \mathcal{X}_{[H]}.\nu - \partial Claim_{[H+1]} +  deposit * \rho,\\
	deposit &=& B_{[H+1]}.fees * \mathbb{C}\},
\end{eqnarray}
where $deposit$ represents the fraction $\mathbb{C}$ of all transaction fees from the block $B_{[H+1]}$ that are deposited across all $\mathcal{FRSC}$s in ratios respecting \autoref{eq:frsc-redistrib-ratios}.

\subsubsection{Example}
We consider Bitcoin~\cite{nakamoto2008bitcoin} with the current height of the blockchain $H$. 
We utilize only a single $\mathcal{FRSC}$:
\[
\mathcal{FRSC}_{[H]} = (2016, 2016, 1).
\]
We set $\mathbb{M} = 0.4$ and $\mathbb{C} = 0.6$, which means a miner directly obtains 40\% of the $B_{[H+1]}.fees$ and $\mathcal{FRSC}$ obtains 60\%.
\noindent
Next, we compute the reward from $\mathcal{FRSC}$ obtained by the miner of the block with height $H+1$ as 
\begin{equation*}
	\partial Claim_{[H+1]} = \frac{\mathcal{FRSC}_{[H]}.\nu}{\mathcal{FRSC}_{[H]}.\lambda} = \frac{2016}{2016} = 1~\text{BTC},
\end{equation*}
resulting into 
\begin{equation*}
	nextClaim_{[H+1]} = \partial Claim_{[H+1]}~=~1~\text{BTC}.
\end{equation*}

\noindent
Further, we assume that the total reward collected from transactions in the block with height $H+1$ is $B_{[H+1]}.fees = 2$ BTC.
Hence, the total reward obtained by the miner of the block $B_{[H+1]}$ is
\begin{eqnarray*}
	rewardT_{[H+1]} &=& nextClaim_{[H+1]}  + \mathbb{M} * B_{[H+1]}.fees \\
	&=& 1 + 0.4 * 2 ~=~ 1.8 ~\text{BTC},
\end{eqnarray*}
and the contribution of transaction fees from $B_{[H+1]}$ to the $\mathcal{FRSC}$ is 
\[
deposit = B_{[H+1]}.fees * \mathbb{C} = 1.2 ~\text{BTC}.
\]
Therefore, the value of $\nu$ in $\mathcal{FRSC}$ is updated at height H~+~1 as follows:
\begin{eqnarray*}
	v_{[H+1]} &=& \mathcal{FRSC}_{[H]}.\nu - nextClaim_{[H+1]} + deposit \\
	&=& 2016 - 1 + 1.2  ~\text{BTC} ~=~ 2016.2 ~\text{BTC}. 
\end{eqnarray*}


%

\vspace{-0.2cm}
\subsubsection*{\textbf{Traditional Way in Tx-Fee Regime}}
In traditional blockchains, $rewardT_{[H+1]}$ would be equal to the sum of all transaction fees $B_{[H+1]}.fees$ (i.e., $2$ BTC); hence, using $\mathbb{M} = 1$. 
In our approach, $rewardT_{[H+1]}$ is equal to the sum of all transaction fees in the block $B_{[H+1]}$, if:
\begin{eqnarray}
	B_{[H+1]}.fees = \frac{nextClaim_{[H+1]}}{\mathbb{C}}.
\end{eqnarray}
In our example, a miner can mine the block $B_{[H+1]}$ while obtaining the same total reward as the sum of all transaction fees in the block if the transactions carry 1.66 BTC in fees:
\begin{equation*}
	B_{[H+1]}.fees = \frac{1}{0.6} = 1.66 ~\text{BTC}.
\end{equation*}

\subsubsection{Initial Setup of $\mathcal{FRSC}$s Contracts}
To enable an even start, we propose to initiate $\mathcal{FRSC}$s of our approach by a genesis value.
The following formula calculates the genesis values per $\mathcal{FRSC}$ and initializes starting state of $\mathcal{FRSC}s_{[0]}$:
\begin{equation} \label{eq:setup}
	\{\mathcal{FRSC}_{[0]}^{x}(\nu, \lambda, \rho)\ |\ \nu = \overline{fees} * \mathbb{C} * \rho * \lambda\},
\end{equation}
where $\overline{fees}$ is the expected average of incoming fees.

\subsection{Evaluation}
\label{sec:evaluation}

We base on Bitcoin Mining Simulator~\cite{bitcoin_mining_simulator:kalodner}, introduced in~\cite{carlsten2016instability}, which we modified for our purposes.
We have created a configuration file to simulate custom scenarios of incoming transactions instead of the accumulated fees in the original design~\cite{carlsten2016instability}.
We added an option to switch simulation into a mode with a full mempool, and thus bound the total fees (and consequently the total number of transactions) that can be earned within a block -- 
this mostly relates to blocks whose mining takes longer time than the average time to mine a block.\footnote{Note that the original simulator~\cite{carlsten2016instability} assumes that the number of transactions (and thus the total fees) in the block is constrained only by the duration of a time required to mine the block, which was also criticized in~\cite{gong2022towards}.}
%
%
Next, we moved several parameters to arguments of the simulator to eliminate the need for frequent recompilation of the program, and therefore simplified the process of running various experiments with the simulator.
Finally, we integrated our $\mathcal{FRSC}$-based solution into the simulator.
$\mathcal{FRSC}$s are initiated from a corresponding configuration file.
The source code of our modified simulator is available at \url{https://github.com/The-Huginn/mining_simulator}.

\paragraph{Experiments.}
We evaluated our proof-of-concept implementation of $\mathcal{FRSC}$s on a custom long-term scenario designed to demonstrate significant changes in the total transaction fees in the mempool evolving across the time.
This scenario is depicted in the resulting graphs of most of our experiments, represented by the ``\textit{Fees in mempool}'' series -- see \autoref{sec:exp1} and \autoref{sec:exp2}.

\begin{figure*}[!ht]
	\centering
	\begin{subfigure}{0.45\textwidth}
		\includegraphics[width=\textwidth]{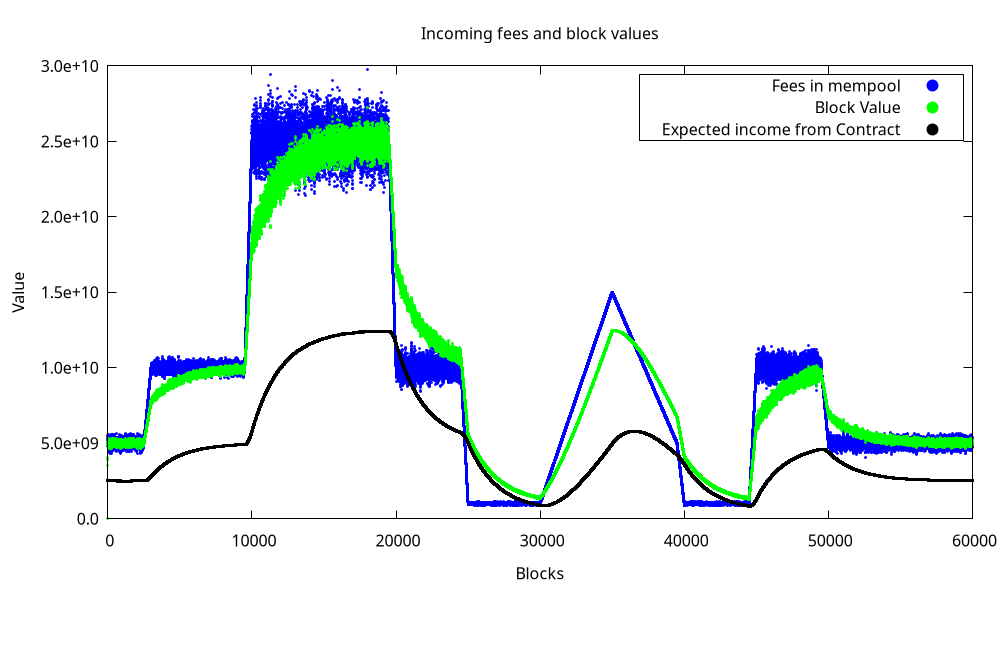}
		\vspace{-0.7cm}
		\caption{$\mathcal{FRSC}^{1}$ and $\mathbb{C} = 0.5$.}

	\end{subfigure}
	\begin{subfigure}{0.45\textwidth}
		\includegraphics[width=\textwidth]{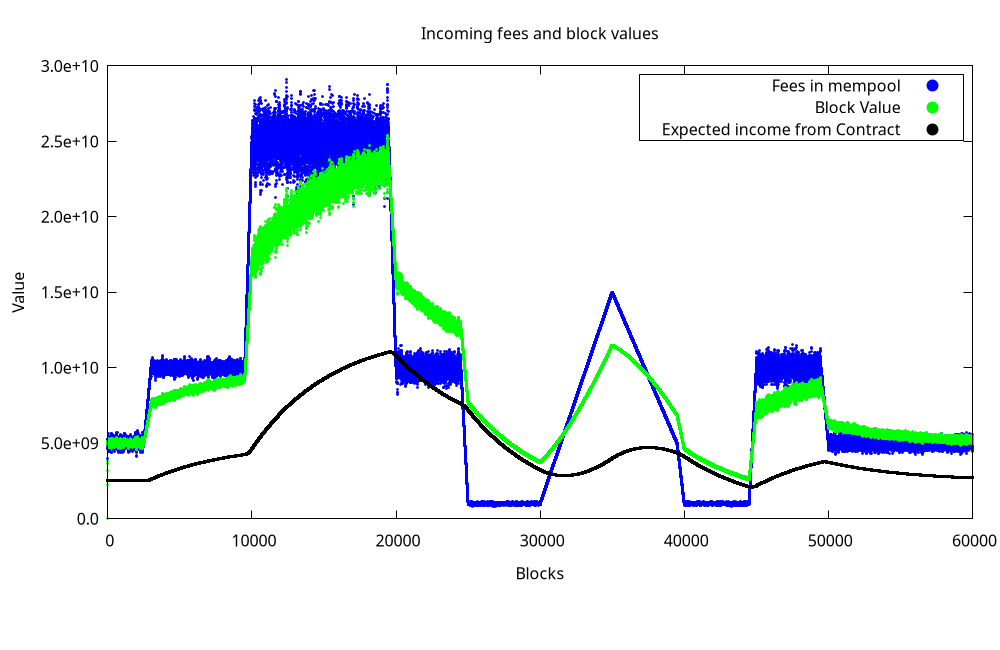}
		\vspace{-0.7cm}
		\caption{$\mathcal{FRSC}^{2}$ and $\mathbb{C} = 0.5$.}

	\end{subfigure}
	\vspace{0.4cm}
	
	\begin{subfigure}{0.45\textwidth}
		\includegraphics[width=\textwidth]{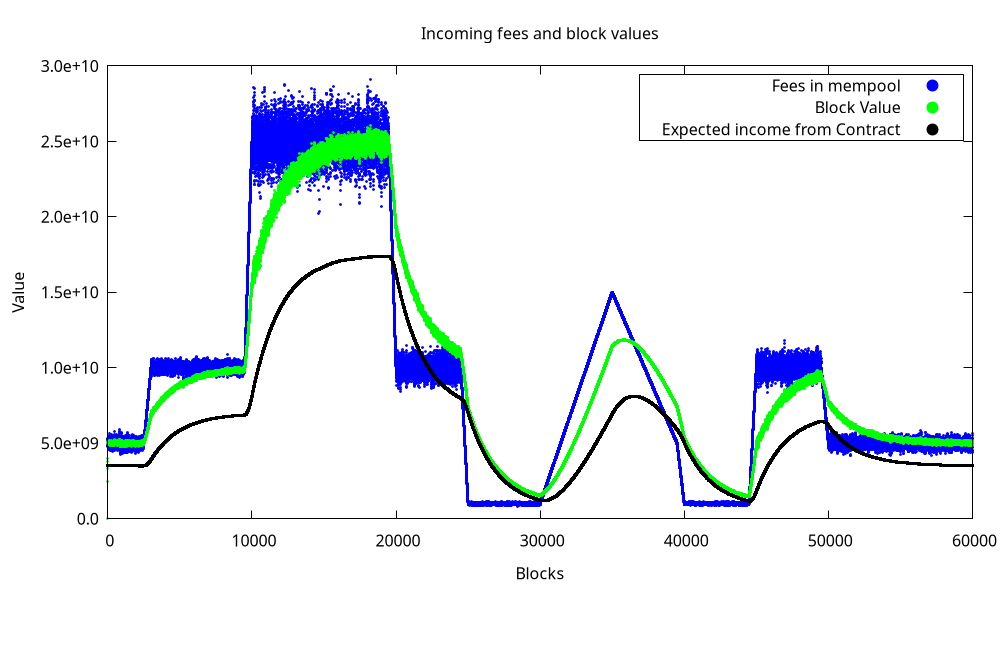}
		\vspace{-0.7cm}
		\caption{$\mathcal{FRSC}^{1}$ and $\mathbb{C} = 0.7$.}

	\end{subfigure}
	\begin{subfigure}{0.45\textwidth}
		\includegraphics[width=\textwidth]{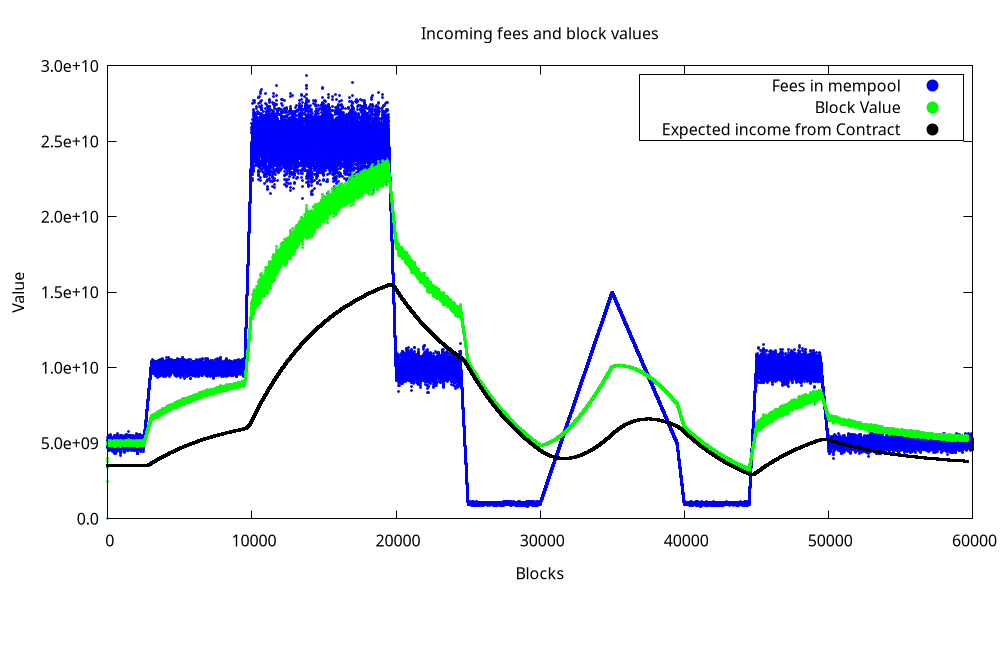}
		\vspace{-0.7cm}
		\caption{$\mathcal{FRSC}^{2}$ and $\mathbb{C} = 0.7$.}

	\end{subfigure}
	\vspace{0.4cm}
	
	\begin{subfigure}{0.45\textwidth}
		\includegraphics[width=\textwidth]{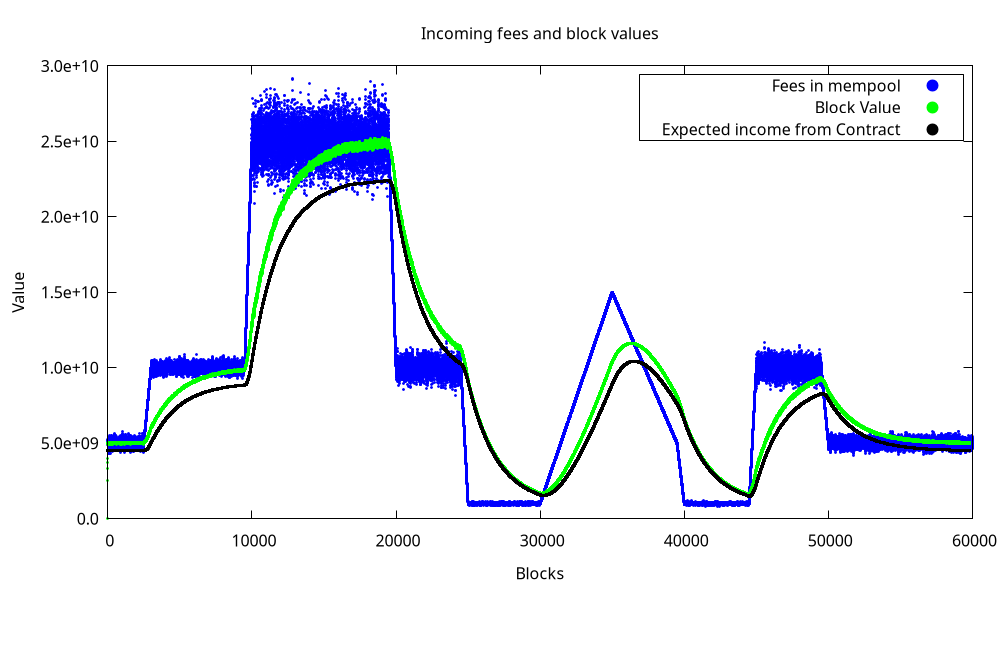}
		\vspace{-0.7cm}
		\caption{$\mathcal{FRSC}^{1}$ and $\mathbb{C} = 0.9$.}

	\end{subfigure}
	\begin{subfigure}{0.45\textwidth}
		\includegraphics[width=\textwidth]{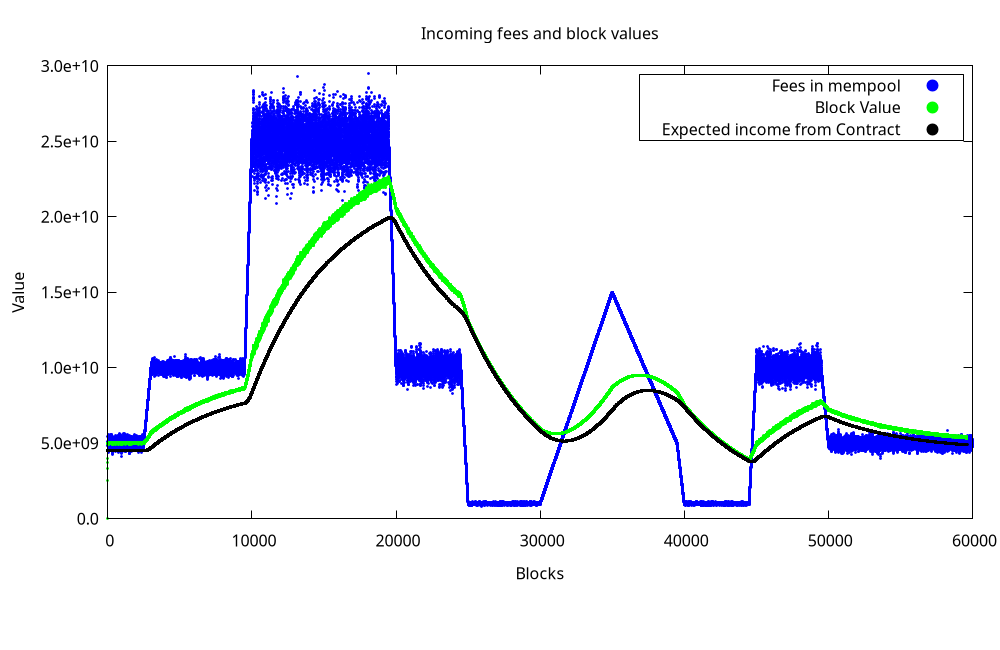}
		\vspace{-0.7cm}
		\caption{$\mathcal{FRSC}^{2}$ and $\mathbb{C} = 0.9$.}

	\end{subfigure}
	\vspace{0.4cm}
	\caption{Experiment I investigating various $\mathbb{C}s$ and $\lambda$s of a single $\mathcal{FRSC}$, where
		$\mathcal{FRSC}^{1}.\lambda = 2016$ and 		$\mathcal{FRSC}^{2}.\lambda = 5600$.
		\textit{Fees in mempool} show the total value of fees in the mined block (i.e., representing the baseline).
		\textit{Block Value} is the reward a miner received in block $B$ as a sum of the fees he obtained directly (i.e. $\mathbb{M} * B.fees$) and the reward he got from $\mathcal{FRSC}$ (i.e., $nextClaim_{[H]}$).
		\textit{Expected income from Contract} represents the reward of a miner obtained from $\mathcal{FRSC}$ (i.e., $nextClaim_{[H]}$).}\label{fig:50tocontract}
\end{figure*}

We experimented with different parameters and investigated how they influenced the total rewards of miners coming from $\mathcal{FRSC}$s versus the baseline without our solution.
Mainly, these included a setting of $\mathbb{C}$ as well as different lengths $\lambda$ of $\mathcal{FRSC}$s.
For demonstration purposes, we used the value of transaction fees per block equal to 50 BTC, the same as Carlsten et al.~\cite{carlsten2016instability} used.
Across all our experiments but the last one (i.e., \autoref{sec:exp5}), we enabled the full mempool option to ensure more realistic conditions.

\begin{figure}[t]

	\centering
	\begin{subfigure}{0.32\textwidth}
		\includegraphics[width=\textwidth]{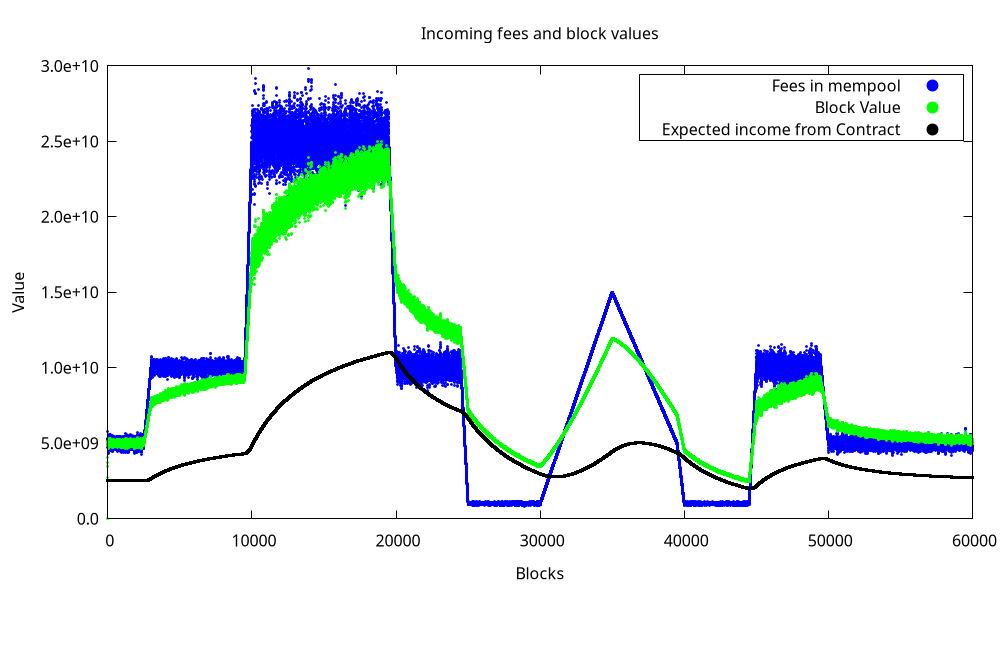}

		\caption{Scenario with 4 $\mathcal{FRSC}$s,\\ $\mathbb{C} = 0.5$.}
	\end{subfigure}
	\begin{subfigure}{0.32\textwidth}
		\includegraphics[width=\textwidth]{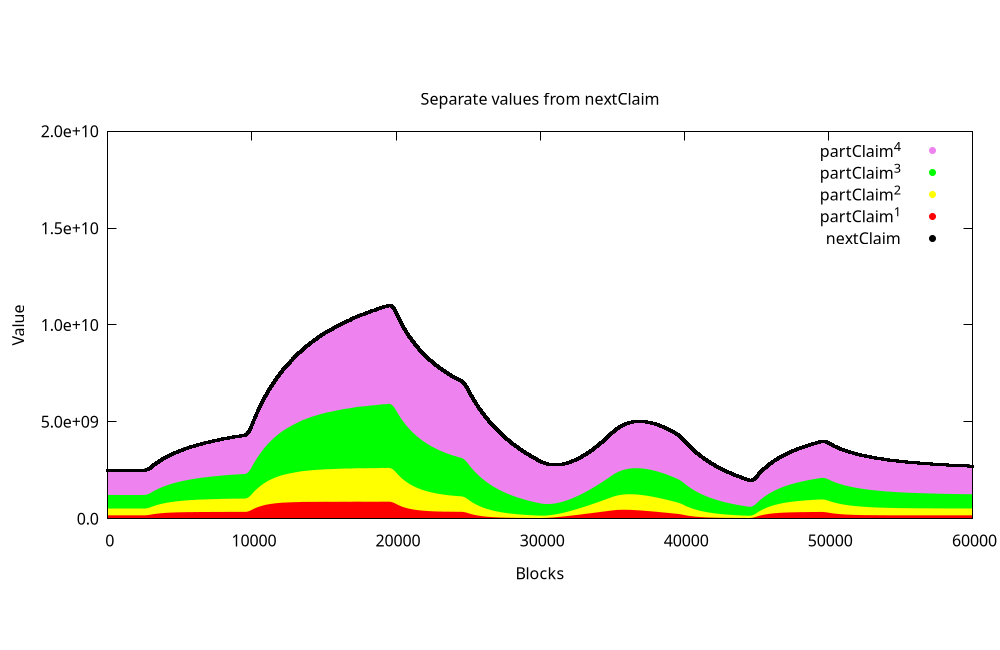}

		\caption{$\partial Claim$s and $nextClaim$, \\ $\mathbb{C} = 0.5$.}
	\end{subfigure}
	\begin{subfigure}{0.32\textwidth}
		\includegraphics[width=\textwidth]{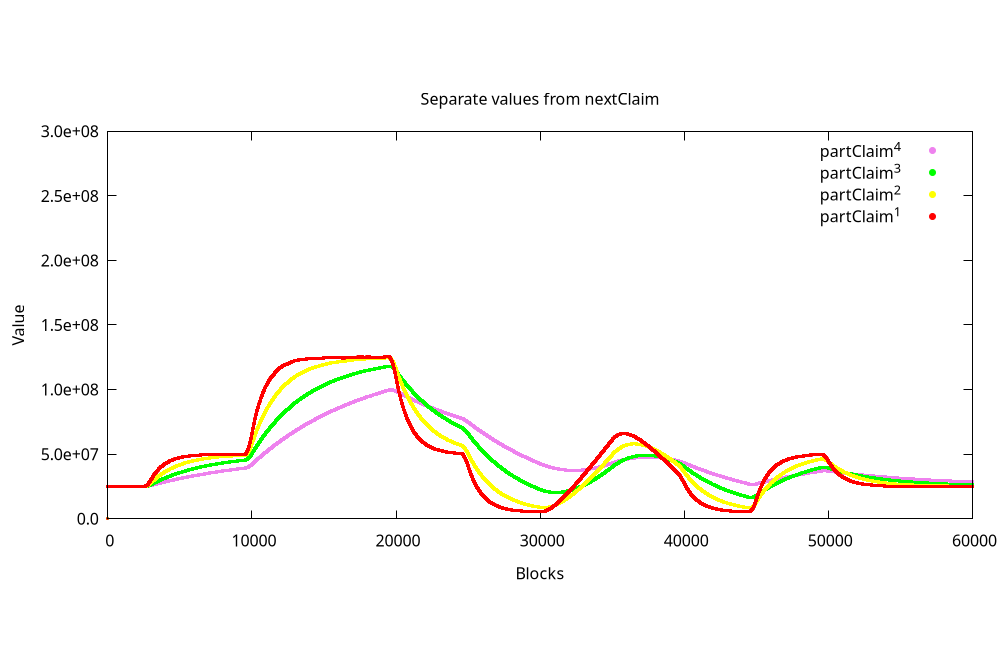}

		\caption{$\partial Claim$s normalized by $\rho$, \\ $\mathbb{C} = 0.5$.}
	\end{subfigure}
	\\
	\vspace{0.4cm}
	
	\begin{subfigure}{0.32\textwidth}
		\includegraphics[width=\textwidth]{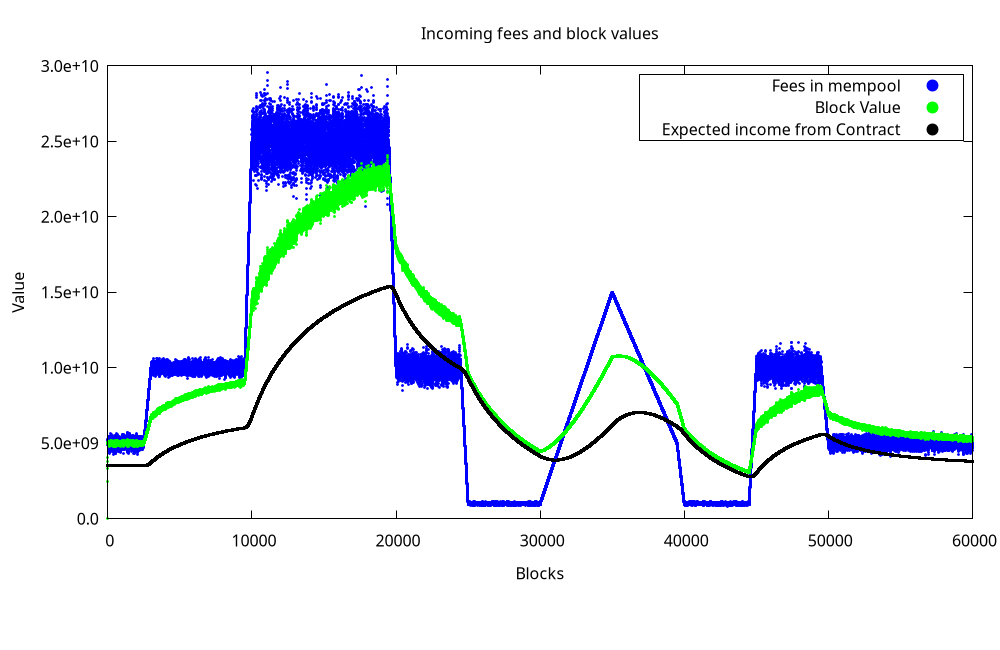}

		\caption{Scenario with 4 $\mathcal{FRSC}$s,\\ $\mathbb{C} = 0.7$.}
	\end{subfigure}
	\begin{subfigure}{0.32\textwidth}
		\includegraphics[width=\textwidth]{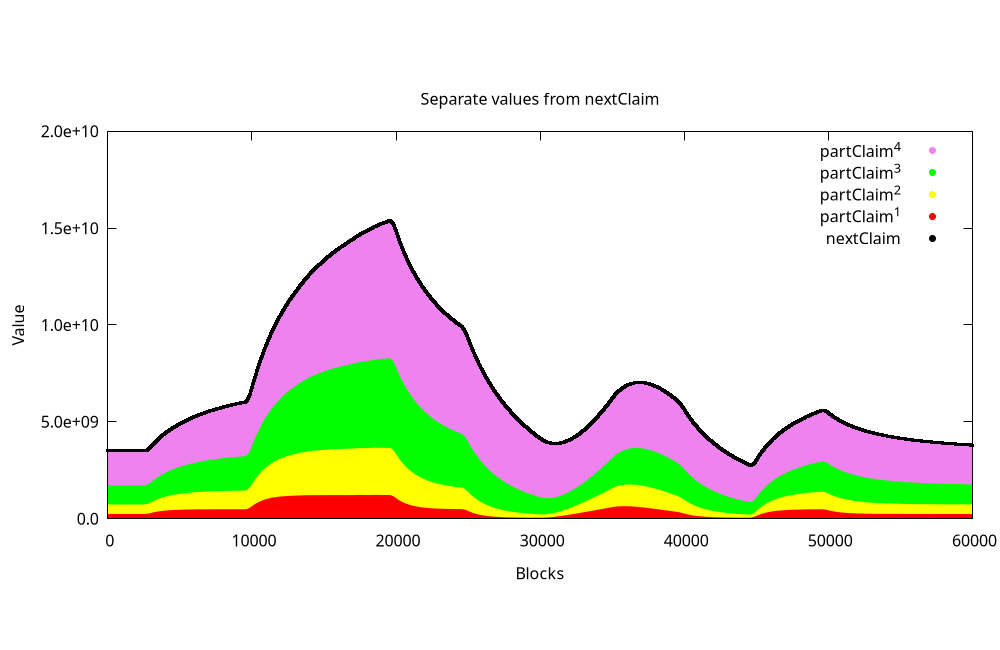}

		\caption{$\partial Claim$s and $nextClaim$, \\ $\mathbb{C} = 0.7$.}
	\end{subfigure}
	\begin{subfigure}{0.32\textwidth}
		\includegraphics[width=\textwidth]{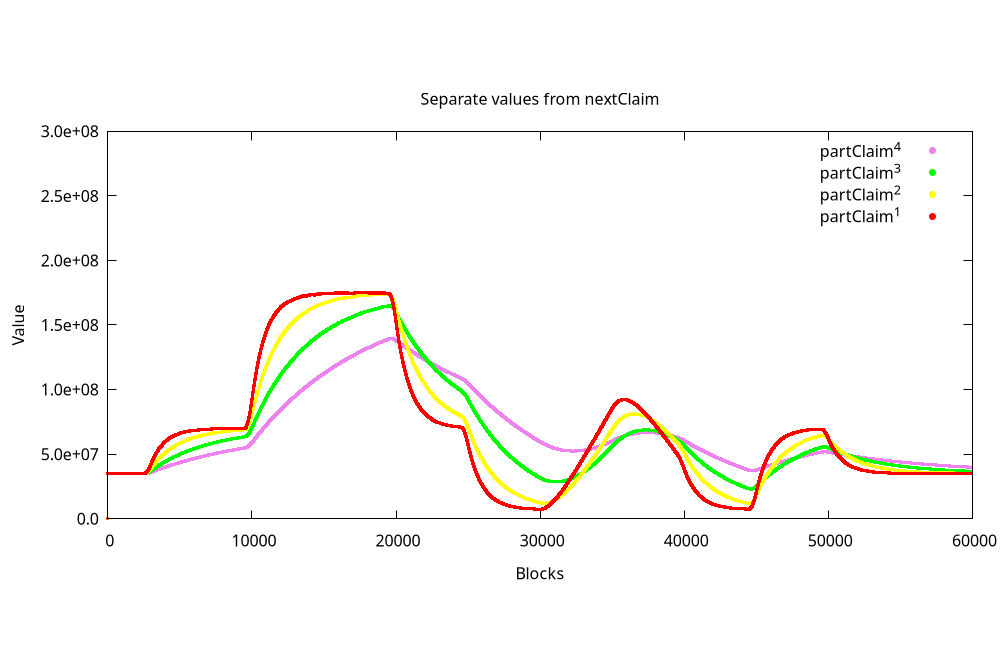}

		\caption{$\partial Claim$s normalized by $\rho$, \\ $\mathbb{C} = 0.7$.}
	\end{subfigure}
		\vspace{0.4cm}
	
	\begin{subfigure}{0.32\textwidth}
		\includegraphics[width=\textwidth]{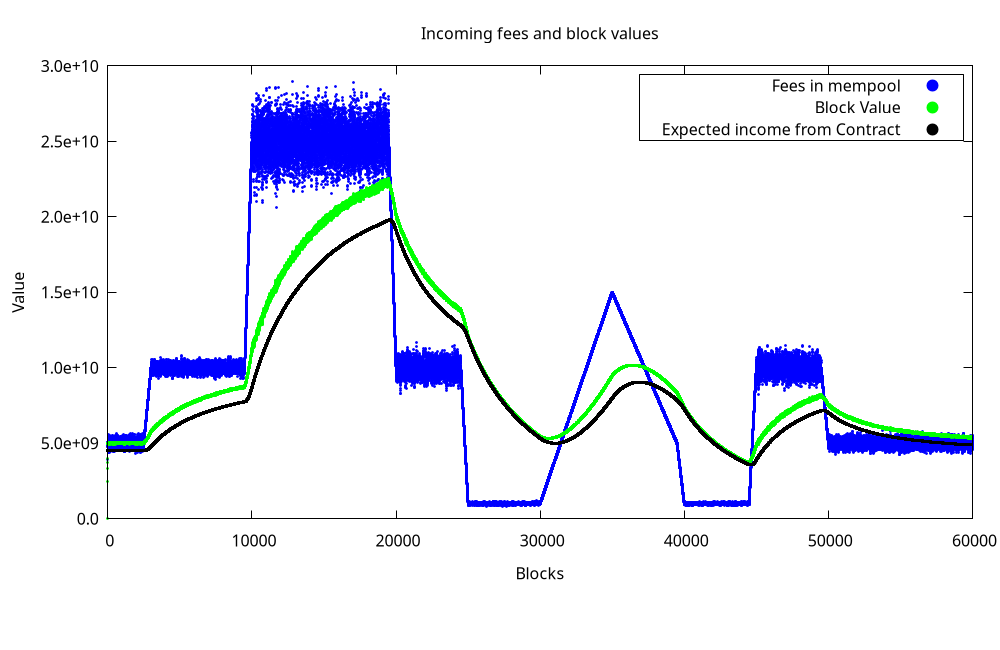}

		\caption{Scenario with 4 $\mathcal{FRSC}$s, \\ $\mathbb{C} = 0.9$.}
	\end{subfigure}
	\begin{subfigure}{0.32\textwidth}
		\includegraphics[width=\textwidth]{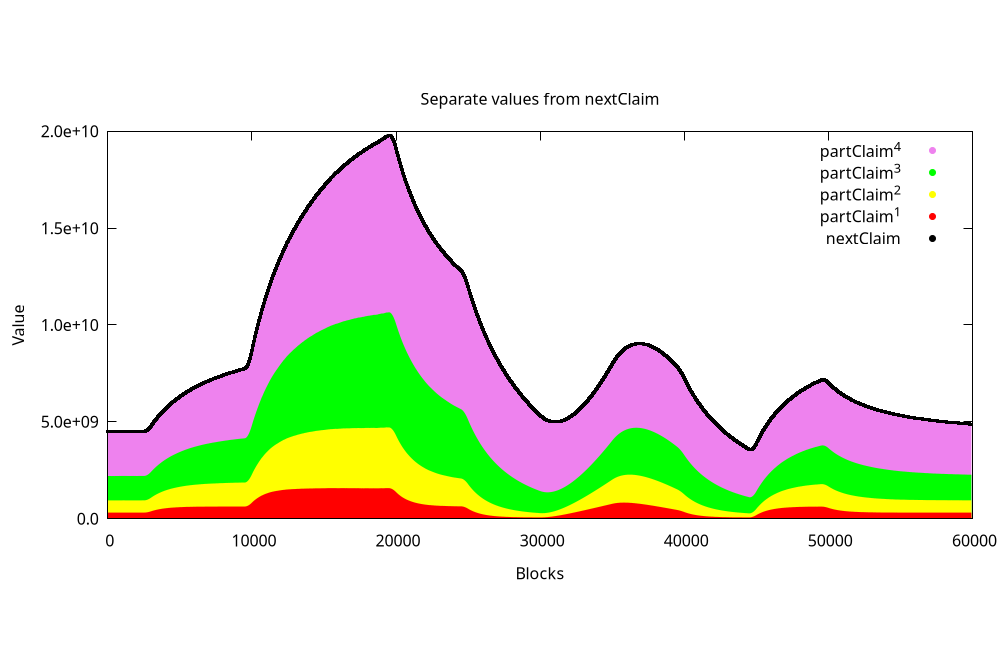}

		\caption{$\partial Claim$s and $nextClaim$, \\ $\mathbb{C} = 0.9$.}
	\end{subfigure}
	\begin{subfigure}{0.32\textwidth}
		\includegraphics[width=\textwidth]{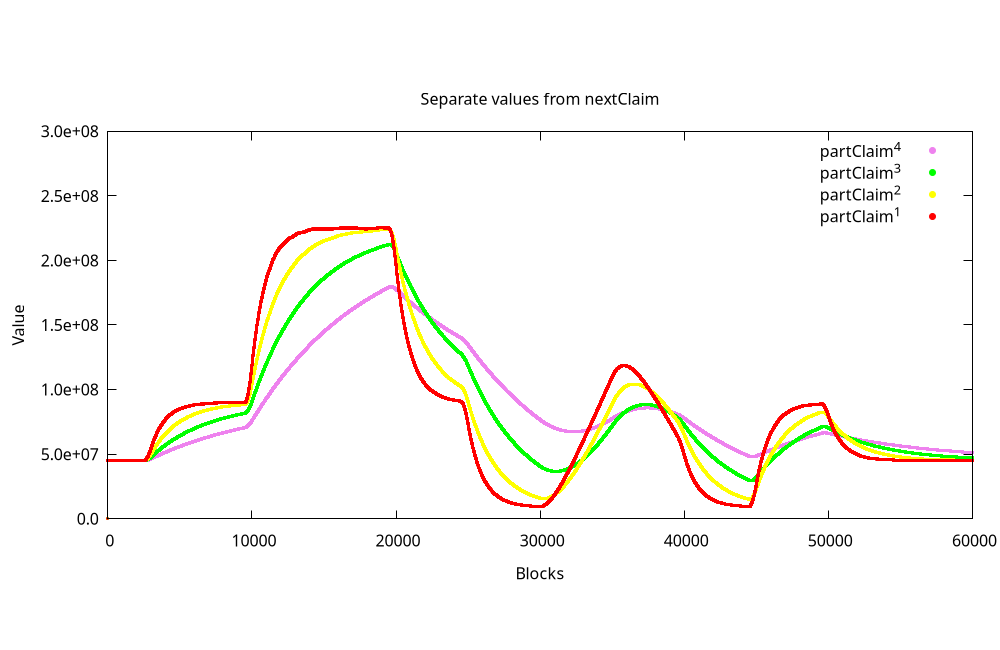}

		\caption{$\partial Claim$s normalized by $\rho$, \\ $\mathbb{C} = 0.9$.}
	\end{subfigure}
		\vspace{0.4cm}

	\caption{Experiment II investigating various $\mathbb{C}$s in the setting with multiple $\mathcal{FRSC}$s with their corresponding $\lambda$ = $\{1008, 2016, 4032, 8064\}$ and $\rho$ = $\{0.07, 0.14, 0.28, 0.51\}$. $\partial Claim$s represents contributions of individual $\mathcal{FRSC}$s to the total reward of the miner (i.e., its $nextClaim$ component).}\label{fig:exp-II}
\end{figure}

\subsubsection{Experiment I}
\label{sec:exp1}


\paragraph{\textbf{Methodology.}}
The purpose of this experiment was to investigate the amount of the reward a miner earns with our approach versus the baseline (i.e., the full reward is based on all transaction fees).
We investigated how $\mathbb{C}$ influences the total reward of the miner and how $\lambda$ of the sliding window averaged the rewards.
In detail, we created two independent $\mathcal{FRSC}$s with different $\lambda$ -- one was set to 2016 (i.e., $\mathcal{FRSC}^1$), and the second one was set to 5600 (i.e., $\mathcal{FRSC}^2$). 
We simulated these $\mathcal{FRSC}$s with three values of $\mathbb{C} \in \{0.5,~0.7,~0.9\}$.

%

\paragraph{\textbf{Results.}}
The results of this experiment are depicted in \autoref{fig:50tocontract}.
Across all runs of our experiment, we can observe that $\mathcal{FRSC}^{2}$ adapts slower as compared to $\mathcal{FRSC}^{1}$, which leads to a more significant averaging of the total reward paid to the miner.


\subsubsection{Experiment II}
\label{sec:exp2}


\paragraph{\textbf{Methodology.}}
In this experiment, we investigated how multiple $\mathcal{FRSC}$s dealt with the same scenario as before -- i.e., varying $\mathbb{C}$.
In detail, we investigated how individual $\mathcal{FRSC}$s  contributed to the $nextClaim_{[H+1]}$ by their individual $\partial Claim^{\mathcal{FRSC}_{[H]}}_{[H+1]}$.
This time, we varied only the parameter $\mathbb{C} \in \{0.5, ~0.7, ~0.9\}$, and we considered four $\mathcal{FRSC}$s:
\begin{center}
	\vspace{-0.3cm}
	$\mathcal{FRSC}s=\{$\\
	$\mathcal{FRSC}^{1}(\_, 1008, 0.07), \mathcal{FRSC}^{2}(\_, 2016, 0.14),$\\
	$\mathcal{FRSC}^{3}(\_, 4032, 0.28), \mathcal{FRSC}^{4}(\_, 8064, 0.51)\},$
	\vspace{-0.3cm}
\end{center}
where their lengths $\lambda$ were set to consecutive multiples of 2 (to see differences in more intensive averaging across longer intervals), and their redistribution ratios $\rho$ were set to maximize the potential of averaging by longer $\mathcal{FRSC}$s. 


\paragraph{\textbf{Results.}}
The results of this experiment are depicted in \autoref{fig:exp-II}.
We can observe that the shorter $\mathcal{FRSC}$s quickly adapted to new changes and the longer $\mathcal{FRSC}$s kept more steady income for the miner.
In this sense, we can see that $\partial Claim^{4}$ held steadily over the scenario while for example $\partial Claim^{1}$ fluctuated more significantly.
Since the scenarios of fees evolution in the mempool was the same across all our experiments (but \autoref{sec:exp3}), we can compare the $\mathcal{FRSC}$ with $\lambda$ = 5600 from \autoref{sec:exp1} and the current setup involving four $\mathcal{FRSC}$s -- both had some similarities.
This gave us intuition for replacing multiple $\mathcal{FRSC}$s with a single one (see \autoref{sec:exp3}).


\begin{figure*}[t]

	\centering
	\begin{subfigure}{0.32\textwidth}
		\includegraphics[width=\textwidth]{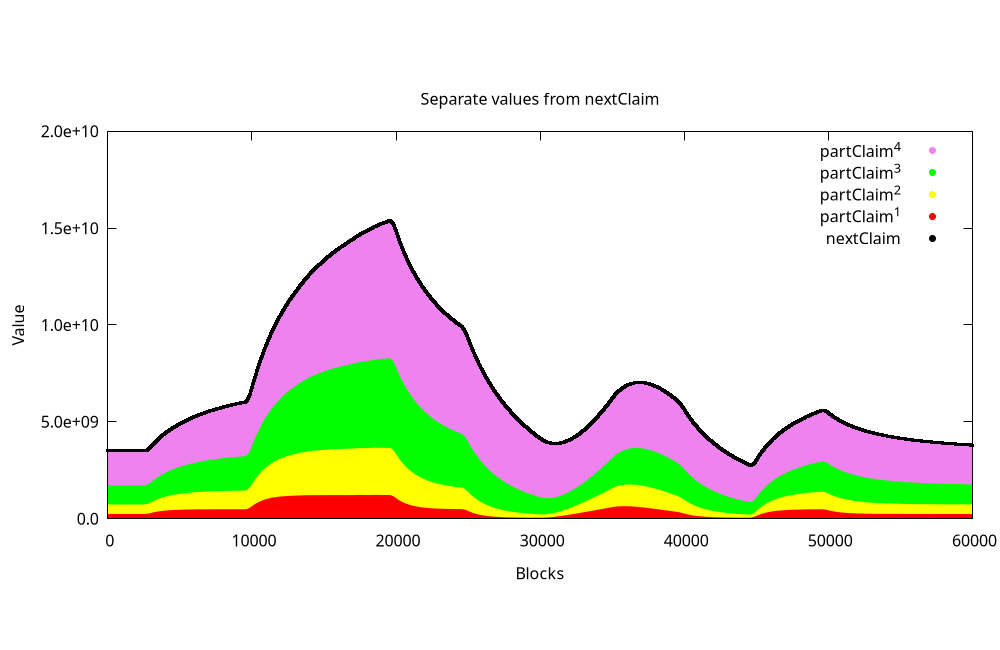}

		\caption{$\rho$ correlates with $\lambda$.}
	\end{subfigure}
	\begin{subfigure}{0.32\textwidth}
		\includegraphics[width=\textwidth]{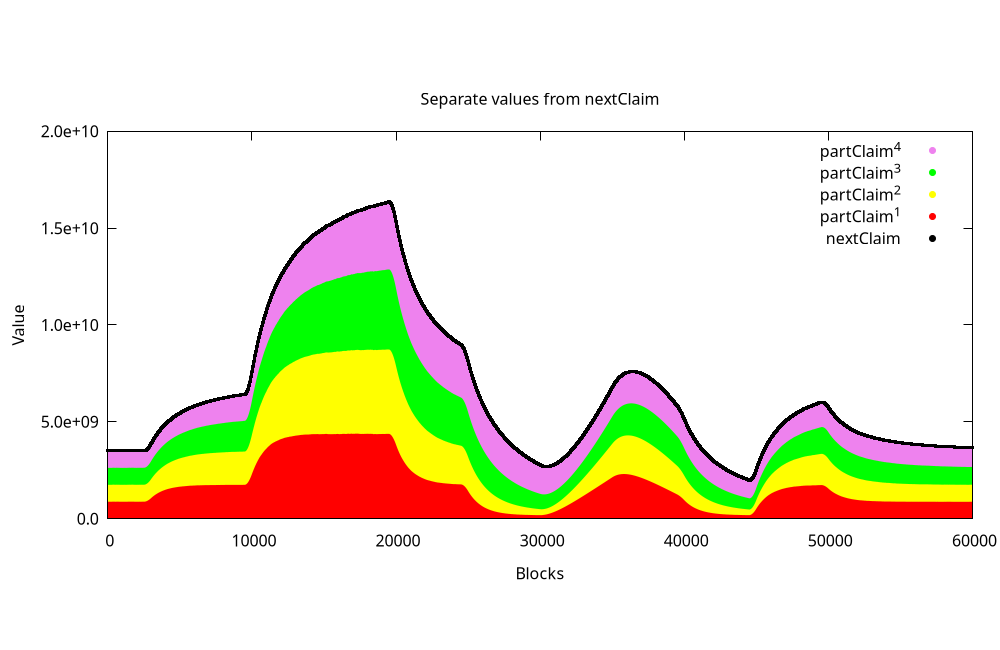}

		\caption{$\rho$ equal for every $\mathcal{FRSC}$.}
	\end{subfigure}
	\begin{subfigure}{0.32\textwidth}
		\includegraphics[width=\textwidth]{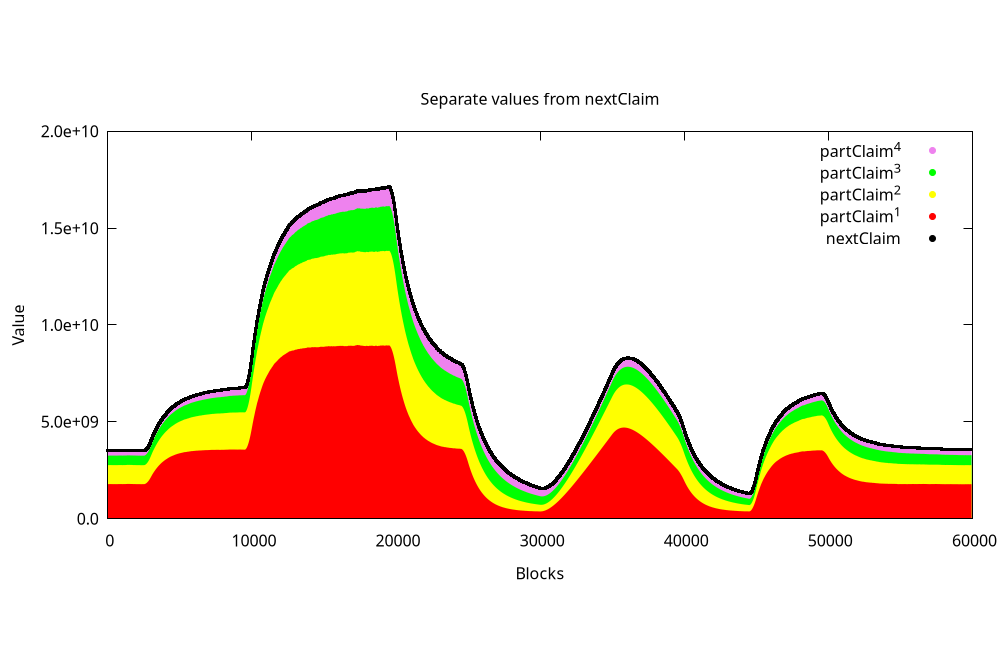}

		\caption{$\rho$ negatively correlates with $\lambda$.}
	\end{subfigure}
	
	\vspace{0.4cm}
	\caption{Experiment II -- multiple $\mathcal{FRSC}$s using various distributions of $\rho$ and their impact on $\partial Claim$, where $\mathbb{C}$ = 0.7.}	\label{fig:percentages}
		\vspace{0.5cm}
\end{figure*}

\begin{figure*}[t]
	
	\centering
	\begin{subfigure}[t]{0.45\textwidth}
		\includegraphics[width=\textwidth]{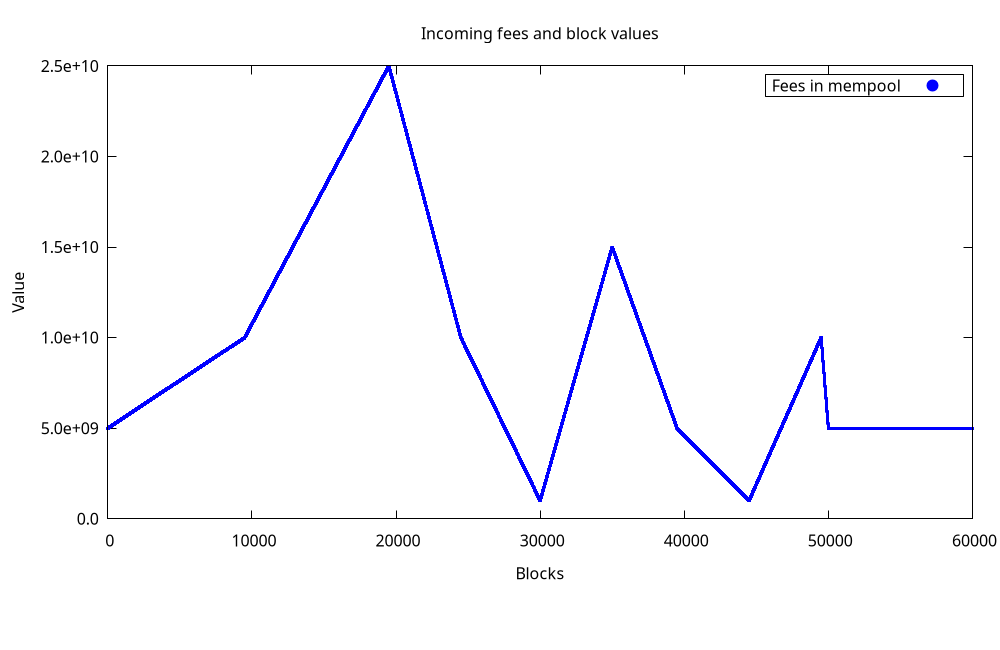}
		\caption{A custom fee scenario for Experiment III.}\label{fig:exp3-fees}
	\end{subfigure}
	\hspace{0.2cm}
	\begin{subfigure}[t]{0.45\textwidth}
	
		\includegraphics[width=\textwidth]{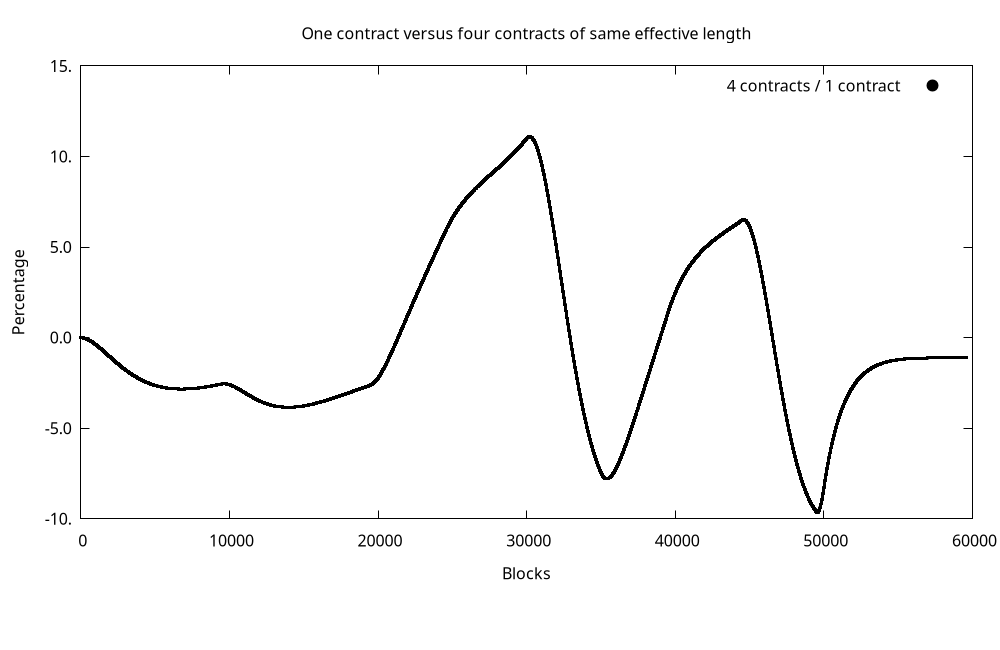}	
		\caption{A relative difference in $nextClaim$ between 4 $\mathcal{FRSC}$s and a single $\mathcal{FRSC}$.}\label{fig:exp3-relative-diff}
	\end{subfigure}
	
	\vspace{0.4cm}
	\caption{Experiment III comparing 4 $\mathcal{FRSC}$s and 1 $\mathcal{FRSC}$, both configurations having the same $\text{effective}\_\lambda$.  }\label{fig:effective_length}

\end{figure*}

\vspace{-0.1cm}
\subsubsection{\textbf{Different Fee Redistribution Ratios Across $\mathcal{FRSC}$s}}\label{sec:different-rhoss}
In \autoref{fig:percentages} we investigated different values of $\rho$ in the same set of four contracts and their impact on $\partial Claim$s.
The results show that the values of $\rho$ should correlate with $\lambda$ of multiple $\mathcal{FRSC}$s to maximize the potential of averaging by longer $\mathcal{FRSC}$s.

\subsubsection{Experiment III}\label{sec:exp3}
\paragraph{\textbf{Methodology.}}
In this experiment, we investigated whether it is possible to use a single $\mathcal{FRSC}$ setup to replace a multiple $\mathcal{FRSC}$s while preserving the same effect on the $nextClaim$.
To quantify a difference between such cases, we introduced a new metric of $\mathcal{FRSC}s$, called $\text{effective}\_\lambda$, which can be calculated as follows:
\begin{eqnarray} \label{eq:effective_length}	
	\text{effective}\_\lambda (\mathcal{FRSC}s) = \sum_{x ~\in~ \mathcal{FRSC}s} x.\rho* x.\lambda.
\end{eqnarray}
We were interested in comparing a single $\mathcal{FRSC}$ with 4 $\mathcal{FRSC}$s, both configurations having the equal $\text{effective}\_\lambda$. 
The configurations of these two cases are as follows: 
\begin{center}
	(1) $\mathcal{FRSC}(\_, 5292, 1)$ and\\
	(2)
	$\mathcal{FRSC}s=\{$\\
	$\mathcal{FRSC}^{1}(\_, 1008, 0.07), \mathcal{FRSC}^{2}(\_, 2016, 0.19),$\\
	$ \mathcal{FRSC}^{3}(\_, 4032, 0.28), \mathcal{FRSC}^{4}(\_, 8064, 0.46)\}.$\\
\end{center}
We can easily verify that the $\text{effective}\_\lambda$ of 4 $\mathcal{FRSC}$s is the same as in a single $\mathcal{FRSC}$ using \autoref{eq:effective_length}:
$0.07 * 1008 + 0.19 * 2016 + 0.28 * 4032 + 0.46 * 8064 = 5292$.

We conducted this experiment using a custom fee evolution scenario involving mainly linearly increasing/decreasing fees in the mempool (see \autoref{fig:exp3-fees}), and we set $\mathbb{C}$ to 0.7 for both configurations.
The custom scenario of the fee evolution in mempool in this experiment was chosen to contain extreme changes in fees, emphasizing possible differences in two investigated setups.

\paragraph{\textbf{Results.}}
In \autoref{fig:exp3-relative-diff}, we show the relative difference in percentages of $nextClaim$ rewards between the settings of 4 $\mathcal{FRSC}$s versus 1 $\mathcal{FRSC}$.
It is clear that the setting of 4 $\mathcal{FRSC}$s  in contrast to a single $\mathcal{FRSC}$ provided better reward compensation in times of very low fees value in the mempool, while it provided smaller reward in the times of higher values of fees in the mempool.
Therefore, we concluded that it is not possible to replace a setup of multiple $\mathcal{FRSC}$s with a single one while retaining the same fee redistribution behavior. 

\subsubsection{Experiment IV}\label{sec:exp5}
We focused on reproducing the experiment from Section 5.5 of~\cite{carlsten2016instability}, while utilizing our approach. 
The experiment is aimed on searching for the minimal ratio of \textsc{Default-Compliant} miners, at which the undercutting attack is no longer profitable strategy.
\textsc{Default-Compliant} miners are honest miners who follow the rules of the consensus protocol such as building on top of the longest chain.
%
We executed several simulations, each consisting of multiple games (i.e., 300k as in~\cite{carlsten2016instability}) with various fractions of \textsc{Default-Compliant} miners.
From the remaining miners we evenly created \textit{learning miners}, who learn on the previous runs of games and switch with a certain probability the best strategy out of the following:
\begin{compactitem}
	\item \textsc{PettyCompliant}: This miner behaves as \textsc{Default-Compliant} except one difference.
	In the case of seeing two chains, he does not mine on the oldest block but rather the most profitable block.
	Thus, this miner is not the (directly) attacking miner.
	
	\item \textsc{LazyFork}:
	This miner checks which out of two options is more profitable: (1) mining on the longest-chain block or (2) undercutting that block.
	In either way, he leaves half of the mempool fees for the next miners, which prevents another \textsc{LazyFork} miner to undercut him.
	
	\item \textsc{Function-Fork()}
	The behavior of the miner can be parametrized with a function f(.) expressing the level of his undercutting.
	The higher the output number the less reward he receives and more he leaves to incentivize other miners to mine on top of his block.
	This miner undercuts every time he forks the chain. 
\end{compactitem}

\paragraph{\textbf{Methodology.}}
With the missing feature for difficulty re-adjustment  (in the simulator from~\cite{carlsten2016instability} that we extended) the higher orphan rate occurs, which might directly impact our $\mathcal{FRSC}$-based approach. 
If the orphan rate is around 40\%, roughly corresponding to~\cite{carlsten2016instability}, our blocks would take on average 40\% longer time to be created, increasing the block creation time (i.e., time to mine a block).
This does not affect the original simulator, as there are no $\mathcal{FRSC}s$ that would change the total reward for the miner who found the block.


\begin{figure*}

	\centering
	\begin{subfigure}[t]{0.47\textwidth}
		\includegraphics[width=\textwidth]{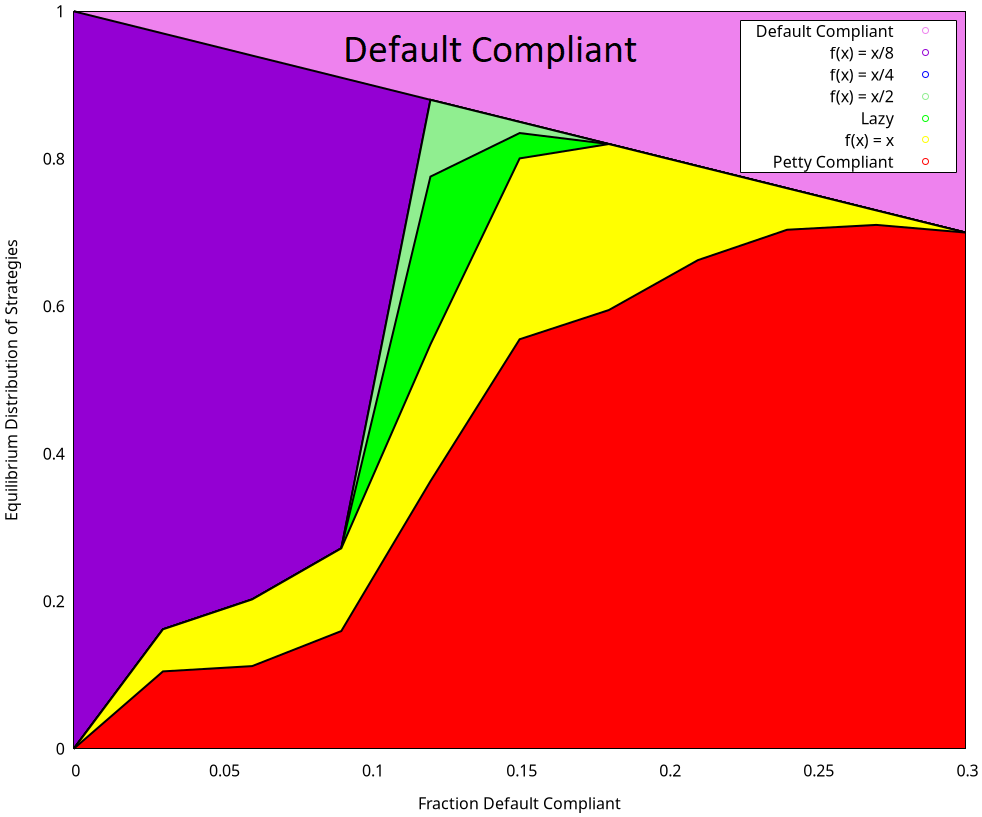}
		\caption{Simulations of our approach.}
	\end{subfigure}
	\begin{subfigure}[t]{0.46\textwidth}
		\includegraphics[width=\textwidth]{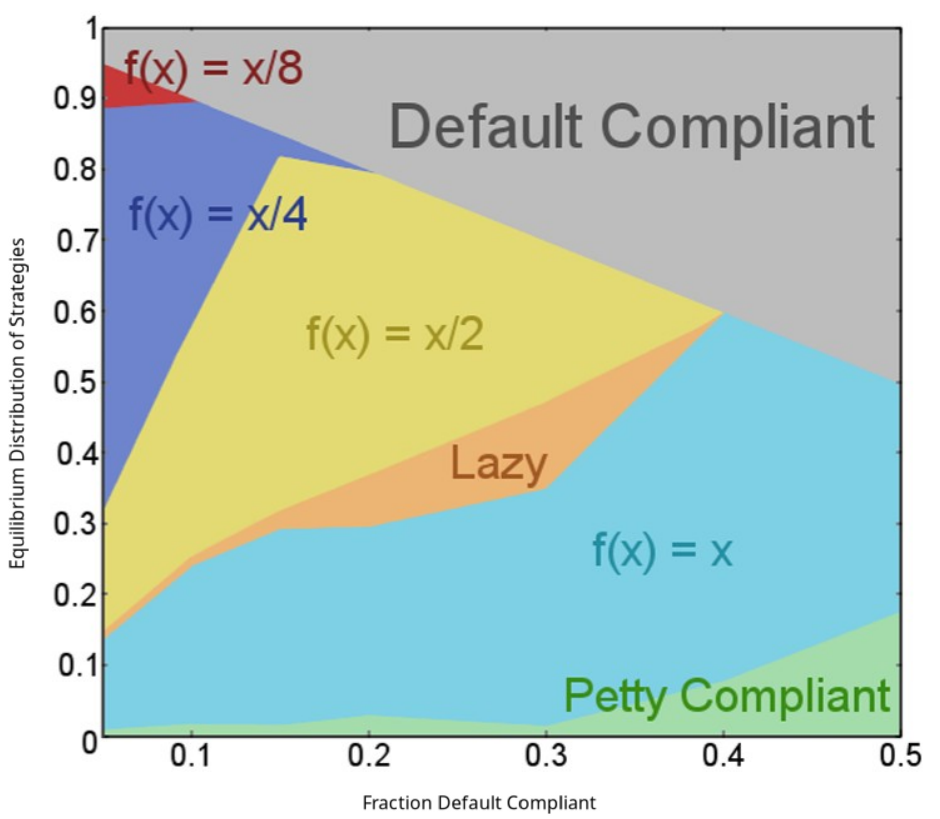}
		\caption{Simulations of the original work~\cite{carlsten2016instability}.}
	\end{subfigure}

	\vspace{0.4cm}
	\caption{Experiment IV -- The ratio of \textsc{Default-Compliant}  miners in our approach is $\sim$30\% (in contrast to $\sim66\%$ of~\cite{carlsten2016instability}).}\label{fig:improvement}

\end{figure*}

Nevertheless, this is not true for $\mathcal{FRSC}$-based simulations as the initial setup of $\mathcal{FRSC}s$ is calculated with $\overline{fees} = 50$ BTC (as per the original simulations).
However, with longer block creation time and transaction fees being calculated from it, the amount of $\overline{fees}$ also changes.
With no adjustments, this results in $\mathcal{FRSC}s$ initially paying smaller reward back to the miner before $\mathcal{FRSC}s$ are saturated. 
To mitigate this problem, we increased the initial values of individual $\mathcal{FRSC}$s by the orphan rate from the previous game before each run.
This results in very similar conditions, which can be verified  by comparing the final value in the longest chain of our simulation versus the original simulations.
We decided to use this approach to be as close as possible to the original experiment.
This is particularly important when the full mempool parameter is equal to $false$, which means that the incoming transaction fees to mempool are calculated based on the block creation time. 
In our simulations, we used the following parameters: 100 miners,
10 000 blocks per game, 
300 000 games (in each simulation run),
exp3 learning model, and
$\mathbb{C} = 0.7$.
Modeling of fees utilized the same parameters as in the original paper~\cite{carlsten2016instability}: the full mempool parameter disabled, a constant inflow of 5 000 000 000 Satoshi (i.e., 50 BTC) every 600s.
For more details about the learning strategies and other parameters, we refer the reader to~\cite{carlsten2016instability}.

\paragraph{\textbf{Setup of $\mathcal{FRSC}s$}.}
Since we have a steady inflow of fees to the mempool, we do not need to average the income for the miner.
Therefore, we used only a single $\mathcal{FRSC}$ defined as $\mathcal{FRSC}$(7 056 000 000 000, 2016, 1), where the initial value of $\mathcal{FRSC}.\nu$ was adjusted according to \autoref{eq:setup}, assuming $\overline{fees} = 50$ BTC.
In the subsequent runs of each game, $\mathcal{FRSC}.\nu$ was increased by the orphan rate from the previous runs. 


\paragraph{\textbf{Results.}}
The results of this experiment, depicted in \autoref{fig:improvement}, demonstrate, that with our approach using $\mathcal{FRSC}$s, we decreased the number of \textsc{Default-Compliant} miners from the original $66\%$ to $30\%$. 
This means that the profitability of undercutting miners is avoided with at least $30\%$ of \textsc{Default-Compliant} miners, indicating more robust results. 

\medskip
\noindent
For other details, we refer the reader to our paper~\cite{budinsky2023fee}.

\section{Contributing Papers}\label{sec:cons-papers}
The papers that contributed to this research direction are enumerated in the following, while highlighted papers are attached to this thesis in their original form.

\begingroup
\let\clearpage\relax

\renewcommand\bibname{}
\vspace{-7em}

\endgroup


\chapter{Cryptocurrency Wallets}\label{chapter:wallets}

In this chapter, we present our contributions to the area of authentication for blockchain and decentralized applications, which belong to the application layer of our security reference architecture (see \autoref{chapter:sra}).
In particular, this chapter is focused on cryptocurrency wallets and their subcategory of smart contract wallets, and it is based on the papers \cite{homoliak2020smartotps,homoliak2020air-extend} (see also \autoref{sec:wallets-papers}).

First, we review existing cryptocurrency wallet solutions (with their security issues) and propose a classification scheme based on authentication factors validated against the blockchain or a centralized party \cite{homoliak2020air-extend}. 
We apply the proposed classification to the existing wallet solutions and also cross-compare other security features of them.
Next, we propose SmartOTPs \cite{homoliak2020smartotps}, a 2FA authentication scheme against the blockchain, which, on top of using a hardware wallet, introduces the authenticator App (or a device) generating OTPs that are transferred in an air-gapped fashion to the client.

\subsubsection{Notation}\label{sec:wallets-notation}
We denote the user by $\mathbb{U}$, the client (e.g., the user agent/browser) by $\mathbb{C}$, a wallet holding a private key by $\mathbb{W}$, the authenticator device or App as $\mathbb{A}$, and an adversary by $\mathcal{A}$.

\section{Security Issues in Authentication Schemes of Wallets}\label{sec:wallets-review}
According to works \cite{eskandari2018first,2015-Bitcoin-SOK}, there are a few categories of key management approaches. 
In pass\-word-protected wallets, private keys are encrypted with selected passwords.
Unfortunately, users often choose weak passwords that can be brute-forced if stolen by malware \cite{2015-CCSM-SecureWorks}; optionally, such malware may use a keylogger for capturing a passphrase \cite{2015-Bitcoin-SOK,2017-keylogger-bc-malware}.
Another similar option is to use password-derived wallets that generate keys based on the provided password.
However, they also suffer from the possibility of weak passwords \cite{courtois2016speed}. 
Hardware wallets are a category that promises the provision of better security by introducing devices that enable only the signing of transactions, without revealing the private keys stored on the device.
However, these wallets do not provide protection from an attacker with full access to the device \cite{kraken-trezor-hack,kraken-keepkey-hack,donjon-ellipal-hack}, and more importantly, wallets that do not have a secure channel for informing the user about the details of a transaction being signed (e.g., \cite{ledger-nano}) may be exploited by malware targeting IPC mechanisms \cite{bui2018man}.

A popular option for storing private keys is to deposit them into  server-side hosted (i.e., custodial) wallets and currency-exchange services \cite{CoinbaseWallet,binance-exchange,poloniex-exchange,kraken-exchange,luno-wallet,paxful-wallet}.
In contrast to the previous categories, server-side wallets imply trust in a provider, which is a potential risk of this category.
Due to many cases of compromising server-side wallets \cite{2018-coindesk-bithumb,2014-Mt-Gox,2016-Bitfinex-hack,moore2013beware,binance-hack-2019} or fraudulent currency-exchange operators \cite{vasek2015there}, client-side hosted wallets have started to proliferate.
In such wallets, the main functionality, including the storage of private keys, has moved to the user side \cite{mycelium-wallet,CarbonWallet,CitoWiseWallet,coinomi-wallet,InfinitoWallet};
hence, trust in the provider is reduced but the users still depend on the provider's infrastructure.

To increase security of former wallet categories, multi-factor authentication (MFA) is often used, which enables spending crypto-tokens only  when a number of secrets are used together.
Wallets from a split control category \cite{eskandari2018first} provide MFA against the blockchain.
This can be achieved by threshold cryptography wallets \cite{goldfeder2015securing,mycelium-entropy}, multi-signature wallets \cite{Armory-SW-Wallet,Electrum-SW-Wallet,TrustedCoin-cosign,copay-wallet}, and state-aware smart-contract wallets \cite{TrezorMultisig2of3,parity-wallet,ConsenSys-gnosis}.
Nevertheless, these schemes might impose additional usability implications, performance overhead, or cost of wallet devices.

\section{Classification of Authentication Schemes}\label{sec:wallets-classification}
We introduce the notion of $k$-factor authentication against the blockchain and $k$-factor authentication against the authentication factors.
Using these notions, we propose a classification of authentication schemes, and we apply it to examples of existing key management solutions (see \autoref{sec:soa-wallet-types} and \autoref{appendix:classification}).

In the context of the blockchain, we distinguish between k-factor authentication  \textit{against the blockchain} and k-factor authentication \textit{against the authentication factors} themselves.
For example, an authentication method may require the user to perform 2-of-2 multi-signature in order to execute a transfer, while $\mathbb{U}$ may keep each private key stored in a dedicated device -- each requiring a different password.
In this case, 2FA is performed against the blockchain, since both signatures are verified by all miners of the blockchain.
Additionally, a one-factor authentication is performed once in each device of $\mathbb{U}$ by entering a password in each of them.
For clarity, we classify authentication schemes by the following notation:
\begin{equation*}\label{eqn:auth-simple}
	\Bigg ( 
	Z 
	+  X_1
	\big/ \ldots \big/ 
	X_Z
	\Bigg),
\end{equation*}
where $Z \in \{0, 1, \ldots \}$ represents the number of authentication factors against the block\-chain and
$X_i \in \{0, 1, \ldots \} ~|~i \in [1,\ldots, Z]$ represents the number of authentication factors against the i-th factor of $Z$.
With this in mind, we remark that the previous example provides $\left( 2 + 1/1 \right)$-factor authentication: twice against the blockchain (i.e., two signatures), once for accessing the first device (i.e., the first password), and once for accessing the second device (i.e., the second password).

\medskip
Since the previous notation is insufficient for authentication schemes that use secret sharing \cite{shamir1979share}, we
extend it as follows:
\begin{equation*}
	\Bigg ( 
	Z^{(W_1, \dots, W_Z)} 
	+ \left( X_1^{1}, \ldots , X_1^{W_1} \right) 
	\big/ \ldots \big/ 
	\left(X_Z^{1}, \ldots, X_Z^{W_Z} \right) 
	\Bigg),
\end{equation*}
where $Z$ has the same meaning as in the previous case, 
$W_i \in \{0, 1, \ldots \}$ $|~i \in [1,\ldots, Z]$ denotes the minimum number of secret shares required to use the complete i-th secret $X_i$.
With this in mind, we remark that the aforementioned example provides $\left( 2^{(1, 1)} + (1)/(1) \right)$-factor authentication: twice against the blockchain (i.e., two signatures), once for accessing the first device (i.e., the first password), and once for accessing the second device (i.e., the second password).
We consider an implicit value of $W_i = 1$; hence, the classification $(2 + 1/1)$ represents the same as the previous one (the first notation suffices).
If one of the private keys were additionally split into two shares, each encrypted by a password, then such an approach would provide $\left( 2^{(2, 1)} + (1, 1)/(1) \right)$-factor authentication.

\subsection{Review of Wallet Types Using the Classification}\label{sec:soa-wallet-types}
We extend the previous work of Eskandari et al. \cite{eskandari2018first} and Bonneau et al. \cite{2015-Bitcoin-SOK}, by categorizing and reviewing a few examples of key management solutions, while assuming our classification.

\paragraph{\textbf{Keys in Local Storage.}}
In this category of wallets, the private keys are stored in plaintext form on the local storage of a machine, thus providing $(1+0)$-factor authentication.
Examples that enable the use of unencrypted private key files are Bitcoin Core \cite{BitcoinCore} or MyEtherWallet \cite{MyEtherWallet} wallets.

\paragraph{\textbf{Password-Protected Wallets.}}
These wallets require the user-spe\-ci\-fied password to encrypt a private key stored on the local storage, thus providing $(1+1)$-factor authentication.
Examples that support this functionality are Armory Secure Wallet \cite{Armory-SW-Wallet}, Electrum Wallet \cite{Electrum-SW-Wallet}, MyEtherWallet \cite{MyEtherWallet}, Bitcoin Core \cite{BitcoinCore}, and Bitcoin Wallet \cite{BitcoinWallet}.
This category addresses physical theft, yet enables the brute force  of passwords and digital theft (e.g., keylogger).

\paragraph{\textbf{Password-Derived Wallets.}}
Password-derived wallets \cite{maxwell2011deterministic} (a.k.a., brain wallets or hierarchical deterministic wallets) can compute a sequence of private keys from only a single mnemonic string and/or password.
This approach takes advantage of the key creation in the ECDSA signature scheme that is used by many blockchain platforms.
Examples of password-derived wallets are Electrum \cite{Electrum-SW-Wallet}, Armory Secure Wallet \cite{Armory-SW-Wallet}, Metamask \cite{MetamaskWallet}, and Daedalus Wallet \cite{daedalus-wallet}.
The wallets in this category provide $(1+X_1)$-factor authentication (usually $X_1 = 1$) and also suffer from weak passwords \cite{courtois2016speed}.

\paragraph{\textbf{Hardware Storage Wallets.}}
In general, wallets of this category include devices that can only sign transactions by private keys stored inside sealed storage, while the keys never leave the device. 
To sign a transaction, $\mathbb{U}$ connects the device to a machine and enters his passphrase.
When signing a transaction, the device displays the transaction's data to $\mathbb{U}$, who may verify the details. 
Thus, wallets of this category usually provide $(1+1)$-factor authentication.
Popular USB (or Bluetooth) hardware wallets containing displays are offered by Trezor \cite{trezor-hw-wallet},  Le\-dger \cite{ledger-nano-s}, KeepKey \cite{keep-key}, and BitLox \cite{BitLox}.
An example of a USB wallet that is not resistant against tampering with $\mathbb{C}$ (e.g., keyloggers) is Ledger Nano \cite{ledger-nano}
-- it does not have a display, hence $\mathbb{U}$ cannot verify the details of transactions being signed.
An air-gapped transfer of transactions using QR codes is provided by ELLIPAL wallet \cite{ellipal-hw-wallet}.
In ELLIPAL, both $\mathbb{C}$ (e.g., smartphone App) and the hardware wallet must be equipped with cameras and display.
$(1+0)$-factor authentication is provided by a credit-card-shaped hardware wallet from CoolBitX \cite{CoolWalletS}. 
A hybrid approach that relies on a server providing a relay for 2FA is offered by BitBox \cite{BitBox}.
Although a BitBox device does not have a display, after connecting to a machine, it communicates with $\mathbb{C}$ running on the machine and at the same time, it communicates with a smartphone App through BitBox's server; 
each requested transaction is displayed and confirmed by $\mathbb{U}$ on the smartphone. 
One limitation of this solution is the lack of self-sovereignty.

\paragraph{\textbf{Split Control -- Threshold Cryptography.}}
In threshold cryptography \cite{shamir1979share,threshold-mackenzie2001two,threshold-gennaro2007secure,blakley1979safeguarding}, a key is split into several parties which enables the spending of crypto-tokens only when n-of-m parties collaborate.
Threshold cryptography wallet provide $\left( 1^{(W_1, \ldots, W_n)}\- + (X_1, \dots, X_n) \right)$-factor authentication, as only a single signature verification is made on a blockchain, but $n$ verifications are made by parties that compute a signature.
Therefore, all the computations for co-signing a transaction are performed off-chain, which provides anonymity of access control policies (i.e., a transaction has a single signature) in contrast to the multi-signature scheme that is publicly visible on the blockchain.
An example of this category is presented by Goldfeder et al. \cite{goldfeder2015securing}.
One limitation of this solution is a computational overhead that is directly proportional to the number of involved parties $m$ (e.g., for $m = 2$ it takes $13.26$s).
Another example of this category is a USB dongle called Mycelium Entropy \cite{mycelium-entropy}, which, when connected to a printer, generates triplets of paper wallets using 2-of-3 Shamir's secret sharing; providing $(1^{(2)} + (0, ~0))$-factor authentication.

\paragraph{\textbf{Split Control -- Multi-Signature Wallets.}}
In the case of multi-signature wallets, n-of-m owners of the wallet must co-sign the transaction made from the multi-owned address.
Thus, the wallets of this category provide $(n + X_1/\ldots/X_n)$-factor authentication.
One example of a multi-owned address approach is Bitcoin's Pay to Script Hash (P2SH).\footnote{We refer to the term  \textit{multi-owned address of P2SH} for clarity, although it can be viewed as Turing-incomplete smart contract.}
Examples supporting multi-owned addresses are Lockboxes of Armory Secure Wallet \cite{Armory-SW-Wallet} and Electrum Wallet \cite{Electrum-SW-Wallet}.
A property of multi-owned address is that each transaction with such an address requires off-chain communication. 
A hybrid instance of this category and client-side hosted wallets category is Trusted Coin's cosigning service \cite{TrustedCoin-cosign}, which provides a 2-of-3 multi-signature scheme -- $\mathbb{U}$ owns a primary and a backup key, while TrustedCoin owns the third key.
Each transaction is signed first by user's primary key and then, based on the correctness of the OTP from Google Authenticator, by TrustedCoin's key.
Another hybrid instance of this category and client-side hosted wallets is Copay Wallet \cite{copay-wallet}.
With Copay, the user can create a multi-owned Copay wallet, where $\mathbb{U}$ has all keys in his machines and each transaction is co-signed by n-of-m keys.
Transactions are resent across user's machines during multi-signing through Copay.

\paragraph{\textbf{Split-Control -- State-Aware Smart Contracts.}}\label{sec:state-aware=contracts}
State-aware smart contracts provide ``rules'' for how crypto-tokens of a contract can be spent by owners, while they keep the current setting of the rules on the blockchain.
The most common example of state-aware smart contracts is the 2-of-3 multi-signature scheme that provides $(2+X_1/X_2)$-factor authentication.
An example of the 2-of-3 multi-signature approach that only supports Trezor hardware wallets is \textit{TrezorMultisig2of3} from Unchained Capital \cite{TrezorMultisig2of3}.
One disadvantage of this solution is that $\mathbb{U}$ has to own three Trezor devices, which may be an expensive solution that, moreover, relies only on a single vendor.
Another example of this category, but using the n-of-m multi-signature scheme, is Parity Wallet \cite{parity-wallet}. 
However, two critical bugs \cite{parity-bug-July-17,parity-bug-November-17} have caused the multi-signature scheme to be currently disabled.
The n-of-m multi-signature scheme is also used in \textit{Gnosis Wallet} from ConsenSys \cite{ConsenSys-gnosis}.

\paragraph{\textbf{Hosted  Wallets.}}
Common features of hosted wallets are that they provide an online interface for interaction with the blockchain, managing crypto-tokens, and viewing transaction history, while they also store private keys at the server side.
If a hosted wallet has full control over private keys, it is referred to as a \textit{server-side wallet}. 
A server-side wallet acts like a bank -- the trust is centralized.
Due to several cases of compromising such server-side wallets \cite{2018-coindesk-bithumb}, \cite{2014-Mt-Gox}, \cite{2016-Bitfinex-hack}, \cite{moore2013beware}, the hosted wallets that provide only an interface for interaction with the blockchain (or store only user-encrypted private keys) have started to proliferate.
In such wallets, the functionality, including the storage of private keys, has moved to $\mathbb{U}$'s browser (i.e., client).
We refer to these kinds of wallets as \textit{client-side wallets} (a.k.a., hybrid wallets \cite{eskandari2018first}).

\paragraph{\textbf{Server-Side Wallets.}}
Coinbase \cite{CoinbaseWallet} is an example of a server-side hosted wallet, which also provides exchange services.
Whenever the user logs in or performs an operation, he authenticates himself against Coinbase's server using a password and obtains a code from Google Authenticator/Authy app/SMS. 
Other examples of server-side wallets having equivalent security level to Coinbase are Circle Pay Wallet \cite{CircleWallet} and Luno Wallet \cite{luno-wallet}.
The wallets in this category provide $(0+2)$-factor authentication when 2FA is enabled.

\paragraph{\textbf{Client-Side Wallets.}}
An example of a client-side hosted wallet is Blockchain Wallet \cite{BlockchainInfoWallet}. 
Blockchain Wallet is a password-derived wallet that provides 1-factor authentication against the server based on the knowledge of a password and additionally enables 2FA against the server through one of the options consisting of Google Authenticator, YubiKey, SMS, and email.
When creating a transaction, $\mathbb{U}$ can be authenticated by entering his secondary password. 
Equivalent functionality and security level as in Blockchain Wallet are offered by BTC Wallet \cite{BTC-com-wallet}. 
In contrast to Blockchain Wallet, BTC wallet uses 2FA also during the confirmation of a transaction. 
Other examples of this category are password-derived wallets, like Mycelium Wallet \cite{mycelium-wallet}, CarbonWallet \cite{CarbonWallet}, Citowise Wallet \cite{CitoWiseWallet}, Coinomi Wallet \cite{coinomi-wallet}, and Infinito Wallet \cite{InfinitoWallet}, which, in contrast to the previous examples, do not store backups of encrypted keys at the server.
A 2FA feature is provided additionally to password-based authentication, in the case of CarbonWallet.
In detail, the 2-of-2 multi-signature scheme uses the machine's browser and the smartphone's browser (or the app) to co-sign transactions.

\setlength{\tabcolsep}{2.0pt}
\begin{table*}[!h]
	\scriptsize{
		\vspace{-0.3cm}
		\scalebox{0.72}{

		}
	}
	\caption{Comparison of state-of-the-art cryptocurrency wallets using our classification (see \autoref{sec:wallets-classification}) and other security features.}	
	\label{tab:wallets-state-of-the-art}
\end{table*}
\setlength{\tabcolsep}{1.4pt}

\subsection{A Comparison of Security Features of Wallets}\label{appendix:classification}
\hfill

\noindent
We present a comparison of wallets and approaches from \autoref{sec:soa-wallet-types} in \autoref{tab:wallets-state-of-the-art}.
We apply our proposed classification on authentication schemes, while we also survey a few selected security and usability properties of the wallets from the work of Eskandri et al. \cite{eskandari2018first}. 
In the following, we briefly describe each property and explain the criteria stating how we attributed the properties to particular wallets. 

\paragraph{\textbf{Air-Gapped Property.}}
We attribute this property (Y) to approaches that involve at least one hardware device storing secret information, which do not need a connection to a machine in order to operate.

\paragraph{\textbf{Resilience to Tampering with the Client.}}
We attribute this property (Y) to all hardware wallets that sign transactions within a device, while they require $\mathbb{U}$ to confirm transaction's details at the device (based on displayed information).
Then, we attribute this property to wallets containing multiple clients that collaborate in several steps to co-signs transactions (a chance that all of them are tampered with is low).

\paragraph{\textbf{Post-Quantum Resilience.}}
We attribute this property (Y) to approaches that utilize hash-based cryptography that is known to be resilient against quantum computing attacks \cite{amy2016estimating}.

\paragraph{\textbf{No Need for Off-Chain Communication.}}
We attribute this property (Y) to approaches that do not require an off-chain communication/transfer of transaction among parties/devices
to build a final (co-)signed transaction, before submitting it to a blockchain (applicable only for $Z \geq 2$ or $W_i \ge 2$).

\paragraph{\textbf{Malware Resistance (e.g., Key-Loggers).}}
We attribute this property (Y) to approaches that either enable signing  transactions inside of a sealed device or split signing control over secrets across multiple devices. 

\paragraph{\textbf{Secret(s) Kept Offline.}}
We attribute this property (Y) to approaches that keep secrets inside their sealed storage, while they expose only signing functionality.
Next, we attribute this property to paper wallets and fully air-gapped devices.

\paragraph{\textbf{Independence of Trusted Third Party}}
We attribute this property (Y) to approaches that do not require trusted party for operation, while we do not attribute this property to all client-side and server-side hosted wallets.
We partially (P) attribute this property to approaches requiring an external relay server for their operation.

\paragraph{\textbf{Resilience to Physical Theft.}}
We attribute this property (Y) to approaches that are protected by an encryption password or PIN.
We partially (P) attribute this property to approaches that do not provide password and PIN protection but have a specific feature to enforce uniqueness of an environment in which they are used (e.g., bluetooth pairing). 

\paragraph{\textbf{Resilience to Password Loss.}}
We attribute this property (Y) to approaches that provide means for recovery of secrets (e.g., a seed of hierarchical deterministic wallets).

\section{SmartOTPs}\label{sec:wallets-SmartOTPs}
In this section, we propose SmartOTPs, a smart-contract wallet framework that gives a flexible, usable, and secure way of managing crypto-tokens in a self-sovereign fashion.
The proposed framework consists of four components (i.e., an authenticator, a client, a hardware wallet, and a smart contract), and it provides 2-factor authentication (2FA) performed in two stages of interaction with the blockchain.
To the best of our knowledge, our framework is the first one that utilizes one-time passwords (OTPs) in the setting of the public blockchain.
In SmartOTPs, the OTPs are aggregated by a Merkle tree and hash chains whereby for each authentication only a short OTP (e.g., 16B-long) is transferred from the authenticator to the client.
Such a novel setting enables us to make a fully air-gapped authenticator by utilizing small QR codes or a few mnemonic words, while additionally offering resilience against quantum cryptanalysis.
We have made a proof-of-concept based on the Ethereum platform. 
Our cost analysis shows that the average cost of a transfer operation is comparable to existing 2FA solutions using smart contracts with multi-signatures.

\subsubsection{Notation}\label{sec:notation}
By the term \textit{operation} we refer to an action with a smart-contract wallet using SmartOTPs, which may involve, for instance, a transfer of crypto-tokens or a change of daily spending limits.
Then, we use the term \textit{transfer} for the indication of transferring crypto-tokens. 
By $\{msg\}_\mathbb{U}$ we denote the message $msg$ digitally signed by $\mathbb{U}$, and by $msg.\sigma$ we refer to the signature;
$\mathcal{RO}$ is the random oracle;
$h(.)$: stands for a cryptographic hash function;
$h^i(.)$ substitutes $i$-times chained function $h(.)$, e.g., $h^2(.) \equiv h(h(.))$;
$\|$ is the string concatenation;
$h_{\mathcal{D}}^i(.)$ substitutes $i$-times chained function $h(.)$ with embedded domain separation, e.g., $h_{\mathcal{D}}^2(.) = h(2 ~||~ h(1~||~.))$;
$F_k(.) \equiv h(k ~\|~ .)$ denotes a pseudo-random function that is parametrized by a secret seed $k$;
$\%$ represents modulo operation over integers; 
$\Sigma. \{KeyGen, Verify, Sign\}$ represents a signature scheme of the blockchain platform;
$SK_\mathbb{U}$, $PK_\mathbb{U}$ is the private/public key-pair of $\mathbb{U}$, under $\Sigma$,
and $a~|~b$ represents bitwise OR of arguments $a$ and $b$.

\subsection{Problem Definition}
The main goal of this research is to propose a cryptocurrency wallet framework that provides a secure and usable way of managing crypto-tokens. 
In particular, we aim to achieve:
\begin{compactdesc}
	\item[Self-Sovereignty:] 
	ensures that the user does not depend on the 3rd party's infrastructure, and the user does not share his secrets with anybody.
	Self-sovereign (i.e., non-custodial) wallets do not pose a single point of failure in contrast to server-side (i.e., custodial) wallets, which when compromised, resulted in huge financial loses \cite{2018-coindesk-bithumb,2014-Mt-Gox,2016-Bitfinex-hack,moore2013beware,binance-hack-2019}.
	\item[Security:] the insufficient security level of some self-sovereign wallets has caused significant financial losses for individuals and companies \cite{2016-brainwallets,courtois2016speed,CHHMPRSS18,parity-bug-July-17}. 
	We argue that wallets should be designed with security in mind and in particular, we point out 2FA solutions, which have successfully contributed to the security of other environments \cite{aloul2009two,schneier2005two}.
	Our motivation is to provide a cheap security extension of the hardware wallets (i.e., the first factor) by using OTPs as the second factor in a fashion similar to Google Authenticator.   
\end{compactdesc}

\subsection{Threat Model}\label{sec:threat-model}
For a generic cryptocurrency, we assume an adversary $\mathcal{A}$ whose goal is to conduct unauthorized operations on the user's behalf or render the user's wallet unusable. 
$\mathcal{A}$ is able to eavesdrop on the network traffic as well as to participate in the underlying consensus protocol. 
However, $\mathcal{A}$ is unable to take over the cryptocurrency platform nor to break the used cryptographic primitives.
We further assume that $\mathcal{A}$ is able to intercept and ``override'' the user's transactions, e.g., by launching a man-in-the-middle (MITM) attack or by creating a conflicting malicious transaction with a higher fee, which will incentivize miners to include $\mathcal{A}$'s transaction and discard the user's one; this attack is also referred to as \textit{transaction front-running}.
We assume three types of exclusively occurring attackers, each targeting one of the three components of our framework: (1) $\mathcal{A}$ with access to the user's private key hardware wallet $\mathbb{W}$, (2) $\mathcal{A}$ that tampers with the client $\mathbb{C}$, and for completeness we also assume (3) $\mathcal{A}$ with access to the authenticator $\mathbb{A}$.
Next, we assume that the legitimate user correctly executes the proposed protocols and $h(.)$ is an instantiation of random oracle $\mathcal{RO}$.

\subsection{Design Space}\label{sec:design-space}
There are many types of wallets with different properties (see \autoref{sec:soa-wallet-types}).
In our context, to achieve self-sovereignty we identify smart-contract wallets as a promising category.
These wallets manage crypto-tokens by the functionality of smart contracts, enabling users to have customized control over their wallets. 
The advantages of these solutions are that spending rules can be explicitly specified and then enforced by the cryptocurrency platform itself.
Therefore, using this approach, it is possible to build a flexible wallet with features such as daily spending limits or transfer limits.

\subsubsection{General OTPs}
With spending rules encoded in a smart contract, it is feasible to design custom security features, such as OTP-based authentication serving as the second factor. 
In such a setting, the authenticator produces OTPs to authenticate transactions in the smart contract.
However, in contrast to digital signatures,
OTPs do not provide non-repudiation of data present in a transaction with an OTP; moreover, they can be intercepted and misused by the front-running or the MITM attacks.
To overcome this limitation, we argue that a two-stage protocol $\Pi_O^{<G>}$ must be employed, enabling secure utilization of general OTPs in the context of blockchains.
In the first stage of $\Pi_O^{<G>}$, an operation $O$, signed by the user $\mathbb{U}$, is submitted to the blockchain, where it obtains an identifier $i$. 
Then, in the second stage, $O_{i}$ is executed on the blockchain upon the submission of $OTP_{i}$ that is unambiguously associated with the operation initiated in the first stage. 		

\subsubsection{Requirements of General and Air-Gapped OTPs}
Based on the above, we define the necessary security requirements of general OTPs used in the blockchain as follows:
\begin{compactenum}
	\item \textbf{Authenticity:} each OTP must be associated only with a unique authenticator instance. 
	
	\item \textbf{Linkage:} each $OTP_{i}$ must be linked with exactly a single operation $O_i$, ensuring that $OTP_i$ cannot be misused for the authentication of $O_j, i \neq j$.
	
	\item \textbf{Independence:} $OTP_i$ linked with the operation $O_i$
	cannot be derived from $OTP_j$  of an operation $O_j$, where $i \neq j$, or an arbitrary set of other OTPs. 	
\end{compactenum}
Nevertheless, in the air-gapped setting (important for a high usability and security), one more requirement comes into play: \textbf{the short length of OTPs}. 
Short OTPs allow the users to use a relatively small number of mnemonic words or a small QR code to transfer an OTP in an air-gapped fashion.
This requirement is of high importance especially in the case when the authenticator is implemented as a resource-constrained embedded device with a small display (e.g., credit-card-shaped wallet, such as CoolBitX \cite{CoolWalletS}).

\subsubsection{Analysis of Existing Solutions}
We argue that not all solutions meet the requirements of air-gapped OTPs.
Asymmetric cryptography primitives such as digital signatures or zero-knowledge proofs are inadequate in this setting, despite meeting all general OTP requirements.
State-of-the-art signature schemes with reasonable performance overhead \cite{bernstein2012high,johnson2001elliptic} and short signature size produce a 48B-64B long output.
The BLS signatures \cite{boneh2001short} go even beyond the previous constructs and might produce signatures of size 32B. 
Nevertheless, BLS signatures are unattractive in the setting of the smart contract platforms that put high execution costs for BLS signature verification, which is $\sim$33 times more expensive than in the case of ECDSA with the equivalent security level \cite{RFC-BLS-signatures}.
Hence, we assume 48B as the minimal feasible OTP size for assymetric cryptography.

However, transferring even 48B in a fully air-gapped environment by transcription of mnemonic words \cite{bipMnemonic} would lack usability for regular users -- considering study from Dhakal et al. \cite{Dhakal-typing-study}, transcription of 36 English words takes 42s on average, which is much longer than users are willing to ``sacrifice.''
We note that the situation is better with QR code, but on the other hand it has two limitations: 
(1) when the authenticator is implemented as a simple embedded device, its display might be unable to fit a requested QR code with sufficient scanning properties (to preserve the maximal scanning distance of QR code, the ``denser'' QR code must be displayed in a larger image \cite{QR2011size}) and 
(2) occasionally, the users might not have a camera in their devices, thus, they can proceed only with a fallback method that uses mnemonics.
Finally, most of the currently deployed asymmetric constructions are vulnerable to quantum computing \cite{bernstein2009introduction}.

The problem of long signatures also exists in hash-based signature constructs \cite{lamport1979constructing-lamportSigs,dods2005hash-winternitz,merkle1989certified}. 
Lamport-Diffie one-time signatures (LD-OTS) \cite{lamport1979constructing-lamportSigs} produce an output of length $2|h(.)|^2$, which, for example in the case of $|h(.)| = 16B$ yields $4kB$-long signatures. 
The signature size of LD-OTS can be reduced by using one string of one-time key for simultaneous signing of several bits in the message digest (i.e., Winternitz one-time signatures (W-OTS) \cite{dods2005hash-winternitz}), but at the expense of exponentially increased number of hash computations (in the number of encoded bits) during a signature generation and verification.
The extreme case minimizing the size of W-OTS to $|h(.)|$ (for simplicity omitting checksum) would require $2^{|h(.)|}$ hash computations for signature generation, which is unfeasible. 

Approaches based on symmetric cryptography primitives produce much shorter outputs, but it is challenging to implement them with smart-contract wallets.
Widely used one-time passwords like HOTP \cite{m2005hotp} or TOTP \cite{m2011totp} require the user to share a secret key $k$ with the authentication server. 
Then, with each authentication request the user proves that he possesses $k$ by returning the output of an $F_k(.)$ computed with a nonce (i.e., HOTP) or the current timestamp (i.e., TOTP).
This approach is insecure in the setting of the blockchain, as the user would have to share the secret $k$ with a smart-contract wallet, making $k$ publicly visible. 

A solution that does not publicly disclose secret information and, at the same time, provides short enough OTPs (e.g., $16B \simeq 12$ mnemonic words $\simeq$ QR code v1), can be implemented by Lamport's hash chains \cite{lamport1981password} or other single hash-chain-based constructs, such as T/Key \cite{kogan2017t}.
A hash chain enables the production of many OTPs by the consecutive execution of a hash function, starting from $k$ that represents a secret key of the authenticator.
Upon the initialization, a smart contract is preloaded with the last generated value $h^n(k)$.
When the user wants to authenticate the $i$th operation, he sends the $h^{n-i}(k)$ to the smart contract in the second stage of $\Pi_O^{<G>}$. 
The smart contract then computes $h(.)$ consecutively $i$ times and checks to ascertain whether the obtained value equals the stored value. 
However, the main drawback of this solution is that \textit{each OTP can be trivially derived from any previous one}, and thereby this scheme does not meet the requirement of OTPs on independence. 
To detail an attack misusing this flaw, assume the MITM attacker possessing $SK_{\mathbb{U}}$ (i.e., the first factor) is able to initiate operations in the first stage of $\Pi_O^{<G>}$.
The attacker $\mathcal{A}$ initiates operation $O_i$ and waits for $\mathbb{U}$ to initiate and confirm an arbitrary follow-up operation $O_j, j > i$. 
When $\mathbb{U}$ sends $OTP_j$ in the second stage of $\Pi_O^{<G>}$, $\mathcal{A}$ intercepts and ``front-runs'' the user's transaction by a malicious transaction with $OTP_i$ computed as $h^{j-i}(OTP_j)$.
Although one may argue that this scheme can be hardened by a modification denying to confirm older operations than the last initiated one, it would bring a race condition issue in which $\mathcal{A}$ might keep initiating operations in the first stage of $\Pi_O^{<G>}$ each time he intercepts a confirmation transaction from $\mathbb{U}$, causing the DoS attack on the wallet.

\subsection{Proposed Approach}\label{sec:wallets-SmartOTPs-overview}
For a generic cryptocurrency with Turing-complete smart contract platform, we propose SmartOTPs, a 2FA against the blockchain, which consists of: (1) a client $\mathbb{C}$, (2) a private key hardware wallet $\mathbb{W}$ equipped with a display, (3) a smart-contract $\mathbb{S}$, and (4) an air-gapped authenticator $\mathbb{A}$ that might be implemented as an embedded device with limited resources.
First, we explain the key idea of our approach, which enables us to construct $\mathbb{A}$ as a fully air-gapped device.
Then, we present the base version of SmartOTPs, and finally, we describe modifications.

\subsubsection{Design of an Air-Gapped Authenticator}
In our approach, OTPs are generated by a pseudo-random function $F_k(.)$ and then aggregated by a Merkle tree, providing a single value, the root hash ($\mathcal{R}$).
$\mathcal{R}$ is stored at $\mathbb{S}$ and serves as a PK for OTPs.
Assuming the two stage protocol $\Pi_O^{<G>}$ (further denoted as $\Pi_O$), the user $\mathbb{U}$ might confirm the initiated operation $O_{opID}$ by a corresponding $OTP_{opID}$ (provided by $\mathbb{A}$) in the second stage of $\Pi_O$, whereby $\mathbb{S}$ verifies the correctness of $OTP_{opID}$ with use of $\mathcal{R}$.
A challenge of such an approach is the size of an OTP.

\paragraph{\textbf{From Straw-Man to the Base Version.}}
Using the straw-man version, a 2FA requires $\mathbb{A}$ to provide an OTP and its proof.
However, in such a straw-man version, the user $\mathbb{U}$ has to transfer $\frac{(S + S \times H)}{8}$ bytes from $\mathbb{A}$ each time he confirms an operation, where $S$ represents the bit-length of an OTP as well as the output of $h(.)$, and $H$ represents the height of a Merkle tree with $N$ leaves; hence $H = log_2(N$).
For example, if $S = 256$ and $H = 10$, then $\mathbb{U}$ would have to transfer 352B each time he confirms an operation, which has very low usability in an air-gapped setting utilizing transcription of mnemonic words \cite{bipMnemonic} (i.e., 264 words) or scanning of several QR codes (e.g., 21 QR codes v1) displayed on an embedded device with a small display.
Even further reduction of $S$ to 128 bits would not help to resolve this issue, as the amount of user transferred data would be equal to 176B $\simeq$ 132 mnemonic words $\simeq$ 11 QR codes v1.

We make the observation that it is possible to decouple providing OTPs from providing their proofs.
The only data that need to be kept secret are OTPs, while any node of a Merkle tree may potentially be disclosed -- no OTP can be derived from these nodes.
Therefore, we propose providing OTPs by $\mathbb{A}$, while their proofs can be constructed at $\mathbb{C}$ from stored hashes of OTPs. 
This modification enables us to fetch the nodes of the proof from the storage of $\mathbb{C}$, while $\mathbb{U}$ has to transfer only the OTP itself from $\mathbb{A}$ when confirming an operation (i.e, $S = 128 \simeq 12$ mnemonic words by default).

\subsubsection{Base Version}\label{sec:base-version}

\begin{algorithm}[t]	
	\caption{Smart contract $\mathbb{S}$ with 2FA}\label{alg:wallet-overview}
	\scriptsize 
	
	\SetKwProg{func}{function}{}{}
	
	$\triangleright$ \textsc{Variables and functions of environment:} \\
	\hspace{1em} \textit{tx}: a current transaction processed by $\mathbb{S}$,\\
	\hspace{1em} \textit{balance}: the current balance of a contract, \\
	\hspace{1em} \textit{transfer(r, v)}: transfer \textit{v} crypto-tokens from a smart contract to \textit{r},\\
	
	\smallskip
	$\triangleright$ \textsc{Declaration of types:}\\
	\hspace{1em} \textbf{Operation} \{ addr, param, pending, type $\in \{ \text{TRANSFER}, \ldots\}$ \} \\
	
	\smallskip
	$\triangleright$ \textsc{Declaration of functions:}
	
	\func{$constructor$(root, pk) \textbf{public} }{
		operations $\leftarrow$ []; \Comment{An append-only list} \\ 
		$PK_\mathbb{U}$ $\leftarrow$ \textit{pk},
		$~$$\mathcal{R}$ $\leftarrow$ root, 
		$~$nextOpID $\leftarrow$ 0; \\
		\textbf{return} $\mathbb{S}^{ID};$ \Comment{Computed by a blockchain platform.} \\
	}

	\func{$initOp$(a, p, type) \textbf{public} }{
		\textbf{assert} $\Sigma.verify(tx.\sigma, PK_\mathbb{U})$; \Comment{1st factor of 2FA} \\						
		opID $\leftarrow$ nextOpID$\texttt{++}$; \\
		operations[opID] $\leftarrow$ \textbf{new} Operation(a, p, \textbf{true}, type); \\
	}
	
	\func{$confirmOp$(otp, $\pi$, opID) \textbf{public} } { 
		\textbf{assert}  operations[opID].pending; \\
		verifyOTP(otp, $\pi$, opID); \Comment{2nd factor of 2FA} \\									
		execOp(operations[opID]); \\
		operations[opID].pending $\leftarrow$ \textbf{false}; \\
	}
	
	\func{$verifyOTP$(otp, $\pi_{opID}$, opID) \textbf{private} } {
		\textbf{assert} deriveRootHash(otp, $\pi_{opID}$, opID) = $\mathcal{R}$; \\
	}
	
	\func{$execOp$(oper) \textbf{private} }{
		\If{TRANSFER = oper.type}{
			\textbf{assert} oper.param  $\leq$ \textit{balance}; \\
			\textit{transfer}(oper.addr, oper.param); \\
			
		} 
	}						
\end{algorithm}

\paragraph{\textbf{Secure Bootstrapping.}}\label{sec:bootstraping-sec}
As common in other schemes and protocols, by default, we assume a secure environment for bootstrapping protocol $\Pi_{B}^\mathcal{S}$ (see \autoref{fig:smart-contract-deploy}). 
First, $\mathbb{A}$ generates a secret seed $k$, which  is stored as a recovery phrase by $\mathbb{U}$.
$\mathbb{W}$ generates a key-pair $SK_\mathbb{U}, PK_\mathbb{U} \leftarrow \Sigma.KeyGen()$.
Next, $\mathbb{U}$ transfers $k$ from $\mathbb{A}$ to $\mathbb{C}$ in an air-gapped manner (i.e., transcribing a few mnemonic words or scanning a QR code). 
Then, $\mathbb{C}$ generates OTPs by computing $F_k(i)~|~i \in \{0, 1, \ldots, N-1\}$, where $N$ is the number of leaves (equal to the number of OTPs in the base version).
Next, $\mathbb{C}$ computes and stores the leaves of the tree -- i.e., the hashes of the OTPs (i.e., $hOTPs$), which do not contain any confidential data.\footnote{To improve performance during provisioning of proofs, $\mathbb{C}$ might additionally store non-leaf nodes, increasing the requirement on $\mathbb{C}$'s storage 2x.}
After this step, $k$ and the OTPs are deleted from $\mathbb{C}$, and $\mathbb{C}$ computes $\mathcal{R}$ from the stored hashes of the OTPs.
Then, $\mathbb{C}$ creates a transaction containing constructor of $\mathbb{S}$ (see \autoref{alg:wallet-overview}) with $\mathcal{R}$ as the argument and passes it to $\mathbb{W}$ for appending $PK_\mathbb{U}$.
Finally, $\mathbb{C}$ sends the transaction with the constructor to the blockchain where the deployment of $\mathbb{S}$ is made.\footnote{$\mathbb{C}$ has the template of $\mathbb{S}$ and the deployment process is unnoticeable for the users.}  
In the constructor, $\mathcal{R}$ with $PK_\mathbb{U}$ are stored and ID of $\mathbb{S}$ (i.e., $\mathbb{S}^{ID}$) is assigned by a blockchain platform and returned in a response.\footnote{Note that $\mathbb{S}^{ID}$ represents a public identification of $\mathbb{S}$, which serves as a destination for sending crypto-tokens to $\mathbb{S}$ by any party.}
Storing $\mathcal{R}$ and $PK_\mathbb{U}$ binds an instance of $\mathbb{S}$ with the user's authenticator $\mathbb{A}$ and the user's private key wallet $\mathbb{W}$, respectively. 
In detail, $PK_\mathbb{U}$ enables $\mathbb{S}$ to verify whether an arbitrary transaction was signed by the user who created $\mathbb{S}$, while $\mathcal{R}$ enables the verification whether the given OTP was produced by the user's $\mathbb{A}$.

\paragraph{\textbf{Operation Execution.}}
When the wallet framework is initialized, it is ready for executing operations by a two-stage protocol $\Pi_{O}$ (see \autoref{fig:smart-contract-execution}): 
\begin{figure}[t]
	\begin{center}
		\vspace{-0.2cm}
		\includegraphics[width=0.5\textwidth]{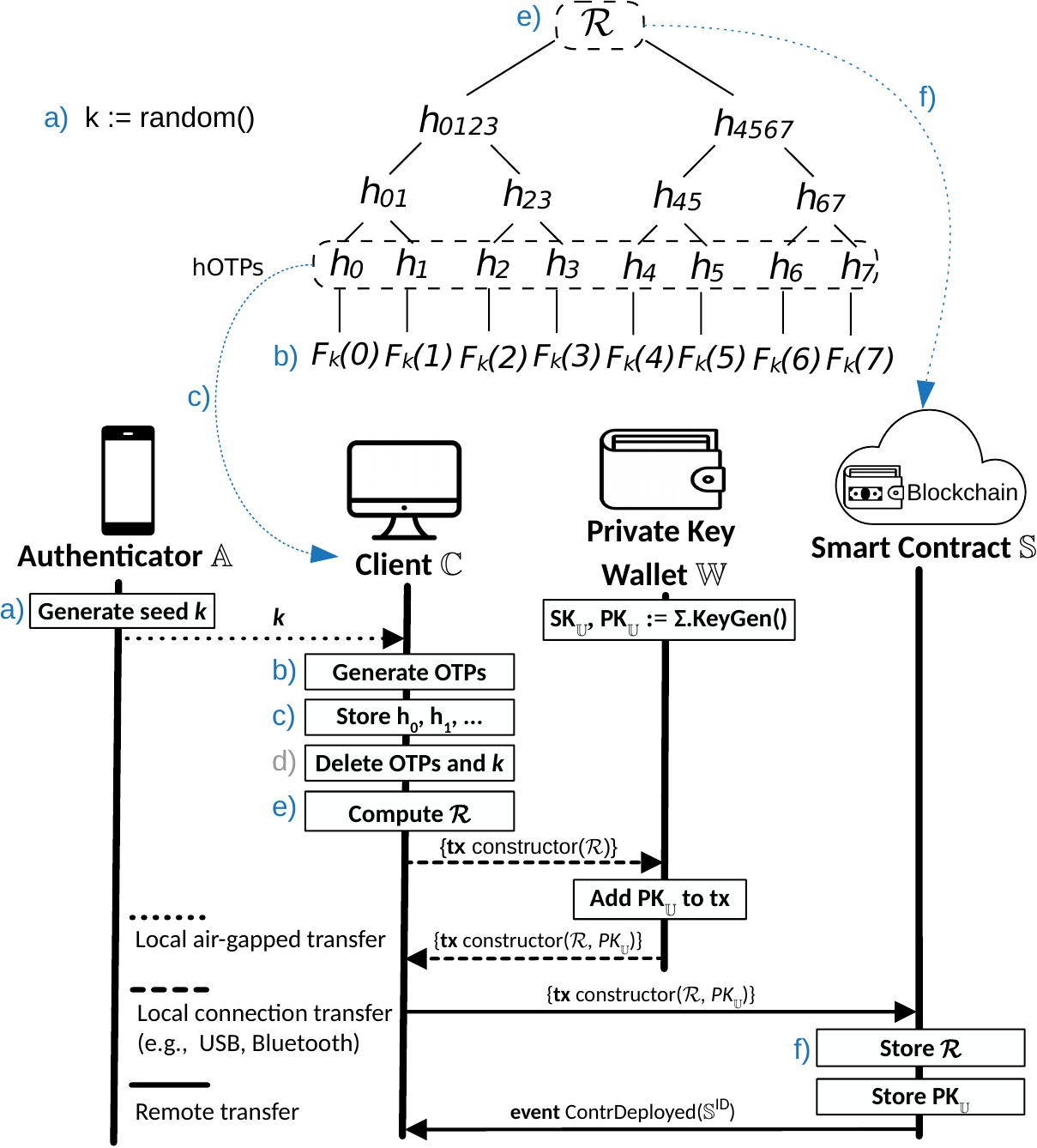} 
		\caption{Bootstrapping of SmartOTPs in a secure environment ($\Pi_{B}^\mathcal{S}$).}
		
		\label{fig:smart-contract-deploy}
		\vspace{-0.1cm}
	\end{center}	
\end{figure}
\begin{compactenum}
	\item \textbf{Initialization Stage}.
	When $\mathbb{U}$ decides to execute an operation with SmartOTPs, he enters the details of the operation into $\mathbb{C}$ that creates a transaction calling \textit{initOp()}, which is provided with operation-specific parameters -- the type of operation (e.g., transfer), a numerical parameter (e.g., amount or daily limit), and an address parameter (e.g., recipient). 
	Then, $\mathbb{C}$ sends this transaction to $\mathbb{W}$, which displays the details of the transaction and prompts $\mathbb{U}$ to confirm signing by a hardware button.
	Upon confirmation, $\mathbb{W}$ signs the transaction by $SK_{\mathbb{U}}$ and sends it back to $\mathbb{C}$.
	$\mathbb{C}$ forwards the transaction to $\mathbb{S}$. 
	In the function \textit{initOp()}, $\mathbb{S}$ verifies whether the signature was created by $\mathbb{U}$ (the first factor), stores the parameters of the operation, and then assigns a sequential ID (i.e., $opID$) to the initiated operation.
	In the response from $\mathbb{S}$, $\mathbb{C}$ is provided with an $opID$.

	\item \textbf{Confirmation Stage}.
	After the transaction (that initiated the operation) is persisted on the blockchain, $\mathbb{U}$ proceeds to the second stage of $\Pi_{O}$.
	$\mathbb{U}$ enters $opID$ to $\mathbb{A}$, which, in turn, computes and displays $OTP_{opID}$ as $F_k(opID)$.
	Storing $hOTPs$ computed from OTPs at $\mathbb{C}$ enables $\mathbb{U}$ to transfer only the displayed OTP from $\mathbb{A}$ to $\mathbb{C}$, which can be accomplished in an air-gapped manner. 
	Considering the mnemonic implementation \cite{bipMnemonic}, this means an air-gapped transfer of 12 words in the case of $O=\text{16B}$.
	Then, $\mathbb{C}$ computes and appends the corresponding proof $\pi_{opID}$ to the OTP.
	The proof of the OTP is computed from stored $hOTPs$ in the $\mathbb{C}$'s storage (or directly fetched from the storage if $\mathbb{C}$ stores all nodes of the Merkle tree).
	Next, $\mathbb{C}$ sends a transaction with $OTP_{opID}$ and its proof $\pi_{opID}$ to the blockchain, calling the function \textit{confirmOp()} of $\mathbb{S}$, which handles the second factor. 
	This function verifies the authenticity of the OTP (i.e., the first requirement of OTPs) and its association with the requested operation (i.e., the second requirement of OTPs), which together implies the correctness of the provided OTP.\footnote{Note that SmartOTPs meet the third requirement of OTPs by the design.}
	In detail, upon calling the \textit{confirmOp()} function with $opID$, $OTP_{opID}$, and $\pi_{opID}$ as the arguments, $\mathbb{S}$ reconstructs the root hash from the provided arguments by the function \textit{deriveRootHash()} that is presented in Appendix of \cite{homoliak2020smartotps}.
	If the reconstructed value matches the stored value $\mathcal{R}$, the operation is executed (e.g., crypto-tokens are transferred).
	
\end{compactenum}

\begin{figure}[t]
	\begin{center}
		\vspace{-0.2cm}
		\includegraphics[width=0.52\textwidth]{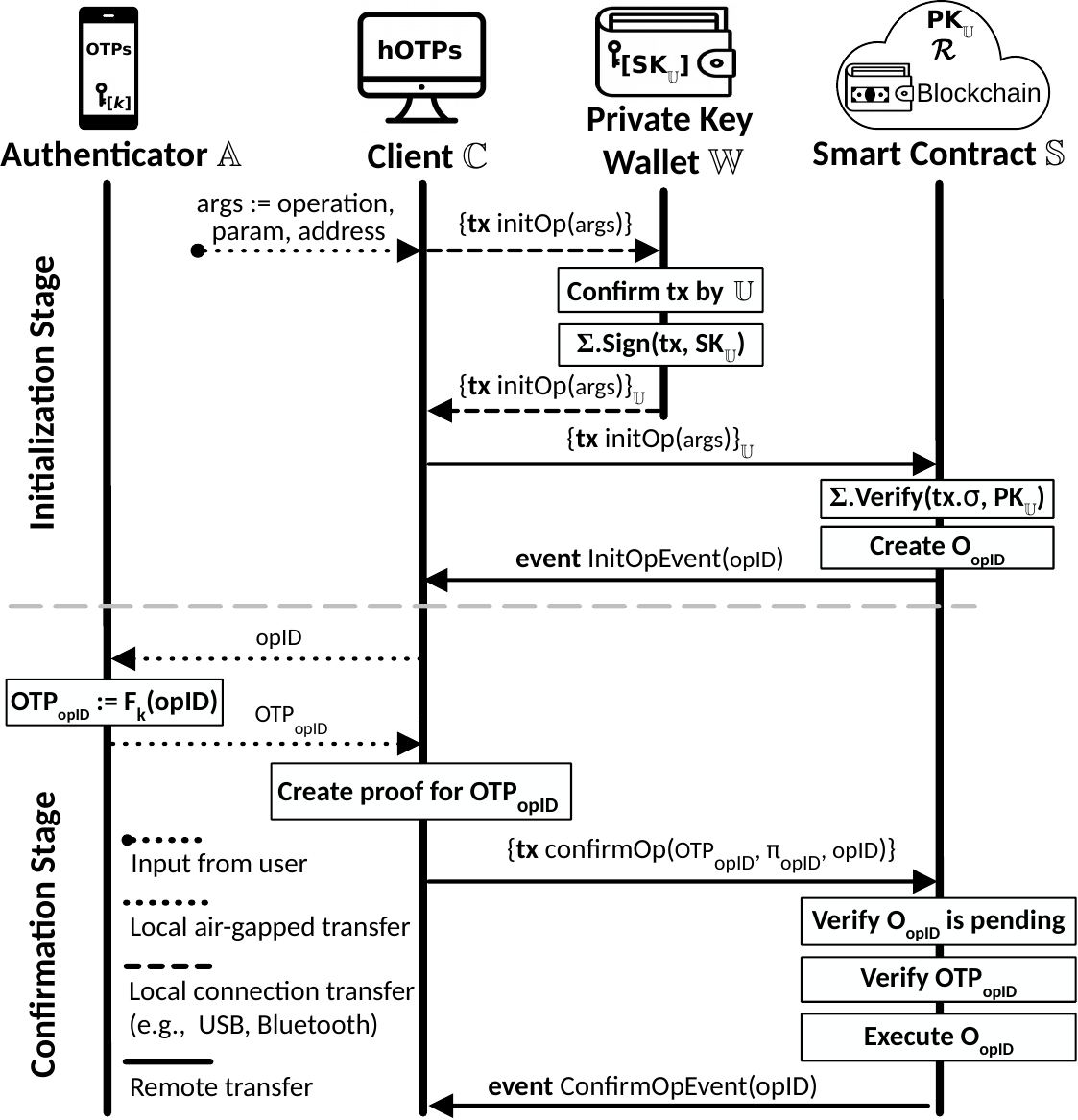} 
		\caption{Execution of an operation ($\Pi_{O}$).}
		\label{fig:smart-contract-execution}		
	\end{center}	
	\vspace{-0.1cm}
\end{figure}

\noindent
In the following, we present extensions of SmartOTPs, improving its efficiency and usability, and introducing new features.

\subsubsection{\textbf{Bootstrapping in an Insecure Environment}}\label{sec:bootstraping-insec}
The main advantage of $\Pi_{B}^\mathcal{S}$ described above is its high usability, requiring only an air-gapped transfer of $k$ and connected $\mathbb{W}$.
However, $\Pi_{B}^\mathcal{S}$ is not resistant against $\mathcal{A}$ tampering with $\mathbb{C}$; $\mathcal{A}$ might intercept $k$ or forge $\mathcal{R}$ for $\mathcal{R}'$.
Similarly, $\mathcal{A}$ might forge $PK_\mathbb{U}$ for $PK_\mathcal{A}$, while staying unnoticeable for $\mathbb{U}$ who expects that $\mathbb{S}^{ID}$ obtained is correct.
Therefore, we propose an alternative bootstrapping protocol $\Pi_{B}^\mathcal{I}$ (see Appendix of \cite{homoliak2020smartotps}), assuming that $\mathcal{A}$ can tamper with $\mathbb{C}$ during bootstrapping.
In this protocol, first we protect SmartOTPs from the interception of $k$ and then from forging $\mathcal{R}$ and $PK_\mathbb{U}$.

To avoid the interception of $k$, instead of transferring $k$, $\mathbb{U}$ performs a transfer of all leaves of the Merkle tree (i.e., $hOTPs$) from $\mathbb{A}$ to $\mathbb{C}$, which can be achieved with a microSD card.
Note that the leaves are hashes of OTPs, hence they do not contain any confidential data.
Next, to protect SmartOTPs from forging of $PK_\mathbb{U}$ and $\mathcal{R}$,
we require a deterministic computation of $\mathbb{S}^{ID}$ by a blockchain platform using $PK_\mathbb{U}$ and $\mathcal{R}$, hence $\mathbb{S}^{ID}$ can be computed and displayed together with $\mathcal{R}$ in $\mathbb{W}$ before the deployment of $\mathbb{S}$.
In detail, $\mathbb{S}^{ID}$ is computed as $h(PK_\mathbb{U} ~\|~ \mathcal{R})$, thus each pair consisting of a public key and a root hash  maps to the only $\mathbb{S}^{ID}$.
However, even with this modification, $\mathcal{R}$ can still be forged by $\mathbb{C}$. 
Therefore, when transaction with the constructor is sent to $\mathbb{W}$, $\mathbb{U}$ has to compare $\mathcal{R}$ displayed at $\mathbb{W}$ with the one computed and displayed by $\mathbb{A}$. 
In the case of equality, $\mathbb{U}$ records $\mathbb{S}^{ID}$ displayed in $\mathbb{W}$.

\subsubsection{Increasing the Number of OTPs}\label{sec:increasingNoOfOTPs}
A small number of OTPs can have negative usability and security implications.
First, users executing many transactions\footnote{E.g., several smart contracts in Ethereum have over $2^{20}$ transactions made.}
would need to create new
OTPs often, and thus change their addresses.
Second, an attacker possessing $SK_{\mathbb{U}}$ can flood $\mathbb{S}$ with initialized operations, rendering all the OTPs unusable.
Therefore, we need to increase the number of OTPs to make the attack unfeasible.
However,  increasing the number of OTPs linearly increases the amount of data
that $\mathbb{C}$ needs to preserve in its storage. 
For example, if the number of OTPs is $2^{20}$, then $\mathbb{C}$ has to
store $33.6MB$ of data (considering $S=16B$ and $\mathbb{C}$ storing all leaves), which is feasible even on storage-limited devices.
However, e.g., for $2^{32}$ OTPs, $\mathbb{C}$ needs to store $137.4GB$ of data, which might be infeasible even on PCs, especially when $\mathbb{C}$ handles multiple instances of SmartOTPs.

To resolve this issue, we modify the base approach by applying a
time-space trade-off \cite{hellman1980cryptanalytic} for OTPs.
Namely, we introduce  hash chains of which last items are aggregated by the Merkle tree. 
With such a construction, OTPs can be encoded as elements of chains and revealed layer by layer in the reverse order of creating the chains. 
This allows multiplication of the number of OTPs by the chain length without
increasing the $\mathbb{C}$'s storage but imposing a larger number of hash computations on $\mathbb{S}$ and $\mathbb{A}$.
Nonetheless, smart contract platforms set only a low execution cost for $h(.)$.

An illustration of this construction is presented in the bottom left part of
\autoref{fig:smartotps-overview}.
A hash chain of length $P$ is built from each OTP assumed so far. 
Then, the last items of all hash chains are used as the first iteration layer, which provides $\frac{N}{P}$ OTPs.\footnote{For simplicity, we assume  that $GCD(N,P) = P$.} 
Similarly, the penultimate items of all the hash chains are used as the second iteration layer, etc., until the last iteration layer consisting of the first items of hash chains (i.e., outputs of $F_k(.)$) has been reached (see the middle part of \autoref{fig:smartotps-overview}).
We emphasize that introducing hash chains may cause a violation of the requirement on the independence of OTPs if implemented incorrectly; 
i.e., OTPs from upper iteration layers can be derived from lower layers.
Therefore, to enforce this requirement, we invalidate all the OTPs of all the previous iteration layers by a sliding window at $\mathbb{S}$.

Furthermore, if a hash chain were to use the same hash function throughout the entire chain, it would be vulnerable to birthday attacks \cite{hu2005efficient}.
To harden a hash chain against a birthday attack, a \textit{domain separation} proposed by Leighton and Micali \cite{leighton1995large} can be used: a different hash function is applied in each step of a hash chain.
Note that without domain separation, inverting the $i$th iterate of $h(.)$ is $i$ times easier than inverting a single hash function (see the proof in \cite{haastad2001practical}).
Therefore, we use a different hash function for all but the last iteration layer $1 \leq i < P$ as follows:
\begin{eqnarray}\label{eqn:hash-with-domain-sep}
h_{\mathcal{D}[i]}(x) &=& h(P - i + 1~||~x),
\end{eqnarray}
where $x$ represents the OTP from the next iteration layer.

Although domain separation hardens a single hash chain against the birthday attack, this attack is still possible within the current iteration layer, which is an inevitable consequence of using multiple hash chains.
Therefore, the number of leaves $\mathcal{L}$ (i.e., N/P) is the parameter that must be considered when quantifying the security level of our scheme (see \autoref{sec:analysis}).

With this improvement, $\mathbb{A}$ is updated to provide OTPs by
\begin{eqnarray}\label{eqn:hash-chains}
getOTP(i) &=& h^{\alpha(i)}_{\mathcal{D}} \Bigg( F_k \Big(\beta(i) \Big) \Bigg),
\end{eqnarray}
where $i$ is the operation ID, $\alpha(i)$ determines the index in a hash chain, 
and $\beta(i)$ determines the index in the last iteration layer of OTPs.
We provide concrete expressions for $\alpha(i)$ and $\beta(i)$ in \autoref{eqn:alpha-beta-final}, which involves all proposed improvements and optimizations.
A derivation of $\mathcal{R}$ from the OTP at $\mathbb{S}$ needs to be updated as well (see  Appendix of \cite{homoliak2020smartotps}).
In detail, $\mathbb{S}$ executes $P ~-~ \alpha(i) ~-~ 1 = \left\lfloor \frac{i P}{N} \right\rfloor$ hash computations, which is a complementary number to the number of hash computations at $\mathbb{A}$ with regard to $P$.
Also, $\mathbb{C}$ has to be modified, requiring computation of a proof to use the leaf index relative to the current iteration layer of OTPs (i.e., $i ~\%~ \frac{N}{P})$.

With this improvement, given the number of leaves equal to $2^{20}$ and $P=2^{12}$, $\mathbb{C}$ stores only $33.6MB$ of data and it has $2^{32}$ OTPs available.
On the other hand, this modification implies, on average, the execution of additional $P/2$ hash computations  at $\mathbb{S}$, imposing additional costs.
However, our experiments show the benefits of this approach (see \autoref{sec:costs-analysis}).

\begin{figure*}[t]
	\begin{center}
		\vspace{-0.3cm}
		\includegraphics[width=1.02\textwidth]{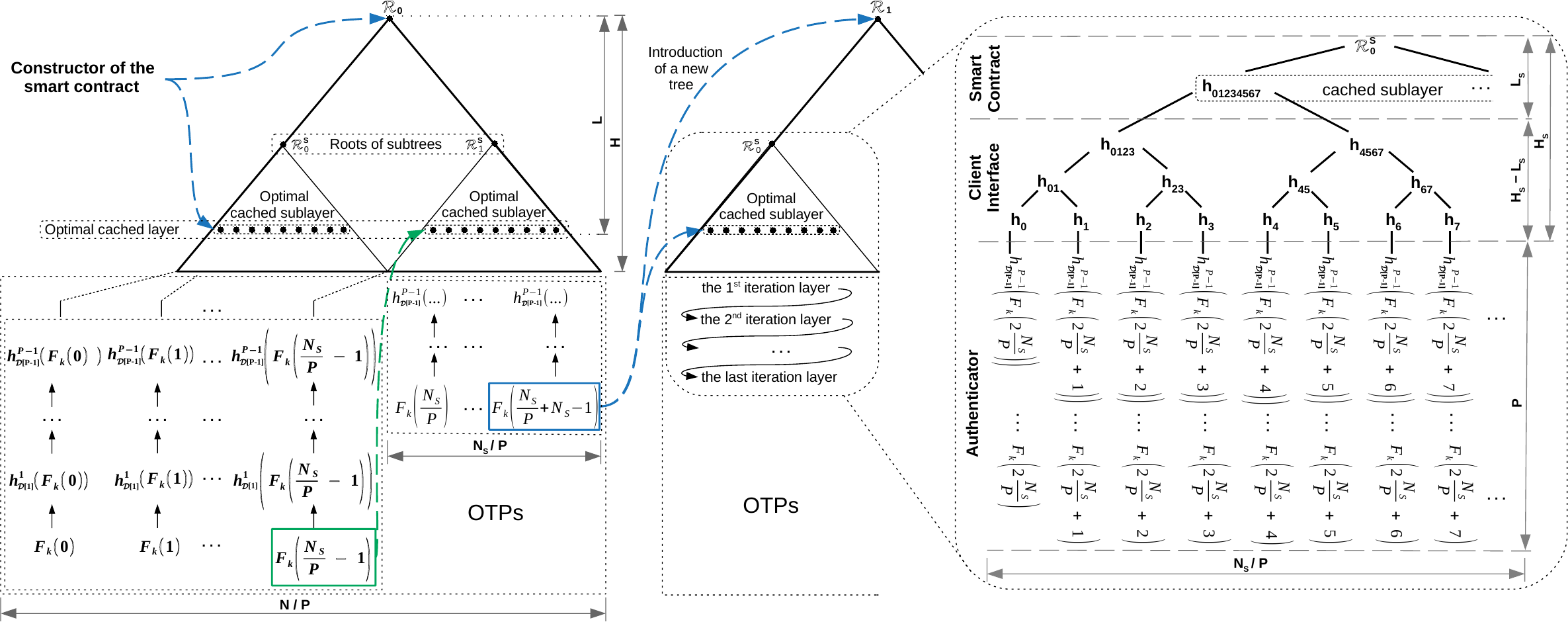} 
		\caption{An overview of our approach and its improvements.
		}
		\label{fig:smartotps-overview}
		\vspace{-0.3cm}
	\end{center}	
\end{figure*}

\begin{algorithm}[!b]
	\scriptsize
	
	\SetKwProg{func}{function}{}{}
	
	$L_1$ $\leftarrow$ []; \Comment{Items have form $<h(\mathcal{R}^{new} ~\|~ OTP)>$}\\
	$L_2$ $\leftarrow$ []; \Comment{Items have form $<$ $\mathcal{R}^{new}$$>$}\\
	
	\smallskip
	\func{$1\_newRootHash$(hRootAndOTP) \textbf{public}}{
		\textbf{assert} $\Sigma.verify(tx.\sigma, PK_\mathbb{U})$;\\
		\textbf{assert} $nextOpID ~\%~ N = N - 1$; \Comment{The last oper. of tree} \\
		$L_1$.append(hRootAndOTP); \\
	}					
	
	\func{$2\_newRootHash$($\mathcal{R}^{new}$) \textbf{public}}{
		\textbf{assert} $\Sigma.verify(tx.\sigma, PK_\mathbb{U})$;\\
		\textbf{assert} nextOpID$~\%~N = N - 1$; \Comment{The last oper. of tree}\\
		$L_2$.append($\mathcal{R}^{new}$);\\
	}
	
	\func{$3\_newRootHash$(otp, $\pi$) \textbf{public}}{
		\textbf{assert} nextOpID$~\%~N = N - 1$; \Comment{The last oper. of tree} \\
		verifyOTP(otp, $\pi$, nextOpID); \\
		\If{$L_1\text{.len} > LEN_{MAX} ~|~ L_2\text{.len} > LEN_{MAX}$}{			
			$L_1$, $L_2$ $\leftarrow$ [], []; \\
			\textbf{return}; \Comment{To avoid $\mathcal{A}$ DoS-ing $\mathbb{S}$ by gas depletion.} \\
		}
		
		\For{$\{j \leftarrow 0; ~j < L_1\text{.len} ; ~j\texttt{++}\}$}{
			\For{$\{i \leftarrow 0; ~i < L_2\text{.len} ; ~i\texttt{++}\}$}{
				\If{h($L_2$[i] $\|$ otp) = $L_1$[j]}{
					$\mathcal{R}$ $\leftarrow$ $L_2[i]$; \\
					$L_1$, $L_2$ $\leftarrow$ [], []; \\
					nextOpID++; \\
				}
			}
		}	        
		
	}					
	\caption{Introduction of a new $\mathcal{R}$ in $\mathbb{S}$}\label{alg:new-parent-tree}
	\vspace{-0.1cm}
\end{algorithm}

\subsubsection{Depletion of OTPs}\label{sec:depletion-of-otps}
Even with the previous modification, the number of OTPs remains bounded, therefore they may be depleted.
We propose handling of depleted OTPs by a special operation that replaces the
current tree with a new one.
To introduce a new tree securely, we propose updating $\mathcal{R}$ value while using the last OTP of the current tree for confirmation.  
Nevertheless, for this purpose we cannot use $\Pi_O$ consisting of two stages, as $\mathcal{A}$ possessing $SK_{\mathbb{U}}$ could be ``faster'' than the user and might initialize the last operation and thus block all the user's funds.
If we were to allow repeated initialization of this operation, then we would create a race condition issue.

To avoid this race condition issue, we propose a protocol $\Pi_{NR}$ that replaces $\mathcal{R}$ during three stages of interaction with the blockchain, which requires two append-only lists $L_1$ and $L_2$ (see \autoref{alg:new-parent-tree}):
\begin{compactenum}
	\item $\mathbb{U}$ enters $OTP_{N-1}$ to $\mathbb{C}$. 
	$\mathbb{C}$ sends  $h(OTP_{N-1} ~\|~ \mathcal{R}^{new})$ to $\mathbb{S}$, which appends it to $L_1$.
	
	\item $\mathbb{C}$ sends $\mathcal{R}^{new}$ to $\mathbb{S}$, which appends it to $L_2$.
	
	\item $\mathbb{C}$ passes $OTP_{N-1}$ with $\pi_{N-1}$ to $\mathbb{S}$, where the first matching entries of $L_1$ and $L_2$ are located to perform the introduction of $\mathcal{R}^{new}$.
	Finally, the lists are cleared for future updates.
\end{compactenum}

\noindent
Locating the first entries in the lists relies on the append-only feature of lists, hence no $\mathcal{A}$ can make the first valid pair of entries in the lists.
Similarly as in $\Pi_{B}$, we propose two variants of $\Pi_{NR}$ intended for secure (i.e., $\Pi_{NR}^\mathcal{S}$) and insecure environment (i.e., $\Pi_{NR}^\mathcal{I}$).
In $\Pi_{NR}^\mathcal{I}$ (see Appendix of \cite{homoliak2020smartotps} for detailed description), $\mathbb{A}$ must compute and display $h(OTP_{N-1} ~\|~ \mathcal{R}^{new})$ and $\mathcal{R}^{new}$ to enable protection against $\mathcal{A}$ that tampers with $\mathbb{C}$. 
Hence, $\mathbb{U}$ can verify the equality of items displayed at $\mathbb{W}$ with the ones displayed at $\mathbb{A}$ during the first and the second stage of $\Pi_{NR}^{\mathcal{I}}$, preventing $\mathcal{A}$ from forging the tree.
To adapt this improvement at $\mathbb{C}$, $\mathbb{C}$ needs to store all nodes of the new tree.
Therefore, $\mathbb{U}$ provides $\mathbb{C}$ with all nodes of the new tree, transferred from $\mathbb{A}$ on a microSD card.
In the case of $\Pi_{NR}^\mathcal{S}$, the nodes of the new tree are transferred by a transcription of $k$ from $\mathbb{A}$ to $\mathbb{C}$ and no values are displayed at $\mathbb{W}$ and $\mathbb{A}$ for $\mathbb{U}$'s verification.

\subsubsection{Cost \& Security Optimizations }\label{subsec:caching-smart-contract}

\paragraph{\textbf{Caching in the Smart Contract.}}\label{sec:caching-smart-contract}
With a high Merkle tree, the reconstruction of $\mathcal{R}$ from a leaf node may be costly.
Although the number of hash computations stemming from the Merkle tree is logarithmic in the number of leaves, the cost imposed on the blockchain platform may be significant for higher trees.
We propose to reduce this cost by caching an arbitrary tree layer of depth $L$ at $\mathbb{S}$ and do proof verifications against a cached layer.
Hence, every call of \textit{deriveRootHash()} will execute $L$ fewer hash computations in contrast to the version that reconstructs $\mathcal{R}$, while $\mathbb{C}$ will transfer by $L$ fewer elements in the proof. 

The minimal operational cost can be achieved by directly caching leaves of the tree, which accounts only for hash computations coming from hash chains, not a Merkle tree.
However, storing such a high amount of cached data on the blockchain is
too expensive. 
Therefore, this cost optimization must be viewed as a trade-off between the depth $L$ of the cached layer and the price required for the storage of such a cached layer on the blockchain (see
\autoref{sec:costs-analysis}). 

We depict this modification in the left part of \autoref{fig:smartotps-overview}, and we show that an optimal caching layer can be further partitioned into caching sublayers of subtrees (introduced later).
To enable this optimization, the cached layer of the Merkle tree must be stored in the constructor of $\mathbb{S}$. 
From that moment, the cached layer replaces the functionality of $\mathcal{R}$,  reducing the size of proofs.
During the confirmation stage of $\Pi_{O}$, an OTP and its proof are used for the reconstruction of a particular node in the cached layer, instead of $\mathcal{R}$.
Then the reconstructed value is compared with an expected node of the cached layer.
The index of an expected node is computed as 
\begin{eqnarray}
idxInCache(i) &=&	\left\lfloor \left(i ~\%~ \frac{N}{P} \right)  ~/~ 2^{H - L} \right\rfloor, 
\end{eqnarray}
where $i$ is the ID of an operation.

\vspace{-0.1cm}
\paragraph{\textbf{Partitioning to Subtrees.}}
The caching of the optimal layer minimizes the operational costs of SmartOTPs, but on the other hand, it requires prepayment for storing the cache on the blockchain.
If the cached layer were to contain a high number of nodes, then the initial deployment cost could be prohibitively high, and moreover, the user might not deplete all the prepaid OTPs.
On top of that, after revealing the first iteration layer of OTPs, the security of our scheme described so far is decreased by $log_2(N/P)$ bits due to the birthday attack (see \autoref{sec:analysis}) on OTPs. 
Hence, bigger trees suffer from higher security loss than smaller trees. 

To overcome the prepayment issue and to mitigate the birthday attack, we propose partitioning an optimal cached layer to smaller groups having the same size, forming sublayers that belong to subtrees (see the left part of \autoref{fig:smartotps-overview}).
The obtained security loss is $log_2(N_S/P)$, $N_S \ll N$.

Starting with the deployment of $\mathbb{S}$, the cached sublayer of the first subtree and the ``parent'' root hash (i.e., $\mathcal{R}$) are passed to the constructor; the cached sublayer is stored on the blockchain and its consistency against $\mathcal{R}$ is verified.
Then during the operational stage of $\Pi_{O}$, when confirmation of operation is performed, the passed OTP is verified against an expected node in the cached sublayer of the current subtree, saving costs for not doing verification against $\mathcal{R}$ (see Appendix of \cite{homoliak2020smartotps}).

If the last OTP of the current subtree is reached, then no operation other than the introduction of the next subtree can be initialized (see the green dashed arrow in \autoref{fig:smartotps-overview}).
We propose a protocol $\Pi_{ST}$ for the introduction of the next subtree (see Appendix of \cite{homoliak2020smartotps} for the detailed description).
Namely, $\mathbb{C}$ introduces the next subtree in a single step by calling a function \textit{nextSubtree()} of $\mathbb{S}$ with the arguments containing:
(1) the last OTP of the current subtree $OTP_{(N_S - 1) + \delta N_S}, ~\delta \in \{1, ~\ldots,~ N/N_S - 1\}$, 
(2) its proof $\pi_{otp}$, 
(3) the cached sublayer of the next subtree, and 
(4) the proof $\pi_{sr}$  of the next subtree's root; all items but OTP are computed by $\mathbb{C}$. 
\begin{algorithm}[t]
	\caption{Introduction of the next subtree at $\mathbb{S}$}\label{alg:alg-next-child-tree}	
	
	\scriptsize
	
	\SetKwProg{func}{function}{}{}
	
	currentSubLayer[]; \Comment{Adjusted in the constructor} \\
	\smallskip		
	
	\func{$nextSubtree$(nextSubLayer, otp, $\pi_{otp}$, $\pi_{sr}$) \textbf{public}}{ 
		\textbf{assert} nextOpID $\%~N$ $\neq N - 1$; \Comment{Not the last op. of parent}\\
		\textbf{assert} nextOpID $\%~N_S$ $ = N_S - 1 $; \Comment{The last op. of subtree}\\
		\textbf{assert} currentSubLayer.len = nextSubLayer.len;\\
		
		\textbf{assert} deriveRootHash(otp,  $\pi_{otp}$, nextOpID) = $\mathcal{R}$;\\				
		
		currentSubLayer $\leftarrow$  nextSubLayer;\\
		
		$\mathcal{R}^s$ $\leftarrow$ reduceMT(currentSubLayer, currentSubLayer.len);\\
		\textbf{assert} subtreeConsistency($\mathcal{R}^s$, $\pi_{sr}$, $\mathcal{R}$);\\
		nextOpID++;  \Comment{Accounts for this introduction of a subtree}\\			
	}			
\end{algorithm}
The pseudo-code of the next subtree  introduction at $\mathbb{S}$ is shown in \autoref{alg:alg-next-child-tree}. 
The current subtree's cached sublayer is replaced by the new one, which is verified by the function $subtreeConsistency()$
against $\mathcal{R}$ with the use of the passed proof 
$\pi_{sr}$ of the new subtree's root hash $\mathcal{R}^s$.
Note that introducing a new subtree invalidates all initialized yet to be confirmed operations of the previous subtree.

At $\mathbb{A}$, this improvement requires accommodating the iteration over layers of hash chains in shorter periods.
Hence, $\mathbb{A}$ provides OTPs by \autoref{eqn:hash-chains} with the following expressions:
\begin{eqnarray}\label{eqn:alpha-beta-final}
\begin{split}
\alpha(i) &=& P - \left\lfloor   \frac{(i ~\%~ N_S) P}{N_S} \right\rfloor - 1,\\
\beta(i) &=&  \left\lfloor \frac{i}{N_S} \right\rfloor \frac{N_S}{P} +  \left( i ~\%~ \frac{N_S}{P} \right),
\end{split}
\end{eqnarray}
where $i$ is an operation ID and $N_S$ is the number of OTPs provided by a single subtree.
We remark, that due to this optimization, the update of a new parent root $\mathcal{R}$ as well as the constructor of $\mathbb{S}$ requires, additionally to \autoref{alg:new-parent-tree} and \autoref{alg:wallet-overview}, the introduction of a cached sublayer of the first subtree (omitted here for simplicity).

\subsection{Security Analysis}
\label{sec:analysis}
We analyze the security of SmartOTPs and its resilience to attacker models under the assumption of random oracle model $\mathcal{RO}$.

\subsubsection{Security of OTPs}\label{sec:security-of-otps}
OTPs in our scheme are related to two cryptographic constructs: a list of hash chains and the Merkle tree aggregating their last values. 
In this subsection, we assume an adversary $\mathcal{A}$ who is trying to invert OTPs, and we give a concrete expressions for security of our scheme. 
Since we employ the hash domain separation technique \cite{leighton1995large} for hash chains, each hash execution can be seen as an execution of an independent hash function.
For such a construction, Kogan et al. give the following upper bound (see Theorem 4.6 in \cite{kogan2017t}) on the advantage of $\mathcal{A}$ breaking a chain:
\begin{eqnarray}\label{eqn:adv-breaks-single-chain}
Pr[\mathit{\mathcal{A}~breaks~a~chain}]\leq\frac{2Q+2P+1}{2^S},
\end{eqnarray}
where $Q$ is the number of queries that $\mathcal{A}$ can make to $h(.)$, $P$ is the chain length, and $S$ is the bit-length of OTPs (and the output of $h(.)$).  
Kogan et al. \cite{kogan2017t} proved that inverting a hash chain hardened by the domain separation imposes a loss of security equal to the factor of 2.
Therefore, to make a hardened hash chain as secure as $\lambda\varDash{-} bit$ $\mathcal{RO}$, it is enough to set $S = \lambda + 2$.
E.g., to achieve 128-bit security, $S$ should be equal to 130.

\paragraph{SmartOTPs without Subtrees.}
This scheme (see \autoref{subsec:caching-smart-contract}) uses a Merkle tree that aggregates $\mathcal{L}=\frac{N}{P}$ hash chains, where
the chains are created independently of each other; they have the same length and the same number of OTPs.
$\mathcal{A}$ can win by inverting any of the chains; hence, the probability that this scheme is secure is
\begin{eqnarray}\label{eqn:scheme-is-secure}
Pr[\mathit{Scheme~is~secure}] = \bigg(1-\frac{2Q+2P+1}{2^S}\bigg)^\mathcal{L}.
\end{eqnarray}
We can apply the alternative form of Bernoulli's inequality
$(1 - x)^\mathcal{L} \geq 1 - x\mathcal{L},$
where $\mathcal{L} \geq 1$ and $0 \leq x \leq 1$ must hold.
In our case, the input conditions hold since the number of hash chains is always greater than one and the probability that $\mathcal{A}$ breaks a single chain from \autoref{eqn:adv-breaks-single-chain} fits the range of $x$ (i.e.,  $0\leq\frac{2Q+2P+1}{2^S}\leq 1$).
Hence, we lower-bound the probability from \autoref{eqn:scheme-is-secure} as follows:
\begin{eqnarray}\label{eqn:scheme-is-secure-nosubtrees}
Pr[\mathit{Scheme~is~secure}] \geq 1 -\frac{\mathcal{L}(2Q+2P+1)}{2^S}.
\end{eqnarray}
\begin{corollary}
	To make SmartOTPs without partitioning into subtrees as secure as $\lambda\varDash{-} bit$ $\mathcal{RO}$, it is enough to set $S = \lambda + 2 + log_2(\mathcal{L})$.
\end{corollary}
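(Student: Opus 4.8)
The plan is to derive the claimed choice of $S$ directly from the lower bound already established in \autoref{eqn:scheme-is-secure-nosubtrees}, namely
\begin{equation*}
Pr[\mathit{Scheme~is~secure}] \;\geq\; 1 - \frac{\mathcal{L}(2Q+2P+1)}{2^S},
\end{equation*}
by comparing the adversary's advantage against the full construction (the list of $\mathcal{L}$ hash chains aggregated by a Merkle tree) with the advantage it would have against a single domain-separated hash chain that already meets the $\lambda$-bit benchmark.

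First I would restate the right-hand quantity as an upper bound on the adversary's success probability. The complementary event — $\mathcal{A}$ breaking the scheme — occurs only if $\mathcal{A}$ inverts at least one of the $\mathcal{L}$ hash chains, since in the random-oracle model the Merkle internal nodes and the root $\mathcal{R}$ are non-secret (they are hashes of, or derived from, OTPs), and a collision among them yields no preimage of any OTP. Hence $Pr[\mathcal{A}\ \mathit{breaks\ the\ scheme}] \leq \mathcal{L}(2Q+2P+1)/2^S$, which is exactly the union-bound form behind \autoref{eqn:scheme-is-secure-nosubtrees}.

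Next I would invoke the single-chain benchmark: by Theorem~4.6 of Kogan et al.\ (quoted in \autoref{eqn:adv-breaks-single-chain}) together with the established factor-$2$ loss from domain separation, a hardened hash chain whose OTP length equals $\lambda+2$ has adversarial advantage at most $(2Q+2P+1)/2^{\lambda+2}$ and is therefore, by definition, as secure as a $\lambda\varDash{-}bit$ $\mathcal{RO}$. To make the $\mathcal{L}$-chain construction at least this secure it suffices to require
\begin{equation*}
\frac{\mathcal{L}(2Q+2P+1)}{2^S} \;\leq\; \frac{2Q+2P+1}{2^{\lambda+2}},
\end{equation*}
which simplifies to $2^S \geq \mathcal{L}\cdot 2^{\lambda+2}$, i.e.\ $S \geq \lambda + 2 + \log_2(\mathcal{L})$. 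Taking $S = \lambda + 2 + \log_2(\mathcal{L})$ (rounding up when $\mathcal{L}$ is not a power of two) then yields the corollary.

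The main obstacle is the middle step: justifying that breaking the whole scheme genuinely reduces to inverting one of the chains — that is, that exposing the Merkle authentication paths over the lifetime of the wallet adds no usable attack surface, and that the union bound over the $\mathcal{L}$ chains is legitimate even though all chains are derived from the single seed $k$. Here one leans on the fact that the chain starting points are outputs $F_k(\cdot)$ modeled as independent random values, so in the ROM the $\mathcal{L}$ inversion events are independent and the Bernoulli step already carried out in \autoref{eqn:scheme-is-secure-nosubtrees} is tight enough for the comparison. It is also worth pinning down precisely what ``as secure as $\lambda\varDash{-}bit$ $\mathcal{RO}$'' means (adversarial advantage bounded by that of the reference hardened chain of OTP length $\lambda+2$), so that the inequality above is unambiguous.
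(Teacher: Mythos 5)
Your proposal is correct and follows essentially the same route as the paper: it takes the lower bound $1 - \mathcal{L}(2Q+2P+1)/2^S$ from \autoref{eqn:scheme-is-secure-nosubtrees}, compares the resulting advantage to the single-chain benchmark at $S=\lambda+2$, and reads off the extra $\log_2(\mathcal{L})$ bits needed to absorb the union bound over the $\mathcal{L}$ chains. The paper treats the corollary as an immediate consequence of that displayed inequality, so your added justification of the reduction to chain inversion is a harmless (and reasonable) elaboration rather than a divergence.
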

\noindent For example, to achieve 128-bit security with $\mathcal{L}=64$ and $P \geq 1$, $S$ should be equal to 136, and thus an OTP can be transferred by one QR code v1 or 13 mnemonic words.

\paragraph{Full SmartOTPs.}
The full SmartOTPs scheme contains partitioning into subtrees, in which all leaves of the next subtree ``are visible'' only after depleting OTPs of the current subtree (and using OTPs from the 1st iteration layer of the next subtree). 
This improves the security of our scheme under the assumption that $\mathbb{C}$'s storage is not compromised by $\mathcal{A}$, which is true for $\mathcal{A}$ that possesses $PK_{\mathbb{U}}$ or~$\mathbb{A}$.
Therefore, we replace $\mathcal{L}$ in \autoref{eqn:scheme-is-secure-nosubtrees} for $\mathcal{L_S} = \frac{N_S}{P},$ $N_S \ll N$.
\begin{corollary}
	To make the full scheme of SmartOTPs as secure as $\lambda\varDash{-} bit$ $\mathcal{RO}$, it is enough to set $S = \lambda + 2 + log_2(\mathcal{L_S})$.
\end{corollary}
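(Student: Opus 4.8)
The plan is to derive this corollary directly from the bound already established for the subtree-free scheme, namely \autoref{eqn:scheme-is-secure-nosubtrees}, by arguing that partitioning into subtrees effectively shrinks the number of hash chains an adversary can attack at any given moment from $\mathcal{L}$ to $\mathcal{L_S} = N_S/P$. First I would make the modeling observation precise: as long as $\mathbb{C}$'s storage is not compromised --- which holds for the two attacker models of interest, namely $\mathcal{A}$ in possession of $PK_\mathbb{U}$ and $\mathcal{A}$ in possession of $\mathbb{A}$ --- the only leaves that are ever revealed on-chain or computable from public data are those of the currently active subtree. The root hashes of all future subtrees are bound by the ``parent'' $\mathcal{R}$ through the Merkle consistency proof $\pi_{sr}$ checked in \autoref{alg:alg-next-child-tree}, and under the $\mathcal{RO}$ assumption on $h(.)$ these commitments leak nothing about the underlying $F_k$-outputs. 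Hence at any point the adversary sees at most $\mathcal{L_S}$ independent, domain-separated hash chains.

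Second I would substitute $\mathcal{L_S}$ for $\mathcal{L}$ in \autoref{eqn:scheme-is-secure-nosubtrees}. This is legitimate because the union-bound / Bernoulli step that produced \autoref{eqn:scheme-is-secure-nosubtrees} used only (i) that distinct chains are generated independently via the domain-separated $F_k$ construction and (ii) that $\mathcal{L}\ge 1$ and the per-chain failure probability from \autoref{eqn:adv-breaks-single-chain} lies in $[0,1]$; both facts carry over verbatim with $\mathcal{L_S}$ in place of $\mathcal{L}$ (note $\mathcal{L_S}\ge 1$ since $\mathrm{GCD}(N_S,P)=P$). This yields $Pr[\mathit{Scheme~is~secure}] \ge 1 - \mathcal{L_S}(2Q+2P+1)/2^S$. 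Finally, plugging in $S = \lambda + 2 + \log_2(\mathcal{L_S})$ gives $2^S = 2^{\lambda+2}\,\mathcal{L_S}$, so the bound collapses to $1 - (2Q+2P+1)/2^{\lambda+2}$, which is exactly the advantage bound of a $\lambda$-bit random oracle after accounting for the factor-$2$ loss of a single domain-separated hash chain established by Kogan et al.\ \cite{kogan2017t}; I would close by invoking that same calibration used in the subtree-free corollary to conclude $\lambda$-bit security.

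The main obstacle I anticipate is not the arithmetic but the rigor of the modeling claim in the first step: one must argue carefully that exhausting the current subtree and unveiling the next one grants the adversary no advantage against still-valid OTPs. The key sub-claims are that (a) OTPs of an exhausted subtree are invalidated at $\mathbb{S}$ by \autoref{alg:alg-next-child-tree}, so learning them is worthless; (b) the first iteration layer of the new subtree becomes computable only after the previous subtree is fully spent; and (c) the sliding-window invalidation across iteration layers prevents deriving an upper-layer OTP of the new subtree from a lower-layer one. If any of these fail, the number of simultaneously attackable chains could exceed $\mathcal{L_S}$ and the bound would degrade toward the $\mathcal{L}$-based estimate. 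Establishing (a)--(c) as invariants of the protocols $\Pi_{O}$, $\Pi_{ST}$, and $\Pi_{NR}$ --- ideally by a short induction over the sequence of confirmed operations --- is therefore the crux, and everything after it is mechanical.
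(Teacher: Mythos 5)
Your proposal is correct and follows essentially the same route as the paper: the paper justifies the corollary precisely by observing that, when $\mathbb{C}$'s storage is uncompromised (the case for $\mathcal{A}$ holding $SK_{\mathbb{U}}$ or $\mathbb{A}$), only the current subtree's $\mathcal{L_S}=N_S/P$ chains are exposed, substituting $\mathcal{L_S}$ for $\mathcal{L}$ in the Bernoulli-derived bound, and reusing the $\lambda+2$ calibration from Kogan et al. Your write-up is in fact more careful than the paper's one-paragraph justification, particularly in flagging the invariants (a)--(c) about subtree invalidation that the paper leaves implicit.
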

\noindent Therefore, to achieve 128-bit security with $\mathcal{L} = \frac{N}{N_S} \mathcal{L_S}$, $\mathcal{L_S}= 64$, and $P \geq 1$, $S$ should be equal to 136, and thus an OTP can be transferred by a QR code v1 or 13 mnemonic words.
To achieve the same security with $\mathcal{L_S}=1024$, we need to set $S= 140$, and thus an OTP can be transferred in a QR code v2 or 13 mnemonic words.

\subsubsection{The Attacker Possessing $SK_{\mathbb{U}}$}

\begin{theorem}
	$\mathcal{A}$ with access to $SK_{\mathbb{U}}$ is able to initiate operations by $\Pi_{O}$ but is unable to confirm them.
\end{theorem}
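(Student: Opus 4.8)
The plan is to argue in two halves, matching the two clauses of the statement: first that $\mathcal{A}$ possessing $SK_\mathbb{U}$ \emph{can} initiate operations, and second that $\mathcal{A}$ \emph{cannot} confirm them. The first half is almost immediate from the protocol $\Pi_O$: the initialization stage calls \textit{initOp()} of the smart contract $\mathbb{S}$ (see \autoref{alg:wallet-overview}), whose only guard is $\Sigma.verify(tx.\sigma, PK_\mathbb{U})$. Since $\mathcal{A}$ holds $SK_\mathbb{U}$, it can produce a valid signature under $\Sigma$ on an arbitrary transaction payload (type, numeric parameter, address), so the assertion passes and a fresh $opID$ is assigned to an attacker-chosen operation. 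I would state this formally by reduction: if $\mathcal{A}$ could \emph{not} initiate, then the signature check must have failed, contradicting correctness of $\Sigma.Verify$ on honestly-produced signatures.

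For the second half --- unforgeability of confirmation --- the key observation is that \textit{confirmOp()} only executes an operation if \textit{verifyOTP()} succeeds, i.e.\ if \textit{deriveRootHash(otp, $\pi$, opID)} equals the stored $\mathcal{R}$ (or, with the subtree optimization, the expected node of the cached sublayer). So I must show that $\mathcal{A}$, given $SK_\mathbb{U}$ but \emph{not} the seed $k$ and \emph{not} $\mathbb{A}$, cannot produce a pair $(otp, \pi)$ that reconstructs $\mathcal{R}$ for the target $opID$. The argument proceeds by contradiction in the random oracle model: suppose $\mathcal{A}$ succeeds with non-negligible probability. The value $\mathcal{R}$ was fixed at bootstrapping as the Merkle root over the leaves $h(getOTP(i))$, and by the binding property of the Merkle tree (collision resistance of $h(.)$, which holds in $\mathcal{RO}$ except with negligible probability) any accepted $(otp,\pi)$ for position $opID$ must have $otp$ equal to the genuine $OTP_{opID} = h_\mathcal{D}^{\alpha(opID)}(F_k(\beta(opID)))$ --- otherwise $\mathcal{A}$ has exhibited a Merkle-tree collision. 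Thus $\mathcal{A}$ must have recovered the actual $OTP_{opID}$. But $\mathcal{A}$'s view contains only: $PK_\mathbb{U}$, $SK_\mathbb{U}$, the public contract state ($\mathcal{R}$, cached sublayers, previously revealed OTPs from earlier iteration layers), and network traffic. None of these depend on $k$ in an invertible way: $\mathcal{R}$ and the $hOTPs$ are images of one-way $\mathcal{RO}$ queries, and previously-revealed OTPs lie in strictly lower iteration layers (enforced by the sliding-window invalidation at $\mathbb{S}$), so the domain-separated hash-chain structure makes $OTP_{opID}$ independent of them. Invoking the concrete bound of \autoref{eqn:scheme-is-secure-nosubtrees} (and its subtree refinement), the probability that any $\mathcal{RO}$-bounded adversary inverts the relevant chain is at most $\tfrac{\mathcal{L_S}(2Q+2P+1)}{2^S}$, which is negligible for the chosen parameters. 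Hence $\mathcal{A}$'s success probability is negligible, a contradiction.

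One subtlety I want to handle carefully is the front-running / race-condition angle hinted at in the design-space discussion: even though $\mathcal{A}$ cannot forge an OTP, could it confirm an operation by \emph{replaying} or \emph{repurposing} an OTP that the honest $\mathbb{U}$ later reveals? Here I would lean on the \textbf{linkage} requirement of OTPs: $OTP_i$ is bound via its leaf index to exactly operation slot $i$, so intercepting $OTP_j$ does not help confirm $\mathcal{A}$'s own $O_i$ with $i\neq j$; and the \textbf{independence} requirement (enforced by domain separation plus the $\mathbb{S}$-side sliding window over iteration layers) rules out deriving $OTP_i$ from $OTP_j$ for $j>i$, which is precisely the hash-chain attack the base straw-man suffered from. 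I expect this replay/derivation case --- rather than the raw one-wayness argument --- to be the main obstacle, because it requires invoking the \emph{ordering} enforced by the sliding window and the subtree-introduction protocol $\Pi_{ST}$, not merely the one-wayness of $h(.)$; I would therefore split the "cannot confirm" half into (i) cannot forge a fresh valid OTP (one-wayness + Merkle binding) and (ii) cannot reuse/derive a valid OTP (linkage + independence + sliding-window invalidation), and conclude the theorem by combining the two.
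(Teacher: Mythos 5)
Your proposal is correct and follows essentially the same route as the paper's justification: the paper likewise argues that the three OTP requirements are met — authenticity via the random seed $k$ and the anchoring of $\mathcal{R}$ in $\mathbb{S}$'s constructor, linkage via the $\mathcal{RO}$ instantiation of $h(.)$ and the order-preserving Merkle aggregation, and independence via $F_k(.)$ plus the sliding-window invalidation of previous iteration layers when $P>1$ — and concludes that $\mathcal{A}$ can initiate $O_j$ but cannot repurpose an intercepted $OTP_i$ to confirm it. Your additional explicit reduction to Merkle-tree binding and the concrete inversion bound is a refinement of the same argument (the paper defers that quantitative part to its separate security-of-OTPs analysis), not a different approach.
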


\begin{proof1}
	The security of $\Pi_{O}$ is achieved by meeting all requirements on general OTPs (see \autoref{sec:design-space}).
	In detail, the requirement on the \textit{independence} of two different OTPs is satisfied by the definition of $F_k(.) \equiv h(k ~\|~ .)$, where $h(.)$ is instantiated by  $\mathcal{RO}$. 
	This is applicable when $P=1$.
	However, if $P>1$, then items in previous iteration layers of OTPs can be computed from the next ones.
	Therefore, to enforce this requirement, we employ an explicit invalidation of OTPs belonging to all previous iteration layers by a sliding window at $\mathbb{S}$ (see \autoref{sec:increasingNoOfOTPs}).
	The requirement on the \textit{linkage} of each $OTP_i$ with operation $O_i$ 
	is satisfied due to (1) $\mathcal{RO}$ used for instantiation of $h(.)$ and (2) by the definition of the Merkle tree, preserving the order of its aggregated leaves. 
	By meeting these requirements, $\mathcal{A}$ is able to initiate an operation $O_j$ in the first stage of $\Pi_{O}$ but is unable to use an $OTP_i$ intercepted in the second stage of $\Pi_{O}$ to confirm $O_j$, where $j \neq i$.
	Finally, the requirement on the \textit{authenticity} of OTPs is ensured by a random generation of $k$ and by anchoring $\mathcal{R}$ associated with $k$ at the constructor of $\mathbb{S}$.
\end{proof1}

\begin{theorem}
	Assuming $\delta \in \{0,\ldots, \frac{N}{N_S} - 2\}$, $\mathcal{A}$ with access to $SK_{\mathbb{U}}$ 
	is unable to deplete all OTPs or misuse a stolen OTP that introduces the $(\delta + 1)$th subtree by $\Pi_{ST}$.
\end{theorem}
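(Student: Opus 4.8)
The plan is to prove Theorem~2 by analyzing the three-stage protocol $\Pi_{ST}$ (introduction of the next subtree) and showing that an adversary $\mathcal{A}$ holding $SK_{\mathbb{U}}$ cannot complete it, since completion requires knowledge of the last OTP $OTP_{(N_S-1)+\delta N_S}$ of the current subtree, which is the second authentication factor and is not available to $\mathcal{A}$. First I would observe that $\Pi_{ST}$ invokes \textit{nextSubtree()} of $\mathbb{S}$ (see \autoref{alg:alg-next-child-tree}), whose assertions require (i) \textit{nextOpID}$~\%~N_S = N_S - 1$, i.e. the call is the last operation of the current subtree, and (ii) $\texttt{deriveRootHash}(otp, \pi_{otp}, \textit{nextOpID}) = \mathcal{R}$, i.e. the supplied $otp$ is a valid OTP at the last position of the current subtree, verified against the parent root $\mathcal{R}$. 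Thus any successful call to \textit{nextSubtree()} necessarily exhibits a preimage-consistent $OTP_{(N_S-1)+\delta N_S}$.

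The core of the argument then reduces to the OTP-security results already established in \autoref{sec:security-of-otps}. Since $h(.)$ is instantiated by $\mathcal{RO}$ and $F_k(.) \equiv h(k~\|~.)$, the leaves of the next subtree --- and hence every OTP the current subtree depends on --- are unpredictable to any party not possessing $k$ and not having $\mathbb{C}$'s stored leaf hashes; by the full-scheme corollary, breaking the relevant hash chain succeeds with probability at most $\frac{\mathcal{L_S}(2Q+2P+1)}{2^S}$, which is negligible for $S = \lambda + 2 + \log_2(\mathcal{L_S})$. The adversary $\mathcal{A}$ possessing only $SK_{\mathbb{U}}$ can sign transactions to $\mathbb{S}$ but can neither query $\mathbb{A}$ for the last OTP of the current subtree nor derive it from OTPs it may have intercepted: OTPs of \emph{earlier} iteration layers (smaller indices within the subtree) cannot be used to compute the last one because the hash-chain orientation reveals OTPs layer by layer in \emph{reverse} order of chain construction, and the domain-separation hardening rules out birthday-style shortcuts beyond the stated bound. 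I would state this as a reduction: an $\mathcal{A}$ that completes $\Pi_{ST}$ for subtree $\delta+1$ yields an inverter for the corresponding hardened hash chain, contradicting \autoref{eqn:scheme-is-secure-nosubtrees} adapted with $\mathcal{L_S}$.

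For the ``depletion'' half of the statement, I would argue that to exhaust all OTPs of the current subtree $\mathcal{A}$ would need to pass the confirmation stage of $\Pi_O$ for each operation, which again requires valid OTPs; by Theorem~1 (just proved in the excerpt), $\mathcal{A}$ with $SK_{\mathbb{U}}$ can \emph{initiate} operations via \textit{initOp()} but cannot \emph{confirm} them via \textit{confirmOp()}, so the consumed/advanced OTP counter \textit{nextOpID} only moves forward on legitimate confirmations (or on a legitimate \textit{nextSubtree()} call, which $\mathcal{A}$ also cannot trigger). The range restriction $\delta \in \{0,\ldots,\frac{N}{N_S}-2\}$ simply ensures we are not at the final subtree, where the separate protocol $\Pi_{NR}$ (introducing a new parent tree, \autoref{alg:new-parent-tree}) governs instead; I would note that $\Pi_{NR}$ is handled by Theorem~1's reasoning together with its three-stage append-only-list construction that defeats the race condition.

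The main obstacle I expect is making precise the claim that intercepted OTPs from the \emph{current} subtree (which $\mathcal{A}$ could collect from the user's honest confirmations before the subtree is depleted) do not help $\mathcal{A}$ compute $OTP_{(N_S-1)+\delta N_S}$. One must verify that the indexing functions $\alpha(i),\beta(i)$ of \autoref{eqn:alpha-beta-final} indeed order the revealed OTPs so that the last-position OTP lies ``deepest'' (i.e. is the $F_k$-output end, released last) in its hash chain, and that the sliding-window invalidation at $\mathbb{S}$ plus domain separation together block the front-running derivation attack described for naive hash chains in \autoref{sec:design-space}. Once that ordering fact is pinned down, the rest follows mechanically from the $\mathcal{RO}$ bounds already in the excerpt; I would therefore devote most of the written proof to that ordering/invalidation lemma and cite \autoref{eqn:adv-breaks-single-chain}--\autoref{eqn:scheme-is-secure-nosubtrees} for the quantitative conclusion.
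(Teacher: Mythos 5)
Your proposal soundly covers the first half of the claim --- that $\mathcal{A}$ cannot \emph{obtain} or \emph{derive} the last OTP $OTP_{(N_S-1)+\delta N_S}$ of the current subtree, hence cannot call \textit{nextSubtree()} on her own, and cannot deplete the current subtree because \textit{confirmOp()} requires valid OTPs. That matches the (much terser) first sentence of the paper's justification. But you have not addressed the second clause of the theorem: ``misuse a \emph{stolen} OTP that introduces the $(\delta+1)$th subtree.'' That clause is precisely about the scenario where $\mathcal{A}$ \emph{does} get hold of the OTP, namely by intercepting or front-running the legitimate $\Pi_{ST}$ transaction that $\mathbb{U}$ broadcasts --- the same MITM capability assumed throughout the threat model. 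Your unpredictability/reduction argument is irrelevant there, since the OTP is revealed in the clear as part of the protocol. The paper's answer is structural rather than cryptographic: \textit{nextSubtree()} (\autoref{alg:alg-next-child-tree}) additionally asserts $subtreeConsistency(\mathcal{R}^s, \pi_{sr}, \mathcal{R})$, i.e., the cached sublayer of the $(\delta+1)$th subtree supplied by the caller must itself be provably consistent with the parent root hash $\mathcal{R}$ anchored at deployment. Consequently an intercepted OTP can only be spent on introducing the \emph{one legitimate} next subtree --- $\mathcal{A}$ cannot substitute a forged sublayer --- so the outcome is indistinguishable from $\mathbb{U}$ completing $\Pi_{ST}$ herself. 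Without invoking that check, the misuse clause remains unproven in your write-up.

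A secondary point: the ``main obstacle'' you flag (whether OTPs already revealed from the current subtree help compute the last-position one) is real but is already discharged by the independence requirement and the sliding-window invalidation established for Theorem 1, together with the reverse-order release of hash-chain layers. Most of your planned effort would therefore go into re-proving a lemma the paper already has, while the ingredient that is genuinely specific to this theorem --- the on-chain consistency check of the new subtree against $\mathcal{R}$ --- goes unused.
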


\vspace{-0.2cm}
\begin{proof1}
	When all but one OTPs of the $\delta$th subtree are depleted, the last remaining operation $O_{(N_S - 1) + \delta N_S}, ~\delta \in \{0, ~\ldots,~ \frac{N}{N_S} - 2\}$ is enforced by $\mathbb{S}$ to be the introduction of the next subtree. 
	This operation is executed in a single transaction calling the function $nextSubtree()$ of $\mathbb{S}$ (see \autoref{alg:alg-next-child-tree}) requiring the corresponding $OTP_{(N_S - 1) + \delta N_S}$ that is under control of $\mathbb{U}$; hence $\mathcal{A}$ cannot execute the function to proceed with a further depletion of OTPs in the $(\delta + 1)th$ subtree. 
	If $\mathcal{A}$ were to intercept $OTP_{(N_S - 1) + \delta N_S}$ during the execution of $\Pi_{ST}$ by $\mathbb{U}$, he could use the intercepted OTP only for the introduction of the next valid subtree since the function $nextSubtree()$ also checks a valid cached sublayer of the $(\delta+1)$th subtree against the parent root hash $\mathcal{R}$.	
\end{proof1}	

\begin{theorem}
	Assuming $\delta = \frac{N}{N_S} - 1$,
	$\mathcal{A}$ with access to $SK_{\mathbb{U}}$ is neither able to deplete all OTPs nor introduce a new parent tree nor render SmartOTPs unusable. 
\end{theorem}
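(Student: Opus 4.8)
The plan is to combine the two preceding theorems with a case analysis on the last subtree ($\delta = \tfrac{N}{N_S} - 1$), where the depletion of OTPs must trigger the introduction of a \emph{new parent tree} via $\Pi_{NR}$ rather than the introduction of a further subtree via $\Pi_{ST}$. First I would observe that by the previous theorem, $\mathcal{A}$ with access to $SK_{\mathbb{U}}$ can drain every subtree up to and including the $\delta$-th, but each such introduction of subtree $\delta' + 1$ is gated by $nextSubtree()$ which demands $OTP_{(N_S-1)+\delta' N_S}$, held only by $\mathbb{U}$. Hence the only remaining concern is the very last subtree: once its OTPs are (all but one) depleted, $\mathbb{S}$ enforces that the sole admissible operation is the parent-root update handled by $\Pi_{NR}$ (see \autoref{alg:new-parent-tree}), confirmed with $OTP_{N-1}$.

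The core of the argument is that $\Pi_{NR}$ is a \emph{three-stage} protocol using two append-only lists $L_1, L_2$, and that the matching step ($3\_newRootHash$) requires knowledge of $OTP_{N-1}$ together with its Merkle proof $\pi_{N-1}$, which $\mathcal{A}$ cannot forge by the security of OTPs (\autoref{sec:security-of-otps}) under $\mathcal{RO}$. Concretely I would argue: (i) $\mathcal{A}$ may freely append arbitrary entries $h(\mathcal{R}' \,\|\, \text{garbage})$ to $L_1$ and arbitrary $\mathcal{R}'$ to $L_2$ (both only require $\Sigma.verify(tx.\sigma, PK_\mathbb{U})$), but this does not constitute progress because no entry becomes \emph{effective} until stage three; (ii) stage three scans $L_1 \times L_2$ for the first pair satisfying $h(L_2[i] \,\|\, otp) = L_1[j]$ with a \emph{valid} $otp$, i.e., one passing $verifyOTP(otp, \pi, N-1)$ against the stored $\mathcal{R}$ — and producing such an $otp$ is exactly inverting the last hash chain / Merkle leaf, which happens with probability at most $\mathcal{L_S}(2Q + 2P + 1)/2^S$; (iii) since the lists are append-only and cleared only upon a successful match, $\mathcal{A}$ cannot retroactively reorder them to make his own $(\mathcal{R}', \cdot)$ pair be the ``first matching'' one ahead of an honest pair once $\mathbb{U}$ has submitted $OTP_{N-1}$ in stage one. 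I would also note the $LEN_{MAX}$ guard in $3\_newRootHash$ prevents $\mathcal{A}$ from DoS-ing $\mathbb{S}$ by flooding $L_1$ or $L_2$ to exhaust gas: if either list overflows, both are simply reset and the honest user retries. Therefore $\mathcal{A}$ can neither deplete all OTPs (the last one is locked behind $\Pi_{NR}$'s matching step), nor introduce a new parent tree (same reason), nor render the wallet unusable (the list-reset safety valve guarantees $\mathbb{U}$ can always make the valid first pair).

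For the ``cannot render SmartOTPs unusable'' clause specifically, I would spell out that every denial-of-service avenue $\mathcal{A}$ could attempt is neutralized: initializing spurious operations in the current subtree is bounded by $N_S$ and is irrelevant once the subtree is exhausted; appending to $L_1, L_2$ is bounded by $LEN_{MAX}$ with automatic reset; and ``front-running'' $\mathbb{U}$'s stage-three transaction is pointless because the winning pair is determined by append order, not arrival order of the confirming transaction, and $\mathbb{U}$'s stage-one and stage-two appends precede any honest confirmation. The main obstacle I anticipate is handling the race between $\mathcal{A}$ and $\mathbb{U}$ over the \emph{ordering} of entries in $L_1$ and $L_2$: one must argue carefully that, because $\mathbb{U}$ controls $OTP_{N-1}$ and because $3\_newRootHash$ iterates $L_1$ (outer) then $L_2$ (inner) picking the \emph{first} consistent pair, no interleaving of adversarial appends can displace the honest pair $(h(OTP_{N-1}\,\|\,\mathcal{R}^{new}_{\mathbb{U}}), \mathcal{R}^{new}_{\mathbb{U}})$ from being selected — and this needs the observation that $\mathcal{A}$, lacking $OTP_{N-1}$, cannot produce an $L_1$ entry of the form $h(\mathcal{R}' \,\|\, OTP_{N-1})$ that would match an $\mathcal{A}$-chosen $\mathcal{R}' \in L_2$ except with negligible probability. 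Once that ordering subtlety is dispatched, the rest follows from the OTP-security corollaries already established and the append-only list invariant.
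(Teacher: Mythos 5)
Your proposal is correct and follows essentially the same route as the paper's justification: the last operation of the parent tree is forced outside $\Pi_O$ into the three-stage $\Pi_{NR}$ with append-only lists $L_1, L_2$, the stage-one commitment hides $OTP_{N-1}$ so $\mathcal{A}$ cannot fabricate a matching pair, and an OTP intercepted in stage three is useless because the user's entries already occupy the first matching positions in the append-only lists. Your additional remarks on the $LEN_{MAX}$ reset and the explicit loop-ordering of $3\_newRootHash$ are consistent elaborations of the same argument rather than a different proof strategy.
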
	

\vspace{-0.2cm}
\begin{proof1}
	In contrast to the adjustment of the next subtree, the situation here is more difficult to handle, since the new parent tree cannot be verified at $\mathbb{S}$ against any paramount field. 
	If we were to use $\Pi_{O}$ while constraining to the last initialized operation $O_{(N-1) + \eta N}, ~\eta \in \{0,1,\ldots\}$ of the parent tree, then $\mathcal{A}$ could render SmartOTPs unusable by submitting an arbitrary $\mathcal{R}$ in $initOp()$, thus blocking all the funds of the user.
	If we were to allow repeated initialization of this operation, then we would create a race condition issue.
	Therefore, this operation needs to be handled outside of the protocol $\Pi_{O}$, using two unlimited append-only lists $L_1$ and $L_2$ that are manipulated in three stages of interaction with the blockchain (see \autoref{sec:depletion-of-otps}).
	In the first stage, $h(\mathcal{R}^{new} ~\|~ OTP_{(N-1) + \eta N})$ is appended to $L_1$, hence $\mathcal{A}$ cannot extract the value of OTP.
	In the second stage, $\mathcal{R}^{new}$ is appended to $L_2$,
	and finally, in the third stage, the user reveals the OTP for confirmation of the first matching entries in both lists. 
	Although $\mathcal{A}$ might use an intercepted OTP from the third stage for appending malicious arguments into $L_1$ and $L_2$, when he proceeds to the third stage and submits the intercepted OTP to $\mathbb{S}	$, the user's entries will match as the first ones. 
\end{proof1}

\subsubsection{The Attacker Tampering with the Client}

\begin{theorem}
	If $\mathbb{C}$ is tampered with after $\Pi_{B}$, $\mathbb{U}$ can detect such a situation and prevent any malicious operation from being initialized.
\end{theorem}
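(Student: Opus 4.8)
The plan is to reduce the statement to two facts: (i) no operation can even enter the smart contract $\mathbb{S}$ without a transaction carrying a valid $\Sigma$-signature under $PK_\mathbb{U}$, and (ii) such a signature can only be produced by $\mathbb{W}$, which forces the human $\mathbb{U}$ to inspect the operation on its own trusted display before releasing it. Since, by the threat model of \autoref{sec:threat-model}, the three attacker types are mutually exclusive, the adversary tampering with $\mathbb{C}$ holds neither $SK_\mathbb{U}$ nor control of $\mathbb{W}$ or $\mathbb{A}$; these are the hypotheses I would make explicit at the start of the proof.

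First I would recall the initialization stage of $\Pi_O$: to create an operation, $\mathbb{C}$ assembles a call to \textit{initOp()} and must hand it to $\mathbb{W}$, which renders the operation type, the numeric parameter and the address parameter on its display, waits for a hardware-button press, and only then signs with $SK_\mathbb{U}$. On-chain, \textit{initOp()} begins with \textbf{assert} $\Sigma.verify(tx.\sigma, PK_\mathbb{U})$, so any transaction not bearing $\mathbb{W}$'s signature is discarded by every honest consensus node; this is precisely where the anchoring of $PK_\mathbb{U}$ in the constructor during $\Pi_B$ is used. Hence the only operations that can ever be initialized are exactly those that passed through $\mathbb{W}$'s confirmation screen.

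Next I would argue detection. A \emph{malicious operation} is, by definition, one whose parameters differ from what $\mathbb{U}$ intends. Because $\mathbb{U}$ is assumed to follow the protocol honestly, on each signing request $\mathbb{U}$ compares the parameters shown on $\mathbb{W}$'s display against the operation she actually requested. If the tampered $\mathbb{C}$ altered the recipient, the amount, or the operation type, the mismatch is visible on the trusted display of $\mathbb{W}$, so $\mathbb{U}$ withholds the button press, $\mathbb{W}$ does not sign, and the malicious call to \textit{initOp()} is never created; simultaneously $\mathbb{U}$ has learned that $\mathbb{C}$ is compromised. Conversely, if the parameters match, the operation is by definition not malicious. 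I would then close the remaining gap by noting that a tampered $\mathbb{C}$ cannot instead replay an older, legitimately-signed \textit{initOp()} transaction to smuggle in a different operation: each such transaction is a concrete platform-level transaction with its own (nonce-bound) identity, and the operation parameters it carries were already approved by $\mathbb{U}$; at worst $\mathbb{C}$ can re-submit an operation $\mathbb{U}$ already authorized, which is not an attack, or it can corrupt or suppress the returned $opID$, which only obstructs a legitimate confirmation without producing any malicious operation.

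The main obstacle I expect is making the detection step precise without over-claiming: the argument genuinely rests on $\mathbb{W}$ being the uncompromised first factor and on $\mathbb{U}$ actually performing the on-device comparison, so I would state these as explicit assumptions inherited from \autoref{sec:threat-model} rather than attempt to prove them. I would also be careful to separate ``$\mathbb{U}$ prevents initialization of a malicious operation'' (which holds) from ``$\mathbb{U}$ prevents all denial of service by $\mathbb{C}$'' (which does not, since a tampered $\mathbb{C}$ can always simply refuse to act), keeping the conclusion scoped to the former exactly as the theorem states.
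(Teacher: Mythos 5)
Your proof is correct and follows essentially the same route as the paper's: both arguments hinge on $\mathbb{W}$ being a hardware wallet with a trusted display holding $SK_\mathbb{U}$, so that every \textit{initOp()} call must pass $\mathbb{U}$'s on-device comparison of the operation parameters before a valid signature (checked by $\mathbb{S}$ against the $PK_\mathbb{U}$ anchored during $\Pi_B$) can exist. Your additional handling of replayed, previously-authorized transactions and your explicit scoping against denial-of-service are refinements the paper leaves implicit, but they do not change the underlying argument.
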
	

\begin{proof1}
	If we were to assume that $\mathbb{W}$ is implemented as a software wallet (or hardware wallet without a display), then $\mathcal{A}$ tampering with $\mathbb{C}$ might also tamper with the $\mathbb{W}$'s software running on the same machine. 
	This would in turn enable a malicious operation to be initialized and further confirmed by $\mathbb{U}$, since $\mathbb{U}$ would be presented with a legitimate data in $\mathbb{C}$ and $\mathbb{W}$, while the transactions would contain malicious data.
	Therefore, we require that $\mathbb{W}$ is implemented as a hardware wallet with a display, which exposes only signing capabilities, while $SK_{\mathbb{U}}$ never leaves the device (e.g., \cite{trezor-hw-wallet,keep-key,BitLox,ellipal-hw-wallet}).
	Due to it, $\mathbb{U}$ can verify the details of a transaction being signed in $\mathbb{W}$ and confirm signing only if the details match the information shown in $\mathbb{C}$ (for $\Pi_{O}$) or $\mathbb{A}$ (for $\Pi_{NR}^{\mathcal{I}}$).	
	We refer the reader to the work of Arapinis et al. \cite{ArapinisGKK19} for the security analysis of hardware wallets with displays. 
\end{proof1}

\begin{theorem}
	If $\mathbb{C}$ is tampered with during an execution of  $\Pi_{B}^{\mathcal{I}}$, $\mathcal{A}$ can neither intercept $k$ nor forge $\mathcal{R}$ nor forge $PK_\mathbb{U}$.
\end{theorem}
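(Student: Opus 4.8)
The plan is to split the statement into its three independent claims — no interception of $k$, no forging of $\mathcal{R}$, and no forging of $PK_\mathbb{U}$ — and to prove each by tracking exactly which values ever leave the two untampered components $\mathbb{A}$ and $\mathbb{W}$, and how $\mathbb{U}$ cross-checks them through the trusted display of $\mathbb{W}$. Recall from the description of $\Pi_{B}^{\mathcal{I}}$ in \autoref{sec:bootstraping-insec} that the only data transferred from $\mathbb{A}$ to the (possibly malicious) $\mathbb{C}$ is the set of Merkle-tree leaves $hOTPs=\{h(F_k(i))\}_{i}$ on a microSD card; that $\mathbb{W}$ generates $(SK_\mathbb{U},PK_\mathbb{U})$ internally and emits only the signed constructor transaction together with the $\mathcal{R}$ and $\mathbb{S}^{ID}=h(PK_\mathbb{U}\,\|\,\mathcal{R})$ it displays; and that $\mathbb{U}$ accepts only if the $\mathcal{R}$ shown by $\mathbb{W}$ equals the $\mathcal{R}$ recomputed and shown by $\mathbb{A}$, after which $\mathbb{U}$ records the $\mathbb{S}^{ID}$ shown by $\mathbb{W}$.

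First I would establish non-interception of $k$. Since $k$ is sampled inside $\mathbb{A}$ and is never transmitted, the only avenue for $\mathcal{A}$ (who controls $\mathbb{C}$ and the network) is to invert the transferred leaves. Each leaf has the form $h(F_k(i))=h(h(k\,\|\,i))$; modelling $h$ as $\mathcal{RO}$, recovering $k$ from any polynomial number of such images succeeds only with negligible probability by a standard one-wayness / unpredictability argument, of the same flavour as the bound used in \autoref{sec:security-of-otps}. Hence $k$ remains secret despite $\mathbb{C}$ being compromised.

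Next I would treat forging of $\mathcal{R}$ and forging of $PK_\mathbb{U}$ together, since both are neutralised by the cross-check $\mathbb{U}$ performs via the untampered $\mathbb{W}$. If $\mathcal{A}$ substitutes $\mathcal{R}'\neq\mathcal{R}$ into the constructor transaction before it reaches $\mathbb{W}$, then $\mathbb{W}$ displays $\mathcal{R}'$, which $\mathbb{U}$ compares against the $\mathcal{R}$ independently and deterministically derived from $k$ by the honest $\mathbb{A}$; the mismatch is detected and $\mathbb{U}$ aborts. If instead $\mathcal{A}$ tampers with the transaction after $\mathbb{W}$ has displayed and signed it — swapping in $\mathcal{R}'$, or replacing $PK_\mathbb{U}$ by $PK_\mathcal{A}$ — then the deployed contract receives an identifier $h(PK'\,\|\,\mathcal{R}')$, which by collision resistance of $h$ (again $\mathcal{RO}$) differs from the $\mathbb{S}^{ID}=h(PK_\mathbb{U}\,\|\,\mathcal{R})$ that $\mathbb{U}$ recorded off the $\mathbb{W}$ display, so $\mathbb{U}$ never interacts with the forged contract. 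The subtlety I expect to be the main obstacle is pinning down the ordering of events in the three-way interleaving of $\mathbb{A}$, $\mathbb{W}$, and $\mathbb{C}$: one must show that $\mathbb{U}$'s recording of $\mathbb{S}^{ID}$ occurs strictly after the $\mathcal{R}$-equality check and strictly before any funds are sent, so that a malicious $\mathbb{C}$ cannot steer $\mathbb{U}$ into recording an identifier that belongs to a contract with a forged $\mathcal{R}$ or $PK$ while still passing the equality check. This step, together with the explicit reliance on $\mathbb{W}$'s display being trustworthy (which we may import from the hardware-wallet analysis of Arapinis et al.\ cited earlier), is where the argument needs the most care; the cryptographic reductions themselves are routine $\mathcal{RO}$ arguments.
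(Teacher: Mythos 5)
Your proposal is correct and follows essentially the same route as the paper's own justification: $k$ is never transmitted in $\Pi_{B}^{\mathcal{I}}$ (only the non-secret leaves travel via microSD), and forgery of $\mathcal{R}$ or $PK_\mathbb{U}$ is caught because $\mathbb{S}^{ID}=h(PK_\mathbb{U}\,\|\,\mathcal{R})$ binds both values to the identifier $\mathbb{U}$ records off $\mathbb{W}$'s display only after the $\mathbb{A}$-vs-$\mathbb{W}$ equality check on $\mathcal{R}$. Your added explicit one-wayness argument for the leaves and the remark on event ordering are refinements the paper leaves implicit, but they do not change the structure of the argument.
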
	

\vspace{-0.2cm}
\begin{proof1}
	When the protocol $\Pi_{B}^{\mathcal{I}}$ is used, instead of an air-gapped transfer of $k$ from $\mathbb{A}$ to $\mathbb{C}$, $\mathbb{U}$ transfers leaves of the Merkle tree by microSD card.
	The leaves represent hashes of OTPs in the base version or the hashes of the last items of hash chains in the full version of SmartOTPs.  
	In both versions, the transferred data do not contain any secrets, hence $\mathcal{A}$ cannot take advantage of intercepting them.
	The next option that $\mathcal{A}$ may seek for is to forge $\mathcal{R}$ for $\mathcal{R}'$ and $PK_\mathbb{U}$ for $PK_\mathcal{A}$, which results in different $\mathbb{S}^{ID}$ than in the case of $\mathcal{R}$ and $PK_\mathbb{U}$, since $\mathbb{S}^{ID}$ is computed as $h(PK_\mathbb{U} ~\|~ \mathcal{R})$.
	While $PK_\mathbb{U}$ is stored at $\mathbb{W}$, the authenticity of $\mathcal{R}$ needs to be verified by $\mathbb{U}$ who compares displays of $\mathbb{A}$ and $\mathbb{W}$.
	Only in the case of equality, $\mathbb{U}$ knows that $\mathbb{S}^{ID}$ displayed in $\mathbb{W}$ maps to legitimate $PK_\mathbb{U}$ and~$\mathcal{R}$.
\end{proof1}

\subsubsection{The Attacker Possessing the Authenticator}
It is trivial to see that $\mathcal{A}$ with access to $\mathbb{A}$ is unable to initialize any operation with SmartOTPs since he does not hold $PK_\mathbb{U}$.

\subsubsection{Further Properties and Implications}

\paragraph{\textbf{Requirement on Block Confirmations.}}
Most cryptocurrencies suffer from long time to finality, potentially enabling the accidental forks, which create parallel inconsistent blockchain views.
On the other hand, this issue is not present at blockchain platforms with fast finality, such as Algorand \cite{gilad2017algorand}, HoneyBadgerBFT \cite{miller2016honey}, or StrongChain \cite{strongchain}. 
In blockchains with long time to finality, overly fast confirmation of an operation may be dangerous, as, if an operation were initiated in an ``incorrect'' view, an attacker holding $SK_{\mathbb{U}}$ would hijack the OTP and reuse it for a malicious operation settled in the ``correct'' view. 
To prevent this threat, the recommendation is to wait for several block confirmations to ensure that an accidental fork has not happened. 
For example, in Ethereum, the recommended number of block confirmations to wait is 12 (i.e., $\sim$3 minutes).
Note that such waiting can be done as a background task of $\mathbb{C}$, hence $\mathbb{U}$ does not have to wait: 
(1) considering that $\mathcal{A}$ possesses $SK_{\mathbb{U}}$, $\mathbb{C}$ can detect such a fork during the wait and resubmit the $initOp()$ transaction, 
(2) in the case of $\mathcal{A}$ tampering with $\mathbb{C}$, no operation can be initialized since $\mathbb{U}$ never signs $\mathcal{A}$'s transaction (due to the hardware wallet), and 
(3) $\mathcal{A}$ possessing $\mathbb{A}$ cannot initialize any operation as well. 

\paragraph{Attacks with a Post Quantum Computer.}
Although a resilience to quantum computing (QC) is not the focus of our work, it is of worthy to note that our scheme inherits a resilience to $QC$ from the hash-based cryptography.
The resilience of our scheme to QC is dependent on the output size of $h(.)$. 
A generic QC attack against $h(.)$ is Grover's algorithm \cite{grover1996fast}, providing a quadratic speedup in searching for the input of the black box function.
As indicated by Amy et al. \cite{amy2016estimating}, using this algorithm under realistic assumptions, the security of SHA-3 is reduced from 256 to 166 bits.
Applying these results to OTPs with 128-bit security from examples in \autoref{sec:security-of-otps}, we obtain 98-bits post-QC security. 
Further, when assuming the example with $\mathcal{L}=64$ from \autoref{sec:security-of-otps} and \cite{amy2016estimating}, to achieve 128-bits of post-QC security, we estimate the length of OTPs to 205-bits (i.e., 19 mnemonic words).

\subsection{Implementation}
\label{sec:implementation}

We have selected the Ethereum platform and the Solidity language for the implementation of $\mathbb{S}$,
HTML/JS for DAPP of $\mathbb{C}$, Java for smartphone App of $\mathbb{A}$,
and Trezor T\&One \cite{trezor-hw-wallet} for $\mathbb{W}$.
We selected $S=128$ bits, which has practical advantages for an air-gapped $\mathbb{A}$, producing OTPs that are 12 mnemonic words long or a QR code v1 (with a capacity of 17B).
Next, we used SHA-3 with truncated output to 128 bits as $h(.)$. 
We selected the size of $k$ equal to 128 bits, fitting 12 mnemonic words $\simeq$ 1 QR code v1.

So far, we have considered only the crypto-token transfer operation.
However, our proposed protocol enables us to extend the set of operations.
For demonstration purposes, we extended the operation set by supporting daily limits and last resort information (see Appendix of \cite{homoliak2020smartotps}). 
In addition, we made a hardware implementation of $\mathbb{A}$ using Node\-MCU \cite{node-mcu} equipped with ESP8266 with the overall cost below \$5 (see Appendix of \cite{homoliak2020smartotps}).
The source code of our implementation and videos are available at~\url{https://github.com/ivan-homoliak-sutd/SmartOTPs}.

\begin{figure}[t]
	\begin{center}		
		\includegraphics[width=0.45\textwidth]{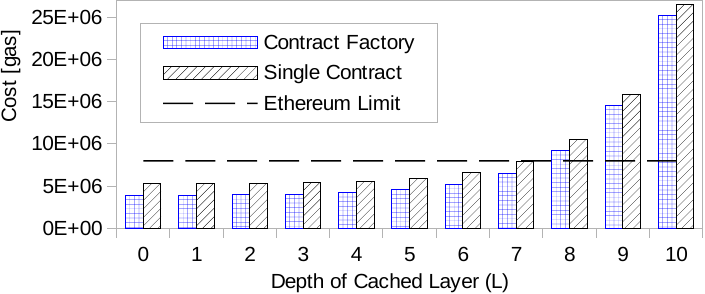} 
		\caption{Deployment costs ($H = H_S$).}
		\label{fig:MT-deploy-costs}
	\end{center}	
\end{figure}

\subsubsection{Analysis of the Costs}
\label{sec:costs-analysis}
In this section, we analyze the costs of our approach using the same bit-length $S$ for $h(.)$ as well as for OTPs.
$S$ significantly influences the gas consumption for storing the cached layer on the blockchain. 
We remark that measured costs can also be influenced by EVM internals (e.g., 32B-long words/alignment).
We assumed 8M as the maximum gas limit at the Ethereum main network, which, however, has already changed since this research was made.

\paragraph{\textbf{Costs Related to the Merkle Tree.}}
First, we abstracted from the concept of subtrees and hashchains to analyze a single tree (i.e., $H = H_S$).
The deployment costs of our scheme with respect to the depth $L$ ($\equiv L_S$) of the cached layer are presented in \autoref{fig:MT-deploy-costs}, which contains also the variant with the contract factory (saving $\sim1.3M$, regardless of $L_S$).
Although the cost of each operation supported by $\Pi_{O}$ is similar, here we selected the transfer of crypto-tokens $O^t$, 
and we measured the total cost of $O^t$ as follows: 
\vspace{-0.1cm}
\medmuskip=0mu\thinmuskip=0mu\thickmuskip=-0mu
\begin{eqnarray*}
	O^t\_cost(L, ~N, ~P)&=&  
	\overline{cost} \left(O^t(L, ~N, ~P) \right)
	+  \frac{cost \left(O^d(L) \right)}{N}, \\
	\overline{cost} \left( O^t(L, ~N, ~P) \right) &=& \frac{1}{N} \sum_{i=1}^{N}{ cost \left(O^{t}_{i}(L, ~N, ~P) \right)},  \\
	cost \left(O^{t}_{i}(L, ~N, ~P) \right) &=& cost \left( O^{t.init}_i \right) + cost \left( O^{t.confirm}_i(L, ~N, ~P) \right),
\end{eqnarray*}
\medmuskip=3mu\thinmuskip=1mu\thickmuskip=5mu
where \textit{cost()} measures the cost of an operation in gas units, and $O_d$ represents the deployment operation.
As the purpose of the cached layer is to reduce the number of hash computations in \textit{confirmOp()}, the size of an optimal cached layer is subject to a trade-off between the cost of storing the cached layer on the blockchain and the savings benefit of the caching.
In \autoref{fig:MT-expenses}, we can see that the total average cost per transfer decreases with the increasing number of OTPs, as the deployment cost is spread across more OTPs.
The optimal point depicted in the figure minimizes $O^t$ by balancing $cost(O^d(L))$) and  $\overline{cost}(O^t(L, ~N, ~P))$.
We see that $L=H-3$ for such an optimal point.

\begin{figure}[t]
	\centering

	\includegraphics[width=0.45\textwidth]{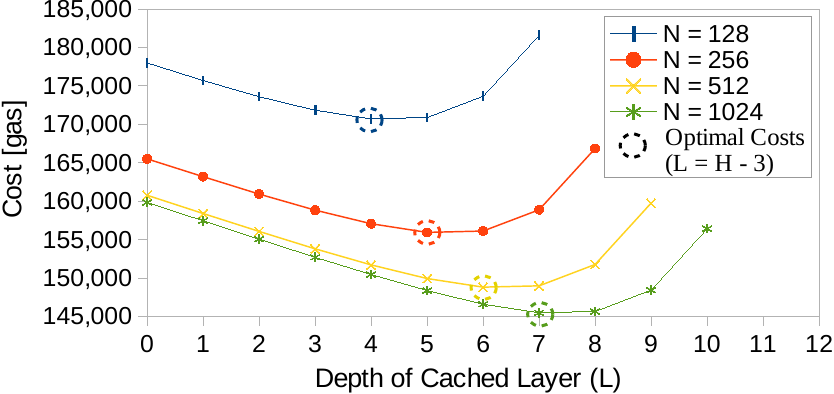}

	\caption{Average total cost per transfer ($H = H_S$).}\label{fig:MT-expenses}
\end{figure}

\begin{figure}[t]
	\centering	
	\vspace{0.3cm}
	
	\subfloat[\label{fig:fig:MT-current-benefit-of-caching-256} $H = 8$]{
		\hspace{-0.25cm}
		\includegraphics[width=0.33\textwidth]{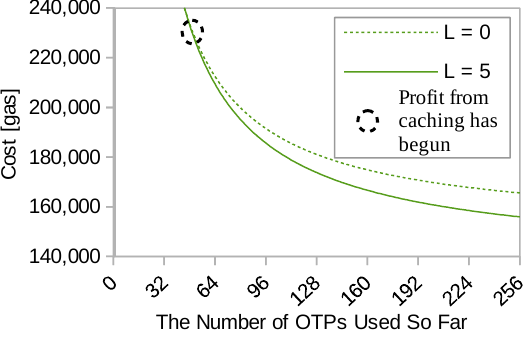} 
	}
	\subfloat[\label{fig:fig:MT-current-benefit-of-caching-512} $H = 9$]{
		\hspace{-0.25cm}
		\includegraphics[width=0.33\textwidth]{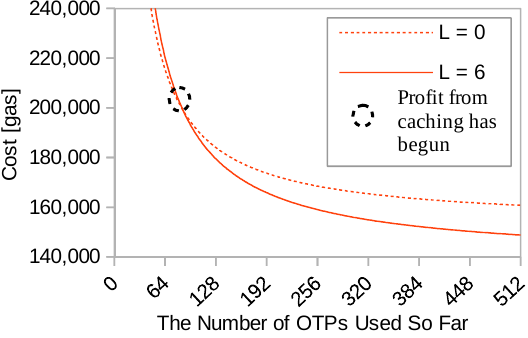} 
	}
	\subfloat[\label{fig:fig:MT-current-benefit-of-caching-1024} $H = 10$]{
		\hspace{-0.25cm}
		\includegraphics[width=0.33\textwidth]{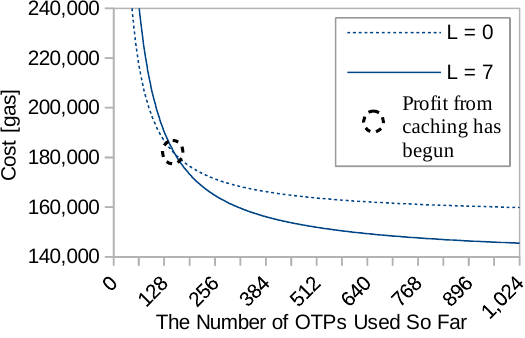} 
	}
	\vspace{0.3cm}
	\caption{Rolling average cost per transfer ($H = H_S$).}
	\label{fig:MT-current-benefit-of-caching}
\end{figure}
Next, we explored the number of transfer operations to be executed until a profit of the caching has begun (see \autoref{fig:MT-current-benefit-of-caching}).
We computed a rolling average cost per $O^t$, while distinguishing between the optimal caching layer and disabled caching.
We measured the cost of introducing the next subtree within a parent tree depending on $L_S$, while we set $H = 20$ and $H_S = 10$ (see \autoref{fig:cost-of-new-subtree}).
We found out that when subtrees (and their cached sublayers) are introduced within a dedicated operation, it is significantly cheaper compared to the introduction of a subtree during the deployment.

\begin{figure}[t]
	\centering

	\includegraphics[width=0.45\textwidth]{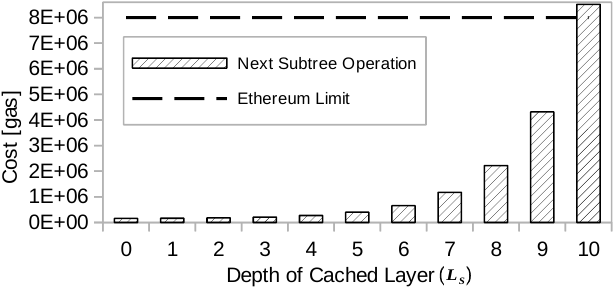} 	
	
	\caption{Cost of introducing the next subtree ($H = 20,~ H_S = 10$).}\label{fig:cost-of-new-subtree}
\end{figure}

\paragraph{\textbf{Costs Related to Hash Chains.}}
Since each iteration layer of hash chains contributes to an average cost of $confirmOp()$ with around the same value, we measured this value on a few trees with $P$ up to 512. 
Next, using this value and the deployment cost, we calculated the average total cost per transfer by adding layers of hash chains to a tree with $H = H_S$, thus increasing $N$ by a factor of $P$ until the minimum cost was found.
As a result, the optimal caching layer shifted to the leaves of the tree (see \autoref{fig:expenses-hash-chain-optim}), which would however, exceed the gas limit of Ethereum.
To respect the gas limit, we adjusted $L = 7$, as depicted in \autoref{fig:expenses-hash-chain-maxCache}.
In contrast to the configurations with $L = 0~\text{and}~ P = 1$ (from \autoref{fig:MT-expenses}), we achieved savings of $27.80\%$, $19.61\%$, $14.95\%$, and $12.51\%$ for trees with $H$ equal to $7$, $8$, $9$, and $10$, respectively.
For completeness, we calculated costs for $L=0$ as well (see \autoref{fig:expenses-hash-chain-zero-cache}).
Note that for $L = 0$ and $L = 7$, smaller trees are ``less expensive,'' as they require less operations related to the proof verification in contrast to bigger trees; these operations consume substantially more gas than operations related to hash chains.
Although we minimized the total cost per transfer by finding an optimal $P$, we highlight that increasing $P$ contributes to the cost only minimally but on the other hand, it increases the variance of the cost.
Hence, one may set this parameter even at higher values, depending on the use case.

\begin{figure*}[t]
	\centering
	
	\hspace{-0.5cm}	
	
	\subfloat[\label{fig:expenses-hash-chain-optim}$L = H$]{
		\hspace{-0.25cm}
		\includegraphics[width=0.35\textwidth]{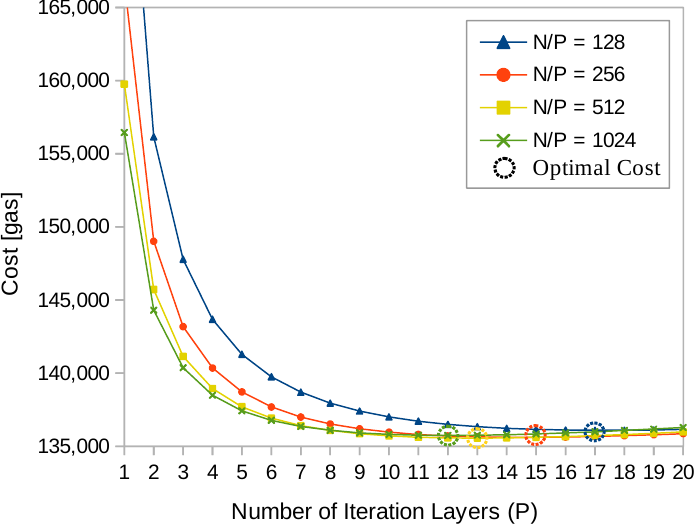} 
	}
	\subfloat[\label{fig:expenses-hash-chain-maxCache}$L = 7$]{
		\hspace{-0.25cm}
		\includegraphics[width=0.33\textwidth]{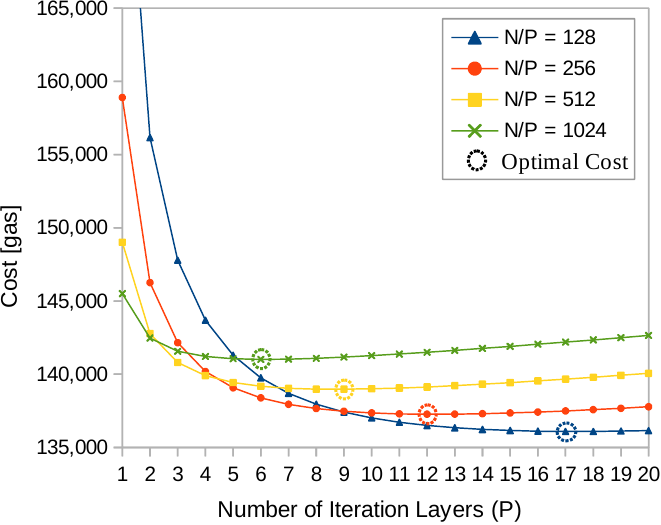} 
	}
	\subfloat[\label{fig:expenses-hash-chain-zero-cache}$L = 0$]{
		\hspace{-0.25cm}
		\includegraphics[width=0.33\textwidth]{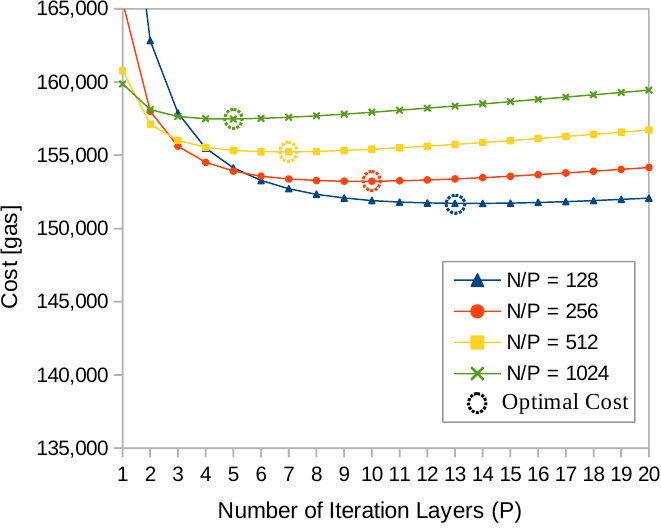} 
	}

	\hspace{1.5cm}
	\vspace{0.3cm}
	\caption{Average total cost per transfer with regards to the length $P$ of hash chains.}
	\label{fig:hash-chain-expenses}
\end{figure*}

\section{Contributing Papers}\label{sec:wallets-papers}
The papers that contributed to this research direction are enumerated in the following, while highlighted papers are attached to this thesis in their original form.

\begingroup
\let\clearpage\relax

\renewcommand\bibname{}
\vspace{-7em}

\endgroup


\chapter{Electronic Voting}\label{chapter:evoting}

In this chapter, we present our contributions to the area of decentralized applications for remote electronic voting using blockchain, which belongs to the application layer of our security reference architecture (see \autoref{chapter:sra}).
In particular, this chapter is focused on privacy, scalability, security, and other practical aspects of blockchain-based e-voting, and it is based on the papers \cite{homoliak2023bbb,stanvcikova2023sbvote,venugopalan2023always} (see also \autoref{sec:evoting-papers}).

First, we review the desired properties of e-voting and then we introduce BBB-Voting, a blockchain-based boardroom voting that provides 1-out-of-$k$ choices and the maximum voter privacy in the setting that outputs the full tally of votes, while additionally offering a mechanism for resolution of faulty participants (see \autoref{sec:evoting-bbbvoting}).

Next, we aim to resolve the scalability limitation of the self-tallying approaches for boardroom voting (such as OVN \cite{McCorrySH17} and BBB-Voting \cite{homoliak2023bbb}) while maintaining security and maximum voter privacy.
Therefore, we introduce SBvote \cite{stanvcikova2023sbvote}, a decentralized blockchain-based e-voting protocol providing scalability in the number of participants by grouping them into voting booths instantiated as dedicated smart contracts that are controlled and verified by the aggregation smart contract.
We base our work on BBB-Voting since it enables more than two vote choices and recovery of faulty participants in contrast to OVN. 
See details of SBvote in \autoref{sec:voting-sbvote}.

Finally, we identify two shortcomings in present governance systems for voting: 
(a) the inability of participants to change their vote between two consecutive elections (e.g., that might be a few years apart), and 
(b) a manipulation of participants via peak-end effect \cite{Dash2018,Healy2014,Wlezien2015}.
As a response, we propose Always-on-Voting (AoV) framework \cite{venugopalan2023always} that has four key features: (1) it works in repetitive epochs, (2) voters are allowed to change their vote anytime before the end of each epoch, (3) ends of epochs are randomized and unpredictable, and (4) only the supermajority of votes can change the previous winning vote choice. 
AoV uses public randomness and commitments to a future event to determine when the current epoch should end. 
See details of AoV in \autoref{sec:evoting-aov}, where we analyze two different adversaries. 


\subsection{Properties of E-Voting}\label{sec:evoting-preliminaries}
A voting protocol is expected to meet several properties.
A list of such properties appears in the works of Kiayias and Yung~\cite{Kiayias2002}, Groth~\cite{Groth2004}, and Cramer et al.~\cite{cgs97}. 

\begin{compactdesc}
\item \textbf{(1) Privacy of Vote:} 
ensures the secrecy of the ballot contents~\cite{Kiayias2002}.
Hence, a participant's vote must not be revealed other than by the participant herself upon her discretion (or through the collusion of all remaining participants).
Usually, privacy is ensured by trusting authorities in traditional elections or by homomorphic encryption in some decentralized e-voting solutions (e.g.,~\cite{Kiayias2002,HaoRZ10,McCorrySH17,DBLP:conf/fc/SeifelnasrGY20,icissp:DMMM18}). 
%
%
\item \textbf{(2) Perfect Ballot Secrecy:} is an extension of the privacy of the vote, stating that a partial tally (i.e., prior to the end of voting) is available only if all remaining participants are involved in its computation.	
\item \textbf{(3) Fairness:} ensures that a tally may be calculated only after all participants have submitted their votes.
Therefore, no partial tally can be revealed to anyone before the end of the voting protocol~\cite{Kiayias2002}.
%
\item \textbf{(4a) Universal Verifiability:} any involved party can verify that all cast votes are correct and they are correctly included in the final tally~\cite{Kiayias2002}.
%
\item \textbf{(4b) End-to-End (E2E) Verifiability:}
The verifiability of voting systems is also assessed by E2E verifiability~\cite{benaloh2015end} as follows: 
\begin{compactitem}
	\item \emph{\textbf{cast-as-intended}:} a voter can verify the encrypted vote contains her choice of candidate,
	\item \emph{\textbf{recorded-as-cast}:} a voter can verify the system recorded her vote correctly,
	\item \emph{\textbf{tallied-as-recorded}:} any interested party is able to verify whether the final tally corresponds to the recorded votes.
\end{compactitem}
A voting system that satisfies all these properties is end-to-end verifiable.

\item \textbf{(5) Dispute-Freeness:} extends the notion of verifiability.
A dispute-free~\cite{Kiayias2002} voting protocol contains built-in mechanisms eliminating disputes between participants; therefore, anyone can verify whether a participant followed the protocol. 
Such a scheme has a publicly-verifiable audit trail that contributes to the reliability and trustworthiness of the scheme.
\item \textbf{(6) Self-Tallying:} once all the votes are cast, any involved party can compute the tally without unblinding the individual votes.
Self-tallying systems need to deal with the fairness issues (see (3) above) because the last participant is able to compute the tally even before casting her vote.
This can be rectified with an additional verifiable dummy vote~\cite{Kiayias2002}.
%
\item \textbf{(7) Robustness (Fault Tolerance):} the voting protocol is able to recover from faulty (stalling) participants, where faults are publicly visible and verifiable due to dispute-freeness~\cite{Kiayias2002}. 		
Fault recovery is possible when all the remaining honest participants are involved in the recovery.		
%
%
\item \textbf{(8) Resistance to Serious Failures:}
Serious failures are defined as situations in which voting results were changed either by a simple error or an adversarial attack. 
Such a change may or may not be detected. 
If detected in non-resistant systems, it is irreparable without restarting the entire voting~\cite{park2021going}.

\item \textbf{(9) Receipt-Freeness:} A participant is unable to supply a receipt of her vote after casting the vote.
The goal of receipt-freeness is to prevent vote-selling.
This property also prevents a post election coercion~\cite{Benaloh1994,Hirt2000}, where a participant may be coerced to reveal her vote.

\item{\textbf{(10) Dispute-Freeness.}}
The protocol's design prevents any disputes among involved parties by allowing anyone to verify whether a participant followed the protocol.

\end{compactdesc}

\renewcommand{\name}{BBB-Voting\xspace}
\section{\name}\label{sec:evoting-bbbvoting}

\subsection{System Model}

Our system model of \name \cite{homoliak2023bbb} has the following actor/components:
(1) \textit{a participant (P)} who votes,
(2) \textit{a voting authority ($VA$)} who is responsible for validating the eligibility of $P$s to vote, their registration, and 
(3) \textit{a smart contract (SC)}, which collects the votes, acts as a verifier of zero-knowledge proofs, enforces the rules of voting, and verifies the tally.

\subsubsection{Adversary Model}\label{sec:adversary-model}
The adversary $\mathcal{A}$ has bounded computing power, is unable to break used cryptographic primitives, and can control at most $t$ of $n$ participants during the protocol, where $t\leq n-2$ $\wedge$ $n\geq3$ (see the proof in the appendix of \cite{venugopalan2020bbbvoting}).
%
Any $P$ under the control of $\mathcal{A}$ can misbehave during the protocol execution. 
$\mathcal{A}$ is also a passive listener of all communication entering the blockchain network but cannot block it or replace it with a malicious message since all transactions sent to the blockchain are authenticated by signatures of $P$s or $VA$.
Finally, $VA$ \textbf{is only trusted in terms of identity management}, i.e., it performs identity verification of $P$s honestly, and neither censor any $P$ nor register any spoofed $P$.
Nevertheless, no other trust in $VA$ is required. 

\subsection{Overview of Proposed Approach}
\label{sec:bbbvoting-protocol}
\name scheme provides all properties mentioned in \autoref{sec:evoting-preliminaries} but receipt-free\-ness.
Similar to OVN~\cite{McCorrySH17}, \name publishes the full tally at the output and uses homomorphic encryption to achieve privacy of votes and perfect ballot secrecy.
%
In detail, we extend the protocol of Hao et al.~\cite{HaoRZ10} to support $k$ choices utilizing the 1-out-of-$k$ proof verification proposed by Kiayias and Yung~\cite{Kiayias2002}, and we accommodate this approach to run on the blockchain. 
Additionally, we extend our protocol to support the robustness, based on Khader et al.~\cite{KhaderSRH12},
which enables the protocol to recover (without a restart) from faulty participants who did not submit their votes.
As a consequence, robustness increases the resistance of our protocol to serious failures. 

\vspace{-0.2cm}
\subsection{Base Variant}
\label{ssec:basicprotocol}
\begin{figure}[ht]
	\centering
	
	\includegraphics[width=0.55\linewidth]{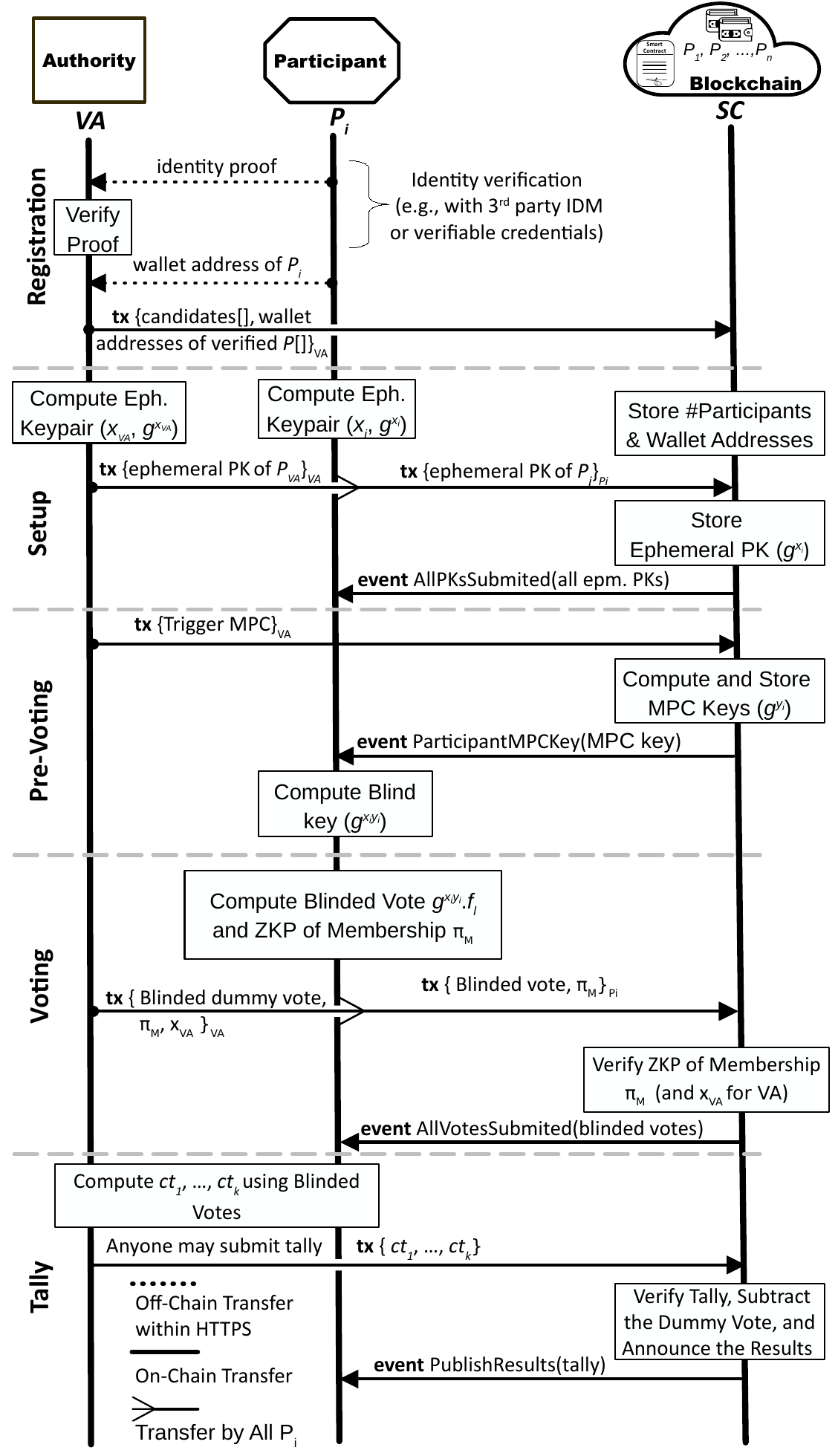}
	\caption{Basic protocol of \name.
	}
	\label{fig:bbb:operation-scheme-bw}
\end{figure}
%
The base variant of \name (see \autoref{fig:bbb:operation-scheme-bw}) does not involve a fault recovery and is divided into five stages: 
\textbf{registration} (identity verification,
key ownership verification, enrollment at $SC$), a \textbf{setup} (an agreement on system parameters, submission of ephemeral public keys), \textbf{pre-voting} (computation of MPC keys), \textbf{voting} (vote packing, blinding, and verification), and \textbf{tally} phases.
%
All faulty behaviors of $P$s and $VA$ are subject to deposit-based penalties. 
In detail, $P$ who submitted her ephemeral key (in the setup phase) and then has not voted within the timeout will lose the deposit. 
To achieve fairness, $VA$ acts as the last $P$ who submits a ``dummy'' vote with her ephemeral private key\footnote{Privacy for a dummy vote is not guaranteed since it is subtracted.} after all other $P$s cast their vote (or upon the voting timeout expiration). 

\subsubsection{\textbf{Phase 1 (Registration)}}
%
$VA$ first verifies the identity proof of each $P$.
For decentralized identity management (IDM), the identity proof is represented by the verifiable credentials (VC)~\cite{verifiable-credentails} signed by the issuer, while in a centralized IDM the identity proof is interactively provided by a third-party identity provider (e.g., Google).
First, $VA$ verifies the issuer's signature on the identity proof. 
Next, $VA$ challenges $P$ to prove (using her VC) that she is indeed the owner of the identity.
Further, each $P$ creates her blockchain wallet address (i.e., the blockchain public key (PK)) and provides it to $VA$.
The $VA$ locally stores a bijective mapping between a $P$'s identity and her wallet address.\footnote{Note that the address of $P$ must not be part of identity proof -- avoiding $VA$ to possess a proof of identity to blockchain address mapping (see \autoref{sec:blockchain-specific-issues}).} 
Next, $VA$ enrolls all verified $P$s by sending their wallet addresses to $SC$.


\subsubsection{\textbf{Phase 2 (Setup)}}
$P$s agree on system parameters that are universal to voting -- the parameters for voting are publicly visible on $SC$ (deployed by $VA$ in a transaction).
Therefore, any $P$ may verify these parameters before joining the protocol.
Note the deployment transaction also contains the specification of timeouts for all further phases of the protocol as well as deposit-based penalties for misbehavior of $VA$ and $P$s.
The parameters for voting are set as follows:

\setlength{\parskip}{5pt} \setlength{\itemsep}{5pt}
\begin{compactenum}[$\;$]
	\item[1)] $VA$ selects a common generator $g\in \mathbb{F}_p^*$. 
	The value of $p$ is chosen to be a safe prime, i.e., $p=2\cdot  q +1$, where $q$ is a prime. 
	A safe prime is chosen to ensure the multiplicative group of order  $p - 1 = 2\cdot q$, which has no small subgroups that are trivial to detect.\footnote{We use modular exponentiation by repeated squaring to compute $g^x$ $mod$ $p$, which has a time complexity of $\mathcal{O}$(($log$ $x$)$\cdot (log^2$ $p$))~\cite{Koblitz}. } 
	Let $n < p - 1$.	
	\item[2)] Any participant $P_i$ is later permitted to submit a vote $\{v_i ~|~ i\in \{1,2,...,k\}\}$ for one of $k$ choices. This is achieved by selecting $k$ independent generators $\{f_1,...,f_k\}$  in $\mathbb{F}_p^*$ (one for each choice).
	These generators for choices should meet a property described by Hao et al.~\cite{HaoRZ10} to preclude having two different valid tallies that fit \autoref{eqn:eq3}:
	\begin{eqnarray}
		f_i = 
		\begin{cases}
			g^{2^0}       & \quad \text{for choice 1},\\
			g^{2^m}       & \quad \text{for choice 2},\\
			\quad \quad \quad \cdots \\
			g^{2^{(k-1)m}} & \quad \text{for choice k},
		\end{cases}
	\end{eqnarray}
	where $m$ is the smallest integer such that $2^m > n$ (the number of participants).		
\end{compactenum}


\noindent\textbf{Ephemeral Key Generation \& Committing to Vote.}
Each $P_i$ creates her ephemeral private key as a random number $x_i\in_{R} \mathbb{F}_p^*$ and ephemeral public key as $g^{x_i}$.
Each $P_i$ sends her ephemeral public key to $SC$ in a transaction signed by her wallet, thereby, committing to submit a vote later.\footnote{In contrast to OVN~\cite{McCorrySH17} (based on the idea from~\cite{HaoRZ10}), we do not require $P_i$ to submit ZKP of knowledge of $x_i$ to $SC$ since $P_i$ may only lose by submitting $g^{x_i}$ to which she does not know $x_i$ (i.e., a chance to vote + deposit). 
}
Furthermore, $P_i$ sends a deposit in this transaction, which can be retrieved back after the end of voting.
However, if $P_i$ does not vote within a timeout (or does not participate in a fault recovery (see \autoref{sec:bbbvoting-faulttolerance})), the deposit is lost, and it is split to the remaining involved parties.
$P$s who do not submit their ephemeral keys in this stage are indicating that they do not intend to vote; the protocol continues without them and they are not subject to penalties.		
Finally, each $P$ obtains (from $SC$) the ephemeral public keys of all other verified $P$s who have committed to voting.
Ephemeral keys are one-time keys, and thus can be used only within one run of the protocol to ensure privacy of votes (other runs require fresh ephemeral keys). 

\subsubsection{\textbf{Phase 3 (Pre-Voting)}}
This phase represents multiparty computation (MPC), which is run to synchronize the keys among all $P$s and achieve the \textit{self-tallying} property.
However, no direct interaction among $P$s is required since all ephemeral public keys are  published at $SC$. The MPC keys are computed by $SC$, when $VA$ triggers the compute operation via a transaction.
The $SC$ computes and stores the MPC key for each $P_i$ as follows:
\begin{equation}
	\label{eqn:eq1}
	h=g^{y_i}=\prod\limits_{j=1}^{i-1} g^{x_j}/\prod\limits_{j=i+1}^{n} g^{x_j},
\end{equation}
where $y_i=\sum_{j<i}x_j- \sum_{j>i}x_j$
and
$\sum_{i}x_i y_i=0$ (see Hao et al.~\cite{HaoRZ10} for the proof).
%
While anyone can compute $g^{y_i}$, to reveal $y_i$, all $P$s $\setminus $ $P_i$ must either collude or solve the DLP for \autoref{eqn:eq1}.
%
As the corollary of \autoref{eqn:eq1}, the protocol preserves vote privacy if there are at least 3 $P$s with at least 2 honest (see the proof in the extended version of our paper \cite{venugopalan2020bbbvoting}).


\begin{figure}[t]
	\begin{center}
		\scriptsize
		\fbox{
			\begin{protocolm}{2}						
				\participants{\underline{Participant $P_i$}}{\underline{Smart Contract}}
				\participants{($~h\leftarrow g^{y_i},~ v_i$)}{($~h\leftarrow g^{y_i}$)}
				\hline	
				& &\\		
				Select~v_i\in\{1,...,k\}, & & \\	 				
				Use~choice~generators~\\
				f_l\in\{f_1,...,f_k\}\subseteq\mathbb{F}_p^*,\\			
				Publish~ x\leftarrow g^{x_i} && 	\\
				Publish~ B_i\leftarrow h^{x_i}{f_l} & & \\
				
				
				w\in_{R} \mathbb{F}_p^*\\

				\forall l\in \{1,..,k\}\setminus {v_i}:\\
				
				\quad 1.~r_l,d_l\in_{R} \mathbb{F}_p^* \\
				\quad 2.~a_l\leftarrow x^{-d_l}g^{r_l}\\ 				
				\quad 3.~b_l\leftarrow h^{r_l}(\frac{B_i}{f_l})^{-d_l} \\

				for\enspace {v_i}:\\
				
				\quad 1.~a_{v_i}\leftarrow g^{w}\\
				\quad 2.~b_{v_i}\leftarrow h^{w} \\
				\hdashline
				c\gets\Hzk(\{ \{a_l\},\{b_l\} \}_{l})\\
				\hdashline
				
				for\enspace {v_i}:\\
				
				\quad 1.~d_{v_i}\leftarrow \sum_{l\neq{v_i}}d_l\\
				\quad 2.~d_{v_i}\leftarrow c - d_{v_i}\\
				
				\quad 3.~r_{v_i}\leftarrow w+x_id_{v_i} \\			
				\quad 4.~q\leftarrow p-1\\
				\quad 5.~{r_{v_i}\leftarrow r_{v_i}\mod{q}}\\

				& \forall l: \{a_l\}, \\
				& \sends{\{b_l\}, \{r_l\},\{d_l\}}
				&\\
				
				& &\Psi \gets \{\forall l: \{a_l\},\{b_l\}\}  \\
				&& c \gets\Hzk(\Psi)\\
				
				& &\sum_{l} d_{l} \stackrel{?}{=} c\\
				
				&& \forall l\in \{1,..,k\}\\

				&&\quad 1.~g^{r_l}\stackrel{?}{=}{a_l}x^{d_l}\\
				&&\quad 2.~h^{r_l}\stackrel{?}{=}{b_l}(\frac{B_i}{{f_l}})^{d_l}\\
			\end{protocolm}
		}
	\end{center}
	\caption{ZKP of set membership for 1-out-of-$k$ choices. }
	\label{fig:ZKPMultiCandidate}
\end{figure}	

\subsubsection{\textbf{Phase 4 (Voting)}}
In this phase, each $P_i$  blinds and submits her
vote to $SC$.
These steps must ensure the recoverability of the tally, vote privacy, and well-formedness of the vote.
Vote privacy is achieved by multiplying the $P_i$'s blinding key with her vote choice.
The blinded vote of the participant $P_i$ is
\begin{equation}
	\label{eqn:eq2}
	B_i=\begin{cases}
		g^{x_iy_i}f_1 & \text{if $P_i$ votes for choice 1},\\
		g^{x_iy_i}f_2 & \text{if $P_i$ votes for choice 2},\\
		~~~~~~~~~~~~~...\\
		g^{x_iy_i}f_k & \text{if $P_i$ votes for choice \textit{k}.}
	\end{cases}
\end{equation}
The $P$ sends her choice
within a blinded vote along with a  1-out-of-$k$ non-interactive zero-knowledge (NIZK) proof of set membership to $SC$ (i.e., proving that the vote choice $\in \{1, ..., k\}$).
We modified the approach proposed by Kiayias et al.~\cite{Kiayias2002} to the form used by Hao et al.~\cite{HaoRZ10}, which is convenient for practical deployment on existing smart contract platforms.
The verification of set membership using this protocol is depicted in \autoref{fig:ZKPMultiCandidate}, where $P_i$ is a prover and $SC$ is the verifier.
Hence, $SC$ verifies the correctness of the proof and then stores the blinded vote.
In this stage, it is important to ensure that no re-voting is possible, which is to avoid any inference about the final vote of $P$ in the case she would change her vote choice during the voting stage.
Such a re-voting logic can be enforced by $SC$, while user interface of the $P$ should also not allow it.
Moreover, to ensure fairness, $VA$ acts as the last $P$ who submits a dummy vote and her ephemeral private key.

\subsubsection{\textbf{Phase~5~(Tally)}}
When the voting finishes (i.e., voting timeout expires or all $P$s and $VA$ cast their votes), the tally of votes received for each of $k$ choices is computed off-chain by any party and then submitted to $SC$.
When $SC$ receives the tally, it verifies whether \autoref{eqn:eq3} holds, subtracts a dummy vote of $VA$, and notifies all $P$s about the result. 
The tally is represented by vote counts $ct_i,\forall i\in\{1,...,k\}$ of each choice, which are computed using an exhaustive search fitting 
\begin{equation}
	\label{eqn:eq3}
	\prod\limits_{i=1}^{n}B_i=
	\prod\limits_{i=1}^{n}g^{x_iy_i}f=
	g^{\sum_{i}x_i y_i}f=
	{f_1}^{ct_1}{f_2}^{ct_2}...{f_k}^{ct_k}.
\end{equation}
The maximum number of attempts is bounded by combinations with repetitions to ${n+k-1}\choose{k}$.
%
Although the exhaustive search of 1-out-of-$k$ voting is more computationally demanding in contrast to 1-out-of-2 voting~\cite{McCorrySH17},~\cite{HaoRZ10}, this process can be heavily parallelized. 
See time measurements in \autoref{sec:tally-exps}.

\subsection{Variant with Robustness}\label{sec:bbbvoting-faulttolerance}

We extend the base variant of \name by a fault recovery mechanism.
If one or more $P$s stall (i.e., are faulty) and do not submit their blinded vote in the voting stage despite committing in doing so, the tally cannot be computed directly.
To recover from faulty $P$s, we adapt the solution proposed by Khader et al.~\cite{KhaderSRH12}, and we place the fault recovery phase immediately after the voting phase. 
All remaining honest $P$s are expected to repair their vote by a transaction to $SC$, which contains key materials shared with all faulty $P$s and their NIZK proof of correctness.
$SC$ verifies all key materials with proofs (see  \autoref{fig:ZKPdiffie}), and then they are used to invert out the counter-party keys from a blinded vote of an honest $P$ who sent the vote-repairing transaction to $SC$.
Even if some of the honest (i.e., non-faulty) $P$s would be faulty during the recovery phase (i.e., do not submit vote-repairing transaction), it is still possible to recover from such a state by repeating the next round of the fault recovery, but this time only with key materials related to new faulty $P$s.
To disincentivize stalling participants, they lose their deposits, which is split across remaining $P$s as a compensation for the cost of fault recovery. 

\begin{figure}[t]
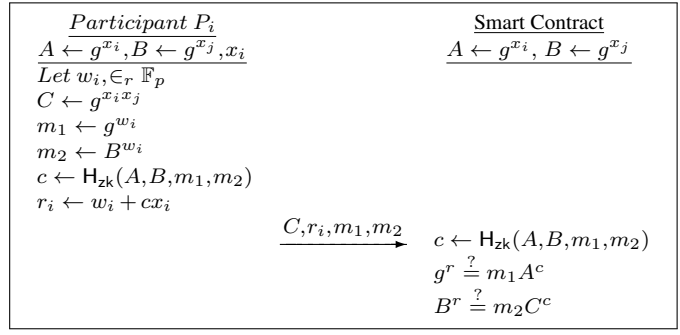

	\begin{center}
		\fbox{\scriptsize
			\begin{protocolm}{2}	
				
				\participants{\underline{$Participant~P_i$}}{\underline{Smart Contract}}
				\participants{\underline{$A\leftarrow g^{x_i},B
						\leftarrow g^{x_j},x_i$}}{\underline{$A\leftarrow g^{x_i},~B\leftarrow g^{x_j}$}}
				Let~w_i,\in_{r}\mathbb{F}_p\\
				C\leftarrow g^{x_ix_j}\\
				m_1\leftarrow {g}^{w_i}\\
				m_2\leftarrow {B}^{w_i}  \\ 

				c\leftarrow  \Hzk(A,B,m_1,m_2)\\ 	
				r_i\leftarrow w_i+ cx_i &&\\
				&\sends{C,r_i,m_1,m_2} & c\leftarrow  \Hzk(A,B,m_1,m_2)\\			

				&  & g^r\stackrel{?}{=}{m_1}{A^c}\\
				&  & B^r\stackrel{?}{=}{m_2}{C^c}\\

			\end{protocolm}
		}
	\end{center}
	\caption{ZKP verifying correspondence of $g^{x_ix_j}$ to public keys $A=g^{x_i}, B=g^{x_j}$. 
	}
	\label{fig:ZKPdiffie}
\end{figure}


\subsection{Implementation}\label{sec:bbbvoting-implementation}

We selected the Ethereum-based environment for evaluation due to its wide\-spread adoption and open standardized architecture (driven by the Enterprise Ethereum Alliance \cite{EEA}), which is incorporated by many blockchain projects.
We implemented $SC$ components in Solidity, while $VA$ and $P$ components were implemented in Javascript. 
In this section, we analyze the costs imposed by our approach, perform a few optimizations, and compare the costs with OVN~\cite{McCorrySH17}.
%
In the context of this work, we assume 10M as the block gas limit.
With block gas limit assumed, our implementation supports up to 135 participants (see \autoref{fig:mpc-comparison}), up to 7 vote choices (see \autoref{fig:submit-vote}), and up to 9 simultaneously stalling faulty participants (see \autoref{fig:repair-vote}).\footnote{The max. corresponds to a single recovery round but the total number of faulty participants can be unlimited since the fault recovery round can repeat.}
%


We made two different implementations, the first one  is based on DLP for integers modulo $p$ (denoted as integer arithmetic (IA)), and the second one is based on the elliptic curve DLP (denoted as ECC).
In the ECC, we used a standardized \textit{Secp256k1} curve from existing libraries~\cite{witnet-ecc-solidity}, \cite{McCorrySH17}.
In the case of IA, we used a dedicated library~\cite{bignumber-solidity} for operations with big numbers since EVM natively supports only 256-bit long words, which does not provide sufficient security level with respect to the DLP for integers modulo $p$.\footnote{Since this DLP was already computed for 795-bit long safe prime in 2019~\cite{integer-dlp-2019}, only values higher than 795-bit are considered secure enough.}
We consider 1024 bits the minimal secure (library-supported) length of numbers in IA.
As we will show below, IA implementation even with minimal 1024 bits is overly expensive, and thus in many cases does not fit the block gas limit by a single transaction. 
Therefore, in our experiments, we put emphasis on ECC implementation. 
The source code of our implementation is available at \url{https://github.com/ivan-homoliak-sutd/BBB-Voting}.

\begin{figure}[t]
	\centering
	\includegraphics[width=0.55\linewidth]{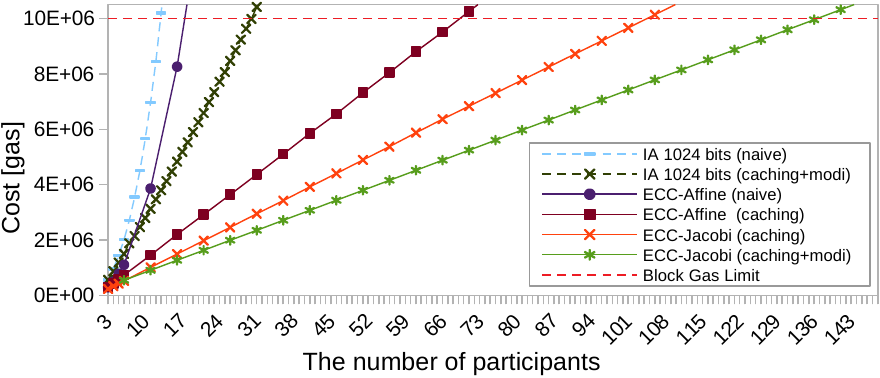}
	\caption{A computation of MPC keys by the authority.} 
\label{fig:mpc-comparison}
\end{figure}

\begin{figure}[t]
	\centering	
	\subfloat[Vote submission by $P_i$.\label{fig:submit-vote}]{	
		\includegraphics[width=0.5\columnwidth]{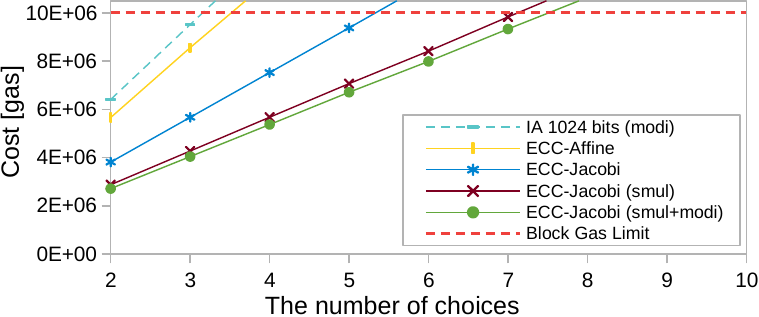}
		\vspace{0.3cm}
	}
	\subfloat[Vote repair by $P_i$.\label{fig:repair-vote}]{	
		\includegraphics[width=0.5\columnwidth]{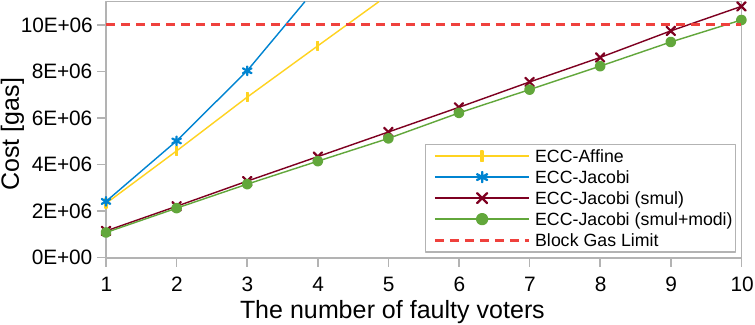}	
		\vspace{0.3cm}
	}		
	\vspace{0.2cm}
	\caption{Vote submission and vote repair (i.e., fault recovery) with various optimizations.}
	\label{fig:submit-and-repair-vote}
	
\end{figure}

\subsection{Evaluation \& Cost Optimizations}
\label{ssec:optimizatations}

Since ECC operations in ZKP verifications and computation  of MPC keys impose a high execution cost when running at the blockchain, we have made several cost optimizations.

\subsubsection*{\textbf{(1) Caching in MPC Key Computation.}}
If implemented nai\-vely, the computation of all MPC keys in $SC$ would contain a high number of overlapped additions, and hence the price would be excessively high (see series \textit{``ECC-Affine (naive)''} in \autoref{fig:mpc-comparison}).\footnote{The same phenomenon occurs in IA (see \autoref{eqn:eq1}) but with overlapped multiplications (see series \textit{``IA 1024 bits (naive)''}).}
Therefore, in the code of $SC$ we accumulate and reuse the value of the left side of MPC key computation during iteration through all participants. 
Similarly, the sum at the right side can be computed when processing the first participant, and then in each follow-up iteration, it can be subtracted by one item. 
However, we found out that subtraction imposes non-negligible costs since it contains one affine transformation (which we later optimize). 
In the result, we found pre-computation of all intermediary right items in the expression during the processing of the first participant as the most optimal. 
The resulting savings are depicted as \textit{``ECC-Affine (caching)''} series in \autoref{fig:mpc-comparison}.
We applied the same optimization for IA; however, even after adding a further optimization (i.e., pre-computation of modular inverses; see \autoref{ssec:optimizatations}.4), the costs were still prohibitively high (see \textit{``IA 1024 bits (caching+modi)''} series in \autoref{fig:mpc-comparison}), with the max. number of participants fitting the gas limit only 29.


\subsubsection*{\textbf{(2) Affine vs. Jacobi Coordinates.}}
In the ECC libraries employed~\cite{witnet-ecc-solidity} \cite{McCorrySH17}, by default, all operations are performed with points in Jacobi coordinates and then explicitly transformed to affine coordinates.
However, such a transformation involves one modular inversion and four modular multiplications over 256-bit long integers, which is costly.
Therefore, we maximized the utilization of internal functions from the Witnet library~\cite{witnet-ecc-solidity}, which do not perform affine transformation after operation execution but keep the result in Jacobi coordinates.
This is possible only until the moment when two points are compared -- a comparison requires affine coordinates. 
Hence, a few calls of the affine transformation are inevitable.
%
This optimization is depicted in \autoref{fig:mpc-comparison} (series \textit{``ECC-Jacobi (caching)''}) and \autoref{fig:submit-vote} (series \textit{``ECC-Jacobi''}). 
In the case of computation  of MPC keys, this optimization brought improvement of costs by $23\%$ in contrast to the version with affine coordinates and caching enabled.
Due to this optimization, up to $111$ participants can be processed in a single transaction not exceeding the block gas limit.
%
In the case of vote submission, this optimization brought improvement of costs by $33\%$ in contrast to the version with affine coordinates.

\subsubsection*{\textbf{(3) Multiplication with Scalar Decomposition.}}
The most expensive operation on an elliptic curve is a scalar multiplication; based on our experiments, it is often 5x-10x more expensive than the point addition since it involves several point additions (and/or point doubling).
The literature proposes several ways of optimizing the scalar multiplication, where one of the most significant ways is w-NAF (Non-Adjacent Form) scalar decomposition followed by two simultaneous multiplications with scalars of halved sizes~\cite{hankerson2005guide}.
This approach was also adopted in the Witnet library~\cite{witnet-ecc-solidity} that we base on.
The library boosts the performance (and decreases costs) by computing the sum of two simultaneous multiplications $kP + lQ$, where $k = (k_1 + k_2 \lambda)$, $l = (l_1 + l_2 \lambda)$, and $\lambda$ is a specific constant to the endomorphism of the elliptic curve.
To use this approach, a scalar decomposition of $k$ and $l$ needs to be computed beforehand.
Nevertheless, such \textbf{a scalar decomposition can be computed off-chain (and verified on-chain)}, while only a simultaneous multiplication is computed on-chain.
However, to leverage the full potential of the doubled simultaneous multiplication, one must have the expression $kP + lQ$, which is often not the case.
In our case, we modified the check at $SC$ to fit this form.
Alike the vote submission, this optimization can be applied in vote repair.
We depict the performance improvement brought by this optimization as series \textit{``ECC-Jacobi (smul)''} in \autoref{fig:submit-and-repair-vote}.
%


\subsubsection*{\textbf{(4) Pre-Computation of Modular Inversions.}}\label{sec:optim-modular-inv}
Each affine transformation in the vote submission contains one operation of modular inversion -- assuming previous optimizations, ZKP verification of one item in 1-out-of-$k$ ZKP requires three affine transformations (e.g., for $k=5$, it is $15$). 
Similarly, the ZKP verification of correctness in the repair vote requires two affine transformations per each faulty participant submitted.
The modular inversion operation runs the extended Euclidean algorithm, which imposes non-negligible costs.
However, all modular inversions \textbf{can be pre-computed off-chain, while only their verification can be made on-chain} (i.e., modular multiplication), which imposes much lower costs.
We depict the impact of this optimization as \textit{``ECC-Jacobi (smul+modi)''} series in \autoref{fig:submit-and-repair-vote} and ``...modi...'' series in \autoref{fig:mpc-comparison}.
In the result, it has brought $5\%$ savings of costs in contrast to the version with the simultaneous multiplication.


\vspace{-0.2cm}
\subsubsection{Tally Computation}\label{sec:tally-exps}
In \autoref{table:bruteforcetime}, we provide time measurements of tally computation through the entire search space on 1 core vs. all cores of the i7-10510U CPU laptop.\footnote{In some cases we estimated the time since we knew the number of attempts.} 
We see that for $n \le 100$ and $k \le 6$, the tally can be computed even on a commodity PC in a reasonable time.
However, for higher $n$ and $k$, we recommend using a more powerful machine or distributed computation across all $P$s. 
One should realize that our measurements correspond to the upper bound, and if some ranges of tally frequencies are more likely than other ones, they can be processed first -- in this way, the computation time can be significantly reduced.
Moreover, we emphasize that an exhaustive search for tally computation is not specific only to our scheme but to homomorphic-encryption-based schemes providing perfect ballot secrecy and privacy of votes (e.g.,~\cite{Kiayias2002},~\cite{HaoRZ10},~\cite{McCorrySH17}).

\renewcommand{\arraystretch}{0.9}
\setlength{\tabcolsep}{0.1cm}
\begin{table}[t]
	\centering	
	\scriptsize
	
	\subfloat[1 core]{	
		\begin{tabular}{r l l l l}		
			\toprule
			\textbf{Voters} & \multicolumn{4}{c}{\textbf{Choices}}  \\ [0.5ex]
			
			(n) & ~$k=2$ & ~$k=4$ & ~$k=6$ & ~$k=8$  \\ [0.5ex]
			\midrule
			\textbf{20} & $0.01s$ & $0.01s$ &  $0.07s$ & $0.07s$\\						
			\textbf{30} & $0.01s$ & $0.01s$ & $0.53s$ & $13.3s$\\								
			\textbf{40} & $0.01s$ & $0.04s$ & $02.6s$ & $112s$\\							
			\textbf{50} &$0.01s$ & $0.08s$ & $10.0s$ & $606s$\\
			
			\textbf{60} &$0.01s$ & $0.16s$ & $28.2s$ & $2424s$\\
			\textbf{70} &$0.01s$ & $0.48s$ & $69.6s$ & $\sim2.1h$\\
			\textbf{80} &$0.01s$ & $0.82s$ & $160s$ & $\sim5.8h$\\
			\textbf{90} &$0.01s$ & $1.08s$ & $320s$ & $\sim14.2h$\\

			\textbf{100}& $0.01s$ & $1.2s$ & $722s$ & $\sim33h$\\
			
			\bottomrule
		\end{tabular}
	}
	\hspace{0.2cm}
	\subfloat[8 cores]{
		
		\begin{tabular}{r l l}		
			\toprule
			\textbf{Voters} & \multicolumn{2}{c}{\textbf{Choices}}  \\ [0.5ex]
			
			(n)  & ~$k=6$ & ~$k=8$  \\ [0.5ex]
			\midrule
			\textbf{20}  &  $0.01s$ & $0.01s$\\						
			\textbf{30}  & $0.08s$ & $2.0s$\\								
			\textbf{40} & $0.39s$ & $16.8s$\\							
			\textbf{50}  & $1.5s$ & $90.9s$\\					
			\textbf{60}  & $4.44s$ & $267s$\\					
			\textbf{70}  & $11.85s$ & $773s$\\								
			\textbf{80}  & $19.46s$ & $2210s$\\								
			\textbf{90}  & $44.02s$ & $\sim2.7h$\\								
			\textbf{100} &  $108.3s$ & $\sim4.9h$\\			
			\bottomrule
		\end{tabular}
	}
	\vspace{0.3cm}
	\caption{Upper time bound for tally computation.}	\label{table:bruteforcetime}
	\vspace{-0.3cm}
	
\end{table}

\renewcommand{\arraystretch}{1.1}
\setlength{\tabcolsep}{0.15cm}
\begin{table}[b]
	\centering
	\scriptsize
	
	\begin{tabular}  {l c l l}
		\toprule
		& \textbf{Gas Paid by} & \textbf{OVN} &\textbf{\name}   \\
		\midrule
		
		\textbf{\specialcell{Deployment of Voting $SC$}}   	& $VA$ &  3.78M  & 4.8M  \\
		
		\textbf{\specialcell{Deployment of\\Cryptographic $SCs$}}  & $VA$ &  \specialcell{2.44M}  & \specialcell{2.15M\\(1.22M+0.93M)}   \\
		
		\textbf{Enroll voters} & $VA$ & \specialcell{2.38M\\(2.15M+0.23M)}  & 1.93M \\
		
		\textbf{\specialcell{Submit Ephemeral PK}} & $P$ &  0.76M & 0.15M  \\
		
		\textbf{Cast Vote} & $P$ &  2.50M & 2.72M \\
		
		\textbf{Tally} & $VA ~(or ~P)$ & 0.75M  & 0.39M  \\				
		
		\midrule
		\textbf{Total Costs for} $\mathbf{P}$ &   & 3.26M  & 2.87M \\
		
		\textbf{Total Costs for} $\mathbf{VA}$ &  &  9.35M  & 9.27M  \\
		
		
		\bottomrule
	\end{tabular}
	\caption{A normalized cost comparison of \name with OVN for $n=40$ and $k=2$.}\label{tab:costcomparisons}
\end{table}

\subsubsection{Cost Comparison}
In \autoref{tab:costcomparisons}, we made a cost comparison of \name (using ECC) with OVN~\cite{McCorrySH17}, where we assumed two choices and 40 participants (the same setting as in~\cite{McCorrySH17}).
We see that the total costs are similar but \name improves $P$'s costs by $13.5\%$ and $VA$'s cost by $0.9\%$ even though using more complex setting that allows 1-out-of-k voting.
We also emphasize that the protocol used for vote casting in \name \textbf{contains more operations} than OVN but regardless of it, the costs are close to those of OVN, which is mostly caused by the proposed optimizations.\footnote{To verify 1-out-of-$k$ ZKP in vote casting, \name computes $5 \cdot k$ multiplications and $3 \cdot k$ additions on the elliptic curve -- i.e., 10 multiplications and 6 additions for  $k=2$. In contrast, OVN computes only 8 multiplications and 5 additions for $k=2$.}
Next, we found that OVN computes tally on-chain, which is an expensive option.
In contrast, \name computes tally off-chain and $SC$ performs only verification of its correctness, which enables us to minimize the cost of this operation.
Another gas saving optimization of \name in contrast to OVN (and Hao et al.~\cite{HaoRZ10}) is that we do not require voters to submit ZKP of knowledge of $x_i$ in $g^{x_i}$ during the registration phase to $SC$ since $P_i$ may only lose by providing incorrect ephemeral public key $g^{x_i}$ -- she might lose the chance to vote and her deposit.
Finally, we note that we consider the deployment costs of our $SC$ equal to 4.8M units of gas; however, our $SC$ implementation contains a few auxiliary view-only functions for a pre-computation of modular inverses, with which, the deployment costs would increase to 7.67M due to code size.
Nevertheless, these operations can be safely off-chained and we utilized them on-chain only for simplicity. 

\subsection{Limitations \& Extensions}
\label{sec:discussion}
In this section, we discuss the extensions addressing the scalability and performance limitations of \name. 

\begin{figure}
	\centering
	\includegraphics[width=0.6\linewidth]{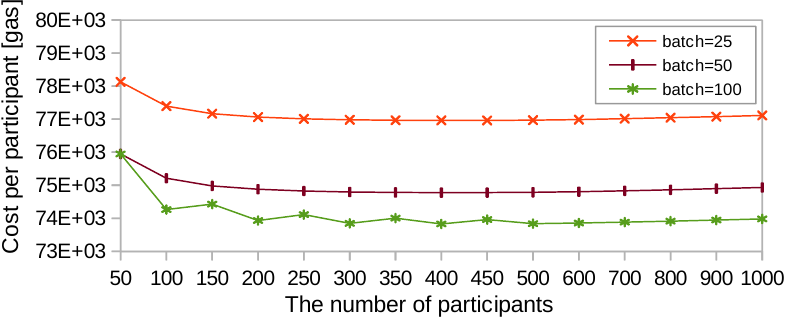}
	\caption{A computation of MPC keys by the authority $VA$ using various batch sizes and the most optimized ECC variant.
	}
	\label{fig:mpc-batches}
\end{figure}

\subsubsection{Scalability Limitation \& Extension}
\label{ssec:scalability-improvements}
The limitation of \name (like in OVN) is a lack of scalability, where the block gas limit might be exceeded with a high number of $P$s.
Therefore, we primarily position our solution as boardroom voting; however, we will  show in this section that it can be extended even to larger voting.
Our voting protocol (see \autoref{ssec:basicprotocol}) has a few platform-specific bottlenecks due to the block gas limit of Ethereum, when either $k$ or $n$ reach particular values.
%
%
Nevertheless, transaction batching can be introduced for the elimination of all bottlenecks.
To realize a batching of pre-voting, voting, and fault-recovery  phases, the additional integrity preservation logic across batches needs to be addressed while the verification of integrity has to be made by $SC$.
For demonstration purposes, we addressed the bottleneck of the pre-voting stage (see \autoref{fig:mpc-batches})  and setup stage, 
which further improves the vote privacy in \name (see \autoref{sec:blockchain-specific-issues}) and causes only a minimal cost increase due to the overhead of integrity preservation (i.e., 1\%).
With this extension, $n$ is limited only by the expenses paid by $VA$ to register $P$s and compute their MPC keys, and the computing power to obtain the tally.\footnote{E.g., for $n = 1000$, $k=2 ~(\text{and} ~k=4)$, it takes $0.15$s (and $\sim4$h) to obtain the tally on a commodity PC with 8 cores, respectively.}
If a certain combination of high $n$ and $k$ would make computation of a tally overly computationally expensive (or the cost of its verification by $SC$), it is further possible to partition $P$s into multiple groups (i.e., voting booths), each managed by an instance of \name, while the total results could be summed across instances.
Scalability extension is a subject of our next work \cite{stanvcikova2022sbvote}, which will be described in \autoref{sec:voting-sbvote}. 


\subsubsection{Cost and Performance Limitations}\label{ssec:cost-perf-limitations}
Although we thoroughly optimized the costs of our implementation (see \autoref{ssec:optimizatations}), the expenses imposed by a public permissionless smart contract platform might be still high, especially during peaks of the gas price and/or in the case of a larger voting than boardroom voting (see \autoref{ssec:scalability-improvements}).
Besides, the transactional throughput of such platforms might be too small for such larger voting instances to occur in a specified time window.
Therefore, to further optimize the costs of \name and its performance, it can run on a 
public permissioned Proof-of-Authority platforms, e.g., using Hyperledger projects (such as safety-favored Besu with Byzantine Fault Tolerant (BFT) protocol).
Another option is to use smart contract platforms utilizing the trusted computing that off-chains expensive computations (e.g., Ekiden~\cite{cheng2019ekiden}, TeeChain~\cite{lind2019teechain}, Aquareum~\cite{homoliak2020aquareum}), 
or other partially-decentralized second layer solutions (e.g., Plasma \cite{plasma-eth}, Polygon Matic~\cite{polygon-matic-investopedia}, Hydra~\cite{chakravarty2020hydra}). 
Even though these solutions might preserve most of the blockchain features harnessed in e-voting, availability and decentralization might be decreased, which is the security/performance trade-off.

\subsection{Security Analysis}
\label{sec:bbbvoting:security}
\vspace{-0.2cm}
We first analyze security of \name with regard to the voting properties specified in \autoref{sec:evoting-preliminaries}.
Next, we analyze blockchain-specific security \& privacy issues and discuss their mitigations. 
Also, we compare voting and other features of \name with a few related works in \autoref{tab:bbbvoting:comparison}.
%

\renewcommand{\arraystretch}{1.1}
\setlength{\tabcolsep}{6.5pt}
\begin{table}[t]
	\vspace{-0.3cm}
	\centering
	\scriptsize
	
	\begin{tabular}[t]  {l c c c c c c c c c c}
		\toprule
		\textbf{\specialcell{Approach\\}} & \trot{\textbf{Privacy of Votes}} & \trot{\textbf{\specialcell{Perfect Ballot Secrecy}}} & \trot{\textbf{Fairness}}  & \trot{\textbf{Self-Tallying}} & \trot{\textbf{Robustness}} &  \trot{\textbf{\specialcell{ Uses Blockchain}}} & \trot{\textbf{Uni. Verifiability}} &  \trot{\textbf{E2E Verifiability}}  & \trot{\textbf{Open Source}} & \trot{\textbf{Choices}}\\
		\toprule
		
		Hao et al.~\cite{HaoRZ10} & \hfil\cmark & \hfil\cmark & \hfil\xmark & \hfil\cmark & \hfil\xmark & \hfil\xmark  & \hfil\cmark & \hfil\cmark  & \hfil\xmark &  \hfil $2$\\
		
		Khader et.~\cite{KhaderSRH12} & \hfil\cmark & \hfil\cmark & \hfil\cmark & \hfil\cmark & \hfil\cmark  & \hfil\xmark  & \hfil\cmark & \hfil\cmark & \hfil\xmark & \hfil $2$\\
		
		Kiayias and Yung~\cite{Kiayias2002} & \hfil\cmark & \hfil\cmark & \hfil\cmark  & \hfil\cmark & \hfil\cmark & \hfil\xmark  & \hfil\cmark & \hfil\cmark  & \hfil\xmark & \hfil $2 / k$\\
		
		McCorry et.~\cite{McCorrySH17} (OVN) & \hfil\cmark & \hfil\cmark & \hfil\cmark  & \hfil\cmark & \hfil\xmark &  \hfil\cmark & \hfil\cmark & \hfil\cmark  & \hfil\cmark & \hfil $2$\\
		
		\specialcell{Seifelnasr et al.~\cite{DBLP:conf/fc/SeifelnasrGY20}\\~~~~(sOVN)} & \hfil\cmark & \hfil\cmark & \hfil\cmark & \hfil\cmark & \hfil\xmark & \hfil\cmark & \hfil\cmark & \hfil\cmark & \hfil\cmark & \hfil $2$  \\

		Li et al.~\cite{Li2020} & \hfil\cmark & \hfil\cmark & \hfil\cmark & \hfil\cmark & \hfil\cmark &  \hfil \cmark  & \hfil\cmark & \hfil\cmark & \hfil\xmark & \hfil $2$\\
		
		
		Baudron et al.~\cite{Baudron2001} & \hfil\cmark & \hfil\xmark & \hfil\xmark & \hfil\xmark & \hfil\cmark &  \hfil\xmark & \hfil\cmark & \xmark & \hfil\xmark & \hfil $k$\\ 
		
		Groth~\cite{Groth2004} & \hfil\cmark & \hfil\cmark & \hfil\cmark & \hfil\cmark & \hfil\xmark & \hfil\xmark  & \hfil\cmark & \cmark & \hfil\xmark & \hfil $k$\\
		
		Adida~\cite{Adida08} (Helios) & \hfil\cmark$^*$  & \hfil\xmark & \hfil\xmark & \hfil\xmark & \hfil\cmark & \hfil\xmark & \hfil\cmark & \cmark & \hfil\cmark & \hfil $k$  \\
		
		Matile et al.~\cite{ICBC:MRSS19} (CaIV) & \hfil\cmark & \hfil\xmark & \hfil\xmark & \hfil\xmark & \hfil\cmark & \hfil\cmark & \hfil\xmark & \hfil\xmark & \hfil\cmark &  \hfil $k$\\
		
		Killer~\cite{killerprovotum} (Provotum) & \hfil \cmark & \xmark & \cmark & \xmark & \cmark & \cmark & \cmark & \cmark & \hfil \cmark & $2$ \\			
		
			
			\specialcell{Dagher et al.~\cite{icissp:DMMM18}\\ ~~~~(BroncoVote)}			 & \hfil \cmark & \cmark$^*$ & \cmark$^*$ & \xmark & \xmark & \cmark & \xmark & \xmark & \xmark & \hfil $k$  \\

			Kostal et al.~\cite{kovst2019blockchain} & \cmark$^*$  &  \cmark$^*$ & \xmark & \cmark & \cmark$^*$ & \cmark & \cmark  & \cmark & \xmark & $k$  \\

			
				
				\specialcell{Zagorski et al.~\cite{zagorski2013remotegrity}\\ ~~~~(Remotegrity)}		 & \cmark$^*$ & \xmark & \xmark & \xmark & \cmark & \xmark & \xmark & \cmark & \cmark & $k$  \\
				
					
					
					
					
					Yu et al.~\cite{yu2018platform}	& \hfil\cmark & \hfil\cmark$^*$ & \hfil\cmark$^*$ & \hfil\cmark & \hfil\cmark$^*$ & \hfil\cmark & \hfil\cmark & \hfil\xmark  &  \hfil\xmark & \hfil $k$  \\

					\textbf{BBB-Voting} & \hfil\cmark & \hfil\cmark & \hfil\cmark & \hfil\cmark & \hfil\cmark &  \hfil\cmark & \hfil\cmark & \hfil\cmark & \hfil\cmark & \hfil $k$\\
					
					\bottomrule
				\end{tabular}
				\caption{A comparison of various remote voting protocols. $^*$Assuming a trusted $VA$.}
				\label{tab:bbbvoting:comparison}
				\vspace{-0.3cm}
			\end{table}

%

\subsubsection{Properties of Voting}
\par\noindent\textbf{(1) Privacy}
in \name requires at least 3 $P$s, out of which at least 2 are honest (see \autoref{ssec:basicprotocol}).
Privacy in BBB-voting is achieved by blinding votes using ElGamal encryption~\cite{Elgamal85}, whose security is based on the decisional Diffie-Hellman assumption. Unlike the conventional ElGamal algorithm, a decryption operation is not required to unblind the votes. 
Instead, we rely on the self-tallying property of the voting protocol. 
The ciphertext representing a blinded vote is a tuple ($c_1,c_2)$, where $c_1=g^{xy}.f$ and $c_2=g^y$, where the purpose of $c_2$ is to assist with the decryption.
Decryption involves computing $(c_2)^{-x}\cdot c_1$ to reveal $f$, which unambiguously identifies a vote choice. 
As a result, the blinding operation for participant $P_i$ in \autoref{eqn:eq2} is equivalent to ElGamal encryption involving the computation of $c_1$ but not the decryption component $c_2$.
Furthermore, the blinding keys are ephemeral and used exactly once for encryption (i.e., blinding) of the vote within a single run of voting protocol\footnote{As a consequence, \name can be utilized in a repetitive voting~\cite{venugopalan2023always} with a limitation of a single vote within an epoch.} 
-- i.e., if the protocol is executed correctly, there are no two votes $f_l$ and $f_m$ encrypted with the same ephemeral blinding key of $P_i$, such that 
\begin{equation}
	\frac{(g^{xy}\cdot f_l)} {(g^{xy}\cdot f_m)} = \frac{f_l}{f_m},  
\end{equation}
from which the individual votes could be deduced.  
For the blockchain-specific privacy analysis, see also \autoref{sec:blockchain-specific-issues}.

\par\noindent\textbf{(2) Ballot Secrecy.} It is achieved by blinding the vote using ElGamal homomorphic encryption~\cite{cgs97}, and it is not required to possess a private key to decrypt the tally because of the self-tallying property ($g^{\sum_{i}x_i y_i}=1$).
Therefore, given a homomorphic encryption function, it is possible to record a sequence of encrypted votes without being able to read the votes choices. 
However, if all $Ps$ are involved in the recovery of a partial tally consisting of a recorded set of votes, these votes can be unblinded (as allowed by ballot secrecy).
Even a subset of $n-2$ $P$s\footnote{Note that at least 2 $P$s are required to be honest (see \autoref{sec:adversary-model}).}  who have already cast their votes cannot recover a partial tally that reveals their vote choices because of the self-tallying property ($g^{\sum_{i}x_i y_i}=1$) has not been met. 




\par\noindent\textbf{(3) Fairness}. 
If implemented naively, the last voting $P$ can privately reveal the full tally by solving \autoref{eqn:eq3} before she casts her vote since all remaining blinded votes are already recorded on the blockchain (a.k.a., the last participant conundrum).
This can be resolved by $VA$ who is required to submit the final dummy vote including the proof of her vote choice, which is later subtracted from the final tally by $SC$.\footnote{Note that If $VA$ were not to execute this step, the fault recovery would exclude $VA$'s share from MPC keys, and the protocol would continue.}



\par\noindent\textbf{(4a) Universal Verifiability}. 
Any involved party can check whether all recorded votes in the blockchain are correct and are correctly included in the final tally~\cite{Kiayias2002}.
Besides, the blinded votes are verified at $SC$, which provides correctness of its execution and public verifiability, relying on the honest majority of the consensus power in the blockchain.


\par\noindent\textbf{(4b) E2E Verifiability}. 
To satisfy E2E verifiability~\cite{EPRINT:PanRoy18}: 
(I) each $P$ can verify whether her vote was cast-as-intended and recorded-as-cast, 
(II) anyone can verify whether all votes are tallied-as-recorded.
\name meets (I) since each $P$ can locally compute her vote choice (anytime) and compare it against the one recorded in the blockchain.
\name meets (II) since $SC$ executes the code verifying that the submitted tally fits \autoref{eqn:eq3} that embeds all recorded votes.

\par\noindent\textbf{(5) Dispute-Freeness}. 
Since the blockchain acts as a tamper-resistant bulletin board (see \autoref{sec:blockchain-specific-issues}), and moreover it provides correctness of code execution (i.e., on-chain execution of verification checks for votes, tally, and fault recovery shares) and verifiability, the election remains dispute-free under the standard blockchain assumptions about the honest majority and waiting the time to finality.

\par\noindent\textbf{(6) Self-Tallying}. 
\name meet this property since in the tally phase of our protocol (and anytime after), all cast votes are recorded in $SC$; therefore any party can use them to fit \autoref{eqn:eq3}, obtaining the final tally.

%

\par\noindent\textbf{(7) Robustness (Fault Tolerance)}. 
\name is robust since it enables to remove (even reoccurring) stalling $P$s by its fault recovery mechanism (see \autoref{sec:bbbvoting-faulttolerance}).
Removing of stalling $P$s involves $SC$ verifiability of ZKP submitted by $P$s along with their counter-party shares corresponding to stalling $P$s.
%

\par\noindent\textbf{(8) Resistance to Serious Failures}. 
The resistance of \name to serious failures relies on the integrity and append-only features of the blockchain, which (under its assumptions \autoref{sec:blockchain-specific-issues}) does not allow the change of already cast votes. 

\par\noindent\textbf{(9) Receipt-Freeness}. 
\name does not meet this property since any $P$ can, using her ephemeral private key, recreate the blinded vote with the original vote choice, which can be compared to the recorded blinded vote. 
Moreover, $P$ can reveal her ephemeral private key to the coercer, who can then verify how $P$ voted.

\par\noindent\textbf{(10) Dispute-Freeness}. 
\name provides dispute-freeness, since the all protocol stages of \name are executed on the blockchain, and thus following of the protocol is self-enforcing.


\subsubsection{Blockchain-Specific Aspects and Issues}\label{sec:blockchain-specific-issues}
In the following, we focus on the most important blockchain-specific aspects and issues related to \name.


\par\noindent\textbf{(1) Bulletin Board vs. Blockchains}. 
The definition of a bulletin board~\cite{Kiayias2002} assumes its immutability and append-only feature, which can be provided by blockchains that moreover provide correct execution of code.
%
CAP theorem~\cite{brewer2000towards} enables a distributed system (such as the blockchain) to select either \textbf{\underline{c}}onsistency or  \textbf{\underline{a}}vailability during the time of network \textbf{\underline{p}}artitions.
If the system selects consistency (e.g., Algorand~\cite{gilad2017algorand}, BFT-based blockchains such as~\cite{hyperledger1}), it stalls during network partitions and does not provide liveness (i.e., the blocks are not produced) but provides safety (i.e., all nodes agree on the same blocks when some are produced). 
On the other hand, if the system selects availability (e.g., Bitcoin~\cite{Satoshi2009}, Ethereum~\cite{wood2014}), it does not provide safety but provides liveness, which translates into possibility of creating \textit{accidental forks} and eventually accepting one as valid.
Many public blockchains favor availability over consistency, and thus do not guarantee immediate immutability. 
%
%
Furthermore, blockchains might suffer from \textit{malicious forks} that are longer than accidental forks and are expensive for the attacker. 
Usually, their goal is to execute double-spending or selfish mining~\cite{eyal2018majority}, violating the assumptions of the consensus protocol employed -- more than 51\% / 66\% of honest nodes presented in PoW / BFT-based protocols.
%
To prevent accidental forks and mitigate malicious forks in liveness-favoring blockchains, it is recommended to wait for a certain number of blocks (a.k.a., block confirmations). 
%
Another option to cope with forks is to utilize  safety-favoring blockchains (e.g., \cite{gilad2017algorand,hyperledger1}). 
%

Considering \name, we argue that these forks are not critical for the proposed protocol since any transaction can be resubmitted if it is not included in the blockchain after a fork.
Waiting for the time to finality (with a potential resubmission) can be done as a background task of the client software at $P$s' devices, so $P$s do not have to wait.
Finally, we emphasize that the time to finality is negligible in contrast to timeouts of the protocol phases; therefore, there is enough time to make an automatic resubmission if needed.

\vspace{-0.1cm}
\par\noindent\textbf{(2) Privacy of Votes}. 
In \name, the privacy of vote choices can be ``violated'' only in the case of unanimous voting by all $P$s, assuming $\mathcal{A}$ who can link the identities of $P$s (approximated by their IP addresses) to their blockchain addresses by  passive monitoring of network traffic.
However, this is the acceptable property in the class of voting protocols that provide the full tally of votes at the output, such as \name and other protocols (e.g.,~\cite{McCorrySH17,Kiayias2002,HaoRZ10,killerprovotum,yu2018platform}).
%
Moreover, $\mathcal{A}$ can do deductions about the probability of selecting a particular vote choice by $P$s.
For example, in the case that the majority $m$ of all participants $n$ voted for a winning vote choice, then $\mathcal{A}$ passively monitoring the network traffic can link the blockchain addresses of $Ps$ to their identities (i.e., IP addresses), 
and thus $\mathcal{A}$ can infer that each $P$ from the group of all $P$s cast her vote to the winning choice with the probability equal to $\frac{m}{n} > 0.5$.
%
However, it does not violate the privacy of votes and such an inferring is not possible solely from the data publicly stored at the blockchain since it stores only blinded votes and blockchain addresses of $P$s, not the identities of $P$s.
To mitigate these issues, $P$s can use anonymization networks or VPN services for sending transactions to the blockchain. 
Moreover, neither $\mathcal{A}$ nor $VA$ can provide the public with the indisputable proof that links $P$'s identity to her blockchain address. 

\vspace{-0.1cm}
\par\noindent\textbf{(3) Privacy of Votes in Larger  Voting}. 
The privacy issue of unanimous and majority voting (assuming $\mathcal{A}$ with network monitoring capability) are less likely to occur in the larger voting than boardroom voting since the voting group of $P$s is larger and potentially more divergent.
We showed that \name can be extended to such a large voting by integrity-preserving batching in \autoref{ssec:scalability-improvements}. 
We experimented with batching up to 1000 $P$s, which is a magnitude greater voting than the boardroom voting.
We depict the gas expenses paid by $VA$ (per $P$) in \autoref{fig:mpc-batches}, where we distinguish various batch sizes.
In sum, the bigger the batch size, the lower the price per~$P$.


\renewcommand{\name}{SBvote\xspace}

\section{\name}\label{sec:voting-sbvote}
We introduce SBvote \cite{stanvcikova2023sbvote}, a blockchain-based self-tallying e-voting protocol that enables scalability in the number of voters and is based on BBB-Voting protocol. 
SBvote introduces multiple voting smart contracts booths that are managed and aggregated by the main smart contract. 
Our extended solution maintains the most properties of decentralized e-voting, including public verifiability, perfect ballot secrecy, and fault tolerance (but excluding receipt-freeness). 
Moreover, it improves the privacy of voters within booths.

\subsection{System Model}
\label{ssec:sys-model}
We focus on a decentralized e-voting that provides desired properties of e-voting schemes mentioned in \autoref{sec:evoting-preliminaries} as well as scalability in the number of the participants.
We assume a centralized authority that is responsible for the enrollment of the participants and shifting the stages of the protocol. 
However, the authority can neither change nor censor the votes of the participants, and it cannot compromise the privacy of the votes.
We assume that a public bulletin board required for e-voting is instantiated by a blockchain platform that moreover supports the execution of smart contracts. 
We assume that all participants of voting have their thin clients that can verify the inclusion of their transactions in the blockchain as well as the correct execution of the smart contract code.

\subsubsection{\textbf{Adversary Model}}
We consider an adversary that passively listens to a communication on the blockchain network.
The adversary cannot modify or replace any honest transactions since she does not hold the private keys of the participants.
Next, we assume that the adversary cannot block an honest transaction due to the censorship-resistance property of the blockchain.
The adversary can link a voter's IP address to her blockchain address. 
However, she does not possess the computational resources to break the cryptographic primitives used in the blockchain platform and the voting protocol.
The adversary cannot access or compromise the voter's device or the user interface of the voting application.
We assume that in each voting group of $n$ participants, at most $t$ of them can be controlled by the adversary and disobey the voting protocol, where $t \leq n - 2$ and $n \geq 3$.
This eliminates the possibility of \emph{full collusion} against a single voter~\cite{HaoRZ10}. 

\subsection{Proposed Approach}
\paragraph{\textbf{Involved Parties.}}
Our proposed approach has the following actors and components: (1) \textit{a participant $\mathbb{P}$} (\textit{a voter}) who chooses a candidate (i.e., a voting choice) and casts a vote,
(2) \textit{a voting authority $\mathbb{VA}$} responsible for the registration of participants and initiating actions performed by smart contracts,
(3) \textit{a booth contract} $\mathbb{BC}$, which is replicated into multiple instances, where each instance serves a limited number of participants. 
New instances might be added on-demand to provide scalability.
(4) \textit{The main contract $\mathbb{MC}$}, which assigns participants to voting booths, deploys booth contracts, and aggregates the final tally from booth contracts.
%

\subsubsection{\textbf{Protocol}}
We depict our protocol in \autoref{fig:operation-scheme-bw}. \name follows similar phases as BBB-Voting but with several alterations that enable better scalability.
The registration phase requires  $\mathbb{VA}$ to authenticate users and generate a list of eligible voters. 
In BBB-Voting, the setup phase of the protocol allows users to submit their ephemeral public keys. 
However, in contrast to BBB-Voting, \name requires additional steps to set up the booth contracts. 
First, eligible voters are assigned to voting groups and then $\mathbb{BC}$ is deployed for each voting group.
Once the setup is finished, voters proceed to submit their ephemeral public keys during a sign-up phase.
These keys are further used to compute multi-party computation (MPC) keys within each voting group during a pre-voting phase.
In the voting phase, voters cast their blinded votes along with corresponding NIZK proofs.
The NIZK proof allows $\mathbb{BC}$ to verify that a blinded vote correctly encrypts one of the valid candidates.
If some of the voters who submitted their ephemeral public keys have failed to cast their vote, the remaining active voters repair their votes in the subsequent fault recovery phase.
This is achieved by removing the key material of stalling voters  from the encryption of the correctly cast votes.
The key material has to be provided by each active voter along with NIZK proof of correctness.
After the repair of votes, the tallies for individual voting groups are computed during the tally phase of a booth.
Then, partial tally results are aggregated to obtain the final tally by $\mathbb{MC}$.

In the following, we describe the phases of our protocol in more detail.
Phases 2--6 are executed independently (thus in parallel) within each of the voting groups/booth contracts. \\

\paragraph{\textbf{Registration.}}
\label{ssec:registration}
In this phase, the participants interact with $\mathbb{VA}$ to register as eligible voters. 
A suitable identity management (IDM) system is required, allowing $\mathbb{VA}$ to verify participants' identities and eligibility to vote.\footnote{The details of IDM are out-of-scope for this work.}
Each participant creates her blockchain wallet address and registers it with  $\mathbb{VA}$ that stores a mapping between a participant's identity and her wallet address. 

\begin{figure}[!t]
	\centering
	\includegraphics[width=0.65\columnwidth]{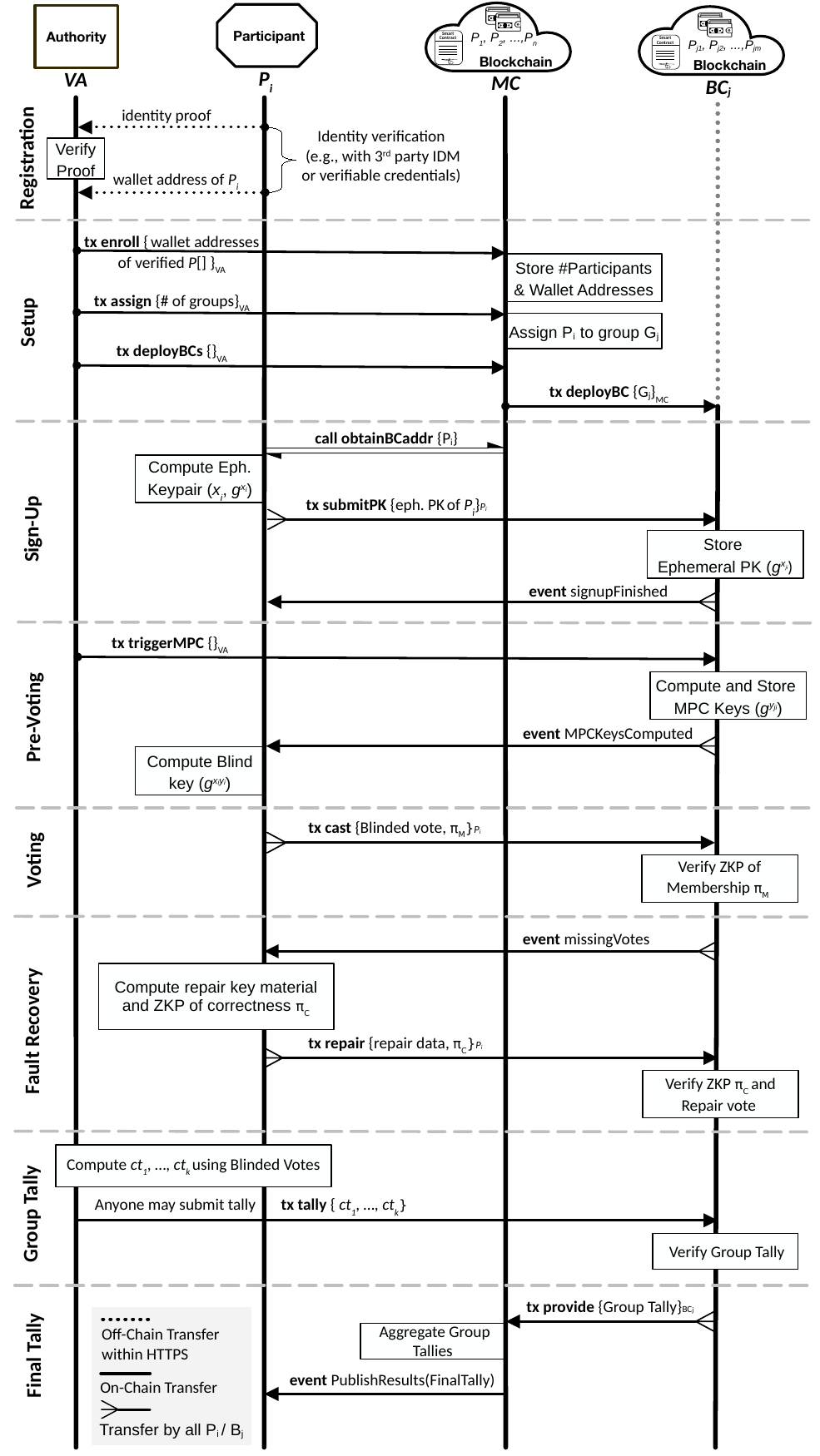}
	\caption{Overview of \name protocol.
	}
	\label{fig:operation-scheme-bw}
\end{figure}

\paragraph{\textbf{Phase 1 (Setup).}}
\label{ssec:setup}
First, $\mathbb{VA}$ deploys $\mathbb{MC}$ to the blockchain.
Then, $\mathbb{VA}$ enrolls the wallet addresses of all registered participants to $\mathbb{MC}$ within a transaction.\footnote{Note that in practice this step utilizes transaction batching to cope with the limits of the blockchain platform (see \autoref{ssec:optim}).}
Once all the registered participants have been enrolled, $\mathbb{VA}$ triggers $\mathbb{MC}$ to pseudo-randomly distribute enrolled participants into groups whose size is pre-determined and ensures a certain degree of privacy. Note that distributed randomness protocols such as RoundHound~\cite{syta2017scalable} might be used for this purpose, however, in this work we assume a trusted randomness source that is agreed upon by all voters (e.g., a hash of a Bitcoin block).

In every group, the participants agree on the parameters of the voting.
Let $n$ be the number of participants in the group and $k$ the number of candidates.
We specify the parameters of voting as follows:

\begin{compactenum}[$\;$]
	\item[1)] a common generator $g\in \mathbb{F}_p^*$, where $p=2\cdot  q +1$, $q$ is a prime and $n < p - 1$.
	
	\item[2)] $k$ independent generators $\{f_1,...,f_k\}$  in $\mathbb{F}_p^*$ such that $f_i = g^{2^{(i - 1)m}}$, where $m$ is the smallest integer such that $2^m > n$.
\end{compactenum}

Then, $\mathbb{VA}$ deploys a booth contract $\mathbb{BC}$ for each group of participants with these previously agreed upon voting parameters.
$\mathbb{MC}$ stores a mapping between a participant's wallet address and the group she was assigned to.

\paragraph{\textbf{Phase 2 (Sign-Up).}}
\label{ssec:signup}
Eligible voters enrolled in the setup phase review the candidates and the voting parameters.
Each voter who intends to participate obtains the address of $\mathbb{BC}$ she was assigned to by $\mathbb{MC}$. 
From this point onward, each participant interacts only with her $\mathbb{BC}$ representing the group she is part of. 
Every participant $P_i$ creates her ephemeral key pair consisting of a private key $x_i \in_{R} \mathbb{F}^{\ast}_p$ and public key $g^{x_i}$.
The $P_i$ then sends her public key to $\mathbb{BC}$. 
By submitting an ephemeral public key, the participant commits to cast a vote later.  
Furthermore, participants are required to send a deposit within this transaction.
If the voter does not cast her vote or later does not participate in the potential fault recovery phase, she will be penalized by losing the deposit.
Voters who participate correctly retrieve their deposit at the end of the voting.

\paragraph{\textbf{Phase 3 (Pre-Voting).}}
\label{ssec:prevoting}
In this step, each $\mathbb{BC}$ computes synchronized multi-party computation (MPC) keys from the participants' ephemeral public keys submitted in the previous step. 
To achieve scalability, the MPC keys are computed independently in each $\mathbb{BC}$ over the set of ephemeral public keys within the group.
The MPC key for participant $P_i$ is computed as follows:
\begin{equation}
	\label{eqn:eqn1}
	g^{y_i}=\prod\limits_{j=1}^{i-1} g^{x_j}/\prod\limits_{j=i+1}^{n} g^{x_j},
\end{equation}
where $y_i=\sum_{j<i}x_j- \sum_{j>i}x_j$
and
$\sum_{i}x_i y_i=0$ (see Hao et al.~\cite{HaoRZ10} for the proof).
The computation of MPC keys is triggered by $\mathbb{VA}$ in each $\mathbb{BC}$.
After the computation, each participant obtains her MPC key from $\mathbb{BC}$ and proceeds to compute her ephemeral blinding key as $g^{x_i y_i}$ using her private key $x_i$.

\paragraph{\textbf{Phase 4 (Voting).}}
\label{ssec:voting}
Before participating in this phase of the protocol, each voter must create her blinded vote and a NIZK proof of its correctness.
The blinded vote of the participant $P_i$ is $B_i = g^{x_i y_i} f_j$, where $f_j \in {f_1, ..., f_k}$ represents her choice of a candidate.
The participant casts the blinded vote by sending it to $\mathbb{BC}$ in a transaction $cast(B_{i}, \pi_M)$, where $\pi_M$ is a 1-out-of-$k$ NIZK proof of set membership. 
This proof allows  $\mathbb{BC}$ to verify that the vote contains one of the candidate generators from ${f_1, ..., f_k}$ without revealing the voter's choice.
$\mathbb{BC}$ performs a check of the proof's correctness and accepts well-formed votes.
Construction and verification of the NIZK proof are depicted in \autoref{fig:ZKPMultiCandidate} (from BBB-Voting).
	
	\paragraph{\textbf{Phase~5~(Fault-Recovery).}}
	\label{ssec:faultrec}
	The use of synchronized MPC keys ensures that a vote cast by each voter contains the key material shared with all voters within the group.
	If some of the voters within a group stall during the voting phase, the tally cannot be computed from the remaining data.
	Therefore, we include a fault-recovery phase, where remaining voters provide $\mathbb{BC}$ with the key material they share with each stalling voter, enabling $\mathbb{BC}$ to repair their votes.
	In detail, for a stalling voter $P_j$ and an active voter $P_i$ ($i \neq j$), the shared key material $g^{x_i x_j}$ consists of the stalling voter's ephemeral public key $g^{x_j}$ (previously published in $\mathbb{BC}$) and the active voter's ephemeral private key $x_i$. 
	The active voters send the shared key material to $\mathbb{BC}$ along with a NIZK proof depicted in \autoref{fig:ZKPdiffie} (from BBB-Voting).
	The NIZK proof allows  $\mathbb{BC}$ to verify that the shared key material provided by the voter corresponds to the ephemeral public keys $g^{x_i}$ and $g^{x_j}$.
	
	Suppose some of the previously active voters become inactive during the fault-recovery phase (i.e., do not provide the shared key material needed to repair their votes).
	In that case, the fault-recovery phase can be repeated to exclude these voters.
	Note that this phase takes place in groups where all the voters who committed to vote during the sign-up phase have cast their votes.
	
	\paragraph{\textbf{Phase~6~(Booth Tallies).}}
	\label{ssec:boothtally}
	At first, the tally has to be computed for each group separately.
	Computation of the result is not performed by  $\mathbb{BC}$ itself.
	Instead, $\mathbb{VA}$ (or any participant) obtains the blinded votes from  $\mathbb{BC}$, computes the tally, and then sends the result back to $\mathbb{BC}$, which verifies whether a provided tally~fits
	\begin{equation}
		\label{eqn:sbvote:eq3}
		\prod\limits_{i=1}^{n}B_i=
		\prod\limits_{i=1}^{n}g^{x_iy_i}f_j=
		g^{\sum_{i}x_i y_i}f_j=
		{f_1}^{ct_1}{f_2}^{ct_2}...{f_k}^{ct_k},
	\end{equation}
	where $ct_j \in  {ct_1, ..., ct_k}$ denotes the vote count for each candidate.

	%
	\paragraph{\textbf{Phase~7~(Final Tally).}}
	\label{ssec:finaltally}
	Once $\mathbb{BC}$ obtains a correctly computed tally, it sends it to $\mathbb{MC}$. 
	$\mathbb{MC}$ collects and summarizes the partial tallies from individual booths and announces the final tally once all booths have provided their results. 
	The participants can also review the partial results from already processed booths without waiting for the final tally since the booth tallies are processed independently.
	
	\subsection{Design Choices and Optimizations}
	\label{ssec:optim}
	We introduce several specific features of \name, which allow us to achieve the scalability and privacy properties.
	
	\paragraph{\textbf{Storage of Voters' Addresses.}}
	If we were to store the voters' wallet addresses in the booth contracts, it would cause high storage overhead and thus high costs.
	However, we proposed to store these addresses only in $\mathbb{MC}$, while booth contracts can only query $\mathbb{MC}$ whenever they require these addresses (i.e., when they verify whether a voter belongs to the booth's group).
	As a result, this eliminates the costs of transactions when deploying booth contracts, and moreover saving the blockchain storage space.
	
	\paragraph{\textbf{Elimination of Bottlenecks.}}
	The main focus of our proposed approach is to eliminate the bottlenecks that limit the number of voters and thus the size of the voting groups. 
	In particular, passing the necessary data within a single transaction could potentially exceed the block gas limit.
	
	The scalability of the Setup phase is straightforward to resolve since it does not involve any transient integrity violation checks (excluding duplicity checks).
	In all these cases, $\mathbb{VA}$ splits the data into multiple independent transactions.
	Similarly, each active voter can send the key material required to repair her vote in several batches in the Fault-Recovery phase, allowing the system to recover from an arbitrary number of stalling participants.
	
	In contrast to the Setup and Fault-Recovery phases, batching in the Pre-Voting phase is not trivial since it requires transient preservation of integrity between consecutive batches of the particular voting group.
	Therefore, we designed a custom batching mechanism, which eliminates this bottleneck while also optimizing the cost of the MPC computation.
	
	\begin{algorithm}[t]
		\scriptsize
		\caption{Pre-computation of right side values from \autoref{eqn:eqn1}.}
		\label{alg:precomp}
		\flushleft \textbf{Inputs:}	
		\begin{compactitem}
			\item $n$: $\#$ of voters 
			\item $mpc\_batch:$ batch size for MPC computation
			\item $voterPKs:$ array of voters' ephemeral public keys 
		\end{compactitem}
		\textbf{Outputs:}
		\begin{compactitem}
			\item $right\_markers:$ pre-computed right side values
		\end{compactitem}
		\hrulefill

			$right\_tmp \gets 0$ \\
			\If{$n \;\mathrm{mod}\; mpc\_batch \neq 0$}{
				$right\_markers.$push($right\_tmp$) \\
			}
			\For{$i \gets 0$ to $n$}{
				\If{$n \;\mathrm{mod}\; mpc\_batch$ = $(i-1) \;\mathrm{mod}\; mpc\_batch$}{
					$right\_markers.$push($right\_tmp$) \\
				}
				$right\_tmp \gets right\_tmp * voterPKs[n - i]$ \\
			}

	\end{algorithm}
	
	\paragraph{\textbf{MPC Batching and Optimization.}}
	If computed independently for each participant, the computation of MPC keys leads to a high number of overlapping multiplications.
	Therefore, we optimize this step by dividing the computation into two parts, respecting both sides of the expression in \autoref{eqn:eqn1} and reusing accumulated values for each side.
	
	First, we pre-compute the right part (i.e., divisor) of \autoref{eqn:eqn1}, which consists of a product of ephemeral public keys of voters with a higher index than the current voter's one (i.e., $i$ in \autoref{eqn:eqn1}).
	The product is accumulated and saved in the contract's storage at regular intervals during a single iteration over all ephemeral public keys.
	The size of these intervals corresponds to the batch size chosen for the computation of the remaining (left side) of the equation.
	We refer to these saved values as \emph{right markers} (see \autoref{alg:precomp}).
	We only choose to save the right markers in the storage of $\mathbb{BC}$ instead of saving all accumulated values due to the high cost of storing data in the smart contract storage.
	Though the intermediate values between right markers have to be computed again later, they are only kept in memory (not persistent between consecutive function calls).
	Therefore, they do not significantly impact the cost of the computation.

	The second part of the computation is processed in batches.
	First, the right-side values for all voters within the current batch are obtained using the pre-computed right marker corresponding to this batch (see lines 1--4 of \autoref{alg:mpc}).
	Then,  the left part of \autoref{eqn:eqn1} is computed for each voter within the batch, followed by evaluating the entire equation to obtain the MPC key (lines 6--8 of \autoref{alg:mpc}).
	This left-side value is not discarded; therefore, computing the left side for the next voter's MPC key only requires single multiplication.
	The last dividend value in the current batch is saved in the contract's storage to allow its reuse for the next batch.
	
	\begin{algorithm}[t]
		\scriptsize
		\caption{Computation of a batch of MPC keys.}
		\label{alg:mpc}
		\flushleft \textbf{Inputs:}
		\begin{compactitem}
			\item $voterPKs:$ array of voters' ephemeral public keys 
			\item $mpc\_batch:$ batch size for MPC computation
			\item $start, end:$ start and end index of the current batch
			\item $right\_marker:$ pre-computed right side value for the first index in a batch 
			\item $act\_left:$ left side value from the previous batch 
		\end{compactitem}
		\textbf{Outputs:}
		\begin{compactitem}
			\item $act\_left:$ left side value at the last index of the current batch
			\item $mpc\_keys:$ array of MPC keys for the current batch 
		\end{compactitem}
		
		\medskip
		Compute right side values for the batch: \\
		\hrulefill

			$right\_tab[mpc\_batch - 1] \gets right\_marker$ \\
			\For{$i \gets 0$ to $mpc\_batch$}{
				$j \gets mpc\_batch - i$ \\
				$right\_tab[j-1]$ $\gets$ $right\_tab[j]$ $*$ $voterPKs[i-1]$ \\
			}

		Compute the current batch of MPC keys:
	
			\setcounter{ALG@line}{5}
			\For{$i \gets start$ to $end$}{
				$act\_left \gets act\_left * voterPKs[i-1]$ \\
			}
			$mpc\_keys[i] \gets act\_left \div right\_tab[i] \;\mathrm{mod}\; mpc\_batch$ \\	

	\end{algorithm}

\subsection{Evaluation}\label{sec:sbvote:evaluation}

To evaluate the scalability of \name, we created the proof of concept implementation that builds on BBB-Voting~\cite{homoliak2023bbb}. 
We used the Truffle framework and Solidity programming language to implement the smart contract part and Javascript for the client API of all other components.
We also utilized the Witnet library~\cite{witnet-ecc-solidity} for on-chain elliptic curve operations on the standardized \emph{Secp256k1} curve~\cite{sec2002}. 
Although Solidity was primarily intended for Ethereum and its Ethereum Virtual Machine (EVM), we have not selected Ethereum for our evaluation due to its high operational costs and low transactional throughput, which is contrary to our goal of improving scalability.
However, there are many other smart contract platforms supporting Solidity and EVM, out of which we selected Gnosis\footnote{\url{https://developers.gnosischain.com}} and Harmony\footnote{\url{https://www.harmony.one}} due to their low costs and high throughput.

Throughout our evaluation, we considered the following parameters of the chosen platforms: 30M block gas limit with 5 second block creation time on Gnosis and 80M block gas limit with 2 second block creation time on Harmony.
%

\paragraph{\textbf{MPC Batch Size.}}
The MPC keys in the Pre-Voting phase are computed in batches (see \autoref{ssec:optim}).
In detail, there is a pre-computed value available for the first voter in each batch.
Using a small batch size imposes many transactions and high execution costs due to utilizing fewer pre-computed values. 
In contrast, using a large batch size requires more expensive pre-computation and storage allocation, which results in a trade-off.
This trade-off is illustrated in \autoref{fig:sbvote:mpc_batch}, depicting how the batch size affects the cost of the computation per voter.
We can see that the best value for our setup is 150 voters per batch.

\begin{figure}[t]
	\centering
	\includegraphics[width=0.5\columnwidth]{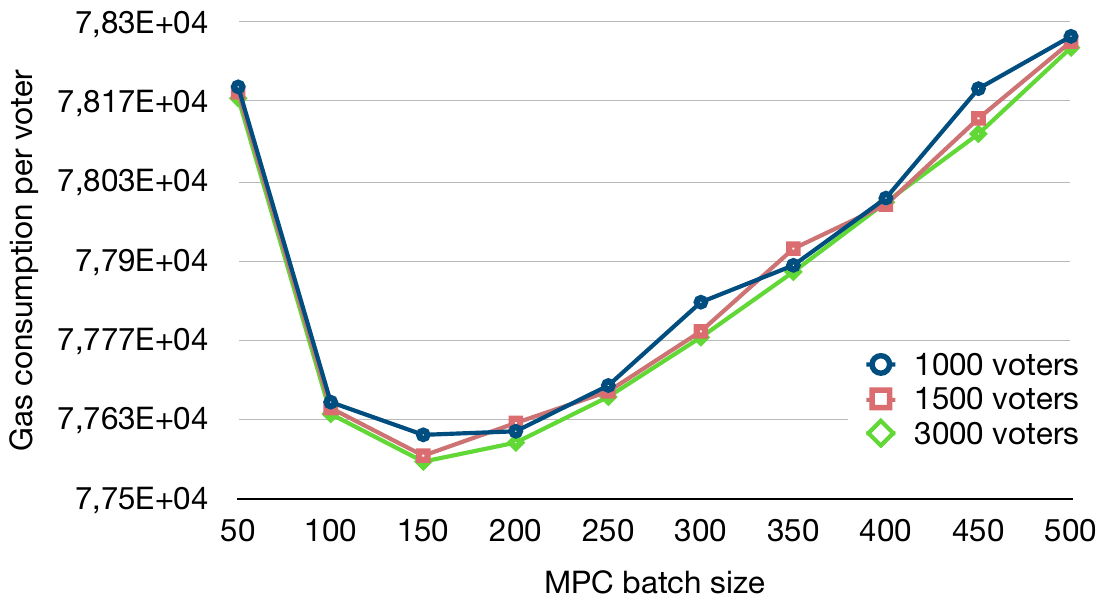}

	\caption{Per voter cost of the MPC key computation w.r.t. the batch size.}
	\label{fig:sbvote:mpc_batch}
\end{figure}
\begin{figure}[b]
	\centering
		\includegraphics[width=0.55\columnwidth]{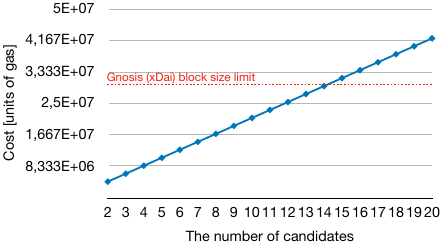}
		\caption{The cost of vote casting w.r.t. the number of candidates.} 
		\label{fig:sbvote:vote-cast}
\end{figure}

\paragraph{\textbf{The Number of Candidates.}}
The number of candidates our voting system can accommodate remains limited.
This is mainly caused by the block gas limit of a particular platform.
In detail, we can only run voting with a candidate set small enough so that the vote-casting transaction does not exceed the underlying platform's block gas limit.
Such transaction must be accompanied by a NIZK proof of set membership (i.e., proof that the voter's encrypted choice belongs to the set of candidates), and the size of the candidate set determines its execution complexity.
\autoref{fig:sbvote:vote-cast} illustrates this dependency.
Our experiments show that the proposed system can accommodate up to 38 and 14 candidates on Harmony and Gnosis, respectively.

\paragraph{\textbf{The Total Number of Participants.}}
The time period over which the voters can cast their ballots typically lasts only several days in realistic elections.
The platform's throughput over a restricted time period and the high cost of the vote-casting transactions result in a trade-off between the number of voters and the number of candidates.
We evaluated the limitations of the proposed voting protocol on both Harmony and Gnosis, as shown in \autoref{fig:cand-vote} and \autoref{fig:max_voters}.
Note that in these examples, we considered only the most expensive phase of the protocol (i.e., voting phase) to be time-restricted.

We determined that with two candidates, the proposed system can accommodate 1.5M voters over a 2-day voting period and up to 3.8M voters over a 5-day voting period on the Harmony blockchain.
On the other side of the trade-off, with the maximum number of 38 candidates on Harmony, maximally 216K voters can participate within a 5-day period.

\begin{figure}[t]
	\centering
	\includegraphics[width=0.6\columnwidth]{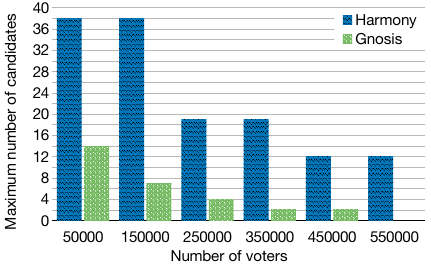}
	\caption{The maximum number of candidates our approach can process during a fixed 5-day voting interval, assuming various numbers of voting participants.}
	\label{fig:cand-vote}
\end{figure}
\begin{figure*}[t]
	\subfloat[Gnosis.\label{fig:max_vot_xdai}]{	
		\includegraphics[width=0.48\columnwidth]{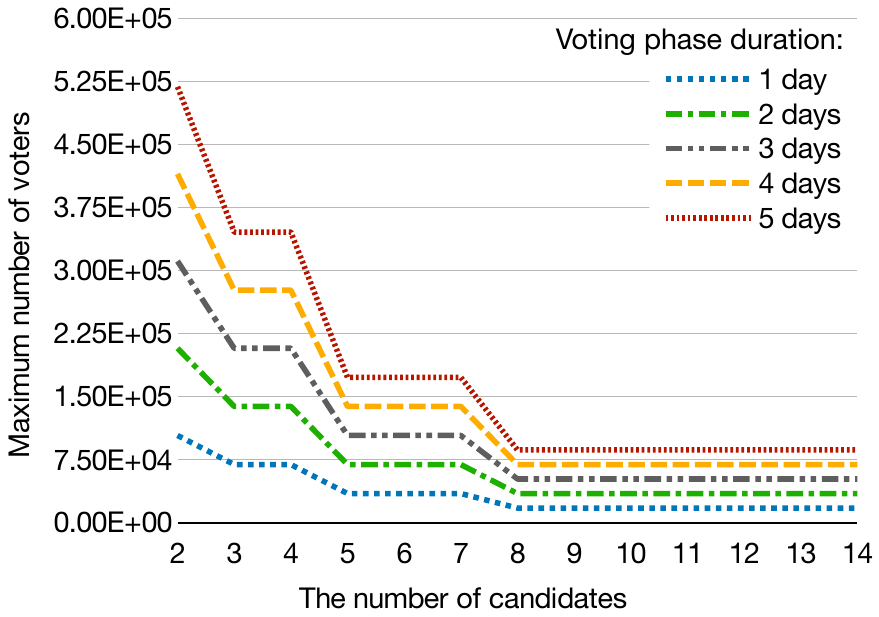}
	}\hfill
	\subfloat[Harmony.\label{fig:max_vot_harmony}]{	
		\includegraphics[width=0.48\columnwidth]{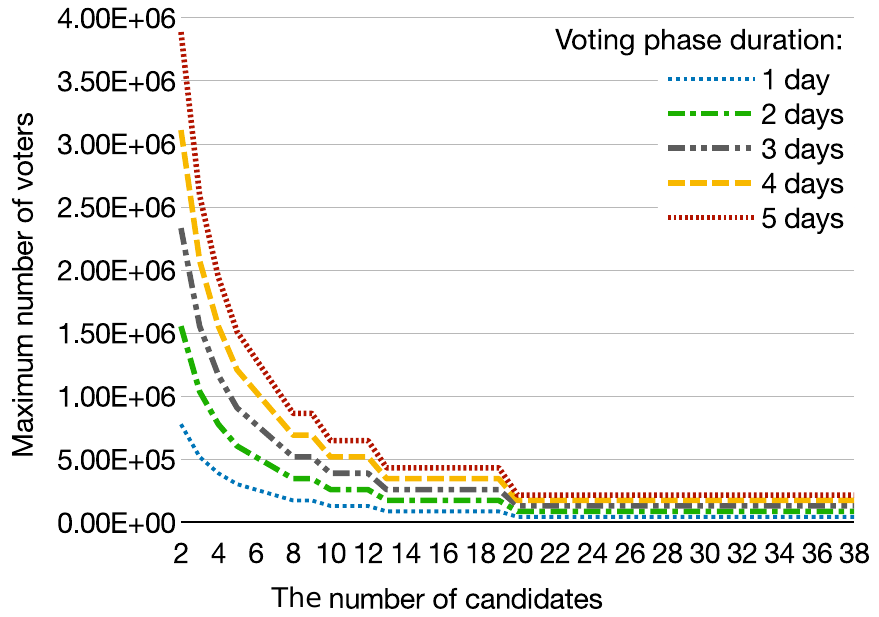}	
	}
	\vspace{0.3cm}
	\caption{The maximum number of voters that our approach can accommodate w.r.t. the number of candidates.}
	\label{fig:max_voters}
\end{figure*}

\subsection{Security Analysis and Discussion}
\label{sec:sbvote:discussion}
We discuss the properties and scenarios affecting the security and privacy of \name.

\paragraph{\textbf{Privacy.}}
Within each voting group, \name maintains perfect ballot secrecy.
The adversary, as defined in \autoref{ssec:sys-model}, cannot reveal a participant's vote through a collusion of all remaining participants since adversary can control at most $n-2$ participants.
The privacy of votes can be violated only if all participants in a voting group vote for the same candidate.
However, this is a natural property of voting protocols, which output the tally rather than only the winning candidate.
\name mitigates this problem by implementing transaction batching, 
which allows the authority to maintain a sufficiently large size of the voting groups to lower the probability of a unanimous vote within the groups.
This probability is further decreased in \name by the smart-contract-based pseudo-random assignment of participants to the groups.
We refer the reader to the work of Ullrich \cite{ullrich2017risk} that addresses the issue of unanimous voting and the probability of its occurrence.

\paragraph{\textbf{Deanonymization \& Linking Addresses.}}
In common block\-chains, the network-level adversary might be able to link the participant's address with her IP address. 
Such an adversary can also intercept the participant's blinded vote; however, she cannot extract the vote choice due to the privacy-preserving feature of our voting protocol.
Therefore, even if the adversary were to link the IP address to the participant's identity, the only information she could obtain is whether the participant has voted.
Nevertheless, to prevent the linking of addresses, participants can use VPNs or anonymization services such as Tor.

\paragraph{\textbf{Re-Voting.}}
It is important to ensure that no re-voting is possible, which is to avoid any inference about the final vote of a participant in the case she would reuse her ephemeral blinding key to change her vote during the voting stage.
Such a re-voting logic can be easily enforced by the smart contract, while the user interface of the participant should also not allow re-voting.
Also, note that ephemeral keys are one-time keys and thus are intended to use only within one instance of e-voting protocol to ensure the security and privacy of the protocol.
If a participant were to vote in a different instance of e-voting, she would generate new ephemeral keys.

\paragraph{\textbf{Forks in Blockchain.}}
Blockchains do not guarantee immediate immutability due to possible forks. This differentiates blockchains from public bulletin boards, as defined in~\cite{Kiayias2002}.
However, since our protocol does not contain any two-phase commitment scheme with revealed secrets, its security is not impacted by accidental or malicious forks. 
Temporary forks also do not impact the voting stage since the same votes can be resubmitted by client interfaces.

\paragraph{\textbf{Self-Tallying Property.}}
The self-tallying property holds within each voting group since the correctness of obtained tallies can be verified by anybody.
Consequently, this property holds for the whole voting protocol since the main contract aggregates the booth tallies of the groups in a verifiable fashion.


\paragraph{\textbf{Verifiability.}}
\name achieves both individual and universal verifiability.
By querying the booth contract, each voter can verify her vote has been recorded. 
Each voter (and any interested party) can verify the booth tally since it satisfies the self-tallying property, i.e., the \autoref{eqn:sbvote:eq3} would not hold should any vote be left out.
Any party can verify the final tally aggregated in the main contact by querying all the booth contracts to obtain individual booth tallies.

\paragraph{\textbf{Platform-Dependent Limitations.}}
Although our system itself does not limit the number of participants, the required transactions are computationally intensive, which results in high gas consumption.
Therefore, large-scale voting using our system might be too demanding on the underlying smart contract platform.
As a potential solution, public permissioned blockchains dedicated to e-voting might be utilized. 

\paragraph{\textbf{Adversary Controlling Multiple Participants in the Fault Recovery.}}
One issue that needs to be addressed in the fault recovery is the adversary controlling multiple participants and letting them stall one by one in each fault recovery round.
Even though the fault recovery mechanism will eventually finish with no new stalling participants, such behavior might increase the costs paid by remaining participants who are required to submit counter-party shares in each round of the protocol.
For this reason, similar to the voting stage, we require the fault recovery stage to penalize stalling participants by losing the deposit they put into the smart contract at the beginning of our protocol.
On the other hand, the adversary can cause a delay in the voting protocol within a particular booth. 
However, it does not impact other booths. 
To further disincentivize the adversary from such a behavior, the fault-recovery might require additional deposits that could be increased in each round, while all deposits could be redeemed at the tally stage.

\paragraph{\textbf{Tally computation.}}
Tallying the results in individual booths requires an exhaustive search for a solution of \autoref{eqn:sbvote:eq3} with $\binom{n+k-1}{k-1}$ possible combinations~\cite{HaoRZ10}, where $n$ is the number of votes and $t$ is the number of candidates.
Therefore, the authority should select the size of the voting groups accordingly to the budget and available computational resources (see \autoref{table:bruteforcetime} of BBB-Voting for the evaluation).

\section{Always on Voting}\label{sec:evoting-aov}

\paragraph{Verifiable Delay Function (VDF).}
Given a time delay $t$, a VDF  must satisfy the following conditions: 
for any input $x$, anyone equipped with commercial hardware can find $y$ = VDF($x, t$) in $t$ sequential steps, but an adversary with $p$ parallel processing units must not distinguish $y$ from a random number in significantly fewer steps (see also \autoref{sec:background:vdf}).
For our purposes, the value of $t$ is fixed once it is determined. Therefore, we use VDF($x$) instead of VDF($x,t$) in the remaining text.

\subsection{System Model}
\label{ssec:systemmodel}
Our model has the following main actors and components:
$\romannumeral
1$) A \textit{participant} ($P$) who partakes in governance by casting a vote for her choice or candidate.
$\romannumeral
2$) \textit{Election Authority} (EA) is responsible for validating the eligibility of participants to vote in elections, registering them, and shifting between the phases of the voting.
$\romannumeral
3$)
A \textit{smart contract} (SC) collects the votes, acts as a verifier of zero-knowledge proofs, enforces the rules of the voting and verifies the tallies of votes.
$\romannumeral
4$)
\textit{Bitcoin Puzzle Oracle} (BPO) provides an off-chain data feed from the Bitcoin network and supplies the requested Bitcoin block header (BH) when it is available on the Bitcoin network.
$\romannumeral
5$)
A \textit{VDF prover} is any benign party in the voting ecosystem who computes the output of VDF and supplies proof of its correctness to SC.

\subsubsection{Adversary Model}
\label{ssec:advmodel}

There are two types of adversaries: the network adversary
$Adv_{net}$ and a Bitcoin mining adversary $Adv_{min}$.
Both adversaries are static and have bounded computing power, i.e., they are unable to break used cryptographic primitives under their standard security assumptions.
$Adv_{net}$ is a passive listener of communication entering the blockchain network but cannot block it.
Her objective is to derive statistical inferences determining the voting patterns of participants (including the voting intervals in which they voted).
%
$Adv_{min}$ can mine on the Bitcoin blockchain.
	Her goal is to find a solution to the Bitcoin PoW puzzle that also triggers the end of the current voting interval, thereby influencing the end time of epoch. 
	Our voting framework uses a function of the Bitcoin block header (BH) inclusive of its  PoW solution $s$, i.e., $f(BH)$ to trigger the end of the current voting interval. 
	Such a  manipulation would potentially enable $Adv_{min}$ to prematurely finish the current interval and start the next one.
%
Finally, we assume that $EA$ verifies identities honestly and supply addresses of only verified participants to $SC$.

\subsection{Design Goals}
\label{ssec:designgoals}

The AoV framework has the following main design goals. 
\begin{compactenum}
	\item \textbf{Repeated voting epochs}:  Participants are allowed to continuously vote and change elected candidates or policies without waiting for the next election. Participants are permitted to privately change their vote at any point in time, while the effect of their change is considered rightful at the end of each epoch.
	The duration of such epochs is shorter than the time between the two main elections.
	\item \textbf{Randomized time epochs}: The end of each epoch is randomized and made unpredictable.
	In contrast to fixed-length time epochs, the proposed randomized time epochs are used to thwart the peak-end-effect.
	\item \textbf{Plug \& play voting protocols}: The AoV framework
	is designed to ``plug \& play'' new or existing voting
	protocols. As a result, AoV inherits the properties of
	the underlying protocol chosen.
		However, in the interest of vote confidentiality on a blockchain,  we recommend protocols providing 
		secret ballots whose correctness can be publicly verified by $SC$ without leaking any information, e.g., \cite{yu2018platform,ICBC:MRSS19,killerprovotum,Baudron2001}. 
		Also, due to the repetitive nature of AoV, e-voting protocols with expensive on-chain computations and required fault recovery (due to stalling participants)  may be less appealing but still acceptable with some limitations, e.g.,~\cite{McCorrySH17,Li2020,DBLP:conf/fc/SeifelnasrGY20,venugopalan2021bbbvoting,stanvcikova2022sbvote}. 
		\item \textbf{Privacy of participants}:
		Revoting by a participant $P$ may enable $Adv_{net}$ to link her blockchain wallet addresses  or link the IP address of $P$ across her multiple voting transactions. Therefore, it is important to achieve maximum voter privacy (anonymity) in the presence of adversary $Adv_{net}$  (see \autoref{ssec:Anonid}).
		\item \textbf{Privacy implications of booth sharding}: 
		The AoV framework supports booth sharding to distribute voting overheads, streamline operations, and respect hierarchical structure of elections. 
		In detail, instead of having a single booth smart contract for all participants, they are split into a number of smaller booth contracts. 
		For multi-booth elections with limited participants per booth and a high candidate win probability, there is a small chance that all participants in a booth voted for the same candidate. Since the $EA$ knows the identity of all the participants in the booth, it can trivially determine whom they voted for. 
		We discuss the effects of booth size and candidate winning probability to prevent such incidents in \autoref{ssec:shardinganalysis}.
\end{compactenum}

\subsection{Proposed Approach}
\label{sec:framework}

Always-on-Voting (AoV) is a framework for a blockchain-based e-voting, in which voting does not end when the votes are tallied and the winners are announced.
Instead, participants can continue voting for their previous vote choice or change their vote. 
A possible outcome of such repetitive voting is transitioning from a previous winning candidate to a new winner.
To achieve this, the whole time interval between two regularly scheduled elections is unpredictably divided into several intervals, denoted as voting epochs.
Participants may change their vote anytime before the end of a voting epoch (i.e., before a tally of the epoch is computed); however, they do not know beforehand when the end occurs.
Any vote choice that transitioned into the supermajority threshold of votes is declared as the new winner of the election,
and it remains a winning choice until another vote choice reaches a supermajority threshold.

\subsubsection{Underlying Voting Protocol}\label{sec:underlying-voting-protocol}
AoV provides the option to plug \& play any suitable e-voting protocol. 
To provide the baseline security and privacy of votes (with on-chain verifiability), we assume the voting protocol plugged into AoV allows participants to blind or encrypt their votes whose correctness is verified on-chain by $SC$. 
However, AoV does not deal with other features supported by the plugged-in voting protocol (such as end-to-end verifiability~\cite{Jonker13}, coercion-resistance~\cite{yu2018platform}, receipt-freeness~\cite{Kiayias2002}, and fairness~\cite{Kiayias2002}). 

\begin{figure*}
	\centering
	\includegraphics[width=1.0\textwidth]{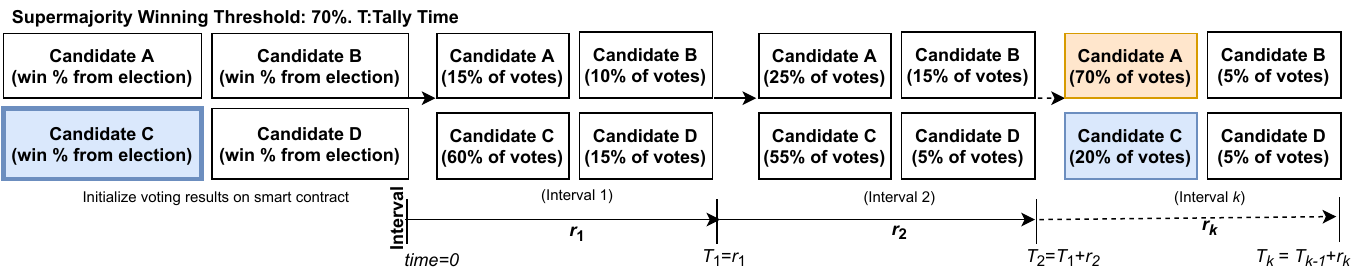}
	\caption{The time between two regular elections is divided into the fixed number $ft$ of intervals (a.k.a., epochs). First, the ratios of votes for all vote choices (i.e., candidates) are initialized from the last election. Next, repeated voting within $k$ epochs results in a winning vote choice transition (from C to A). The new winner A is declared when she obtains a supermajority of total votes (i.e., 70\%) at interval $k$; $k\leq ft$ (see \autoref{sec:framework}).
	Note that $r_1,\ldots,~r_k$ are  randomized times that determine the length of the intervals $1,...,k$. The tally is computed at the end of each interval.}
	\label{fig:transition-epoch}
\end{figure*}

\subsubsection{Example of Operation}
\autoref{fig:transition-epoch} illustrates a scenario with 4 candidates $A$-$D$, 
where \textit{C} is the present winner of the election.
For example, the supermajority threshold of 70\% votes is set for future winnings, which is a tunable parameter that may be suitably tailored to the situation.
All candidates are initialized to their winning percentages of obtained votes from the last election. 
Over time, the individual tally is observed to shift as the supermajority of participants decided to change their vote in favor of another candidate by voting in the epoch intervals.
Through \textit{k} intervals, the winner-ship is seen to transition from candidate \textit{C} to \textit{A}.
At the $\textit{k}^{th}$ interval, \textit{A} obtains the 70\% threshold of votes and is declared as the new winner.
Note that the supermajority is required only in the voting epochs between two regularly scheduled elections.
The regular elections are also executed in AoV, and they repeat every $M$ months/years, while requiring only a majority of votes (i.e., $>$50\%) to declare a winner. 
Hence, in contrast to existing electoral systems, we only propose changes between regularly scheduled elections and enable new candidates to be added or removed.


\myparagraph{Justification for Supermajority}
A supermajority of 70\% was chosen (see Appendix of \cite{venugopalan2023always} for background) to help the incumbent carry out reforms without the risk of losing when there is still sufficient support from participants. 
On the other hand, the main purpose of this threshold is to block (or repeal) policies that are unpopular or negatively affecting a vast majority of participants.
Additionally, we aim to avoid the quorum paradox (see Appendix of \cite{venugopalan2023always}) 
by setting a minimum participation requirement of 70\% from the just concluded main election.

\begin{figure}
	\centering
	\includegraphics[width=0.50\columnwidth]{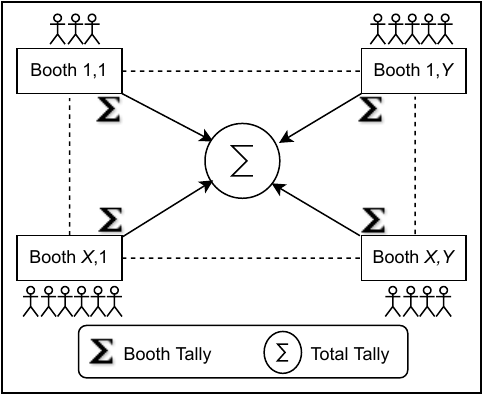}
	\caption{When the tally computation is triggered, each booth computes the sum of all votes cast at the booth (referred to as booth tally). Each booth tally is further summed up to determine the total tally.
		Pictorially, the booths are numbered 1 to $X$ along the rows and 1 to $Y$ along the columns. There are a total of $X\cdot Y$ booths.
	}
	\label{fig:booth}
\end{figure}

\subsubsection{Overview of AoV Phases}
\label{ssec:votingphases}
Once the setup phase (that ensures participants agree upon all system parameters) is completed, electronic voting frameworks typically consist of three phases: 
(1) a registration phase to verify voter credentials and add them to the voting system,  
(2) a voting phase, in which participants cast their vote via a secret ballot, and 
(3) a tally phase, where the total votes for each candidate are counted and revealed to participants.
The voting protocol plugged-in with the AoV framework may contain additional phases, but we omit them here for brevity.

\begin{figure}[t]
	\centering
	\includegraphics[width=0.6\columnwidth]{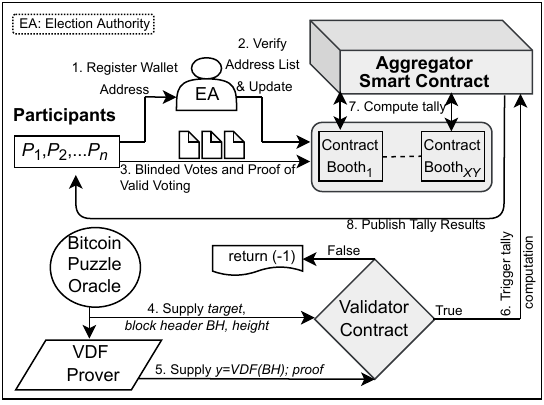}
	\caption{Interaction among participants ($P$s), election authority ($EA$), smart contracts, the Oracle, and VDF prover. 
		(1) Registering wallet addresses of participants and (2) their identity verification are made by the $EA$.
		(3) Participants send a blinded vote and its zero-knowledge proof of correctness to their assigned booth contract. 	
		The booth contract verifies the validity of the vote.
		(4) The Bitcoin Puzzle Oracle (BPO) provides the latest Bitcoin block header (BH) and 
		(5) VDF prover sends a proof of sequential work with $y$ (the output of VDF(BH)) to the validator contract. 
		(6) The validator contract finishes the epoch and shifts the state of the elections to the tally upon meeting the required conditions.
		(7) The aggregator contract is responsible for totaling individual booth tallies and (8) publicly announcing the total tally. The on-chain components of AoV are depicted in gray.}
	\label{fig:model}
\end{figure}
The architecture of AoV is shown in \autoref{fig:model}.
In AoV, participants (in step 1) register their wallet address with the EA, who then (in step 2) verifies and updates it on the booth smart contract\footnote{Participants are randomly grouped and assigned to booths  $\in\{1,2,...,X\cdot Y\}$ (see \autoref{fig:booth}), represented by a booth smart contract.}.
This is followed by the voting phase (in step 3), where participants publicly cast their secret ballots (i.e, not revealing the vote choice nor identity).
The BPO (step 4)  supplies the validator contract and VDF prover with the $target$, recent
Bitcoin block header $BH$ and its block height. 
The VDF prover\footnote{A VDF prover is any benign user in the voting ecosystem with commercial hardware to evaluate the input of VDF, i.e., $y~=~$VDF($BH$) and supply a proof $\pi$. 
} (in step 5) computes and submits   $VDF(BH)$ and a  proof of sequential work ($\pi$) to the validator contract. 
The validator contract (in step 6) verifies the VDF proof and checks whether the supplied nonce $s$ (included in the block header) is a valid solution to the Bitcoin PoW puzzle of the supplied header. 
If both verifications pass, 
the validator contract finalizes the epoch and triggers the tally computation for the epoch. 
Otherwise, it waits for the next block header submission from the BPO and the proof of sequential work from the VDF prover.
When the tally computation is triggered, each booth contract $\{1,2,...,X\cdot Y\}$, sums up all its local vote counts and sends them to the aggregator contract (step 7). 
Then, the aggregator contract totals the votes from each booth contract and publishes the final tally (step 8).
	In AoV, the $EA$ is authorized to register/remove participants and candidates in a future interval. 
	Nevertheless, candidates can also be managed by other means, and AoV does not mandate how it should be done.
	When there are no other changes in the next interval, revoting repeats with step~2 and ends with step~8.
%
From the initialization of AoV until the next regular elections, the validator smart contract accepts all future Bitcoin block headers.
The new block headers (as part of their blocks) arriving every 10 minutes on average are appended to the Bitcoin blockchain.
The BPO is responsible for timely supplying\footnote{
To respect the finality of the Bitcoin network, we assume that BPO supplies only the block headers that contain at least 6 confirmations on top of them. As a consequence, the	 probability that such a confirmed block will be reverted is negligible. Note that this does not influence the chances of $ADV_{min}$ to succeed since she is already  ``delayed'' by VDF in finding multiple PoW solutions at the same height; therefore, she prefers to work on top of the chain with her new attempts. }
each new block header to the VDF prover and validator contract.
The VDF prover computes the VDF on each of those block headers after they are supplied.

\subsubsection{Calculating the Epoch Tally Time}
\label{ssec:vrf}
Due to concerns that Bitcoin nonces are a weak entropy source, additional steps are taken to make it cryptographically secure (see details in  \autoref{ssec:rndnonce}).
Our notion of randomness relies on Bitcoin Proof-of-Work to generate valid nonces.\footnote{If the nonce overflows, a parameter called \textit{extraNonce}  (part of the coinbase transaction) is used to provide miners with the extra entropy needed to solve the PoW puzzle.}
The validator contract awaits future block headers yet to be mined on the Bitcoin network.
When new Bitcoin block headers arrive, they are sent to the validator contract and the VDF prover via the BPO.
The VDF ensures that a mining adversary cannot find more than one valid nonce to the block at a given height and test if the nonce is favorable within 10 minutes.
The VDF is computed with the block header at the input by the VDF prover, who then submits the VDF output and the proof of sequential work to the validator contract.
The choice of VDF depends on its security properties, speed of verification, and a size of the proof~\cite{Pietrzak2019}.  
Let  $BH$ be the Bitcoin block header.
Once VDF prover computes $y=VDF(BH)$, a small proof ($\pi$) is used to trivially verify its correctness using $VDF\_Verify(y,\pi)$. Wesolowski’s construction~\cite{Wesolowski2020} is known for its fast verification and a short proof:
Let $TL$ be the number of sequential computations. 
Prover claims 
$$y~=~BH^{2^{TL}}$$ 
and computes a proof $$\pi= BH^{\lfloor \frac{2^{TL}}{B} \rfloor} ,$$ where $B$~=~Blake256($BH~||~y~||~TL$) hash.  
Verifier checks whether $$\pi^{B}\cdot BH^{2^{TL}mod\; B}\stackrel{?}{=} y.$$
Since we employ VDF, $Adv_{min}$ does not know the value of $y$ before evaluating the VDF and is forced to wait for a given amount of time to see if the output is in her favor (before trying again).
However, since Bitcoin mining is a lottery, other miners can solve the puzzle and append a block by propagating the solution to the Bitcoin network, rendering any withheld or attempted solution by the adversary that was not published useless.

\paragraph{Interactions of BPO, VDF Prover, and Validator.}
\label{ssec:interactions}
Let $TotalTime$ be the time in minutes between 2 regular elections.
The BPO ({see step 4 in \autoref{fig:model}}) feeds the block header $BH$ of every future Bitcoin block (when it is available) to the validator contract and VDF prover.
Further, BPO provides validator contrast also with the value of $target$ when it changes; i.e., every 2016 blocks.

Upon obtaining data from BPO, the VDF prover computes VDF output  \begin{equation}\label{eqn_vdfprover}
y~=~VDF(BH)
\end{equation}
with the VDF proof $\pi$ and sends them to the validator contract ({see step 5 in \autoref{fig:model}}).
Next, the validator contract verifies the following conditions:
\begin{equation}\label{eqn_vdfverifier}
VDF\_Verify (y,\pi) \stackrel{?}{=} True,
\end{equation}
\begin{equation}\label{eqn_0}
SHA256(BH)<target.
\end{equation}
The first verification checks whether the VDF output $y$ and supplied proof $\pi$ (i.e., \autoref{eqn_vdfverifier}) correspond to the BPO-supplied block header $BH$. 
%
The second verification (i.e., \autoref{eqn_0}) checks whether the nonce received from BPO\footnote{We note that the BPO may be replaced by a quorum to improve decentralization. The validator contract will then accept the input from BPO only when $2/3$ (and more) of the quorum is in agreement.}
is a valid solution to the Bitcoin PoW puzzle.
Once both checks pass, the validator contract proceeds to compute 
\begin{equation}\label{eqn_1}
a=SHA(y),
\end{equation}
where SHA(.) is SHA-X-256\footnote{X denotes a suitable hash function such as SHA-3, and 256 is the output length in bits.}  hash.
The goal of \autoref{eqn_1} is to consolidate the entropy by passing it through a compression function that acts as a randomness extractor (see \autoref{ssec:rndnonce}).
Using $a$, the validator contract computes

\begin{equation}\label{eqn_2}
b= a\; (mod\; BHsInInterval),
\end{equation}

where the expected number of block headers are
\begin{equation}\label{BHInInterval}
BHsInInterval ~=~ \frac{IntervalTime}{BlockTime}.
\end{equation}


and the time interval is found as
\begin{equation}\label{eqn_3}
IntervalTime ~=~ \frac{TotalTime}{ft}.
\end{equation}

\noindent
As seen in \autoref{fig:transition-epoch}, $ft$ is the number of intervals (epochs) that the total time ($Total\-Time$) between 2 regular elections is divided into. $BlockTime$ is the average time of block generation (i.e., 10 minutes in Bitcoin).
The computation of tally for the current interval is triggered when the output of the validator contract is $True$ ({see step 6 in \autoref{fig:model}}):
\begin{equation}\label{eqn_4}
VC_{output} = \begin{cases} \mbox{True,} & \mbox{if } b = 0 \\ \mbox{False,} & \mbox{otherwise}. \end{cases} 
\end{equation}

\noindent
\paragraph{\textbf{Example.}}
Let $TotalTime$ = 4 years = $525600\cdot 4$ minutes and $ft=8$; then $IntervalTime$ $= (525600\cdot 4)/(8)= 262800$ minutes $\approx 182.5$ days and $BHsInInterval$ = $262800 / 10$ = 26280 blocks. 
Therefore, the BPO will send on average 26280 block headers ($BH$ values) to the validator contract within assumed 182.5 days long epoch (assuming 10 minutes block creation interval), i.e., $1/8$ of the total time.
We expect the tally will be triggered on average once in every 182.5 days because of the Poisson probability distribution of this event. 
Therefore, $ft$ expresses the expected number of epochs, while $ft$ might differ across the regular elections iterations.

\subsubsection{Anonymizing Identity}
\label{ssec:Anonid}
In our scheme, we employ wallet addresses to keep track of authorized participants.
{The map between participant $P$ and her wallet, recorded by the $EA$ is to prevent sybil attack (preventing any unauthorized person from voting) and double voting. For this reason, $EA$ is trusted to keep this mapping private. Only voters corresponding to white-listed wallets by $EA$ are allowed to vote and everyone else is blocked from voting on the smart contract. The wallet address corresponds to a unique random string associated with the voter. Its knowledge provides no additional information to the $EA$, since the $EA$ knows voters’ identities.} 

	The AoV framework permits participants to change their vote at any time. 
	The effect of the change is manifested at the end of the epoch when the tally is computed.  
	However, $Adv_{net}$ can observe the vote transactions on $SC$ even though the vote choice remains confidential (preserving the privacy of votes). 
	A participant might vote in one interval and then vote again in a future interval. 
	$Adv_{net}$ cannot distinguish whether the participant voted again for the same candidate or changed her vote to a different
	candidate due to assumed confidentiality-preserving properties of plugged voting protocols (see \autoref{sec:underlying-voting-protocol}) 
	However, both votes may be mapped to the same participant's blockchain wallet address if utilized naively, hence $Adv_{net}$ can determine how many times a participant voted.

%
%

To break the map between participant $P$ and her blockchain wallet address, we use the idea presented in type 2 deterministic wallets\footnote{See \url{https://en.bitcoin.it/wiki/BIP_0032}.}.
The objective is to synchronize a practically unlimited number of wallet public keys PKs (one per vote) between $EA$ and $P$ such that this PK list can be regenerated only by these two parties, while the corresponding private keys SKs can be computed only by $P$.
As mentioned in \autoref{ssec:advmodel}, $EA$ is assumed to verify identities honestly, and it supplies their corresponding wallet addresses to $SC$. 
The wallet address is generated as a function $f$ of the elliptic curve public key. 
Once the public key is available, it is straightforward to compute the corresponding wallet address.
Let $BP$ be the base point on the elliptic curve.
Further, let $PK$ be the blockchain wallet public key corresponding to a private key $SK$.
Here, $SK$ is chosen as a random positive integer whose size is bounded to the order of $BP$ on the chosen elliptic curve modulo a prime number. 
Note that \autoref{eqn_anon01} -- \autoref{eqn_anon05}  are computed off-chain.
As an illustration, let the first PK be computed as
\begin{equation}\label{eqn_anon01}
PK_{0} = SK_{0}\cdot BP,
\end{equation}
and the next PK be
\begin{equation}\label{eqn_anon02}
PK_{1} = PK_{0}+SK_{1}\cdot BP.
\end{equation}
From \autoref{eqn_anon01} and~\autoref{eqn_anon02}, we observe that
\begin{equation}\label{eqn_anon03}
(SK_{0}+SK_{1})\cdot BP = PK_{1}.
\end{equation}
The following steps ensue:
\begin{compactenum}[i)]
\item During the identity verification, $P$ sends to $EA$:
\textit{a}) wallet public key $PK_{0}$, \textit{b}) a random shared secret key $hk$, and \textit{c}) parameters $(g,p)$, where $g$ is a randomly chosen generator in $F_p^{*}$ {(i.e., a prime field)}  and $p$ is a large prime.
The wallet address is a public function of the wallet public key.
Hence, $EA$ computes $W_{0}=f(PK_0)$ and stores it.

\item The private key of $P$ at any future voting epoch $e=\{1,2,3,...,2^{128}-1\}$  is generated by $P$ as   
\begin{equation}\label{eqn_anon04}
	SK_{e}= SK_{0} + HMAC_{hk}(g^e),
\end{equation}
where $g^e\in F_p^{*}$ is the output of pseudo-random number generator (PRNG) in epoch $e$, HMAC$(.)$ is HMAC-X-256 using shared secret key ${hk}$ between $EA$ and $P$, which is unknown to $Adv_{net}$ and serves for stopping her from mapping $P$'s wallet addresses.

\item The corresponding $PK$ of $P$ for epoch $e$ is   

\begin{equation}\label{eqn_anon05}
	PK_{e}= PK_{0} + HMAC_{hk} (g^e)\cdot BP.
\end{equation}
\end{compactenum} 
$EA$ and $P$ can compute $PK_e$ but $SK_e$ is held only by $P$.
This effectively separates PKs from their SKs and, at the same time, maps it to $P$'s first wallet public key, i.e., $PK_{0}$.   
At any voting iteration $e$, the public key $PK_{e}$, and the corresponding wallet address can be computed by both $EA$ and $P$.
Since the shared secret $hk$ used with HMAC is known only to $EA$ and $P$, no third party, including $Adv_{net}$, is able to compute any future PKs. 
Hence, for a sequence of wallet addresses of $P$  given by $W_{e}$ = $f(PK_{e})$, the map between the wallet address and $P$ is broken for all other parties other than $EA$ and $P$. 

\paragraph{\textbf{Batching the Requests.}}
It is important to note that if a participant wishes to change her vote within the same voting interval,
{she should submit the request with the new wallet address to $EA$ who will approve the request in batches (aggregating multiple such requests) in order to improve the resistance against mapping of former wallet addresses to new ones.}
In detail, the $EA$ will mark all former addresses as \textit{invalid} and approve the new ones within a single transaction.
%
%
{In the extreme case, when a batch contains only one participant (who wanted to change her vote), $Adv_{net}$ can map the two wallets.
Therefore, such a participant should value her wallet privacy and vote only once during the next interval (or alternatively change her address again and be a part a bigger batch).
}
%
Using VPN/dVPNs may further limit the $Adv_{net}$'s ability to map participant IP addresses.

\begin{algorithm}[t]
	\DontPrintSemicolon
	
	\scriptsize
	\SetKwProg{Fn}{Function}{:}{\KwRet}
	\SetKwFunction{VDFADD}{VdfAdd}
	
	\SetKwProg{Fn}{Def}{:}{}
	\Fn{\VDFADD{$y, \pi, blockheight$}}{	
		writeState($``vdfadd"~||~ blockheight, y~||~\pi$)	
	}
	\caption{VDF Add.}\label{alg:vdfdeposit}
\end{algorithm}

\begin{algorithm}[t]
	\DontPrintSemicolon
	\scriptsize
	
	\SetKwProg{Fn}{Function}{:}{\KwRet}
	\SetKwFunction{BPOAdd}{BpoAdd}
	
	\SetKwProg{Fn}{Def}{:}{}
	\Fn{\BPOAdd{$Target, BH, blockheight$}}{	
		writeState($``bpoadd"~||~ blockheight,BH~||~Target$)\\	
		writeState($``blockheightStored",blockheight$)\\
		writeState($``blockheader"~||~blockheight,BH$)
	}
	\caption{BPO Add.}\label{alg:bpodeposit}
\end{algorithm}

\subsubsection{Functionality}
\label{ssec:pseudocode}

The high-level functionality of the AoV framework and its smart contracts is shown in \autoref{alg:framework}, and the trigger mechanism is presented in \autoref{alg:trigger}.
\autoref{alg:framework} comprises of 5 main functions --- \textit{setup, registration, voting, tally} and \textit{revote}. 
The system parameters agreed upon are added by $EA$ using the \textit{setup} function of the smart contract. 
The $EA$ is also responsible for adding the list of valid participants' wallet addresses to the contract through the  \textit{registration} function.
The \textit{voting} function is supplied by a participant's wallet address, blinded vote, and its proof of correctness. 
This information is signed with $P_i$'s private key and sent to the contract. 
The \textit{voting} function carries out the necessary verifications and adds her vote. The participant wallet address is set to ``voted'' to disallow its reuse. 
Before invoking the \textit{tally} function, the VDF prover and BPO store their respective data to the contract (see \autoref{alg:vdfdeposit} and \autoref{alg:bpodeposit}). 
Next, the \textit{tally} function is called by $EA$ or any authorized participant. 
The \textit{tally} function carries out two main tasks. First, it checks whether the condition to trigger the interval tally is satisfied (see \autoref{alg:trigger}). The second task (when triggered) is to tally the votes and return the results.
When a participant wishes to vote again, she sends her next wallet address (synchronized with $EA$) to the \textit{revote} function.
The $EA$ will verify the new wallet address offline and call the \textit{registration} function to set the new address to valid (preferably in batches as mentioned in \autoref{ssec:Anonid}). 
Next, a participant may call the function \textit{Voting} and vote using her new wallet address.

\begin{algorithm}[t]
	
	\DontPrintSemicolon
	\scriptsize
	
	\SetKwProg{Fn}{Function}{:}{\KwRet}
	\SetKwFunction{FTrigger}{VerifyTrigger}
	
	\SetKwProg{Fn}{Def}{:}{}
	\Fn{\FTrigger{$y, \pi, T, BH, params\_struct$}}{	
		$b$ = -$1$\\
		$tt$ = readState($params\_struct.totaltime$)\\ 
		$ft$ = readState($params\_struct.ft$)\\
		$seed$ = readState($params\_struct.key$)\\  
		$IntervalTime$ = $tt/ft$\\ 
		$BHsInInterval$ = $IntervalTime/10$\\
		
		\If{$SHA256(BH)<T$ } 
		{
			\If{ $Verify\_VDF(y,\pi)==True$}
			{
				$a = SHA(y)$\;
				$b = a (mod\; BHsInInterval)$\\			
			}
		}
		\If{b==0}
		{
			\KwRet $True$\;
		}
		\Else{\KwRet $False$\;}
	}

	\caption{Trigger mechanism.}	\label{alg:trigger}
\end{algorithm}

\begin{algorithm*}
	\label{alg:framework}
	
	\scriptsize
	\SetKwFunction{FMain}{Main}
	\SetKwFunction{FSetup}{Setup}
	\SetKwFunction{FReg}{Registration}
	\SetKwFunction{FVoting}{Voting}
	\SetKwFunction{FTally}{Tally}
	\SetKwFunction{FRevote}{Revote}
	
	\KwInput{Set1: $\forall$ participants $P_i$, wallet addresses $(WA_{ij})$, blinded vote $BV_{ij}$ (by $P_i$ for her $j^{th}$ voting occurence,  $j=0,1,2,3,...$) \& zero knowledge proof of vote correctness ($ZKP_{ij}$), $booth_{no}$. Set2: system parameters $init\_params$, BTC blockheader $BH$, $VDF(BH)$, proof $\pi$, BTC target $T$, BTC blockheight.}
	\KwOutput{Total tally of votes in the interval.}
	
	\medskip
	\SetKwProg{Fn}{Function}{:}{}
	\Fn{\FSetup{$init\_params$}}
	{
		writeState($params\_struct, init\_params$) \tcp{add system parameters as key-value pairs into $params\_struct$.}
	}
	\smallskip
	
	\SetKwProg{Fn}{Function}{:}{}
	\Fn{\FReg{$msg1=WA_{ij}, msg2=valid\_flag, EA\_signed\_msg$}}{
		$msg = msg1~||~msg2$ \tcp{concatenate message parts.}
		$EA\_{pubkey}$= readState($params\_struct.EA\_public\_key$) \tcp{get $EA$ public key.}
		
		\If{VerifySig$(msg,EA\_signed\_msg, EA\_{pubkey})=True$}
		{
			\If{$valid\_{flag}==True$}
			{
				writeState($WA_{ij},``valid"$) \tcp{set wallet address to valid.}
			}
			\Else
			{
				writeState($WA_{ij},``invalid"$) \tcp{set wallet address to invalid.}
				
			}
		}
	}
	\smallskip

	\SetKwProg{Fn}{Function}{:}{}
	\Fn{\FVoting{$msg1=WA_{ij},msg2=BV_{ij},msg3=ZKP_{ij},P_i\_signed\_msg$}}{
		$msg = msg1~||~msg2~||~msg3$\\
		$wallet\_status$= readState($WA_{ij}$)\\
		$P_i\_{pubkey}$= readState($params\_struct.P_i\_public\_key$)\\
		$sig\_flag$ = VerifySig($msg$,$P_i\_signed\_msg$,$P_i\_{pubkey}$)\\
		$zkp\_flag$ = VerifyZKP($BV_{ij},ZKP_{ij}$)\\
		
		\If{($sig\_flag$ and $zkp\_flag$) == $True$ and $wallet\_status==``valid"$ }
		{
			writeState($``vote"~||~WA_{ij},BV_{ij}$) \tcp{The latest wallet address of $P_i$ is mapped to her private vote. The key  in (key,value) is prefixed with `vote' tag to identify valid votes w.r.t. wallet addresses.}
			writeState($WA_{ij},``voted"$) \tcp{set $WA$ to voted \& prevent voting from that address again.}

		}
	}
	\smallskip

	\SetKwProg{Fn}{Function}{:}{}
	\Fn{\FTally{$blockheight$}}{
		$total\_tally=-1$\\
		$stored\_blockheight$ = readState($``blockheightStored"$)\tcp{\textit{blockheightStored} is from \autoref{alg:bpodeposit}.}
		$BH=$readState($``blockheader"~||~blockheight$)\tcp{\textit{blockheight} is the argument passed to the function Tally.}
		$y,\pi$ = readState($``vdfadd"~||~blockheight$)\tcp{\textit{vdfadd} is read from \autoref{alg:vdfdeposit}.}
		$BH,Target$ = readState($``bpoadd"~||~blockheight$)\\
		$trigger\_flag$=VerifyTrigger($y, \pi, Target,BH, params\_struct$)\tcp{ Call Algorithm~\autoref{alg:trigger}.}
		\If{($stored\_blockheight$ == $blockheight$) and $trigger\_flag$ == True}
		{
			
			$total\_tally = \sum\limits_{no=1}^{X\cdot Y} local\_tally(booth_{no})$\\
			
		}
		\KwRet $total\_{tally}$		
	}
	\smallskip
	
	\SetKwProg{Fn}{Function}{:}{}
	\Fn{\FRevote{$msg=WA_{ij}, P_i\_signed\_msg$}}{
		$P_i\_{pubkey}$= readState($params\_struct.P_i\_public\_key$) \tcp{get $P_i$ public key}
		
		\If{VerifySig$(msg,P_i\_signed\_msg, P_i\_{pubkey})=True$}
		{
			writeState($WA_{ij},``pending"$) \tcp{set wallet address to pending verification by $EA$.}
		}
		\tcp{Next, $EA$  calls $Registration()$, where it sets a new wallet address of $P_i$ to valid.}
		\tcp{Further, $P_i$ calls $Voting()$ to re-vote using the new wallet address.}
		
	}

	\caption{Always on Voting Framework.}	\label{alg:framework}
\end{algorithm*}

\subsection{Security Analysis}\label{sec:aov:analysis}

\subsubsection{Mining Adversary}
\label{sec:miningadv}
	The goal of  $Adv_{min}$ is to find a valid nonce $s$ that solves the Bitcoin puzzle such that $b$ in \autoref{eqn_2} is 0. When these two conditions are met, and the new epoch is about to start, the validator contract triggers the tally computation of votes.
	We set the difficulty for the benign VDF prover (with commercial hardware) to take 100 minutes\footnote{ We consider $A_{max}=10$, i.e., what is solved by a benign VDF prover in 10 units of time, while in the case of $Adv_{min}$ it is in 1 unit.} to solve VDF($BH$). Based on $A_{max}$ limit, we assume $Adv_{min}$ to take at least 10 minutes to solve the VDF. As a result, $Adv_{min}$ is restricted to a maximum of 1 try (considering 10 minutes as an average Bitcoin block creation time), excluding the Proof-of-Work required to solve the Bitcoin mining puzzle.  
	However, since a Bitcoin block header is generated on average once every 10 minutes and the benign VDF prover is occupied for 100 minutes, the question is -- how many VDF provers are required to prevent the block headers from queuing up?
	We can see in \autoref{tab:vdftimeline} that VDF prover 1 runs a task for time 0-100 minutes, and she picks up the next task to run for time 100-199 minutes. Similarly, all other provers pick up the next task after completing the present one.
	Hence, 10 VDF provers are sufficient to prevent block headers from queuing up because $A_{max}=10$.
	On the other hand, a benign VDF prover might reduce $A_{max}$ of VDF computation by using specialized hardware instead of commercial hardware (depending on the cost-to-benefit ratio).  
	However, we emphasize that the VDF can be computed only after solving the PoW mining puzzle, which is prohibitively expensive. Moreover, the puzzle difficulty increases proportionally to the mining power of the Bitcoin network. 
	Hence, the proposed serial combination of solving the Bitcoin mining puzzle followed by the computation of VDF output improves the aggregate security against $Adv_{min}$ from choosing a favorable nonce.
	The estimated requirement of $A_{max}=10$ might be further increased as more studies to efficiently solve VDFs on ASICs are carried out.
	However, if $A_{max}$ will increase in the future, our solution can cope with it by employing more VDF provers. 
\begin{table}[t]
	\centering	
	\scriptsize
	\begin{tabular}{ p{1cm} p{1.9cm} | p{1cm} p{1.9cm}  }

		\toprule
		VDF Prover& Time (minutes) & VDF Prover&Time (minutes)\\
		\midrule
		1   & 0-100    &  1   & 100-199\\
		2   & 10-110  &  2   & 110-209\\
		3   & 20-120  &  3   & 120-219\\
		4   & 30-130  &  4   & 130-229\\
		5   & 40-140  &  5   & 140-239\\
		6   & 50-150  &  6   & 150-249\\
		7   & 60-160  &  7   & 160-259\\
		8   & 70-170  &  8   & 170-269\\
		9   & 80-180  &  9   & 180-279\\
		10  & 90-190  & 10   & 190-289\\    
		\bottomrule
	\end{tabular}
	\vspace{0.2cm}
	\caption{Scheduling 10 VDF provers without queuing. Note the VDF computations on a VDF prover machine are not parallelized. It is the scheduling alone that is in parallel. The start time is based on the job arrival time at the VDF prover, where it will run for 100 minutes. Once completed, it is ready to take on the next job. In column 2, the start times are 10 minutes apart and correspond to the average BTC interblock (job) arrival time. The largest idle time in column 1 is for VDF Prover 10 at 90 minutes, waiting for the job to start. Beyond this, all VDF prover machines are continuously occupied since a new job is available to start immediately after the current job ends. }	\label{tab:vdftimeline}
\end{table}

	\subsubsection{Implications of VDF Prover Synchronization and Optimizing Frequency of Supplied Block Headers}
	\label{ssec:bh}
	Several VDF provers are  synchronised to supply the VDF proofs to the validator contract in sequence.
	However, there are no adverse effects when the proofs are generated and supplied out of sequence.
	The validator smart contract stores the latest \textit{block height} for which the VDF proof was last  accepted. It only allows  proof verification for stored \textit{block height+1} on the contract and any out-of-order proofs have to be resent. 
	Once the  order  is corrected, a handful of VDF proofs may appear in quick succession at the validator contract. 
	However, the tally for the interval is only triggered when $VC_{output}$ in \autoref{eqn_4} is \textit{True}.

	In terms of gas consumption, it can be costly to process every single Bitcoin block header (supplied to the VDF prover and the validator contract by the BPO).
	We suggest optimizing this by choosing a coarser time granularity of the block header supply, independent of the Bitcoin block interval (e.g., every x-th block).
	We modify the example from \autoref{ssec:interactions} by considering the processing of every 100$^{th}$ Bitcoin block header.\footnote{Note that this would need another condition to be met, i.e., the block height in BH (mod 100) should be equal to 0.}
	$TotalTime$ = 4 years = $525600\cdot 4$ minutes and the total number of intervals $ft=8$.
	Then, $IntervalTime = (525600\cdot 4)/(8) = 262800$ minutes and $BHsInInterval = 262800/$  $(10\cdot 100) = 262.8$. 
	On average, the oracle will send 262.8 block headers ($BH$ values) to the validator contract within 182.5 days instead of the 26280 block headers required in the original example.
	In this case, we only need 1 VDF prover instead of 10, and it provides similar security guarantees as before.

\begin{figure*}[t]
	\centering
	\subfloat[][Booth  with 30 participants, winning \ensuremath{p=0.9}.]{
		\includegraphics[width=0.48\textwidth]{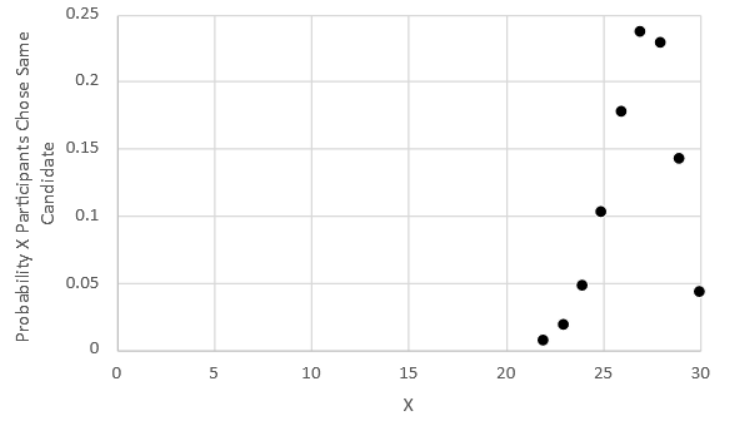}
	}
	\subfloat[][Booth  with 100 participants, winning \ensuremath{p=0.9}.]{		
		\includegraphics[width=0.48\textwidth]{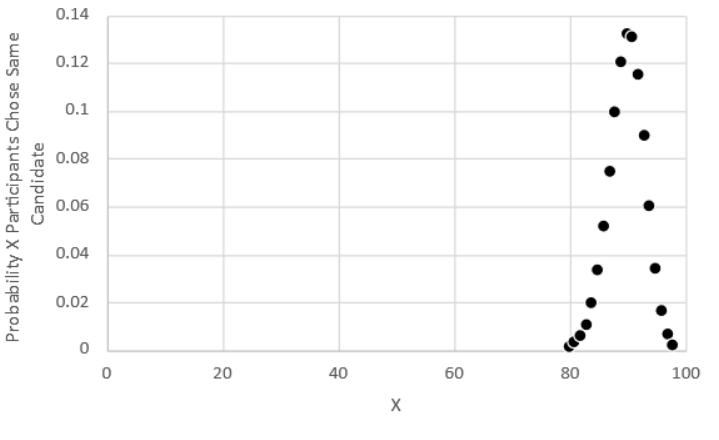}		
	}
	
	\vspace{0.3cm}
	\caption{Binomial probability distribution function of X booth participants voting for their favorite candidate whose winning probability is $p$. 
	}
	\label{fig:binomialpdf}
\end{figure*}

\subsubsection{Privacy Implications of Booth Sharding}
\label{ssec:shardinganalysis}

In this section, we look at the implications of booth sharding. 
Further, we recommend booth sizes to protect participant votes from being revealed to the $EA$ (in the case when all participants in a booth voted for the same candidate).
The map between a participant's wallet and her vote transaction is broken for all parties (including $Adv_{net}$) but the $EA$ (see \autoref{ssec:Anonid}). 
The $EA$ is aware of the participants' current wallet address used for voting, and hence it has an advantage over $Adv_{net}$ in statistical inference attacks.
Certain scenarios revealing vote choice are possible under some circumstances. 
In particular, if all participants in a booth voted for the same candidate or the winning probability of one candidate is much higher than the others.  
%
%

To demonstrate it, in \autoref{fig:binomialpdf} we provide the probability that $X$ participants in a booth voted for the same candidate, depending on the candidate winning probability $p$. 
%
\autoref{fig:binomialpdf}(a) represents a booth with 30 participants and candidate winning probability $p=0.9$. The probability that all participants voted for the same candidate is 
$P(30, X=30,p=0.9)\stackrel{\sim}{=} 0.0423$.
\autoref{fig:binomialpdf}(b) represents a booth with  100 participants and candidate winning probability $p=0.9$. The probability that all participants voted for the same candidate is 
$P(100, X = 100,p=0.9)\stackrel{\sim}{=} 0.00003$, which demonstrates that number of participants in a booth, influences $p$ in indirect proportion, favoring the booths with higher sizes.

Even though the probabilities in the booth with 100 participants are very low, this may not be sufficient depending on the total number of participants.
For example, consider the elections with 1 million participants.
First, let number of participants in a booth be 30 and the number of booths $M$ $ = \lceil(1~000~000/30)\rceil =33334$.
The number of booths where all participants likely voted for the same candidate is $0.0423\cdot33334 \stackrel{\sim}{=} 1410$.
For booths with 100 participants each and $M = \lceil(1~000~000/100)\rceil = 10000$, the number of booths where all participants likely voted for the same candidate is reduced to $0.00003\cdot10000 \stackrel{\sim}{=} 0.3$. 
Therefore, a suitable number of participants per booth should be determined based on the extreme estimations of the tally results and the total number of voters. 

\subsubsection{Randomness of Bitcoin Nonces \& AoV Entropy}\label{ssec:rndnonce}
We decided to utilize a single public source of randomness instead of a distributed randomness due to the low computation cost and synchronization complexity.
Bonneau et al.~\cite{BonneauCG15} showed that if the underlying hash function used to solve the Bitcoin PoW puzzle is secure, then each block in the canonical chain
has a computational min-entropy of at least $d$ bits, representing the mining difficulty.
I.e., $d$ consecutive 0 bits must appear in the hash of the block header.\footnote{At the time of writing, $d\approx$ 76.}
Hence, $\lfloor \frac{d}{2} \rfloor$ near-uniform bits can be securely extracted.
Nevertheless, empirical evaluation has shown that Bitcoin nonces have visible white spaces (non-uniformity) in its scatter-plot~\cite{Bitmex2019}.
A possible explanation is that some miners are presetting some of the bits in the 32-bit $nonce$ field and using the $extraNonce$ to solve the PoW puzzle.
We use the entire block header as the initial source of entropy instead of the 32-bit nonce alone to avoid such biases.
To reduce the probability of $Adv_{min}$ biasing the solution in her favor, the block header is passed through a verifiable delay function (see \autoref{eqn_vdfprover}). 
The output of VDF is hashed (see \autoref{eqn_1}) to consolidate the entropy.

\section{Contributing Papers}\label{sec:evoting-papers}
The papers that contributed to this research direction are enumerated in the following, while highlighted papers are attached to this thesis in their original form.

\begingroup
\let\clearpage\relax

\renewcommand\bibname{}
\vspace{-7em}

\endgroup

\chapter{Secure Logging}\label{chapter:logging}

In this chapter, we present our contributions to the area of secure logging using blockchain, which belongs to the application layer of our security reference architecture (see \autoref{chapter:sra}) and its category of data provenance.
In particular, this chapter is focused on the security, privacy, and scalability, of blockchain-based secure logging, which we further extend to the use case of a global Central Bank Digital Currency (CBDC).
Note that we base this chapter on not yet published papers \cite{homoliak2020aquareum} \cite{homoliak2023cbdc} (see also \autoref{sec:logging-papers}), and therefore we consider this chapter as additional to the thesis.

\renewcommand{\name}{Aquareum\xspace}
First, we present \name \cite{homoliak2020aquareum}, a novel framework for centralized ledgers
removing their main limitations.  By combining a trusted execution
environment with a public blockchain platform, \name provides publicly
verifiable, non-equivocating, censorship-evident, private, and high-performance
ledgers.  
\name ledgers are integrated with a Turing-complete virtual machine,
allowing arbitrary transaction processing logics, including tokens or
client-specified smart contracts (see details in \autoref{sec:logging-aquareum}).  

Next, we aim at interoperability for the environment of Central Bank Digital Currency (CBDC) containing multiple instances of centralized ledgers (based on Aquareum) that either represent central banks of more countries or retail banks (with a single central bank) of a single country.
In detail, we present CBDC-AquaSphere \cite{homoliak2023cbdc} a practical blockchain interoperability protocol that integrates important features of Digital Euro Association manifesto \cite{cbdc-manifesto}, such as strong value proposition for the end users, the highest degree of privacy, and interoperability. 
On top of the above-mentioned features, our work also provides proof-of-censorship and atomicity (see details in \autoref{sec:logging-cbdc}).

\section{Aquareum}\label{sec:logging-aquareum}
We depict overview of components that \name consists of in \autoref{fig:overview-details}. 
In the following, we elaborate on the system model and principles of our approach.
For background related to integrity preserving data structures and trusted computing, we refer the reader to 
\autoref{sec:background:integrity-structures} and \autoref{sec:background:tee}.

\subsubsection{Notation}
By $\{msg\}_\mathbb{U}$, we denote the message $msg$ digitally signed by $\mathbb{U}$, and by $msg.\sigma$ we refer to a signature;
$h(.)$ stands for a cryptographic hash function;
$\|$ is the string concatenation;
$\%$ represents modulo operation over integers; 
$\Sigma_{p}. \{KeyGen, Verify, Sign\}$ represents a signature (and encryption) scheme of the platform $p$, where $p \in \{pb, tee\}$ (i.e., public blockchain platform and trusted execution environment platform);
and $SK_\mathbb{U}^{p}$, $PK_\mathbb{U}^{p}$ is the private/public key-pair of $\mathbb{U}$, under $\Sigma_{p}$.
Then, we use $\pi^{s}$ for denoting proofs of various data structures $s \in \{mk, mem, inc\}$: 
$\pi^{mk}$ denotes the inclusion proof in the Merkle tree, 
$\pi^{mem}$ and $\pi^{inc}$ denote the membership proof and the incremental proof in the history tree, respectively.

\begin{figure*}[t]
	\begin{center}
		\includegraphics[width=0.95\textwidth]{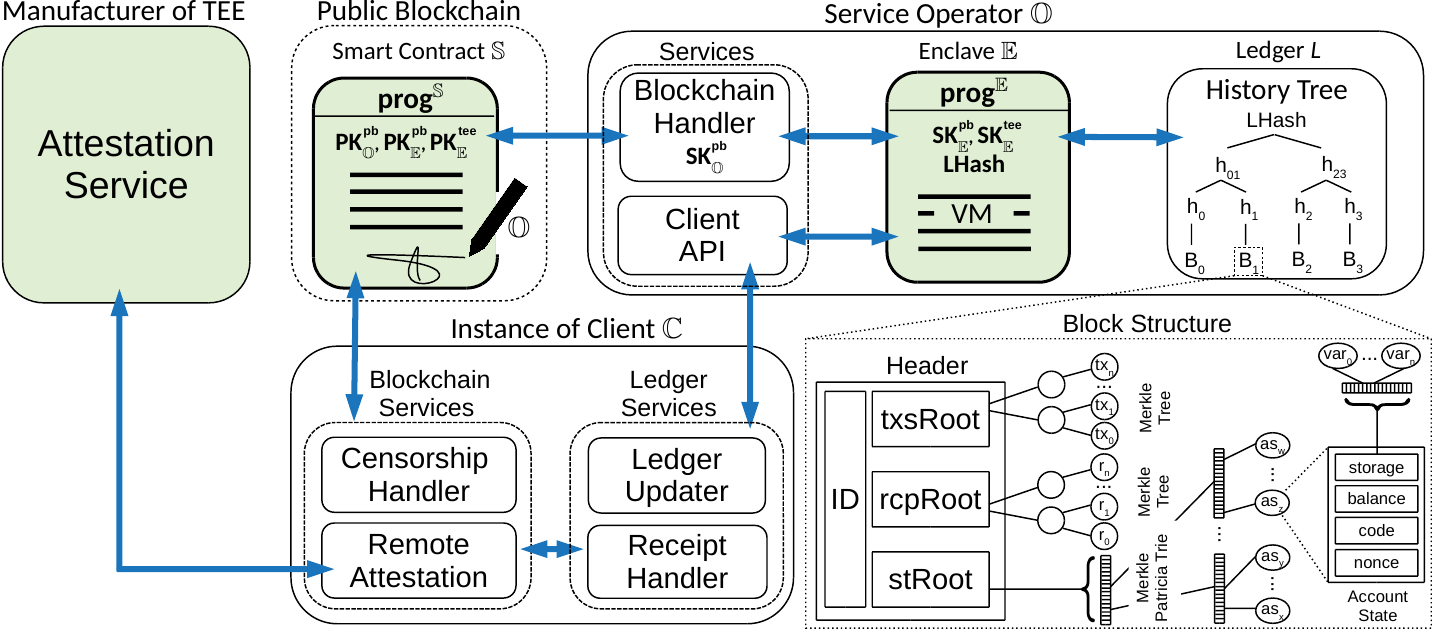} 
		\caption{\name components. Trusted components are depicted in green.}\label{fig:overview-details}		
	\end{center}	
\end{figure*}

\subsection{System Model} \label{sec:overview}
In \name, an \textit{operator} is an entity that maintains and manages a ledger containing chronologically sorted transactions. 
\textit{Clients} interact with the ledger by sending requests, such as queries and transactions to be handled.  
We assume that all involved parties can interact with a blockchain platform supporting smart contracts (e.g., Ethereum). 
Next, we assume that the operator has access to a TEE platform (e.g., Intel SGX).
Finally, we assume that the operator can be malicious and her goals are as follows:
\begin{compactitem}
	\item \textbf{Violation of the ledger's integrity} by creating its
	internal inconsistent state -- e.g., via inserting two conflicting
	transactions or by removing/modifying existing transactions.
	\item \textbf{Equivocation of the ledger} by presenting at least two inconsistent views of the ledger to (at least) two distinct clients who would accept such views as valid.
	\item \textbf{Censorship of client queries} without leaving any audit trails evincing the censorship occurrence.
\end{compactitem}
Next, we assume that the adversary cannot undermine the cryptographic primitives used, 
the underlying blockchain platform, and the TEE platform deployed.

\subsubsection{Desired Properties}\label{sec:desired-properties}
We target the following security properties for \name ledgers:
\begin{compactitem}
	\item[\textit{\textbf{Verifiability:}}] 
	clients should be able to obtain easily verifiable evidence that the
	ledger they interact with is internally \textit{correct} and \textit{consistent}. 
	In particular, it means that none of the previously inserted transaction was neither modified nor deleted, and there are no conflicting transactions.  
	Traditionally, the verifiability is achieved by replicating the ledger (like in
	blockchains) or by trusted auditors who download the full copy of the ledger and sequentially validate it. 
	However, this property should be provided even if the
	operator does not wish to share the full database with third parties.        
	Besides, the system should be \textit{self-auditable}, such that any client can easily verify (and prove to others) that some transaction is included in the ledger, and she can prove the state of the ledger at the given point in time.
	
	\item[\textit{\textbf{Non-Equivocation:}}]
	the system should protect from forking attacks and thus guarantee that no
	concurrent (equivocating) versions of the ledger exist at any point in
	time.  The consequence of this property is that whenever a client
	interacts with the ledger or relies on the ledger's logged artifacts, the client
	is ensured that other clients have ledger views consistent with her view.

	\item[\textit{\textbf{Censorship Evidence:}}] 
	preventing censorship in a centralized system is particularly
	challenging, as its operator can simply pretend unavailability in order to censor undesired queries or transactions.  
	Therefore, this property requires that  whenever the operator censors client's requests, the client can do a resolution of an arbitrary (i.e., censored) request publicly.  
	We emphasize that proving censorship is
	a non-trivial task since it is difficult to distinguish ``pretended''
	unavailability from ``genuine'' one. 
	Censorship evidence enables clients to
	enforce potential service-level agreements with the operator, either by
	a legal dispute or by automated rules encoded in smart contracts.
\end{compactitem}
Besides those properties, we intend the system to provide \textit{privacy}
(keeping the clients' communication confidential), \textit{efficiency} and
\textit{high performance}, not introducing any significant overhead,
\textit{deployability} with today's technologies and infrastructures, as well as
\textit{flexibility} enabling various applications and scenarios.

\begin{figure}[t]
	\centering
	\includegraphics[width=0.55\textwidth]{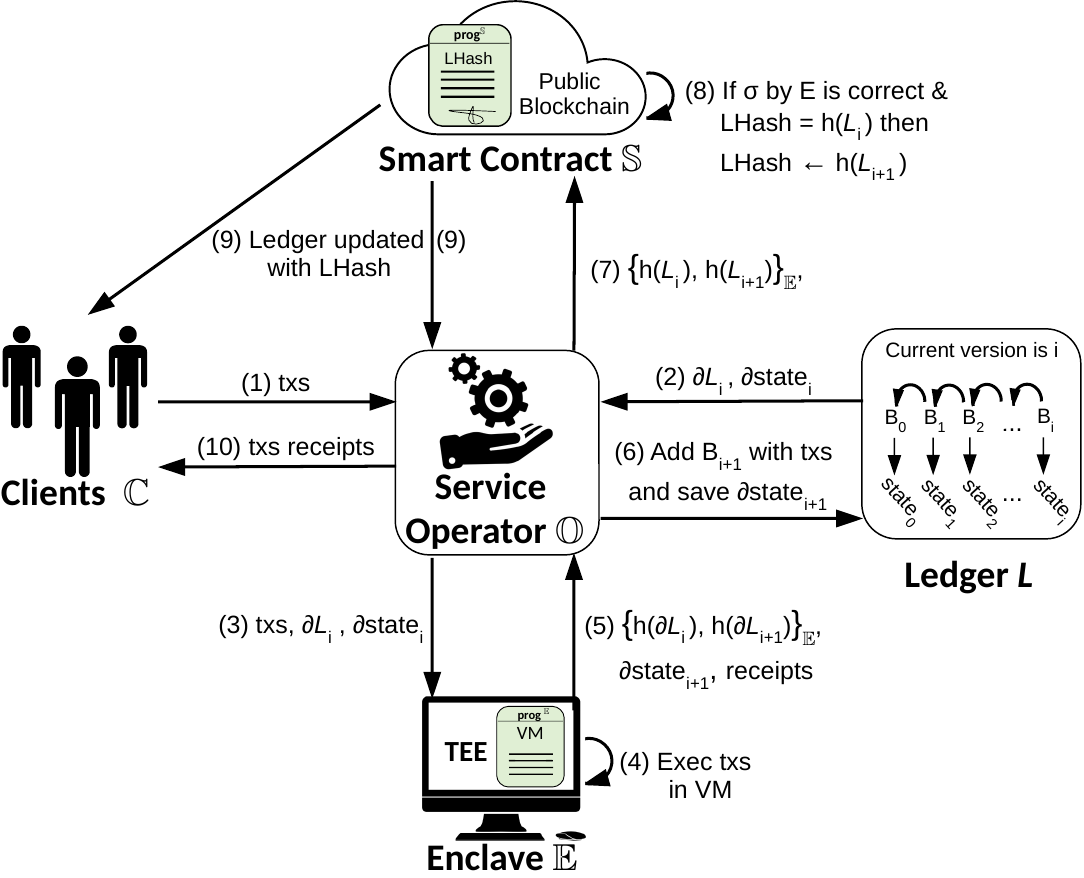} 
	\caption{Operation procedure of \name ledger.}
	\label{fig:overview-normal-op}
	
\end{figure}

\subsection{High-Level Overview}\label{sec:high-overview}

\name ledger is initialized by an operator ($\mathbb{O}$) who creates an internal ledger ($L$) that will store all transactions processed and the state that they render.  
Initially, $L$ contains an empty transaction set and a null state.
During the initialization, $\mathbb{O}$ creates a TEE enclave ($\mathbb{E}$) whose role is to execute updates of $L$ and verify consistency of $L$ before each update.
Initialization of $\mathbb{E}$ involves the generation of two public private key pairs -- one for the signature scheme of TEE (i.e.,  $PK_{\mathbb{E}}^{tee}, SK_{\mathbb{E}}^{tee}$) and one for the signature scheme of the public blockchain (i.e., $PK_{\mathbb{E}}^{pb}, SK_{\mathbb{E}}^{pb}$).\footnote{Note that neither of the private keys ever leaves $\mathbb{E}$.}
The code of $\mathbb{E}$ is public (see \autoref{alg:enclave-VM} and \autoref{alg:enclave-VM-cens}), and it can be remotely attested with the TEE infrastructure by any client.

\begin{algorithm}[!h] 
	\caption{The program $prog^{\mathbb{S}}$ of the smart contract $\mathbb{S}$ }\label{alg:log-smart-contract}
	\scriptsize
	
	\SetKwProg{func}{function}{}{}

	\smallskip
	$\triangleright$ \textsc{Declaration of types and constants:}\\		
	\hspace{1em} \textbf{CensInfo} \{ $etx, equery, status, edata$ \},  \\
	\hspace{1em} $msg$: a current transaction that called $\mathbb{S}$,  \\
	
	\smallskip
	$\triangleright$ \textsc{Declaration of functions:}
	
	\func{$Init$($PK_{\mathbb{E}}^{pb}, PK_{\mathbb{E}}^{tee}, PK_{\mathbb{O}} $) \textbf{public} }{
		$PK_{\mathbb{E}}^{tee}[].add(PK_{\mathbb{E}}^{tee})$; \Comment{PK of enclave $\mathbb{E}$ under $\Sigma_{tee}$.} \\ 
		$PK_{\mathbb{E}}^{pb}[].add(PK_{\mathbb{E}}^{pb})$; \Comment{PK of enclave $\mathbb{E}$ under $\Sigma_{pb}$.} \\
		$PK_{\mathbb{O}}^{pb} \leftarrow PK_{\mathbb{O}}$; \Comment{PK of operator $\mathbb{O}$ under $\Sigma_{pb}$.} \\
		$LRoot_{pb} \leftarrow \perp$; \Comment{The most recent root hash of $L$ synchronized with $\mathbb{S}$.} \\ 
		$censReqs \leftarrow []$; \Comment{Request that $\mathbb{C}$s wants to resolve publicly.} \\
	}
	
	\func{$PostLRoot$($root_A, root_B, \sigma$) \textbf{public} }{
		\Comment{Verify whether a state transition was made within $\mathbb{E}$. \hfill} \\
		\textbf{assert} $\Sigma_{pb}.verify((\sigma, PK_{\mathbb{E}}^{pb}[\text{-}1]), (root_A, root_B))$;  \\			
		
		\Comment{Verify whether a version transition extends the last one. \hfill} \\
		\If{$LRoot_{pb} = root_A$}{
			$LRoot_{pb} \leftarrow root_{B}$; \Comment{Do a version transition of L.} \\
		} 				
	}
	
	\func{$ReplaceEnc$($PKN_{\mathbb{E}}^{pb}, PKN_{\mathbb{E}}^{tee}, r_{A}, r_{B}, \sigma, \sigma_{msg}$) \textbf{public} }{
		\Comment{Called	 by $\mathbb{O}$ in the case of enclave failure.\hfill}\\
		
		\textbf{assert} $\Sigma_{pb}.verify((\sigma_{msg}, PK_{\mathbb{O}}^{pb}), msg)$;  \Comment{Avoiding MiTM attack.} \\
		$PostLRoot(r_{A}, r_{B}, \sigma)$ ; \Comment{Do a version transition.} \\

		$PK_{\mathbb{E}}^{tee}.add(PKN_{\mathbb{E}}^{tee})$; \Comment{Upon change, $\mathbb{C}s$ make remote attestation.} \\ 
		$PK_{\mathbb{E}}^{pb}.add(PKN_{\mathbb{E}}^{pb})$; \\	
	}
	
	\func{$SubmitCensTx$($etx, \sigma_{msg}$) \textbf{public} }{
		\Comment{Called by $\mathbb{C}$ in the case her TX is censored.\hfill \hfill \hfill}\\		
		accessControl($\sigma_{msg}, msg.PK_{\mathbb{C}}^{pb}$); \\

		$censReqs$.add(\textbf{CensInfo}($etx, \perp, \perp, \perp$)); \\			
		
	}
	\smallskip
	
	\func{$ResolveCensTx(idx_{req}, status, \sigma$) \textbf{public} }{
		\Comment{Called by $\mathbb{O}$ to prove that $\mathbb{C}$'s TX was processed.\hfill \hfill \hfill}\\
		
		\textbf{assert} $idx_{req} < |censReqs|$;\\
		$r \leftarrow censReqs[idx_{req}]$; \\
		
		\textbf{assert} $\Sigma_{pb}.verify((\sigma, PK_{\mathbb{E}}^{pb}[\text{-}1]), ~(h(r.etx), status))$; 	\\ 
		$r.status \leftarrow status$;\\
	}	
	
	\func{$SubmitCensQry$($equery, \sigma_{msg}$) \textbf{public} }{
		\Comment{Called by $\mathbb{C}$ in the case its read query is censored.\hfill \hfill}\\		
		accessControl($\sigma_{msg}, msg.PK_{\mathbb{C}}^{pb}$); \\

		$censReqs$.add(\textbf{CensInfo}($\perp, equery, \perp, \perp$)); \\		
		
	}
	\smallskip
	
	\func{$ResolveCensQry(idx_{req}, status, edata, \sigma$) \textbf{public} }{
		\Comment{Called by $\mathbb{O}$ as a response to the $\mathbb{C}$'s censored  read query.\hfill \hfill \hfill}\\		
		\textbf{assert} $idx_{req} < |censReqs|$;\\
		$r \leftarrow censReqs[idx_{req}]$; \\
		
		\textbf{assert} $\Sigma_{pb}.verify((\sigma, PK_{\mathbb{E}}^{pb}[\text{-}1]), (h(r.equery), status, h(edata)))$; 	\\ 
		$r.\{edata \leftarrow edata, status \leftarrow status\}$;\\		
	}

	\vspace{-0.1cm}
\end{algorithm}

\begin{algorithm}[t] 
	\scriptsize
	\SetKwProg{func}{function}{}{}
	
	$\triangleright$ \textsc{Declaration of types and functions:}\\
	\hspace{1em} \textbf{Header} \{ $ID$, $txsRoot$, $rcpRoot$, $stRoot$\}; \\
	
	\hspace{1em} $\#(r) \rightarrow v$: denotes the version $v$ of $L$ having  $LRoot$ $=$ $r$,\\		
	
	$\triangleright$ \textsc{Variables of TEE:} \\
	\hspace{1em} $SK_{\mathbb{E}}^{tee}, PK_{\mathbb{E}}^{tee}$: keypair of $\mathbb{E}$ under $\Sigma_{tee}$,\\	
	
	\hspace{1em} $SK_{\mathbb{E}}^{pb}, PK_{\mathbb{E}}^{pb}$: keypair of $\mathbb{E}$ under $\Sigma_{pb}$,\\
	\hspace{1em} $hdr_{last} \leftarrow \perp$: the last header created by $\mathbb{E}$,\\		
	\hspace{1em} $LRoot_{pb} \leftarrow \perp$: the last root of $L$ flushed to PB,\\
	\hspace{1em} $LRoot_{cur} \leftarrow \perp$: the root of $L \cup blks_p$ (not flushed to PB),\\	
	\hspace{1em} $ID_{cur} \leftarrow 1$: the current version of $L$ (not flushed to PB),\\

	\smallskip
	$\triangleright$ \textsc{Declaration of functions:}
	
	\func{$Init$() \textbf{public}} {
		($SK_{\mathbb{E}}^{pb}$, $PK_{\mathbb{E}}^{pb}$)$ \leftarrow\Sigma_{pb}.Keygen()$;\\ 
		($SK_{\mathbb{E}}^{tee}$, $PK_{\mathbb{E}}^{tee}$)$ \leftarrow\Sigma_{tee}.Keygen()$;\\
		
		\textbf{Output}($PK_{\mathbb{E}}^{tee}, PK_{\mathbb{E}}^{pb}$); \\
	}					
	\smallskip
	
	\func{$Exec$($txs[], \partial st^{old}, ~\pi^{inc}_{next}, LRoot_{tmp}$) \textbf{public}}{
		
		\textbf{assert} $\partial st^{old}.root = hdr_{last}.stRoot$; \\
		
		$\partial st^{new}, rcps, txs_{er} \leftarrow ~~processTxs(txs, ~\partial st^{old}, ~\pi^{inc}_{next}, ~LRoot_{tmp})$;\\
		
		$\sigma \leftarrow \Sigma_{pb}.sign(SK_{\mathbb{E}}^{pb}, (LRoot_{pb}, LRoot_{cur}))$; \\
		\textbf{Output}($LRoot_{pb}, LRoot_{cur}, \partial st^{new}, hdr_{last}, rcps$, $txs_{er}$, $\sigma$); \\	
		
	}
	\smallskip
	
	\func{$Flush$() \textbf{public}}{									
		$LRoot_{pb} \leftarrow LRoot_{cur}$; \Comment{Shift the version of $L$ synchronized with PB.} \\					
	}
	\smallskip
	
	\func{$processTxs$($txs[], \partial st^{old}, ~\pi^{inc}_{next}, ~LRoot_{tmp}$) \textbf{private}}{
		
		$\partial st^{new}, rcps[], txs_{er} \leftarrow$ runVM($txs$, $\partial st^{old}$); \Comment{Run $txs$ in VM.} \\
		$txs \leftarrow txs \setminus txs_{er}$; \Comment{Filter out parsing errors/wrong signatures.} \\
		
		$hdr \leftarrow ~~\mathbf{Header}(ID_{cur}, MkRoot(txs), MkRoot(rcps), \partial st^{new}.root))$;\\ 
		$hdr_{last} \leftarrow  hdr$; \\
		$ID_{cur} \leftarrow  ID_{cur} + 1$; \\		
		$LRoot_{cur} \leftarrow newLRoot(hdr, ~\pi^{inc}_{next}, ~LRoot_{tmp})$; \\
		\textbf{return} $\partial st^{new}$, $rcps$, $txs_{er}$; \\
	}
	\smallskip
	
	\func{$newLRoot(hdr, ~\pi^{inc}_{next}, ~LRoot_{tmp})$ \textbf{private}}{
		\Comment{A modification of the incr. proof. template to contain $hdr$ \hfill}\\
		\textbf{assert} $\#(LRoot_{cur}) + 1 = \#(LRoot_{tmp})$; \Comment{1 block $\Delta$.} \\
		\textbf{assert} $\pi^{inc}_{next}.Verify(LRoot_{cur},~LRoot_{tmp})$;\\
		
		$\pi^{inc}_{next}[\text{-}1] \leftarrow h(hdr)$; \\
		
		\textbf{return} $deriveNewRoot(\pi^{inc}_{next})$; \\
	}
	
	\caption{The program $prog^{\mathbb{E}}$ of enclave $\mathbb{E}$}
	\label{alg:enclave-VM}
	\vspace{-0.1cm}
\end{algorithm}

Next, $\mathbb{O}$ generates her public-private key pair (i.e., $PK_{\mathbb{O}}, SK_{\mathbb{O}}$) and deploys a special smart contract $\mathbb{S}$ (see \autoref{alg:log-smart-contract}) initialized with the empty $L$ represented by its hash $LHash$, the operator's public key $PK_{\mathbb{O}}$, and both enclave public keys $PK_{\mathbb{E}}^{tee}$ and $PK_{\mathbb{E}}^{pb}$.
After the deployment of $\mathbb{S}$, an instance of $L$ is uniquely identified by the address of~$\mathbb{S}$.
A client ($\mathbb{C}$) wishing to interact with $L$ obtains the address of $\mathbb{S}$ and performs the remote attestation of $\mathbb{E}$ using the $PK_{\mathbb{E}}^{tee}$.

\begin{algorithm}[!h] 
	\scriptsize
	\SetKwProg{func}{function}{}{}

	\func{$Decrypt(edata)$ \textbf{public}}{
		$data \leftarrow \Sigma_{pb}.Decrypt(SK_\mathbb{E}^{pb}, edata)$; \\
		
		\textbf{Output}($data$); \\
	}
	\smallskip
	
	\func{$SignTx(etx, \pi^{mk}_{tx}, hdr, \pi^{mem}_{hdr})$ \textbf{public}}{
		\Comment{Resolution of a censored write tx. \hfill \hfill \hfill \hfill} \\
		
		$tx \leftarrow \Sigma_{pb}.Decrypt(SK_\mathbb{E}^{pb}, etx)$; \\
		\If{$ERROR = parse(tx)$}{					
			$status$ = PARSING\_ERROR;\\
			
		} \ElseIf{$ERROR = \Sigma_{pb}.Verify((tx.\sigma, tx.PK_{\mathbb{C}}^{pb}), tx) $}{
			$status$ = SIGNATURE\_ERROR;\\	
			
		} \Else{		
			\Comment{Verify proofs binding TX to header and header to $L$.\hfill \hfill}\\
			\textbf{assert} $\pi_{tx}^{mk}$.Verify($tx, hdr.txsRoot$);\\
			
			\textbf{assert} $\pi_{hdr}^{mem}$.Verify($hdr.ID, hdr, LRoot_{pb}$);\\					
			$status \leftarrow$  INCLUDED;\\
		}

		\Comment{TX was processed, so $\mathbb{E}$ can issue a proof.}\hfill \hfill \hfill \\		
		$\sigma \leftarrow \Sigma_{pb}.sign(SK_{\mathbb{E}}^{pb},~ (h(etx), status))$;  \\
		\textbf{Output}($\sigma$, $status$); 				
	}		
	\smallskip	
	
	\func{$SignQryTx(equery, blk, \pi^{mem}_{hdr})$ \textbf{public}}{
		\Comment{Resolution of a censored read tx query. \hfill \hfill \hfill \hfill} \\
		
		$\ldots, id_{tx}, id_{blk}, PK_{\mathbb{C}}^{pb} \leftarrow parse(Decrypt(equery))$; \\		
		\If{$id_{blk} > \#(LRoot_{pb})$}{
			$status \leftarrow$  BLK\_NOT\_FOUND, $~~edata \leftarrow$  $\perp$;\\
		}\Else{  
			
			\textbf{assert} $\pi_{hdr}^{mem}$.Verify($blk.hdr.ID, blk.hdr, LRoot_{pb}$);\\
			
			\textbf{assert} VerifyBlock(blk); \Comment{Full check of block consistency.} \\				
			
			$tx \leftarrow$ findTx$(id_{tx}, blk.txs)$; \\
			\If{$\perp ~=~ tx$}{
				$status \leftarrow$  TX\_NOT\_FOUND, $~~edata \leftarrow$  $\perp$;\\
			}\Else{
				$status \leftarrow$  OK, $~~edata \leftarrow  \Sigma_{pb}.Encrypt(PK_\mathbb{C}^{pb}, tx)$;\\
			}
		}
		
		$\sigma \leftarrow \Sigma_{pb}.sign(SK_{\mathbb{E}}^{pb},~ (h(equery), status, edata))$;  \\	
		\textbf{Output}($\sigma$, $status$, $edata$); 	
	}		
	
	\func{$SignQryAS(equery, as, \pi^{mpt}_{as})$ \textbf{public}}{
		\Comment{Resolution of a censored read account state query. \hfill \hfill \hfill \hfill} \\
		
		$\ldots, id_{as}, PK_{\mathbb{C}}^{pb} \leftarrow parse(Decrypt(equery))$; \\				
		\If{$\perp ~=~ as$}{
			\textbf{assert} $\pi^{mpt}_{as}.VerifyNeg(id_{as}, LRoot_{cur})$; \\
			$status \leftarrow$  NOT\_FOUND, $~~edata \leftarrow$  $\perp$;\\
		}\Else{
			\textbf{assert} $\pi^{mpt}_{as}.Verify(id_{as}, LRoot_{cur})$; \\
			$status \leftarrow$  OK, $~~edata \leftarrow  \Sigma_{pb}.Encrypt(PK_\mathbb{C}^{pb}, as)$;\\
		}
		
		$\sigma \leftarrow \Sigma_{pb}.sign(SK_{\mathbb{E}}^{pb},~ (h(equery), status, h(edata)))$;  \\	
		\textbf{Output}($\sigma$, $status$, $edata$); 	
	}		
	
	\caption{Censorship resolution in $\mathbb{E}$ (part of $prog^{\mathbb{E}})$.}
	\label{alg:enclave-VM-cens}
\end{algorithm}

Whenever $\mathbb{C}$ sends a transaction to $\mathbb{O}$ (see \autoref{fig:overview-normal-op}),  $\mathbb{E}$ validates whether it is authentic and non-conflicting; and if so, $\mathbb{E}$ updates $L$ with the transaction, yielding the new version of $L$. 
The $\mathbb{C}$ is responded with a \textit{receipt} and \textit{``a version transition of $L$''}, both signed  by $\mathbb{E}$, which prove that the transaction was processed successfully and is included in the new version of $L$. 
For efficiency reasons, transactions are processed in batches that are referred to as \textit{blocks}.  
In detail, $\mathbb{O}$  starts the update procedure of $L$ (see \autoref{fig:overview-normal-op}) as follows:
\begin{compactenum}[a)]
	\item $\mathbb{O}$ sends all received transactions since the
	previous update to $\mathbb{E}$, together with the current partial state of $L$ and a small subset of $L$'s data $\partial L_i$, such that $h(\partial L_i) = h(L_i)$, which is required to validate $L$'s consistency and perform its incremental extension.
	
	\item $\mathbb{E}$ validates and executes the transactions in its virtual machine, updates the current partial state and partial data of $L$, and finally creates a blockchain transaction\footnote{Note that $\{h(\partial L_i), h(\partial L_{i+1})\}_\mathbb{E} = \{h(L_i), h(L_{i+1})\}_\mathbb{E}$ } $\{h(\partial L_i),$ $h(\partial L_{i+1})\}_\mathbb{E}$ signed by $SK_{\mathbb{E}}^{pb}$, which represents a version transition of the ledger from version $i$ to its new version $i+1$, also referred to as the \textbf{version transition pair}.
	
	\item The blockchain transaction with version transition pair is returned to $\mathbb{O}$, who sends this transaction to $\mathbb{S}$. 
	
	\item $\mathbb{S}$ accepts the second item of the version transition pair iff it is signed by $SK_{\mathbb{E}}^{pb}$ and the current hash stored by $\mathbb{S}$ (i.e., $LHash$) is equal to the first item of the pair.  
\end{compactenum}

\smallskip\noindent
After the update of $L$ is finished, clients with receipts can verify that their transactions were processed by $\mathbb{E}$ (see details in \cite{homoliak2020aquareum}).  
The update procedure ensures that the new version of $L$ is:
(1) \textbf{internally correct} since it was executed by trusted code of $\mathbb{E}$, 
(2) a \textbf{consistent} extension of the previous version -- relying on trusted code of $\mathbb{E}$ and a witnessed version transition by $\mathbb{S}$, and 
(3) \textbf{non-equivocating} since $\mathbb{S}$ stores only hash of a single version of $L$ (i.e., $LHash$) at any point in time.  

Whenever $\mathbb{C}$ suspects that her transactions or read queries are censored, $\mathbb{C}$ might report censorship via $\mathbb{S}$. 
To do so, $\mathbb{C}$ encrypts her request with $PK_{\mathbb{E}}^{pb}$ and publishes it on the blockchain. 
$\mathbb{O}$ noticing a new request is obligated to pass the request to $\mathbb{E}$, which will process the request and reply with an encrypted response (by $PK_{\mathbb{C}}^{pb}$) that is processed by $\mathbb{S}$.  
If a pending request at $\mathbb{S}$ is not handled by $\mathbb{O}$, it is public evidence that $\mathbb{O}$ censors the request. 
We do not specify how can $\mathbb{C}$ use such a proof; it could be shown in a legal dispute or $\mathbb{S}$ itself could have an automated deposit-based punishments rules.

\subsubsection{Design Consideration}\label{sec:design-considerations}
We might design $L$ as an append-only chain (as in blockchains), but such a design would bring a high overhead on clients who want to verify that a particular block belongs to $L$.
During the verification, clients would have to download the headers of all blocks between the head of $L$ and the block in the query, resulting into linear space \& time complexity.
In contrast, when a history tree (see \autoref{sec:background-historyT}) is utilized for integrity preservation of $L$, the presence of any block in $L$ can be verified with logarithmic space and time complexity.

\subsubsection{Terminated and Failed Enclave}\label{sec:failed-enclave}
During the execution of $prog^\mathbb{E}$, $\mathbb{E}$ stores its secrets and state objects in a sealed file, which is updated and stored on the hard drive of $\mathbb{O}$ with each new block created.
Hence, if $\mathbb{E}$ terminates due a temporary reason, such as a power outage or intentional command by $\mathbb{O}$, it can be initialized again by $\mathbb{O}$ who provides $\mathbb{E}$ with the sealed file; this file is used to recover its protected state objects. 

However, if $\mathbb{E}$ experiences a permanent hardware failure of TEE, the sealed file cannot be decrypted on other TEE platforms.
Therefore, we propose a simple mechanism that deals with this situation under the assumption that $\mathbb{O}$ is the only allowed entity that can replace the platform of $\mathbb{E}$.
In detail, $\mathbb{O}$ first snapshots the header $hdr_{sync}$ of the last block that was synchronized with $\mathbb{S}$ as well as all blocks $blks_{unsync}$ of $L$ that were not synchronized with $\mathbb{S}$.
Then, $\mathbb{O}$ restores $L$ and her  internal state objects into the version $\#(LRoot_{pb})$.
After the restoration of $L$, $\mathbb{O}$ calls the function $ReInit()$ of $\mathbb{E}$ (see \autoref{alg:enclave-VM-failure}) with $hdr_{sync}$, $blks_{unsync}$, and $LRoot_{pb}$ as the arguments.
In this function, $\mathbb{E}$ first generates its public/private key-pair $SK_\mathbb{E}^{pb}, PK_\mathbb{E}^{pb}$, and then stores the passed header as $hdr_{last}$ and copies the passed root hash into $LRoot_{cur}$ and $LRoot_{pb}$.
Then, $\mathbb{E}$ iterates over all passed unprocessed blocks and their transactions $txs$, which are executed within VM of $\mathbb{E}$.
Before the processing of $txs$ of each passed block, $\mathbb{E}$ calls the unprotected code of $\mathbb{O}$ to obtain the current partial state $\partial st^{old}$ of $L$ and incremental proof template (see details in \cite{homoliak2020aquareum}) that serves for extending $L$ within $\mathbb{E}$.
However, these unprotected calls are always verified within $\mathbb{E}$ and malicious $\mathbb{O}$ cannot misuse them.
In detail, $\mathbb{E}$ verifies $\partial st^{old}$ obtained from $\mathbb{O}$ against the root hash of the state stored in the last header $hdr_{last}$ of $\mathbb{E}$, while the incremental proof template is also verified against $LRoot_{cur}$ in the function $newLRoot()$ of $\mathbb{E}$.

\begin{algorithm}[b] 
	
	\scriptsize
	\SetKwProg{func}{function}{}{}
	
	\func{$ReInit$($LRoot_{old}$, $prevBlks[], hdr_{last}$) \textbf{public}} {
		
		($SK_{\mathbb{E}}^{pb}$, $PK_{\mathbb{E}}^{pb}$)  $ \leftarrow\Sigma_{pb}.Keygen()$;\\	
		$hdr_{last} \leftarrow hdr_{last}$; \\						
		$LRoot_{cur}  \leftarrow LRoot_{old}, LRoot_{pb} \leftarrow LRoot_{old}$;\\			
		\smallskip
		
		\For{$\{ b: prevBlks \}$}{	
			$\pi^{inc}_{next}, ~LRoot_{tmp} \leftarrow prog^{\mathbb{O}}.nextIncProof()$; \\
			$\partial st^{old} \leftarrow prog^{\mathbb{O}}.getPartialState(b.txs)$;\\
			\textbf{assert} $\partial st^{old}.root = hdr_{last}.stRoot$; \\
			
			$\ldots \leftarrow ~processTxs(b.txs, \partial st^{old}, \pi^{inc}_{next}, ~LRoot_{tmp})$;\\
			
			$LRoot_{ret} \leftarrow prog^{\mathbb{O}}.runVM(b.txs)$; \Comment{Run VM at $\mathbb{O}$.} \\
			\textbf{assert} $LRoot_{cur} = LRoot_{ret}$; \Comment{$\mathbb{E}$ and $\mathbb{O}$  are at the same point.}\\
		}
		
		$\sigma \leftarrow \Sigma_{pb}.sign(SK_{\mathbb{E}}^{pb}, (LRoot_{pb}, LRoot_{cur}))$; \\
		
		\textbf{Output}($LRoot_{pb}, LRoot_{cur}, \sigma, PK^{\mathbb{E}}_{pb}, PK^{\mathbb{E}}_{tee}$); 	\\
		
	}					
	\smallskip

	\caption{Reinitialization of a failed $\mathbb{E}$ (part of $prog^{\mathbb{E}}$).}
	\label{alg:enclave-VM-failure}
	\vspace{-0.2cm}
\end{algorithm}

Next, $\mathbb{E}$ processes $txs$ of a block, extends $L$, and then it calls the unprotected code of $\mathbb{O}$ again, but this time to process $txs$ of the current block by $\mathbb{O}$, and thus getting the same version and state of $L$ in both $\mathbb{E}$ and $\mathbb{O}$. 
Note that any adversarial effect of this unprotected call is eliminated by the checks made after the former two unprotected calls.
When all passed blocks are processed, $\mathbb{E}$ signs the version transition pair $\langle LRoot_{pb}, LRoot_{cur} \rangle$ and returns it to $\mathbb{O}$, together with the new public keys of $\mathbb{E}$.
$\mathbb{O}$ creates a blockchain transaction that calls the function $ReplaceEnc()$ of $\mathbb{S}$ with data from $\mathbb{E}$ passed in the arguments.
In $ReplaceEnc()$, $\mathbb{S}$ first verifies whether the signature of the transaction was made by $\mathbb{O}$ to avoid MiTM attacks on this functionality.
Then, $\mathbb{S}$ calls its function $PostLRoot()$ with the signed version transition pair in the arguments.
Upon the success, the current root hash of $L$ is updated and $\mathbb{S}$ replaces the stored $\mathbb{E}$'s PKs by PKs passed in parameters.
Finally, $\mathbb{E}$ informs $\mathbb{C}s$ by an event containing new PKs of $\mathbb{E}$, and $\mathbb{C}$s perform the remote attestation of $prog^{\mathbb{E}}$ using the new key $PK^{\mathbb{E}}_{tee}$ and the attestation service.
We refer the reader to Appendix of \cite{homoliak2020aquareum} for the relevant pseudo-code of~$\mathbb{O}$.

\subsection{Implementation}
\label{sec:aquareum:implementation}

We have made a proof-of-concept implementation of \name, where we utilized Intel SGX and C++ for instantiation of $\mathbb{E}$, while $\mathbb{S}$ was built on top of Ethereum and Solidity.
Although \name can be integrated with various VMs running within $\mathbb{E}$, we selected EVM since it provides a Turing-complete execution environment and it is widely adopted in the community of decentralized applications.
In detail, we utilized OpenEnclave SDK \cite{open-enclave} and a minimalistic EVM, called eEVM \cite{eEVM-Microsoft}.
However, eEVM is designed with the standard C++ map for storing the full state of $L$, which lacks efficient integrity-oriented operations.
Moreover, eEVM assumes the unlimited size of $\mathbb{E}$ for storing the full state, while the size of $\mathbb{E}$ in SGX is constrained to $\sim$100 MB.
This might work with enabled swapping but the performance of $\mathbb{E}$ would be significantly deteriorated with a large full state.
Due to these limitations, we replaced eEVM's full state handling by Merkle-Patricia Trie from Aleth \cite{aleth}, which we customized to support operations with the partial state. 
$\mathbb{O}$ and $\mathbb{C}$ were also implemented in C++. 

Our implementation enables the creation and interaction of simple accounts as well as the deployment and execution of smart contracts written in Solidity. 
We verified the code of $\mathbb{S}$ by static/dynamic analysis tools Mythril \cite{mythrill}, Slither \cite{slither}, and ContractGuard \cite{ContractGuard-fuzzer}; none of them detected any vulnerabilities.
The source code of our implementation will be made available upon publication.

\subsection{Performance Evaluation}
All our experiments were performed on commodity laptop with Intel i7-10510U CPU supporting SGX v1, and they were aimed at reproducing realistic conditions -- i.e., they included all operations and verifications described in \cite{homoliak2020aquareum}, such as verification of recoverable ECDSA signatures, aggregation of transactions by Merkle tree, integrity verification of partial state, etc.
We evaluated the performance of \name in terms of transaction throughput per second, where we distinguished transactions with native payments (see \autoref{fig:performance-enc-payments}) and transactions with ERC20 smart contract calls (see \autoref{fig:performance-enc-erc}).
All measurements were repeated 100 times, and we depict the mean and standard deviation in the graphs.

\begin{figure}[t]
	\centering	
	
	\subfloat[\label{fig:pay-TB} Turbo Boost enabled]{
		\hspace{-0.2cm}
		\includegraphics[width=0.4\textwidth]{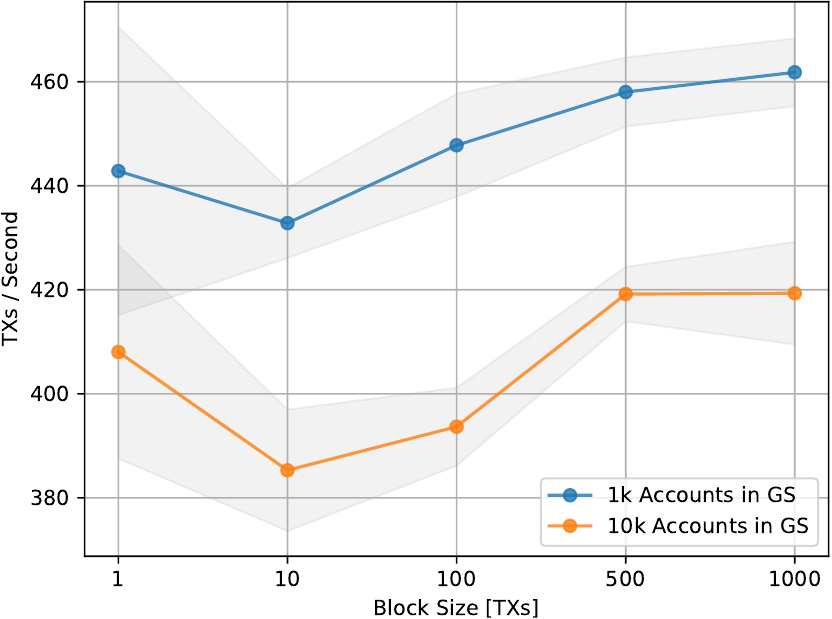} 
	}
	\subfloat[\label{fig:pay-noTB} Turbo Boost disabled]{
		\includegraphics[width=0.4\textwidth]{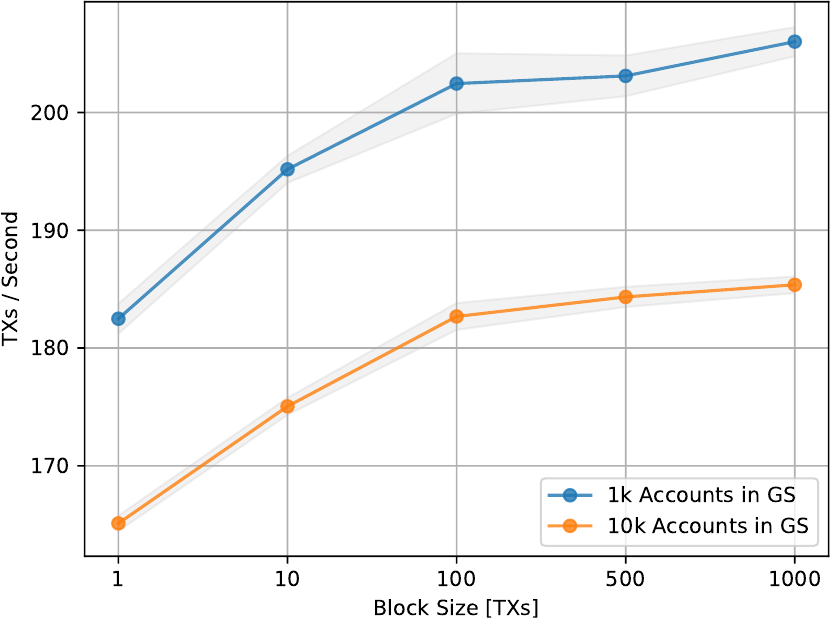} 
	}
	
	\vspace{0.2cm}
	\caption{Performance of \name for native payments.}
	\label{fig:performance-enc-payments}
\end{figure}
\begin{figure}[t]
	\centering	
	
	\subfloat[\label{fig:erc-TB} Turbo Boost enabled]{
		\hspace{-0.2cm}
		\includegraphics[width=0.4\textwidth]{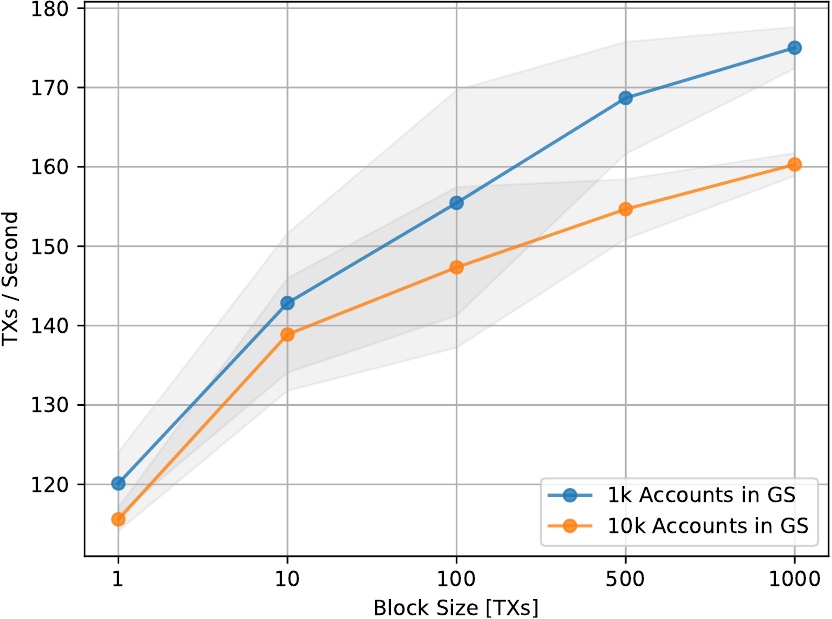} 
	}
	\subfloat[\label{fig:erc-noTB} Turbo Boost disabled]{
		\includegraphics[width=0.4\textwidth]{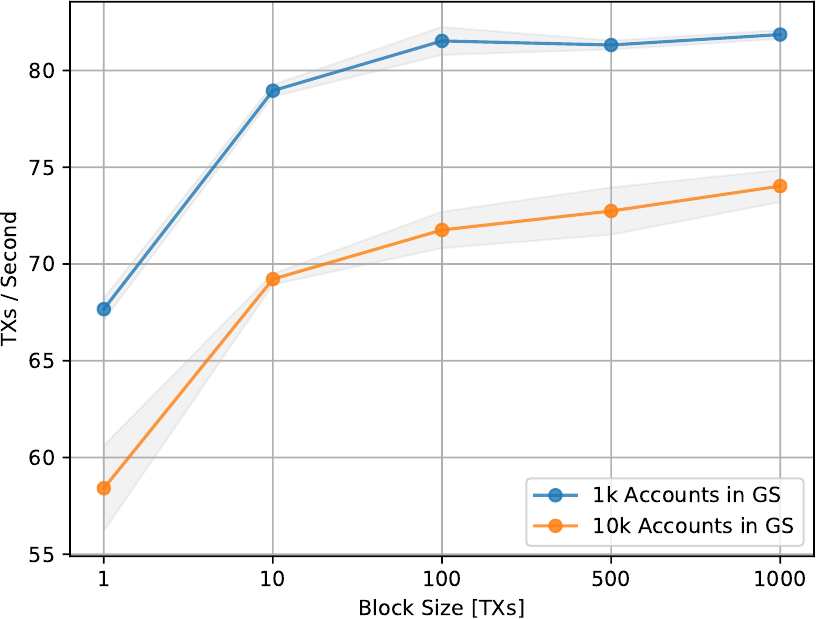} 
	}
	
	\vspace{0.2cm}
	\caption{Performance of \name for ERC20 smart contract calls.}
	\label{fig:performance-enc-erc}
\end{figure}

\paragraph{\textbf{A Size of the Full State.}}
The performance of \name is dependent on a size of data that is copied from $\mathbb{O}$ to $\mathbb{E}$ upon call of $Exec()$.
The most significant portion of the copied data is a partial state, which depends on the height of the MPT storing the full state. 
Therefore, we repeated our measurements with two different full states, one containing $1k$ accounts and another one containing $10k$ accounts.
In the case of native payments, the full state with 10k accounts caused a decrease of throughput by 7.8\%-12.1\% (with enabled TB) in contrast to the full state with 1k accounts.
In the case of smart contract calls, the performance deterioration was in the range 2.8\%-8.4\% (with enabled TB).

\begin{figure}[t]
	\centering	
	
	\subfloat[\label{fig:cens-tx-submit} Submit TX]{
		\hspace{-0.2cm}
		\includegraphics[width=0.415\textwidth]{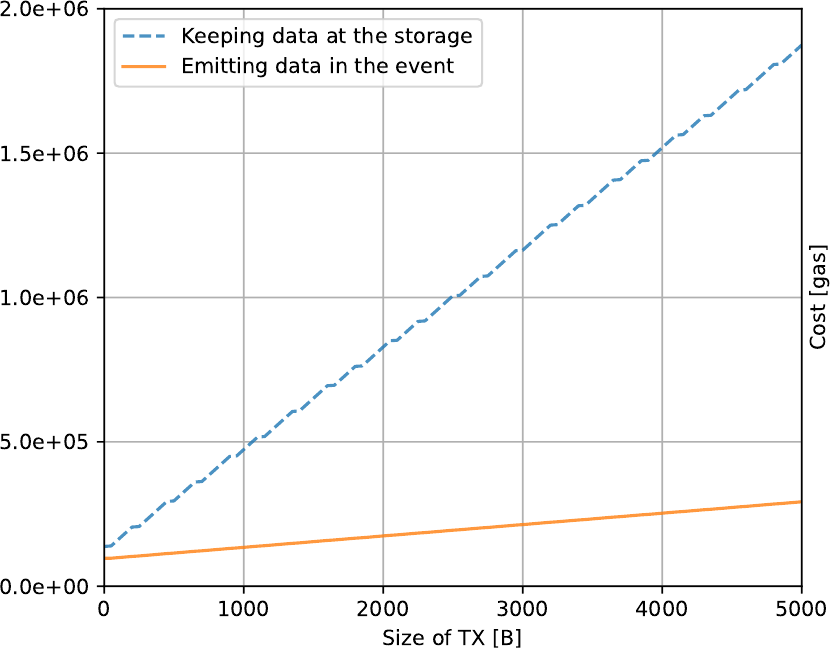} 
	}	
	\hspace{0.05cm}
	\subfloat[\label{fig:cens-tx-resolve} Resolve TX]{
		\hspace{-0.2cm}
		\includegraphics[width=0.4\textwidth]{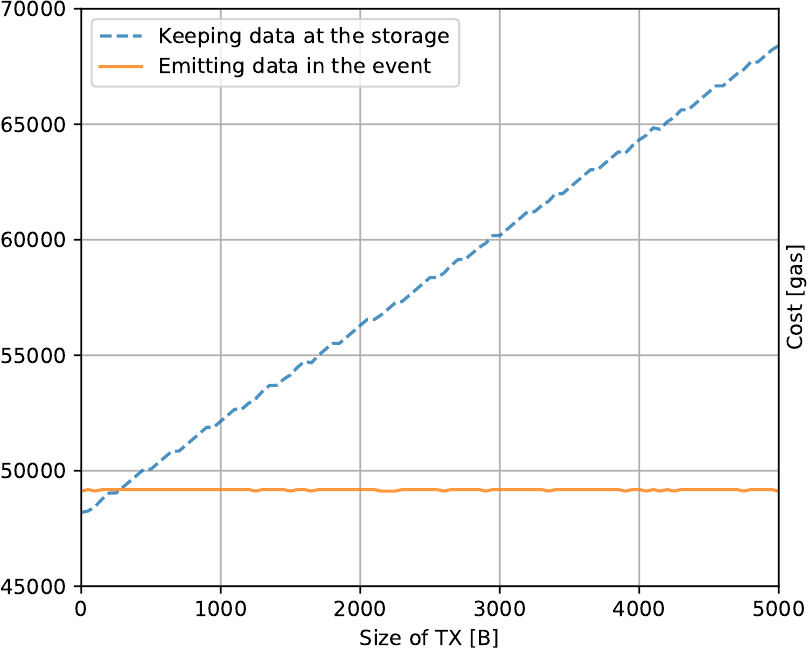} 
	}
	
	\vspace{0.4cm}
	\subfloat[\label{fig:cens-qry-submit} Submit Query (Get TX)]{
		\includegraphics[width=0.4\textwidth]{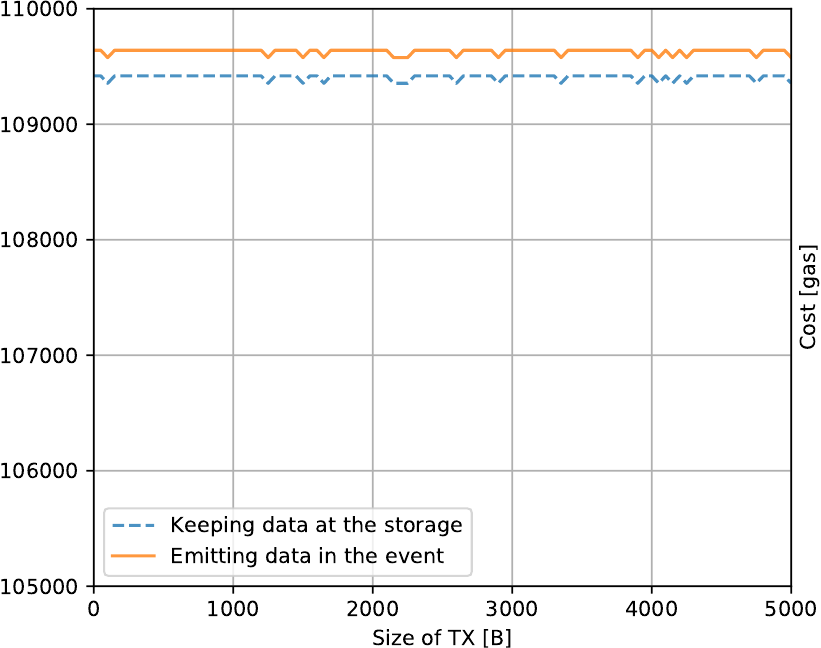} 
	}
	\subfloat[\label{fig:cens-qry-resolve} Resolve Query (Get TX)]{
		\includegraphics[width=0.4\textwidth]{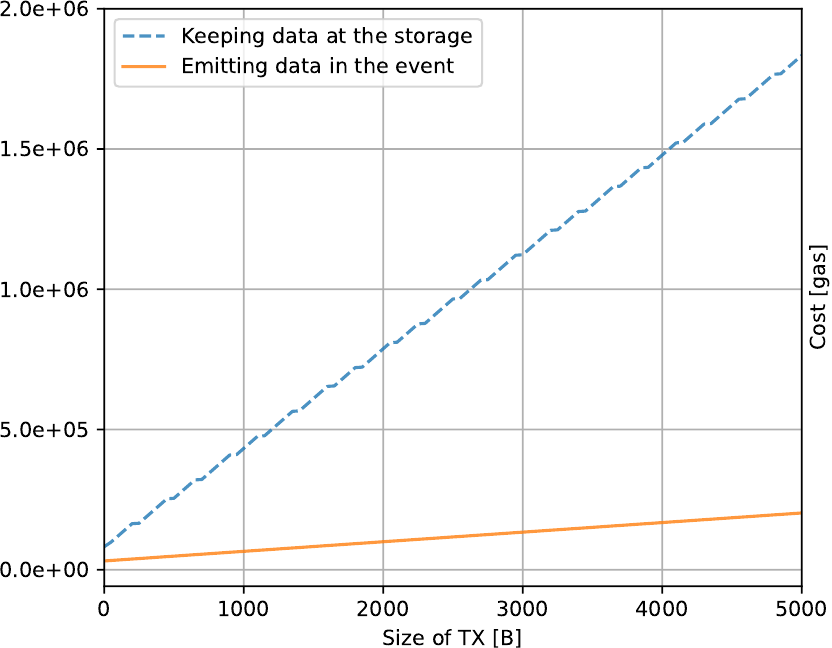} 
	}
	
	\vspace{0.2cm}
	\caption{Costs for resolution of censored transactions and queries.}
	\label{fig:-submit-cens-queries}
\end{figure}

\paragraph{\textbf{Block Size \& Turbo Boost.}}
In each experiment, we varied the block size in terms of the number of transactions in the block.
Initially, we performed measurements with enabled Turbo Boost (see \autoref{fig:pay-TB} and \autoref{fig:erc-TB}), where we witnessed a high throughput and its high variability. 
For smart contract calls (see \autoref{fig:erc-TB}), the throughput increased with the size of the block modified from 1 to 1000 by $45.7\%$ and $38.7\%$ for a full state with 1k and 10k accounts, respectively.  
However, in the case of native payments the improvement was only $4.3\%$ and $2.8\%$, while the throughput was not increased monotonically with the block size.
Therefore, we experimentally disabled Turbo Boost (see \autoref{fig:pay-noTB}) and observed the monotonic increase of throughput with increased block size, where the improvement achieved was $11.41\%$ and $12.26\%$ for a full state with 1k and 10k accounts, respectively.
For completeness, we also disabled Trubo Boost in the case of smart contract calls (see \autoref{fig:erc-noTB}), where the performance improvement was $20.9\%$ and $26.7\%$ for both full states under consideration.

\subsubsection{Analysis of Costs}\label{sec:cost-analysis}
Besides the operational cost resulting from running the centralized infrastructure,
\name imposes costs for interaction with the public blockchain with $\mathbb{S}$  deployed.
The deployment cost of $\mathbb{S}$ is $1.51M$ of gas and the cost of most frequent operation -- syncing $L$ with $\mathbb{S}$ (i.e., $PostLRoot()$) -- is $33k$ of gas, which is only $33\%$ higher than the cost of a standard Ethereum transaction.\footnote{This cost is low since we leverage the native signature scheme of the blockchain $\Sigma_{pb}$.}
For example, if $L$ is synced with $\mathbb{S}$ every 5 minutes, $\mathbb{O}$'s monthly expenses for this operation would be $285M$ of gas, while in the case of syncing every minute, monthly expenses would be $1,425M$ of gas.

\paragraph{\textbf{Censorship Resolution.}}\label{sec:exps-censorship}
Our mechanism for censorship resolution imposes costs on $\mathbb{C}$s submitting requests as well as for $\mathbb{O}$ resolving these requests.
The cost of submitting a censored request is mainly dependent on the size of the request/response and whether $\mathbb{S}$ keeps data of a request/response in the storage (i.e., an expensive option) or whether it just emits an asynchronous event with the data (i.e., a cheap option).
We measured the costs of both options and the results are depicted in \autoref{fig:-submit-cens-queries}.
Nevertheless, for practical usage, only the option with event emitting is feasible (see solid lines in \autoref{fig:-submit-cens-queries}).

\autoref{fig:cens-tx-submit} and \autoref{fig:cens-tx-resolve} depict the resolution of a censored transaction, which is more expensive for $\mathbb{C}$ than for $\mathbb{O}$, who resolves each censored transaction with constant cost $49$k of gas (see \autoref{fig:cens-tx-resolve}).
On the other hand, the resolution of censored queries is more expensive for $\mathbb{O}$ since she has to deliver a response with data to $\mathbb{S}$ (see \autoref{fig:cens-qry-resolve}), while $\mathbb{C}$ submits only a short query, e.g., get a transaction (see \autoref{fig:cens-qry-submit}).

\subsection{Security Analysis and Discussion}
\label{sec:aquareum:analysis}
In this section, we demonstrate resilience of \name against adversarial actions that the malicious operator $\mathcal{A}$ can perform to violate the desired properties (see \autoref{sec:desired-properties}). 

\begin{theorem}\label{theorem:correctness}
	(Correctness) $\mathcal{A}$ is unable to modify the full state of $L$ in a way that does not respect the semantics of VM deployed in $\mathbb{E}$.
\end{theorem}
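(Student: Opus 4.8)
\textbf{Proof proposal for Theorem \ref{theorem:correctness} (Correctness).}

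The plan is to reduce the statement to a pair of invariants that are maintained across every ledger version transition: (i) every accepted state transition is the output of $processTxs$ executed inside the enclave $\mathbb{E}$, and (ii) the smart contract $\mathbb{S}$ only ever records a root hash $LRoot_{pb}$ that is the direct successor of the previously recorded one under such an enclave-produced transition. Together these imply that the full state committed on-chain at any version $i$ equals the result of applying the VM semantics (the trusted \texttt{runVM} routine of $prog^{\mathbb{E}}$) to the genesis state through a chain of valid transaction batches, so no state that violates VM semantics can ever become the canonical state of $L$. I would first fix the threat model from Section~\ref{sec:overview}: $\mathcal{A}$ controls the operator $\mathbb{O}$ and all unprotected code, but cannot forge $\Sigma_{tee}$ or $\Sigma_{pb}$ signatures of $\mathbb{E}$, cannot break the hash function $h$, and cannot subvert the TEE isolation guarantee (so $SK_{\mathbb{E}}^{pb}$, $SK_{\mathbb{E}}^{tee}$ never leak and $prog^{\mathbb{E}}$ runs faithfully).

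The argument proceeds in three steps. First I would show that $\mathbb{S}.PostLRoot$ (Algorithm \ref{alg:log-smart-contract}) accepts a transition $(root_A, root_B)$ only if it carries a valid $SK_{\mathbb{E}}^{pb}$ signature and $LRoot_{pb} = root_A$; since only $\mathbb{E}$ holds $SK_{\mathbb{E}}^{pb}$ and $\mathbb{E}$ signs a pair $(LRoot_{pb}, LRoot_{cur})$ only at the end of $Exec$ (or $ReInit$), any on-chain transition corresponds to an enclave invocation of $processTxs$. Second, I would argue that inside $processTxs$ the new root $LRoot_{cur}$ is computed as $newLRoot(hdr, \pi^{inc}_{next}, LRoot_{tmp})$ where $hdr$ commits to $\partial st^{new}.root$ and $\partial st^{new}$ is the output of \texttt{runVM} on the filtered transaction list; moreover $newLRoot$ verifies the incremental proof $\pi^{inc}_{next}.Verify(LRoot_{cur}, LRoot_{tmp})$ against the enclave's own retained $LRoot_{cur}$, with exactly a one-block delta, so the history-tree structure is extended consistently and the partial-state input $\partial st^{old}$ is bound to the previous header via the assertion $\partial st^{old}.root = hdr_{last}.stRoot$. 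This is where I would invoke the properties of the history tree and Merkle-Patricia trie from Section~\ref{sec:background:integrity-structures}: the partial state supplied by $\mathbb{O}$ is insufficient to let $\mathcal{A}$ present a state whose root matches $hdr_{last}.stRoot$ unless it is genuinely consistent with the committed state, because forging such a partial witness requires a hash collision. Third, I would chain these per-transition facts by induction on the version number, using that $\mathbb{S}$ stores a single $LRoot_{pb}$ at a time (non-equivocation of the commitment) so there is a unique canonical chain of versions, each a semantically-correct VM-extension of the last.

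The subtle part — and the main obstacle — is the reinitialization path $ReInit$ (Algorithm \ref{alg:enclave-VM-failure}), which deliberately calls back into $\mathcal{A}$-controlled code ($prog^{\mathbb{O}}.nextIncProof$, $getPartialState$, $runVM$) during enclave recovery. I would need to show that every such untrusted return value is re-validated inside $\mathbb{E}$ before it can influence the committed state: $\partial st^{old}$ is checked against $hdr_{last}.stRoot$, the incremental proof template is checked in $newLRoot$ against $LRoot_{cur}$, and after re-executing a block the enclave asserts $LRoot_{cur} = LRoot_{ret}$ where $LRoot_{ret}$ is what $\mathbb{O}$'s own VM produced — forcing $\mathbb{O}$ to be at the same point as the trusted computation or the recovery aborts. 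I would also need the side condition (stated in Section~\ref{sec:failed-enclave}) that only $\mathbb{O}$ may trigger $ReplaceEnc$, enforced by the $\Sigma_{pb}$ signature check on $msg$ in $\mathbb{S}$, to rule out a MITM swapping in an adversarial enclave key. Assembling the reinitialization case into the induction — showing the recovered enclave's committed version is still a semantically-valid extension of the last on-chain version $\#(LRoot_{pb})$ — is the step I expect to require the most care, since it is the only place the trusted/untrusted boundary is crossed mid-computation.
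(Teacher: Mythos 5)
Your proposal is correct and rests on the same core argument as the paper's proof, namely that every accepted state transition is produced exclusively by the attested, trusted code of $\mathbb{E}$, so $\mathcal{A}$ cannot effect a transition violating VM semantics. The paper's own justification is only two sentences long and omits all of the supporting mechanics you work out (the signature check in $PostLRoot$, the binding of $\partial st^{old}$ to $hdr_{last}.stRoot$, the induction over versions, and the re-validation of untrusted inputs in $ReInit$); your elaboration is consistent with, and considerably more thorough than, what the paper actually writes.
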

\begin{proof1}
	The update of the $L$'s state is performed exclusively in $\mathbb{E}$.	
	Since $\mathbb{E}$ contains trusted code that is publicly known and remotely attested by $\mathbb{C}$s, $\mathcal{A}$ cannot tamper with this code.
\end{proof1}

\begin{theorem}\label{theorem:consistency}
	(Consistency) $\mathcal{A}$ is unable to extend $L$ and modify the past records of $L$.
\end{theorem}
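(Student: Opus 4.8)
The statement combines two claims: (i) any extension of $L$ accepted by the system must be a genuine incremental extension of the previously committed version, and (ii) past records of $L$ cannot be modified. I would prove both by tracing the only two pathways through which $LRoot_{pb}$ (the root stored by $\mathbb{S}$) can change: the normal operation pathway via $PostLRoot()$ (invoked through $Exec()$/$Flush()$ in $\mathbb{E}$), and the enclave-replacement pathway via $ReplaceEnc()$/$ReInit()$. The core idea is that every state transition of $L$ is funneled through a \emph{version transition pair} $\langle LRoot_A, LRoot_B\rangle$ that is signed by $SK_{\mathbb{E}}^{pb}$, and $\mathbb{S}$ accepts the new root $LRoot_B$ only if its currently stored root equals $LRoot_A$ (the check \texttt{if $LRoot_{pb} = root_A$} in \autoref{alg:log-smart-contract}). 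Thus the sequence of committed roots forms a chain in which each link is authorized by $\mathbb{E}$ and anchored to the immediately preceding committed root.

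\textbf{Key steps.} First I would invoke \autoref{theorem:correctness} and the unforgeability of $\Sigma_{pb}$ under the TEE assumption: $\mathcal{A}$ cannot produce a signature $\sigma$ on any pair $(root_A, root_B)$ that was not actually output by $\mathbb{E}$, and $\mathbb{E}$'s code is the publicly attested $prog^{\mathbb{E}}$. Second, I would examine $newLRoot()$ in \autoref{alg:enclave-VM}: before producing $LRoot_{cur}$, $\mathbb{E}$ checks $\#(LRoot_{cur}) + 1 = \#(LRoot_{tmp})$ and verifies the incremental proof $\pi^{inc}_{next}.Verify(LRoot_{cur}, LRoot_{tmp})$ against the history tree. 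By the properties of the history tree (\autoref{sec:background-historyT}), a valid incremental proof between two commitments witnesses that the newer one contains \emph{exactly the same} prefix of records as the older one, plus appended blocks; hence $\mathbb{E}$ only ever emits version transition pairs that are append-only extensions, which establishes (ii) for the normal pathway and shows any accepted extension in (i) is incremental. Third, I would handle $\mathbb{S}$'s role: since $\mathbb{S}$ stores a single root $LRoot_{pb}$ and replaces it only when the submitted $root_A$ matches it, $\mathcal{A}$ cannot ``rewind'' $L$ to an older version and branch from there — a stale $root_A$ simply fails the equality check and the call is a no-op. Fourth, for the enclave-replacement pathway, I would show $ReInit()$ re-derives $LRoot_{cur}$ starting from $LRoot_{old}$ passed by $\mathcal{A}$, but then re-executes all unsynced blocks inside $\mathbb{E}$ with the same history-tree consistency checks in $newLRoot()$, and cross-checks the partial state $\partial st^{old}$ against $hdr_{last}.stRoot$; moreover $ReplaceEnc()$ in $\mathbb{S}$ still routes through $PostLRoot()$, so the new root must chain onto the old $LRoot_{pb}$. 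The only freedom $\mathcal{A}$ has is to supply a \emph{smaller} set of unsynced blocks (dropping some not-yet-committed transactions), but that cannot modify anything already committed to $\mathbb{S}$.

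\textbf{Main obstacle.} The delicate part is the enclave-replacement pathway: I need to argue that even though $\mathcal{A}$ feeds $\mathbb{E}$ untrusted inputs ($LRoot_{old}$, $prevBlks$, $hdr_{last}$, and the results of the unprotected calls $prog^{\mathbb{O}}.getPartialState$, $nextIncProof$, $runVM$) during $ReInit()$, none of these can make $\mathbb{E}$ sign a non-incremental transition. This requires carefully checking that each untrusted input is subsequently validated inside $\mathbb{E}$ — $\partial st^{old}.root = hdr_{last}.stRoot$, the incremental-proof verification in $newLRoot()$, and the assertion $LRoot_{cur} = LRoot_{ret}$ forcing $\mathbb{E}$ and $\mathbb{O}$ to the same state — and that the starting point $LRoot_{old}$ itself is, via $ReplaceEnc()$ calling $PostLRoot()$, constrained to equal the $LRoot_{pb}$ already held by $\mathbb{S}$. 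I would also note the one residual caveat (dropping uncommitted transactions during recovery) and argue it does not violate the theorem's statement about \emph{extending} $L$ or \emph{modifying past (i.e., committed) records}, since such transactions were never part of a committed version and their clients hold no version-transition proof for them. Combining the signature-unforgeability argument, the history-tree consistency guarantee, and the single-root monotonic-update behavior of $\mathbb{S}$ yields the theorem.
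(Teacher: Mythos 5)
Your proposal is correct and rests on the same two pillars as the paper's own (much terser) justification: all extensions of $L$ happen only inside the trusted code of $\mathbb{E}$, and the history tree's incremental-proof verification forces every emitted version transition to be an append-only extension consistent with $L$'s past. Your additional tracing of the $\mathbb{S}$-side chaining check and the $ReInit()$/$ReplaceEnc()$ recovery pathway is a faithful elaboration of details the paper leaves implicit rather than a different argument.
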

\begin{proof1}
	All extensions of $L$ are performed within trusted code of $\mathbb{E}$, while utilizing the history tree \cite{crosby2009efficient} as a tamper evident data structure, which enables us to make only such incremental extensions of $L$ that are consistent with $L$'s past. 
\end{proof1}

\begin{theorem}
	(Verifiability) $\mathcal{A}$ is unable to unnoticeably modify or delete a transaction $tx$ that was previously inserted to $L$ using $\Pi_{N}$, if sync with $\mathbb{S}$ was executed anytime afterward.
\end{theorem}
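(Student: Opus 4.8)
The plan is to reduce verifiability to the two properties already established — Correctness (Theorem~\ref{theorem:correctness}) and Consistency (Theorem~\ref{theorem:consistency}) — together with the non-equivocation guarantee supplied by the smart contract $\mathbb{S}$. First I would fix the setup: suppose a client $\mathbb{C}$ inserted $tx$ via the normal operation protocol $\Pi_{N}$, and thereby obtained from $\mathbb{E}$ a signed receipt together with a signed version-transition pair $\langle h(L_i), h(L_{i+1})\rangle$ attesting that $tx$ is included in block $B$ at some version $i+1$ of $L$. Assume further that at some later time a sync with $\mathbb{S}$ occurred, i.e.\ $PostLRoot()$ was called successfully so that $\mathbb{S}$ stored a root $LRoot_{pb}$ corresponding to a version $j \ge i+1$. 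The goal is to show that $\mathcal{A}$ cannot produce a valid view of $L$ in which $tx$ is absent or altered.

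The key steps, in order: (1) Observe that $\mathbb{S}$ stores exactly one root hash $LRoot_{pb}$ at any time, and it is updated by $PostLRoot()$ only when the old stored value matches the first element of an $\mathbb{E}$-signed transition pair; hence the sequence of roots witnessed by $\mathbb{S}$ forms a single, totally ordered, $\mathbb{E}$-endorsed chain of versions — this gives non-equivocation and pins down a canonical version $j$ on-chain. (2) By Consistency (Theorem~\ref{theorem:consistency}), every $\mathbb{E}$-signed transition extends the previous version in a way consistent with its past, because the history tree admits a logarithmic incremental proof $\pi^{inc}$ witnessing that version $j$ contains the same prefix of blocks as version $i+1$; therefore block $B$ (and thus $tx$) is still present at version $j$ and unmodified. (3) By Correctness (Theorem~\ref{theorem:correctness}), the contents of $B$ and the resulting state were produced by the trusted VM code inside $\mathbb{E}$, so $\mathcal{A}$ could not have inserted a conflicting or malformed variant of $tx$. (4) Finally, a client holding the receipt can verify membership of $tx$ in $B$ via a Merkle proof $\pi^{mk}$ against $hdr.txsRoot$, and membership of $hdr$ in $L$ at version $j$ via a history-tree membership proof $\pi^{mem}$ against $LRoot_{pb}$ read directly from $\mathbb{S}$; if $\mathcal{A}$ served any alternative view it would either fail these proof checks or require a second root in $\mathbb{S}$, contradicting step~(1). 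If $\mathcal{A}$ instead simply refuses to serve the data, that is detectable and falls under the separately stated censorship-evidence property, not verifiability.

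I expect the main obstacle to be step~(2): carefully arguing that the on-chain root $LRoot_{pb}$ at version $j$ is provably consistent with the version $i+1$ in the client's receipt, rather than some forked version that $\mathcal{A}$ signed in parallel. This requires invoking both the append-only / incremental-proof guarantee of the history tree \emph{and} the single-slot storage discipline of $\mathbb{S}$ (so that no two incomparable roots are ever accepted), and then noting that $\mathbb{E}$ will only sign a transition pair whose first element equals its own current $LRoot_{cur}$ — so the signed transitions cannot branch. Assembling these three facts into a clean ``there is a unique $\mathbb{E}$-endorsed chain of versions, and the receipt's version lies on it'' lemma is the crux; once that is in hand, the Merkle and membership-proof verifications in step~(4) are routine. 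A minor subtlety worth addressing explicitly is the enclave-failure / $ReplaceEnc()$ path (Section~\ref{sec:failed-enclave}): I would note that $ReplaceEnc()$ also routes through $PostLRoot()$ with an $\mathbb{E}$-signed transition pair and is gated by $\mathbb{O}$'s signature, so it preserves the single-chain-of-versions invariant and does not give $\mathcal{A}$ a way to rewrite history.
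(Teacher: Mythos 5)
Your proposal is correct and follows essentially the same route as the paper's justification: anchor the transaction's block $b_i$ to the single $\mathbb{E}$-signed, $\mathbb{S}$-witnessed chain of versions via Correctness and Consistency, and then observe that a modified $tx'$ or a tampered block $b_i'$ would require forging a Merkle proof $\pi^{mk}$ or a history-tree membership proof $\pi^{mem}$ against the on-chain $LRoot_{pb}$, which is infeasible. Your additional remarks on the $ReplaceEnc()$ path and the explicit non-equivocation step are elaborations of the same argument rather than a different approach.
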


\begin{proof1}
	Since $tx$ was correctly executed (\autoref{theorem:correctness}) as a part of the block $b_i$ in a trusted code of $\mathbb{E}$, $\mathbb{E}$ produced a signed version transition pair $\{h(L_{i-1}), h(L_i)\}_\mathbb{E}$ of $L$ from the version $i-1$ to the new version $i$ that corresponds to $L$ with $b_i$ included.
	$\mathcal{A}$ could either sync $L$ with $\mathbb{S}$ immediately after $b_i$ was appended or she could do it $n$ versions later.
	In the first case, $\mathcal{A}$ published $\{h(L_{i-1}), h(L_i)\}_\mathbb{E}$ to $\mathbb{S}$, which updated its current version of $L$ to $i$ by storing $h(L_i)$ into $LRoot_{pb}$. 
	In the second case, $n$ blocks were appended to $L$, obtaining its $(i+n)$th version. 
	$\mathbb{E}$ executed all transactions from versions $(i+1),\ldots, (i+n)$  of $L$, while preserving correctness (\autoref{theorem:correctness}) and consistency (\autoref{theorem:consistency}). 
	Then $\mathbb{E}$ generated a version transition pair $\{h(L_{i-1}), h(L_{i+n})\}_\mathbb{E}$ and $\mathcal{A}$  posted  it to $\mathbb{S}$, where 
	the current version of $L$ was updated to $i+n$ by storing $h(L_{i+n})$ into $LRoot_{pb}$.
	When any $\mathbb{C}$ requests $tx$ and its proofs from $\mathcal{A}$ with regard to publicly visible $LRoot_{pb}$, she might obtain a modified $tx'$ with a valid membership proof $\pi^{mem}_{hdr_i}$ of the block $b_i$ but an invalid Merkle proof  $\pi^{mk}_{tx'}$, which cannot be forged. \hfill$\Box$
	
	In the case of $tx$ deletion, $\mathcal{A}$ provides $\mathbb{C}$ with the tampered full block $b_i'$ (maliciously excluding $tx$) whose membership proof $\pi^{mem}_{hdr_i'}$ is invalid -- it cannot be forged. 	
\end{proof1}

\begin{theorem}\label{theorem:non-equivocation}
	(Non-Equivocation) Assuming $L$ synced with $\mathbb{S}$: $\mathcal{A}$ is unable to provide two distinct $\mathbb{C}$s with two distinct valid views on $L$. 
\end{theorem}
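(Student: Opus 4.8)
The plan is to reduce equivocation on $L$ to equivocation on the single hash value $LRoot_{pb}$ stored by the smart contract $\mathbb{S}$ on the public blockchain, and then to invoke the non-equivocation guarantee of the underlying blockchain platform. First I would fix an arbitrary point in time at which $L$ has been synced with $\mathbb{S}$, and suppose toward a contradiction that $\mathcal{A}$ supplies two distinct clients $\mathbb{C}_1$ and $\mathbb{C}_2$ with two views $V_1 \neq V_2$ of $L$, each of which they accept as valid. ``Accepting as valid'' means (by the client verification procedure described in \autoref{sec:high-overview}) that each $\mathbb{C}_j$ has checked its view against $LRoot_{pb}$ as read from $\mathbb{S}$: concretely, for any transaction, block header, or account-state object in $V_j$, the client possesses a membership proof $\pi^{mem}$ (or $\pi^{mpt}$, $\pi^{mk}$ as appropriate) that verifies against the root $LRoot_{pb}$ currently committed in $\mathbb{S}$.

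The key observation is that $\mathbb{S}$ stores exactly one root hash $LRoot_{pb}$ at any point in time: inspecting $prog^{\mathbb{S}}$ in \autoref{alg:log-smart-contract}, the only functions that mutate $LRoot_{pb}$ are $PostLRoot$ (and $ReplaceEnc$, which itself calls $PostLRoot$), and each such call overwrites the single field with a new value only if the supplied $root_A$ equals the current $LRoot_{pb}$ and the version-transition pair is signed by $SK_{\mathbb{E}}^{pb}$. Hence the blockchain's own consistency (one canonical state of $\mathbb{S}$, guaranteed since $\mathcal{A}$ cannot undermine the blockchain platform by assumption) implies that $\mathbb{C}_1$ and $\mathbb{C}_2$ read the \emph{same} value $r := LRoot_{pb}$. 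The next step is to argue that a single root $r$ binds a unique ledger: $r$ is the root of a history tree (\autoref{sec:background-historyT}), whose membership and incremental proofs are collision-resistant, and each block header in turn commits via Merkle roots $txsRoot$, $rcpRoot$, $stRoot$ to a unique set of transactions, receipts, and (via the MPT) a unique full state. Therefore, if both $V_1$ and $V_2$ verify against $r$, collision-resistance of $h(\cdot)$ forces $V_1 = V_2$ on every component for which the clients hold proofs, contradicting $V_1 \neq V_2$. The remaining gap — that a client's ``view'' might include parts of $L$ \emph{not} yet synced with $\mathbb{S}$ — is handled by noting that, per the theorem's hypothesis and the verification protocol, a view is only deemed valid up to the synced version $\#(LRoot_{pb})$; anything beyond that is backed only by an $\mathbb{E}$-signed receipt, which ties to a \emph{future} unique root once synced, and in any case is outside the scope of the stated claim.

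I expect the main obstacle to be making precise what ``two distinct valid views'' means in a way that genuinely captures the adversary's power while still permitting the reduction — in particular, ruling out the subtle attack in which $\mathcal{A}$ exploits the window between an $\mathbb{E}$-produced version-transition pair and its posting to $\mathbb{S}$ to show $\mathbb{C}_1$ a root that $\mathbb{S}$ has not yet adopted while showing $\mathbb{C}_2$ the older committed root. The resolution is that $\mathbb{E}$ is single-threaded and stateful (its sealed state tracks $LRoot_{cur}$ and $LRoot_{pb}$, \autoref{alg:enclave-VM}), so it never signs two incompatible version-transition pairs from the same committed version; combined with the $root_A = LRoot_{pb}$ guard in $PostLRoot$, any root a client can verify is either the current committed one or lies on the unique forward chain of $\mathbb{E}$-signed transitions from it. Formalizing this ``unique forward chain'' property of $\mathbb{E}$'s signatures — essentially an invariant maintained across invocations of $Exec$, $Flush$, and $ReInit$ — is the part that requires care; everything after it is a routine collision-resistance argument. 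I would therefore structure the proof as: (i) state and prove the $\mathbb{E}$-signature forward-chain invariant; (ii) observe $\mathbb{S}$ holds a unique root consistent with that chain; (iii) conclude by collision-resistance of the history tree, Merkle trees, and MPT.
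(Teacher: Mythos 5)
Your proposal is correct and rests on the same key observation as the paper's own (one-sentence) justification: $\mathbb{S}$ stores only a single current root $LRoot_{pb}$ at any time, so all clients verifying against it necessarily share the same view of $L$. The additional machinery you supply — collision-resistance binding a root to a unique ledger and the forward-chain invariant on $\mathbb{E}$'s signatures — is a more rigorous elaboration of steps the paper leaves implicit, not a different argument.
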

\begin{proof1}
	Since $L$ is regularly synced with publicly visible $\mathbb{S}$, and $\mathbb{S}$ stores only a single current version of $L$ (i.e., $LRoot_{pb}$), all $\mathbb{C}s$ share the same view on $L$.
\end{proof1}

\begin{theorem}\label{theorem:censorhip}
	(Censorship Evidence) $\mathcal{A}$ is unable to censor any request (transaction or query) from $\mathbb{C}$ while staying unnoticeable.
\end{theorem}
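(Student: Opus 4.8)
\textbf{Proof plan for Theorem~\ref{theorem:censorhip} (Censorship Evidence).}
The plan is to show that whenever $\mathcal{A}$ fails to service a client request, an indelible, publicly verifiable record of that failure remains on the public blockchain through $\mathbb{S}$. The argument splits naturally into two cases mirroring the two resolution paths in $prog^{\mathbb{S}}$ and $prog^{\mathbb{E}}$: censored write transactions (\textit{SubmitCensTx}/\textit{ResolveCensTx}, cf. \autoref{alg:log-smart-contract} and \autoref{alg:enclave-VM-cens}) and censored read queries (\textit{SubmitCensQry}/\textit{ResolveCensQry}). In both cases I would argue that the client can unilaterally register the request on-chain (encrypted under $PK_{\mathbb{E}}^{pb}$ so that confidentiality is preserved), and that the only way for $\mathcal{A}$ to clear the pending entry is to relay it to $\mathbb{E}$, obtain a correctly-signed response, and post it back via $\mathbb{S}$ --- which is precisely the honest behavior, hence not censorship.

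First I would fix the adversary model: $\mathcal{A}$ controls $\mathbb{O}$ and the unprotected host, can drop or delay off-chain messages, but cannot forge $\Sigma_{pb}$ signatures, cannot tamper with $prog^{\mathbb{E}}$ (remote-attested), cannot rewrite the public blockchain, and cannot prevent $\mathbb{C}$ from submitting a transaction to the public chain (the public chain's own censorship-resistance, assumed in \autoref{sec:overview}). Next, for the write case, I would trace the lifecycle: $\mathbb{C}$ calls \textit{SubmitCensTx}$(etx,\sigma_{msg})$, which appends a \textbf{CensInfo} record with $status=\perp$ to $censReqs$; this storage write is public and permanent. The only state transition that sets $r.status \ne \perp$ is \textit{ResolveCensTx}, which by its \textbf{assert} requires a signature from $SK_{\mathbb{E}}^{pb}$ over $(h(r.etx),status)$. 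By \autoref{theorem:correctness}, $\mathbb{E}$ produces such a signature only after actually running \textit{SignTx} on the decrypted $etx$ --- i.e., after processing (or correctly rejecting) the client's transaction. Therefore, either the pending entry is resolved (meaning the request \emph{was} handled, so there is no censorship), or it stays at $status=\perp$ indefinitely, which is itself the public, verifiable evidence of censorship: anyone reading $\mathbb{S}$ sees an unserved request that only $\mathbb{O}$ could have serviced. I would then repeat the same argument verbatim for the read-query path, noting that \textit{ResolveCensQry} likewise requires an $\mathbb{E}$-signature over $(h(r.equery),status,h(edata))$, and that the encrypted response $edata$ preserves query confidentiality while still proving resolution.

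I would close by addressing the subtlety that distinguishes ``pretended'' from ``genuine'' unavailability --- the very issue flagged in \autoref{sec:desired-properties}. The key observation is that $\mathbb{S}$ lives on the public blockchain, whose liveness is independent of $\mathbb{O}$; so $\mathbb{O}$'s own unavailability does not stop $\mathbb{C}$ from posting \textit{SubmitCensTx}, and it equally does not stop the pending entry from being observed by third parties. Thus unavailability of $\mathbb{O}$ is \emph{indistinguishable from censorship precisely because both leave the same on-chain trail}, and in either case $\mathbb{C}$ obtains the evidence it needs (to pursue a legal dispute or trigger automated penalty logic in $\mathbb{S}$). I expect the main obstacle to be the edge cases around $\mathbb{E}$ unavailability or failure: if $\mathbb{O}$ claims the enclave has permanently failed, one must argue that the enclave-replacement procedure (\autoref{alg:enclave-VM-failure}, \textit{ReInit}/\textit{ReplaceEnc}) does not give $\mathcal{A}$ a way to ``reset away'' outstanding $censReqs$ entries --- i.e., that $censReqs$ persists across enclave replacement (it is $\mathbb{S}$-state, not $\mathbb{E}$-state, so it does) and that after replacement the new enclave is still obligated and able to resolve them. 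Handling this cleanly, together with a liveness bound (how long $\mathbb{C}$ must wait before unavailability is conclusively distinguishable from transient delay, tied to the public chain's finality), is where the argument needs the most care.
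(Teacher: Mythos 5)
Your proposal is correct and follows essentially the same argument as the paper's (much terser) justification: the client unilaterally registers the request via $\mathbb{S}$, and the operator's only options are to leave the pending entry as public evidence of censorship or to resolve it with an $\mathbb{E}$-signed proof, which means the request was actually processed. Your additional observations about the persistence of $censReqs$ across enclave replacement and the write/read case split are sound elaborations but do not change the core dichotomy the paper relies on.
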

\begin{proof1}
	If $\mathbb{C}$'s request is censored, $\mathbb{C}$ asks for a resolution of the request through public $\mathbb{S}$.
	$\mathcal{A}$ observing the request might either ignore it and leave the proof of censoring at $\mathbb{S}$ or she might submit the request to $\mathbb{E}$ and obtain an enclave signed proof witnessing that a request was processed -- this proof is submitted to $\mathbb{S}$, whereby publicly resolving the request.
\end{proof1}

\subsubsection{Other Properties and Implications}

\paragraph{\textbf{Privacy VS Performance.}}
\name provides privacy of data submitted to $\mathbb{S}$ during the censorship resolution since the requests and responses are encrypted.
However, \name does not provide privacy against $\mathbb{O}$ who has the read access to $L$.
Although \name could be designed with the support of full privacy, a disadvantage of such an approach would be the performance drop caused by the decryption of  requested data from $L$ upon every $\mathbb{C}$'s read query, requiring a call of $\mathbb{E}$.
In contrast, with partial-privacy, $\mathbb{O}$ is able to respond queries of $\mathbb{C}$s without touching $\mathbb{E}$.

\paragraph{\textbf{Access Control at $\mathbb{S}$.}}\label{sec:cens-write-access}
$\mathbb{C}$s interact with $\mathbb{S}$ only through functions for submission of censored requests.
Nevertheless, access to these functions must be regulated through an access control mechanism in order to avoid exhaustion (i.e., DoS) of this functionality by external entities. 
This can be performed with a simple access control requiring $\mathbb{C}$s to provide access tickets when calling the functions of $\mathbb{S}$.
An access ticket could be provisioned by $\mathbb{C}$ upon registration at $\mathbb{O}$, and it could contain $PK_{\mathbb{C}}^{pb}$ with a time expiration of the subscription, signed by $\mathbb{E}$.
Whenever $\mathbb{C}$ initiates a censored request, verification of an access ticket would be made by $\mathbb{S}$, due to which DoS of this functionality would not be possible.

\paragraph{\textbf{Security of TEE.}}
\name assumes that its TEE platform is secure.
However, previous research showed that this might not be the case in practical implementations of TEE, such as SGX that was vulnerable to memory corruption attacks \cite{biondo2018guard} as well as side channel attacks \cite{brasser2017dr,van2018foreshadow,Lipp2021Platypus,Murdock2019plundervolt}.
A number of software-based defense and mitigation techniques have been  proposed \cite{shih2017t,gruss2017strong,chen2017detecting,brasser2017dr,seo2017sgx} and some vulnerabilities were patched by Intel at the hardware level \cite{intel-sgx-response}.
Nevertheless, we note that \name is TEE-agnostic thus can be integrated with other TEEs such as ARM TrustZone or RISC-V architectures (using Keystone-enclave \cite{Keystone-enclave} or Sanctum \cite{costan2016sanctum}).

Another class of SGX vulnerabilities was presented by Cloosters et al. \cite{cloosters2020teerex} and involved incorrect application designs enabling arbitrary reads and writes of protected memory or work done by Borrello et al. which involves more serious microarchitectural flaws in chip design \cite{Borrello2022AEPIC}. 
%
Since the authors did not provide public with their tool (and moreover it does not support Open-enclave SDK), we did manual inspection of \name code and did not find any of the concerned vulnerabilities. 

\paragraph{\textbf{Security vs. Performance.}}
In SGX, performance is always traded for security, and our intention was to optimize the performance of \name while making custom security checks whenever possible instead of using expensive buffer allocation and copying to/from $\mathbb{E}$ by trusted runtime of SDK (\textit{trts}).
\begin{compactitem}
	\item \textbf{Output Parameters:}
	In detail, in the case of ECALL $Exec()$ function where $\mathbb{E}$ is provided with \texttt{[user\_check]} output buffers pointing to host memory, the strict location-checking is always made in $\mathbb{E}$ while assuming maximal size of the output buffer passed from the host (i.e., $oe\_is\_outside\_enclave(buf,~max)$).
	Moreover, the maximal size is always checked before any write to such output buffers.
	The concerned parameters of $Exec()$ are buffers for newly created and modified account state objects.
	\item \textbf{Input Parameters:}
	On the other hand, in the case of input parameters of $Exec()$, we utilize embedded buffering provided by trts of SDK since $\mathbb{E}$ has to check the integrity of input parameters before using them, otherwise Time-of-Check != Time-of-Use vulnerability \cite{cloosters2020teerex} might be possible.
	The concerned input parameters of $Exec()$ are transactions to process and their corresponding codes.
\end{compactitem}

\paragraph{\textbf{Time to Finality.}}
Many blockchain platforms suffer from accidental forks, which temporarily create parallel inconsistent blockchain views.
To mitigate this phenomenon, it is recommended to wait a certain number of block confirmations after a given block is created before considering it irreversible with overwhelming probability.
This waiting time (a.k.a., time to finality) influences the non-equivocation property of \name, and \name inherits it from the underlying blockchain platform.
Most blockchains have a long time to finality, e.g., $\sim$10mins in Bitcoin \cite{nakamoto2008bitcoin}, $\sim$3mins in Ethereum \cite{wood2014ethereum}, $\sim$2mins in Cardano \cite{kiayias2017ouroboros}.
However, some blockchains have a short time to finality, e.g., HoneyBadgerBFT \cite{miller2016honey}, Algorand \cite{gilad2017algorand}, and StrongChain \cite{strongchain}. 
The selection of the underlying blockchain platform is dependent on the requirements of the particular use case that \name is applied for.

\section{CBDC-AquaSphere}\label{sec:logging-cbdc}
For background related to blockchains, integrity-preserving data structures, atomic swap, and CBDC, we refer the reader to \autoref{chapter:background}.

\subsection{Problem Definition}\label{sec:cbdc:problem}
Our goal is to propose a CBDC approach that respects the features proposed in DEA manifesto~\cite{cbdc-manifesto} released in 2022, while on top of it, we assume other features that might bring more benefits and guarantees.
First, we start with a specification of the desired features related to a single instance of CBDC that we assume is operated by a single entity (further a bank or its operator) that maintains its ledger. 
Later, we describe desired features related to multiple instances of CBDC that co-exist in the ecosystem of wholesale and/or retail CBDC.\footnote{Note that we will propose two deployment scenarios, one for the wholesale environment and the second one for the retail environment of multiple retail banks interacting with a single central bank.}
In both cases, we assume that a central bank might not be a trusted entity.
All features that respect this assumption are marked with asterisk $^*$ and are considered as requirements for such an attacker model.

\subsubsection{Single Instance of CBDC}\label{sec:problem-features-single}
When assuming a basic building block of CBDC -- a single bank's CBDC working in an isolated environment from the other banks -- we specify the desired features of CBDC as follows:
\begin{compactenum}
	

	\item[\textbf{Correctness of Operation Execution$^*$:}] 
	The clients who are involved in a monetary operation (such as a transfer) should be guaranteed with a correct execution of their operation.
	
	\item[\textbf{Integrity$^*$:}] 
	The effect of all executed operations made over the client accounts should be irreversible, and no ``quiet'' tampering of the data by a bank should be possible.
	Also, no conflicting transactions can be (executed and) stored by the CBDC instance in its ledger. 
	
	\item[\textbf{Verifiability$^*$:}] 
	This feature extends integrity and enables the clients of CBDC to obtain easily verifiable evidence that the ledger they interact with is internally correct and consistent.         
	In particular, it means that none of the previously inserted transactions was neither modified nor deleted. 
	
	\item[\textbf{Non-Equivocation$^*$:}]
	From the perspective of the client's security, the bank should not be able to present at least two inconsistent views on its ledger to (at least) two distinct clients who would accept such views as valid.
	
	\item[\textbf{Censorship Evidence$^*$:}]
	The bank should not be able to censor a client's request without leaving any public audit trails proving the censorship occurrence.
	
	\item[\textbf{Transparent Token Issuance$^*$:}] Every CBDC-issued token should be publicly visible (and thus audit-able) to ensure that a bank is not secretly creating token value ``out-of-nothing,'' and thus causing uncontrolled inflation.
	The transparency also holds for burning of existing tokens. 
	
	
	\item[\textbf{High Performance:}] A CBDC instance should be capable of processing a huge number of transactions per second since it is intended for daily usage by thousands to millions of people.
	
	\item[\textbf{Privacy:}]
	All transfers between clients as well as information about the clients of CBDC should remain private for the public and all other clients that are not involved in particular transfers.
	However, a bank can access this kind of information and potentially provide it to legal bodies, if requested.
\end{compactenum}

\subsubsection{Multiple Instances of CBDC}
In the case of  multiple CBDC instances that can co-exist in a common environment, we extend the features described in the previous listing by features that are all requirements:
\begin{compactenum}
	
	\item[\textbf{Interoperability$^*$:}] 
	As a necessary prerequisite for co-existence of multiple CBDC instances, we require them to be mutually interoperable, which means that tokens issued by one bank can be transferred to any other bank.
	For simplicity, we assume that all the CBDC instances are using the unit token of the same value within its ecosystem.\footnote{On the other hand, conversions of disparate CBDC-backed tokens would be possible by following trusted oracles or oracle networks.}
	At the hearth of interoperability lies atomicity of supported operations.
	Atomic interoperability, however, requires means for accountable coping with censorship and recovery from stalling. 
	We specify these features in the following.
	
	\begin{compactenum}
		\item[\textbf{Atomicity$^*$:}] 
		Any operation (e.g., transfer) between two interoperable CBDC instances must be either executed completely or not executed at all. 
		As a consequence,
		no new tokens can be created out-of-nothing and no tokens can be lost in an inter-bank operation.
		Note that even if this would be possible, the state of both involved instances of CBDC would remain internally consistent; therefore, consistency of particular instances (\autoref{sec:problem-features-single}) is not a sufficient feature to ensure atomicity within multiple interoperable CBDC instances.
		This requirement is especially important due to trustless assumption about particular banks, who might act in their benefits even for the cost of imposing the extreme inflation to the whole system.\footnote{For example, if atomicity is not enforced, one bank might send the tokens to another bank, while not decreasing its supply due to pretended operation abortion.}
		
		\item[\textbf{Inter-CBDC Censorship Evidence$^*$:}] 
		Having multiple instances of CBDC enables a different way of censorship, where one CBDC (and its clients) might be censored within some inter-CBDC operation with another CBDC instance, precluding them to finish the operation.
		Therefore, there should exist a means how to accountably detect this kind of censorship as well.
		
		\item[\textbf{Inter-CBDC Censorship Recovery$^*$:}] 
		If the permanent censorship happens and is indisputably proven, it must not impact other instances of CBDC, including the ones that the inter-CBDC operations are undergoing.
		Therefore, the interoperable CBDC environment should provide a means to recover from inter-CBDC censorship of unfinished operations.
	\end{compactenum}

	\item[\textbf{Identity Management of CBDC Instances$^*$:}] 
	Since we assume that CBDC in\-stan\-ces are trustless, in theory, there might emerge a fake CBDC instance, pretentding to act as a valid one.
	To avoid this kind of situation, it is important for the ecosystem of wholesale CBDC to manage identities of particular valid CBDC instances in a secure manner.
\end{compactenum}

\subsubsection{Adversary Model}
The attacker can be represented by the operator of a bank or the client of a bank, and her intention is to break functionalities that are provided by the features described above. 
%
Next, we assume that the adversary cannot undermine the cryptographic primitives used, the blockchain platform, and the TEE platform deployed. 

\subsection{Proposed Approach}\label{sec:cbdc:design}



\begin{figure}[b]
	\centering
		\vspace{-0.3cm}
	\includegraphics[width=0.55\columnwidth]{./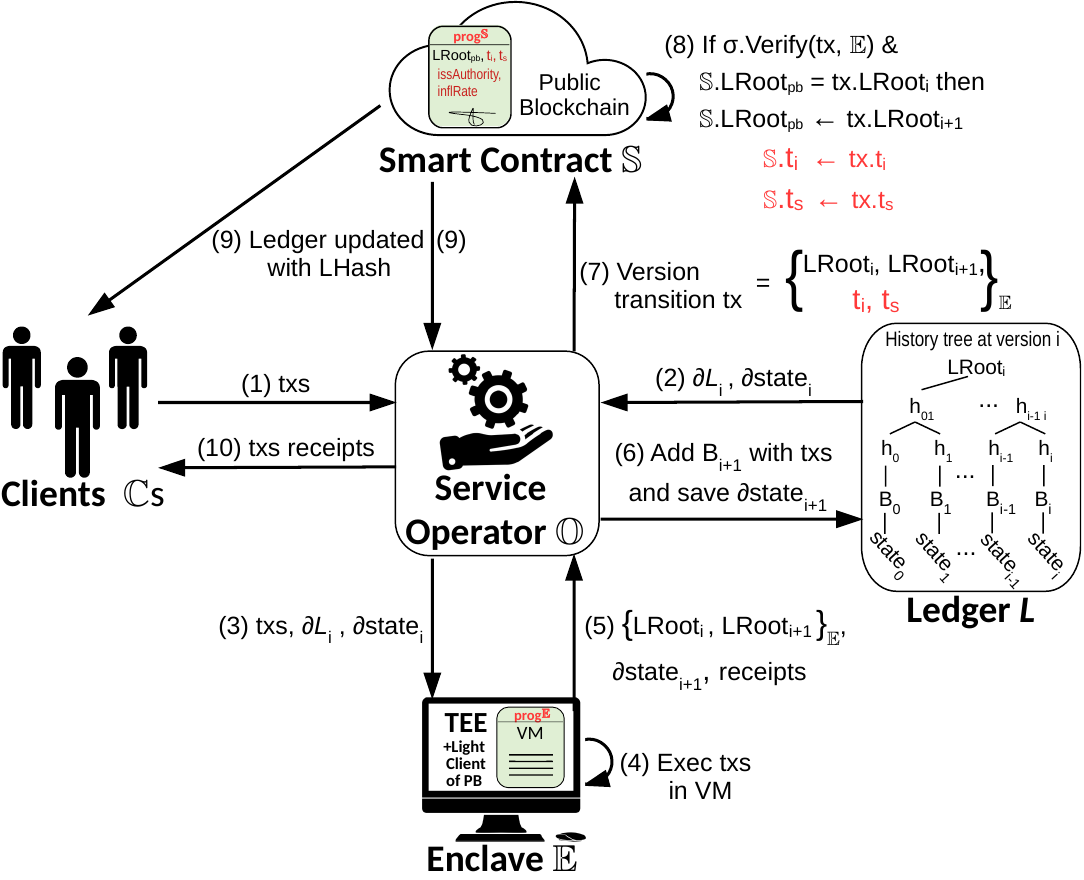}
	\caption{Architecture of Aquareum with our modifications in red.}
	\label{fig:architecture-aquareum}
	\vspace{-0.3cm}
\end{figure}

We propose a holistic approach for the ecosystem of wholesale and/or retail CBDC, which aims at meeting the features described in \autoref{sec:cbdc:problem}. 
To accomplish these features, we leverage interesting properties stemming from a combination of a public blockchain (with smart contract platform) and TEE. 
Such a combination was proposed for various purposes in related work, out of which the use case of generic centralized ledger Aquareum~\cite{homoliak2020aquareum} is most convenient to build on.
Therefore, we utilize Aquareum as a building block for a single instance of CBDC, 
and we make a few CBDC-specific modifications to it, enhancing its transparency and functionality.
Our modifications are outlined in \autoref{fig:architecture-aquareum} by red color, while the details of them (especially changes in programs of smart contract and enclave) will be described in this section.
First, we start by a description of a single CBDC instance and then we extend it to a fully interoperable environment consisting of multiple CBDC instances.

Note that we focus solely on the transfer of tokens operation within the context of CBDC interoperability.
However, our approach could be extended to different operations, involving inter-CBDC smart contract invocations.
Also, note that to distinguish between smart contracts on a public blockchains and smart contracts running in TEE, we will denote latter as \textbf{micro contracts} (or $\mu$-contracts).
Similarly, we denote transactions sent to TEE as \textbf{micro transactions} (or $\mu$-transactions) and blocks created in the ledger of CBDC instance as \textbf{micro blocks} (or $\mu$-blocks).

\subsubsection{\textbf{A CBDC Instance}}
Alike in Aquareum, the primary entity of each CBDC instance is its operator $\mathbb{O}$ (i.e., a bank), who is responsible for (1) maintaining the ledger $L$, 
(2) running the TEE enclave $\mathbb{E}$, 
(3) synchronization of the $L$'s snapshot to a public blockchain with smart contract $\mathbb{IPSC}$ ($\mathbb{I}$ntegrity $\mathbb{P}$reserving $\mathbb{S}$mart $\mathbb{C}$ontract), 
(4) resolving censorship requests, and 
(5) a communication with clients $\mathbb{C}$s.

\paragraph{\textbf{Token Issuance.}}
On top of Aquareum's $\mathbb{S}$, our $\mathbb{IPSC}$ contains snapshotting of the total issued tokens $t_i$ by the current CBDC instance and the total supply $t_s$ available at the instance for the purpose of transparency in token issuance (and potentially even burning).
Therefore, we extend the $\mathbb{E}$-signed version transition pair periodically submitted to $\mathbb{IPSC}$ by these two fields that are relayed to $\mathbb{IPSC}$ upon snapshotting $L$ (see red text in \autoref{fig:architecture-aquareum}).
Notice that $t_i = t_s$ in the case of a single instance since the environment of the instance is isolated.
\begin{compactitem}
	\item {\textbf{An Inflation Bound.}}
	Although snapshotting the total tokens in circulation is useful for the transparency of token issuance, $\mathbb{O}$ might still hyper-inflate the CBDC instance.
	Therefore, we require $\mathbb{O}$ to guarantee a maximal inflation rate $i_r$ per year, which can be enforced by $\mathbb{IPSC}$ as well as $\mathbb{E}$ since the code of both is publicly visible and attestable.
	The $i_r$ should be adjusted to a constant value by $\mathbb{O}$ at the initialization of $\mathbb{IPSC}$ and verified every time the new version of $L$ is posted to $\mathbb{IPSC}$; 
	in the case of not meeting the constrain, the new version would not be accepted at $\mathbb{IPSC}$.
	However, another possible option is that the majority vote of $\mathbb{C}$s can change $i_r$ even after initialization.
	Besides, $\mathbb{E}$ also enforces $i_r$ on $t_i$ and does not allow $\mathbb{O}$ to issue yearly more tokens than defined by $i_r$. 
	Nevertheless, we put the inflation rate logic also into $\mathbb{IPSC}$ for the purpose of transparency.
	
\end{compactitem}

\paragraph{\textbf{Initialization.}}
First, $\mathbb{E}$ with program $prog^{\mathbb{E}}$ (see  \autoref{alg:enclave-VM} of Appendix) generates and stores two key pairs, one under $\Sigma_{pb}$ (i.e., $SK_{\mathbb{E}}^{pb}$, $PK_{\mathbb{E}}^{pb}$) and one under $\Sigma_{tee}$ (i.e.,  $SK_{\mathbb{E}}^{tee}$, $PK_{\mathbb{E}}^{tee}$).
Then, $\mathbb{O}$ generates one key pair under $\Sigma_{pb}$ (i.e.,  $SK_{\mathbb{O}}^{pb}$, $PK_{\mathbb{O}}^{pb}$), which is then  used as the sender of a transaction deploying $\mathbb{IPSC}$ with program $prog^{\mathbb{IPSC}}$ (see \autoref{alg:IPSC-smart-contract} of Appendix) at public blockchain with parameters $PK_{\mathbb{E}}^{pb}$, $PK_{\mathbb{E}}^{tee}$, $PK_{\mathbb{O}}^{pb}$, $t_i$, and $i_r$.
Then, $\mathbb{IPSC}$ stores the keys in parameters, sets the initial version of $L$ by putting $LRoot_{pb} \leftarrow ~\perp$, and sets the initial total issued tokens and the total supply, both to $t_i$.\footnote{Among these parameters, a constructor of $\mathbb{IPSC}$ also accepts the indication whether an instance is allowed to issue tokens. This is, however, implicit for the single instance, while restrictions are reasonable in the case of multiple instances.}

\begin{compactitem}
\item {\textbf{Client Registration.}}
A client $\mathbb{C}$ registers with $\mathbb{O}$, who performs know your customer (KYC) checks and submits her public key $PK_{pb}^{\mathbb{C}}$ to $\mathbb{E}$.
Then, $\mathbb{E}$ outputs an execution receipt about the successful registration of $\mathbb{C}$ as well as her access ticket $t^{\mathbb{C}}$ that will serve for potential communication with $\mathbb{IPSC}$ and its purpose is to avoid spamming $\mathbb{IPSC}$ by invalid requests. 
In detail, $t^{\mathbb{C}}$ is the $\mathbb{E}$-signed tuple that contains $PK_{pb}^{\mathbb{C}}$ and optionally other fields such as the account expiration timestamp. 
Next, $\mathbb{C}$ verifies whether her registration (proved by the receipt) was already snapshotted by $\mathbb{O}$ at $\mathbb{IPSC}$.
\end{compactitem}

\begin{algorithm} 
	\scriptsize 
	\caption{The program $prog^{\mathbb{IPSC}}$ of $\mathbb{IPSC}$ with our modifications in red (as opposed to Aquareum).}\label{alg:IPSC-smart-contract}
	
	\SetKwProg{func}{function}{}{}
	
	\smallskip
	$\triangleright$ \textsc{Declaration of types and constants:}\\		
	\hspace{1em} \textbf{CensInfo} \{ $\mu$-$etx, \mu$-$equery, status, edata$ \},  \\
	\hspace{1em} $msg$: a current transaction that called $\mathbb{IPSC}$,  \\
	
	\smallskip
	$\triangleright$ \textsc{Declaration of functions:}
	
	\func{$Init$($PK_{\mathbb{E}}^{pb}, PK_{\mathbb{E}}^{tee}, PK_{\mathbb{O}}, ~\diff{\_i_r},~ \diff{[ia \leftarrow \textbf{T}]} $) \textbf{public} }{
		$PK_{\mathbb{E}}^{tee}[].add(PK_{\mathbb{E}}^{tee})$; \Comment{PK of enclave $\mathbb{E}$ under $\Sigma_{tee}$.} \\ 
		$PK_{\mathbb{E}}^{pb}[].add(PK_{\mathbb{E}}^{pb})$; \Comment{PK of enclave $\mathbb{E}$ under $\Sigma_{pb}$.} \\
		$PK_{\mathbb{O}}^{pb} \leftarrow PK_{\mathbb{O}}$; \Comment{PK of operator $\mathbb{O}$ under $\Sigma_{pb}$.} \\
		$LRoot_{pb} \leftarrow \perp$; \Comment{The most recent root hash of $L$ synchronized with $\mathbb{IPSC}$.} \\ 
		$censReqs \leftarrow []$; \Comment{Request that $\mathbb{C}$s wants to resolve publicly.} \\
		\diff{$t_s \leftarrow 0$}; \Comment{The total supply of the instance.}\\
		\diff{$t_i \leftarrow 0$}; \Comment{The total issued tokens by the instance.}\\
		\diff{\textbf{const} $issueAuthority \leftarrow ia$}; \Comment{Token issuance  capability of the instance.} \\ 
		\diff{\textbf{const} $i_r \leftarrow \_i_r$}; \Comment{Max. yearly inflation of the instance.} \\ 
		\diff{\textbf{const} $createdAt \leftarrow timestamp()$}; \Comment{The timestamp of creation a CBDC instance.} \\ 
	}
	
	\func{$snapshotLedger$($root_A, root_B, \diff{\_t_i, \_t_s,}~ \sigma$) \textbf{public} }{
		\Comment{Verify whether msg was signed by $\mathbb{E}$. \hfill} \\
		\textbf{assert} $\Sigma_{pb}.verify((\sigma, PK_{\mathbb{E}}^{pb}[\text{-}1]), (root_A, root_B, \diff{\_t_i, \_t_s}))$;  \\			
		
		\Comment{Snapshot issued tokens and total supply. \hfill} \\
		\diff{
			\If{$issueAuthority$}{
				\textbf{assert} $\_\_meetsInflationRate(\_t_i)$;  \Comment{The code is trivial, and we omit it.} \\
				$t_i \leftarrow \_t_i$;			\\
			}
			\Else{
				\diff{\textbf{assert} $t_i = \_t_i$}; \\ 
			}
		}
		
		\Comment{Verify whether a version transition extends the last one. \hfill} \\
		\If{$LRoot_{pb} = root_A$}{
			$LRoot_{pb} \leftarrow root_{B}$; \Comment{Do a version transition of $L$.} \\
		}	
		
	}
		
	\func{$SubmitCensTx$($\mu$-$etx, \sigma_{msg}$) \textbf{public} }{
		\Comment{Called by $\mathbb{C}$ in the case her $\mu$-tx is censored. $\mathbb{C}$ encrypts it by $PK^{tee}_{\mathbb{E}}$. \hfill \hfill}\\		
		accessControl($\sigma_{msg}, msg.PK_{\mathbb{C}}^{pb}$); \\
		
		$censReqs$.add(\textbf{CensInfo}($\mu$-$etx, \perp, \perp, \perp$)); \\			
		
	}
	
	\func{$ResolveCensTx(idx_{req}, status, \sigma$) \textbf{public} }{
		\Comment{Called by $\mathbb{O}$ to prove that $\mathbb{C}$'s $\mu$-tx was processed.\hfill \hfill \hfill}\\
		
		\textbf{assert} $idx_{req} < |censReqs|$;\\
		$r \leftarrow censReqs[idx_{req}]$; \\
		
		\textbf{assert} $\Sigma_{pb}.verify((\sigma, PK_{\mathbb{E}}^{pb}[\text{-}1]), ~(h(r.\mu$-$etx), status))$; 	\\ 
		$r.status \leftarrow status$;\\
	}	
	
	\func{$SubmitCensQry$($\mu$-$equery, \sigma_{msg}$) \textbf{public} }{
		\Comment{Called by $\mathbb{C}$ in the case its read query is censored. $\mathbb{C}$ encrypts it by $PK^{tee}_{\mathbb{E}}$.}\\		
		accessControl($msg$, $\sigma_{msg}, msg.PK_{\mathbb{C}}^{pb}$); \\
		
		$censReqs$.add(\textbf{CensInfo}($\perp, \mu$-$equery, \perp, \perp$)); \\		
		
	}
	
	\func{$ResolveCensQry(idx_{req}, status, edata, \sigma$) \textbf{public} }{
		\Comment{Called by $\mathbb{O}$ as a response to the $\mathbb{C}$'s censored  read query.\hfill \hfill \hfill}\\		
		\textbf{assert} $idx_{req} < |censReqs|$;\\
		$r \leftarrow censReqs[idx_{req}]$; \\
		
		\textbf{assert} $\Sigma_{pb}.verify((\sigma, PK_{\mathbb{E}}^{pb}[\text{-}1]), (h(r.\mu$-$equery), status, h(edata)))$; 	\\ 
		$r.\{edata \leftarrow edata, status \leftarrow status\}$;\\		
	}
	
	\medskip
	
	\func{$ReplaceEnc$($PKN_{\mathbb{E}}^{pb}, PKN_{\mathbb{E}}^{tee}, r_{A}, r_{B}, ~\diff{\_t_i, \_t_s,}~  \sigma, \sigma_{msg}$) \textbf{public} }{
		\Comment{Called	 by $\mathbb{O}$ in the case of enclave failure.\hfill}\\
		
		\textbf{assert} $\Sigma_{pb}.verify((\sigma_{msg}, PK_{\mathbb{O}}^{pb}), msg)$;  \Comment{Avoiding MiTM attack.} \\
		$snapshotLedger(r_{A}, r_{B}, ~\diff{\_t_i, \_t_s},~ \sigma)$ ; \Comment{Do a version transition.} \\

		$PK_{\mathbb{E}}^{tee}.add(PKN_{\mathbb{E}}^{tee})$; \Comment{Upon change, $\mathbb{C}s$ make remote attestation.} \\ 
		$PK_{\mathbb{E}}^{pb}.add(PKN_{\mathbb{E}}^{pb})$; \\	
	}
\end{algorithm}

\paragraph{\textbf{Normal Operation.}}
$\mathbb{C}$s send \textit{$\mu$-transactions} (writing to $L$) and \textit{queries} (reading from $L$) to $\mathbb{O}$, who validates them and relays them to $\mathbb{E}$, which processes them within its virtual machine (Aquareum uses eEVM~\cite{eEVM-Microsoft}).
Therefore, $L$ and its state are modified in a trusted code of $\mathbb{E}$, creating a new version of $L$, which is represented by the root hash $LRoot$ of the history tree.
Note that program $prog^{\mathbb{E}}$ is public and can be remotely attested by $\mathbb{C}$s (or anybody).
$\mathbb{O}$ is responsible for a periodic synchronization of the most recent root hash $LRoot_{cur}$ (i.e., snapshotting the current version of $L$ ) to $\mathbb{IPSC}$, running on a public blockchain $PB$. 
Besides, $\mathbb{C}$s use this smart contract to resolve censored transactions and queries, while preserving the privacy of data. 

\begin{figure*}[t]
	\centering
	\includegraphics[width=0.85\textwidth]{./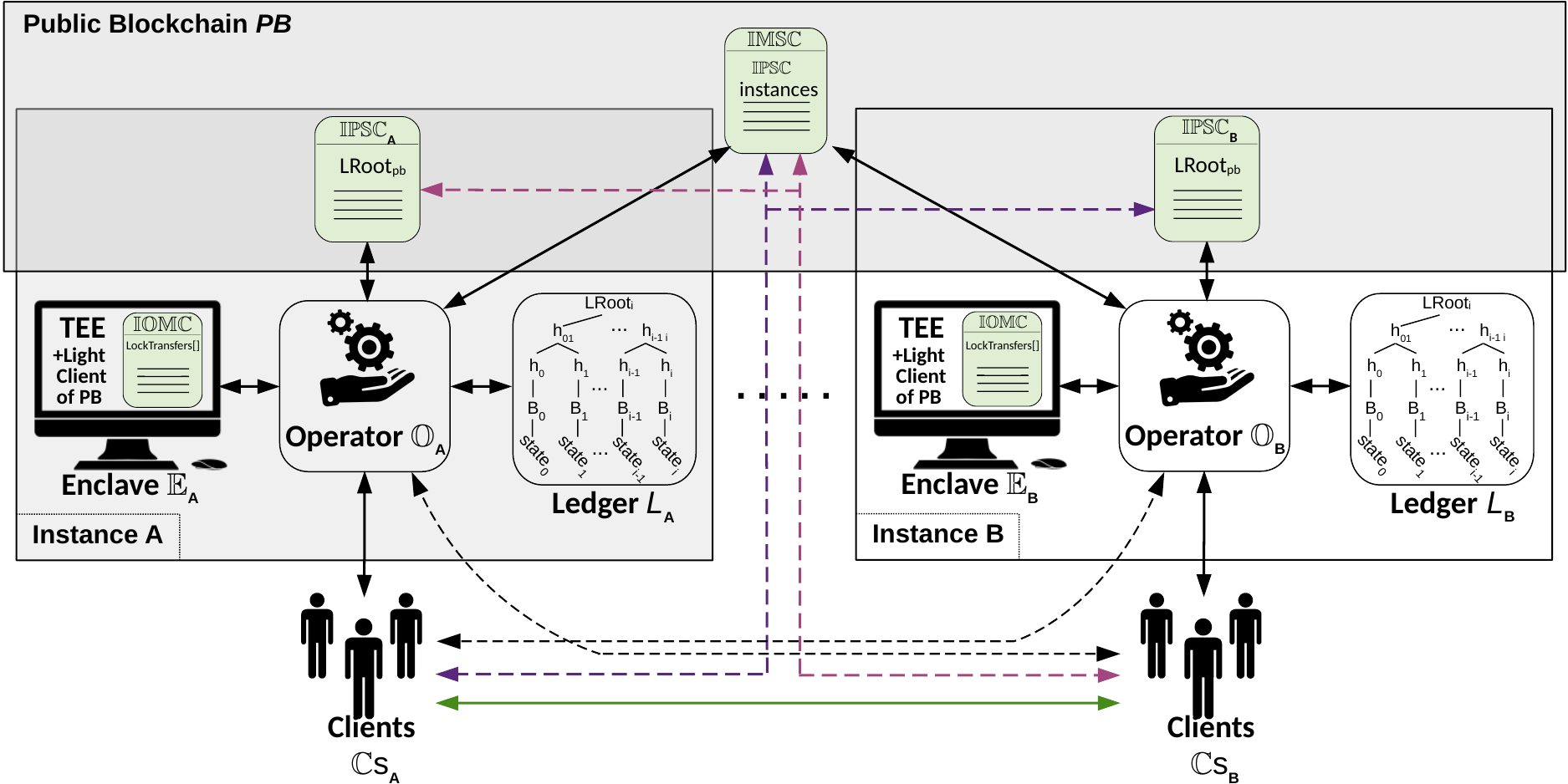}
	\caption{Overview of our CBDC architecture supporting interoperability among multiple CBDC instances (i.e., banks). The schema depicts two instances, where each of them has its own centralized ledger $L$ modified in a secure way through TEE of $\mathbb{E}$, while its integrity is ensured by periodic integrity snapshots to the integrity preserving smart contract ($\mathbb{IPSC}$) in a public blockchain $PB$. Each CBDC instance is registered in the identity management smart contract $\mathbb{IMSC}$ of a public blockchain, serving as a global registry of bank instances. A client who makes an inter-bank transfer communicates with her bank and the counter-party bank utilizing interoperability micro contracts ($\mathbb{IOMC}$), running in the TEE. Any censored request of a client is resolved by $\mathbb{IPSC}$ of a particular bank and can be initiated by its client or a counter-party client. 
	}
	\label{fig:architecture}
\end{figure*}

\paragraph{\textbf{Censorship Resolution.}}\label{sec:approch-single-instance-cens}
$\mathbb{O}$ might potentially censor some \textit{write} transactions or \textit{read} queries of $\mathbb{C}s$.
However, these can be resolved by Aquareum's mechanism as follows. 
%
If $\mathbb{C}$'s $\mu$-transaction $\mu$-tx is censored by $\mathbb{O}$, $\mathbb{C}$ first creates $PK_{\mathbb{E}}^{tee}$-encrypted $\mu$-$etx$  (to ensure privacy in $PB$), and then she creates and signs a transaction containing $\mathbb{C}'s$ access ticket $t^{\mathbb{C}}$ and $\mu$-$etx$.
$\mathbb{C}$ sends this transaction to $\mathbb{IPSC}$, which verifies $t^{\mathbb{C}}$ and stores $\mu$-$etx$, which is now visible to $\mathbb{O}$ and the public.
Therefore, $\mathbb{O}$ might relay $\mu$-$etx$ to $\mathbb{E}$ for processing and then provide $\mathbb{E}$-signed execution receipt to $\mathbb{IPSC}$ that publicly resolves this censorship request.
On the other hand, if $\mathbb{O}$ were not to do it, $\mathbb{IPSC}$ would contain an indisputable proof of censorship by $\mathbb{O}$ on a client $\mathbb{C}$.

\subsubsection{\textbf{Multiple CBDC Instances}}
The conceptual model of our interoperable CBDC architecture is depicted in \autoref{fig:architecture}.
It consists of multiple CBDC instances (i.e., at least two), whose $\mathbb{C}s$ communicate in three different ways: (1) directly with each other, (2) in the instance-to-instance fashion through the infrastructure of their $\mathbb{O}$ as well as counterpart's $\mathbb{O}$, (3) through $PB$ with $\mathbb{IPSC}$ of both $\mathbb{O}$s and a global registry $\mathbb{IMSC}$ managing identities of instances. 

For simplified description, in the following we assume the transfer operation where a local CBDC instance in \autoref{fig:architecture} is A (i.e., the sender of tokens) and the external one is B (i.e., the receiver of tokens).
To ensure interoperability, we require a communication channel of local clients $\mathbb{C}s_A$ to external clients $\mathbb{C}s_{B}$ (the green arrow), the local operator $\mathbb{O}_{A}$ (the black arrow), and the external operator $\mathbb{O}_{B}$  (the black dashed arrow).
In our interoperability protocol $\Pi^{T}$ (described later in \autoref{sec:our-protocol}), external $\mathbb{C}s_{B}$ use the channel with the local operator $\mathbb{O}_{A}$ only for obtaining incremental proofs of $L_A$'s history tree to verify inclusion of some $\mu$-transactions in $L_A$.
However, there might arise a situation in which $\mathbb{O}_{A}$ might censor such queries, therefore, we need to address it by another communication channel -- i.e., the public blockchain $PB$.

\begin{compactitem}
\item {\textbf{Censorship of External Clients.}}
We allow external clients $\mathbb{C}s_{B}$ to use the same means of censorship resolution as internal clients of a single CBDC instance (see \autoref{sec:approch-single-instance-cens}). 
To request a resolution of a censored query, the external $\mathbb{C}_B$ uses the access ticket $t^{\mathbb{C}_B}$ at $\mathbb{IPSC}_A$, which is issued by $\mathbb{E}_A$ in the first phase of $\Pi^{T}$. 

\item {\textbf{Identification of Client Accounts}}
To uniquely identify $\mathbb{C}$'s account at a particular CBDC instance, first it is necessary to specify the globally unique identifier of the CBDC instance. 
The best candidate is the blockchain address of the $\mathbb{IPSC}$ in $PB$ since it is publicly visible and unique in $PB$ (and we denote it by $\mathbb{IPSC}$). 
Then, the identification of $\mathbb{C}$'s relevant account is a pair $\mathbb{C}^{ID} = \{PK_{pb}^{\mathbb{C}}||~ \mathbb{IPSC}\}$.
Note that $\mathbb{C}$ might use the same $PK_{pb}^{\mathbb{C}}$ for the registration at multiple CBDC instances (i.e., equivalent of having accounts in multiple banks); however, to preserve better privacy, making linkage of $\mathbb{C}$'s instances more difficult, we recommend $\mathbb{C}s$ to have dedicated key pair for each instance. 

\end{compactitem}


\paragraph{\textbf{Identity Management of CBDC Instances.}}\label{sec:identity-management-of-CBDC}
To manage identities of all CBDC instances in the system, we need a global registry of their identifiers ($\mathbb{IPSC}$ addresses).
For this purpose, we use the $\mathbb{I}$dentity $\mathbb{M}$anagement $\mathbb{S}$mart $\mathbb{C}$ontract ($\mathbb{IMSC}$) deployed in $PB$ (see program $prog^{\mathbb{IMSC}}$ in \autoref{alg:imsc}).
We propose $\mathbb{IMSC}$ to be managed in either  decentralized or centralized fashion, depending on the \textbf{deployment scenario}: 
\begin{compactitem}
\item {\textbf{Decentralized Scheme}}
In the decentralized scheme, the enrollment of a new CBDC instance must be approved by a majority vote of the already existing instances.
This might be convenient for interconnecting  central banks from various countries/regions.
The enrollment requires creating a request entry at $\mathbb{IMSC}$ (i.e., $newJoin\-Request()$) by a new instance specifying the address of its $\mathbb{IPSC}_{new}$ and $PK_{PB}^{\mathbb{O}_{new}}$.
Then, the request has to be approved by voting of existing instances. 
Prior to voting (i.e., $approveJoinRequest()$), the existing instances should first verify a new instance by certain legal processes as well as by technical means: 
do the remote attestation of $prog^\mathbb{E}_{new}$, verify the inflation rate $i_r$ and the initial value of total issued tokens $t_i$ in $\mathbb{IPSC}$, etc.
Removing of the existing instance also requires the majority of all instances, who should verify legal conditions prior to voting.

\item {\textbf{Centralized Scheme}} 
So far, we were assuming that CBDC instances are equal, which might be convenient for interconnection of central banks from different countries.
However, from the single-country point-of-view, there usually exist only one central bank, which might not be interested in decentralization of its competences (e.g., issuing tokens, setting inflation rate) among multiple commercial banks. 
We respect this and enable our approach to be utilized for such a use case, while the necessary changes are made to $\mathbb{IMSC}_c$ (see \autoref{alg:imsc-centralized}), allowing to have only one CBDC authority that can add or delete instances of (commercial) banks, upon their verification (as outlined above).
The new instances can be adjusted even with token issuance capability and constraints on inflation, which is enforced within the code of $\mathbb{E}$ as well as $\mathbb{IPSC}$.
\end{compactitem}

\begin{algorithm}[t] 
	\scriptsize 
	\caption{$prog^{\mathbb{IMSC}}_d$ of decentralized $\mathbb{IMSC}$ }\label{alg:imsc}
	
	\SetKwProg{func}{function}{}{}
	
	\smallskip
	$\triangleright$ \textsc{Declaration of types and variables:}\\
	\hspace{1em} $msg$: a current transaction that called $\mathbb{IMSC}$,  \\
	\hspace{1em} struct \textbf{InstanceInfo} \{ \\
	\begin{scriptsize}
		\hspace{3em} $operator$ : $PK_{\mathbb{O}}^{PB}$ of the instance's $\mathbb{O}$,\\
		\hspace{3em} $isApproved$: admission status of the instance,\\
		\hspace{3em} $approvals \leftarrow []$ : $\mathbb{O}$s who have approved the instance creation (or deletion),\\	
	\end{scriptsize}
	\hspace{1em} \} \\
	
	\hspace{1em} $instances[]$: a mapping of $\mathbb{IPSC}$ to \textbf{InstanceInfo}, \\
	\smallskip
	$\triangleright$ \textsc{Declaration of functions:}\\
	\func{$Init$($\mathbb{IPSC}s[], \mathbb{O}s[]$) \textbf{public} \Comment{Initial instances are implicitly approved.} }{
		\textbf{assert} $|\mathbb{IPSC}s| = |\mathbb{O}s|$ ; \\
		
		\For{$i \leftarrow 0;\ i \le |\mathbb{O}s|;\ i \leftarrow i + 1$}{ 
			
			$instances[\mathbb{IPSC}s[i]] \leftarrow \textbf{InstanceInfo}(\mathbb{O}s[i], True, [])$; \\
		}
	}
	
	\smallskip
	\func{$newJoinRequest$($\mathbb{IPSC}$) \textbf{public} }{
		\textbf{assert} $instances[\mathbb{IPSC}] = ~\perp$;  \Comment{The instance must not exist yet.}\\
		$instances[\mathbb{IPSC}] \leftarrow \textbf{InstanceInfo}(msg.sender, False, [])$; \\
	}
	\smallskip
	\func{$approveJoinRequest$($\mathbb{IPSC}_{my}, \mathbb{IPSC}_{new}$) \textbf{public} }{
		\textbf{assert} $instances[\mathbb{IPSC}_{my}].operator = msg.sender$; \Comment{Sender's check.}\\
		\textbf{assert} $instances[\mathbb{IPSC}_{my}].isApproved$; \Comment{The sending $\mathbb{O}$ has valid instance.}\\
		\textbf{assert} $!instances[\mathbb{IPSC}_{new}].isApproved$; \Comment{The new instance is not approved.}\\
		
		$r \leftarrow instances[\mathbb{IPSC}_{new}]$; \\
		$r.approvals[msg.sender] \leftarrow True$; \Comment{The sender acknowledges the request.}\\
		\If{$|r.approvals| > \lfloor |instances| / 2 \rfloor$}{
			$r.isApproved \leftarrow True$; \Comment{Majority vote applies.}\\
			$r.approvals \leftarrow []$; \Comment{Switch this field for a potential deletion.} \\
		} 
	}
	\smallskip
	\func{$approveDelete$($\mathbb{IPSC}_{my}, \mathbb{IPSC}_{del}$) \textbf{public} }{
		\textbf{assert} $instances[\mathbb{IPSC}_{my}].operator = msg.sender$; \Comment{Sender's check.}\\
		\textbf{assert} $instances[\mathbb{IPSC}_{my}].isApproved$; \Comment{The sending $\mathbb{O}$ has valid instance.}\\
		\textbf{assert} $instances[\mathbb{IPSC}_{del}].isApproved$; \Comment{An instance to delete must be approved.}\\
		
		$r \leftarrow instances[\mathbb{IPSC}_{del}]$; \\
		$r.approvals[msg.sender] \leftarrow True$; \Comment{The sender acknowledges the request.}\\
		\If{$|r.approvals| > \lfloor |instances| / 2 \rfloor$}{
			\textbf{delete} $r$;
		} 
	}
	
	\smallskip
	
	\vspace{-0.1cm}
\end{algorithm}

\begin{algorithm}[t] 
	\scriptsize 
	\caption{$prog^{\mathbb{IMSC}}_c$ of centralized $\mathbb{IMSC}$ }\label{alg:imsc-centralized}
	
	\SetKwProg{func}{function}{}{}
	
	\smallskip
	$\triangleright$ \textsc{Declaration of types and variables:}\\
	\hspace{1em} $msg$: a current transaction that called $\mathbb{IMSC}$,  \\
	\hspace{1em} $authority$: $\mathbb{IPSC}$ of the authority bank, \\
	\hspace{1em} $authority^\mathbb{O}$: $PK^\mathbb{O}_{pb}$ of $\mathbb{O}$ at authority bank, \\
	\hspace{1em} $instances[]$: a mapping of $\mathbb{IPSC}$ to $PK_{\mathbb{O}}^{PB}$, \\
	\smallskip
	$\triangleright$ \textsc{Declaration of functions:}\\
	\func{$Init$($\mathbb{IPSC}$) \textbf{public} \Comment{Initial instances are implicitly approved.} }{				
		$authority^\mathbb{O} \leftarrow msg.sender$; \\
		$authority \leftarrow \mathbb{IPSC}$; \\
	}
	
	\smallskip
	\func{$addInstance$($\mathbb{IPSC}_{new}, ~\mathbb{O}_{new}$) \textbf{public} }{
		\textbf{assert} $msg.sender =  authority^\mathbb{O}$ ; \Comment{Only the authority can add instances.}\\
		\textbf{assert} $instances[\mathbb{IPSC}_{new}] = ~\perp$;  \Comment{The instance must not exist yet.}\\
		$instances[\mathbb{IPSC}_{new}] \leftarrow \mathbb{O}_{new}$; \\
	}
	\smallskip
	
	\func{$delInstance$($\mathbb{IPSC}_{del}$) \textbf{public} }{
		\textbf{assert} $msg.sender =  authority^\mathbb{O}$ ; \Comment{Only the authority can delete instances.}\\
		\textbf{delete} $instances[\mathbb{IPSC}_{del}]$; \\
	}
	\smallskip
	\vspace{-0.1cm}
\end{algorithm}

\paragraph{\textbf{Token Issuance.}}
With multiple CBDC instances, $\mathbb{C}$s and the public can obtain the total value of issued tokens in the ecosystem of CBDC and compare it to the total value of token supply of all instances.
Nevertheless, assuming only two instances A and B, the value of $t_s$ snapshotted by $\mathbb{IPSC}_A$ might not reflect the recently executed transfers to instance B that might have already made the snapshot of its actual $L_B$ version to $\mathbb{IPSC}_B$, accounting for the transfers.
As a consequence, given a set of instances, the value of the aggregated $t_s$ should always be greater or equal than the corresponding sum of $t_i$:
\begin{eqnarray}
	t_i^A + t_i^B  &\leq& t_s^A + t_s^B. 
\end{eqnarray}
We can generalize it for $N$ instances known by $\mathbb{IMSC}$ as follows:
\begin{eqnarray}\label{eq:token-sum-multi}
	\sum_{\forall X ~\in~ \mathbb{IMSC}} t_i^X   &\leq& \sum_{\forall X ~\in~ \mathbb{IMSC}} t_s^X. 
\end{eqnarray}

\paragraph{\textbf{Inflation Rate.}}
In contrast to a single CBDC instance, multiple independent instances must provide certain guarantees about inflation not only to their clients, but also to each other.
For this purpose, the parameter inflation rate $i_r$ is adjusted to a constant value in the initialization of $\mathbb{IPSC}$ and checked before the instance is approved at $\mathbb{IMSC}$.

If one would like to enable the update of $i_r$ at CBDC instances, a majority vote at $\mathbb{IMSC}$ on a new value could be utilized (or just the vote of authority in the case of centralized scenario). 
Nevertheless, to support even fairer properties, $\mathbb{C}$s of a particular instance might vote on the value of $i_r$ upon its acceptance by $\mathbb{IOMC}$ and before it is propagated to $\mathbb{IPSC}$ of an instance.
Then, based on the new value of $\mathbb{IPSC}.i_r$, $\mathbb{E}.i_r$  can be adjusted as well (i.e., upon the validation by the light client of $\mathbb{E}$). 
However, the application of such a mechanism might depend on the use case, and we state it only as a possible option that can be enabled in our approach.  

\begin{algorithm}[t] 
	\caption{$prog^{\mathbb{IOMC}^{S}}$ of sending $\mathbb{IOMC}^{S}$ }\label{alg:iomc-send}
	\scriptsize
	\SetKwProg{func}{function}{}{}
	
	\smallskip
	$\triangleright$ \textsc{Declaration of types and variables:}\\
	
	\hspace{1em} $\mathbb{E}$, \Comment{The reference to $\mathbb{E}_A$ of sending party. } \\
	\hspace{1em} $msg$, \Comment{The current $\mu$-transaction that called $\mathbb{IOMC}^S$.}  \\
	\hspace{1em} struct \textbf{LockedTransfer} \{ \\
	\hspace{3em} $sender$, \Comment{Sending client $\mathbb{C}_A$.} \\
	\hspace{3em} $receiver$, \Comment{Receiving client $\mathbb{C}_B$.} \\
	\hspace{3em} $receiver\mathbb{IPSC}$, \Comment{The $\mathbb{IPSC}$ contract address of the receiver's instance.}\\
	\hspace{3em} $amount$, \Comment{Amount of tokens sent.}\\
	\hspace{3em} $hashlock$,  \Comment{Hash of the secret of the sending $\mathbb{C}_A$.}\\
	\hspace{3em} $timelock$, \Comment{A timestamp defining the end of validity of the transfer.}\\
	\hspace{3em} $isCompleted$, \Comment{Indicates whether the transfer has been completed.}\\
	\hspace{3em} $isReverted$, \Comment{Indicates whether the transfer has been canceled.}\\ 
	\hspace{1em} \}, \\

	\hspace{1em} $transfers \leftarrow  []$, \Comment{Initiated outgoing transfers (i.e., LockedTransfer).}  \\
	\hspace{1em}\textbf{const} $timeout^{HTLC} \leftarrow 24h$, \Comment{Set the time lock for e.g., 24 hours.}   \\
	
	\smallskip
	$\triangleright$ \textsc{Declaration of functions:}\\
		%
	
	\func{$sendInit$($receiver, receiver\mathbb{IPSC}, hashlock$) \textbf{public payable} }{
		\textbf{assert} $msg.value > 0$; \Comment{Checks the amount of tokens.} \\
		$timelock \leftarrow timestamp.now() + timeout^{HTLC}$; \\
		$t \leftarrow \textbf{LockedTransfer}(msg.sender, receiver, receiver\mathbb{IPSC}, $\\
		\hspace{1em} $msg.value$, $hashlock, timelock, False, False)$; \Comment{A new receiving transfer.} \\
		$transfers.append(t)$; \\
		
		\textbf{Output} $("sendInitialized", transferID \leftarrow |transfers| - 1))$; \\
	}
	\smallskip
	\func{$sendCommit$($transferID, secret, extTransferID$) \textbf{public} }{
		\textbf{assert} $transfers[transferID] \neq \perp$;  \Comment{Check the existence of locked transfer.}\\
		$t \leftarrow transfers[transferID]$; \\
		\textbf{assert} $t.hashlock = h(secret)$;  \Comment{Check the secret.}\\
		\textbf{assert} $!t.isCompleted ~\wedge~ !t.isReverted$;  \Comment{Test if the transfer is still pending.}\\
		$t.isCompleted \leftarrow True$; \\
		\textbf{burn} t.amount; \Comment{Burn tokens.}\\
		$\mathbb{E}.t_s \leftarrow \mathbb{E}.t_s - t.amount$; \Comment{Decrease the total supply of the instance.} \\
		\textbf{Output} $("sendCommitted",transferID,$ $extTransferID,$ $t.receiver,$ $t.receiver\mathbb{IPSC},$ $t.amount)$;\\
	}
	\smallskip
	\func{$sendRevert$($transferID$) \textbf{public} }{
		\textbf{assert} $transfers[transferID] \neq \perp$;  \Comment{Check the existence of locked transfer.}\\
		$t \leftarrow transfers[transferID]$; \\
		\textbf{assert} $!t.isCompleted ~\wedge~ !t.isReverted$;  \Comment{Test the transfer is still pending.}\\
		\textbf{assert} $t.timelock \leq timestamp.now()$;  \Comment{Check the HTLC expiration.}\\
		$transfer(t.amount, t.sender)$; \Comment{Returning tokens back to the sender.}\\
		$t.isReverted \leftarrow True$; \\
		\textbf{Output}$("sendReverted", transferID)$;\\
	}
	\smallskip
	\vspace{-0.1cm}
\end{algorithm}

\paragraph{\textbf{Interoperability.}}
The interoperability logic itself is provided by our protocol $\Pi^{T}$ that utilizes $\mathbb{I}$nter$\mathbb{O}$perability $\mathbb{M}$icro  $\mathbb{C}$on\-tracts $\mathbb{IOMC}^S$ and $\mathbb{IOMC}^R$, which serve for sending and receiving tokens, respectively.
Therefore, in the context of $\mathbb{E}$-isolated environment these $\mu$-contracts allow to mint and burn tokens, reflecting the changes in $t_s$ after sending or receiving tokens between CBDC instances.
Both $\mu$-contracts are deployed in $\mathbb{E}$ by each $\mathbb{O}$ as soon as the instance is created, while $\mathbb{E}$ records their addresses that can be obtained and attested by $\mathbb{C}$s.
We briefly review these contracts in the following, while they detailed usage will be demonstrated in \autoref{sec:our-protocol}. 

\begin{compactitem}

\item {\textbf{The Sending $\mathbb{IOMC}^S$}}
The sending $\mathbb{IOMC}^S$ (see \autoref{alg:iomc-send}) is based on Hash Time LoCks (HTLC), thus upon initialization of transfer by $hashlock$ provided by $\mathbb{C}_A$ (i.e., $hashlock \leftarrow h(secret)$) and calling $sendInit(hash\-lock, \ldots)$,  $\mathbb{IOMC}^S$ locks transferred tokens for the timeout required to complete the transfer by $send\-Commit\-(secret, \ldots)$. 
If tokens are not successfully transferred to the recipient of the external instance during the timeout, they can be recovered by the sender (i.e., $sendRevert()$).\footnote{Note that setting a short timeout might prevent the completion of the protocol.}
If tokens were sent successfully from $\mathbb{C}_A$ to  $\mathbb{C}_B$, then instance A burns them within $sendCommit()$ of $\mathbb{IOMC}^S$ and deducts them from $t_s$.
Note that deducting $t_s$ is a special operation that cannot be executed within standard $\mu$-contracts, but $\mathbb{IOMC}$ contracts are exceptions and can access some variables of~$\mathbb{E}$.

\item {\textbf{The Receiving $\mathbb{IOMC}^R$}}
The receiving  $\mathbb{IOMC}^R$ (see \autoref{alg:iomc-recv}) is based on Hashlocks (referred to as HLC) and works pairwise with sending $\mathbb{IOMC}^S$ to facilitate four phases of our interoperable transfer protocol $\Pi_{T}$ (described below).
After calling $\mathbb{IOMC^S}.sendInit()$, incoming initiated transfer is recorded at $\mathbb{IOMC}^R$ by $receiveInit(hash\-lock, \ldots)$.
Similarly, after executing token deduction at instance A (i.e., $\mathbb{IOMC^S}.send\-Commit\-(secret, \ldots)$), incoming transfer is executed at $\mathbb{IOMC}^R$ by $receiveCommit(secret, \ldots)$ that mints tokens to $\mathbb{C}_B$ and increases $t_s$.
Similar to $\mathbb{IOMC}^S$, minting tokens and increasing $t_s$ are special operations requiring access to $\mathbb{E}$, which is exceptional for $\mathbb{IOMC}$.
The overview of $\Pi_T$ is depicted in \autoref{fig:protocol-simplified}.
\end{compactitem}

\begin{algorithm}[t] 
	\caption{$prog^{\mathbb{IOMC}^{R}}$ of receiving $\mathbb{IOMC}^{R}$}\label{alg:iomc-recv}
	\scriptsize
	\SetKwProg{func}{function}{}{}
	
	\smallskip
	$\triangleright$ \textsc{Declaration of types and variables:}\\
	\hspace{1em} $\mathbb{E}$, \Comment{The reference to $\mathbb{E}_B$ of receiving party.} \\
	
	\hspace{1em} struct \textbf{LockedTransfer} \{ \\
	\begin{scriptsize}
		\hspace{3em} $sender$, \Comment{Sending client $\mathbb{C}_A$.} \\
		\hspace{3em} $sender\mathbb{IPSC}$, \Comment{The IPSC contract address of the sender's instance.}\\
		\hspace{3em} $receiver$,\Comment{Receiving client $\mathbb{C}_B$.}  \\
		\hspace{3em} $amount$, \Comment{Amount of transferred tokens.}\\
		\hspace{3em} $hashlock$, \Comment{Hash of the secret of the sending $\mathbb{C}_A$.}\\
		\hspace{3em} $isCompleted$, \Comment{Indicates whether the transfer has been completed.}\\    	
	\end{scriptsize}
	\hspace{1em}	\}, \\
	
	\hspace{1em} $transfers \leftarrow  []$, \Comment{Initiated incoming transfers (i.e., LockedTransfer).}  \\	
	\smallskip
	$\triangleright$ \textsc{Declaration of functions:}
	
	\smallskip
	\func{$receiveInit$($sender, sender\mathbb{IPSC}, hashlock, amount$) \textbf{public} }{
		\textbf{assert} $amount > 0$;  \\
		$t \leftarrow \textbf{LockedTransfer}(sender, sender\mathbb{IPSC}, msg.sender, amount,$\\
		\hspace{6em} $hashlock, False)$; \Comment{Make a new receiving transfer entry.} \\
		$transfers.append(t)$; \\
		\textbf{Output}$("receiveInitialized", transferID \leftarrow |transfers| - 1)$;\\
	}
	\smallskip
	\func{$receiveCommit$($transferID, secret$) \textbf{public} }{
		\textbf{assert} $transfers[transferID] ~\neq~ \perp$;  \Comment{Check the existence of transfer entry.}\\
		$t \leftarrow transfers[transferID]$; \\
		\textbf{assert} $t.hashlock = h(secret)$;  \Comment{Check the secret.}\\
		\textbf{assert} $!t.isCompleted$;  \Comment{Check whether the transfer is pending.}\\
		
		$\mathbb{E}.mint(this, t.amount)$; \Comment{Call $\mathbb{E}$ to mint tokens on $\mathbb{IOMC}_R$.} \\
		$\mathbb{E}.t_s \leftarrow \mathbb{E}.t_s + t.amount$ ; \Comment{Increase the total supply of the instance.} \\
		$transfer(t.amount, t.receiver)$; \Comment{Credit tokens to the recipient.}\\
		$t.isCompleted \leftarrow True$; \\        
		\textbf{Output}$("receiveCommited", transferID)$;\\
		
	}
	\smallskip
	\vspace{-0.1cm}
\end{algorithm}

\subsubsection{Interoperable Transfer Protocol $\mathbf{\Pi^T}$}\label{sec:our-protocol}
In this section we outline our instance-to-instance interoperable transfer protocol $\Pi^{T}$ for inter-CBDC transfer operation, which is inspired by the atomic swap protocol (see \autoref{sec:atomicswap}), but in contrast to the exchange-oriented approach of atomic swap, $\Pi^{T}$ focuses only on one-way atomic transfer between instances of the custodial environment of CBDC, where four parties are involved in each transfer -- a sending $\mathbb{C}_A$ and $\mathbb{O}_A$ versus a receiving $\mathbb{C}_B$ and $\mathbb{O}_B$.
The goal of $\Pi^{T}$ is to eliminate any dishonest behavior by $\mathbb{C}s$ or $\mathbb{O}$s that would incur token duplication or the loss of tokens.
\begin{figure}[hb]
	\centering
	\includegraphics[width=0.6\columnwidth]{./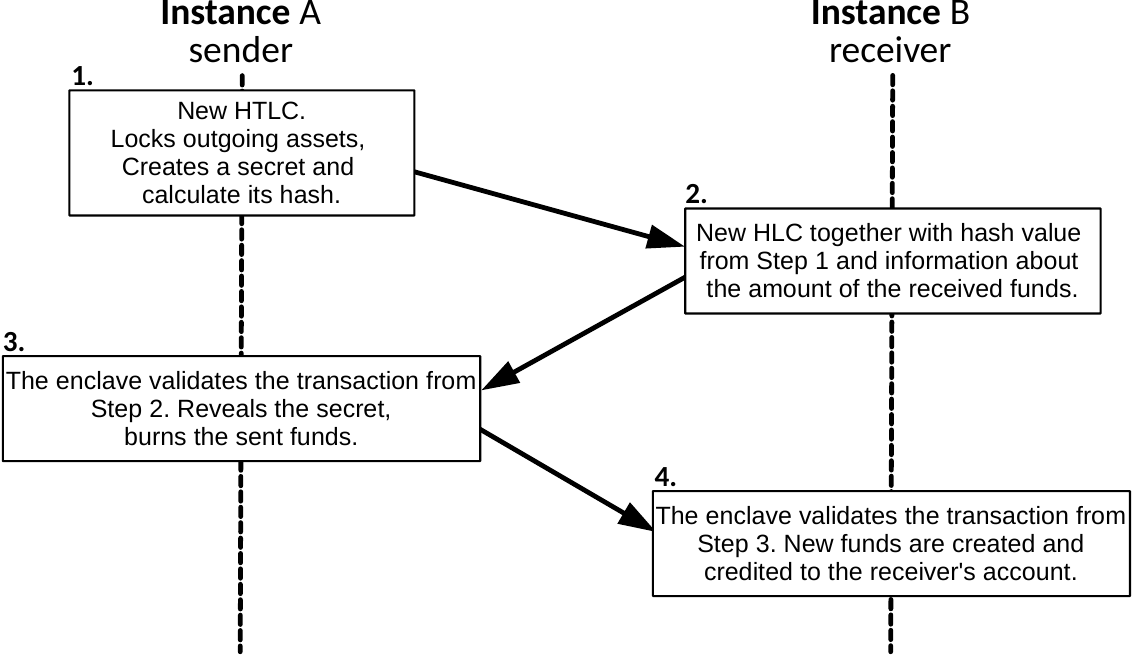}
	\caption{Overview of the protocol $\Pi^{T}$, consisting of 4 phases. 
	}
	\label{fig:protocol-simplified}
\end{figure}




	\begin{figure*}[th]
		\includegraphics[width=0.95\textwidth]{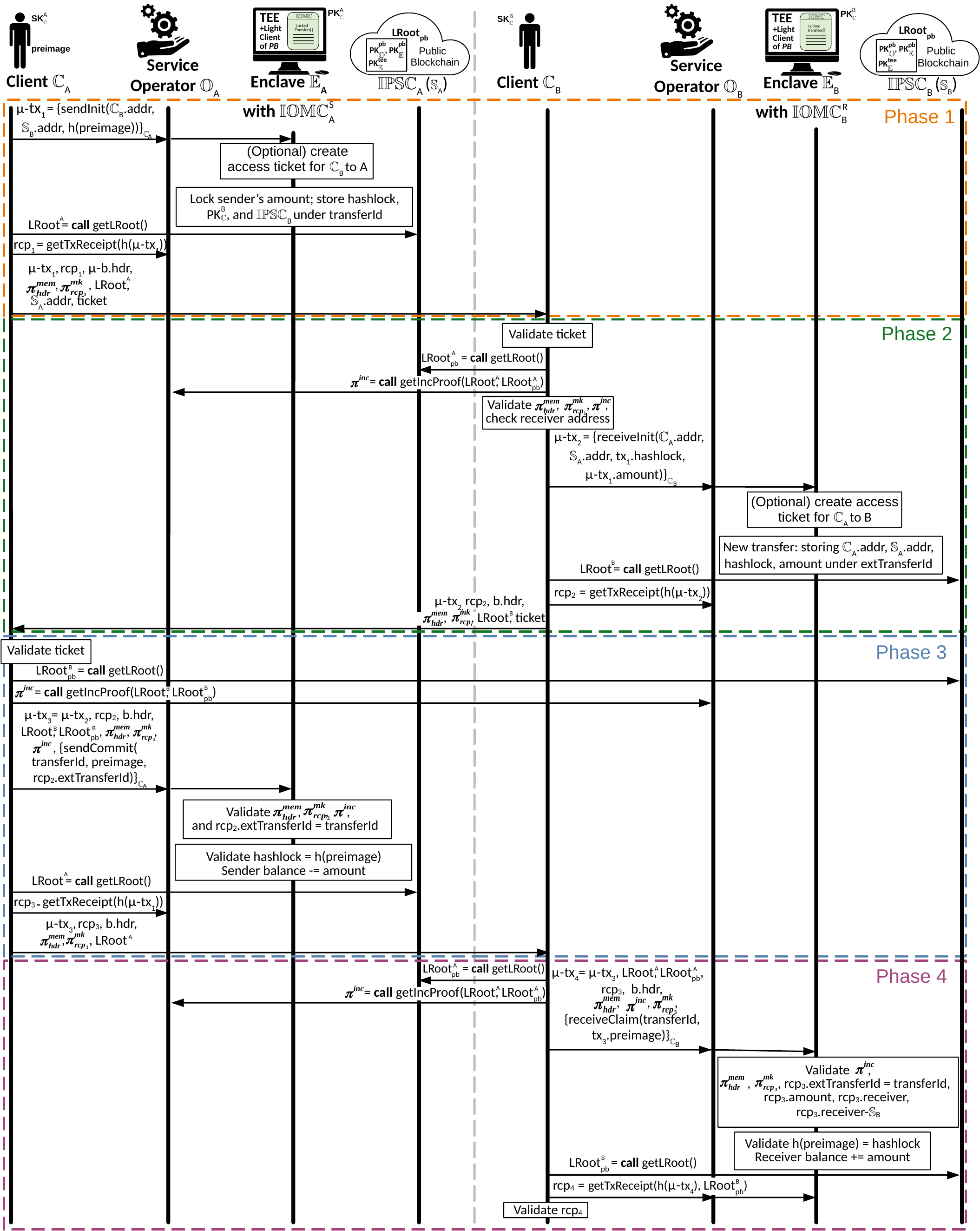}
		\caption{The details of the proposed interoperability protocol $\Pi^{T}$.}
		\label{fig:protocol-full}
	\end{figure*}
	\clearpage



To execute $\Pi^{T}$ it is necessary to inter-connect $\mathbb{E}$s of two instances involved in a transfer.
However, $\mathbb{E}$ does not allow direct communication with the outside world, and therefore it is necessary to use an intermediary.
One solution is to involve $\mathbb{O}$s but they might be overwhelmed with other activities, updating the ledger by executing $\mu$-transactions, and moreover, they might not have direct incentives to execute inter-CBDC transfers.
Therefore, we argue that in contrast to the above option, involving $\mathbb{C}s$ as intermediaries has two advantages: 
(1) elimination of the synchronous communication overhead on $\mathbb{O}$s and 
(2) enabling $\mathbb{C}$s to have a transparent view about the status of the transfer and take action if required.
%
The details of the protocol are depicted in \autoref{fig:protocol-full}.

\paragraph{Phase 1 -- Client $\mathbb{C}_A$ Initiates the Protocol.}
\label{design:phase1}
The client $\mathbb{C}_A$ creates a $\mu$-tx$_1$ with the amount being sent, which invokes the \texttt{sendInit()} of $\mathbb{IOMC}_A$ with  
arguments containing the address of the external client $\mathbb{C}_B$, the address of $\mathbb{IPSC}_B$ (denoted as $\mathbb{S}_B$ in \autoref{fig:protocol-full} for brevity), and the hash of the secret that is created by $\mathbb{C}_A$.
$\mathbb{C}_A$ sends signed $\mu$-tx$_1$ to $\mathbb{O}_A$ who forwards it to the $\mathbb{E}_A$.
Before executing the $\mu$-tx$_1$, $\mathbb{E}_A$ ensures that the external recipient (i.e., $\mathbb{C}_B$) has the access ticket already issued and valid, enabling her to post censorship resolution requests to $\mathbb{IPSC}_A$ (if needed).
The access ticket should be valid for at least the entire period defined by the HTLC of $\mathbb{IOMC}_A$.
%
In the next step, a $\mu$-tx$_1$ is executed by $\mathbb{E}_A$, creating a new transfer record with $transferId$ in $\mathbb{IOMC}_A$.
During the execution, $\mathbb{C}_A$'s tokens are transferred (and thus locked) to the $\mathbb{IOMC}_A$'s address.
%
$\mathbb{C}_A$ waits until the new version of $L_A$ is snapshotted to $\mathbb{IPSC}_A$, and then obtains $LRoot^{A}$ from it.
Then $\mathbb{C}_A$ asks $\mathbb{O}_A$ for the execution receipt $rcp_1$ of $\mu$-tx$_1$ that also contains a set of proofs ($\pi_{hdr}^{mem}$, $\pi_{rcp_1}^{mk}$) and the header of the $\mu$-block that includes  $\mu$-tx$_1$. 
In detail, $\pi_{hdr}^{mem}$ is the inclusion proof of the $\mu$-block \textit{b} in the current version of $L_A$; $\pi_{rcp_1}^{mk}$ is the Merkle proof proving that $rcp_1$ is included in \textit{b} (while $rcp_1$ proves that $\mu$-tx$_1$ was executed correctly).
The mentioned proofs and the receipt are provided to $\mathbb{C}_B$, who verifies that $\mu$-tx$_1$ was executed and included in the $L_A$'s version that is already snapshotted to $\mathbb{IPSC}_A$,  thus irreversible (see below).


\paragraph{Phase 2 -- $\mathbb{C}_B$ Initiates Receive.}
\label{design:phase2}
First, $\mathbb{C}_B$ validates an access ticket to $\mathbb{IPSC}_A$ using the enclave $\mathbb{E}_A$'s public key accessible in that smart contract.
Next, $\mathbb{C}_B$ obtains the root hash $LRoot_{pb}^{A}$ of $L_A$ to ensure that $\mathbb{C}_B$'s received state has been already published in $\mathbb{IPSC}_A$, and thus contains $\mu$-tx$_1$.
After obtaining $LRoot_{pb}^{A}$, $\mathbb{C}_B$ forwards it along with the root $LRoot^{A}$ obtained from $\mathbb{C}_A$ to $\mathbb{O}_A$, who creates an incremental proof $\pi^{inc}$ of $\langle LRoot^{A}, LRoot_{pb}^{A}\rangle$.
%
Once the proof $\pi^{inc}$ has been obtained and validated, the protocol can proceed to validate the remaining proofs sent by the client $\mathbb{C}_A$ along with verifying that the receiving address belongs to $\mathbb{C}_B$.
Next, $\mathbb{C}_B$ creates $\mu$-$tx_2$, invoking the method \texttt{receiveInit()} with the arguments: the address of $\mathbb{C}_A$ obtained from $\mu$-$tx_1$,\footnote{Note that we assume that the address is extractable from the signature.} the address $\mathbb{IPSC}_A.addr$ of $\mathbb{C}_A$'s instance, the hash value of the secret, and the amount of crypto-tokens being sent.
$\mathbb{C}_B$ sends $\mu$-$tx_2$ to $\mathbb{O}_B$, who forwards it to $\mathbb{E}_B$ for processing.
During processing of $\mu$-$tx_2$, $\mathbb{E}_B$ determines whether the external client (from its point of view -- i.e., $\mathbb{C}_A$) has an access ticket issued with a sufficiently long validity period; if not, one is created.
Subsequently, $\mathbb{E}_B$ creates a new record in $\mathbb{IOMC}_B$ with $extTransferId$.
Afterward, $\mathbb{C}_B$ retrieves the $LRoot^{B}$ from $L_B$ and requests the execution receipt  $rcp_2$ from $\mathbb{O}_B$, acknowledging that the $\mu$-$tx_2$ has been executed. 
Finally, $\mathbb{C}_B$ sends a message $\mathbb{C}_A$ with $\mu$-$tx_2$ and cryptographic proofs $\pi_{hdr}^{mem}$, $\pi_{rcp_2}^{mk}$, the execution receipt of $\mu$-$tx_2$, the block header $b$ in which the $\mu$-$tx_2$ was included, $LRoot^B$ (i.e., the root value of $L_B$ after $\mu$-$tx_2$ was executed), and the valid client access ticket for $\mathbb{C}_A$.

\paragraph{Phase 3 -- Confirmation of Transfer by $\mathbb{C}_A$.}
\label{design:phase3}
First, $\mathbb{C}_A$ validates the received access ticket to $\mathbb{IPSC}_B$. 
Next, $\mathbb{C}_A$ obtains the snapshotted root hash $LRoot_{pb}^{B}$ of $L_B$ from $\mathbb{IPSC}_B$.
As in the previous phases, it is necessary to verify that the version of $L_B$ that includes $\mu$-$tx_2$ is represented by $LRoot_{pb}^B$ (thus is irreversible). 
Next, both root hashes ($LRoot^B$ and $LRoot_{pb}^B$) are sent to the external operator $\mathbb{O}_B$, which produces the incremental proof $\pi^{inc}$ from them.
%
Next, $\mathbb{C}_A$ creates $\mu$-$tx_3$ that consists of invoking the \texttt{sendCommit()} method at $\mathbb{E}_A$ with the arguments containing the published secret (i.e., $preimage$) and the record identifier of the transfer at local instance (i.e., $transferId$) as well as the external one (i.e., $extTransferId$).
Along with the invocation of \texttt{sendCommit()}, $\mu$-$tx_3$ also wraps $\pi^{inc}$ with its versions ($LRoot_{pb}^{B}$ and $LRoot^{B}$), $\mu$-$tx_2$, its execution receipt $rcp_2$ with its Merkle proof $\pi_{rcp_2}^{mk}$, $b.hdr$ -- the header of the block that included $\mu$-$tx_2$, and its membership proof $\pi^{mem}_{hdr}$ of $L_B$.
Next, $\mathbb{C}_A$ sends $\mu$-$tx_3$ to $\mathbb{E}_A$ through $\mathbb{O}_A$.
During the execution of $\mu$-$tx_3$, $\mathbb{E}_A$ validates the provided proofs and the equality of transfer IDs from both sides of the protocol.
Note that to verify $\pi^{mem}_{hdr}$, $\mathbb{E}_A$ uses its light client to $L_B$.
$\mathbb{E}_A$ then validates whether $\mathbb{C}_A$'s provided secret corresponds to the hashlock recorded in the 1st phase of the protocol, and if so, it burns the sent balance of the transfer.

Next, $\mathbb{C}_A$ waits until the new version of $L_A$ is snapshotted to $\mathbb{IPSC}_A$, and then obtains $LRoot^{A}$ from it.
Then $\mathbb{C}_A$ asks $\mathbb{O}_A$ for the execution receipt $rcp_3$ of $\mu$-tx$_3$ that also contains a set of proofs ($\pi_{hdr}^{mem}$, $\pi_{rcp_3}^{mk}$) and the header of the $\mu$-block that includes  $\mu$-tx$_3$. 
The proofs have the same interpretation as in the end of the 1st phase. 
The mentioned proofs and the receipt are provided to $\mathbb{C}_B$, who verifies that $\mu$-tx$_1$ was executed and included in the $L_A$'s version that is already snapshotted to $\mathbb{IPSC}_A$,  thus irreversible.

\paragraph{Phase 4 -- Acceptance of Tokens by $\mathbb{C}_B$.}
\label{design:phase4}
After receiving a message from client $\mathbb{C}_A$, the client $\mathbb{C}_B$ obtains $LRoot_{pb}^A$ from $\mathbb{IPSC}_A$ and then requests the incremental proof between versions $\langle LRoot^{A}, LRoot_{pb}^{A}\rangle$ from $\mathbb{O}_A$.
Then, $\mathbb{C}_B$ creates $\mu$-$tx_4$ invoking the \texttt{receiveClaim()} function at $\mathbb{E}_B$ with $transferId$ and the disclosed secret by $\mathbb{C}_A$ as the arguments.
Moreover, $\mu$-$tx_4$ contains remaining items received from $\mathbb{C}_A$. 
Then, $\mu$-$tx_4$ is sent to $\mathbb{O}_B$, who forwards it to $\mathbb{E}_B$.
During the execution of $\mu$-$tx_4$, $\mathbb{E}_B$ verifies the provided proofs, the equality of transfer IDs from both sides of the protocol, the amount being sent, and the receiver of the transfer (i.e., $\mathbb{C}_B$ || $\mathbb{IPSC}_B$).
Note that to verify $\pi^{mem}_{hdr}$, $\mathbb{E}_B$ uses its light client to $L_A$.
$\mathbb{E}_A$ then validates whether $\mathbb{C}_A$'s provided secret corresponds to the hashlock recorded in the 2nd phase of the protocol, and if so, it mints the sent balance of the transfer on the receiver's account $\mathbb{C}_B$.
Finally, $\mathbb{C}_B$ verifies that $\mu$-$tx_4$ was snapshotted at $\mathbb{IPSC}_B$, thus is irreversible. 
In detail, first $\mathbb{C}_B$ obtains $LRoot_{pb}^B$ from $\mathbb{IPSC}_B$ and then asks $\mathbb{O}_B$ to provide her with the execution receipt $rcp_4$ of $\mu$-$tx_4$ in the version of $L_B$ that is equal or newer than $LRoot_{pb}^B$. 
Then, $\mathbb{C}_B$ verifies $rcp_4$, which completes the protocol.

\subsection{Evaluation}
We used \textit{Ganache}\footnote{\url{https://github.com/trufflesuite/ganache-cli}} and \textit{Truffle},\footnote{\url{https://github.com/trufflesuite/truffle}} to develop $\mathbb{IOMC}$, $\mathbb{IPSC}$, and $\mathbb{IMSC}$ contracts. 
In addition, using the \textit{Pexpect}\footnote{\url{https://github.com/pexpect/pexpect}} tool, we tested the intercommunication of the implemented components and validated the correctness of the implemented interoperability protocol.
The tool enabled the parallel execution and control of numerous programs (in this case, multiple Aquareum instances and client programs) to check the correctness of the expected output.

The computational cost of executing the operations defined in $\mathbb{IOMC}$ and $\mathbb{IMSC}^X$ contracts is presented in \autoref{table:gas-iomc-send}, \autoref{table:gas-iomc-recv}, and \autoref{table:gas-imsc}.\footnote{Note that we do not provide the gas measurements for $\mathbb{IPSC}$ since these are almost the same as in Aquareum~\cite{homoliak2020aquareum}.}
We optimized our implementation to minimize the storage requirements of smart contract platform.
On the other hand, it is important to highlight that $\mathbb{IOMC}^X$ $\mu$-contracts are executed on a private ledger corresponding to the instance of CBDC, where the cost of gas is minimal or negligible as compared to a public blockchain.
Other experiments  are the subject of our future work.

\begin{table}[t]
	\centering
	\scriptsize
	\begin{tabular}
		{  r  c  c  c  c  } 
		\toprule
		\textbf{Function} & \vbox{\hbox{{\rotatebox[origin=c]{45}{constructor}}}\vspace{2pt}} & \vbox{\hbox{{\rotatebox[origin=c]{45}{sendInitialize}}}\vspace{2pt}} & \vbox{\hbox{{\rotatebox[origin=c]{45}{sendCommit}}}\vspace{2pt}} & \vbox{\hbox{{\rotatebox[origin=c]{45}{sendRevert}}}\vspace{2pt}}  \\
		\midrule
		\textbf{Deployment} & 901 509 & 160 698 & 64 629 & 60 923 \\
		\textbf{Execution} & 653 689 & 134 498 & 42 717 & 39 523 \\
		\bottomrule
	\end{tabular}
	\caption{The cost of deployment and invocation of functions in the sending $\mathbb{IOMC}^S$ $\mu$-contract in gas units (CBDC private ledger).}
	\label{table:gas-iomc-send}
\end{table}

\begin{table}[t]
	\centering
	\scriptsize
	\renewcommand{\arraystretch}{1.5}
	\begin{tabular}
			{  r  c  c  c  c  } 
			\toprule
			\textbf{Function} & \vbox{\hbox{{\rotatebox[origin]{45}{constructor}}}\vspace{2pt}} & \vbox{\hbox{{\rotatebox[origin]{45}{receiveInit}}}\vspace{2pt}} & \vbox{\hbox{{\rotatebox[origin]{45}{receiveClaim}}}\vspace{2pt}} & \vbox{\hbox{{\rotatebox[origin]{45}{fund}}}\vspace{2pt}}  \\
			\midrule
			\textbf{Deployment} & 716 330 & 139 218 & 61 245 & 23 168 \\
			\textbf{Execution} & 509 366 & 112 762 & 39 653 & 1 896 \\
			\bottomrule
		\end{tabular}
		\caption{The cost of deployment  and invocation of functions in the receiving $\mathbb{IOMC}^R$ $\mu$-contract in units of gas (CBDC private ledger).}
		\label{table:gas-iomc-recv}
	\end{table}
	
	\begin{table}[t]
		\centering
		\scriptsize
		\begin{tabular}
			{  r  c  c  c  c  } 
			\toprule
			\textbf{Function} & \vbox{\hbox{{\rotatebox[origin=c]{45}{constructor}}}\vspace{2pt}} & \vbox{\hbox{{\rotatebox[origin=c]{45}{newJoinRequest}}}\vspace{2pt}} & \vbox{\hbox{{\rotatebox[origin=c]{45}{approveRequest}}}\vspace{2pt}} & \vbox{\hbox{{\rotatebox[origin=c]{45}{isApproved}}}\vspace{2pt}}  \\
			\midrule
			\textbf{Deployment} & 830 074 & 48 629 & 69 642 & 0 \\
			\hline
			\textbf{Execution} & 567 838 & 25 949 & 45 554 & 0 \\
			\bottomrule
		\end{tabular}
		\caption{The invocation cost of functions in $\mathbb{IMSC}$ smart contract in units of gas (Ethereum public blockchain).}
		\label{table:gas-imsc}
	\end{table}

\subsection{Security Analysis}\label{sec:cbdc:secanalysis}
In this section, we analyze our approach in terms of security-oriented features and requirements specified in \autoref{sec:cbdc:problem}.
In particular, we focus on resilience analysis of our approach against adversarial actions that the malicious CBDC instance (i.e., its operator $\mathcal{O}$) or malicious client (i.e., $\mathcal{C}$) can perform to violate the security requirements.


\subsubsection{Single Instance of CBDC}
For properties common with Aquareum (i.e., correctness of operation execution, integrity, verifiability, non-equivocation, censorship evidence,  and privacy) as well as for security analysis of TEE and time to finality, we refer the reader to \autoref{sec:aquareum:analysis}.

\begin{theorem}\label{theorem:token-issuance}
	(Transparent Token Issuance)  $\mathcal{O}$ is unable to issue or burn any tokens without leaving a publicly visible evidence.
\end{theorem}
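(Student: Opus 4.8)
\textbf{Proof proposal for Theorem~\ref{theorem:token-issuance} (Transparent Token Issuance).}

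The plan is to reduce the claim to two facts already available in the excerpt: (i) every state transition of the ledger $L$ is executed exclusively inside the attested enclave code $prog^{\mathbb{E}}$, and (ii) every accepted state transition must be accompanied by an $\mathbb{E}$-signed version transition tuple that is relayed to $\mathbb{IPSC}$ via $snapshotLedger()$. The key modification over plain Aquareum is that this tuple has been extended with the fields $t_i$ (total issued tokens) and $t_s$ (total supply), and that $\mathbb{IPSC}$ stores and re-exposes $t_i$ publicly on the public blockchain $PB$. Hence any change in the number of tokens in circulation is, by construction, reflected in a value that is visible to everyone inspecting $PB$.

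First I would argue the enclave side. Token creation and destruction can only occur through operations executed in $\mathbb{E}$'s virtual machine. For the single-instance case, the only such operations touching $t_i$ / $t_s$ are the special $\mathbb{IOMC}$ routines (and, by definition, no token is minted or burned by ordinary $\mu$-transactions, since those run in the standard VM which has no access to $t_i$, $t_s$). Because $prog^{\mathbb{E}}$ is public and remotely attested by clients (Theorem~\ref{theorem:correctness}), $\mathcal{O}$ cannot tamper with this logic: any execution path that changes $t_i$ also updates $t_s$ accordingly and the new value is carried into the signed version transition tuple output by $Exec()$. Thus no ``silent'' modification of $t_i$ inside $\mathbb{E}$ is possible that would not be exported in the next snapshot.

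Next I would argue the $\mathbb{IPSC}$ side. By the consistency/verifiability argument of Aquareum (Theorem~\ref{theorem:consistency} and the verifiability theorem), $\mathcal{O}$ cannot advance the publicly recorded version of $L$ without presenting a valid $\mathbb{E}$-signed version transition pair; $\mathbb{IPSC}.snapshotLedger()$ rejects any tuple not signed by $PK_{\mathbb{E}}^{pb}$ and not chaining onto the stored $LRoot_{pb}$. When such a tuple is accepted, the contract also overwrites its stored $t_i$ (and, when $issueAuthority$ holds, checks $\_\_meetsInflationRate$); when the instance has no issuance authority it even enforces $t_i = \_t_i$. So the only way for $\mathcal{O}$ to change the real token supply while keeping the public $t_i$ stale is to never snapshot again — but then the verifiability property is violated (clients holding execution receipts for token-affecting operations would detect that their operations are not reflected in any $\mathbb{IPSC}$-recorded version), and moreover the censorship-evidence mechanism lets any affected client force a resolution on-chain. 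Combining these, I would conclude: either the issuance/burn is reflected in the publicly readable $t_i$ at $\mathbb{IPSC}$, or $\mathcal{O}$ has produced publicly visible evidence of stalling/censorship; in both cases the act is publicly evident, which is the statement.

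The main obstacle I anticipate is making airtight the claim that \emph{no} code path other than the $\mathbb{IOMC}$-style routines can alter the token totals, and that the inflation-rate check in $\mathbb{E}$ (via its light client validating $\mathbb{IPSC}.i_r$) is consistent with the on-chain check in $\mathbb{IPSC}$. This requires a careful inventory of $prog^{\mathbb{E}}$'s VM semantics (the excerpt only sketches $Exec$, $processTxs$, and the $\mathbb{IOMC}$ contracts) and an argument that the ``special'' variables $t_i$, $t_s$ are write-protected from ordinary contract storage — i.e. that the VM does not expose them to attacker-deployed $\mu$-contracts. I would handle this by appealing to the isolation of $\mathbb{E}$'s memory (the security-vs-performance discussion in \autoref{sec:aquareum:analysis}) and by treating the $\mathbb{IOMC}$ contracts as the unique privileged interface, exactly as stated in \autoref{sec:logging-cbdc}; a fully rigorous version would pin this down as an explicit invariant of $prog^{\mathbb{E}}$.
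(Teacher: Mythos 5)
Your proposal is correct and follows essentially the same route as the paper's justification: the paper likewise argues that $t_i$ is updated only within the attested, trusted code of $\mathbb{E}$, that every version transition pair posted to $\mathbb{IPSC}$ carries the $\mathbb{E}$-signed $t_i$ and $t_s$, and that the history of issuance can therefore be parsed from the publicly visible transactions on $PB$. Your additional case analysis (the operator refusing to snapshot, handled via verifiability and censorship evidence) and your flagged concern about write-protection of $t_i$, $t_s$ from ordinary $\mu$-contracts go beyond the paper's short justification, which simply asserts these points by appeal to the trusted enclave code.
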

\begin{proof1}
	All issued tokens of CBDC are publicly visible at $\mathbb{IPSC}$ since each transaction posting a new version transition pair also contains $\mathbb{E}$-signed information about the current total issued tokens $t_i$ and total supply of the instance $t_s$,\footnote{Note that in the case of single CBDC instance $t_i = t_s$} while $t_i$ was updated within the trusted code of  $\mathbb{E}$.
	The information about $t_i$ is updated at $\mathbb{IPSC}$ along with the new version of $L$.
	Note that the history of changes in total issued tokens $t_i$ can be parsed from all transactions updating version of $L$ published by $\mathcal{O}$ to $PB$.
\end{proof1}

\subsubsection{Multiple Instances of CBDC}
In the following, we assume two CBDC instances A and~B.

\begin{theorem}\label{theorem:token-multi-inter}
	(Atomic Interoperability I) Neither $\mathcal{O}_A$ (operating $A$) nor $\mathcal{O}_B$ (operating $B$) is unable to steal any tokens during the inter-bank CBDC transfer. 
	%
\end{theorem}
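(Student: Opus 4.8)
The plan is to argue by exhaustive analysis of the four phases of $\Pi^T$ (see \autoref{fig:protocol-full}), showing that at every point in the protocol the tokens being transferred are in exactly one of three states: (a) locked in $\mathbb{IOMC}^S_A$ (still owned by $\mathbb{C}_A$, recoverable via \texttt{sendRevert()} after the HTLC timeout), (b) burned at instance $A$ and not yet minted at $B$ (a transient ``in-flight'' state whose duration is bounded by liveness of $\mathbb{E}_B$), or (c) minted at $B$ on $\mathbb{C}_B$'s account and burned at $A$. The key observation is that the burn at $A$ (inside \texttt{sendCommit()} of $\mathbb{IOMC}^S_A$) and the mint at $B$ (inside \texttt{receiveClaim()}/\texttt{receiveCommit()} of $\mathbb{IOMC}^R_B$) are both gated on the same $preimage$ of a single $hashlock$ chosen by $\mathbb{C}_A$, and both are executed only by the trusted enclave code (\autoref{theorem:correctness}, correctness of operation execution). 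I would then show that neither operator can deviate from this: $\mathbb{E}_A$ only burns after verifying, via its light client to $L_B$, the membership proof $\pi^{mem}_{hdr}$ that $\mu$-$tx_2$ (the receive-initialization binding $\mathbb{C}_B$ and the amount) is included in a version of $L_B$ already snapshotted to $\mathbb{IPSC}_B$; and $\mathbb{E}_B$ only mints after verifying, via its light client to $L_A$, that $\mu$-$tx_3$ (the burn, revealing $preimage$) is included in a snapshotted version of $L_A$. Since versions snapshotted to $\mathbb{IPSC}$ are irreversible (by verifiability and non-equivocation, \autoref{theorem:consistency} and \autoref{theorem:non-equivocation}), neither operator can ``unwind'' a burn or forge an inclusion.

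Concretely I would carry out the steps in this order. First, establish that $\mathcal{O}_A$ cannot mint tokens on $A$ out of the transfer: minting of $t_s$ is a privileged operation callable only from $\mathbb{IOMC}^R$ inside $\mathbb{E}_A$, and in the transfer flow $\mathbb{E}_A$ only ever \emph{burns} (never mints) for this $transferId$, so $A$'s supply strictly decreases by exactly $amount$ exactly once — replay is prevented by the $isCompleted$ flag in the \textbf{LockedTransfer} record. Second, symmetric argument for $\mathcal{O}_B$: $\mathbb{E}_B$ mints exactly $amount$ to $\mathbb{C}_B$'s account exactly once, again guarded by $isCompleted$; $\mathcal{O}_B$ cannot mint more because the minted amount is copied from the amount recorded in $\mu$-$tx_2$, whose inclusion and consistency with the $A$-side transfer record ($extTransferId$, hashlock, amount) is checked by $\mathbb{E}_B$ against the proofs relayed in $\mu$-$tx_4$. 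Third, rule out the "double-spend at $A$'' attack where $\mathcal{O}_A$ presents $preimage$ to $\mathbb{E}_B$ so that $B$ mints, but never actually runs \texttt{sendCommit()} at $A$: this fails because $\mathbb{E}_B$'s \texttt{receiveClaim()} demands the inclusion proof of $\mu$-$tx_3$ in a snapshotted version of $L_A$, i.e.\ the burn must have been recorded on $A$'s ledger and committed to $\mathbb{IPSC}_A$ before $B$ will mint. Fourth, rule out the symmetric ``steal by aborting'' attack where $\mathcal{O}_B$ takes $\mathbb{C}_A$'s tokens without crediting $\mathbb{C}_B$: since $preimage$ is revealed by $\mathbb{C}_A$ only inside $\mu$-$tx_3$ (after $\mathbb{C}_A$ has verified that $\mu$-$tx_2$ on the $B$ side is already snapshotted and binds $\mathbb{C}_B$ as receiver), the disclosure of the secret and the irreversibility of the receive-record at $B$ happen strictly before the burn at $A$; hence once the secret is public, $\mathbb{C}_B$ (or anyone on her behalf, including via the censorship-resolution path of $\mathbb{IPSC}_B$) can always complete Phase~4. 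Finally, handle the genuine-abort case: if $\mu$-$tx_3$ is never produced, $\mathbb{E}_B$ never sees a valid $preimage$-backed proof, so $B$ never mints, and after the HTLC $timeout^{HTLC}$ expires $\mathbb{C}_A$ recovers her locked tokens via \texttt{sendRevert()} — no tokens are created or destroyed.

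The main obstacle I expect is the asymmetry between the HTLC timeout on the $A$ side and the \emph{absence} of a timeout on the $B$ side: I must argue that a malicious $\mathcal{O}_A$ cannot strategically race the timeout — e.g.\ reveal $preimage$ to $\mathbb{E}_B$, let $B$ mint, then call \texttt{sendRevert()} on $A$ after $timeout^{HTLC}$ to also recover the locked tokens, thereby duplicating value. The resolution hinges on showing that \texttt{sendCommit()} and \texttt{sendRevert()} are mutually exclusive over the same transfer record (both assert $!isCompleted \wedge !isReverted$), so once \texttt{sendCommit()} runs the revert is permanently disabled, and — crucially — that $\mathbb{E}_B$ will only mint after $\mu$-$tx_3$ (which \emph{is} the \texttt{sendCommit()} call, hence sets $isCompleted$ at $A$) is irreversibly snapshotted at $\mathbb{IPSC}_A$; conversely, if \texttt{sendRevert()} runs first then $\mu$-$tx_3$ can never be produced, so $B$ can never mint. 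I would therefore need to make precise the ordering assumption that a transfer record's state transition at $A$ is committed to $\mathbb{IPSC}_A$ before $\mathbb{E}_B$ acts on it, and note that this is exactly what the inclusion-proof check inside \texttt{receiveClaim()} enforces — and that it is the responsibility of $\mathbb{C}_A$ to complete Phase~3 well within $timeout^{HTLC}$, with late completion being a liveness (not safety) concern that cannot cause token theft, only delayed recovery handled by the censorship-evidence mechanism of \autoref{theorem:censorhip}.
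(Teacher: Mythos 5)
Your proposal is correct and rests on exactly the same central argument as the paper's own justification: $\mathbb{E}_B$ credits (mints) tokens only after verifying, inside trusted enclave code, the inclusion proof of $\mu$-$tx_3$ (the \texttt{sendCommit()} burn at $A$) in a version of $L_A$ consistent with the irreversible snapshot at $\mathbb{IPSC}_A$, so no crediting can precede a committed deduction. Your treatment is considerably more exhaustive than the paper's three-sentence justification — the timeout/revert race, the $isCompleted$/$isReverted$ mutual exclusion, and the abort-recovery case are all consistent with the protocol but are distributed in the paper across the adjacent theorems on colluding clients and inter-CBDC censorship recovery rather than folded into this one.
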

\begin{proof1}
	Atomic interoperability is ensured in our approach by adaptation of atomic swap protocol for all inter-bank transfers, which enables us to preserve the wholesale environment of CBDC in a consistent state (respecting \autoref{eq:token-sum-multi}).
	In detail, the transferred tokens from CBDC instance $A$ to instance $B$ are not credited to $B$ until $A$ does not provide the indisputable proof that tokens were deducted from a relevant $A$'s account.
	This proof confirms irreversible inclusion of $tx_3$ (i.e., $\mathbb{E}_A.sendCommit()$ that deducts account of $A$'s client) in $A$'s ledger and it is verified in 4th stage of our protocol by the trusted code of $\mathbb{E}_B$.
	
	In the case that $\mathcal{O}_A$ would like to present $B$ with integrity snapshot of $L_A$ that was not synced to  $\mathbb{IPSC}_A$ yet, B will not accept it since the 4th phase of our protocol requires $\mathcal{O}_B$ to fetch the recent $\mathbb{IPSC}_A.LRoot_{pb}$ and verify its consistency with A-provided $LRoot$ as well as inclusion proof in $PB$; all executed/verified within trusted code of $\mathbb{E}_B$.
\end{proof1}

\begin{theorem}\label{theorem:token-multi-inter2}
	(Atomic Interoperability II) Colluding clients $\mathcal{C}_A$ and $\mathcal{C}_B$  of two CBDC instances cannot steal any tokens form the system during the transfer operation of our protocol.
\end{theorem}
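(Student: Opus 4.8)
\begin{proof1}
The plan is to show that the adaptation of the atomic swap protocol $\Pi^{T}$ prevents any token duplication or loss even when $\mathcal{C}_A$ and $\mathcal{C}_B$ fully collude, relying on the fact that all balance-changing operations (\texttt{sendInit}, \texttt{sendCommit}, \texttt{receiveInit}, \texttt{receiveClaim}) are executed inside the trusted enclaves $\mathbb{E}_A$ and $\mathbb{E}_B$, whose code is public and remotely attested. First I would enumerate the ``profit'' the colluding pair could seek: (i) $\mathbb{C}_B$ receives minted tokens while $\mathbb{C}_A$'s account is not deducted; (ii) $\mathbb{C}_A$ recovers the locked tokens via \texttt{sendRevert} \emph{and} $\mathbb{C}_B$ also claims them via \texttt{receiveClaim}; (iii) $\mathbb{C}_B$ claims more than the amount $\mathbb{C}_A$ locked, or claims into an account on a different instance than was committed to. Since any net gain to the colluding coalition must be a net loss to the honest remainder of the system (the operators and other clients), ruling out (i)--(iii) suffices to establish the statement.

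Next I would handle each case using the invariants enforced by the enclave code in \autoref{alg:iomc-send} and \autoref{alg:iomc-recv}. For (i): \texttt{receiveClaim} (i.e. the 4th-phase call) executed by $\mathbb{E}_B$ requires a verified incremental proof binding $\mu$-$tx_3$ into a version of $L_A$ that has already been snapshotted to $\mathbb{IPSC}_A$; by \autoref{theorem:correctness} and \autoref{theorem:consistency} that version is correct and irreversible, and $\mu$-$tx_3$ is exactly \texttt{sendCommit}, which burns $t.amount$ and decrements $\mathbb{E}_A.t_s$ atomically within the enclave. Hence minting at $B$ is possible only if an equal burn at $A$ has been committed and made irreversible --- $\mathcal{C}_A,\mathcal{C}_B$ cannot forge the proof because they cannot break the cryptographic primitives nor the TEE. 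For (ii): \texttt{sendRevert} asserts $!t.isCompleted$, and \texttt{sendCommit} sets $t.isCompleted \leftarrow True$; the two are mutually exclusive on the same transfer record, and \texttt{receiveClaim} at $B$ requires the secret $preimage$ which is only revealed \emph{in} $\mu$-$tx_3$, i.e. after \texttt{sendCommit}. So if $\mathbb{C}_A$ reverts, the hashlock's preimage was never disclosed on-chain, $\mathbb{C}_B$ cannot pass the $t.hashlock = h(secret)$ check, and no minting occurs. For (iii): the amount and the receiver identity $\mathbb{C}_B \| \mathbb{IPSC}_B$ recorded in \texttt{sendInit}/\texttt{receiveInit} are carried through the receipts and cross-checked by $\mathbb{E}_B$ in the 4th phase against the transfer record, and the transfer-ID equality check (present in both $sendCommit$ and $receiveClaim$) pins the two legs together; a mismatch aborts the enclave call.

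The main obstacle I anticipate is the \textbf{replay / double-claim argument across the four phases}: one must argue carefully that the $transferId$/$extTransferId$ pairing, together with the one-shot $isCompleted$ flags, prevents the coalition from, e.g., initiating one send at $A$ but driving two independent receive-claims at $B$ (or at two different instances) off the same committed burn. This requires a precise statement that each $\mathbb{E}_A$-signed \texttt{sendCommit} receipt is bound to a unique $(transferId, extTransferId, receiver, receiver\mathbb{IPSC}, amount)$ tuple and that $\mathbb{E}_B$'s \texttt{receiveClaim} marks that tuple consumed (via $t.isCompleted$) before minting, so a second presentation of the same receipt is rejected. Once that binding-and-consumption lemma is in place, the global conservation inequality \autoref{eq:token-sum-multi} is preserved by every step of $\Pi^{T}$, and since the only enclave operations that change $t_s$ are the matched burn/mint pair, no coalition of clients can increase $\sum t_s$ relative to $\sum t_i$ or pocket tokens that honest parties would otherwise hold.
\end{proof1}
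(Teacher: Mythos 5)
Your case (i) is exactly the paper's proof: the paper considers only the single attack in which $\mathcal{C}_A$ hands the preimage to $\mathcal{C}_B$ off-channel without ever running phase 3, and dismisses it because the trusted code of $\mathbb{E}_B$ refuses to mint until it has verified an irreversible, $\mathbb{IPSC}_A$-anchored proof that the burn ($\mu$-$tx_3$, i.e.\ \texttt{sendCommit}) actually executed on $L_A$ -- a one-line appeal to Theorem~\ref{theorem:token-multi-inter}. Your decomposition into cases (i)--(iii) plus the replay/binding lemma is strictly broader than what the paper argues, and the extra cases (amount/receiver binding, double-claim via \texttt{isCompleted} consumption) are reasonable additions the paper leaves implicit.

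There is, however, one concrete misstep in your case (ii). You argue that if $\mathcal{C}_A$ reverts, ``the hashlock's preimage was never disclosed on-chain, $\mathbb{C}_B$ cannot pass the $t.hashlock = h(secret)$ check.'' Under the collusion hypothesis this is false: $\mathcal{C}_A$ \emph{generated} the secret and can hand it to $\mathcal{C}_B$ at any moment, independently of whether \texttt{sendCommit} was ever executed -- this off-chain leak is precisely the attack the theorem is about, and it is why the hashlock alone provides no protection between colluding clients. The check that actually blocks the revert-and-claim attack is the same one you invoke in case (i): after a \texttt{sendRevert}, no proof of an irreversible burn on $L_A$ can ever be constructed (the paper makes this point explicitly in its censorship-recovery theorem), so $\mathbb{E}_B$'s phase-4 verification fails regardless of the secret. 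Your overall conclusion therefore survives, because case (i)'s mechanism subsumes case (ii), but the sub-argument as written attributes the defense to the wrong mechanism and should be corrected.
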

\begin{proof1}
	If the first two phases of our protocol have been executed, $\mathcal{C}_A$ might potentially reveal the $preimage$ to $\mathcal{C}_B$ without running the 3rd phase with the intention to credit the tokens at $B$ while deduction at $A$ had not been executed yet.
	However, this is prevented since the trusted code of $\mathbb{E}_B$ verifies that the deduction was performed at $A$ before crediting the tokens to $\mathcal{C}_B$ -- as described in  \autoref{theorem:token-multi-inter}.
\end{proof1}

\begin{theorem}\label{theorem:multi-censorhip1}
	(Inter-CBDC Censorship Evidence) $\mathcal{O}_A$ is unable to unnoticeably censor any request (transaction or query) from $\mathbb{C}_B$.
\end{theorem}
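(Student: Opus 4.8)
The plan is to establish this statement by the same mechanism that underpins the single-instance guarantee (\autoref{theorem:censorhip}), applied now to the ledger $L_A$ and its integrity-preserving contract $\mathbb{IPSC}_A$, while additionally verifying that the \emph{external} client $\mathbb{C}_B$ possesses the credentials required to force a public resolution at $\mathbb{IPSC}_A$. First I would enumerate the places in $\Pi^{T}$ where $\mathcal{O}_A$ can withhold service from $\mathbb{C}_B$: in phases~2 and~4 the client $\mathbb{C}_B$ must obtain from $\mathcal{O}_A$ an incremental proof $\pi^{inc}$ between some $LRoot^A$ and the snapshotted $LRoot_{pb}^A$ (and, more generally, any read query against $L_A$). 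These are precisely the request types handled by $SubmitCensQry()$/$ResolveCensQry()$ of $prog^{\mathbb{IPSC}}$; the symmetric $SubmitCensTx()$/$ResolveCensTx()$ path covers write requests, which do not arise for $\mathbb{C}_B$ against $\mathcal{O}_A$ in $\Pi^{T}$ but are nonetheless subsumed by the argument.

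Next I would recall that in phase~1 the enclave $\mathbb{E}_A$ issues to $\mathbb{C}_B$ an access ticket $t^{\mathbb{C}_B}$ — an $\mathbb{E}_A$-signed token binding $PK_{pb}^{\mathbb{C}_B}$ with an expiration that, by construction of the protocol, spans at least the whole HTLC window $timeout^{HTLC}$ of $\mathbb{IOMC}_A$. Hence throughout the lifetime of the transfer $\mathbb{C}_B$ holds a ticket that passes the $accessControl$ check of $\mathbb{IPSC}_A$, so she cannot be denied the ability to post a censored request, and no third party can exhaust that interface. The core step then mirrors the proof of \autoref{theorem:censorhip}: when $\mathbb{C}_B$ encrypts her query under $PK_{\mathbb{E}_A}^{tee}$ and submits it with $t^{\mathbb{C}_B}$, the request becomes permanently visible on the public blockchain $PB$. $\mathcal{O}_A$ then has only two options — relay the request to $\mathbb{E}_A$, obtain an $\mathbb{E}_A$-signed response (under $\Sigma_{pb}$ with $PK_{\mathbb{E}_A}^{pb}$), and post it via $ResolveCensQry()$, which publicly closes the request; or ignore it, in which case $\mathbb{IPSC}_A$ permanently retains an indisputable, publicly auditable record that $\mathcal{O}_A$ censored $\mathbb{C}_B$. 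A forged resolution is impossible because $ResolveCensQry()$ checks the $\mathbb{E}_A$ signature and the publicly attested $prog^{\mathbb{E}}$ signs only responses it has actually computed — this step rests on the TEE-security assumption of \autoref{sec:cbdc:problem}.

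The hard part will be the corner case in which $\mathcal{O}_A$ never lets $\mathbb{C}_B$ obtain $t^{\mathbb{C}_B}$ in the first place, e.g. by censoring $\mu$-$tx_1$ so that $\mathbb{E}_A$ is never asked to issue the ticket, or by withholding the ticket after it has been produced. I would dispatch this by reduction: censoring $\mu$-$tx_1$ is censorship of $\mathcal{O}_A$'s \emph{own} client $\mathbb{C}_A$, which $\mathbb{C}_A$ resolves through the single-instance mechanism of \autoref{theorem:censorhip}, compelling $\mathbb{E}_A$ to execute $\mu$-$tx_1$ and therefore to emit $t^{\mathbb{C}_B}$ inside the execution receipt that $\mathbb{C}_A$ forwards to $\mathbb{C}_B$; and withholding an already-issued ticket is ruled out because the ticket is part of that publicly resolvable receipt. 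Combining the cases, every way of censoring $\mathbb{C}_B$ yields either an $\mathbb{E}_A$-signed resolution on $PB$ or a public censorship proof on $PB$, which is the claim. A minor point I would also check is the dual direction — a symmetric statement for $\mathcal{O}_B$ censoring $\mathbb{C}_A$ — which follows verbatim by swapping the roles of the two instances.
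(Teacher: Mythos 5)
Your proof is correct and takes essentially the same route as the paper's: the paper's own argument simply observes that $\mathbb{C}_B$ already holds an $\mathbb{E}_A$-signed access ticket (issued during phase~1 of $\Pi^{T}$) that is verifiable at $\mathbb{IPSC}_A$, and then reduces the claim to the single-instance censorship-evidence theorem — exactly your core step. Your additional treatment of the corner case in which $\mathcal{O}_A$ prevents the ticket from ever being issued (by censoring $\mu$-$tx_1$, which is then resolvable by $\mathbb{C}_A$ through the single-instance mechanism) is a sound elaboration of a point the paper leaves implicit.
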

\begin{proof1}
	If $\mathbb{C}_B$'s request is censored by $\mathcal{O}_A$, $\mathbb{C}_B$ can ask for a resolution of the request through public $\mathbb{IPSC}_A$ since $\mathbb{C}_B$ already has the access ticket to instance $A$.
	The access ticket is signed by $\mathbb{E}_A$ and thus can be verified at $\mathbb{IPSC}_A$.
	Hence, the censorship resolution/evidence is the same as in \autoref{theorem:censorhip} of a single CBDC instance.
\end{proof1}

\begin{theorem}\label{theorem:multi-censorship2}
	(Inter-CBDC Censorship Recovery) A permanent inter-CBDC censorship by  $\mathcal{O}_A$ does not cause an inconsistent state or permanently frozen funds of undergoing transfer operations at any other CBDC instance -- all initiated and not finished transfer operations can be recovered from.
	
\end{theorem}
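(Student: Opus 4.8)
The plan is to prove the statement by a case analysis on the stage of the protocol $\Pi^{T}$ at which the permanent censorship by $\mathcal{O}_A$ takes effect, combining the hash-time-lock structure of $\mathbb{IOMC}^{S}$/$\mathbb{IOMC}^{R}$ with the single-instance censorship-resolution guarantee already established for $\mathbb{IPSC}_A$ (\autoref{theorem:censorhip}, lifted to the inter-instance setting in \autoref{theorem:multi-censorhip1}). The first step, which I would carry out before the case analysis, is the observation that the counterparty instance $B$ never locks any of its own tokens inside a transfer: in $\mathbb{IOMC}^{R}_B$ the call \texttt{receiveInit()} only records a pending entry, while \texttt{receiveCommit()} only mints. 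Consequently $B$'s funds can never become ``frozen'' by an unfinished transfer, and $t_s^B$ changes only when a transfer legitimately commits, so $L_B$ remains internally consistent in exactly the sense of \autoref{theorem:correctness} and \autoref{theorem:consistency} irrespective of $\mathcal{O}_A$'s behavior. Having isolated these two claims, it remains to show that every initiated transfer is either cancellable or completable.

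\textbf{Case 1:} the censorship takes effect before $\mathbb{C}_A$'s commit transaction $\mu$-$tx_3$ is irreversibly included in $L_A$. Then the tokens of $\mathbb{C}_A$ are still held under the hash-time-lock created in Phase~1 with expiry $timelock$, and no tokens have been minted at $B$ (Phase~4 requires the inclusion proof of $\mu$-$tx_3$, which does not exist). After $timelock$ elapses, $\mathbb{C}_A$ reclaims the tokens by invoking \texttt{sendRevert()} on $\mathbb{IOMC}^{S}_A$; if $\mathcal{O}_A$ censors this reclaiming transaction as well, $\mathbb{C}_A$ forces it through the write-censorship-resolution path of $\mathbb{IPSC}_A$ guaranteed by \autoref{theorem:censorhip}. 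The receiving entry left pending in $\mathbb{IOMC}^{R}_B$ is harmless, and the global invariant \autoref{eq:token-sum-multi} is preserved because the revert restores $t_s^A$ to its pre-transfer value while $t_s^B$ was never touched.

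\textbf{Case 2:} the censorship takes effect after $\mu$-$tx_3$ is irreversibly included in $L_A$, so $\mathbb{C}_A$'s tokens have been burned and $t_s^A$ decreased. The secret preimage is then public: it is contained in $\mu$-$tx_3$, which resides in a version of $L_A$ already snapshotted to $\mathbb{IPSC}_A$, and it was also delivered to $\mathbb{C}_B$ in Phase~3 together with $rcp_3$ and its membership and Merkle proofs. The only datum $\mathbb{C}_B$ still needs to assemble $\mu$-$tx_4$ is the incremental proof $\langle LRoot^{A}, LRoot_{pb}^{A}\rangle$ produced by $\mathcal{O}_A$. Because $\mathbb{E}_A$ issued $\mathbb{C}_B$ an access ticket in Phase~1 whose validity spans at least the whole HTLC period, $\mathbb{C}_B$ submits this request as a query-censorship-resolution to $\mathbb{IPSC}_A$; by \autoref{theorem:multi-censorhip1}, $\mathcal{O}_A$ either answers (whereupon $\mathbb{C}_B$ runs Phase~4, $\mathbb{E}_B$ mints the tokens, $t_s^B$ increases, and \autoref{eq:token-sum-multi} is restored), or it leaves an indisputable on-chain proof of censorship that can be acted upon by the enforcement rules associated with $\mathbb{IPSC}_A$. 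In neither sub-case is $L_B$ inconsistent or $B$'s funds frozen.

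I expect the main obstacle to lie in Case~2, precisely in pinning down what ``recovered from'' means when $\mathcal{O}_A$ is willing to accept the public-evidence penalty rather than ever cooperate: from a global accounting viewpoint, $\sum t_i > \sum t_s$ may hold transiently once the burn at $A$ has occurred but the mint at $B$ has not. The argument must therefore carefully separate (i)~the internal consistency of each honest instance's ledger, which continues to hold because $B$ has simply not committed a transfer it was never obligated to commit, from (ii)~the liveness of the particular cross-instance transfer, which is delegated to the censorship-resolution machinery of $\mathbb{IPSC}_A$ and to the same assumption (inherited from the single-instance model) that an on-chain resolution request is eventually either honored or converted into enforceable evidence. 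A secondary point I would need to nail down is timing and exhaustiveness: that the access-ticket window and $timeout^{HTLC}$ are chosen so that $\mathbb{C}_B$ always has enough time to invoke the resolution path before the ticket expires, and that Cases~1 and~2 are genuinely the only possibilities, i.e.\ $\mathcal{O}_A$ cannot leave $\mu$-$tx_3$ in a ``half-included'' state, which follows from \autoref{theorem:consistency} together with the finality assumption on the public blockchain $PB$.
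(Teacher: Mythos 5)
Your proof is correct and follows essentially the same strategy as the paper's: a case split over where in $\Pi^{T}$ the censorship bites, resting on the observations that $\mathbb{IOMC}^{R}_B$ never locks $B$'s own tokens (so $L_B$ cannot acquire frozen funds), that $\mathbb{C}_A$'s locked tokens are reclaimable after the HTLC timeout, and that the $\mathbb{IPSC}$ censorship-resolution path backstops any censored request. The differences are organizational and in emphasis. You split on whether $\mu$-$tx_3$ is irreversibly included in $L_A$, whereas the paper splits by protocol phase (2 and 4 for $\mathcal{O}_A$ censoring $\mathbb{C}_B$, plus the symmetric phase-3 case of $\mathcal{O}_B$ censoring $\mathbb{C}_A$, which strictly lies outside the theorem's statement but is covered for completeness). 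Your Case~2 goes further than the paper in arguing that $\mathbb{C}_B$ can actually \emph{complete} the transfer by forcing the incremental-proof query through $\mathbb{IPSC}_A$ via \autoref{theorem:multi-censorhip1}; the paper is content to note that $L_B$ merely holds a harmless initiated entry. Conversely, the paper makes explicit a point you only touch implicitly: after $\mathbb{C}_A$'s tokens are reverted, the deduction proof required by phase~4 can no longer be constructed (and $\mathbb{E}_A$'s correct execution per \autoref{theorem:correctness} prevents re-deduction), so the recovered transfer cannot subsequently be double-completed at $B$ — worth stating outright, since it is what makes ``revert'' safe. Your closing caveat about the transient $\sum t_i > \sum t_s$ gap after a burn at $A$ with no mint at $B$ is a fair observation, but it does not contradict the theorem as literally stated, which only guarantees consistency and non-frozen funds \emph{at other instances}; the paper avoids the issue by claiming no more than that.
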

\begin{proof1}
	If $\mathcal{O}_A$ were to censor $\mathbb{C}_B$ in the 2nd phase of our protocol, no changes at ledger $L_B$ would be made.
	If $\mathcal{O}_A$ were to censor $\mathbb{C}_B$ in the 4th phase of our protocol, $L_B$ would contain an initiated transfer entry, which has not any impact on the consistency of the ledger since it does not contain any locked tokens.~\hfill$\Box$
	
	
	\smallskip \noindent
	If $\mathcal{O}_B$ were to censor $\mathbb{C}_A$ in the 3rd phase of our protocol, $A$ would contain some frozen funds of the initiated transfer.
	However, these funds can be recovered back to $\mathbb{C}_A$ upon a recovery call of $\mathbb{E}_A$ after a recovery timeout has passed.
	Note that after tokens of $\mathbb{C}_A$ have been recovered and synced to $\mathbb{IPSC}_A$ in $PB$, it is not possible to finish the 4th stage of our protocol since it requires providing the proof that tokens were deducted at $A$ and such a proof cannot be constructed anymore.
	The same holds in the situation where the sync to  $\mathbb{IPSC}_A$ at $PB$ has not been made yet -- after recovery of tokens, $\mathbb{E}_A$ does not allow to deduct the same tokens due to its correct execution (see \autoref{theorem:correctness}). 
\end{proof1}

\begin{theorem}\label{theorem:multi-identity-1}
	(Identity Management of CBDC Instances I) A new (potentially fake) CBDC instance cannot enter the ecosystem of wholesale CBDC upon its decision.
\end{theorem}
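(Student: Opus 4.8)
\textbf{Proof proposal for Theorem~\ref{theorem:multi-identity-1} (Identity Management of CBDC Instances I).}

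The plan is to argue by contradiction: suppose a new instance $\mathcal{I}_{new}$ (possibly operated by a malicious $\mathcal{O}_{new}$) could successfully register in the wholesale ecosystem purely upon its own decision, i.e., without any action by the already-admitted instances. I would then trace every path by which an instance can become recognized as valid and show that each one is gated by a decision external to $\mathcal{I}_{new}$. Recall from Section~\ref{sec:identity-management-of-CBDC} that the canonical registry of valid instances is the identity management smart contract $\mathbb{IMSC}$, and an instance is recognized as valid exactly when its $\mathbb{IPSC}$ address appears in $\mathbb{IMSC}.instances[]$ with $isApproved = \textbf{T}$ (decentralized scheme, Algorithm~\ref{alg:imsc}) or simply appears in $\mathbb{IMSC}_c.instances[]$ (centralized scheme, Algorithm~\ref{alg:imsc-centralized}).

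The key steps, in order, are: (1) In the decentralized scheme, observe that $\mathcal{I}_{new}$ may unilaterally call $newJoinRequest()$, which only creates an entry with $isApproved = \textbf{F}$; this does not make $\mathcal{I}_{new}$ valid for any protocol interaction, since $\Pi^{T}$ and all counter-party checks consult $isApproved$. (2) Show that the only function that flips $isApproved$ to $\textbf{T}$ is $approveJoinRequest()$, whose body asserts $instances[\mathbb{IPSC}_{my}].operator = msg.sender$ and $instances[\mathbb{IPSC}_{my}].isApproved$ — i.e., the caller must already be an admitted instance — and that the flip occurs only once $|r.approvals| > \lfloor |instances|/2 \rfloor$, a strict majority of the pre-existing instances. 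Since $\mathcal{O}_{new}$ controls no admitted instance (its own is not yet approved), it cannot contribute an approval, and a majority of honest admitted instances is required; by the adversary model $\mathcal{O}_{new}$ cannot forge their signed transactions nor undermine $PB$. (3) In the centralized scheme, show the analogous fact for $addInstance()$, which asserts $msg.sender = authority^{\mathbb{O}}$, so only the pre-designated authority bank can add an instance. (4) Conclude that in both deployment scenarios, admission strictly requires a decision (majority vote, resp. authority action) made by parties other than $\mathcal{I}_{new}$, contradicting the assumption; moreover, even before admission the would-be fake instance cannot participate in $\Pi^{T}$, since a legitimate counter-party $\mathbb{C}_X$ in phase~2/4 would query $\mathbb{IMSC}$ for the counter-party's $\mathbb{IPSC}$ identifier and reject an unrecognized or unapproved one.

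The main obstacle I anticipate is closing off the \emph{implicit} admission path: the genesis initialization of $\mathbb{IMSC}$ via $Init()$ lists the bootstrap set of instances and marks them $isApproved = \textbf{T}$ without any vote. I would handle this by noting that $Init()$ is a constructor executed exactly once at deployment of $\mathbb{IMSC}$ by its deployer (in practice the founding consortium / authority), whose code and argument list are publicly visible on $PB$ and immutable thereafter; hence a new instance arriving after genesis has no access to this path, and a fake instance inserted at genesis would be publicly auditable and, by assumption, the deploying consortium is the trust root for the registry (analogous to how $\mathbb{IPSC}$'s constructor arguments are auditable in Aquareum). A secondary subtlety is ruling out that $\mathcal{O}_{new}$ simply convinces honest instances to approve it by masquerading as legitimate — but this is precisely the legal-plus-technical verification step (remote attestation of $prog^{\mathbb{E}}_{new}$, checking $i_r$ and initial $t_i$ in $\mathbb{IPSC}_{new}$) that honest approvers perform before voting, so under the honest-majority assumption the approval genuinely reflects a validity decision by others rather than a unilateral act of $\mathcal{I}_{new}$.
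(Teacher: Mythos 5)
Your proposal is correct and follows essentially the same route as the paper, whose entire justification is the one-line observation that admission to the $\mathbb{IMSC}$ registry requires a majority vote of the existing instances. Your additional treatment of the centralized $addInstance()$ path, the genesis $Init()$ bootstrap, and the masquerading subtlety goes beyond what the paper states but is consistent with its design and does not change the argument.
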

\begin{proof1}
	To extend the list of valid CBDC instances (stored in IMSC contract), the majority vote of all existing CBDC instances must be achieved through public voting on IMSC.
\end{proof1}

\begin{theorem}\label{theorem:multi-identity-2}
	(Identity Management of CBDC Instances II) Any CBDC instance (that e.g., does not respect certain rules for issuance of tokens) might be removed from the ecosystem of CBDC by majority vote.
\end{theorem}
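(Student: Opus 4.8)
\begin{proof1}
The plan is to mirror the structure of Theorem~\ref{theorem:multi-identity-1} but replace the admission-voting mechanism with a removal-voting mechanism, then argue that once such a vote succeeds, the offending instance loses all ability to participate in inter-CBDC transfers. First I would observe that in the decentralized deployment scenario, $\mathbb{IMSC}$ (see $prog^{\mathbb{IMSC}}_d$ in \autoref{alg:imsc}) exposes the function $approveDelete(\mathbb{IPSC}_{my}, \mathbb{IPSC}_{del})$, which accumulates approvals from existing valid instances and, once a strict majority $|r.approvals| > \lfloor |instances| / 2 \rfloor$ is reached, deletes the record of the targeted instance. The key point here is that this threshold check is enforced by the public smart contract code on $PB$, so no single honest instance can be overruled and no minority coalition including the offending instance can block the removal (since the offending instance's own approval is neither required nor sufficient).

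The second step is to argue that deletion from $\mathbb{IMSC}$ is \emph{operationally effective}, i.e., that a removed instance can no longer transact with compliant instances. I would appeal to the interoperability protocol $\Pi^{T}$: in Phases~2 and~4, the receiving client (through her operator, within the trusted code of $\mathbb{E}$) must validate that the counterparty instance is a member in good standing before crediting tokens. Concretely, $\mathbb{E}_B$ performs an $isApproved$ lookup against $\mathbb{IMSC}$ for $\mathbb{IPSC}_A$ before accepting any integrity snapshot or proof originating from $A$; once the record is deleted, this check fails, and by the correctness of $\mathbb{E}$ (\autoref{theorem:correctness}) no honest instance's enclave will mint the transferred tokens. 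Hence a removed instance is isolated even though it may still run its own internal ledger. I would also note that, by \autoref{theorem:multi-censorship2}, any transfer that was mid-flight when the removal took effect can be recovered from, so removal does not strand honest clients' funds.

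The third step is to cover the centralized deployment scenario, where $\mathbb{IMSC}_c$ (see $prog^{\mathbb{IMSC}}_c$ in \autoref{alg:imsc-centralized}) replaces majority voting with a single authority: here the removal is effected by $delInstance(\mathbb{IPSC}_{del})$, gated by $msg.sender = authority^\mathbb{O}$, and the same operational-effectiveness argument applies verbatim. I would present the theorem as a case split on the deployment scenario, with the voting-threshold / authority-check argument for liveness of the removal and the $\mathbb{E}$-enforced membership lookup for its effectiveness.

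The main obstacle I anticipate is the second step: the excerpt's pseudocode for $prog^{\mathbb{E}}$ and the protocol phases does not make the $\mathbb{IMSC}$ membership check fully explicit — it is implied by the role of $\mathbb{IMSC}$ as ``a global registry of bank instances'' and by $\mathbb{C}_B$'s validation steps in Phases~2 and~4, but a rigorous proof needs to pin down exactly where and by whom the $isApproved$ query is issued and to argue that a stale (pre-deletion) snapshot of $\mathbb{IMSC}$ cannot be replayed. I would address this by invoking the non-equivocation and integrity guarantees of $PB$ together with the time-to-finality assumption (as in \autoref{sec:aquareum:analysis}): the enclave reads the current $\mathbb{IMSC}$ state from a finalized block of $PB$, so once the deletion transaction is finalized, no honest enclave will ever again see $\mathbb{IPSC}_{del}$ as approved. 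A secondary, milder obstacle is the boundary case where the removal vote concludes while a transfer is in Phase~3 (tokens locked at the offending instance); here I would lean on \autoref{theorem:multi-censorship2} and the recovery timeout of $\mathbb{E}$ to conclude that no permanent inconsistency or frozen funds result.
\end{proof1}
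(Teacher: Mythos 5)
Your first step is essentially the paper's entire proof: the paper justifies this theorem in one sentence by pointing to the publicly visible majority vote realized by the $\mathbb{IMSC}$ contract on $PB$ (i.e., the $approveDelete$ function with the $|r.approvals| > \lfloor |instances|/2 \rfloor$ threshold, one vote per existing instance). So on the claim the theorem actually makes --- that removal by majority vote is \emph{possible} --- you match the paper.

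The rest of your argument goes beyond what the theorem asserts, and one piece of it is not grounded in the paper. The ``operational effectiveness'' step claims that $\mathbb{E}_B$ performs an $isApproved$ lookup against $\mathbb{IMSC}$ for the counterparty's $\mathbb{IPSC}$ during Phases~2 and~4 of $\Pi^{T}$; as you yourself flag, no such check appears in $prog^{\mathbb{E}}$, $prog^{\mathbb{IOMC}^{S}}$, $prog^{\mathbb{IOMC}^{R}}$, or the protocol description --- the enclave's light-client verification in Phase~4 checks inclusion in $L_A$ against $\mathbb{IPSC}_A$, not membership in $\mathbb{IMSC}$. You cannot ``invoke'' a check the system does not perform; if you want to prove isolation of a removed instance, that would have to be stated as an added design requirement, not derived from the given pseudocode. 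Since the theorem only claims removability by vote, the safe course is to stop after your first paragraph (plus, if you like, the one-line centralized-scenario remark), and drop the effectiveness and in-flight-transfer material or mark it explicitly as a remark about a property the paper does not claim here.
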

\begin{proof1}
	A publicly visible voting about removal of a CBDC instance from the ecosystem is realized by IMSC contract that resides in $PB$, while each existing instance has a single vote. 
\end{proof1}


%
%

\section{Contributing Papers}\label{sec:logging-papers}
The papers that contributed to this research direction are enumerated in the following, while highlighted papers are attached to this thesis in their original form. 
Note that these papers were not yet published nor accepted as of writing.

\begingroup
\let\clearpage\relax

\renewcommand\bibname{}
\vspace{-7em}

\endgroup

\chapter{Conclusion}\label{chapter:conclusion}
In this work, we presented the summary of our research and its contributions to the standardization of vulnerability/threat analysis and modeling in blockchains as well as particular areas in blockchains' consensus protocols, cryptocurrency wallets,  electronic voting, and secure logging with the focus on security and/or privacy aspects.   

In detail, first, we introduced the security reference architecture for blockchains that adopts a stacked model, describing the nature and hierarchy of various security and privacy aspects.
Then, we proposed a blockchain-specific version of the threat-risk assessment standard ISO/IEC 15408 by embedding the stacked model into this standard. 
Next, we investigated a few attacks on Proof-of-Work consensus protocols such as selfish mining attacks, greedy transaction selection attacks, and undercutting attacks -- for all of them we proposed mitigation techniques. 
Then, we dealt with cryptocurrency wallets, where we described our proposed classification of authentication schemes and proposed SmartOTPs, two-factor authentication for smart contract wallets based on One-Time Passwords.  
Next, we focused on electronic voting using blockchains as an instance of a public bulletin board, and we described our proposals BBB-Voting and SBvote as well as the Always-on-Voting framework for repetitive voting.
Finally, we dealt with secure logging, where we presented Aquareum, a centralized ledger based on blockchain and trusted computing. 
At a follow-up stage in secure logging direction, we built on top of Aquareum and proposed CBDC-AquaSphere, an interoperability protocol for central bank digital currencies.

In our future work, we plan to focus on several directions, such as secure and efficient Proof-of-Stake protocols based on Direct Acyclic Graphs, simulations of selfish mining and similar incentive attacks (with various numbers of attackers) on consensus protocols designed to mitigate selfish mining, optimized on-chain verification in e-voting protocols (including relatively new concepts of partial-tally-hiding), interoperable execution of smart contracts across CBDC instances, optimization of Merkle-Patricia tries to enable secure and consistent parallel processing and thus improvement in processing throughput.



\clearpage

\addcontentsline{toc}{chapter}{\\Bibliography}
\bibliographystyle{alpha}
\bibliography{unique_identifiers}

\end{document}